\documentclass{harvard-thesis}
\usepackage{graphicx}
\usepackage{amssymb}
\usepackage{minitoc}
\usepackage{hyperref}
\usepackage{color}
\usepackage{multirow}
\usepackage{amsfonts,amsmath}
\usepackage{booktabs}
\usepackage{enumitem}
\usepackage{amsthm}
\usepackage{algpseudocode}

\usepackage{tcolorbox}
\usepackage{lmodern}
\usepackage{framed}
\usepackage[linesnumbered,ruled,vlined]{algorithm2e}

\usepackage{caption}
\usepackage{subcaption}
 \usepackage{times}
\usepackage{float}

\usepackage{xcolor}

\hypersetup{
    colorlinks=true,
    linkcolor=red!80!black,
    urlcolor=red!80!black,
    citecolor=blue
}

\usepackage{cleveref}

\newtheorem{ex}{Example}

\newtheorem{observation}{Observation}[section]

\newtheorem{theorem}{Theorem}[section]
\newtheorem{conjecture}{Conjecture}[section]
\newtheorem{open-question}{Open Question}[section]

\newtheorem{meta-question}{Meta Question}[section]

\newtheorem{proposition}{Proposition}[section]
\newtheorem{lemma}{Lemma}[section]

\newtheorem{definition}{Definition}[section]
\newtheorem{corollary}{Corollary}

\usepackage{tikz}

\usepackage{float}
\usepackage{makeidx}
\usepackage{alltt}

\definecolor{gra}{RGB}{105,105,105}

\usepackage{etoolbox}

\DeclareMathOperator{\rad}{rad}

\DeclareMathOperator{\diam}{diam}

\DeclareMathOperator{\MP}{mp}

\DeclareMathOperator{\MMP}{MP}

\usepackage{url}

\usepackage{bm}

\usepackage{cleveref}

 \usepackage{todonotes}

 \usepackage{thmtools}
\usepackage{thm-restate}

\newenvironment{proofsketch}{
  \proof}{\endproof}

  \newenvironment{claimproof}{
  \proof}{\hfill$\triangleleft$}

\makeatletter
\patchcmd{\ttlh@hang}{\parindent\z@}{\parindent\z@\leavevmode}{}{}
\patchcmd{\ttlh@hang}{\noindent}{}{}{}
\makeatother

\makeindex
\begin{document}


\title{\LARGE Multipacking on graphs and Euclidean metric space}

\author{Sk Samim Islam}
\advisor{Prof. Sandip Das}

\degree{Doctor of Philosophy}
\field{Algorithms on graphs and geometry}
\degreeyear{2025}
\degreemonth{July}

\department{Advanced Computing and Microelectronics Unit}
\university{Indian Statistical Institute}
\universitycity{Kolkata, India}


\frontpage
\abstract
\dominitoc
\tableofcontents
\listoffigures
\dedicationpage
\acknowledgments

%

\chapter{Introduction}\label{chapter:intro}\hypertarget{chapter:introhref}{}
\minitoc 


\section{Graph Theory Preliminaries} \label{defn:graph}

Graphs are the primary objects of study in this thesis. Therefore, we begin by introducing the necessary definitions and preliminaries from graph theory. For a more detailed exposition, the books by West~\cite{west} and Diestel~\cite{diestel} are excellent resources.

A \textit{graph} $G$ is a triple consisting of a \textit{vertex set} $V(G)$, an \textit{edge set} $E(G)$, and a relation that associates each edge with two (not necessarily distinct) vertices called its \textit{endpoints}. An edge whose endpoints are the same is called a \textit{loop}, and \textit{multiple edges} are distinct edges sharing the same pair of endpoints.

A \textit{simple graph} is a graph that contains neither loops nor multiple edges. We denote a simple graph $G$ as $G = (V, E)$, where $V = V(G)$ is the vertex set and $E = E(G)$ is the edge set. Here, $E$ is a set of unordered pairs of distinct vertices, and we denote an edge $e$ with endpoints $u$ and $v$ by $e = uv$ (or $e = vu$). Unless stated otherwise, all graphs considered in this thesis are assumed to be simple. Furthermore, when the context is clear, we refer to the vertex and edge sets simply as $V$ and $E$, respectively. The \textit{order} of a graph refers to the number of vertices and is denoted by $n = |V|$.

If $u$ and $v$ are endpoints of an edge $e$, we say that $u$ and $v$ are \textit{neighbours} or are \textit{adjacent}. For a vertex $u$, the \textit{open neighbourhood}, denoted by $N(u)$, is defined as $N(u) = \{ v \in V \mid uv \in E \}$. The \textit{closed neighbourhood} of $u$, denoted by $N[u]$, is defined as $N[u] = N(u) \cup \{u\}$. A vertex $v$ is said to be \textit{dominating} or \textit{universal} if $N[v] = V$.

The \textit{degree} of a vertex $v$, denoted $d(v)$, is the number of edges incident to $v$. The \textit{minimum degree} of a graph $G$ is denoted by $\delta(G) = \min_{v \in V} d(v)$, and the \textit{maximum degree} is denoted by $\Delta(G) = \max_{v \in V} d(v)$. A graph is called \textit{$k$-regular} if every vertex has degree $k$.

A \textit{homomorphism} from a graph $G$ to a graph $H$ is a function $f : V(G) \rightarrow V(H)$ (denoted $f: G \rightarrow H$) that preserves adjacency; that is, if $uv \in E(G)$, then $f(u)f(v) \in E(H)$. An \textit{isomorphism} from $G$ to $H$ is a bijective function $f : V(G) \rightarrow V(H)$ such that $uv \in E(G)$ if and only if $f(u)f(v) \in E(H)$. Two graphs $G$ and $H$ are said to be \textit{isomorphic}, denoted by $G \cong H$, if there exists an isomorphism between them.

A sequence of vertices $u_1, \ldots, u_k$ is called a \textit{path} if $u_iu_{i+1} \in E$ for all $i < k$. The vertices $u_1$ and $u_k$ are referred to as the \textit{endpoints} of the path. A path $P$ is called a \textit{$u,v$-path} if $u$ and $v$ are its endpoints. A path on $n$ vertices is denoted by $P_n$. A graph is said to be \textit{connected} if there exists a path between every pair of vertices.

A \textit{cycle} on $n$ vertices, denoted by $C_n$, is a graph isomorphic to one with vertex set $V = \{v_1, v_2, \ldots, v_n\}$ and edge set $E = \{ v_iv_{(i+1) \bmod n} \mid v_i \in V \}$. A graph is \textit{acyclic} (also called a \textit{forest}) if it contains no cycles. A \textit{tree} is a connected and acyclic graph. The \textit{girth} of a graph is the length of its shortest cycle. A \textit{chord} is an edge joining two non-consecutive vertices in a path or a cycle.

 For a graph $G=(V,E)$, $d_G(u,v)$ is the length of a shortest path joining two vertices $u$ and $v$ in  $G$, we simply write $d(u,v)$ when there is no confusion.  $\diam(G):=\max\{d(u,v):u,v\in V(G)\}$. Diameter is a path of $G$  of the length  $\diam(G)$. 
 A ball of radius $r$ around $u$, $N_r[u]:=\{v\in V:d(u,v)\leq r\}$ where $u\in V$. The \textit{eccentricity} $e(w)$  of a vertex $w$ is $\min \{r:N_r[w]=V\}$. The \textit{radius} of the graph $G$ is $\min\{e(w):w\in V\}$, denoted by $\rad(G)$.  The \textit{center} $C(G)$ of the graph $G$  is the set of all vertices of minimum eccentricity, i.e., $C(G):=\{v\in V:e(v)=\rad(G)\}$. Each vertex in the set $C(G)$ is called a \textit{central vertex} of the graph $G$.  If $P$ is a path in $G$, then we say $V(P)$ is the vertex set of the path $P$, $E(P)$ is the edge set of the path $P$, and  $l(P)$ is the length of the path $P$, i.e., $l(P)=|E(P)|$. 
 

For a graph $G$ and a subset $S \subseteq V(G)$, the \textit{induced subgraph} on $S$, denoted by $G[S]$, is the subgraph of $G$ whose vertex set is $S$ and whose edge set consists of all edges of $G$ with both endpoints in $S$. A \textit{complete graph} on $n$ vertices, denoted by $K_n$, is a graph in which every pair of distinct vertices is adjacent. A subgraph $H$ of $G$ is called a \textit{clique} if $H$ induces a complete graph. A set of vertices $S$ is called \textit{independent} if $G[S]$ contains no edges.

A graph is \textit{bipartite} if its vertex set can be partitioned into two independent sets. More generally, a graph is $k$-\textit{partite} if its vertex set can be partitioned into $k$ independent sets. A \textit{complete bipartite graph}, denoted by $K_{m,n}$, is a bipartite graph whose vertex set is partitioned into sets of sizes $m$ and $n$, with every vertex in one part adjacent to all vertices in the other part.

The distance between two vertices $u$ and $v$, denoted $d(u,v)$, is the length of a shortest $u,v$-path. If $u = v$, then $d(u,v) = 0$; if there is no $u,v$-path, then $d(u,v) = \infty$. A path is called \textit{isometric} if it is a shortest path between its endpoints.

A set of vertices $S$ in a graph $G$ is a \textit{dominating set} if every vertex not in $S$ has a neighbor in $S$. A dominating set of minimum cardinality is called a \textit{minimum dominating set}, and its size, denoted by $|S|$, is called the \textit{domination number} of $G$.

\section{Multipacking and Broadcast Domination on graphs}\label{defn:broadcast_multipacking}

Covering and packing problems are fundamental in graph theory and algorithms (See the book~\cite{cornuejols2001combinatorial}). Here, we study two dual covering and packing problems called \emph{broadcast domination} and \emph{multipacking} (See the book~\cite{haynes2021structures}). The broadcast domination problem has a natural motivation in telecommunication networks: imagine a network with radio emission towers, where each tower can broadcast information at any radius $r$  for a cost of $r$. The goal is to cover the whole network by minimizing the total cost. The multipacking problem is its natural packing counterpart and generalizes various other standard packing problems. Unlike many standard packing and covering problems, these two problems involve arbitrary distances in graphs, which makes them challenging. 




The concept of broadcast domination or dominating broadcast was initially proposed by Erwin in his Ph.D.~thesis~\cite{erwin2004dominating,erwin2001cost} in 2001 as a framework for modeling communication networks. For a graph $ G = (V, E) $ with the diameter $\mathrm{diam}(G)$, a function
	$ f : V \rightarrow \{0, 1, 2, . . . , \mathrm{diam}(G)\} $ is called a \textit{broadcast} on $ G $. Suppose $G$ is a graph with a broadcast $f$. Let $d(u,v)$ be  the length of a shortest path joining the vertices $u$ and $v$ in $G$.  We say $v\in V$ is a \textit{tower} of $G$ if $f(v)>0$.	
Suppose $u, v  \in  V$ (possibly, $ u = v $) such that $ f (v) > 0 $ and
	$ d(u, v) \leq f (v) $, then we say $v$ \textit{broadcasts} (or  \textit{dominates}) $u$, and $u$ \textit{hears} a broadcast from $v$.

For each
	vertex $ u \in V  $, if  there exists a vertex $ v $ in $ G $ (possibly, $ u = v $) such that $ f (v) > 0 $ and
	$ d(u, v) \leq f (v) $, then $ f $ is called a \textit{dominating broadcast} on $ G $.
The \textit{cost} of the broadcast $f$ is the quantity $ \sigma(f)  $, which is the sum
	of the weights of the broadcasts over all vertices in $ G $. So, $\sigma(f)=  \sum_{v\in V}f(v)$. The minimum cost of a dominating broadcast in $G$ (taken over all dominating broadcasts)  is the \textit{broadcast domination number} of $G$, denoted by $ \gamma_{b}(G) $.  So, $ \gamma_{b}(G) = \displaystyle\min_{f\in D(G)} \sigma(f)= \displaystyle\min_{f\in D(G)} \sum_{v\in V}f(v)$, where $D(G)=$ set of all dominating broadcasts on $G$. 
	
	Suppose $f$ is a dominating broadcast with $f(v)\in \{0,1\}$, $\forall v\in V(G)$, then $\{v\in V(G):f(v)=1\}$ is a \textit{dominating set} on $G$. The minimum cardinality of a dominating set is the \textit{domination number} which is denoted by $ \gamma(G) $.

An \textit{optimal broadcast} (also known as \textit{optimal dominating broadcast}, or \textit{minimum dominating broadcast}) on a graph $G$ is a dominating broadcast with a cost equal to $ \gamma_{b}(G) $.	
A dominating broadcast is \textit{efficient} if no vertex
hears broadcasts from two different towers. So, no tower can hear a broadcast from another tower in an efficient broadcast. There is a theorem that says,	for every graph there is an optimal dominating broadcast that is also efficient \cite{dunbar2006broadcasts}. 

If we take a dominating set of a graph as a set of towers where each tower has weight $1$, then this  forms a dominating broadcast of the graph. Therefore, the broadcast domination number is upper bounded by the domination number. Moreover, if we take a central vertex of a graph as a tower of strength of radius of the graph, then it forms a dominating broadcast of the graph. Therefore, the broadcast domination number can be at most the radius of the graph. We state this formally as follows.

\begin{observation}[\cite{erwin2001cost,erwin2004dominating}] \label{obs:erwin_ineq}
    If $ G $ is a connected graph of order at least 2  having radius $ \mathrm{rad}(G) $, broadcast domination number $ \gamma_{b}(G) $ and domination number $ \gamma (G) $, then \[ \gamma_{b}(G) \leq min\{\gamma (G),\mathrm{rad}(G)\}. \] 
\end{observation}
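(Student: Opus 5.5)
Since $\gamma_b(G)$ is a minimum over all dominating broadcasts, it is enough to exhibit two dominating broadcasts: one of cost $\gamma(G)$ and one of cost $\mathrm{rad}(G)$. Taking the smaller of the two bounds then gives the inequality. Both constructions appear informally in the paragraph before the statement; the plan is to check carefully that each one meets the definition of a broadcast, in particular that its values lie in $\{0,1,\dots,\mathrm{diam}(G)\}$.

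\emph{First bound.} Let $S$ be a minimum dominating set, so $|S|=\gamma(G)$. Define $f(v)=1$ for $v\in S$ and $f(v)=0$ otherwise. Because $G$ is connected with at least two vertices, $\mathrm{diam}(G)\ge 1$, so $f$ is a legitimate broadcast. Each $u\in S$ hears itself, since $d(u,u)=0\le 1$. Each $u\notin S$ has a neighbour $v\in S$ with $d(u,v)=1\le f(v)$. Hence $f$ is dominating, and $\sigma(f)=|S|=\gamma(G)$.

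\emph{Second bound.} Pick a central vertex $c\in C(G)$, so $e(c)=\mathrm{rad}(G)$. Set $f(c)=\mathrm{rad}(G)$ and $f(v)=0$ for all $v\neq c$. Connectivity together with order at least $2$ gives $1\le \mathrm{rad}(G)\le \mathrm{diam}(G)$, so $f$ takes allowed values and $c$ is a genuine tower. By the definition of eccentricity, $N_{\mathrm{rad}(G)}[c]=V$, which means every $u$ satisfies $d(u,c)\le f(c)$. So $f$ is dominating with $\sigma(f)=\mathrm{rad}(G)$.

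There is no real obstacle. The only point that needs care is the range check $f(v)\le \mathrm{diam}(G)$. This is exactly where the hypotheses are used: connectivity keeps $\mathrm{rad}(G)$ finite and at most $\mathrm{diam}(G)$, and order at least $2$ makes the value $1$ admissible.
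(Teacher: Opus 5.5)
Your proposal is correct and follows the same approach as the paper, which justifies the observation in the preceding paragraph with the same two broadcasts: weight $1$ on a minimum dominating set, and weight $\mathrm{rad}(G)$ on a central vertex. Your explicit check that the values lie in $\{0,1,\dots,\mathrm{diam}(G)\}$, using connectivity and order at least $2$, fills in a detail the paper leaves implicit.
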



Herke and Mynhardt~\cite{herke2009radial} determined an upper bound for the broadcast domination number of a tree in terms of its number of vertices. 

\begin{theorem}[\cite{herke2009radial}]\label{thm:treebroadcastbound}
    If $T$ is a tree of order $n$, then $\gamma_{b}(T)\leq  \lceil \frac{n}{3}\rceil$.
\end{theorem}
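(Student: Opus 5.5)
We argue by strong induction on $n$. The idea is to peel off a subtree that can be dominated cheaply, with cost at most one third of its order. For $n \le 3$ the statement follows from Observation~\ref{obs:erwin_ineq}. A tree with $n\le 3$ vertices has radius at most $1$, so $\gamma_b(T)\le 1=\lceil n/3\rceil$. The case $n=1$ is trivial: a single tower of strength $1$ suffices. More generally, if $\mathrm{diam}(T)\le 4$ then $\mathrm{rad}(T)\le 2$. This gives the bound whenever $n\ge 4$, since $\lceil n/3\rceil\ge 2$, and the case $\mathrm{diam}(T)\le 2$ is covered by radius $1$. So we may assume $\mathrm{diam}(T)\ge 5$.

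For the inductive step, fix a diametral path $v_0v_1\cdots v_d$ with $d\ge 5$. Consider the edge $v_2v_3$ and let $T_1$ be the component of $T-v_2v_3$ containing $v_2$. By maximality of the path, every vertex of $T_1$ is within distance $2$ of $v_2$. Moreover, every vertex of $T_1$ other than those in the branch toward $v_1$ is within distance $1$ of $v_2$, because a vertex at distance $2$ from $v_2$ would hang below some neighbour $u\ne v_3$ of $v_2$.

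Case (a): every child of $v_2$ in $T_1$ is a leaf except $v_1$ (or similar), so that $T_1$ has radius $1$ or $2$. If $|T_1|\ge 3$ and $T_1$ has radius $1$ (centred at $v_2$ or $v_1$), we broadcast strength $1$ from that centre. We then apply induction to $T-T_1$, which has $n-|T_1|\le n-3$ vertices. This gives
\[\gamma_b(T)\le 1+\lceil (n-3)/3\rceil=\lceil n/3\rceil.\]

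Case (b): $T_1$ has radius $2$, centred at $v_2$, with at least two non-leaf children. Then $T_1$ has at least $5$ vertices, and a strength-$2$ tower at $v_2$ costs $2\le |T_1|/3$ only when $|T_1|\ge 6$; the case $|T_1|=5$ needs separate handling.

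The main obstacle is the boundary situations: when $|T_1|\in\{2,4,5\}$, the ratio cost$/$order exceeds $1/3$. For these I would instead cut one level lower. Take the component $S$ of $T-v_1v_2$ containing $v_1$; this is a star centred at $v_1$. If $S$ has at least $3$ vertices, we remove it at cost $1$ and are done.

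Otherwise $v_1$ has degree $2$ for every choice of deepest path, and every child of $v_2$ is either a leaf or a degree-$2$ vertex with one leaf child. In that situation I would remove $S'=\{v_0,v_1,v_2\}$ together with the leaf neighbours of $v_2$. The issue is that removing $S'$ might disconnect $T$ into several pieces, namely the other pendant $P_2$'s at $v_2$. So I would instead remove $v_2$'s whole branch $T_1$ and dominate it with one tower of strength $1$ at $v_2$, using induction only on the remainder.

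If $T_1$ then contains pendant paths of length $2$ beyond $v_1$, cost $1$ does not cover them. Here I would exploit that the remaining tree's induction allows a vertex adjacent to the cut. Concretely, I would strengthen the induction hypothesis to "$\gamma_b(T)\le\lceil n/3\rceil$, witnessed by a broadcast in which some tower reaches any prescribed vertex's neighbour", or alternatively cut at $v_3$ and count. Getting the arithmetic of these small residual configurations right, in particular $|T_1|=5$ (a spider with two legs of length $2$), is where I expect the real work to lie. There I would broadcast strength $2$ from $v_2$, which also covers $v_3$ and $v_4$, and argue that $T-T_1-\{v_3\}$ can be handled with cost at most $\lceil (n-6)/3\rceil$ by applying induction to each resulting component separately. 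This relies on subadditivity of $\lceil\cdot/3\rceil$ being handled carefully via the extra slack that comes from covering $v_3$ and its leaf neighbours.
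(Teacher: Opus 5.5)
The paper does not prove this bound; it quotes it from Herke and Mynhardt. So your argument has to stand on its own. Your base cases, Case (a) and the pendant-star reduction are fine. But there is a genuine gap: the configurations you yourself identify as the crux are never resolved. These are $T_1$ a spider centred at $v_2$ of order $4$ or $5$ ($P_4$, $P_5$, or legs of lengths $2,1,1$), each with $\gamma_b(T_1)=2>|T_1|/3$. Every remedy you propose for them fails:

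\emph{The ``moreover'' claim.} It is false: another child $u$ of $v_2$ may have a leaf child, which is still at distance only $d$ from $v_d$. This is exactly your $P_5$ case.

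\emph{Strength $1$ at $v_2$.} A strength-$1$ tower at $v_2$ never dominates $T_1$, since $d(v_2,v_0)=2$.

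\emph{The strengthened hypothesis.} It is not stated precisely. Read literally it is vacuous, since every neighbour of every vertex is dominated. In the form you would actually need (a tower $v$ with $d(v,x)\le f(v)-1$ for a prescribed $x$, so that whatever hangs off $x$ is covered for free) it is false for $P_3$ or $P_6$ with $x$ an end-vertex.

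\emph{The $|T_1|=5$ plan.} It relies on an inequality that goes the wrong way: $\lceil a/3\rceil+\lceil b/3\rceil\ge\lceil (a+b)/3\rceil$. Inducting separately on the components of $T-T_1-v_3$ loses up to $2/3$ per component, and there can be arbitrarily many components. Concretely, let $v_3$ have neighbours $v_2,v_4,x_1,x_2$, with a leaf $y_i$ hanging at each $x_i$. Let $v_2$ be the centre of a path $v_0v_1v_2uw$ and $v_4$ the centre of a path $v_6v_5v_4u'w'$. Then $n=15$ and $\mathrm{diam}(T)=6$. Every diametral path runs from $\{v_0,w\}$ to $\{v_6,w'\}$, so every choice lands in your last case. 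After the strength-$2$ tower at $v_2$, the components $\{x_1,y_1\}$, $\{x_2,y_2\}$ and $\{v_4,v_5,v_6,u',w'\}$ still need cost at least $1+1+2$. The total is then $6>5=\lceil 15/3\rceil$, although one tower of strength $3$ at $v_3$ dominates $T$.

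The missing idea is that no such local analysis is needed once you split on $n \bmod 3$. Combining broadcasts on the two sides of an edge is legitimate, since subtrees of a tree are isometric.

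\emph{Case $n\equiv 1\pmod 3$.} Delete a leaf $x$ and set $f(x)=1$. The cost is at most $1+\lceil (n-1)/3\rceil=\lceil n/3\rceil$.

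\emph{Case $n\equiv 2\pmod 3$.} Your pendant star at $v_1$ (order $s\ge 2$, cost $1$) already suffices, even for $s=2$: $1+\lceil (n-2)/3\rceil=\lceil n/3\rceil$.

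\emph{Case $n\equiv 0\pmod 3$, some $T_k$ of order divisible by $3$.} Here $T_k$ is the component of $T-v_kv_{k+1}$ containing $v_0$, for $0\le k\le d-1$. Induct on both $T_k$ and $T-T_k$ to get cost at most $n/3$.

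\emph{Case $n\equiv 0\pmod 3$, no such $T_k$.} For every $0\le k\le d-3$, one of $v_{k+1},v_{k+2}$ has a neighbour off the path; otherwise $|T_k|,|T_{k+1}|,|T_{k+2}|$ are three consecutive integers and one is divisible by $3$. So the path vertices with off-path neighbours form a vertex cover of $v_1\cdots v_{d-1}$. Hence at least $\lfloor (d-1)/2\rfloor$ vertices lie off the path, and $n\ge d+1+\lfloor (d-1)/2\rfloor$. This lower bound equals $3\,\mathrm{rad}(T)$ for even $d$ and $3\,\mathrm{rad}(T)-1$ for odd $d$. Since $3\mid n$, this forces $\mathrm{rad}(T)\le n/3$, and Observation~\ref{obs:erwin_ineq} finishes the proof.
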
 

Observe that, if $f$ is a dominating broadcast of a spanning tree of a graph $G$, then $f$ is also a dominating broadcast of $G$. Therefore, if $G$ is a connected graph of order $n$, then $\gamma_{b}(G)\leq  \lceil \frac{n}{3}\rceil$ by Theorem \ref{thm:treebroadcastbound}.

Finding the domination number (size of the smallest dominating set) of a graph is known to be \textsc{NP-hard}~\cite{garey1979computers}. Since broadcast domination is one kind of general version of the domination problem, we might expect that determining the broadcast domination number $\gamma_b(G)$ of a graph is also \textsc{NP-hard}. 
 Surprisingly, it is not \textsc{NP-hard}! In 2006, Heggernes and Lokshtanov~\cite{heggernes2006optimal} have shown that an optimal dominating broadcast of any graph can be found in polynomial-time. 

\begin{theorem}[\cite{heggernes2006optimal}]
The broadcast domination number of any graph with $n$ vertices can be determined in $O(n^6)$ time.
\end{theorem}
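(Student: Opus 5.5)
The plan follows the Heggernes--Lokshtanov strategy. The first step is to reduce to efficient broadcasts. By the theorem of Dunbar et al.\ cited above, every graph has an optimal dominating broadcast that is efficient. For an efficient broadcast $f$, the balls $N_{f(v)}[v]$ around the towers partition $V$. Consider the auxiliary graph $H_f$ on the towers, with two towers adjacent when some edge of $G$ joins their balls. I will show that from any optimal efficient broadcast one can obtain another optimal one for which $H_f$ is a path or a cycle. The idea is that if some tower $v$ has degree at least $3$ in $H_f$, or $H_f$ branches, then a subtree of towers can be merged into a single tower. Its radius is at most the sum of the merged radii (via the triangle inequality along the chain of adjacent balls), and this replacement does not increase the cost. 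Iterating gives a ``path-like'' optimal broadcast: a sequence of balls $B_1,\dots,B_k$ with $B_i$ touching only $B_{i-1}$ and $B_{i+1}$ (cyclically in the cycle case).

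The second step is dynamic programming over these path-like structures. A ball is determined by its centre and radius, so there are $O(n^2)$ candidate balls, and these can be precomputed in $O(n^3)$ time by all-pairs BFS. For the path case I build a directed auxiliary graph $D$ whose nodes are candidate balls. There is an arc from $B=N_r[u]$ to $B'=N_{r'}[u']$ when $B$ and $B'$ are disjoint, $B'$ touches $B$, and $B'$ is consistent as the next ball in a chain that so far covers everything ``behind'' it. Since the union of a prefix of balls is not determined by the last ball alone, I will use a distance characterization for that prefix: the covered region must be exactly the set of vertices whose distance to a fixed starting vertex lies in a certain range, i.e.\ a layering from an end of the chain. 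Weighting each node by its radius, a shortest path in $D$ from a valid first ball to a valid last ball gives the optimum restricted to path-like broadcasts. With $O(n^2)$ nodes and $O(n^4)$ arcs, each checked in $O(n^2)$ time by precomputed distances, the total is $O(n^6)$. The cycle case is handled by fixing one ball (an extra $O(n^2)$ factor, which is absorbed if the check cost is lowered suitably) and cutting the cycle there. We also compare against the single-tower option, which gives cost $\mathrm{rad}(G)$.

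The main obstacle is the structural step: showing that the tower-adjacency structure can be taken to be a path or cycle, and making the DP arcs check only local conditions while still guaranteeing that the whole of $V$ is dominated. To handle this, I would first prove the merging lemma carefully: if $H_f$ contains a vertex of degree at least $3$, replace three neighbouring balls by one ball about a suitable centre and bound its radius. Then I would define the DP state to be a pair of consecutive balls rather than a single one, so that coverage of each vertex can be verified by its distance to the current pair. That would cost $O(n^4)$ states with $O(n^2)$ transitions each, which fits within $O(n^6)$.
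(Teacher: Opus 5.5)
The thesis only cites this result from Heggernes and Lokshtanov and gives no proof, so I am judging your argument on its own. Its dynamic-programming step has a genuine gap.

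You never make the arc condition ``$B'$ is consistent as the next ball'' precise, and the characterization you propose for it is false. Write $B_i=N_{f(v_i)}[v_i]$. A vertex of $B_3$ can be at distance $f(v_1)+2$ from $v_1$ while $B_2$ contains vertices at distance $f(v_1)+1+2f(v_2)$. For instance, take a tree in which $B_3$ hangs off the vertex of $B_2$ adjacent to $B_1$, and $B_2$ extends beyond $v_2$ on the far side. Then $B_1\cup B_2$ is not a range of BFS layers from $v_1$, and nothing in your argument selects an optimal broadcast for which it is.

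The fallback to pairs of consecutive balls is not specified either. Distances to the two current balls do not tell you whether a vertex elsewhere has already been covered. Also, with the $O(n^2)$-per-arc check you use earlier, $O(n^4)$ states with $O(n^2)$ successors cost $O(n^8)$, not $O(n^6)$.

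The missing idea is \emph{separation}. Suppose $H_f$ is a path $B_1,\dots,B_k$. Each ball induces a connected subgraph, and no edge joins $B_j$ and $B_l$ for $j<i<l$. Hence $B_1\cup\dots\cup B_{i-1}$ is exactly one connected component of $G-B_i$, and $B_{i+1}\cup\dots\cup B_k$ is the other. So the right state is a ball together with a component of its complement. This gives $O(n^2)$ states, since only balls whose complement has at most two components can occur.

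For $B=N_r[u]$ and $B'=N_{r'}[u']$, a transition $(B,C)\to(B',C')$ only requires three things: $d(u,u')=r+r'+1$ (the balls are disjoint and touching); $u'$ lies in the other component of $G-B$; and the component of $G-B'$ containing $B$ has size $|C|+|B|$. With precomputed component labels these are constant-time tests. The state graph is acyclic, and a cheapest path weighted by radii gives the best path-like broadcast.

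That speed-up is also what leaves room for the cycle case. There you must fix $B_1$ and work with components of $G-(B_1\cup B_i)$. Your remark that the extra factor ``is absorbed if the check cost is lowered suitably'' is not an argument until this is done.

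Your structural step is salvageable, but in its natural reading the merge is wrong. Let $f(v)=a$ and let three neighbours of $v$ in $H_f$ have radii $b_1\ge b_2\ge b_3\ge 1$. A single tower centred at $v$ needs radius up to $a+1+2b_1$ (attained in a spider), which exceeds the budget $a+b_1+b_2+b_3$ when $b_1$ is large. For three consecutive balls, the triangle-inequality bound exceeds the budget by exactly one, which is why paths and cycles survive.

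What works is to place the new centre on a shortest $v$--$u_1$ path at distance $b_1-b_2$ from $v$, with radius $a+b_1+b_2+1\le a+b_1+b_2+b_3$. Using $d(v,u_j)\le a+1+b_j$ and the triangle inequality, this ball contains all four balls.

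Then take an optimal broadcast with the fewest towers. Two balls sharing a vertex merge into one of radius at most $f(u)+f(v)$, and a vertex of degree at least $3$ in $H_f$ allows the merge above. So this broadcast is efficient and has $\Delta(H_f)\le 2$ directly, with no iteration needed.
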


Heggernes and S{\ae}ther~\cite{heggernes2012broadcast} later conjectured that the broadcast domination number can be determined in $O(n^5)$ time for general graphs. They have shown that the problem can be solved in $O(n^4)$ time for chordal graphs. The same problem can be solved in linear time for trees~\cite{brewster2019broadcast}.  Blair, Heggernes, Horton, and Manne have shown that the broadcast domination number can be computed in $O(n^3)$ time for interval graphs and in $O(nr^4)$ time
for series-parallel graphs where $r$ is the radius of the graph.
Chang and Peng~\cite{chang2010linear} later improved the result for interval graphs. They have shown that the broadcast domination number of an interval graph of vertex set size $n$ and edge set size $m$ can be computed in $O(n + m)$ time. Also, the problem can be solved in $O(n^3)$ time for strongly chordal graphs~\cite{brewster2019broadcast,yang2019limited}.

Multipacking was  introduced in Teshima’s Master’s Thesis \cite{teshima2012broadcasts} in 2012 (also see \cite{beaudou2019multipacking,haynes2021structures,cornuejols2001combinatorial,dunbar2006broadcasts,meir1975relations}). Multipacking can be seen as a refinement of the classic independent set problems. 	A \textit{multipacking} is a set $ M \subseteq V  $ in a
	graph $ G = (V, E) $ such that   $|N_r[v]\cap M|\leq r$ for each vertex $ v \in V $ and for every integer $ r \geq 1 $.  It is worth noting that in this definition, it suffices to consider values of $r$ in the set $\{1,2,\dots, \mathrm{diam}(G)\}$, where $\mathrm{diam}(G)$ denotes the diameter of the graph. Moreover, if $c$ is a central vertex of the graph $G$, then $N_r[c]=V(G)$ when $r=\mathrm{rad}(G)$. Therefore, size of any multipacking in $G$ is upper-bounded by $\mathrm{rad}(G)$. Therefore, we can further restrict $r$ (in the definition of multipacking) to the set $\{1,2,\dots, \mathrm{rad}(G)\}$, where $\mathrm{rad}(G)$ is the radius of the graph. The \textit{multipacking number} of $ G $ is the maximum cardinality of a multipacking of $ G $ and it
	is denoted by $ \MP(G) $. A \textit{maximum multipacking} is a multipacking $M$  of a graph $ G  $ such that	$|M|=\MP(G)$.

Multipacking was further studied by Hartnell and Mynhardt~\cite{hartnell2014difference} in 2014. They have shown the following.
\begin{proposition}[\cite{hartnell2014difference}]\label{prop:hartnell_ineq}
    If $ G $ is a connected graph of order at least 2  having diameter $ \diam(G) $, multipacking number  $ \MP(G) $, broadcast domination number $ \gamma_{b}(G) $, then \[\Big\lceil{\dfrac{\diam(G)+1}{3}\Big\rceil}\leq \MP(G)\leq \gamma_{b}(G). \] 
\end{proposition}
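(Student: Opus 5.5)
The plan is to prove the two inequalities separately. The lower bound will come from an explicit multipacking placed along a diametral path. The upper bound will come from a weak-duality counting argument that pairs any multipacking with any dominating broadcast.

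\textbf{Lower bound.} Let $d=\diam(G)$ and fix vertices $u,v$ with $d(u,v)=d$. Take a shortest $u,v$-path $P=u_0u_1\cdots u_d$; being a shortest path, it satisfies $d(u_i,u_j)=|i-j|$. Set
\[
M=\{u_{3k} : 0\le k\le \lfloor d/3\rfloor\},
\]
so that $|M|=\lfloor d/3\rfloor+1=\lceil (d+1)/3\rceil$. To show $M$ is a multipacking, fix a vertex $x\in V$ and an integer $r\ge 1$. Any two vertices of $M\cap N_r[x]$ are at distance at most $2r$ from each other, by the triangle inequality through $x$. But those two vertices lie on the isometric path $P$, so they are at least as far apart as their indices differ. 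Hence all of $M\cap N_r[x]$ lies among indices in an interval of length at most $2r$. Such an interval contains at most $\lfloor 2r/3\rfloor+1$ multiples of $3$, and $\lfloor 2r/3\rfloor+1\le r$ for every $r\ge 1$: when $r=1$ the left side is $1$, and when $r\ge 2$ we have $2r/3+1\le r$. Therefore $|N_r[x]\cap M|\le r$, so $M$ is a multipacking and $\MP(G)\ge\lceil(d+1)/3\rceil$.

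\textbf{Upper bound.} Let $M$ be any multipacking and $f$ any dominating broadcast, and let $T=\{v: f(v)>0\}$ be the set of towers. Since $f$ is dominating, every $m\in M$ hears some tower $v\in T$, meaning $m\in N_{f(v)}[v]$. Consequently
\[
|M|\le\sum_{v\in T}|N_{f(v)}[v]\cap M|\le\sum_{v\in T}f(v)=\sigma(f),
\]
where the second inequality applies the multipacking condition with $r=f(v)\ge1$. Taking $M$ to be a maximum multipacking and $f$ to be an optimal broadcast gives $\MP(G)\le\gamma_b(G)$.

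The only step needing care is the counting in the lower bound, specifically confirming $\lfloor 2r/3\rfloor+1\le r$ at the small value $r=1$, which holds as checked above. No earlier result is needed beyond the definitions.
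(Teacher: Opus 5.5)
Your proof is correct and follows the argument the paper relies on. The lower bound takes every third vertex of a diametral (isometric) path, as in Lemma~\ref{lem:isometric_path}. The upper bound is the weak-duality count $|M|\le\sum_{v\in T}|N_{f(v)}[v]\cap M|\le\sigma(f)$; the paper writes out the same count for its Euclidean analogue (Proposition~\ref{prop:mp_gammab_geo_ineq}) and otherwise obtains the bound from $\MP(G)\le\MP_f(G)=\gamma_{b,f}(G)\le\gamma_b(G)$.

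One small slip to fix. Your claim that $2r/3+1\le r$ for all $r\ge 2$ is false at $r=2$, where it reads $7/3\le 2$. Check $r=2$ directly using the floor, $\lfloor 4/3\rfloor+1=2\le 2$, and use the unfloored inequality only for $r\ge 3$. So $r=2$, not $r=1$, is the case that actually needs care.
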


In 2013, Brewster, Mynhardt, and Teshima~\cite{brewster2013new} formulated the broadcast domination problem as an Integer Linear Program (ILP). The dual of this ILP led to the nice combinatorial problem \emph{multipacking}. Suppose $V(G)=\{v_1,v_2,v_3,\dots,v_n\}$. Let $c$ and $x$ be the vectors indexed by $(i,k)$ where $v_i\in V(G)$ and $1\leq k\leq diam(G)$, with the  entries $c_{i,k}=k$ and $x_{i,k}=1$ when $f(v_i)=k$ and $x_{i,k}=0$ when $f(v_i)\neq k$. Let $A=[a_{j,(i,k)}]$ be a matrix with the entries 
\begin{equation*}	a_{j,(i,k)}=
    \begin{cases}
        1 & \text{if }  v_j\in N_k[v_i]\\
        0 & \text{otherwise. }
    \end{cases} 
 \end{equation*}

\noindent	Hence, the broadcast domination number can be expressed as an integer linear program:  $$\gamma_b(G)=\min \{c.x :  Ax\geq \mathbf{1}, x_{i,k}\in \{0,1\}\}.$$	
	
\noindent  The \textit{maximum multipacking problem} is the dual integer program of the above problem. If $M$ is a multipacking, we define   a vector $y$ with the entries $y_j=1$ when $v_j\in M$ and $y_j=0$ when $v_j\notin M$.  So, $$\MP(G)=\max \{y.\mathbf{1} :  yA\leq c, y_{j}\in \{0,1\}\}.$$ 

We know that $\MP(G)\leq \gamma_b(G)$ from Proposition \ref{prop:hartnell_ineq}. Moreover, we have $3\MP(G)-1\geq \diam(G)\geq \rad(G)\geq \gamma_b(G)$, from the Proposition \ref{prop:hartnell_ineq} and Observation \ref{obs:erwin_ineq}. Hartnell and Mynhardt~\cite{hartnell2014difference} studied the relation between $\MP(G)$ and $\gamma_b(G)$. They have shown that $ \gamma_b(G) \leq 3\MP(G)-2$ for the connected graphs with $\MP(G)\geq 2$. This inequality was improved by Beaudou, Brewster, and Foucaud~\cite{beaudou2019broadcast}.

\begin{theorem}[\cite{beaudou2019broadcast}]\label{thm:gammaleq2mp3}
    For any connected graph $G$,  $\gamma_b(G) \leq 2\MP(G)+3.$
\end{theorem}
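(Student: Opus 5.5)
The plan is to produce, from a single combinatorial object, both a dominating broadcast and a multipacking whose sizes are linked by a factor of $2$, up to an additive constant. Small cases go first. By Proposition~\ref{prop:hartnell_ineq} and Observation~\ref{obs:erwin_ineq} we have $\gamma_b(G)\le \rad(G)\le \diam(G)\le 3\MP(G)-1$. Since $3\MP(G)-1\le 2\MP(G)+3$ whenever $\MP(G)\le 4$, the statement holds in that range. Hence we may assume $\MP(G)\ge 5$, so $G$ has large diameter, and concentrate on the asymptotic factor $2$.

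\textbf{Construction.} Fix a peripheral vertex $x$ and a BFS spanning tree $T$ rooted at $x$. Any dominating broadcast of $T$ dominates $G$, as observed after Theorem~\ref{thm:treebroadcastbound}, and distances in $G$ are at most those in $T$. Build greedily a set $M=\{m_1,\dots,m_k\}$ of vertices together with a covering of $V$ by balls $N_2[m_i]$, plus at most one extra ball of radius $\le 3$ for the leftover near the root or the last layer. The greedy rule is: repeatedly pick a deepest vertex $w$ of $T$ that is not yet covered, let $m_i$ be its ancestor at distance $2$ (or the root), and cover $N_2[m_i]$. Broadcasting with strength $2$ from each $m_i$, plus the leftover ball, gives a dominating broadcast of cost at most $2k+3$. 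It therefore suffices to show $k\le \MP(G)$, that is, to extract from $M$ a multipacking of size $k$.

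\textbf{Multipacking certificate.} The chosen vertices $w_i$ (the deepest uncovered ones) are pairwise at distance at least $3$ in $G$. Indeed, $w_j$ was not covered by $N_2[m_i]$, and $w_i$ was chosen deepest, so a BFS-level argument prevents a short path between them. For $r=1$, distance at least $3$ gives $|N_1[v]\cap W|\le 1$ for $W=\{w_i\}$. For general $r$ I want $|N_r[v]\cap W|\le r$. The intended argument is: if $N_r[v]$ contained $r+1$ of the $w_i$, the deepest-first greedy order and the layer structure of $T$ would allow a single ball of radius about $2r/\!\,\ldots$ to replace several of the radius-$2$ balls. That would contradict the greedy choice.

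\textbf{Main obstacle.} This last step is where I expect difficulty, since pairwise distance $\ge 3$ alone does not control how many points lie in an $r$-ball (star-like configurations). To handle it I would modify the greedy so that it uses larger balls whenever possible. The modified rule is: at each step, take the largest $r$ such that a ball of radius $r$ captures $r+1$ candidate points, and replace those $r+1$ radius-$2$ towers with a single tower of strength $r$. This lowers the broadcast cost, since $r<2(r+1)$, and removes the violating configuration. I would iterate this until the remaining point set is a multipacking, and then do the bookkeeping that the total cost stays $\le 2|W|+3$ with $W$ a multipacking. The key point to verify is that the replacements do not create new violations. My plan is to prove this with a potential argument on the BFS depth of the points involved.
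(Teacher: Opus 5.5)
The thesis quotes this bound from Beaudou, Brewster and Foucaud without proof, so there is no in-paper argument to compare your route with. Judged on its own terms, your argument has a genuine gap, and it sits exactly where the content of the theorem lies. Your reduction to $\MP(G)\ge 5$ is correct, but both claims about the greedy cover fail.

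\emph{The $w_i$ need not be $3$-separated in $G$.} A BFS tree does not control cross edges. Take $C_7=x\,a\,b\,q\,q'\,b'\,a'\,x$ rooted at $x$. The greedy picks $w_1=q$ and $m_1=a$. Then $N_2[a]=\{a,x,b,q,a'\}$ misses $q'$, since $d(a,q')=3$. So the next choice is $w_2=q'$, which is adjacent to $w_1$.

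\emph{Even in trees, $k$ is unrelated to $\MP(G)$.} Take the spider with legs $c\,a_i\,b_i\,d_i$ ($1\le i\le L$), rooted at $d_1$. The greedy opens the $L-1$ balls $N_2[a_i]$ for $i\ge 2$. Yet $\MP\le\rad=3$, and all the $w_i=d_i$ lie in $N_3[c]$. In $Q_n$ the greedy opens at least $2^n/(\binom{n}{2}+n+1)$ balls, although $\gamma_b(Q_n)=n-1$. So $k\le\MP(G)$ is false in general, and the entire theorem is pushed onto your repair step.

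The repair step does not work as stated, for three reasons.
\begin{itemize}
\item A tower of strength $r$ at the centre $v$ of a ball containing $w_{i_0},\dots,w_{i_r}$ covers those vertices. It does not cover the regions $N_2[m_i]\subseteq N_4[w_i]$, so domination can be lost unless the strength is raised to about $r+4$.
\item The bookkeeping runs the wrong way. Deleting $r+1$ points from $W$ lowers $2|W|+3$ by $2r+2$. The cost drops only by $r+2$, or by $r-2$ with strength $r+4$. So each repair uses up at least $r$ units of slack, and you start with only $3$. In the spider (largest-$r$ rule) the procedure ends with a single tower of strength about $L$ and an empty certificate.
\item To keep $\sigma(f)\le 2|W|+3$, you would have to retain about half the points of every dense ball. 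The retained set must still satisfy $|N_s[v]\cap W|\le s$ for all $v$ and all $s$ simultaneously. That is essentially the statement to be proved, and the promised potential argument on BFS depth is not supplied.
\end{itemize}

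What is missing is a global mechanism that pairs a cheap broadcast with a multipacking of half its cost at every scale at once, rather than a local repair of a radius-$2$ cover. For comparison, the thesis's proofs for chordal graphs and cacti are global in this sense. They bound $\gamma_b(G)\le\rad(G)$ and extract a multipacking of size proportional to $\rad(G)$ from long isometric structures.
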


In 2012, Teshima~\cite{teshima2012broadcasts} showed that the graph $G$ depicted in Figure~\ref{fig:ratio2} satisfies $\gamma_b(G) = 4$ and $\MP(G) = 2$.

\begin{figure}[h]
    \centering
   \includegraphics[height=3.3cm]{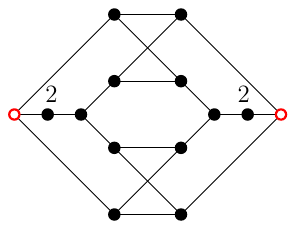}
    \caption{A connected $G$ graph with $\gamma_b(G)=4$ and $\MP(G)=2$.}
    \label{fig:ratio2}
\end{figure}

 Surprisingly, to date, no graph G has been found such that $\frac{\gamma_b(G)}{\MP(G)}>2$.  Beaudou, Brewster, and Foucaud~\cite{beaudou2019broadcast} conjectured the following.

\begin{conjecture}[\cite{beaudou2019broadcast}]
    For any connected graph $G$,  $\gamma_b(G) \leq 2\MP(G).$
\end{conjecture}

Hartnell and Mynhardt~\cite{hartnell2014difference} constructed a graph $H_k$ (depicted in Figure \ref{fig:Hk1}) as follows:  (i) The vertex set of $H_k$ is  $V(H_k)=\{w_i,y_i,x_i,u_i,v_i,z_i:1\leq i\leq 3k\}$. (ii) For each $1\leq i\leq 3k$, the vertex set $\{w_i,y_i,x_i,u_i,v_i,z_i\}$ forms a $K_{2,4}$ (complete bipartite graph) with partite sets $\{w_i,y_i\}$ and $ \{x_i,u_i,v_i,z_i\}$. (iii) For each $1\leq i\leq 3k-1$, $z_i$ and $x_{i+1}$ are adjacent. They proved that $\gamma_b(H_k)=4k$ and $\MP(H_k)=3k$. 

\begin{figure}[ht]
    \centering
    \includegraphics[width=\textwidth]{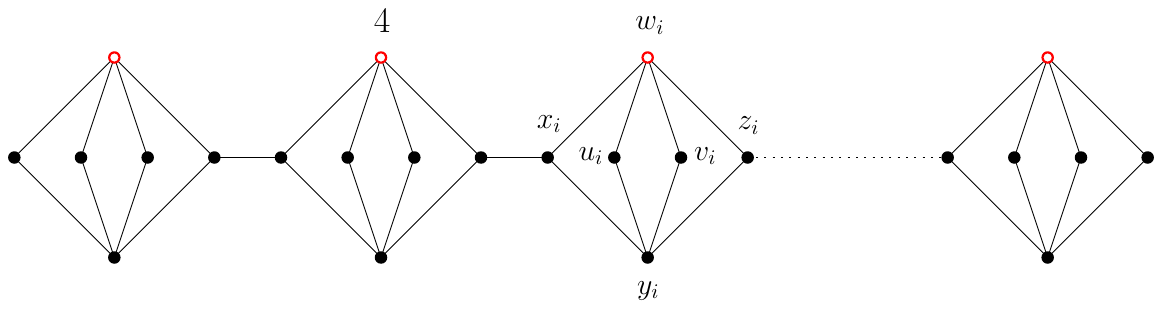}
    \caption{The $H_k$ graph with $\gamma_b(H_k)=4k$ and $\MP(H_k)=3k$. The set $\{w_i:1\leq i\leq 3k\}$ is a maximum multipacking of $H_k$. This family of graphs was constructed by Hartnell and Mynhardt~\cite{hartnell2014difference}.} 
    \label{fig:Hk1}
\end{figure}

This yields the following result.

\begin{theorem}[\cite{hartnell2014difference}]\label{thm:gamma_mp_ratio_4_3}
For every integer \( k \geq 1 \), there exists a connected graph \( H_k \) satisfying $\gamma_b(H_k) = 4k \text{ and } \MP(H_k) = 3k.$  Thus, the following hold for the graph \( H_k \).

\begin{enumerate}
    \item[1)] $  \gamma_b(H_k) - \MP(H_k) = k.$
    
    \item[2)] $\gamma_b(H_k)/\MP(H_k) = \frac{4}{3}.$
    
\end{enumerate}
\end{theorem}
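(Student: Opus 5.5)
The equalities $\gamma_b(H_k)=4k$ and $\MP(H_k)=3k$ split into four inequalities. I would prove the two multipacking bounds and the upper bound on $\gamma_b$ by short direct arguments, and spend most of the effort on $\gamma_b(H_k)\ge 4k$. Write $B_i=\{w_i,y_i,x_i,u_i,v_i,z_i\}$ for the $i$-th copy of $K_{2,4}$. The basic distances are:
\begin{itemize}
\item any two vertices of $B_i$ are at distance at most $2$;
\item $d(z_i,x_{i+1})=1$;
\item $d(w_i,w_j)=3|i-j|$, via $w_i,z_i,x_{i+1},w_{i+1},\dots$;
\item $d(x_1,z_{3k})=2\cdot 3k+(3k-1)=9k-1$, and this is $\diam(H_k)$.
\end{itemize}

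\textbf{Multipacking.} Proposition~\ref{prop:hartnell_ineq} gives $\MP(H_k)\ge\lceil 9k/3\rceil=3k$ from the diameter alone. Alternatively, I would check directly that $\{w_1,\dots,w_{3k}\}$ is a multipacking. The $w$'s inside a ball $N_r[v]$ are pairwise at distance at most $2r$. Since $d(w_i,w_j)=3|i-j|$, their indices lie in an interval of length at most $2r/3$. Hence the ball contains at most $\lfloor 2r/3\rfloor+1\le r$ of them for every $r\ge1$.

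For the upper bound $\MP(H_k)\le 3k$, I would show that a multipacking $M$ meets each $B_i$ in at most one vertex. Any two vertices of $B_i$ are either adjacent or have a common neighbour in $B_i$ (for example $w_i$ for two vertices on the $4$-side, $x_i$ for $w_i,y_i$). So some closed $1$-ball contains both, which forbids two vertices of $M$ in the same block. With $3k$ blocks this gives $|M|\le 3k$.

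\textbf{Upper bound on $\gamma_b$.} I would group the blocks into $k$ consecutive triples $B_{3j+1},B_{3j+2},B_{3j+3}$ and put $f(w_{3j+2})=4$. Checking distances from $w_{3j+2}$:
\begin{itemize}
\item $B_{3j+2}$ is within distance $2$;
\item $z_{3j+1}$ is at distance $2$, then $w_{3j+1},y_{3j+1}$ at $3$, and $x_{3j+1},u_{3j+1},v_{3j+1}$ at $4$;
\item $B_{3j+3}$ is covered symmetrically.
\end{itemize}
This dominating broadcast has cost $4k$.

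\textbf{Lower bound $\gamma_b(H_k)\ge 4k$.} This is the main obstacle. I would take an optimal broadcast $f$ that is also efficient, which exists by~\cite{dunbar2006broadcasts}. I would then charge blocks to towers: block $B_i$ is charged to the unique tower hearing $u_i$. The vertices $u_i,v_i$ are reachable only through $w_i,y_i$, so a tower of strength $s$ reaches $u_i$ only if it is within $s-1$ of $w_i$ or $y_i$. A position computation along the path-like structure, using $d(w_i,w_j)=3|i-j|$, should show that a tower of strength $s$ hears $u_j$ for at most about $\lfloor (2s+2)/3\rfloor$ indices $j$. This gives ratio exactly $3/4$ per unit of cost at $s=4$, and in general at most $3/4$ once $s\ge 3$ or so.

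The difficulty is the small strengths. For instance $f(w_i)=1$ covers the $u$-vertex of one block at cost $1$, so the naive count fails there. The fix I would try first is a refined charging in which a tower of small strength must leave some vertex of its block uncovered (for $f(w_i)=1$ this is $y_i$). Efficiency then forces a second tower to reach into that block, and that tower loses coverage elsewhere. Formally, I would assign weights to towers according to the maximal run of consecutive blocks they dominate and do a case analysis on $s\le 3$ to show every such run of $m$ blocks costs at least $4m/3$. Summing over the runs gives $\sigma(f)\ge 4k$. If this case analysis becomes unwieldy, the fallback is an induction on $k$ that cuts $H_k$ at an edge $z_ix_{i+1}$ not crossed by any tower's ball.
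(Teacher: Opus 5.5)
Your multipacking bounds and the cost-$4k$ broadcast are correct. The lower bound $\gamma_b(H_k)\ge 4k$, which you rightly call the main step, is not established, and two parts of the sketch fail as written.

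\textbf{The per-tower estimate is too coarse.} The bound $\lfloor (2s+2)/3\rfloor$ gives $2$ at $s=2$ and $4$ at $s=5$, both above $3s/4$. You need the exact counts by tower position. Use $d(w_i,u_j)=3|i-j|+1$ and $d(u_i,u_j)=3|i-j|+2$ for $j\ne i$, together with $d(z_i,u_{i+m})=3m$ and $d(z_i,u_{i-m})=3m+2$. A tower of strength $s\ge 2$ then hears at most:
\begin{itemize}
\item $1+2\lfloor (s-1)/3\rfloor$ of the $u_j$ if it sits on $w_i$ or $y_i$;
\item $1+\lfloor s/3\rfloor+\lfloor (s-2)/3\rfloor$ if it sits on $x_i$ or $z_i$;
\item $1+2\lfloor (s-2)/3\rfloor$ if it sits on $u_i$ or $v_i$.
\end{itemize}
Each of these is at most $3s/4$ for every $s\ge 2$.

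\textbf{Your handling of strength $1$ misreads efficiency.} If $f(w_i)=1$, every neighbour of $y_i$ already hears $w_i$. In an efficient broadcast, no tower can then reach $y_i$ at all. A tower at $y_i$ would be heard by its neighbours, and any other tower reaches $y_i$ only through a neighbour of $y_i$, which would then hear two towers. So there is no ``second tower reaching into the block'' to charge against. The configuration is simply impossible, since domination requires $y_i$ to be heard.

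The same holds for every vertex $t$, because $N_1[t]$ contains the whole neighbourhood of some vertex outside it: $y_i$ for $t=w_i$, $v_i$ for $t=u_i$, and $u_i$ for $t\in\{x_i,z_i\}$. Hence an efficient dominating broadcast of $H_k$ has no towers of strength $1$. Your charging then closes in one line: every $u_j$ hears some tower, so $3k\le\sum_t h(f(t))\le\frac34\sigma(f)$.

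Without this observation, the run-based case analysis is never carried out. The fallback induction relies on a cut edge avoiding all balls, whose existence you do not justify. So as written, the proposal does not prove $\gamma_b(H_k)\ge 4k$.

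The paper only cites Hartnell and Mynhardt for this statement. For its own $4k$-versus-$3k$ families $G_k$ and $F_k$, however, it gets the lower bound by LP duality, which avoids any case analysis on towers. A fractional multipacking $\omega$ of total weight $4k$ gives $\gamma_b\ge\gamma_{b,f}=\MP_f\ge 4k$. For $H_k$, take $\omega=1/3$ on each of $w_i,y_i,u_i,v_i$ and $\omega=0$ on $x_i,z_i$. Checking balls around $w_i$, $x_i$ and $u_i$ (the other vertices are symmetric) shows $\omega(N_r[v])\le r$ for every $r$. This yields the lower bound directly and is the cleaner route.
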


Theorem \ref{thm:gammaleq2mp3} and Theorem \ref{thm:gamma_mp_ratio_4_3} yields the following result.

\begin{corollary}[\cite{beaudou2019broadcast,hartnell2014difference}]\label{cor:gammabG/mpG_general} For connected graphs, 
$$\frac{4}{3}\leq\lim_{\MP(G)\to \infty}\sup\Bigg\{\frac{\gamma_{b}(G)}{\MP(G)}\Bigg\}\leq 2.$$
\end{corollary}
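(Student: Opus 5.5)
The corollary follows by combining Theorem~\ref{thm:gammaleq2mp3} with the family $H_k$ of Theorem~\ref{thm:gamma_mp_ratio_4_3}. I read the limit superior as
\[
L := \lim_{m\to\infty} \sup\Big\{ \tfrac{\gamma_b(G)}{\MP(G)} : G \text{ connected},\ \MP(G)\geq m \Big\},
\]
which is the limit of a non-increasing sequence $s_m$ (each $s_m$ is a supremum over a shrinking family). The limit therefore exists in $[0,\infty]$. I will establish the two inequalities separately.

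\textbf{Upper bound.} Let $G$ be connected with $\MP(G)\geq m \geq 1$. Theorem~\ref{thm:gammaleq2mp3} gives
\[
\frac{\gamma_b(G)}{\MP(G)} \leq 2 + \frac{3}{\MP(G)} \leq 2 + \frac{3}{m}.
\]
Hence $s_m \leq 2 + 3/m$, and letting $m\to\infty$ gives $L\leq 2$. Note that $s_m$ is finite for every $m$, so $L$ is a genuine real limit.

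\textbf{Lower bound.} Fix $m$ and choose $k$ with $3k\geq m$. By Theorem~\ref{thm:gamma_mp_ratio_4_3}, the connected graph $H_k$ has $\MP(H_k)=3k\geq m$ and $\gamma_b(H_k)/\MP(H_k)=4/3$. It therefore belongs to the family defining $s_m$, so $s_m\geq 4/3$ for every $m$, and hence $L\geq 4/3$.

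\textbf{Main obstacle.} There is no real mathematical difficulty here; the only point needing care is interpreting the notation correctly. The key facts are that $\MP(H_k)$ is unbounded as $k$ grows, so the $H_k$ witnesses appear in every tail family, and that the additive constant $3$ in Theorem~\ref{thm:gammaleq2mp3} vanishes after dividing by $\MP(G)\to\infty$.
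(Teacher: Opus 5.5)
Your proof is correct and follows the same route as the paper: it combines the bound $\gamma_b(G)\leq 2\MP(G)+3$ of Theorem~\ref{thm:gammaleq2mp3}, whose additive constant disappears after dividing by $\MP(G)\to\infty$, with the Hartnell--Mynhardt family $H_k$ of Theorem~\ref{thm:gamma_mp_ratio_4_3}, which supplies the ratio $4/3$ with $\MP(H_k)=3k$ unbounded. Reading the limit superior explicitly as the limit of the non-increasing tail suprema $s_m$ is a welcome piece of rigor that the paper leaves implicit.
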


In 2025, Rajendraprasad et al.~\cite{rajendraprasad2025multipacking} have determined the exact value of the above limit, which is $2$. Their proof is based on the $n$-dimensional hypercube\footnote{An $n$-dimensional hypercube $Q_n$ is a graph with vertex set $\{0,1\}^n$, where two vertices are adjacent if they differ in exactly one coordinate.} $Q_n$, for which they established $\left\lfloor \tfrac{n}{2} \right\rfloor \leq \MP(Q_n) \leq \tfrac{n}{2} + 6\sqrt{2n}.$ Earlier, in 2019, Bre{\v{s}}ar and {\v{S}}pacapan~\cite{brevsar2009broadcast} showed that $\gamma_b(Q_n)=n-1$ for $n\geq 3$, and $\gamma_b(Q_n)=n$ for $n\in\{1,2\}$. Combining these two results yields the following.


\begin{theorem}[\cite{rajendraprasad2025multipacking}]\label{value_of_lim_sum_is_2}
    For connected graphs, 
$$\lim_{\MP(G)\to \infty}\sup\Bigg\{\frac{\gamma_{b}(G)}{\MP(G)}\Bigg\}= 2.$$
\end{theorem}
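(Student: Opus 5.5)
The plan is to squeeze the limit superior between the upper bound already available and a lower bound supplied by hypercubes. For the upper bound, Theorem~\ref{thm:gammaleq2mp3} gives $\gamma_b(G)\le 2\MP(G)+3$ for every connected graph $G$. Dividing by $\MP(G)$ yields $\gamma_b(G)/\MP(G)\le 2+3/\MP(G)$, so for every $m$ we get
\[\sup\Big\{\tfrac{\gamma_b(G)}{\MP(G)}:\MP(G)\ge m\Big\}\le 2+\tfrac{3}{m},\]
and letting $m\to\infty$ gives $\limsup\le 2$. This direction is already contained in Corollary~\ref{cor:gammabG/mpG_general}.

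For the lower bound I would exhibit a family of connected graphs whose multipacking number tends to infinity and whose ratio tends to $2$. The natural family is the hypercubes $Q_n$. By Bre\v{s}ar and \v{S}pacapan, $\gamma_b(Q_n)=n-1$ for $n\ge 3$. The cited bound $\lfloor n/2\rfloor\le \MP(Q_n)\le n/2+6\sqrt{2n}$ then gives two things. First, $\MP(Q_n)\ge\lfloor n/2\rfloor\to\infty$, so the family is admissible in the limit. Second,
\[\frac{\gamma_b(Q_n)}{\MP(Q_n)}\ \ge\ \frac{n-1}{n/2+6\sqrt{2n}}\ =\ \frac{2-2/n}{1+12\sqrt{2}/\sqrt{n}}\ \longrightarrow\ 2 .\]
So for each $m$ the supremum over graphs with $\MP(G)\ge m$ is at least $\frac{n-1}{n/2+6\sqrt{2n}}$ for every sufficiently large $n$. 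Hence the supremum is at least $2$ for every $m$, and $\limsup\ge 2$.

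The step I expect to be the real obstacle, if it had to be proved from scratch rather than cited, is the upper bound $\MP(Q_n)\le n/2+O(\sqrt n)$. My first approach there would be a probabilistic or averaging argument. Given a multipacking $M$ of size $k$, I would average $|N_r[v]\cap M|$ over a uniformly random vertex $v$ with $r\approx n/2-c\sqrt n$. By the binomial concentration of Hamming distances, a constant fraction of $M$ falls in a typical ball of that radius. Choosing $c$ appropriately then forces $k\le r+O(\sqrt n)$.

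Since the statement cites this bound, I will take it as given, together with $\gamma_b(Q_n)=n-1$. The proof is then just the two limit computations above.
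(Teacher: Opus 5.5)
Your proposal is correct and follows the paper's route. The upper bound $\limsup\le 2$ comes from $\gamma_b(G)\le 2\MP(G)+3$, and the lower bound $\limsup\ge 2$ comes from the hypercubes via $\gamma_b(Q_n)=n-1$ and $\lfloor n/2\rfloor\le\MP(Q_n)\le n/2+6\sqrt{2n}$; your limit computations check out.

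One caveat, affecting only your optional aside: averaging over $v$ gives $k\,\Pr[\mathrm{Bin}(n,\tfrac12)\le r]\le r$. For $r\le n/2$ only about half of $M$ lies in a typical ball, so this never does better than $k\lesssim n$. You would instead need $r=n/2+\Theta(\sqrt{n\log n})$, which captures all but an $O(n^{-1/2})$ fraction of $M$ and gives $\MP(Q_n)\le n/2+O(\sqrt{n\log n})$. That is weaker than the cited bound, but already enough for the limit.
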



The value of the expression $\lim_{\MP(G)\to \infty}\sup\{\gamma_{b}(G)/\MP(G)\}$ is also known for certain subclasses of graphs. In particular, Brewster et al.~\cite{brewster2019broadcast} proved that $\gamma_b(G)=\MP(G)$ whenever $G$ is strongly chordal, implying that the value of the expression is $1$ for strongly chordal graphs. Since trees form a subclass of strongly chordal graphs, the equality $\gamma_b(G)=\MP(G)$ also holds for trees.

 From algorithmic point of view, the \textsc{Multipacking} problem is the following.
    
\medskip
\medskip
\noindent
\fbox{%
  \begin{minipage}{\dimexpr\linewidth-2\fboxsep-2\fboxrule}
  \textsc{ Multipacking} problem
  \begin{itemize}[leftmargin=1.5em]
    \item \textbf{Input:} An undirected graph $G = (V, E)$, an integer $k \in \mathbb{N}$.
    \item \textbf{Question:} Does there exist a multipacking $M \subseteq V$ of $G$ of size at least $k$?
  \end{itemize}
  \end{minipage}%
}
\medskip


It remained an open problem for over a decade whether the \textsc{Multipacking} problem  is \textsc{NP-complete} or solvable in polynomial time. This question was repeatedly addressed by numerous authors (\cite{brewster2019broadcast,foucaud2021complexity,teshima2012broadcasts,yang2015new,yang2019limited}),  yet the problem remained as an unsolved challenge. In 2025, we (Das, Islam and Lokshtanov~\cite{das2025hardnessMultipacking}) answered this question by proving that the \textsc{Multipacking} problem  is \textsc{NP-complete} which is one of the parts of this thesis (the detailed discussion is in Chapter \ref{chapter:NPcomplete}). In 2019, Beaudou, Brewster, and Foucaud~\cite{beaudou2019broadcast} presented an approximation algorithm for computing multipacking of general graphs.

\begin{theorem}[\cite{beaudou2019broadcast}]
    There is a $(2 + o(1))$-approximation algorithm for computing multipacking in general graphs.
\end{theorem}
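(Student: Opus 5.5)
The plan is to compute a dominating broadcast $f$ of minimum cost $\gamma_b(G)$ and a multipacking $M$ from the same metric data, and then use Theorem~\ref{thm:gammaleq2mp3} (equivalently, the structure behind it) to certify $|M| \geq (\gamma_b(G)-3)/2$. The algorithm runs in polynomial time, since $\gamma_b(G)$ itself is computable in $O(n^6)$ time~\cite{heggernes2006optimal}. The approximation ratio then follows from
\[
\frac{\MP(G)}{|M|} \leq \frac{\gamma_b(G)}{(\gamma_b(G)-3)/2} = 2 + o(1)
\]
as $\MP(G)\to\infty$. For bounded $\MP(G)$, i.e.\ $\MP(G)\le c$ for a constant $c$, we instead find an optimal multipacking by brute force over all subsets of size at most $c$. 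Each candidate set can be checked in polynomial time by testing $|N_r[v]\cap M|\le r$ for all $v$ and all $r\le \rad(G)$.

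For the constructive lower bound, I would follow the standard proof strategy of Theorem~\ref{thm:gammaleq2mp3}.
\begin{enumerate}
\item Take a diametral path $P=u_0u_1\dots u_d$ with $d=\diam(G)$, computed by BFS from every vertex.
\item Greedily select vertices of $P$ spaced three apart, $\{u_0,u_3,u_6,\dots\}$. Because $P$ is isometric, any ball $N_r[v]$ meets $P$ in a subpath of length at most $2r$. Such a subpath contains at most $\lfloor 2r/3\rfloor+1\le r$ selected vertices for $r\ge1$ (checking $r=1$ separately). This already yields a multipacking of size $\lceil (d+1)/3\rceil$, as in Proposition~\ref{prop:hartnell_ineq}.
\item Strengthen the construction to match the factor $2$: take a maximal family of vertices pairwise far apart, in the manner of a $(2r)$-net argument. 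Choose radii so that the balls around the chosen centres cover $G$ (yielding a broadcast of cost roughly $2|M|+3$), while the centres, suitably sparsified, form a multipacking.
\end{enumerate}

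Concretely, I would iterate: pick a vertex $v$ far from the previously chosen set, add it to $M$, and assign it broadcast strength $2$ in the covering accounting. The argument has to show two things. First, when the process stops, $M$ is a multipacking. Second, the uncovered part can be handled with additive cost $3$.

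The main obstacle is the multipacking verification in step~3. A ball $N_r[v]$ can contain many chosen centres unless they are spread out geometrically, and not just pairwise at distance at least $3$. I would try to enforce a stronger spacing condition: chosen vertices lie along a common isometric-like structure, such as successive layers of a BFS from a peripheral vertex, at distances divisible by $3$. Then any ball of radius $r$ meets at most $2r+1$ consecutive layers and hence at most $r$ chosen vertices. The real work is to show that the covering side still costs at most $2|M|+O(1)$ with this spacing. I expect this to need the exact case analysis of Beaudou, Brewster, and Foucaud~\cite{beaudou2019broadcast}, made algorithmic.
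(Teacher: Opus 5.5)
Your outer reduction is sound. If you can construct in polynomial time a multipacking $M$ with $|M|\ge(\gamma_b(G)-3)/2$, then $\MP(G)\le\gamma_b(G)$ gives the ratio $2+o(1)$. In practice one aims for $|M|\ge\frac12\rad(G)-O(1)$, since $\gamma_b(G)\le\rad(G)$ is the usable handle, and computing $\gamma_b(G)$ is not even needed. But that construction is the whole content of the cited result, and your proposal does not supply it. (The thesis gives no proof here, only the reference to Beaudou, Brewster and Foucaud.)

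Each of your steps falls short of it:
\begin{itemize}
\item Theorem~\ref{thm:gammaleq2mp3} is existential and certifies nothing about a particular $M$.
\item Step~2 yields only $\lceil(\diam(G)+1)/3\rceil$ vertices, i.e.\ a factor-$3$ guarantee.
\item Step~3 is not an argument. Its concrete form (repeatedly add a vertex at distance at least $3$ from all chosen ones, charging strength $2$) does give a dominating broadcast of cost $2|M|$. But the resulting $M$ is only a $2$-packing. In a spider with $k\ge 3$ legs of length $2$, the greedy may pick all $k$ leaves, and then $N_2[\text{centre}]$ contains $k>2$ of them.
\end{itemize}

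Your proposed repair puts one chosen vertex in each BFS layer $0,3,6,\dots$ from a root $x_0$. This is a multipacking, since the triangle inequality gives $d(x_i,x_j)\ge 3|i-j|$. However, it has at most $\lfloor e(x_0)/3\rfloor+1\le\lfloor\diam(G)/3\rfloor+1$ vertices, so it is no better than Step~2. No accounting on the covering side can rescue it. For the hypercube $Q_n$ we have $\diam(Q_n)=n$ and $\gamma_b(Q_n)=n-1$, as recalled in the introduction. So for $n\ge 19$ any such set has fewer than $(\gamma_b(Q_n)-3)/2$ vertices, and the very inequality you want to certify fails for it.

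The underlying issue is that $\rad(G)$ can be as large as $\diam(G)$. Any construction confined to a single geodesic, or to a single distance ordering from one root, is therefore capped near $\diam(G)/3$. The missing ingredient is a construction of size about $\rad(G)/2$. It must use the fact that no ball of radius $\rad(G)-1$ covers $G$, not merely the existence of a long geodesic, and it must verify the multipacking condition for balls of \emph{all} radii. Deferring this to ``the case analysis of Beaudou, Brewster, and Foucaud made algorithmic'' leaves the proof without its key step.
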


 However, polynomial-time algorithms for computing maximum multipacking are known for trees and more generally, strongly chordal graphs~\cite{brewster2019broadcast}. Even for trees, the algorithm for finding a maximum multipacking   is very non-trivial.


  A \textit{$d$-packing} is a set of vertices in $G$, such that the shortest path between each pair of the vertices from the set has at least $d +1$ edges. When $d =1$, a $1$-packing is nothing but an independent set.
 The usual $2$-packing is a $1$-multipacking of a graph. In 2013, Brewster, Mynhardt, and Teshima \cite{brewster2013new} obtained a
generalization of 2-packings called $r$-multipackings.

The notion of $r$-multipacking was first introduced in Teshima’s Master’s thesis~\cite{teshima2012broadcasts} in 2012, where it was originally referred to as "$k$-multipacking". In this work, we adopt the notation $r$ instead of $k$, as $k$ is more commonly used to denote the solution size or query parameter in decision versions of optimization problems.

Later, in 2018, Henning, MacGillivray, and Yang~\cite{henning2018k} used the term "$k$-multipacking" to refer to a different problem. Throughout this thesis, we follow the original definition of "$k$-multipacking" as introduced by Teshima~\cite{teshima2012broadcasts}, and we refer to it as "$r$-multipacking" for consistency with standard notation.


An \textit{$r$-multipacking} is a set $ M \subseteq V  $ in a
	graph $ G = (V, E) $ such that   $|N_s[v]\cap M|\leq s$ for each vertex $ v \in V $ and for every integer $s$ where $ 1\leq s \leq r $. The \textit{$r$-multipacking number} of $ G $ is the maximum cardinality of an $r$-multipacking of $ G $ and it is denoted by $ \MP_r(G) $. The \textsc{$r$-Multipacking} problem is as follows:

 \medskip   
\medskip
\noindent
\fbox{%
  \begin{minipage}{\dimexpr\linewidth-2\fboxsep-2\fboxrule}
  \textsc{ $r$-Multipacking} problem
  \begin{itemize}[leftmargin=1.5em]
    \item \textbf{Input:} An undirected graph $G = (V, E)$, an integer $k \in \mathbb{N}$.
    \item \textbf{Question:} Does there exist an $r$-multipacking $M \subseteq V$ of $G$ of size at least $k$?
  \end{itemize}
  \end{minipage}%
}
\medskip

    If $M$ is an $r$-multipacking, where $r=\diam(G)$, the diameter of $G$, then $M$ is called a \textit{multipacking} of $G$, we have seen this earlier. It is obvious that $\MP_r(G)\geq \MP(G)$ for any $r$, since a multipacking is always an $r$-multipacking of $G$ by their definitions.

For a given integer $d\geq 2$, a \textit{distance-$d$ independent set} of graph $G$ is the same as a $(d-1)$-packing of $G$. This is a natural generalization of the concept of "independent set". Given an undirected graph $G$ and a positive integer $k$, the \textsc{Distance-$d$ Independent Set (D$d$IS)} problem asks if $G$ has a distance-$d$ independent set of size at least $k$. Therefore, D$2$IS problem is nothing but the \textsc{Independent Set} problem.  Eto et al.~\cite{eto2014distance} showed that,  for $d\geq 3$, the D$d$IS problem is \textsc{NP-complete} for planar bipartite graphs of maximum degree $3$. Moreover, they have shown that,  D$d$IS is \textsc{NP-complete} for chordal graphs when $d$ is odd and $d\geq 3$. Since D$3$IS is nothing but $2$-packing or $1$-multipacking, therefore all of these hardness results hold for $1$-multipacking. The complexity of \textsc{$r$-Multipacking} problem, for $r\geq 2$, was unknown. In 2025, we (Das et al.~\cite{das2025rMultipacking}) answered this question by proving that the \textsc{$r$-Multipacking} problem, for $r\geq 2$,  is \textsc{NP-complete} which is one of the parts of this thesis (the detailed discussion is in Chapter \ref{chapter:k_multi}). 

Foucaud, Gras, Perez, and Sikora~\cite{foucaud2021complexity} studied the complexity of \textsc{Broadcast Domination} and \textsc{Multipacking} problems on digraph (or directed graph). A \emph{dominating broadcast} of a digraph \( D \) is a function \( f \colon V(D) \to \mathbb{N} \) such that for every vertex \( v \in V(D) \), there exists a vertex \( t \in V(D) \) with \( f(t) > 0 \) and a directed path from \( t \) to \( v \) of length at most \( f(t) \). The \emph{cost} of \( f \) is the sum of \( f(v) \) over all vertices \( v \in V(D) \). Whereas, a \emph{multipacking} in a digraph \( D \) is a set \( M \subseteq V(D) \) such that for every vertex \( v \in V(D) \) and for every integer \( r \geq 1 \), there are at most \( r \) vertices from \( M \) within directed distance at most \( r \) from \( v \). Let \textsc{Broadcast Domination} denote the decision problem of whether a given digraph \( D \) admits a dominating broadcast of cost at most \( k \), and let \textsc{Multipacking} denote the problem of determining whether \( D \) has a multipacking of size at least \( k \).

In 2021, Foucaud, Gras, Perez, and Sikora~\cite{foucaud2021complexity} have shown that both \textsc{Broadcast Domination} and \textsc{Multipacking} problems are \textsc{NP-complete} for digraphs, even for planar layered acyclic digraphs of bounded maximum degree. Furthermore, they proved that when parameterized by the cost or size of the solution, the problems are \textsc{W[2]-hard} and \textsc{W[1]-hard} respectively.

\section{Multipacking and Broadcast Domination in geometric domain}\label{defn:multipacking_geometry}
The problem of locating facilities, both desirable and undesirable, is of utmost importance due to its huge number of applications in the real world~\cite{wolf2011facility}. Allocation of restricted resources within an arena has been extensively studied across various models and methodologies since the 1980s. Drezner and Wesolowsky did a "maximin" problem where a certain facility must serve a given set of users but must be placed in such a manner that its minimum distance from the users is maximized (see~\cite{drezner1980maximin,drezner1983location}). Rodriguez et al.~\cite{rodriguez2006general} gave a generic model for the obnoxious facility location problem inside a given polygonal region. Rakas et al. \cite{rakas2004multi} devised a multi-objective model to determine the optimal locations for undesirable facilities.  The book by Daskin \cite{daskin1997network} contains a nice collection of problems, models, and algorithms regarding obnoxious facility location problems in the domain of supply chain networks. Here, we study the \emph{Multipacking} \cite{teshima2012broadcasts} problem in Euclidean space, which resembles the obnoxious facility location problem in computational geometry. 
 


Given a point set $P \subseteq \mathbb{R}^d$ of size $n$, the \emph{neighborhood of size $r$} of a point $v \in P$ is the subset of $P$ that consists of the $r$ nearest points of $v$ and $v$ itself. This subset is denoted by $N_r[v]$. The distance between points are measured using the Euclidean distance metric. A natural and realistic assumption here is the distances between all pairs of points are distinct. Thus the $r$-th neighbor of every point is unique. This assumption is easy to achieve through some form of controlled perturbation (see \cite{mehlhorn2006reliable,seidel1998nature}). A \textit{multipacking} is a subset $ M $ of a point set $ P  $, such that $|N_r[v]\cap M|\leq (r+1)/2$ for each point $ v \in P $ and for every integer $  r\geq 1  $, i.e. $M$ contains at most half of the number of elements of $N_r[v]$. It is worth noting that in this definition, it suffices to consider values of $r$ in the set $\{1,2,\dots, n-1\}$. The \emph{multipacking number} of $ P $ is the maximum cardinality of a multipacking of $ P $ and is denoted by $ \MP(P) $.   The \textsc{Multipacking} problem in $\mathbb{R}^d$ is the following.
    
\medskip
\medskip
\noindent
\fbox{%
  \begin{minipage}{\dimexpr\linewidth-2\fboxsep-2\fboxrule}
  \textsc{ Multipacking} problem (in geometry)
  \begin{itemize}[leftmargin=1.5em]
    \item \textbf{Input:} A point set $P$ in $\mathbb{R}^d$.
    \item \textbf{Question:} Does there exist a multipacking $M$ of $P$ of size at least $k$?
  \end{itemize}
  \end{minipage}%
}
\medskip

\noindent So far, the computational complexity of the \textsc{Multipacking} problem in geometric settings remains unknown.

For a natural number $r \leq n-1$, an \textit{$r$-multipacking} is a subset $ M $ of a point set $ P  $, such that $|N_s[v]\cap M|\leq (s+1)/2$ for each point $ v \in P $ and for every integer $ 1\leq s \leq r  $, i.e. $M$ contains at most half of the number of elements of $N_s[v]$.
 The \emph{$r$-multipacking number} of $ P $ is the maximum cardinality of an $r$-multipacking of $ P $ and is denoted by $ \MP_r(P) $. Note that, when $r=n-1$, an $(n-1)$-multipacking of $P$ is nothing but a \emph{multipacking} of $P$.  It is obvious that $\MP_r(P)\geq \MP(P)$ for any $r$, since a multipacking is always an $r$-multipacking of $P$ by their definitions.  Computing maximum $r$-multipacking  involves selecting the maximum number of points from the given point set satisfying the constraints of $r$-multipacking.  
This variant of facility location problem is interesting due to their wide application in operation research~\cite{rodriguez2006general}, resource allocation~\cite{melachrinoudis1999bicriteria}, VLSI design  \cite{abravaya2010maximizing}, network planning~\cite{cappanera2003discrete,rakas2004multi}, and clustering  \cite{henzinger2020dynamic}.  The \textsc{$r$-Multipacking} problem in $\mathbb{R}^d$ is the following.
    
\medskip
\medskip
\noindent
\fbox{%
  \begin{minipage}{\dimexpr\linewidth-2\fboxsep-2\fboxrule}
  \textsc{ $r$-Multipacking} problem (in geometry)
  \begin{itemize}[leftmargin=1.5em]
    \item \textbf{Input:} A point set $P$ in $\mathbb{R}^d$.
    \item \textbf{Question:} Does there exist an $r$-multipacking $M$ of $P$ of size at least $k$?
  \end{itemize}
  \end{minipage}%
}
\medskip

\noindent We (Das et al.~\cite{das2025geometrymultipackingCALDAM}) have shown that, for a point set in $\mathbb{R}^2$, a maximum $1$-multipacking can be computed in polynomial time, whereas computing a maximum $2$-multipacking is \textsc{NP-hard}. These results are included in this thesis (the detailed discussion is in Chapter \ref{chapter:geo_multi}).


Next, we discuss broadcast domination in geometric domain. The conventional approach to broadcasting requires strategic placement of transmitters to ensure complete signal coverage for a given set of receivers. Efficiency in this context primarily refers to minimizing the number of transmitters required, reducing the deployment costs. However, in many practical scenarios, transmitters possess heterogeneous transmission ranges, necessitating additional considerations and adjustments based on their individual capacities. Considering both transmitters and receivers are discretely distributed over the arena, the problem can be modeled as a \emph{dominating set} problem on graphs, where the vertices represent the locations for transmitters and receivers. Unlike the classical dominating set, each vertex in the dominating set is equipped with a range, such that the number of vertices dominated by it is bounded by that range. The problem is formally defined as the \emph{Broadcast domination}~\cite{henning2021broadcast,teshima2012broadcasts} in graphs.  The continual synergy between the discrete graph metric and the Euclidean metric motivates the investigation of the complexity of the problem for a point set in $\mathbb{R}^d$.


Consider a point $v \in P$. Here also a basic assumption is that the distances between all pairs of points are distinct. Thus the $r$-th neighbor of every point is unique. Recall that we denote the subset of $P$ that has only the $r$ nearest points of $v$ and the point $v$ itself by $N_r[v]$. Formally, $N_r[v] = \{ p \in P | \text{ $p$ is the $i$-th neighbor of $v$}, 1 \leq i \leq r \} \cup \{v\}$. Hence, $|N_r[v]|=r+1$ for $1\leq r\leq n-1$.

A function 	$ f : P \rightarrow \{0, 1, 2, \dots , n-1\} $ is called a \emph{broadcast} on $P$ and $v\in P$ is a \textit{tower} of $P$ if $f(v)>0$. Let $f$ be a broadcast on $P$. For each point $ u \in P  $, if  there exists a point $ v $ in $ P $ (possibly, $ u = v $) such that $ f (v) > 0 $ and $ u\in N_{f(v)}[v] $, then $ f $ is called a \emph{dominating broadcast}  or \emph{broadcast domination} on $ P$.
 The \textit{cost} of the broadcast $f$ is the quantity $ \sigma(f)$, which is the sum of the weights of the broadcasts over all the points in $ P $. Therefore, $\sigma(f)=  \sum_{v\in P}f(v)$. The minimum cost of a dominating broadcast in $P$ (taken over all dominating broadcasts)  is the \textit{broadcast domination number} of $P$, denoted by $ \gamma_{b}(P) $.  So, $ \gamma_{b}(P) = \min_{f\in D(P)} \sigma(f)= \min_{f\in D(P)} \sum_{v\in P}f(v)$, where $D(P)$ is the set of all dominating broadcasts on $P$.
A \emph{minimum dominating broadcast} (or \emph{optimal broadcast}, or \emph{minimum broadcast}) on $ P$ is a dominating broadcast on $ P$ with cost equal to $ \gamma_{b}(P)$.

We (Das et al.~\cite{das2025broadcastgeometry}) have shown that a minimum dominating broadcast in $\mathbb{R}^d$ can be computed in polynomial time. This result is included in this thesis (the detailed discussion is in Chapter \ref{chapter:geo_broad}).

Let $f$ be a minimum dominating broadcast and $M$ be a maximum multipacking on $P$. Let $T$ be the set of towers, i.e. $T=\{v\in P:f (v) > 0\}$. If $v\in T$, then  $ |N_{f(v)}[v]\cap M| \leq \frac{1}{2}(f(v)+1)$. Therefore,  $\MP(P)\leq \sum_{v:f(v)>0}|N_{f(v)}[v]\cap M| \leq \sum_{v:f(v)>0}\frac{1}{2}(f(v)+1)= \frac{1}{2}(\gamma_b(P)+|T|)$. Observe that $|T| \leq \gamma_b(P)$. This implies that $\MP(P)\leq \gamma_b(P)$. We state this in the following proposition.

\begin{proposition}\label{prop:mp_gammab_geo_ineq} Let $P $ be a point set in $\mathbb{R}^d$ having multipacking number  $ \MP(P) $ and broadcast domination number $ \gamma_{b}(P) $,  then  $$\MP(P)\leq \gamma_b(P).$$
\end{proposition}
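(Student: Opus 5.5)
The plan is a standard weak-duality argument, essentially the computation sketched just before the statement, made rigorous. Fix a minimum dominating broadcast $f$ on $P$, so $\sigma(f)=\gamma_b(P)$, and fix a maximum multipacking $M$, so $|M|=\MP(P)$. Let $T=\{v\in P: f(v)>0\}$ be the set of towers.

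\textbf{Step 1 (covering).} Since $f$ is dominating, every point $u\in P$, and in particular every $u\in M$, lies in $N_{f(v)}[v]$ for at least one tower $v\in T$. Counting each point of $M$ at least once gives
\[
|M|\le \sum_{v\in T} |N_{f(v)}[v]\cap M| .
\]
No efficiency assumption on $f$ is needed, because over-counting only strengthens the inequality.

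\textbf{Step 2 (packing at each tower).} For each $v\in T$ we have $1\le f(v)\le n-1$. So $r=f(v)$ is an admissible value in the multipacking definition, and
\[
|N_{f(v)}[v]\cap M|\le \tfrac12\bigl(f(v)+1\bigr).
\]
Summing over $T$ gives
\[
\MP(P)\le \tfrac12\bigl(\gamma_b(P)+|T|\bigr).
\]

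\textbf{Step 3 (bounding $|T|$).} Every tower has $f(v)\ge 1$, since $f$ takes integer values. Hence $|T|\le \sum_{v\in T} f(v)=\gamma_b(P)$. Substituting into Step 2 yields $\MP(P)\le \gamma_b(P)$.

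The only point needing care is the edge case $n=1$. There $f$ must be the zero function and the broadcast range $\{0,\dots,n-1\}$ is degenerate, so no dominating broadcast exists under the definitions. I would handle this by assuming $|P|\ge 2$, or by noting that for $n=1$ the condition $|N_r[v]\cap M|\le (r+1)/2$ is vacuous for $r\ge 1$ and treating the single-point case by convention. Otherwise there is no real obstacle: the argument is just the LP duality between the covering constraints of broadcast domination and the packing constraints of multipacking, with the geometric factor $\tfrac12(r+1)$ absorbed by $|T|\le\gamma_b(P)$.
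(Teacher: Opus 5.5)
Your proposal is correct and is the same argument the paper gives: cover $M$ by the tower neighborhoods $N_{f(v)}[v]$, bound each by $\tfrac12(f(v)+1)$ using the multipacking condition, and then absorb $|T|\le\gamma_b(P)$. Your remark about the degenerate case $n=1$ is a harmless addition that the paper leaves implicit.
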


\noindent Note that the same inequality holds between multipacking number and broadcast domination number in graphs (See Proposition \ref{prop:hartnell_ineq}).

\section{Contributions and thesis overview} \label{chapter1:results}

\subsection{Chapter 2: Complexity of Multipacking}\label{subsec:contribution_NP_complete}


The question of the algorithmic complexity of the \textsc{Multipacking} problem (for undirected graphs) has been repeatedly addressed by numerous authors (\cite{brewster2019broadcast,foucaud2021complexity,teshima2012broadcasts,yang2015new,yang2019limited}), yet it has persisted as an unsolved challenge for the past decade. Foucaud, Gras, Perez, and Sikora~\cite{foucaud2021complexity} [\textit{Algorithmica} 2021] made a step towards solving the open question by showing that the \textsc{Multipacking} problem  is \textsc{NP-complete} for directed graphs and it is \textsc{W[1]-hard} when parameterized by the solution size.  

In Chapter \ref{chapter:NPcomplete}, we answer the open question. First, we present a simple but elegant reduction from the well-known \textsc{NP-complete} (also \textsc{W[2]-complete} when parametrized by solution size) problem \textsc{Hitting Set}~\cite{cygan2015parameterized,garey1979computers} to the \textsc{Multipacking} problem to establish the following.

\begin{restatable}{theorem}{NPCchordal}\label{thm:NPC_chordal}
  \textsc{Multipacking} problem is \textsc{NP-complete}. Moreover, \textsc{Multipacking} problem  is \textsc{W[2]-hard} when parameterized by the solution size.
\end{restatable}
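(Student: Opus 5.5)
The plan is a polynomial-time, parameter-preserving reduction from \textsc{Hitting Set}. An instance consists of a universe $U=\{u_1,\dots,u_n\}$, a family $\mathcal{F}=\{S_1,\dots,S_m\}$ of subsets of $U$, and an integer $k$; we ask whether some $H\subseteq U$ with $|H|\le k$ meets every $S_j$. Membership in \textsc{NP} is immediate: given $M$, compute all pairwise distances by BFS and check $|N_r[v]\cap M|\le r$ for every $v\in V$ and every $1\le r\le \rad(G)$, which is polynomially many checks. All the work is in the hardness reduction.

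I would build a graph $G$ with three parts. First, a clique $K=\{c_1,\dots,c_n\}$ with one vertex per element. Second, one vertex $s_j$ per set, adjacent exactly to $\{c_i: u_i\in S_j\}$. Since $N(s_j)$ is a clique, $s_j$ is simplicial, so $G$ stays chordal, which the label of the theorem suggests is intended. Third, long pendant paths of a common length $L$ (polynomial in $n+m$, say $L\approx 3k$), attached to each $c_i$ with endpoint $a_i$, and also to each $s_j$, plus a ``spine'' path. The spine forces a fixed number $t$ of vertices into any large multipacking, which I would guarantee using the lower bound $\MP\ge\lceil(\diam+1)/3\rceil$ of Proposition~\ref{prop:hartnell_ineq} applied locally to an isometric path. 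The target is $k'=k+t$, a function of $k$ only. The intended correspondence is: $H$ is a hitting set if and only if $M=\{a_i:u_i\in H\}\cup M_0$ is a multipacking, where $M_0$ is the fixed set of spine vertices. The set vertices $s_j$ act as the places where a ball ``sees'' the chosen endpoints.

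The proof then has two directions.

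\textbf{($\Rightarrow$)} Given a hitting set $H$, we verify the ball inequality for every centre $v$ and radius $r$. Because the $a_i$ are pairwise at distance about $2L+1$, a ball of radius $r<L$ contains at most one $a_i$. For large $r$, the count of selected vertices grows at most like $r/3$ along geodesics, as in the path-multipacking argument behind Proposition~\ref{prop:hartnell_ineq}.

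\textbf{($\Leftarrow$)} Given a multipacking of size $k'$, we first normalise it, in the spirit of the efficient-broadcast normalisation of~\cite{dunbar2006broadcasts}: a packing vertex on a pendant path can be slid to its endpoint without breaking any constraint. After this, the chosen element endpoints form a set $H$ of size at most $k$. To show $H$ hits every $S_j$, suppose some $S_j$ is missed. I would then exhibit a ball centred on the pendant path of $s_j$, of radius just below $L+2$, which sees too few forced spine vertices. This should let one extra vertex be added, contradicting maximality or the count. Equivalently, I would phrase the reduction so that the size bound fails exactly when no chosen element lies within distance $2$ of $s_j$.

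\textbf{Parameter and main obstacle.} Since $k'=k+t$ and $t$ depends only on $k$, the reduction is parameter-preserving, so \textsc{W[2]}-hardness transfers from \textsc{Hitting Set}~\cite{cygan2015parameterized}. The main obstacle is calibrating the spine and the lengths so that the set-constraints really bind. Multipacking is a packing condition while hitting is a covering condition, so the ``missed set'' case must produce slack that can be converted into an extra packing vertex (the $\Leftarrow$ direction). At the same time, no ball of medium radius, centred in the clique $K$ and reaching many endpoints at once, may exceed its budget (the $\Rightarrow$ direction). I would handle the medium-radius balls first, by choosing $L$ as a multiple of $3$ and checking radii $L-1,L,L+1,L+2$ explicitly. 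The rest is routine distance bookkeeping.
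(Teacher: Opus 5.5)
Your construction has the wrong polarity, and the step you propose to repair it does not work.

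Because $s_j$ is adjacent to $c_i$ exactly when $u_i\in S_j$, consider any vertex $v$ on the pendant path of $s_j$ (including $s_j$ itself). It satisfies $d(v,a_i)=d(v,s_j)+L+1$ if $u_i\in S_j$, and $d(v,a_i)=d(v,s_j)+L+2$ otherwise. So the only effect of $H$ missing $S_j$ is that all chosen endpoints sit one step \emph{farther} from the gadget of $s_j$. Every multipacking constraint is an upper bound $|N_r[v]\cap M|\le r$. Hence the balls centred at $s_j$ or on its pendant path can only become slacker when $S_j$ is missed, and that is exactly where you want the missed set to be detected.

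Your ($\Leftarrow$) argument therefore runs backwards. You are given an arbitrary multipacking of size $k'$, not a maximal one. To conclude that $S_j$ is hit, you need a ball that contains \emph{too many} vertices of $M$ whenever $S_j$ is missed; slack in a ball contradicts nothing. Worse, suppose a missed set really did free room for an extra packing vertex. Then NO-instances, where every $k$-subset of $U$ misses some set, would acquire \emph{larger} multipackings, which is the opposite of what the reduction needs.

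Two smaller problems compound this. A lower bound such as $\MP\ge\lceil(\diam+1)/3\rceil$ applied to the spine does not force any spine vertex into a given multipacking. And with $L\approx 3k$, balls of radius about $L$ have budget about $3k$, so $k$ chosen endpoints never come close to saturating them.

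The missing idea is to complement the incidence and tie the path length to $k$, which is what the paper does:
\begin{itemize}
\item make $\mathcal F$ a clique;
\item attach to each element a path $u_i^1u_i^2\cdots u_i^{k-1}$ with $k-2$ edges;
\item join $u_i^1$ to $S_j$ iff $u_i\notin S_j$.
\end{itemize}
Then $d(S_j,u_i^{k-1})=k-1$ if $u_i\notin S_j$, and $d(S_j,u_i^{k-1})\ge k$ otherwise. So for a set $M$ of $k$ endpoints, the single constraint $|N_{k-1}[S_j]\cap M|\le k-1$ says precisely that not all chosen elements avoid $S_j$, i.e.\ that $S_j$ is hit. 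This is how a packing inequality becomes a covering condition, and it works only because the budget $k-1$ is one less than the number of chosen vertices.

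Every other constraint is automatic. Endpoints are pairwise at distance at least $2k-2$, which handles $r\le k-2$. A centre on an element path is at distance at least $k$ from every other path's endpoint. And $|M|=k\le r$ handles $r\ge k$. The target stays $k$, so no spine is needed.

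The converse needs no normalisation. Let $H=\{u_i:\text{the path of }u_i\text{ meets }M\}$. If $H$ missed some $S_t$, then all of $M$, including any vertex of the clique $\mathcal F$, would lie in $N_{k-1}[S_t]$, giving $k$ vertices in a ball of radius $k-1$.

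Your NP-membership check and your parameter-preservation remark are fine as stated.
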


The \textit{Exponential Time Hypothesis (ETH)} states that the \textsc{3-SAT} problem cannot be solved in time $2^{o(n)}$, where $n$ is the number of variables of the input CNF (Conjunctive Normal Form) formula. The reduction we use in the proof of Theorem \ref{thm:NPC_chordal} yields the following.

\begin{restatable}{theorem}{ETHmultipackingSubExp}\label{thm:ETH_multipacking_fkno(k)}
    Unless \textsc{ETH} fails, there is no $f(k)n^{o(k)}$-time algorithm for the \textsc{Multipacking} problem, where $k$ is the solution size and $n$ is the number of vertices of the graph.
\end{restatable}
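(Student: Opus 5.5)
The plan is to combine the reduction behind Theorem~\ref{thm:NPC_chordal} with the known ETH lower bound for \textsc{Hitting Set}. I will first verify that this reduction is a \emph{parameter-preserving} polynomial-time reduction. Given an instance $(U,\mathcal{F},k)$ of \textsc{Hitting Set}, with universe $U$ of size $n'$ and family $\mathcal{F}$ of size $m$, the reduction produces a graph $G$ and a target $k'$. Two properties are needed. First, $k' \le g(k)$; in the reduction I expect $k'=k$ or $k'=k+c$ for a constant $c$. Second, $|V(G)|$ is bounded by a polynomial in $n'+m$. These are exactly the features that made the reduction establish \textsc{W[2]}-hardness, so both should come out of the construction directly.

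Next I will invoke the standard fact (see~\cite{cygan2015parameterized}) that, unless ETH fails, \textsc{Dominating Set} admits no $f(k)N^{o(k)}$-time algorithm, where $N$ is the number of vertices. This transfers to \textsc{Hitting Set} through the usual reduction: take $U=V$ and $\mathcal{F}=\{N[v]:v\in V\}$. This keeps $k$ unchanged and gives $n'+m=O(N)$, so \textsc{Hitting Set} admits no $f(k)(n'+m)^{o(k)}$-time algorithm under ETH.

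Now I argue by contradiction. Suppose \textsc{Multipacking} were solvable in $f(k')|V(G)|^{o(k')}$ time. Composing with the reduction, we could solve \textsc{Hitting Set} in time
$$\mathrm{poly}(n'+m)+f(g(k))\,(n'+m)^{O(1)\cdot o(g(k))}.$$
If $g(k)=k+O(1)$, this is $f'(k)(n'+m)^{o(k)}$. Here the polynomial blow-up in the number of vertices only multiplies the exponent by a constant, which $o(k)$ absorbs. This contradicts the lower bound for \textsc{Hitting Set}.

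The main obstacle is the bookkeeping: I must confirm that the reduction's parameter is linear in $k$ (ideally $k+O(1)$) and not something like $k\cdot|\mathcal{F}|$. I also need to handle the fact that the polynomial preprocessing must be absorbed into $f(k)N^{o(k)}$, which holds because that term is at least polynomial once $k$ exceeds a constant. If the reduction instead adds a number of extra packing vertices that depends on the instance size, I would try to restrict attention to a \textsc{Hitting Set} variant where that quantity is $O(k)$ and still ETH-hard.
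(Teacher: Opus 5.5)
Your plan is correct and is essentially the paper's proof. The reduction of Theorem~\ref{thm:NPC_chordal} keeps the parameter exactly $k$, so your hedging about $k+O(1)$ is unnecessary. It also produces a graph on $m+n(k-1)=O(m+n^2)$ vertices, so an $f(k)n^{o(k)}$-time algorithm for \textsc{Multipacking} would give an $f(k)(m+n)^{o(k)}$-time algorithm for \textsc{Hitting Set}, contradicting ETH. The one difference is that you derive the \textsc{Hitting Set} lower bound from \textsc{Dominating Set}, whereas the paper cites it directly from~\cite{cygan2015parameterized}; this is harmless.
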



We also present hardness results on $\frac{1}{2}$-hyperbolic graphs. Gromov~\cite{gromov1987hyperbolic} introduced the $\delta$-hyperbolicity measures to explore the geometric group theory through Cayley graphs. Let $d(\cdot,\cdot)$ denote the distance between two vertices in a graph $G$. The graph $G$ is called a \emph{$\delta$-hyperbolic graph} if for any four
vertices $u, v, w, x \in V(G)$, the two larger of the three sums $d(u, v) + d(w, x)$, $d(u, w) + d(v, x)$, $d(u, x) +
d(v, w)$ differ by at most $2\delta$. A graph class $\mathcal{G}$ is said to be hyperbolic if there exists a constant $\delta$ such that every graph $G \in \mathcal{G}$ is $\delta$-hyperbolic. A graph is $0$-hyperbolic iff it is a block graph\footnote{A graph is a \textit{block graph} if every block (maximal 2-connected component) is a clique.}~\cite{howorka1979metric}. Trees are $0$-hyperbolic graphs, since trees are block graphs. 

It is known that the \textsc{Multipacking} problem is solvable in $O(n^3)$-time for strongly chordal\footnote{A \emph{strongly chordal graph} is  an undirected graph $G$ which is a chordal graph and every cycle of even length ($\geq 6$) in $G$ has an \emph{odd chord}, i.e., an edge that connects two vertices that are an odd distance ($>1$) apart from each other in the cycle.} graphs and linear time for trees~\cite{brewster2019broadcast}. All strongly chordal graphs are $\frac{1}{2}$-hyperbolic~\cite{wu2011hyperbolicity}. Whereas, all chordal graphs are $1$-hyperbolic \cite{brinkmann2001hyperbolicity}. Here, we prove that the \textsc{Multipacking} problem  is \textsc{NP-complete} for chordal $\cap$ $\frac{1}{2}$-hyperbolic graphs, which is a superclass of strongly chordal graphs.


\begin{restatable}{theorem}{NPChalfhyperbolicchordal}\label{thm:NPC_half_hyperbolic_chordal}
  \textsc{Multipacking} problem  is \textsc{NP-complete} for chordal $\cap$ $\frac{1}{2}$-hyperbolic graphs. Moreover, \textsc{Multipacking} problem  is \textsc{W[2]-hard} for chordal $\cap$ $\frac{1}{2}$-hyperbolic graphs when parameterized by the solution size.
\end{restatable}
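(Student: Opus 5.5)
The natural plan is to reuse the reduction from \textsc{Hitting Set} behind Theorem~\ref{thm:NPC_chordal}. I would check, or enforce by a small modification, that the constructed graph is chordal and $\frac{1}{2}$-hyperbolic. Since the map is the same parameterized reduction, membership in NP and W[2]-hardness in the solution size carry over unchanged. Membership in NP is clear anyway: given $M$, the conditions $|N_r[v]\cap M|\le r$ for all $v$ and all $1\le r\le \rad(G)$ are checked by BFS from every vertex.

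\textbf{Construction.} From an instance $(U,\mathcal{F},k)$ with $U=\{u_1,\dots,u_n\}$ and $\mathcal{F}=\{S_1,\dots,S_m\}$, I would build a \emph{split-type core} plus pendant paths.
\begin{itemize}
\item The core has one vertex per element of $U$ and one vertex per set of $\mathcal{F}$.
\item One side of the core is made a clique. The other side is independent, and each of its vertices is joined to the clique vertices encoding incidence.
\item Every vertex on the independent side gets a pendant path of length proportional to $k$, say about $3k$. Multipacking vertices will be placed at the far ends of these paths.
\end{itemize}
The path lengths are tuned so that the only binding multipacking constraints are the balls centred at core vertices. Such a ball has radius about the path length, and it counts exactly the path ends whose core attachment is within distance $1$ or $2$. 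Solution size $k$ in one problem should correspond to solution size $k$ in the other, up to an additive constant. This is what makes the reduction parameter-preserving for W[2]-hardness and also gives the ETH consequence.

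\textbf{Chordality.} The core is a split graph, hence chordal. Gluing trees (the pendant paths) at single vertices creates no new cycles, so every cycle of $G$ lies inside the core and has a chord. Hence $G$ is chordal.

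\textbf{$\frac{1}{2}$-hyperbolicity.} I would use the standard fact that $\delta$-hyperbolicity is a property of blocks: the hyperbolicity of a graph is the maximum over its biconnected components. The pendant paths are bridges, so they are $0$-hyperbolic and it suffices to bound the core. The core is a split graph of diameter at most $3$, so I would check the four-point condition directly by cases on how many of $u,v,w,x$ lie in the clique.
\begin{itemize}
\item All pairwise distances lie in $\{1,2,3\}$.
\item A distance of $3$ occurs only between two independent-side vertices with no common neighbour.
\end{itemize}
The two largest of the three pair-sums should then differ by at most $1$. If some configuration gives difference $2$, for example two disjoint far pairs on the independent side, I would modify the construction. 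One option is adding a universal vertex to the clique, which keeps chordality and forces core diameter at most $2$. With diameter at most $2$ the difference is at most $1$, although balls around the new vertex may then need to be rechecked.

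\textbf{Main obstacle.} I expect the hardest step to be the correctness equivalence: a hitting set of size at most $k$ exists if and only if a multipacking of the prescribed size exists. This requires showing that every ball constraint reduces to the incidence condition, including balls centred at interior path vertices and at the added universal vertex. For that I would try the usual exchange argument: any multipacking can be shifted to path endpoints without violating constraints, after which only core-centred balls matter.
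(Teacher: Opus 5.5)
Your architecture matches the paper's: a split core with $\mathcal F$ in the clique and element vertices independent, pendant paths whose far ends carry the multipacking, binding balls at set vertices, and a core of diameter $2$ for hyperbolicity. However, two concrete choices in your plan make the reduction fail.

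The first is the tuning and the direction of adjacency. A ball can be violated by $k$ path ends only if it contains all of them at some radius $\le k-1$. With pendant paths of length about $3k$, a core-centred ball reaches a path end only at radius about $3k>k$. Meanwhile any two ends are at least about $6k$ apart, so smaller balls contain at most one end. Hence every choice of $k$ ends is a multipacking and the instance is trivially YES. The paper uses paths $u_i^1u_i^2\cdots u_i^{k-1}$ of length exactly $k-2$, so that $d(S_t,u_i^{k-1})=k-1$ iff $S_t$ is adjacent to $u_i^1$. Crucially, it joins $u_i^1$ to $S_t$ iff $u_i\notin S_t$. Then $N_{k-1}[S_t]$ contains all $k$ chosen ends exactly when no chosen element lies in $S_t$, so violations occur precisely at unhit sets. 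If adjacency encoded membership, the constraint would say ``not all chosen elements lie in $S_t$'', which is not \textsc{Hitting Set}.

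The second, more serious, problem is your hyperbolicity fix. You correctly see that a split core of diameter $3$ can have $\delta=1$. But a universal core vertex $z$ is adjacent to every $u_i^1$, so with the correct tuning $d(z,u_i^j)=j\le k-1$ for every path vertex. Hence $\rad(G)\le k-1$ and $\MP(G)\le k-1$ for \emph{every} instance, so all instances map to NO, and no rechecking of balls around $z$ can repair this.

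The missing idea is to shorten distances between element vertices \emph{locally}. For each pair $i<j$, add a vertex $y_{i,j}$ adjacent only to $u_i^1$ and $u_j^1$, and make $\mathcal F\cup Y$ a single clique. Then:
\begin{itemize}
\item the graph is still a split graph with pendant paths, hence chordal;
\item the core $\mathcal F\cup Y\cup\{u_1^1,\dots,u_n^1\}$ has diameter $2$;
\item $N_{k-1}[y_{i,j}]$ contains only the two ends $u_i^{k-1},u_j^{k-1}$, which is harmless for $k\ge 3$.
\end{itemize}

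With this core your block-decomposition argument does go through. A chordal graph of diameter at most $2$ has no induced $C_4$, so its two largest four-point sums differ by at most $1$; chordality is genuinely needed here, since $C_4$ itself has diameter $2$ and $\delta=1$. The paper argues differently: it uses the Brinkmann--Koolen--Moulton characterization and notes that the forbidden isometric subgraphs have diameter $3$ and cannot use pendant-path vertices.

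Finally, no exchange argument is needed for the converse. Let $H=\{u_i : M\cap\{u_i^1,\dots,u_i^{k-1}\}\neq\emptyset\}$. If $H$ misses some $S_t$, then every vertex of $M$ is within distance $k-1$ of $S_t$: path vertices through $u_i^1$, core vertices through the clique. This gives $|N_{k-1}[S_t]\cap M|=k$, a contradiction.
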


Further, we use two different reductions from the \textsc{NP-complete} problem \textsc{Hitting Set}~\cite{garey1979computers} to prove the following theorems on bipartite and on claw-free\footnote{\textit{Claw-free graph} is a graph that does not contain a claw (or $K_{1,3}$) as an induced subgraph.} graphs.

\begin{restatable}{theorem}{NPCbipartite}\label{thm:NPC_bipartite}
  \textsc{Multipacking} problem  is \textsc{NP-complete} for bipartite graphs. Moreover, \textsc{Multipacking} problem  is \textsc{W[2]-hard} for bipartite graphs when parameterized by the solution size.
\end{restatable}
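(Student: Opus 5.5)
The plan is to adapt the \textsc{Hitting Set} reduction behind Theorem~\ref{thm:NPC_chordal}, re-engineering the gadget so that the constructed graph has no odd cycles. Membership in \textsc{NP} is immediate. Given a set $M$, we can verify that it is a multipacking in polynomial time by running BFS from every vertex and checking $|N_r[v]\cap M|\le r$ for $r\in\{1,\dots,\rad(G)\}$.

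For hardness, start from an instance $(U,\mathcal{F},k)$ of \textsc{Hitting Set} with $U=\{u_1,\dots,u_n\}$ and $\mathcal{F}=\{S_1,\dots,S_m\}$. The construction has four parts.
\begin{enumerate}
\item Introduce an element vertex $u_i$ for each element and a set vertex $s_j$ for each set, with $u_i s_j$ an edge exactly when $u_i\in S_j$. This incidence graph is bipartite.
\item In the chordal reduction the element vertices are presumably pairwise adjacent, so that they are mutually at distance $1$. Here we instead make them mutually at distance $2$ by adding one hub vertex $c$ adjacent to all $u_i$. The vertex $c$ lies on the same side as the $s_j$, so bipartiteness is preserved.
\item Attach to each $u_i$ a pendant path of carefully chosen length $\ell$, roughly proportional to $k$, ending at a leaf $a_i$. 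Attach to each $s_j$ (and to $c$) pendant paths of lengths chosen with the right parity to keep bipartiteness.
\item Choose the parameter $k'$ of the \textsc{Multipacking} instance as a linear function of $k$.
\end{enumerate}
The intended correspondence is as follows. A hitting set $H$ of size at most $k$ yields the multipacking obtained from a fixed ``base'' family of far-apart leaves on the $s_j$-paths together with the leaves $a_i$ for $u_i\notin H$ (or $u_i\in H$, depending on which sign makes the ball constraints tight). The key design goal is that a ball around $s_j$ of the critical radius contains too many chosen vertices exactly when $S_j$ is not hit.

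The proof then splits into the two directions.
\begin{itemize}
\item \emph{Hitting set to multipacking.} Classify every ball $N_r[v]$ by the type of its centre: on a pendant path, an element vertex, a set vertex, or the hub. For each type, bound $|N_r[v]\cap M|$ using the fact that chosen vertices are pairwise at distance at least about $2\ell$, except when they are routed through a single $s_j$ or through $c$.
\item \emph{Multipacking to hitting set.} Take a multipacking of size $k'$. First show that it can be normalised: each chosen vertex can be pushed to the leaf of its pendant path without violating any constraint, since moving away from the core only increases distances. Then read off $H$ from the element leaves it uses, or avoids. Finally, show that any set $S_j$ left unhit forces a violated constraint at $s_j$ for the critical radius.
\end{itemize}
Both directions preserve the parameter linearly, so W[2]-hardness transfers from \textsc{Hitting Set} parameterized by solution size, exactly as in Theorem~\ref{thm:NPC_chordal}.

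The main obstacle is the distance bookkeeping that bipartiteness forces. Removing the clique on $U$ doubles the element-to-element distances. A naive scaling (subdividing every edge) turns a radius-$r$ constraint into a radius-$2r$ constraint, which permits $2r$ chosen vertices instead of $r$ and destroys tightness. So the lengths of the pendant paths must be tuned, separately for those hanging at $u_i$, at $s_j$ and at $c$, with consistent parities, so that the ball at $s_j$ of the critical radius is exactly tight. At the same time, no other ball may become violated; balls centred at $c$ are the most dangerous, since $c$ sees all element leaves at once. I would try first to choose lengths so that every $a_i$ is at distance $\ell+1$ from $c$ and from each incident $s_j$, and to make the hub's own pendant path long enough that balls around $c$ with radius at least $\ell+1$ also contain enough unchosen vertices. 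If that fails, I would replace the hub $c$ by a longer bipartite ``spine'' that separates the element vertices by distance $2$ while pushing $c$'s influence outside the critical radius.
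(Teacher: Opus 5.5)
Your proposal has a genuine gap, and it starts with the polarity of the incidence edges. With $u_is_j$ an edge exactly when $u_i\in S_j$, and $M$ made of the leaves $a_i$ for $u_i\in H$, a ball around $s_j$ reaches $a_i$ at the short distance $\ell+1$ precisely for $u_i\in H\cap S_j$. So it is overloaded when $S_j$ contains \emph{many} elements of $H$, and it is lightest exactly when $S_j$ is not hit. Multipacking constraints are only upper bounds, so the covering condition $S_j\cap H\neq\emptyset$ must be encoded through a complement somewhere.

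Your hedge of taking the leaves of $U\setminus H$ does not rescue this. It makes $|M|\ge n-k$, plus a base family whose size grows with $m$, so $k'$ is no longer a function of $k$ and W[2]-hardness is lost. It also means the threshold at $s_j$ would have to be $|S_j|$, which uniform path lengths cannot provide.

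The missing idea, which the paper uses, is to complement the \emph{adjacency} instead. Join $u_i^1$ to $S_j$ iff $u_i\notin S_j$, hang a path $u_i^1\cdots u_i^{k-1}$ of length $k-2$ at each $u_i^1$, and take $M=\{u_i^{k-1}:u_i\in H\}$ with parameter exactly $k$. The radii then split into three cases:
\begin{itemize}
\item For $r<k-1$: leaves are pairwise at distance at least $2k-2$, so every ball contains at most one leaf.
\item For $r\ge k$: the bound holds trivially since $|M|=k\le r$.
\item For $r=k-1$: one has $M\subseteq N_{k-1}[S_t]$ iff $S_t\cap H=\emptyset$.
\end{itemize}
Conversely, given a multipacking of size $k$, let $H$ be the set of elements whose paths meet $M$. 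An unhit $S_t$ would then have all of $M$ within distance $k-1$. No normalisation step and no base family are needed.

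The second problem is the hub. You attach $c$ to all element vertices, but in the chordal reduction it is $\mathcal{F}$, not $U$, that forms a clique. Then $d(c,a_i)=\ell+1$ for every $i$, whereas $d(s_j,a_i)=d(s_j,u_i)+\ell\ge \ell+1$. Counting element leaves, $N_r[c]$ therefore always contains at least as many chosen vertices as $N_r[s_j]$, and it contains all of them once $r\ge \ell+1$. The hub is exactly the vertex that must not be the most loaded.

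Your proposed remedy, giving balls around $c$ ``enough unchosen vertices'', cannot work. In graphs the constraint is $|N_r[v]\cap M|\le r$, which does not depend on $|N_r[v]|$; that kind of slack exists only in the geometric $(s+1)/2$ version.

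The paper instead puts the hub $C$ on the set side, adjacent to every $S_j$. Then:
\begin{itemize}
\item $d(C,u_i^{k-1})\ge k$ for all $i$, so $C$ never sees a leaf at the critical radius.
\item Any path vertex $u_p^q$ is at distance at least $k$ from $u_i^{k-1}$ for $i\ne p$.
\item The graph is bipartite with sides $\mathcal{F}\cup\{u_i^j: j \text{ even}\}$ and $\{C\}\cup\{u_i^j: j \text{ odd}\}$, with no parity tuning of extra pendant paths.
\end{itemize}
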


\begin{restatable}{theorem}{NPCclawfree}\label{thm:NPC_clawfree}
  \textsc{Multipacking} problem  is \textsc{NP-complete} for claw-free graphs. Moreover, \textsc{Multipacking} problem  is \textsc{W[2]-hard} for claw-free graphs when parameterized by the solution size.
\end{restatable}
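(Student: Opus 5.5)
The plan is a polynomial-time, parameter-preserving reduction from \textsc{Hitting Set}. It follows the template of the reduction for Theorem~\ref{thm:NPC_chordal}, but rebuilds the gadgets so that the resulting graph contains no induced $K_{1,3}$. Membership in \textsc{NP} is immediate: given $M$, we check $|N_r[v]\cap M|\le r$ for all $v$ and all $1\le r\le \rad(G)$ with a BFS from each vertex. Since the reduction is parameter-preserving, with the new parameter a function of $k$ only, W[2]-hardness carries over from \textsc{Hitting Set}.

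Let the instance be a universe $U=\{u_1,\dots,u_n\}$, a family $\mathcal{F}=\{S_1,\dots,S_m\}$ and an integer $k$. I would first write down the reduction graph $G$ behind Theorem~\ref{thm:NPC_chordal}, which has element vertices, set vertices and long pendant paths that space candidate multipacking vertices far apart. I would then list every induced claw in $G$. Claws can only occur at vertices that have neighbours in several mutually non-adjacent parts: an element vertex seeing several set vertices together with its pendant path, or a set vertex seeing several element vertices.

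To destroy these claws I would use a line-graph-style construction. Replace every incidence $u_i\in S_j$ by a dedicated vertex $e_{ij}$. Make all incidence vertices of the same element $u_i$ a clique $C_i$, and all incidence vertices of the same set $S_j$ a clique $D_j$. Attach each pendant path or long spacer to a clique as an end of a path, hanging from one designated vertex only. This makes the graph essentially the line graph of a subdivided incidence structure with paths attached, and line graphs are claw-free. The neighbourhood of any vertex then splits into at most two cliques, which is exactly the claw-free condition. The one point to verify is that a path-attachment vertex sees only its own clique and one path vertex.

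Next I would recompute distances. The cliques $C_i$ and $D_j$ contract each element or set to diameter $1$, so distances between the spacer endpoints change by at most an additive constant relative to the original reduction. I would compensate by lengthening the pendant paths, for example to length proportional to $k$ plus a constant, so that the two key inequalities survive. First, the multipacking vertices chosen at the tips of paths belonging to elements of a hitting set of size $k$ violate no ball constraint $|N_r[v]\cap M|\le r$. Second, conversely, any multipacking of size at least the target can be pushed to path tips, and the elements it selects must hit every $S_j$; otherwise some ball centred in $D_j$ (or at a dedicated hub) of radius $r$ would contain $r+1$ chosen vertices.

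I expect the main obstacle to be the converse direction. Claw-freeness forbids the usual ``hub'' vertex adjacent to many independent vertices, so the counting ball that certifies a violation must be centred in a clique, and its radius must be tuned carefully. I would first try to adapt the distance lemma from the proof of Theorem~\ref{thm:NPC_chordal} with shifted radii. If that fails, I would add a single central clique joined to every $D_j$, which is still claw-free because its neighbourhoods split into two cliques, and use it as the centre of the counting balls.
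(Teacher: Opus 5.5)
Your overall plan matches the paper's: a parameter-preserving reduction from \textsc{Hitting Set} through a line-graph (rook's-graph) gadget, with pair-vertices grouped into element cliques and set cliques, and pendant paths carrying the chosen vertices. But the gadget as you specify it encodes the wrong condition.

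\textbf{Polarity.} In the reduction of Theorem~\ref{thm:NPC_chordal}, $u_i^1S_j$ is an edge iff $u_i\notin S_j$. This polarity is what makes a ball of radius $k-1$ around a \emph{missed} set contain all $k$ chosen tips. You instead create $e_{ij}$ for each incidence $u_i\in S_j$. Then every vertex of $D_j$ is within distance $1$ of $C_i$ when $u_i\in S_j$, and at distance at least $2$ when $u_i\notin S_j$. So the chosen tips are \emph{closer} to $D_j$ the more of them lie in $S_j$. A ball centred in $D_j$ can capture all chosen tips only when $H\subseteq S_j$; that is, your graph tests whether $H$ hits the complements $U\setminus S_j$, not the sets themselves. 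Your converse sentence (``otherwise some ball centred in $D_j$ \dots would contain $r+1$ chosen vertices'') therefore fails for your graph. The fix is to index pair-vertices by non-incidences, as the paper does: $w_{j,i}$ exists iff $u_i\notin S_j$, and $w_{j,i}w_{q,p}$ is an edge iff $j=q$ or $i=p$. For the same reason, a hub at equal distance from every set cannot be the counting centre, since it cannot tell which set is missed. The centre must be set-specific.

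\textbf{Attachment and the converse.} Hanging a path from a single clique vertex $e_{ij}$ creates an induced claw at $e_{ij}$. Its leaves are the path neighbour, some $e_{ij'}$ and some $e_{i'j}$ with $i'\neq i$, $j'\neq j$; this happens because $e_{ij}$ already lies in two cliques. The attachment vertex must be complete to the whole element clique (the paper's $u_i^1$ is adjacent to all $w_{j,i}$). Each set needs its own centre complete to its set clique (the paper's vertex $S_j$), and the paper then makes $\mathcal{F}$ a clique, so every neighbourhood is still a union of two cliques. A hub complete to three or more set cliques would itself centre a claw, e.g.\ on $e_{a1},e_{b2},e_{c3}$ with $a,b,c$ distinct.

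The clique on $\mathcal{F}$ is also what makes the converse work without pushing $M$ to the tips. Every vertex of $\mathcal{F}\cup W$ is within distance $2$ of every $S_t$. Take paths $u_i^1\cdots u_i^{k-2}$ and let $H$ be the elements whose path meets $M$. Then a set $S_t$ missed by $H$ gives $M\subseteq N_{k-1}[S_t]$, a violation.

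Two further details. The path length is forced to be exactly $(k-1)-d(S_t,u_i^1)=k-3$, not merely proportional to $k$. In the forward direction you must also rule out ball centres inside the gadget. The paper does this by showing that $d(w_{p,q},u_i^1)\le 2$ for all chosen $i$ forces $S_p$ to be missed.
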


 It is known that \textsc{Total Dominating Set} problem is \textsc{NP-complete} for cubic (3-regular) graphs~\cite{garey1979computers}. We reduce this problem to our problem to show the following.

\begin{restatable}{theorem}{multipackingregularNPc}\label{thm:multipacking_regular_NPc}
    \textsc{Multipacking} problem  is \textsc{NP-complete} for regular graphs. 
\end{restatable}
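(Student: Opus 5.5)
Membership in NP is immediate: given $M$, we check $|N_r[v]\cap M|\le r$ for every $v$ and every $1\le r\le \rad(G)$ after computing all pairwise distances by BFS. For hardness we reduce from \textsc{Total Dominating Set} on cubic graphs, which is NP-complete~\cite{garey1979computers}. Given a cubic graph $H$ on $n$ vertices and an integer $k$, we build a regular graph $G$ whose multipacking number encodes the total domination number of $H$.

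The key observation is the correspondence between closed balls in a suitable incidence-type graph and open neighbourhoods in $H$. A total dominating set of $H$ is a hitting set for the family $\{N_H(v):v\in V(H)\}$. This family consists of $n$ sets of size $3$, and each element lies in exactly $3$ sets. The NP-hardness proof of Theorem~\ref{thm:NPC_chordal} reduces \textsc{Hitting Set} to \textsc{Multipacking}, and I will mimic it while keeping $G$ regular. The gadget uses two copies of $V(H)$, $A=\{a_v\}$ and $B=\{b_v\}$, with $a_u b_v$ an edge iff $uv\in E(H)$. This bipartite double cover is $3$-regular. To each side we attach long pendant-like structures that force a large multipacking, and these must themselves be regular: for example, chains of copies of a fixed $3$-regular gadget (prisms or $K_4$ minus an edge repaired by matchings), so every vertex keeps degree $3$. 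The chains are built so that a vertex far out on a chain has a ball of radius $r$ that contains roughly $r$ chosen chain vertices plus the vertices in the core that "hit" the corresponding set. The intended equivalence is as follows: $G$ has a multipacking of size $L+(n-k)$, where $L$ is a fixed number of forced chain vertices, iff $H$ has a total dominating set of size $k$. Equivalently, the core vertices selected into $M$ correspond to the complement of a total dominating set, mirroring how Theorem~\ref{thm:NPC_chordal} places set-vertices at distance pattern so that a ball at the right radius sees exactly one "slack" unit per unhit set.

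The steps, in order, are as follows. First, fix the gadget and compute all relevant distances. Second, prove the forward direction: from a total dominating set $D$, define $M$ as the chain representatives together with $\{b_v: v\notin D\}$ or the appropriate vertices. Then verify $|N_r[x]\cap M|\le r$ for all $x$ and $r$ by case analysis on where $x$ lies (core, near chain, far chain). Third, prove the converse: normalize an arbitrary maximum multipacking by shifting its vertices into canonical positions without decreasing its size, then read off a total dominating set from the ball constraints at core vertices. Finally, check that $G$ is regular and has polynomial size.

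The main obstacle is the regularity requirement. Theorem~\ref{thm:NPC_chordal}-style reductions rely on long paths and pendant vertices, which break regularity. Their regular replacements inflate balls, since a radius-$r$ ball in a cubic gadget contains many more vertices, which may let a multipacking pack more densely than intended or alter distances. I would first try building the chains as "ladders" (prism-like $C_4$-segments joined in sequence). Distances along a ladder behave like a path, up to an additive $1$, so the path-based counting from Theorem~\ref{thm:NPC_chordal} survives with at most one vertex per rung selected. Finally, I would adjust the chain ends with small cubic caps so that every vertex, including those where the chains attach to $A\cup B$, has degree exactly $3$ (or $4$, using a uniform degree increase if needed).
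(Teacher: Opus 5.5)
Your proposal has a genuine gap: it never fixes a gadget, and the one concrete encoding it does commit to fails. You take the bipartite double cover ($a_ub_v$ an edge iff $uv\in E(H)$) as the core and put $\{b_v : v\notin D\}$ into $M$. Suppose two vertices $u,w\notin D$ have a common neighbour $v$ in $H$. Then $b_u,b_w\in N_1[a_v]$, so $|N_1[a_v]\cap M|\ge 2$ and $M$ is not a multipacking. Attaching chains or caps keeps the double-cover edges and can only shorten core distances, so this cannot be repaired later.

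This already happens for $H=K_4$ on $\{1,2,3,4\}$ with the total dominating set $D=\{1,2\}$, since $b_3$ and $b_4$ are both adjacent to $a_1$. In general, each $a_v$ can have at most one $M$-neighbour in $B$, and each $b_u$ has three $A$-neighbours. Hence a multipacking contains at most $n/3$ vertices of $B$, and your target count $n-k$ is unattainable whenever $k<2n/3$. Selecting $a$'s instead of $b$'s fails by symmetry. The underlying reason is that in your double cover the vertices $b_u$, $u\in N_H(v)$, are pairwise at distance $2$. So the ball constraints forbid every pair among them, not merely the whole neighbourhood, and \emph{complement of a total dominating set} is not what gets encoded. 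The remaining ingredients are also left unspecified: the chains whose balls contain \emph{roughly $r$} chosen vertices, the cubic caps, and the shifting normalization in the converse. So neither direction, nor regularity, is actually established.

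The missing idea is to keep the polarity and the counting of Theorem~\ref{thm:NPC_chordal}:
\begin{itemize}
\item element-heads are joined to the sets \emph{not} containing them;
\item exactly $k$ vertices are selected, one at the far end of each chosen element's arm;
\item a violation shows up in a single ball of radius $k-1$.
\end{itemize}
You should also use the fact that the family $\{N_H(v)\}$ is indexed by $V(H)$ itself.

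The paper builds an arm $H_a$ with head $u_a^1$ for each $v_a$, and makes the heads induce the complement $\overline{H}$. Let $M$ consist of one far-end vertex of $H_i$ for each $v_i\in D$. Balls of radius other than $k-1$ are harmless, because selected vertices are pairwise at distance at least $2k-3$ and $|M|=k$. A ball of radius $k-1$ can contain all of $M$ only if it is centred at a head $u_t^1$. This happens iff $u_t^1$ is adjacent to every selected head other than itself, i.e.\ iff $v_t$ has no neighbour in $D$. If $v_t\in D$, its own arm end is at distance $k-2$, so that case is covered too.

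Regularity comes cheaply because $\overline{H}$ is $(n-4)$-regular. Set $d=n-4$ and make each arm a blown-up path:
\begin{itemize}
\item layers $S_a^2,\dots,S_a^{k-2}$ of $d$ vertices each, with complete bipartite joins between consecutive layers;
\item $S_a^2$ a clique fully joined to $u_a^1$;
\item a $(2d-1)$-regular cap on $d^2$ vertices, split into $d$ groups of size $d$, with group $j$ joined to the $j$-th vertex of $S_a^{k-2}$.
\end{itemize}
The whole graph is then $2d$-regular, while each arm is metrically a path of length $k-2$. The ball inflation you worried about does no harm, since the constraints only count selected vertices. In the converse one takes $D=\{v_i : V(H_i)\cap M\neq\emptyset\}$, so several selected vertices in one arm collapse to a single element of $D$.
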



Figure~\ref{fig:graph_class_map} illustrates the complexity landscape of the \textsc{Multipacking} problem across the graph classes discussed in Chapter~\ref{chapter:NPcomplete}, along with some related classes.

\begin{figure}[H]
    \centering
   \includegraphics[width=\textwidth]{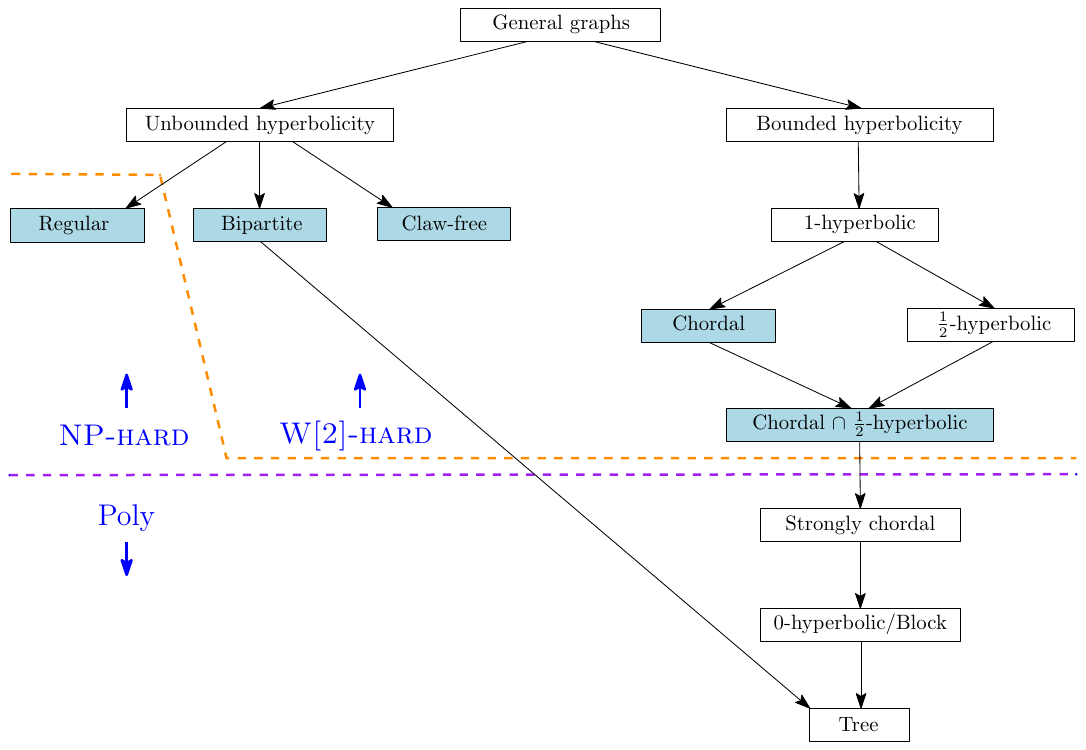}
    \caption{Inclusion diagram for graph classes mentioned in Chapter \ref{chapter:NPcomplete} (and related ones). If a class $A$ has a downward path to class $B$, then $B$ is a subclass of $A$. The \textsc{Multipacking} problem  is \textsc{NP-hard} for the graph classes above the dashed (purple) straight-line and it is polynomial-time solvable for the graph classes below the dashed (purple) straight-line. Moreover, the \textsc{Multipacking} problem  is \textsc{W[2]-hard} for the graph classes above the dashed (darkorange) curve. In Chapter \ref{chapter:NPcomplete}, we have discussed the hardness of the \textsc{Multipacking} problem for the graph classes in the colored (lightblue) block.}
    \label{fig:graph_class_map}
\end{figure}

Further, we studied the computational complexity of the \textsc{Multipacking} problem on a geometric intersection graph-class: CONV graphs. A graph $G$ is a \textit{CONV graph} or \textit{Convex Intersection Graph} if and only if there exists a family of convex sets on a plane such that the graph has a vertex for each convex set and an edge for each intersecting pair of convex sets. It is known that 
 \textsc{Total Dominating Set} problem is \textsc{NP-complete} for planar graphs~\cite{garey1979computers}.  We reduce this problem to our problem to show the following.

\begin{restatable}{theorem}{multipackingCONVNPc}\label{thm:multipacking_CONV_NPc}
    \textsc{Multipacking} problem  is \textsc{NP-complete} for CONV graphs. 
\end{restatable}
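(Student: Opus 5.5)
The plan is to reduce \textsc{Total Dominating Set} on planar graphs to \textsc{Multipacking} on CONV graphs, reusing the Hitting-Set-type gadget from the proof of Theorem~\ref{thm:multipacking_regular_NPc}. A total dominating set of $G$ is exactly a hitting set for the family of open neighbourhoods $\{N(v): v\in V(G)\}$. So the combinatorial part of the reduction should go through unchanged. Only the geometric realizability is new.

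\textbf{Construction.} Given a planar graph $G$ and an integer $k$, I would take the graph $H$ produced by the reduction for Theorem~\ref{thm:multipacking_regular_NPc}, or equivalently the Hitting Set reduction of Theorem~\ref{thm:NPC_chordal} instantiated with universe $V(G)$ and sets $N(v)$. As I recall it, $H$ has
\begin{itemize}
\item one \emph{element} vertex $a_u$ for each $u\in V(G)$;
\item one \emph{set} vertex $b_v$ for each $v\in V(G)$, with $a_u b_v\in E(H)$ iff $u\in N(v)$;
\item auxiliary clique or path structure that forces small distances, so that only the constraints for $r\in\{1,2\}$ matter.
\end{itemize}

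\textbf{Correctness.} I would first restate the correctness claim as a lemma: $G$ has a total dominating set of size at most $k$ iff $H$ has a multipacking of size at least $k'$, where $k'$ is the target defined in that reduction. I would then argue it by the same counting of $|N_r[x]\cap M|$ as in the earlier proofs. Membership in NP is immediate, since the multipacking condition can be checked in polynomial time via BFS from each vertex.

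\textbf{Geometric realization (main step).} I need a family of convex sets in the plane whose intersection graph is $H$. I would start from a contact representation of the planar graph $G$ by convex sets, for instance a Koebe disk packing or a triangle-contact representation. In it, each $u$ gets a convex region $D_u$, and adjacent vertices touch at a point $p_{uv}$.
\begin{itemize}
\item Element vertex $a_u$: a slightly shrunk copy of $D_u$.
\item Set vertex $b_v$: a thin convex set (a short segment or a thin triangle) that starts inside $D_v$'s territory and passes through tiny neighbourhoods of the contact points $p_{uv}$ for $u\in N(v)$.
\item Auxiliary cliques: thin segments that all pass through a common point placed far away, or that all cross a common line.
\end{itemize}
For the auxiliary cliques, every pair of these long segments intersects, and their endpoints can be adjusted so that each one meets only its prescribed partners.

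\textbf{Main obstacle.} The hardest part is that a single convex set $b_v$ must meet all the $D_u$ with $u\in N(v)$ and no other element sets. Naively, $b_v$ would be the convex hull of points near the $p_{uv}$, and this hull could swallow $D_v$ or reach non-neighbours. My first attempt would be to exploit that in a disk packing the contact points $p_{uv}$, $u\in N(v)$, lie on the boundary circle of $D_v$. I would take $b_v$ to be the convex polygon spanned by points placed just outside that circle near each $p_{uv}$, and shrink each element set $D_u$ slightly so that it meets $b_v$ only near $p_{uv}$.

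This makes $a_v$ meet $b_v$, and whether that edge is allowed depends on the gadget. If the reduction forbids it, I would modify the gadget by adding a private vertex, or switch from the open to the closed-neighbourhood variant, adjusting $k'$ accordingly.

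Finally, I must check that the modified gadget still satisfies the counting lemma. I must also check that the sets $b_v$ for distinct $v$ intersect exactly as the gadget requires, adding tiny perturbations or a common crossing line if they should form a clique.
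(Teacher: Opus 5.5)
Your reduction rests on a misremembered gadget. In the paper's \textsc{Hitting Set} reduction (Theorem~\ref{thm:NPC_chordal}), a set vertex is joined to the first vertex of an element's pendant path iff that element is \emph{not} in the set. The set vertices form a clique, and every element carries a path of length $k-2$.

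The long paths are the whole mechanism. The $k$ chosen path-ends are pairwise at distance at least $2k-3$, so constraints with $r\le k-2$ hold trivially. Constraints with $r\ge k$ hold because $|M|=k$. The only binding one is $r=k-1$: a vertex within distance $k-1$ of all chosen ends is exactly a set avoided by every chosen element.

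With your membership adjacency ($a_ub_v\in E(H)$ iff $u\in N(v)$), a vertex $b_v$ close to all chosen $a_u$ would witness $D\subseteq N(v)$ rather than $D\cap N(v)=\emptyset$. So the multipacking condition does not encode total domination. And if auxiliary structure makes only $r\in\{1,2\}$ relevant, the counting argument of the earlier proofs (one ball of radius $k-1$ capturing all $k$ chosen vertices) has nothing to act on. If you instead instantiate the \textsc{Hitting Set} reduction faithfully, the graph to realize has the $b_v$ forming a clique and $a_ub_v$ adjacent iff $u\notin N(v)$. That is a dense graph your disk-packing plan does not address.

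The missing idea is that for \textsc{Total Dominating Set} the sets $N(v)$ are indexed by the vertices themselves, so set and element vertices can be merged. Give each $v_i$ a path $u_i^1u_i^2\dots u_i^{k-1}$, and join $u_i^1u_j^1$ iff $i\neq j$ and $v_iv_j\notin E(G)$. Then $u_t^1$ lies within distance $k-1$ of every $u_i^{k-1}$ with $v_i\in D$ iff $v_t$ has no neighbour in $D$. The graph induced on $\{u_i^1\}$ is exactly $G^c$. This is the reduction of Theorem~\ref{thm:multipacking_regular_NPc} with the layered gadgets $H_a$ replaced by plain paths.

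Realizability then needs no construction from you. Complements of planar graphs are CONV graphs (Kratochv\'{\i}l and Kub\v{e}na), and the pendant paths are hung off that representation.

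Your hand-built realization is aimed at the wrong graph, and it would not control its own intersection pattern either. For adjacent $v,w$, both $b_v$ and $b_w$ contain points near the common contact point $p_{vw}$ and will typically meet there. So the $b$'s intersect roughly according to $E(G)$. Making them pairwise intersecting would require stretching convex sets across the packing into non-neighbouring element sets.
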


\subsection{Chapter 3: Multipacking on a bounded hyperbolic graph-class: chordal graphs}

A \textit{chordal graph} is an undirected simple graph in which each cycle on four or more vertices has a chord, which is an edge that is not part of the cycle but connects two vertices of the cycle.  Chordal graph is a superclass of interval graph. In Chapter \ref{chapter:chordal}, we study the multipacking problem on chordal graphs.


We also generalize this study for $\delta$-hyperbolic graphs (the definition of $\delta$-hyperbolic graph is given in Subsection \ref{subsec:contribution_NP_complete}). Trees are $0$-hyperbolic graphs~\cite{buneman1974note}, and chordal graphs are 1-hyperbolic~\cite{brinkmann2001hyperbolicity}. In general, hyperbolicity is a measurement of the deviation of the distance function of a graph from a tree metric. Many other interesting graph
classes including 
co-comparability graphs \cite{corneil2013ldfs}, asteroidal-triple free graphs \cite{corneil1997asteroidal}, permutation graphs \cite{golumbic2004algorithmic}, 
graphs with bounded chordality or treelength are hyperbolic \cite{chepoi2008diameters,keil2017algorithm}. See Figure~\ref{fig:diagram} for a diagram representing the inclusion hierarchy between these classes. Moreover, hyperbolicity is a measure that captures properties of real-world graphs such as
the Internet graph \cite{shavitt2004curvature} or database relation graphs \cite{walter2002interactive}. 
Here, we study the multipacking problem of hyperbolic graphs.

\begin{figure}[t]
\centering
\scalebox{0.8}{\begin{tikzpicture}[node distance=7mm]

\tikzstyle{mybox}=[fill=white,line width=0.5mm,rectangle, minimum height=.8cm,fill=white!70,rounded corners=1mm,draw];
\tikzstyle{myedge}=[line width=0.5mm]
\newcommand{\tworows}[2]{\begin{tabular}{c}{#1}\\{#2}\end{tabular}}

 \node[mybox] (hyper)  {bounded hyperbolicity};
    \node[mybox] (treelength)  [below=of hyper,yshift=3mm]  {bounded treelength} edge[myedge] (hyper);
    \node[mybox] (chordality) [below=of treelength,yshift=3mm] {bounded chordality} edge[myedge] (treelength);
      \node[mybox] (diam) [below left=of treelength,xshift=-10mm,yshift=1mm] {bounded diameter} edge[myedge] (treelength);
    \node[mybox] (ATfree) [below right=of chordality,xshift=10mm,yshift=1mm] {asteroidal triple-free} edge[myedge, fill=red!30] (chordality);
    \node[mybox] (cocomp) [below =of ATfree,yshift=3mm] {co-comparability} edge[myedge] (ATfree);
    \node[mybox] (chordal) [below =of chordality,yshift=3mm] {chordal} edge[myedge] (chordality);
    \node[mybox, fill=gray!30] (strongly) [below =of chordal,yshift=3mm] {strongly chordal} edge[myedge] (chordal);
    \node[mybox] (split) [left =of strongly,xshift=-3mm] {split} edge[myedge] (chordal) edge[myedge] (diam);
    \node[mybox, fill=gray!30] (interval) [below right=of strongly,yshift=1mm] {interval} edge[myedge] (strongly) edge[myedge] (cocomp);
    \node[mybox] (perm) [below right=of cocomp,yshift=1mm] {permutation} edge[myedge] (cocomp);
    \node[mybox, fill=gray!30] (block) [below =of strongly,yshift=1mm]
    {block} edge[myedge] (strongly);
    \node[mybox, fill=gray!30] (tree) [below=of block,yshift=3mm] {trees} edge[myedge] (block);
  \end{tikzpicture}}

\caption{Inclusion diagram for graph classes mentioned in Chapter \ref{chapter:chordal} (and related ones). If a class $A$ has an upward path to class $B$, then $A$ is included in $B$. For the graphs in the gray classes, the broadcast domination number is equal to the multipacking number, but this is not true for the white classes. 
}
\label{fig:diagram}
\end{figure}

We start by bounding the multipacking number of a chordal graph: 

\begin{restatable}{proposition}{multipackingbroadcastrelation}
\label{prop:gammabGleq3/2mpG}
 If $G$ is a connected chordal graph, then $\gamma_{b}(G)\leq \big\lceil{\frac{3}{2} \MP(G)\big\rceil} $. 
\end{restatable}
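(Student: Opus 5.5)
The plan is to sandwich both parameters between quantities governed by the diameter and radius. By Observation~\ref{obs:erwin_ineq}, $\gamma_b(G)\le \rad(G)$. By Proposition~\ref{prop:hartnell_ineq}, $\MP(G)\ge \lceil(\diam(G)+1)/3\rceil$. The bound $\diam\ge\rad$ alone only gives the factor $3$, so for chordal graphs I need a sharper lower bound on $\MP(G)$ in terms of $\rad(G)$. The target intermediate statement is
\[
\MP(G)\ \ge\ \Big\lceil \tfrac{2\rad(G)-c}{3}\Big\rceil
\]
for a small constant $c$. Equivalently, I want roughly $\rad(G)\le \tfrac32\MP(G)$, with the rounding arranged so that $\gamma_b(G)\le\rad(G)\le\lceil\tfrac32\MP(G)\rceil$.

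\textbf{Step 1: relate radius and diameter.} For chordal graphs, Laskar--Shier and Chepoi give $2\rad(G)\le\diam(G)+2$, so $\rad(G)\le \lfloor\diam(G)/2\rfloor+1$. Plugging this into Hartnell--Mynhardt only gives $\MP\gtrsim (2\rad-1)/3$. That is exactly the $\tfrac32$ ratio up to additive constants, but the additive slack must be handled carefully.

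\textbf{Step 2: construct the multipacking directly.} I would take a diametral path $P=v_0v_1\dots v_d$ with $d=\diam(G)$. Since $P$ is isometric, the set $M=\{v_0,v_3,v_6,\dots\}$ is a multipacking: any ball $N_r[u]$ meets $P$ in a subpath of length at most $2r$, hence contains at most $\lfloor 2r/3\rfloor+1\le r$ points of $M$. This gives $\MP(G)\ge\lfloor d/3\rfloor+1$.

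\textbf{Step 3: combine and check residues.} Set $R=\rad(G)\le\lfloor d/2\rfloor+1$, and check $R\le\lceil\tfrac32(\lfloor d/3\rfloor+1)\rceil$ by writing $d=6q+s$ with $0\le s\le5$. For $s=0$: $R\le 3q+1$ and $\MP\ge 2q+1$, so $\tfrac32(2q+1)=3q+1.5$, which suffices. For $s=1$ the same holds. For $s=2$: $R\le 3q+2$ while $\MP\ge 2q+1$, and $\lceil 3q+1.5\rceil=3q+2$, which suffices. For $s=3$: $R\le 3q+2$ and $\MP\ge 2q+2$, which suffices. For $s=4$: $R\le 3q+3$ and $\lceil\tfrac32(2q+2)\rceil=3q+3$, which suffices. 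For $s=5$ the same holds. So in every residue class $\gamma_b(G)\le \rad(G)\le \lceil\tfrac32\MP(G)\rceil$. Since $\lceil\tfrac32 x\rceil$ is monotone in $x$, replacing $\MP$ by the lower bound is legitimate.

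\textbf{Main obstacle.} The main obstacle is justifying the chordal inequality $2\rad(G)\le\diam(G)+2$. If I cannot cite it, I would prove it via the standard argument: in a chordal graph, the center lies near the middle of any diametral pair, using the Helly-type property of disks (balls in chordal graphs satisfy a Helly property up to an additive $1$). Concretely, I would show that the balls $N_{\lceil d/2\rceil}[x]$ over all $x\in V$ pairwise intersect. This holds because $d(x,y)\le d$, and in chordal graphs a pairwise intersecting family of balls with radius increased by $1$ has a common vertex, which yields a vertex of eccentricity at most $\lceil d/2\rceil+1$. Sharpening this to $\lfloor d/2\rfloor+1$ is the delicate point, and I would rely on the known result rather than re-derive it.
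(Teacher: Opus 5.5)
Your argument is correct and is essentially the paper's: you use $\gamma_b(G)\le\rad(G)$, the every-third-vertex multipacking on a diametral path giving $\MP(G)\ge\lceil(d+1)/3\rceil$, and the Laskar--Shier bound $2\rad(G)\le d+2$, which must indeed be cited, since your Helly-type fallback only yields $\lceil d/2\rceil+1$ and that is too weak for some odd $d$ (e.g.\ $d=5$). The only difference is in the final arithmetic: where you do a mod-$6$ case check, the paper inverts to $d\le 3\MP(G)-1$ and gets $\rad(G)\le\lfloor(3\MP(G)+1)/2\rfloor=\lceil\tfrac32\MP(G)\rceil$ in one line.
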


We have shown that, the \textsc{Multipacking} problem  is \textsc{NP-complete} for chordal $\cap$ $\frac{1}{2}$-hyperbolic graphs and it is \textsc{W[2]-hard} for chordal $\cap$ $\frac{1}{2}$-hyperbolic graphs when parameterized by the solution size (See Theorem \ref{thm:NPC_half_hyperbolic_chordal} of Chapter \ref{chapter:NPcomplete}).
However, polynomial-time algorithms are known for trees and more generally, strongly chordal graphs~\cite{brewster2019broadcast}. Even for trees, the algorithm for finding a maximum multipacking is very non-trivial. Chordal graphs form a superclass of the class of strongly chordal graphs. Here, we provide a $(\frac{3}{2}+o(1))$-approximation algorithm to find a multipacking on chordal graphs.   

\begin{restatable}{proposition}{approximationalgorithm}\label{prop:3/2mpGapprox} If $G$ is a connected chordal graph, there is a polynomial-time algorithm to construct a multipacking of $G$ of size at least $  \big\lceil{\frac{2\MP(G)-1}{3} \big\rceil}$.
\end{restatable}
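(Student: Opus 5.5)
The plan is to take a diametral path and place a multipacking on it, then compare its size with $\MP(G)$ through the radius, using a chordal-specific relation between radius and diameter.

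\emph{Step 1 (construction).} Compute $\diam(G)$ and a diametral path $P=v_0v_1\dots v_d$, where $d=\diam(G)$, in polynomial time by BFS from every vertex. Take $M=\{v_0,v_3,v_6,\dots\}$, the vertices of $P$ whose index is a multiple of $3$. This is the standard construction behind the lower bound $\lceil (d+1)/3\rceil\le \MP(G)$ of Proposition~\ref{prop:hartnell_ineq}.

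\emph{Step 2 ($M$ is a multipacking).} Since $P$ is isometric, any two vertices of $M$ are at distance at least $3$ in $G$. More precisely, $d(v_{3i},v_{3j})=3|i-j|$. Fix a vertex $v$ and an integer $r\ge 1$, and suppose the ball $N_r[v]$ contains the vertices $v_{3i}$ with $a\le i\le b$ that lie in $M$. Then $3(b-a)=d(v_{3a},v_{3b})\le 2r$. Hence
$$|N_r[v]\cap M|\le b-a+1\le \lfloor 2r/3\rfloor+1\le r.$$
So $|M|=\lfloor d/3\rfloor+1=\lceil (d+1)/3\rceil$.

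\emph{Step 3 (comparison with $\MP(G)$).} It remains to show $\lceil (d+1)/3\rceil\ge \lceil (2\MP(G)-1)/3\rceil$, which holds as soon as $d\ge 2\MP(G)-2$. For this I would use the known fact for chordal graphs (Laskar--Shier, or Chepoi--Dragan) that $\diam(G)\ge 2\rad(G)-2$. Combining it with $\MP(G)\le\gamma_b(G)\le\rad(G)$ from Proposition~\ref{prop:hartnell_ineq} and Observation~\ref{obs:erwin_ineq} gives
$$d\ge 2\rad(G)-2\ge 2\MP(G)-2.$$
Therefore
$$|M|=\Big\lceil \tfrac{d+1}{3}\Big\rceil\ge \Big\lceil \tfrac{2\MP(G)-1}{3}\Big\rceil,$$
as required.

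\emph{Main obstacle.} The main obstacle is the chordal radius--diameter inequality $\diam(G)\ge 2\rad(G)-2$, which is not stated earlier in the paper. It can be cited from the literature, or proved directly. For a direct proof I would use the fact that in a chordal graph the metric balls are convex in a suitable sense, together with a Helly-type property for disks. With these, one shows that if $\diam(G)=d$ then the disks of radius $\lceil d/2\rceil$ around all vertices pairwise intersect, hence share a common vertex. Turning that common vertex into a center of eccentricity at most $\lfloor d/2\rfloor+1$ needs care, and this is the step I expect to be hardest to write out. The same inequality also underlies Proposition~\ref{prop:gammabGleq3/2mpG}, so that proof could be reused here.
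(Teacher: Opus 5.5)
Your proof is correct and is essentially the paper's. Every third vertex of a diametral path gives a multipacking of size $\lceil (d+1)/3\rceil$, and combining $\MP(G)\le\gamma_b(G)\le\rad(G)$ with the Laskar--Shier inequality $2\rad(G)\le\diam(G)+2$ for chordal graphs yields the bound. The paper does state that inequality (Theorem~\ref{2rleqd+2}) and simply cites it, so cite it too rather than attempting your Helly-type sketch, which as phrased is false: in the 3-sun ($d=2$, radius $2$) the closed balls of radius $1=\lceil d/2\rceil$ pairwise intersect but have no common vertex.
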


This approximation algorithm is based on finding a diametral path which is almost double the radius for chordal graphs. The set of every third vertex on this path yields a multipacking.

Hartnell and Mynhardt~\cite{hartnell2014difference} constructed a family of connected graphs $H_k$ such that $\MP(H_k)=3k$ and $\gamma_b(H_k)=4k$. There is no such construction for trees and more generally, strongly chordal graphs, since multipacking number and broadcast domination number are the same for these graph classes~\cite{brewster2019broadcast}. Here, we construct a family of connected chordal graphs $H_k$ such that $\MP(H_k)=9k$ and $\gamma_b(H_k)=10k$ (Figure~\ref{fig:Names}).

\begin{restatable}{theorem}{multipackingbroadcastgapechordal}\label{thm:9k10k}
 For each positive integer $k$, there is a connected chordal graph $H_k$ such that $\MP(H_k)=9k$ and $\gamma_b(H_k)=10k$.
\end{restatable}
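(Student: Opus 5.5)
The plan is to build $H_k$ from a single chordal gadget $G_1$ with $\MP(G_1)=9$ and $\gamma_b(G_1)=10$. We then chain $k$ copies of this gadget along a path, mimicking the Hartnell--Mynhardt construction of Figure~\ref{fig:Hk1}. For the gadget, I would take a long ``spine'' path together with a few attached cliques (triangles, or $K_4$'s glued along edges) placed at regular positions, so that the whole graph stays chordal. The junctions between consecutive copies are bridges or cut vertices, and this preserves chordality of the chain. The design goal is that the spine forces the diameter, and hence the multipacking, to be about $9$ per copy, while the attached cliques raise the broadcast cost to $10$ per copy. Concretely, I would look for a chordal graph with $\diam\approx 26$ per block, in which the natural multipacking takes every third spine vertex. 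The side cliques would be pendant triangles whose apexes lie far from the spine, so that no single large broadcast can cover them cheaply.

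The lower bound $\MP(H_k)\ge 9k$ is by explicit construction: take $9$ designated vertices per copy, spaced at distance at least $3$ along a geodesic. Then $|N_r[v]\cap M|\le r$ holds because $M$ lies on an isometric path with gaps of $3$, which is the argument behind Proposition~\ref{prop:hartnell_ineq}. The upper bound $\gamma_b(H_k)\le 10k$ is also by explicit construction: in each copy, exhibit a dominating broadcast of cost $10$, for instance a few towers of small radius centred on the cliques.

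The two harder directions are $\MP(H_k)\le 9k$ and $\gamma_b(H_k)\ge 10k$.

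\begin{itemize}
\item For $\MP\le 9k$, I would exhibit a fractional dominating broadcast (a feasible dual solution of the LP relaxation of Brewster--Mynhardt--Teshima) of value $9k$. By LP duality this bounds every multipacking. Equivalently, one can partition $V(H_k)$ into balls $N_{r_i}[v_i]$ with $\sum r_i = 9k$. Each such ball contains at most $r_i$ multipacking vertices, which gives the bound directly. If balls alone cannot reach $9k$, the fractional version has to be used.
\item For $\gamma_b\ge 10k$, I would use the efficient optimal broadcast theorem of Dunbar et al.~\cite{dunbar2006broadcasts}. Restricting to an efficient optimal broadcast, I would do a local exchange and counting argument per copy: each copy must receive total cost at least $10$ from towers whose balls meet it. Any tower reaching across copies is charged proportionally, using that a ball of radius $r$ covers at most about $2r+1$ spine vertices plus limited clique vertices. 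Alternatively, one can exhibit a fractional multipacking of value $10k$, which lower-bounds $\gamma_b$ by LP duality.
\end{itemize}

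The main obstacle is the lower bound $\gamma_b(H_k)\ge 10k$, because large-radius towers spanning several copies must be ruled out. This needs careful charging: one must show that a tower of radius $r$ whose ball covers parts of several blocks dominates no more than a cost-$r$ share of those blocks. I would first try to find a fractional multipacking of value $10k$, with weights $1/2$ on clique apexes, because LP duality then finishes this bound with only local verification.
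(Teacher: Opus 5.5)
Your plan has a genuine gap: it tries to certify both hard inequalities through the LP relaxation, and that cannot work. By weak LP duality, every fractional multipacking has weight at most the cost of every fractional dominating broadcast, i.e.\ $\MP(G)\le\MP_f(G)=\gamma_{b,f}(G)\le\gamma_b(G)$.

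\begin{itemize}
\item Your route to $\MP(H_k)\le 9k$ (a fractional broadcast of value $9k$) and your preferred route to $\gamma_b(H_k)\ge 10k$ (a fractional multipacking of value $10k$) cannot both succeed on the same graph. Together they would give $10k\le 9k$.
\item Your ``equivalent'' ball version defeats itself. Balls $N_{r_i}[v_i]$ covering $V(H_k)$ with $\sum r_i=9k$ define the dominating broadcast $f(v_i)=r_i$ of cost $9k$, which contradicts $\gamma_b(H_k)=10k$.
\item The diameter of your spine only yields lower bounds on $\MP$, never the upper bound you need.
\end{itemize}

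A gap of $k$ between $\MP$ and $\gamma_b$ must be an integrality gap of one of the two LPs. The inequality on that side can only be proved by an integral, combinatorial argument.

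The paper puts the entire gap on the multipacking side. Its 21-vertex chordal block $G_1$ has $\MP=\gamma_b=5$. A fractional multipacking with weights $\tfrac13,\tfrac23$, together with an explicit broadcast, gives $\MP_f(G_k)=\gamma_b(G_k)=5k$ for the chain of $k$ blocks. The bound $\MP(G_2)\le 9$ is then proved by hand, in four steps.

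\begin{itemize}
\item Multipacking vertices are pairwise at distance at least $3$, so any set of diameter at most $2$ contains at most one of them.
\item Counting over such sets, and using balls such as $N_2[b_{1,10}]$, shows that a $5$-element multipacking of the first block must contain the junction vertex $b_{1,21}$.
\item That vertex, via the balls $N_2[b_{2,3}]$ and $N_2[b_{2,17}]$, caps the second block at $4$.
\item Pigeonhole over the $k$ bridge-joined (hence isometric) copies of $G_2$ then gives $\MP(G_{2k})\le 9k$.
\end{itemize}

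This forcing interaction between consecutive blocks is the missing idea in your plan.

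The gadgets you describe concretely also cannot work. A spine carrying pendant triangles or cliques attached at a vertex or along a spine edge, with copies joined by bridges, is a block graph. Block graphs are strongly chordal, and there $\MP=\gamma_b$ by Brewster et al. A working gadget needs non-strongly-chordal configurations such as induced suns.
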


Theorem \ref{thm:9k10k} directly establishes the following corollary.

\begin{restatable}{corollary}{gammabGdiffmpGchordal}\label{cor:gammabG-mpGchordal} The difference $ \gamma_{b}(G) -  \MP(G) $ can be arbitrarily large for connected chordal graphs.
\end{restatable}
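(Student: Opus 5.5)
Corollary~\ref{cor:gammabG-mpGchordal} follows directly from Theorem~\ref{thm:9k10k}, so the plan is simply to invoke that theorem and read off the gap. For each positive integer $k$, Theorem~\ref{thm:9k10k} supplies a connected chordal graph $H_k$ with $\MP(H_k)=9k$ and $\gamma_b(H_k)=10k$. Hence
\[
\gamma_b(H_k)-\MP(H_k)=10k-9k=k .
\]

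Now fix an arbitrary integer $N\ge 1$ and take $k=N$. The graph $H_N$ is a connected chordal graph with $\gamma_b(H_N)-\MP(H_N)=N$. The difference therefore exceeds every prescribed bound within the class of connected chordal graphs, which is exactly the claim.

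No step here is a genuine obstacle; all the substance lies in Theorem~\ref{thm:9k10k}. Proving that theorem would require three things: designing the gadget $H_k$ (presumably chordal blocks chained along a path, in the spirit of the Hartnell--Mynhardt $K_{2,4}$ chain but triangulated to stay chordal), exhibiting a multipacking of size $9k$ together with a matching upper bound, and showing that every dominating broadcast costs at least $10k$. Since we may assume Theorem~\ref{thm:9k10k}, the corollary needs only the subtraction above.
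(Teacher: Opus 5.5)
Your proposal is correct and matches the paper, which also obtains this corollary directly from Theorem~\ref{thm:9k10k}: the graphs $H_k$ with $\MP(H_k)=9k$ and $\gamma_b(H_k)=10k$ give a gap of exactly $k$. Your guess about the construction is not what the paper does, though this does not affect the corollary. The paper takes $H_k=G_{2k}$, a chain of $2k$ copies of a $21$-vertex chordal block. It gets $\gamma_b=10k$ from a fractional multipacking of weight $10k$ combined with LP duality.
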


We mentioned earlier that, for general connected graphs, the value of the expression $\lim_{\MP(G)\to \infty}\sup\{\gamma_{b}(G)/\MP(G)\}$ is $2$~\cite{rajendraprasad2025multipacking}. We found the range of this expression for chordal graphs. Proposition \ref{prop:gammabGleq3/2mpG} and Theorem \ref{thm:9k10k} yield the following corollary.

\begin{restatable}{corollary}{gammabGbympGchordal} \label{cor:gammabG/mpGchordal} For connected chordal graphs $G$, 
$$\displaystyle\frac{10}{9}\leq\lim_{\MP(G)\to \infty}\sup\Bigg\{\frac{\gamma_{b}(G)}{\MP(G)}\Bigg\}\leq \frac{3}{2}.$$
\end{restatable}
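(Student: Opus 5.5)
The statement to prove is Corollary~\ref{cor:gammabG/mpGchordal}, which follows by combining Proposition~\ref{prop:gammabGleq3/2mpG} (upper bound) with Theorem~\ref{thm:9k10k} (lower bound); both are taken as given.

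\emph{Upper bound.} Let $G$ be any connected chordal graph. Proposition~\ref{prop:gammabGleq3/2mpG} gives
\[
\gamma_b(G)\le \left\lceil \tfrac{3}{2}\MP(G)\right\rceil \le \tfrac{3}{2}\MP(G)+\tfrac12 .
\]
Dividing by $\MP(G)$,
\[
\frac{\gamma_b(G)}{\MP(G)} \le \frac32 + \frac{1}{2\MP(G)} .
\]
For every $m$, the supremum of this ratio over connected chordal graphs with $\MP(G)\ge m$ is therefore at most $\frac32+\frac{1}{2m}$. Letting $m\to\infty$ shows that the $\limsup$ is at most $\frac32$. Here $\lim_{\MP(G)\to\infty}\sup$ is read as the limit, as $m\to\infty$, of the supremum over graphs with $\MP(G)\ge m$.

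\emph{Lower bound.} Theorem~\ref{thm:9k10k} supplies, for each $k\ge 1$, a connected chordal graph $H_k$ with $\MP(H_k)=9k$ and $\gamma_b(H_k)=10k$. Fix $m$ and choose $k$ with $9k\ge m$. Then $H_k$ is admissible in the supremum over graphs with $\MP(G)\ge m$, and its ratio is $\frac{10k}{9k}=\frac{10}{9}$. So each such supremum is at least $\frac{10}{9}$, and hence so is the limit.

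There is no real obstacle: all the substance sits in Proposition~\ref{prop:gammabGleq3/2mpG} and Theorem~\ref{thm:9k10k}. The only care needed is to absorb the ceiling as an additive $\frac12$ error that vanishes once divided by $\MP(G)$, and to note that the $H_k$ family has unbounded multipacking number, so it contributes to the supremum for every threshold $m$.
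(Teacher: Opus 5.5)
Your proposal is correct and matches the paper, which obtains this corollary directly by combining Proposition~\ref{prop:gammabGleq3/2mpG} for the upper bound with Theorem~\ref{thm:9k10k} for the lower bound. Your handling of the ceiling, namely $\lceil \tfrac32\MP(G)\rceil \le \tfrac32\MP(G)+\tfrac12$, and your observation that $\MP(H_k)=9k$ is unbounded are exactly the two details that make the combination work.
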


We also make a connection with the \emph{fractional} versions of the two concepts dominating broadcast and multipacking, as introduced in~\cite{brewster2013broadcast}.


Further, we improve the lower bound of the expression $\lim_{\MP(G)\to \infty}$ $\sup\{\gamma_{b}(G)/\MP(G)\}$ for connected chordal graphs to $4/3$ where the previous lower bound was $10/9$  \cite{das2023relation} in Corollary \ref{cor:gammabG/mpGchordal}.  To prove this, we have shown the following.

\begin{restatable}{theorem}{chordalmultipackingbroadcastgape}\label{thm:chordal_graphs}
For each positive integer $k$, there is a connected chordal graph $F_k$ such that $\MP(F_k)=3k$ and $\gamma_b(F_k)=4k$.
    
\end{restatable}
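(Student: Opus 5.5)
The plan is to adapt the Hartnell--Mynhardt graph $H_k$ so that it becomes chordal while keeping the $3k$ versus $4k$ behaviour. In $H_k$ each block is a $K_{2,4}$; its two ``hub'' vertices $w_i,y_i$ together with two of the four others form induced $C_4$'s. So I will replace each $K_{2,4}$ by a small chordal gadget $B_i$ with a left port $x_i$ and a right port $z_i$, and chain the gadgets with the edges $z_ix_{i+1}$ for $1\le i\le 3k-1$. The gadget should have three properties. First, $d(x_i,z_i)=2$, with all $x_i$--$z_i$ geodesics passing through a set of ``middle'' vertices. Second, no single vertex of $B_i$ is within distance $1$ of the whole gadget, so that a radius-$1$ tower cannot cover it and one extra unit must be paid per gadget somewhere. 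Third, there is a distinguished vertex $w_i$ such that the set $\{w_i\}$ behaves like the multipacking $\{w_i\}$ of $H_k$. A natural candidate is a triangulated version of $K_{2,4}$: keep $w_i,y_i$, make the two middle vertices $u_i,v_i$ adjacent, and attach pendant triangles or pendant vertices to $u_i,v_i$ to destroy domination by a single vertex. The pendants should be arranged so that the resulting graph is chordal; for instance each block can be a $2$-tree.

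For the lower bound $\MP(F_k)\ge 3k$, I will take $M=\{w_i:1\le i\le 3k\}$ and check the ball condition $|N_r[v]\cap M|\le r$. Consecutive $w_i$'s are at distance $3$ (the path $w_i,z_i,x_{i+1},w_{i+1}$), so a ball of radius $r$ centered anywhere meets at most the $w_j$ in an interval of length about $2r/3+1$ of indices. For $r\ge 1$ this is at most $r$, exactly as in the proof for $H_k$. The case $r=1$ needs that no vertex is adjacent to two $w$'s, which holds since they are at distance $3$.

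For the upper bound $\MP(F_k)\le 3k$, I will exhibit a fractional or explicit covering argument. Partition the vertex set into $3k$ blocks, each of radius $1$ from $w_i$ when the pendants are placed suitably; if not, I will group the blocks into triples. A multipacking contains at most one vertex per closed neighbourhood $N_1[c]$, so if each block is contained in some $N_1[c_i]$ this gives $|M|\le 3k$. Hence I must design the gadget so that each block lies within a ball of radius $1$ in the whole graph, but is not dominated by a single vertex \emph{inside} any cheap broadcast that also has to reach the neighbouring blocks. This tension is delicate, and I will more likely argue via $\MP(F_k)\le\gamma_b$ on triples.

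For $\gamma_b(F_k)=4k$, the upper bound comes from grouping the gadgets into $k$ consecutive triples and covering each triple by one tower of radius $4$ placed at the middle gadget's centre. This requires the radius of three chained gadgets to be $4$, which forces the gadget diameter and the port distances to be tuned. The lower bound $\gamma_b\ge 4k$ is the main obstacle. As in Hartnell--Mynhardt, I will use an efficient optimal broadcast (Dunbar et al.) and show that any tower of strength $s$ covers vertices of at most about $3s/4$ gadgets' worth of ``weight''. In other words, I will assign each gadget weight $4/3$ and prove that every ball $N_s[v]$ fully covers only gadgets whose total weight is at most $s$. The extra pendant structure in each gadget must be positioned so that a ball of radius $s$ misses some pendant of the extreme gadgets it reaches. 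I expect the careful case analysis of towers of small strength ($s=1,2,3$), and of how partial coverage of boundary gadgets is charged, to be the hardest part. I would first try verifying the weight inequality with a computer-style enumeration for $s\le 4$, and then handle large $s$ by an additive argument along the chain.
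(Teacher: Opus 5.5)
There is a genuine gap. You never fix the gadget, so none of the three counts is actually proved, and the route you sketch for $\MP(F_k)\le 3k$ works against the gap you need. If every block lies in some ball $N_1[c_i]$, then towers of strength $1$ at the $c_i$ dominate $F_k$ at cost $3k$, so $\gamma_b(F_k)\le 3k$. Bounding $\MP$ by $\gamma_b$ on triples can only give $4k$.

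The missing idea is to use diameter rather than radius. Two vertices of a multipacking are at distance at least $3$: if $d(u,v)\le 2$, both lie in some $N_1[x]$. So a multipacking meets a block of diameter at most $2$ at most once. Such a block can still have radius $2$, so that no vertex is within distance $1$ of the whole block. The 3-sun is a chordal graph with exactly these properties, and it is the paper's gadget. There $A_i$ is the $6$-cycle $g_ia_ib_ic_id_ie_ig_i$ plus the triangle $a_ic_ie_i$, and consecutive blocks are joined by the edges $b_ig_{i+1}$.

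Your triangulated $K_{2,4}$ cannot play this role as a six-vertex gadget. Every chordal supergraph of $K_{2,4}$ on its six vertices contains $w_iy_i$ or turns $\{x_i,u_i,v_i,z_i\}$ into a clique, since each pair $p,q$ from that side spans a $4$-cycle $w_ipy_iq$ whose only possible chords are $w_iy_i$ and $pq$. Either way some vertex dominates the block. The pendants you would attach at $u_i,v_i$ to destroy that typically create pairs at distance $3$ inside the block, which breaks the one-per-block bound.

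With the 3-sun, the path $g_1a_1b_1g_2a_2b_2\cdots g_{3k}a_{3k}b_{3k}$ is isometric of length $9k-1$, so every third vertex gives $\MP(F_k)\ge 3k$. Towers of strength $4$ at the $a_i$ with $i\equiv 2\pmod 3$ dominate $F_k$, giving $\gamma_b(F_k)\le 4k$.

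For $\gamma_b(F_k)\ge 4k$, giving each gadget weight $\tfrac43$ and bounding the weight of the gadgets a ball covers \emph{fully} does not bound the cost of a broadcast. A gadget can be covered piecewise by several towers, none of which covers it fully. The weights must sit on vertices, and then no efficient-broadcast case analysis is needed, only LP weak duality. Suppose $w\ge 0$ satisfies $w(N_r[v])\le r$ for all $v$ and all $r\ge 1$, i.e.\ $w$ is a fractional multipacking. Then every dominating broadcast $f$ has $\sigma(f)\ge\sum_{f(v)>0}w(N_{f(v)}[v])\ge w(V)$, because the balls $N_{f(v)}[v]$ cover $V$.

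The paper uses $w(g_i)=w(b_i)=\tfrac13$ and $w(d_i)=\tfrac23$. That is $\tfrac13$ on each port and $\tfrac23$ on the tip of the sun opposite both ports, for $\tfrac43$ per block and $4k$ in total. Verifying $w(N_r[v])\le r$ is indeed a finite check plus an additive step along the chain, as you anticipated, but the step needs care. Steps of $4$ fail: around an interior $a_i$ (with $k\ge 2$), the annulus $N_7[a_i]\setminus N_3[a_i]$ has weight $\tfrac{14}{3}$. Steps of $3$ work. After checking $r\le 4$ directly, every annulus $N_{s+3}[v]\setminus N_s[v]$ with $s\ge 2$ has weight at most $\tfrac83$, because from distance $3$ on, each side of the chain contributes at most $\tfrac43$ to any three consecutive distances.
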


\begin{restatable}{corollary}{chordalgammabGBYmpGimproved} \label{cor:chordal_gammabG/mpG} For connected chordal graphs $G$, \text{ }
$$\displaystyle\frac{4}{3}\leq\lim_{\MP(G)\to \infty}\sup\Bigg\{\frac{\gamma_{b}(G)}{\MP(G)}\Bigg\}\leq \frac{3}{2}.$$
\end{restatable}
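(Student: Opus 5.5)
The corollary follows by combining two results already stated for connected chordal graphs: Proposition~\ref{prop:gammabGleq3/2mpG} gives the upper bound, and Theorem~\ref{thm:chordal_graphs} gives the lower bound. No new structural argument is needed. The only point requiring care is the order of quantifiers in the $\limsup$: we must show the supremum of the ratio over graphs with large multipacking number eventually lies between $4/3$ and $3/2$ in the limit.

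\emph{Upper bound.} Let $G$ be any connected chordal graph. Proposition~\ref{prop:gammabGleq3/2mpG} gives
\[
\gamma_b(G)\leq \Big\lceil \tfrac{3}{2}\MP(G)\Big\rceil \leq \tfrac{3}{2}\MP(G)+\tfrac{1}{2},
\]
and hence
\[
\frac{\gamma_b(G)}{\MP(G)}\leq \frac{3}{2}+\frac{1}{2\MP(G)}.
\]
So for every $m$, the supremum of $\gamma_b(G)/\MP(G)$ over connected chordal graphs with $\MP(G)\geq m$ is at most $\frac{3}{2}+\frac{1}{2m}$. Letting $m\to\infty$ gives $\limsup\leq 3/2$.

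\emph{Lower bound.} Theorem~\ref{thm:chordal_graphs} provides, for each $k\geq 1$, a connected chordal graph $F_k$ with $\MP(F_k)=3k$ and $\gamma_b(F_k)=4k$, so $\gamma_b(F_k)/\MP(F_k)=4/3$. Since $\MP(F_k)=3k\to\infty$, for every threshold $m$ some $F_k$ has $\MP(F_k)\geq m$. Therefore each supremum over graphs with $\MP(G)\geq m$ is at least $4/3$, and so is the limit.

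Together the two bounds give the claimed range. There is no genuine obstacle here: all the difficulty lies in the cited results, namely constructing $F_k$ in Theorem~\ref{thm:chordal_graphs} and proving the $3/2$ bound in Proposition~\ref{prop:gammabGleq3/2mpG}. The single thing to check is that the ceiling in the proposition costs only an additive $1/2$, which vanishes after dividing by $\MP(G)$ in the limit.
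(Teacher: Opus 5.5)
Your proposal is correct and is essentially the paper's argument: the upper bound comes from Proposition~\ref{prop:gammabGleq3/2mpG}, and the lower bound comes from the family $F_k$ of Theorem~\ref{thm:chordal_graphs}, which has ratio exactly $4/3$ and $\MP(F_k)=3k\to\infty$. Where the paper simply cites the two results, you also spell out the limit explicitly, taking suprema over $\MP(G)\ge m$ and noting that the ceiling costs only an additive $\frac{1}{2}$.
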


We establish a relation between broadcast domination and multipacking numbers of $\delta$-hyperbolic graphs using the same method  that
works for chordal graphs.  We state that in the following  proposition:

\begin{restatable}{proposition}{deltaMultipackingBroadcastRelation}\label{prop:delta_multipacking_broadcast_relation}
     If $G$ is a $\delta$-hyperbolic graph, then $\gamma_{b}(G)\leq \big\lfloor{\frac{3}{2} \MP(G)+2\delta\big\rfloor} $.

\end{restatable}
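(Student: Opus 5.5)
The plan is to chain three inequalities, mirroring the argument for chordal graphs in Proposition~\ref{prop:gammabGleq3/2mpG}. By Observation~\ref{obs:erwin_ineq}, $\gamma_b(G)\le \rad(G)$. By Proposition~\ref{prop:hartnell_ineq}, $\MP(G)\ge \lceil(\diam(G)+1)/3\rceil$, which gives $\diam(G)\le 3\MP(G)-1$. The missing link is a radius--diameter inequality for $\delta$-hyperbolic graphs: the aim is to show $2\rad(G)\le \diam(G)+4\delta+1$, or the variant with an additive constant that makes the arithmetic close. Combining the three,
$$\gamma_b(G)\le \rad(G)\le \frac{\diam(G)+4\delta+1}{2}\le \frac{3\MP(G)-1+4\delta+1}{2}=\frac32\MP(G)+2\delta .$$
Since $\gamma_b(G)$ is an integer, we may take the floor, which yields the stated bound.

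The main step, and the expected obstacle, is the inequality $\rad(G)\le \lfloor(\diam(G)+4\delta+1)/2\rfloor$ under the four-point definition used in the paper. A bound of this form is classical; Chepoi, Dragan et al.\ show that for $\delta$-hyperbolic graphs the center lies close to the middle of a diametral pair, with $\rad(G)\le \lceil \diam(G)/2\rceil + 2\delta$ or similar. If that result can be cited, the remaining arithmetic is routine. One must check that the additive constant is at most $2\delta+\tfrac12$ in the $\rad$ bound, since we only have $\diam\le 3\MP-1$ to spare. If the cited form is $\rad\le\lceil\diam/2\rceil+2\delta$, then for odd $\diam$ it gives $\rad\le(\diam+1)/2+2\delta$, which is exactly what is needed.

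If I must prove the radius bound directly, I would proceed as follows. Take a diametral pair $u,v$ with $d(u,v)=D$ and a shortest $u$--$v$ path. Let $c$ be a vertex on it with $d(u,c)=\lfloor D/2\rfloor$ and $d(c,v)=\lceil D/2\rceil$. For an arbitrary vertex $x$, apply the four-point condition to $u,v,c,x$. The sum $d(u,v)+d(c,x)=D+d(c,x)$ is compared with $d(u,c)+d(v,x)\le \lfloor D/2\rfloor+D$ and $d(v,c)+d(u,x)\le\lceil D/2\rceil+D$. If $D+d(c,x)$ is the largest sum, it exceeds the second largest by at most $2\delta$. Hence $d(c,x)\le \lceil D/2\rceil+2\delta$. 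If it is not the largest, then directly $d(c,x)\le \lceil D/2\rceil$.

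Thus $e(c)\le\lceil D/2\rceil+2\delta$, and the inequality chain above then finishes the proof. Care is needed only with parity when $D$ is even, where the bound is even stronger. Also, for $\delta$ a half-integer, the floor must be applied after summing rather than termwise.
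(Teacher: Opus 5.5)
Your proposal is correct and follows the paper's argument: combine $\gamma_b(G)\le\rad(G)$, $\diam(G)\le 3\MP(G)-1$ from the every-third-vertex multipacking, and the Chepoi et al.\ bound $\diam(G)\ge 2\rad(G)-4\delta-1$, then take the floor by integrality of $\gamma_b(G)$. The only difference is that the paper simply cites the radius--diameter inequality, whereas you also derive it. Your derivation applies the four-point condition at a midpoint of a diametral geodesic, is valid, and gives the slightly sharper $\rad(G)\le\lceil\diam(G)/2\rceil+2\delta$.
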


We used the same method that
works for chordal graphs to  provide an approximation algorithm to find a large multipacking of $\delta$-hyperbolic graphs.

\begin{restatable}{proposition}{approxdeltaMultipacking}\label{prop:approx_delta_multipacking}
    If $G$ is a $\delta$-hyperbolic graph, there is a polynomial-time algorithm to construct a multipacking of $G$ of size at least $  \big\lceil{\frac{2\MP(G)-4\delta}{3} \big\rceil} $.
\end{restatable}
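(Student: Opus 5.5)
We follow the chordal-graph approach described after Proposition~\ref{prop:3/2mpGapprox}. The plan is to find a long isometric (diametral) path, take every third vertex on it, and compare its length with $\MP(G)$ using the radius–diameter relation for $\delta$-hyperbolic graphs.

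The key metric ingredient is the known fact for $\delta$-hyperbolic graphs (Chepoi et al.) that $\diam(G)\ge 2\rad(G)-4\delta-1$. If the literature only gives this with a slightly different additive constant, we would rederive it. Take a central vertex and a pair of vertices realizing the diameter, and apply the four-point condition to these together with a vertex far from the midpoint. This is the step I expect to be the main obstacle, since the constant must come out to exactly $4\delta+1$ for the final bound to match.

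The algorithm itself is simple.

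1. Compute all-pairs distances in polynomial time.
2. Find vertices $u,v$ with $d(u,v)=\diam(G)$ and a shortest $u,v$-path $P=u_0u_1\cdots u_d$, where $d=\diam(G)$.
3. Output $M=\{u_{3i}:0\le i\le \lfloor d/3\rfloor\}$.

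To see that $M$ is a multipacking, fix a vertex $x$ and $r\ge 1$. Any two vertices of $M\cap N_r[x]$ are at distance at most $2r$ in $G$. Since $P$ is isometric, their indices on $P$ also differ by at most $2r$. Consecutive vertices of $M$ have indices differing by $3$, so $|M\cap N_r[x]|\le \lfloor 2r/3\rfloor+1\le r$. Hence $M$ is a multipacking, and $|M|=\lfloor d/3\rfloor+1=\lceil (d+1)/3\rceil$. This is exactly the bound of Proposition~\ref{prop:hartnell_ineq}, now constructed explicitly.

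For the size bound, combine three inequalities. We have $\MP(G)\le\gamma_b(G)\le\rad(G)$ by Proposition~\ref{prop:hartnell_ineq} and Observation~\ref{obs:erwin_ineq}. The hyperbolicity bound above gives $d\ge 2\rad(G)-4\delta-1\ge 2\MP(G)-4\delta-1$. Therefore
\[
|M|=\Big\lceil\frac{d+1}{3}\Big\rceil\ge\Big\lceil\frac{2\MP(G)-4\delta}{3}\Big\rceil,
\]
which is the claimed bound.

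This is essentially the same computation as in Proposition~\ref{prop:delta_multipacking_broadcast_relation}: there, $\gamma_b(G)\le\rad(G)\le (d+1+4\delta)/2$ together with $d\le 3\MP(G)-1$ yields $\gamma_b(G)\le \frac32\MP(G)+2\delta$. So once the radius–diameter inequality is secured, both statements follow immediately.
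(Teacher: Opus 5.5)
Your proposal is correct and matches the paper's proof: take every third vertex of a diametral path, then chain $\MP(G)\le\gamma_b(G)\le\rad(G)$ with the Chepoi et al.\ inequality $\diam(G)\ge 2\rad(G)-4\delta-1$ to get $\lceil (d+1)/3\rceil\ge\lceil (2\MP(G)-4\delta)/3\rceil$. The paper cites that inequality with exactly this constant, so the rederivation you flagged as the main obstacle is not needed. It also cites Hartnell--Mynhardt for the multipacking property of the every-third-vertex set, which you verify directly and correctly.
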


A graph is \emph{$k$-chordal} (or, bounded chordality graph with bound $k$) if it does not contain any induced $n$-cycle for $n > k$. The class of $3$-chordal graphs is the class of usual chordal graphs. Brinkmann, Koolen and Moulton \cite{brinkmann2001hyperbolicity} proved that every chordal graph is $1$-hyperbolic. In 2011, Wu and Zhang \cite{wu2011hyperbolicity} established that a $k$-chordal graph is always a $\frac{\lfloor\frac{k}{2}\rfloor}{2}$-hyperbolic graph, for every $k\geq 4$. A graph $G$ has \emph{treelength} at most $d$ if it admits a tree-decomposition where the maximum distance between any two vertices in the same bag have distance at most $d$ in $G$~\cite{dourisboure2007tree}. Clearly, a graph of diameter $d$ has treelength at most $d$. Graphs of treelength at most $d$ have hyperbolicity at most $d$~\cite{chepoi2008diameters}. Moreover, $k$-chordal graphs have treelength at most $2k$~\cite{dourisboure2007tree}. 
Additionally, it is known that the hyperbolicity is $1$ for the graph classes: asteroidal-triple free, co-comparability and permutation graphs~\cite{wu2011hyperbolicity}. 
These results about hyperbolicity and Proposition \ref{prop:delta_multipacking_broadcast_relation} directly establish the following.

\begin{corollary}\label{cor:k-chordal} Let $G$ be a graph.

\noindent(i) If $G$ is an asteroidal-triple free, co-comparability or permutation graph, then $\gamma_{b}(G)\leq \big\lfloor{\frac{3}{2} \MP(G)+2}\big\rfloor $.

\noindent(ii) If $G$ is a $k$-chordal graph where $k\geq 4$, then $\gamma_{b}(G)\leq \big\lfloor{\frac{3}{2} \MP(G)+\lfloor\frac{k}{2}\rfloor}\big\rfloor $.

\noindent(iii) If $G$ has treelength $d$, then $\gamma_{b}(G)\leq \big\lfloor{\frac{3}{2} \MP(G)+2d}\big\rfloor $.
\end{corollary}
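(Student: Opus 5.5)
This corollary follows by substituting, into Proposition~\ref{prop:delta_multipacking_broadcast_relation}, a value of $\delta$ for which each class is known to be $\delta$-hyperbolic. The plan has one preliminary step: the bound $\gamma_b(G)\leq \lfloor \frac{3}{2}\MP(G)+2\delta\rfloor$ is monotone non-decreasing in $\delta$. Since a $\delta$-hyperbolic graph is also $\delta'$-hyperbolic for every $\delta'\geq\delta$ (the defining four-point condition only weakens), it suffices in each case to exhibit \emph{some} admissible upper bound on the hyperbolicity.

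For (i), we use the fact quoted above from~\cite{wu2011hyperbolicity} that asteroidal-triple free, co-comparability and permutation graphs have hyperbolicity at most $1$. Plugging $\delta=1$ into Proposition~\ref{prop:delta_multipacking_broadcast_relation} gives $\gamma_b(G)\leq\lfloor\frac{3}{2}\MP(G)+2\rfloor$ directly.

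For (ii), Wu and Zhang~\cite{wu2011hyperbolicity} give that a $k$-chordal graph with $k\geq 4$ is $\frac{\lfloor k/2\rfloor}{2}$-hyperbolic. Then $2\delta=\lfloor k/2\rfloor$, and substituting yields $\gamma_b(G)\leq\lfloor\frac{3}{2}\MP(G)+\lfloor\frac{k}{2}\rfloor\rfloor$.

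For (iii), graphs of treelength at most $d$ have hyperbolicity at most $d$~\cite{chepoi2008diameters}. Substituting $\delta=d$ gives $\gamma_b(G)\leq\lfloor\frac{3}{2}\MP(G)+2d\rfloor$.

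The only point needing care is the use of the cited hyperbolicity constants in the same normalisation of $\delta$ as the paper's definition (two larger sums differing by at most $2\delta$). Different sources use conventions differing by a factor of $2$. I would verify that the cited constants, for example chordal graphs being $1$-hyperbolic, match this four-point convention. If a source uses the other convention, I would rescale before substituting. Beyond this bookkeeping there is no real obstacle; the substance lies entirely in Proposition~\ref{prop:delta_multipacking_broadcast_relation}.
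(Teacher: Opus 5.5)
Your proposal is correct and matches the paper's argument: the corollary is obtained by substituting the cited hyperbolicity bounds ($\delta=1$, $\delta=\lfloor k/2\rfloor/2$, and $\delta=d$) into Proposition~\ref{prop:delta_multipacking_broadcast_relation}. Your remarks that the bound is monotone in $\delta$ and that the cited constants must use the same four-point normalisation are sensible bookkeeping, which the paper leaves implicit.
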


\subsection{Chapter 4: Multipacking on an unbounded hyperbolic graph-class: cactus graphs}

 A \textit{cactus} is a connected graph in which any two  cycles have at most one vertex in common. Equivalently, it is a connected graph in which every edge belongs to at most one  cycle. Cactus is a superclass of tree and a subclass of outerplanar graph. In Chapter \ref{chapter:cactus}, we study the multipacking problem on the cactus graph. We  establish a relation between multipacking and dominating broadcast on the same graph class. 
We start by bounding the multipacking number of a cactus:

\begin{restatable}{theorem}{multipackingbroadcastrelationcactus}
\label{thm:multipacking_broadcast_relation}
    Let $G$ be a cactus, then $\gamma_b(G)\leq \frac{3}{2}\MP(G)+\frac{11}{2}$.
\end{restatable}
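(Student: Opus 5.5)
The plan is to compare both parameters with the radius and the diameter. On one side we have $\gamma_b(G)\le \rad(G)$ from Observation~\ref{obs:erwin_ineq}. On the other side, a long isometric path yields a large multipacking: taking every third vertex of a diametral path gives $\MP(G)\ge \lceil (\diam(G)+1)/3\rceil$, as in Proposition~\ref{prop:hartnell_ineq}. These two alone only give a factor of $3$. The chordal argument (Proposition~\ref{prop:gammabGleq3/2mpG}) and the $\delta$-hyperbolic argument (Proposition~\ref{prop:delta_multipacking_broadcast_relation}) get $3/2$ because there $\diam$ is close to $2\rad$. Cacti are not hyperbolic: long cycles break the relation $\diam\approx 2\rad$. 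So the first step is to avoid dealing with radius directly and work with a different structure.

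I would proceed as follows.

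\begin{enumerate}
\item Fix a maximum multipacking $M$ and a suitable path $P$ in the cactus. The natural choice is a diametral path, or a "central" path through the block-cut tree.
\item Build a dominating broadcast by placing towers at vertices of $P$ and on long cycles met by $P$.
\item Bound the cost of that broadcast by roughly $\frac{3}{2}$ of a multipacking we construct explicitly along $P$ and along those cycles.
\end{enumerate}

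The key local fact is that on a cycle $C_n$ we have $\gamma_b(C_n)=\lceil n/3\rceil$ and $\MP(C_n)=\lfloor n/3\rfloor$, so cycles are harmless on their own. The loss comes from how a geodesic passes through a cycle. Between the two vertices where a path enters and leaves a cycle, it uses the shorter arc. The opposite arc may carry branches that hang far away, yet it is never part of any geodesic.

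Concretely, I would use the block-cut tree. Choose a "center" block or cut vertex $c$, and let $r$ be its eccentricity. Covering everything by one tower at $c$ (or by two towers at the two ends of a central cycle) gives $\gamma_b(G)\le r+O(1)$. For the multipacking, I would take two branches from $c$ that both reach distance about $r$ and lie in different components of $G$ minus the central block. Their union is close to an isometric path, except where it passes around the central cycle, where the detour may cost up to half the cycle. Putting every third vertex on each branch gives roughly $2r/3$ vertices, which would already give the ratio $3/2$.

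The difficulty is that the two farthest branches might leave from the same cycle at nearby vertices, or a branch might pass through several cycles. The combined walk is then not a geodesic. Selecting every third vertex can then violate the ball condition $|N_s[v]\cap M|\le s$ when $v$ sits across a cycle from both pieces.

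I expect this to be the main obstacle: proving that a "tree-like" union of two geodesic branches through a cycle still supports a multipacking of size $\frac{2}{3}r-O(1)$. My first attempt would be to note that distances between the two branches are controlled through the cycle $C$ they share. If $x,y$ are the attachment points on $C$, then $d(u,w)\ge d(u,x)+d_C(x,y)+d(y,w)$ for $u$, $w$ on the two branches. The only defect is that $d_C(x,y)$ may be much shorter than the path through $c$. To absorb this, I would re-center: choose $c$ on $C$ to be the midpoint of the shorter arc between the two deepest attachment points. I would then argue that the broadcast still costs $r+O(1)$, since one tower near the midpoint covers both arcs. Alternatively, use two towers whose radii sum to about the path length plus a constant.

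If the re-centering works, then, once the additive constant is tracked, the counting of every third vertex gives
\[
\MP(G)\ \ge\ \tfrac{2}{3}\bigl(\gamma_b(G)-\tfrac{11}{2}\bigr).
\]
This rearranges to the claimed $\gamma_b(G)\le \frac{3}{2}\MP(G)+\frac{11}{2}$. The constant $11/2$ should come from a few boundary losses: rounding at the ends of the two branches, the half-cycle slack at the central cycle, and possibly one extra tower. I would verify it last by checking small cases.
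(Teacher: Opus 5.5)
\paragraph{Verdict: genuine gap.} Your frame is right. Since $\gamma_b(G)\le\rad(G)$, it suffices to prove $\MP(G)\ge\frac{2}{3}\rad(G)-\frac{11}{3}$ from two near-radial geodesics at a central vertex $c$; Lemma \ref{lem:disjoint_radial_path} supplies them. You also locate the obstacle correctly: the two geodesics may run around a common cycle $C_\gamma$ and leave it at vertices $c_0,c_m$ joined by a short arc of length $x$, while the arcs from $c$ to $c_0,c_m$ have lengths $y,z$ with $x<y+z$.

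\paragraph{Why your remedy fails.} The proposal never resolves this case, and the fix you suggest does not work.
\begin{itemize}
\item Every third vertex along the union of the two geodesics is not a multipacking once the branches and $y,z$ exceed about $\frac{3}{2}x$. The ball of radius $\frac{x}{2}+t$ about the midpoint $c_g$ of the short arc contains two runs of about $2t$ union vertices, one around $c_0$ and one around $c_m$. That gives about $\frac{4t}{3}$ selected vertices against the allowance $\frac{x}{2}+t$.
\item Re-centering at $c_g$ does not help. In exactly this situation the eccentricity of $c_g$ is set by a third branch hanging off the long arc. So $e(c_g)$ is not $r+O(1)$, and the two branches from $c_g$ do not reach distance about $r$.
\end{itemize}

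\paragraph{A concrete example.} Take a cycle of length $5k$ with $k$ even. Choose vertices $e_1,e_2$ at distance $k$, and $c$ at distance $2k$ from both. Attach pendant paths of length $r-2k$ at $e_1$ and $e_2$, and of length $r-k-1$ at $c$, with $r\ge 4k$. Then $\rad(G)=r$ and $c$ is central, and:
\begin{itemize}
\item the union of the geodesics from $c$ to the tips at $e_1,e_2$ has $8k+1$ consecutive vertices inside the ball of radius $\frac{5k}{2}$ about $c_g$, about $\frac{8k}{3}>\frac{5k}{2}$ of them selected;
\item $e(c_g)=r+\frac{3k}{2}-1$, and the tips at $e_1,e_2$ are joined by a geodesic of length only $2r-3k$;
\item even $\diam(G)=2r-k-1$.
\end{itemize}
So the ``detour of up to half the cycle'' costs order $\gamma$, not an additive constant.

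\paragraph{The missing idea: use the cycle itself.} The paper keeps $c,P,Q$ fixed. When $l(F_1)<l(F_2)$, it locates a vertex $u$ at distance $r$ from $c_g$ and splits into cases.
\begin{itemize}
\item If $u$ lies on $P'$ or $Q'$, two branches suffice, each extended into the cycle by at most $\lfloor\gamma/2\rfloor-1$ in the same rotational direction (Lemma \ref{lem:gamma/2}).
\item If $u$ lies on $C_\gamma$, then $\gamma\ge 2r$ and the cycle alone suffices.
\item If $u$ lies outside $H$ (your obstruction, and the example above), the paper takes every third vertex around the \emph{whole} cycle. It adds one pendant geodesic, and the part of the third pendant geodesic $R$ between depths $\delta_1=\lfloor\gamma/2\rfloor-d(c_0,c_t)$ and $\delta=r-d(c_g,c_t)$ (Lemmas \ref{lem:multipacking_delta=0} and \ref{lem:delta1}). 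In the example this gives about $\frac{1}{3}\bigl(5k+(r-2k)+(r-3k)\bigr)=\frac{2r}{3}$. The constant $\frac{11}{3}$ comes from this case.
\end{itemize}

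\paragraph{A second omission.} Once the selected set no longer lies on one isometric path, you also need the cactus-specific Lemma \ref{lem:multipacking_subgraph}. It controls balls centred at vertices outside the chosen subgraph, using the fact that each such vertex reaches $H$ through at most two attachment vertices on a single branch. Your sketch does not address this.
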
 

Note that, for $\delta$-hyperbolic graphs,  $\gamma_{b}(G)\leq \big\lfloor{\frac{3}{2} \MP(G)+2\delta\big\rfloor} $~\cite{das2023relationarxiv}. Chordal graphs\footnote{A \textit{chordal graph} is an undirected simple graph in which each cycle on four or more vertices has a chord, which is an edge that is not part of the cycle but connects two vertices of the cycle.} are $1$-hyperbolic \cite{brinkmann2001hyperbolicity}. Earlier we mentioned that  for any connected chordal graph $G$, $\gamma_{b}(G)\leq \big\lceil{\frac{3}{2} \MP(G)\big\rceil}$~\cite{das2023relation}. The cactus graphs are far from  being chordal or  hyperbolic since cactus graphs can have unbounded hyperbolicity and that shows the importance of Theorem \ref{thm:multipacking_broadcast_relation}.  The proof of Theorem \ref{thm:multipacking_broadcast_relation} is based on finding some paths and a cycle (if needed) in the graph whose total size is almost double the radius. The set of every third vertex on these paths and the cycle yields a multipacking under some conditions. 




There is a polynomial time algorithm to find a maximum 2-packing set in a cactus \cite{flores2018algorithm}. A multipacking is a 2-packing, but the reverse is not true. Here, we provide an approximation algorithm to find multipacking on cactus.  Our proof technique is based on the basic structure of the cactus graphs, and the technique is completely different from the existing polynomial time solution for the $2$-packing problem on the cactus.

\begin{restatable}{theorem}{MultipackingAlgorithm}\label{thm:multipacking_algorithm}
 If $G$ is a cactus graph, there is an $O(n)$-time algorithm to construct a multipacking of $G$ of size at least $ \frac{2}{3}\MP(G)-\frac{11}{3} $  where $n=|V(G)|$.
\end{restatable}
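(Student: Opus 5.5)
The algorithm will be a constructive version of the proof of Theorem~\ref{thm:multipacking_broadcast_relation}. That proof produces a multipacking $M$ in $G$ with $|M| \geq \frac{2}{3}\gamma_b(G)-\frac{11}{3}$. Since $\MP(G)\leq \gamma_b(G)$ by Proposition~\ref{prop:hartnell_ineq}, this already gives $|M| \geq \frac{2}{3}\MP(G)-\frac{11}{3}$. Equivalently, rearranging $\gamma_b(G)\le \frac32 \MP(G)+\frac{11}{2}$ shows that the multipacking built in that proof has size roughly $\frac{2}{3}\rad(G)$ up to the additive constant. So everything reduces to showing that each step of that construction runs in $O(n)$ time.

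The construction has three steps.

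\begin{enumerate}
\item \textbf{Compute distances in linear time.} In a cactus we have $m\le \frac{3}{2}(n-1)$, so a BFS takes $O(n)$ time. We use a constant number of BFS runs, as in the double-sweep technique: BFS from an arbitrary vertex, then from a farthest vertex found. On a cactus, exact eccentricities, a center, and the radius can be computed in linear time. One way is to work on the block-cut tree, where each cycle is treated as a node and distances within a cycle are computed by arithmetic on cycle positions. Another way is to use the known linear-time center algorithms for cacti.
\item \textbf{Extract the paths and the cycle.} Here we follow the structural argument behind Theorem~\ref{thm:multipacking_broadcast_relation}. From a central vertex $c$ we obtain either two (almost) isometric paths leaving $c$ in different branches, or two such paths together with a cycle through which they pass. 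Their total length is at least about $2\rad(G)-O(1)$. Each path is read off from BFS parent pointers, and the cycle is read off from the block decomposition, which is computable in $O(n)$ by a DFS.
\item \textbf{Select every third vertex.} We take every third vertex along the concatenated paths, plus along the cycle if it is used. Any deletions of endpoints or junction vertices are exactly those prescribed by the proof, which guarantees the multipacking condition with at most the stated constant loss. This step is a single linear scan.
\end{enumerate}

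For correctness, we rely on the proof of Theorem~\ref{thm:multipacking_broadcast_relation}. The key fact it gives is that every third vertex on an isometric path is a multipacking: any ball $N_r[v]$ meets an isometric path in a subpath of length at most $2r$, which contains at most $\lceil (2r+1)/3\rceil\le r$ chosen vertices. Where a cycle or a junction creates non-isometric behaviour, the proof discards a constant number of vertices. Combining $|M|\ge \frac{2}{3}\rad(G)-c$ with $\gamma_b(G)\le\rad(G)$ from Observation~\ref{obs:erwin_ineq} gives the stated bound. The constant $\frac{11}{3}$ comes out exactly as $\frac{2}{3}\cdot\frac{11}{2}$.

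\textbf{Main obstacle.} The hardest step is making the case analysis of Theorem~\ref{thm:multipacking_broadcast_relation} algorithmic in linear time. We must decide which case applies: whether the two long branches from the center share a cycle, and how to place the selected vertices around that cycle. Then we must locate the relevant cycle without quadratic distance computations. My first approach is to root the block-cut tree at the block containing $c$. For each block I would compute the deepest descendant distance by dynamic programming, with cycles handled by taking for each vertex on the cycle the minimum of its two arc distances. This identifies the two deepest branches and tells us whether they meet in a common cycle, all in $O(n)$ total time.
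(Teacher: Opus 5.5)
Your proposal is correct and follows the paper's route. The paper likewise runs the constructive proof of Lemma~\ref{lem:multipacking_radius_relation} in $O(n)$ time: it takes the central vertex $c$ and the disjoint isometric paths $P,Q$, and when needed the cycle $C_\gamma$ and the third path from $c_g$. This gives a multipacking of size at least $\frac{2}{3}\rad(G)-\frac{11}{3}$, and the paper then concludes via $\MP(G)\le\gamma_b(G)\le\rad(G)$. Your linear-time implementation details (cactus center computation, block-cut tree, BFS) only flesh out what the paper asserts. One small correction to your summary: the cycle case is handled by the arc and offset choices of Choices~1--3 (the offset $\delta_1$ need not be constant), not by discarding a constant number of vertices.
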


\begin{figure}[ht]
    \centering
    \includegraphics[width=\textwidth]{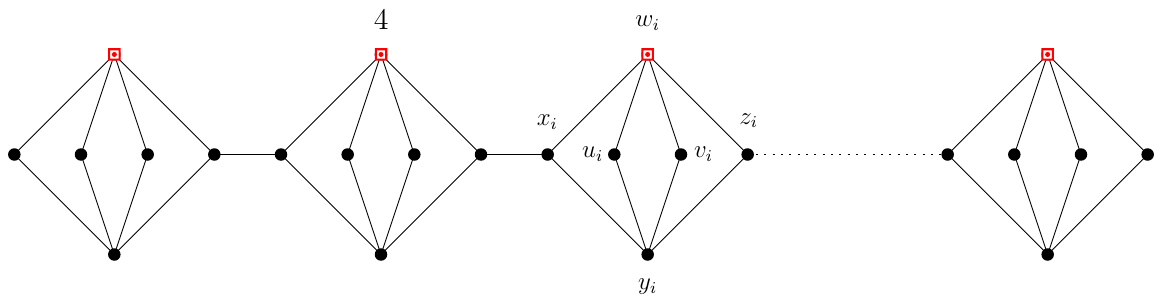}
    \caption{The $H_k$ graph with $\gamma_b(H_k)=4k$ and $\MP(H_k)=3k$. The set $\{w_i:1\leq i\leq 3k\}$ is a maximum multipacking of $H_k$. This family of graphs was constructed by Hartnell and Mynhardt~\cite{hartnell2014difference}.} 
    \label{fig:Hk}
\end{figure}

Hartnell and Mynhardt~\cite{hartnell2014difference} constructed a family of connected graphs $H_k$ such that $\MP(H_k)=3k$ and $\gamma_b(H_k)=4k$ (Fig. \ref{fig:Hk}). We provide a simpler family of connected graphs $G_k$ such that $\MP(G_k)=3k$ and $\gamma_b(G_k)=4k$ (Fig. \ref{fig:pentagon}). Not only that, the family of graphs $G_k$ covers many graph classes, including   cactus (a subclass of outerplanar graphs), AT-free\footnote{An independent set of three vertices such that each pair is joined by a path that avoids the neighborhood of the third is called an \textit{asteroidal triple}. A graph is asteroidal triple-free or AT-free if it contains
no asteroidal triples.} (a subclass of $C_n$-free graphs, for $n\geq 6$) graphs, etc. We state that in the following theorem:

\begin{restatable}{theorem}{multipackingbroadcastgapecactus}\label{thm:3k4k}
 For each positive integer $k$, there is a cactus graph (and AT-free graph) $G_k$ such that $\MP(G_k)=3k$ and $\gamma_b(G_k)=4k$.
\end{restatable}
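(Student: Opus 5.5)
The plan is to build $G_k$ as a chain of $3k$ copies of a small gadget whose two natural ends sit at distance $4$, and then to pin both parameters down with matching lower and upper bounds.

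\textbf{The gadget and the graph.} The gadget should have multipacking number $1$ but force broadcast cost slightly above $1$ on average. The natural choice, and the one suggested by Figure~\ref{fig:pentagon}, is a pentagon ($C_5$). In $C_5$ we have $\MP(C_5)=1$ and $\gamma_b(C_5)=2$.

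To form $G_k$, take pentagons $C^1,\dots,C^{3k}$. In each $C^i$ choose two vertices $a_i,b_i$ at distance $2$. Join consecutive pentagons by identifying $b_i$ with $a_{i+1}$, or by adding a bridge edge $b_ia_{i+1}$; I would pick whichever of the two makes the distance bookkeeping come out exactly. With either choice, any two cycles share at most one vertex, so $G_k$ is a cactus. For AT-freeness, the graph is essentially a "thick path". Any three pairwise far-apart independent vertices are ordered along the chain, so the middle one's neighbourhood separates the outer two. This leaves a finite check inside a single pentagon and its neighbours.

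\textbf{Lower bound $\MP(G_k)\ge 3k$.} Exhibit an explicit set containing one vertex per pentagon, for example the vertex of $C^i$ farthest from the attachment points. Verify $|N_r[v]\cap M|\le r$ by observing that a ball of radius $r$ meets only about $r/\ell$ pentagons, where $\ell$ is the per-pentagon length along the chain. The chain should be arranged so that consecutive chosen vertices are at distance $3$, which gives exactly the "every third vertex on a path" pattern.

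\textbf{Upper bound $\gamma_b(G_k)\le 4k$.} Group the pentagons into $k$ blocks of three. Cover each block with towers of total cost $4$, for instance one tower of radius $3$ or $4$ placed centrally in the block, or a radius-$2$ tower plus radius-$1$ towers, adjusted to the chosen gluing. This is a routine check.

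\textbf{The matching bounds.} Here I would use the LP duality of Section~\ref{defn:broadcast_multipacking}, that is, the fractional versions. For $\MP(G_k)\le 3k$, I would show that any multipacking contains at most one vertex from each pentagon together with its attachment region. This should follow from the $r=1$ and $r=2$ constraints centred at a pentagon's vertices, since two vertices of $C_5$ lie within distance $2$ of each other, and $N_2$ of any vertex of $C_5$ is all of $C_5$. Care is needed for shared vertices.

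For $\gamma_b(G_k)\ge 4k$, which is the main obstacle, I would take an efficient optimal broadcast, as guaranteed by~\cite{dunbar2006broadcasts}, and argue locally. A tower of radius $r$ covers at most about $3r/4$ pentagons' worth of the chain, because covering a whole pentagon from outside costs extra reach around the cycle, and covering from inside a pentagon wastes the odd length. Summing over towers in a charging argument gives cost at least $\frac{4}{3}\cdot 3k$. The delicate part is handling towers of small radius and the boundary pentagons. There I would do a case analysis on radius $\le 2$ versus $\ge 3$, charging $4$ units to each three consecutive pentagons a tower dominates.
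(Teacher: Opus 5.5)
You must not leave the gluing open, because identifying $b_i$ with $a_{i+1}$ fails. Each pentagon then adds only $2$ to the spine $a_1,\dots,b_{3k}$, which has length $6k$. Every off-spine vertex is within distance $2$ of both attachment vertices of its pentagon, so the middle spine vertex has eccentricity $3k$, and $\gamma_b\le\rad\le 3k$. Only the bridge version works, and it is exactly the paper's $G_k$: $A_i=(a_i,b_i,c_i,d_i,e_i,a_i)$ with bridges $b_ie_{i+1}$. I use these labels below.

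With bridges, your treatment of $\MP$ and of the upper bound agrees with the paper:
\begin{itemize}
\item $d(a_i,a_j)=3|i-j|$, so a ball of radius $r$ contains at most $\lfloor 2r/3\rfloor+1\le r$ of the $a_i$. The ball has diameter $2r$, so your ``about $r/\ell$'' should be $2r/\ell$.
\item Two vertices of one pentagon are at distance at most $2$, so they already violate the $r=1$ constraint. The fact that $N_2$ of a vertex is the whole pentagon would only give two per pentagon.
\item A radius-$4$ tower at $a_i$, $i\equiv 2\pmod 3$, dominates $A_{i-1}\cup A_i\cup A_{i+1}$. Radius $3$ cannot, since $d(e_{i-1},b_{i+1})=8$.
\end{itemize}

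The genuine gap is $\gamma_b(G_k)\ge 4k$. This carries all the content, and your charging sketch does not establish it. Measured uniformly, small towers beat your rate $3r/4$:
\begin{itemize}
\item a radius-$1$ tower at $b_i$ dominates $4$ vertices ($4/5$ of a pentagon);
\item a radius-$1$ tower at $a_i$ dominates the whole spine segment $e_ia_ib_i$ of $A_i$;
\item a radius-$2$ tower at $b_i$ dominates $8$ vertices ($8/5>3/2$ pentagons).
\end{itemize}
So the ``case analysis for small radii'' is the entire proof, and you give no mechanism for it.

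The missing idea is to define ``worth'' by a non-uniform weighting that is a fractional multipacking. You mention the fractional versions but never produce one. Put $w(b_i)=w(c_i)=w(d_i)=w(e_i)=1/3$ and $w(a_i)=0$, so $w(V(G_k))=4k$.
\begin{itemize}
\item Check $w(N_r[v])\le r$ directly for $r\le 4$. The tight cases include $w(N_1[b_i])=1$, $w(N_3[b_i])=3$ and $w(N_4[a_i])=4$.
\item Then check that every annulus $N_{s+4}[v]\setminus N_s[v]$ has weight at most $4$. Away from $v$, successive distance levels on each side carry weights $1/3,1/3,2/3$ periodically.
\end{itemize}
The balls $N_{f(v)}[v]$ of any dominating broadcast $f$ cover $V(G_k)$. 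Hence $\sigma(f)\ge\sum_{f(v)>0}w(N_{f(v)}[v])\ge w(V(G_k))=4k$.

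This is the paper's proof, $4k\le\MP_f(G_k)=\gamma_{b,f}(G_k)\le\gamma_b(G_k)$. It needs neither an efficient broadcast nor any case split on tower radii.
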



Theorem \ref{thm:3k4k} directly establishes the following corollary.

\begin{restatable}{corollary}{gammabDIFFGmpGcactus}\label{cor:gammabG-mpGcactus} The difference $ \gamma_{b}(G) -  \MP(G) $ can be arbitrarily large for cactus graphs (and for AT-free graphs).
\end{restatable}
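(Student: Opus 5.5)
The corollary is immediate once Theorem~\ref{thm:3k4k} is taken as given; no separate argument about multipackings or broadcasts is needed.

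\textbf{Step 1: take the family.} For each positive integer $k$, Theorem~\ref{thm:3k4k} supplies a graph $G_k$ that is both a cactus and AT-free, with $\MP(G_k)=3k$ and $\gamma_b(G_k)=4k$.

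\textbf{Step 2: compute the gap.} Subtracting gives
\[
\gamma_b(G_k)-\MP(G_k)=4k-3k=k .
\]

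\textbf{Step 3: conclude.} Given any bound $N$, choose $k>N$. Then $G_k$ is a cactus, and also an AT-free graph, whose difference $\gamma_b(G_k)-\MP(G_k)$ exceeds $N$. So the difference is unbounded on both classes, which is what the corollary asserts.

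\textbf{Where the real work lies.} The only substantive part is Theorem~\ref{thm:3k4k} itself, which I am assuming. Its proof would build $G_k$ as a chain of $3k$ pentagon-like gadgets, analogous to the Hartnell--Mynhardt graphs $H_k$. It would then need two things:
\begin{enumerate}
\item an explicit multipacking of size $3k$ together with a matching upper bound via a covering argument;
\item a lower bound $\gamma_b\ge 4k$, obtained by analysing efficient optimal broadcasts, which exist by the result of~\cite{dunbar2006broadcasts}.
\end{enumerate}
Of these, the lower bound on $\gamma_b$ is the hard step. For the corollary, however, none of this is needed.
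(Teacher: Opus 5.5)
Your argument is correct and is the paper's own: the corollary is read off directly from Theorem~\ref{thm:3k4k}, since $\gamma_b(G_k)-\MP(G_k)=4k-3k=k$ is unbounded over the family $G_k$ of chained $5$-cycles, which are cactus and AT-free. One remark on your speculative aside: the paper does not obtain $\gamma_b(G_k)\ge 4k$ by analysing efficient broadcasts; it exhibits a fractional multipacking of weight $4k$ (weight $\tfrac13$ on each $b_i,c_i,d_i,e_i$) and applies LP duality, $\MP_f(G_k)=\gamma_{b,f}(G_k)\le\gamma_b(G_k)$.
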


We also make a connection with the \emph{fractional} versions of the two concepts dominating broadcast and multipacking, as introduced in~\cite{brewster2013broadcast}.

Further, we presented a range of the expression $\lim_{\MP(G)\to \infty}$ $\sup\{\gamma_{b}(G)/\MP(G)\}$ for cactus graphs. Theorem \ref{thm:multipacking_broadcast_relation} and Theorem \ref{thm:3k4k} yield the following:

\begin{restatable}{corollary}{gammabGBYmpGcactus} \label{cor:gammabG/mpGcactus} For cactus graphs $G$, 
$$\displaystyle\frac{4}{3}\leq\lim_{\MP(G)\to \infty}\sup\Bigg\{\frac{\gamma_{b}(G)}{\MP(G)}\Bigg\}\leq \frac{3}{2}.$$
\end{restatable}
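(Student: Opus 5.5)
This corollary follows by combining the upper bound of Theorem \ref{thm:multipacking_broadcast_relation} with the lower-bound construction of Theorem \ref{thm:3k4k}. I treat the two inequalities separately.

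\textbf{Upper bound.} Let $G$ be any cactus. Theorem \ref{thm:multipacking_broadcast_relation} gives $\gamma_b(G)\leq \frac{3}{2}\MP(G)+\frac{11}{2}$. Dividing by $\MP(G)>0$ yields
\[
\frac{\gamma_b(G)}{\MP(G)}\leq \frac{3}{2}+\frac{11}{2\MP(G)}.
\]
For each threshold $m$, the supremum of this ratio over cacti with $\MP(G)\geq m$ is therefore at most $\frac{3}{2}+\frac{11}{2m}$. Letting $m\to\infty$ gives $\limsup\leq \frac{3}{2}$. To make this rigorous I interpret $\lim_{\MP(G)\to\infty}\sup$ as $\lim_{m\to\infty}\sup\{\gamma_b(G)/\MP(G): \MP(G)\geq m\}$. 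This limit exists because the suprema are non-increasing in $m$ and bounded below, by Proposition \ref{prop:hartnell_ineq}, since $\gamma_b\geq \MP$.

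\textbf{Lower bound.} Theorem \ref{thm:3k4k} supplies, for every $k\geq 1$, a cactus $G_k$ with $\MP(G_k)=3k$ and $\gamma_b(G_k)=4k$, so $\gamma_b(G_k)/\MP(G_k)=\frac{4}{3}$. Given any $m$, choose $k$ with $3k\geq m$. Then each supremum in the sequence is at least $\frac{4}{3}$, and so is the limit.

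Putting the two bounds together gives the stated range. No step is a real obstacle. The only point needing care is fixing the meaning of the limit superior so that the additive constant $\frac{11}{2}$ vanishes in the limit, and checking that the family $G_k$ has $\MP$ unbounded.
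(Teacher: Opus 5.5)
Your proposal is correct and follows the paper's argument. The paper also obtains the corollary directly from the bound $\gamma_b(G)\le \frac{3}{2}\MP(G)+\frac{11}{2}$ of Theorem~\ref{thm:multipacking_broadcast_relation} and the cactus family $G_k$ of Theorem~\ref{thm:3k4k}. Your explicit handling of the limit superior (non-increasing suprema over $\MP(G)\ge m$, each set nonempty because $\MP(G_k)=3k$ is unbounded) makes precise a step the paper leaves implicit.
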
  

Corollary \ref{cor:gammabG/mpGcactus} yields that, for cactus graphs, we cannot form a bound in the form $\gamma_b(G)\leq c_1\cdot \MP(G)+c_2$, for any constant $c_1<4/3$ and $c_2$.


We also answer some questions of the relation between broadcast domination number and multipacking number on the hyperbolic graph classes (the definition of $\delta$-hyperbolic graph is given in Subsection \ref{subsec:contribution_NP_complete}).  It is known that a graph is $0$-hyperbolic iff it is a block graph\footnote{A graph is a \textit{block graph} if every block (maximal 2-connected component) is a clique.}~\cite{howorka1979metric}. Trees are $0$-hyperbolic graphs, since trees are block graphs. Moreover, it is known that all strongly chordal graphs are $\frac{1}{2}$-hyperbolic~\cite{wu2011hyperbolicity}, and also $\MP(G)=\gamma_b(G)$ holds for the strongly chordal graphs~\cite{brewster2019broadcast}. Since block graphs are strongly chordal graphs, therefore $\MP(G)=\gamma_b(G)$ holds for the graphs with hyperbolicity $0$. Furthermore, we know that $ \gamma_{b}(G) -  \MP(G) $ can be arbitrarily large for connected chordal graphs which is a subclass of $1$-hyperbolic graphs.  This leads to a natural question: what is the minimum value of $\delta$, for which we can say that the difference $ \gamma_{b}(G) -  \MP(G) $ can be arbitrarily large for $\delta$-hyperbolic graphs? We answer this question. We show that the family of graphs $G_k$ (Fig. \ref{fig:pentagon}) is $\frac{1}{2}$-hyperbolic. This yields the following theorem.

\begin{restatable}{theorem}{hypermultipackingbroadcastgape}\label{thm:1/2-Hyperbolic_graphs}
For each positive integer $k$, there is a $\frac{1}{2}$-hyperbolic graph $G_k$ such that $\MP(G_k)=3k$ and $\gamma_b(G_k)=4k$.
    
\end{restatable}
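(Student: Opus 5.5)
We reuse the family $G_k$ from Theorem~\ref{thm:3k4k}, for which $\MP(G_k)=3k$ and $\gamma_b(G_k)=4k$ are already established. So the only new thing to prove is that every $G_k$ is $\frac{1}{2}$-hyperbolic. The statement then follows immediately.

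\textbf{Structure of $G_k$.} $G_k$ is the pentagon-based cactus of Figure~\ref{fig:pentagon}: a chain of $3k$ (or a linear number of) $5$-cycles glued along cut vertices, possibly with pendant paths. Each block of $G_k$ is a $C_5$ or a single edge ($K_2$).

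\textbf{Step 1: reduce to blocks.} Hyperbolicity of a graph equals the maximum hyperbolicity over its biconnected components. The standard argument is as follows:
\begin{itemize}
\item Take four vertices $u,v,w,x$. If they do not all lie in one block, use the block--cut tree structure.
\item Every shortest path between vertices in different blocks passes through the separating cut vertices.
\item One can therefore replace each vertex by its projection (the nearest cut vertex toward the other vertices) without increasing the difference between the two largest of the three sums.
\end{itemize}
I will either cite this known fact (due to Brinkmann--Koolen--Moulton / Wu--Zhang) or reprove it in a short lemma. If a quadruple is split by a cut vertex $c$, say $\{u,v\}$ on one side and $\{w,x\}$ on the other, then
$$d(u,w)+d(v,x)=d(u,c)+d(c,w)+d(v,c)+d(c,x)=d(u,x)+d(v,w).$$
So the two sums are equal and they are the two largest, since $d(u,v)+d(w,x)$ is at most each of them by the triangle inequality through $c$. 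The difference is then $0$. The cases with three vertices on one side are handled similarly by projecting the lone vertex onto $c$, which shifts all three sums by the same amount.

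\textbf{Step 2: bound each block.} A $K_2$ block is $0$-hyperbolic. For $C_5$, we check directly that every quadruple gives a difference of at most $1$. Distances in $C_5$ are only $1$ or $2$, so each sum lies in $\{2,3,4\}$. The case analysis is small:
\begin{itemize}
\item With four distinct vertices on $C_5$, arrange them in cyclic order. The two "crossing" pairs (the diagonals) give the largest sum, at most $4$.
\item The other two sums are at least $3$, since in four vertices out of five at most one pair of "opposite-side" consecutive pairs can both be at distance $1$.
\item Hence the difference is at most $1$, i.e. $2\delta\le 1$.
\end{itemize}
Quadruples with repeated vertices reduce to the triangle inequality, giving a difference of at most $1$ because distances are at most $2$. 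So $C_5$ is $\frac{1}{2}$-hyperbolic.

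\textbf{Expected obstacle.} The main step to do carefully is Step 1, the block reduction, when the four vertices are spread across several blocks. I plan to cite the known result that the hyperbolicity of a graph is the maximum over its blocks. If it is not acceptable to assume it, I will argue as above via cut vertices, inducting on the number of blocks. Combining this with Theorem~\ref{thm:3k4k} gives the theorem.
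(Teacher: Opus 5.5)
Your proof is correct, but it establishes the $\frac{1}{2}$-hyperbolicity of $G_k$ by a different route from the paper. Like you, the paper takes $\MP(G_k)=3k$ and $\gamma_b(G_k)=4k$ from the cactus section. It then obtains Lemma~\ref{lem:pentagon_half_hyperbolic} from the Bandelt--Chepoi characterization (Theorem~\ref{thm:1/2-Hyperbolic_graphs_classification}):
\begin{itemize}
\item every cycle of $G_k$ is one of the pentagons $A_i$, so the well-bridgedness condition on cycles $C_n$ with $n\neq 5$ is vacuous;
\item none of the forbidden graphs of Figure~\ref{fig:hyperbolic_forbidden_graphs} may occur isometrically, a check the paper leaves implicit.
\end{itemize}

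You instead use the standard fact that hyperbolicity is the maximum over blocks, then compute directly on $C_5$ and $K_2$. Blocks are convex, hence isometric. Your cut-vertex computation shows that $2$--$2$ splits give difference $0$ and that $3$--$1$ splits shift all three sums equally. Iterating the projection shrinks the part of the block--cut tree spanned by the quadruple, which proves the block fact.

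Your argument is more elementary and self-contained. It gives the exact value $\delta(G_k)=\frac{1}{2}$. It also shows more generally that any graph whose blocks are edges, triangles or pentagons is $\frac{1}{2}$-hyperbolic. The paper's route is shorter on the page but rests on a deep characterization and an unstated inspection of the forbidden graphs.

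Two small corrections:
\begin{itemize}
\item Consecutive pentagons in $G_k$ are joined by the bridge edges $b_ie_{i+1}$, not glued at a vertex, and there are no pendant paths. The block structure ($C_5$'s and $K_2$'s) is still as you say, so nothing changes.
\item In the $C_5$ check, the sentence ``the other two sums are at least $3$'' is false as written. For four distinct vertices the cyclic gaps are $1,1,1,2$. So one non-crossing pairing has sum $3$, the other has sum $2$, and the crossing pairing has sum $4$. What your justification actually proves, and what you need, is that the second largest sum is $3$, so the difference is exactly $1$.
\end{itemize}
Separately, quadruples with a repeated vertex give difference $0$ in any metric space: two of the sums coincide and dominate the third.
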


Theorem \ref{thm:1/2-Hyperbolic_graphs} directly establishes the following corollary.

\begin{restatable}{corollary}{hypergammabDIFFGmpG}\label{cor:hyper_gammabG-mpG} The difference $ \gamma_{b}(G) -  \MP(G) $ can be arbitrarily large for $\frac{1}{2}$-hyperbolic graphs.
\end{restatable}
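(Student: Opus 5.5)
The statement is a direct consequence of Theorem~\ref{thm:1/2-Hyperbolic_graphs}, so my plan is simply to invoke that theorem and compute the gap.

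Fix an arbitrary integer $N \geq 1$. Apply Theorem~\ref{thm:1/2-Hyperbolic_graphs} with $k = N$. It provides a $\frac{1}{2}$-hyperbolic graph $G_N$ with $\MP(G_N) = 3N$ and $\gamma_b(G_N) = 4N$. For this graph,
\[
\gamma_b(G_N) - \MP(G_N) = 4N - 3N = N .
\]
Since $N$ was arbitrary, the difference $\gamma_b(G) - \MP(G)$ takes every positive integer value on the class of $\frac{1}{2}$-hyperbolic graphs. Hence it is unbounded on that class, which is exactly the claim.

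All of the substance lies in Theorem~\ref{thm:1/2-Hyperbolic_graphs}, which we are allowed to assume. That theorem has two parts: the values $\MP(G_k)=3k$ and $\gamma_b(G_k)=4k$ for the family $G_k$ of Theorem~\ref{thm:3k4k}, and the verification that $G_k$ is $\frac{1}{2}$-hyperbolic via the four-point condition. The hyperbolicity check is the part I would expect to be the real work. Given that theorem, the corollary itself has no obstacle.
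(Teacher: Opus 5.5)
Your proof is correct and matches the paper's argument exactly: the corollary is read off from Theorem~\ref{thm:1/2-Hyperbolic_graphs}, since $\gamma_b(G_k)-\MP(G_k)=4k-3k=k$ for every $k$. One small remark on your aside: the paper certifies that $G_k$ is $\frac{1}{2}$-hyperbolic through the Bandelt--Chepoi characterization (Theorem~\ref{thm:1/2-Hyperbolic_graphs_classification}) rather than a direct four-point check, but this does not affect the corollary.
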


We discussed earlier that for all $0$-hyperbolic graphs, we have $ \gamma_{b}(G) =  \MP(G) $. Therefore, Corollary \ref{cor:hyper_gammabG-mpG} yields the following.

\begin{restatable}{theorem}{mindelta}\label{thm:min_delta} For $\delta$-hyperbolic graphs, the minimum value of $\delta$ is $\frac{1}{2}$ for which, the difference $ \gamma_{b}(G) -  \MP(G) $ can be arbitrarily large.
\end{restatable}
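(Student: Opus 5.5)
The statement has two halves, and I would prove them in turn. First, for every $\delta<\frac{1}{2}$ the difference $\gamma_b(G)-\MP(G)$ stays bounded (in fact is zero). Second, for $\delta=\frac{1}{2}$ it is unbounded.

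\textbf{Step 1: hyperbolicity constants are half-integers.} All distances in a graph are integers, so each of the three sums $d(u,v)+d(w,x)$, $d(u,w)+d(v,x)$, $d(u,x)+d(v,w)$ is an integer. The gap between the two largest sums is therefore a nonnegative integer. Hence the least $\delta$ for which $G$ is $\delta$-hyperbolic lies in $\{0,\frac{1}{2},1,\dots\}$. Consequently, for any $\delta<\frac{1}{2}$, a $\delta$-hyperbolic graph has every such gap strictly below $1$, so every gap is $0$, and the graph is $0$-hyperbolic.

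\textbf{Step 2: $0$-hyperbolic graphs have no gap.} By the characterization of Howorka~\cite{howorka1979metric}, a graph is $0$-hyperbolic if and only if it is a block graph. Block graphs are strongly chordal, and Brewster et al.~\cite{brewster2019broadcast} showed $\gamma_b(G)=\MP(G)$ for strongly chordal graphs. So for every $\delta<\frac{1}{2}$, every $\delta$-hyperbolic graph satisfies $\gamma_b(G)-\MP(G)=0$, and the difference cannot be arbitrarily large.

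\textbf{Step 3: unbounded gap at $\delta=\frac{1}{2}$.} Apply Corollary~\ref{cor:hyper_gammabG-mpG}, which follows from Theorem~\ref{thm:1/2-Hyperbolic_graphs}. The family $G_k$ is $\frac{1}{2}$-hyperbolic with $\gamma_b(G_k)-\MP(G_k)=k$, so the difference is arbitrarily large at $\delta=\frac{1}{2}$. Together with Step 2, this shows $\frac{1}{2}$ is the minimum value of $\delta$.

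The one point needing care is Step 1, the reduction from "$\delta<\frac{1}{2}$" to "$\delta=0$". A reader might read "$\delta$-hyperbolic" with a real parameter $\delta$, so the integrality argument must be stated explicitly. I also need to treat connectivity: disconnected graphs have infinite distances, and here I would assume $G$ is connected, as throughout the paper. All the substantive work is in Theorem~\ref{thm:1/2-Hyperbolic_graphs}, which is taken as given.
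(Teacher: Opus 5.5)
Your proof is correct and follows the paper's route. The paper likewise uses that $0$-hyperbolic graphs are block graphs (Howorka), hence strongly chordal, hence satisfy $\gamma_b(G)=\MP(G)$ by Brewster et al., and that the $\frac{1}{2}$-hyperbolic family $G_k$ gives $\gamma_b(G_k)-\MP(G_k)=k$. Your Step~1, the integrality of the four-point sums, shows that any $\delta<\frac{1}{2}$ forces $0$-hyperbolicity; this is a worthwhile explicit addition, since the paper uses that reduction only implicitly.
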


It is known that, if $G$ is a $\delta$-hyperbolic graph, then $\gamma_{b}(G)\leq \big\lfloor{\frac{3}{2} \MP(G)+2\delta\big\rfloor} $~\cite{das2023relationarxiv}. This result, together with Theorem \ref{thm:1/2-Hyperbolic_graphs}, yields the following.

\begin{restatable}{corollary}{hypergammabGBYmpG} \label{cor:hyper_gammabG/mpG} For $\frac{1}{2}$-hyperbolic graphs $G$, 
$$\displaystyle\frac{4}{3}\leq\lim_{\MP(G)\to \infty}\sup\Bigg\{\frac{\gamma_{b}(G)}{\MP(G)}\Bigg\}\leq \frac{3}{2}.$$
\end{restatable}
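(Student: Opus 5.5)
The lower bound comes from Theorem \ref{thm:1/2-Hyperbolic_graphs} and the upper bound from the hyperbolic bound $\gamma_{b}(G)\leq \lfloor \frac{3}{2}\MP(G)+2\delta\rfloor$ of Proposition \ref{prop:delta_multipacking_broadcast_relation}.

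For the upper bound, specialize Proposition \ref{prop:delta_multipacking_broadcast_relation} to $\delta=\frac12$. This gives $\gamma_b(G)\le \frac32\MP(G)+1$ for every $\frac12$-hyperbolic graph $G$. Dividing by $\MP(G)$ yields
\[
\frac{\gamma_b(G)}{\MP(G)}\le \frac32+\frac{1}{\MP(G)}.
\]
Let $S(m)$ denote the supremum of $\gamma_b(G)/\MP(G)$ over $\frac12$-hyperbolic graphs $G$ with $\MP(G)\ge m$. Then $S(m)\le \frac32+\frac1m$, so the limit superior as $\MP(G)\to\infty$ is at most $\frac32$. Note that $S(m)$ is nonincreasing in $m$, so the limit exists.

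For the lower bound, use the family $G_k$ from Theorem \ref{thm:1/2-Hyperbolic_graphs}. These graphs are $\frac12$-hyperbolic and satisfy $\MP(G_k)=3k\to\infty$ and $\gamma_b(G_k)/\MP(G_k)=\frac{4k}{3k}=\frac43$ for every $k$. For any $m$, choosing $k$ with $3k\ge m$ gives $S(m)\ge \frac43$. Hence the limit is at least $\frac43$.

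There is no real obstacle, since both directions follow directly from earlier results. The only care needed is to interpret the limit superior as $\lim_{m\to\infty}$ of the suprema over graphs with $\MP(G)\ge m$, and to check that the additive constant $2\delta=1$ vanishes in the limit. All the substantive work lies in the cited results: the $\frac12$-hyperbolicity of $G_k$ and the bound in Proposition \ref{prop:delta_multipacking_broadcast_relation}.
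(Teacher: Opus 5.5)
Your proposal is correct and follows the paper's argument exactly. The upper bound comes from specializing $\gamma_b(G)\le\lfloor\frac32\MP(G)+2\delta\rfloor$ to $\delta=\frac12$, and the lower bound from the $\frac12$-hyperbolic family $G_k$ of Theorem~\ref{thm:1/2-Hyperbolic_graphs} with $\MP(G_k)=3k$ and $\gamma_b(G_k)=4k$; your handling of the limit superior via the nonincreasing suprema $S(m)$ makes explicit a step the paper leaves implicit.
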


\subsection{Chapter 5: Complexity of $r$-Multipacking}

\textsc{$1$-Multipacking} problem is \textsc{NP-complete} for planar bipartite graphs of maximum degree $3$, and chordal graphs~\cite{eto2014distance}. In Chapter \ref{chapter:k_multi}, we study the hardness of \textsc{$r$-Multipacking} problem, for $r\geq 2$. 

It is known that the \textsc{Independent Set} problem is \textsc{NP-complete} for  planar graphs of maximum degree $3$~\cite{garey1974some}.  We reduce this problem to our problem to show the following.

\begin{restatable}{theorem}{NPCplanarkmulti}\label{thm:NPC_planar_r_multi}
  For $r\geq 2$, the \textsc{$r$-Multipacking} problem is \textsc{NP-complete} even for planar bipartite graphs with maximum degree $\max \{4,r\}$. 
\end{restatable}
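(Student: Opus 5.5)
Membership in \textsc{NP} is immediate: given $M$, we compute all BFS distances and, for every $v$ and every $s\le r$, check $|N_s[v]\cap M|\le s$ in polynomial time. For hardness we reduce from \textsc{Independent Set} on planar graphs of maximum degree $3$, which is \textsc{NP-complete}~\cite{garey1974some}. The plan is to subdivide edges so that the distance constraints of an $r$-multipacking, restricted to the original vertices, say exactly ``no two adjacent vertices of $G$ are both chosen,'' while adding a controllable, predictable number of extra multipacking vertices elsewhere.

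\textbf{Construction.} Given $G$ with $n$ vertices and $m$ edges, let $G'$ be obtained by replacing each edge $uv$ by a path of length $L$, for a constant $L=L(r)$ on the order of $2r$ or $3r$. Choose $L$ even so that $G'$ is bipartite; $G'$ is planar with degrees at most $3$. On the interior of each subdivided path we attach, or place, a fixed gadget (for example pendant paths of length about $r$ hanging from interior vertices) that can host a fixed number $c=c(r)$ of multipacking vertices in a way that is forced in any optimal solution. The $\max\{4,r\}$ degree bound in the statement suggests the intended gadget has a hub vertex of degree about $r$, for instance $r$ short pendant paths at a midpoint. We then claim
\[
G \text{ has an independent set of size } k \iff G' \text{ has an } r\text{-multipacking of size } k+cm .
\]

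\textbf{Forward direction.} Take the independent set $I$ together with the canonical $c$ gadget vertices on each edge path. We verify $|N_s[x]\cap M|\le s$ for all $x$ and all $s\le r$ by local case analysis. Because $L$ is large relative to $r$, every ball of radius $s\le r$ meets the original vertex $u$ and the gadgets of at most the (at most $3$) incident edge paths. Distances between gadget vertices and original vertices are tuned, by placing gadget vertices at distance greater than about $r/2$ from the ends, so that no ball contains more than $s$ chosen points. The key fact is that two original endpoints are never both chosen, which is exactly where independence is used.

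\textbf{Backward direction (main obstacle).} Given an $r$-multipacking $M'$ of size $k+cm$, we normalize it. First, each edge gadget together with its path interior contains at most $c$ vertices of $M'$; this follows by covering it with balls centered appropriately whose radii sum to $c$. Next, any $M'$-vertex lying near an original vertex but not on it can be shifted onto that vertex or into a gadget. Finally, if both endpoints of an edge path are in $M'$, then that path's region holds at most $c-1$ others, so deleting one endpoint and restoring the canonical gadget does not decrease the size. After normalization, $M'\cap V(G)$ is independent and has size at least $k$. The hardest part is designing the gadget so that the counting bound ``at most $c$ per edge, and at most $c-1$ if both ends are chosen'' holds for every $r\ge 2$ while keeping the degree at most $\max\{4,r\}$ and preserving bipartiteness. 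I would first try to work out $r=2$ and $r=3$ explicitly, then generalize by taking path length $L=2r+2$ and a gadget made of a single hub with $r-1$ pendant paths of length $r$.
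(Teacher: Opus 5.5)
Your NP-membership argument is fine, but the hardness part has a genuine gap. You never specify the gadget, and you never prove the per-edge counting lemma: at most $c$ points per edge region, at most $c-1$ when both ends are chosen, and a canonical $c$-point filling that coexists with the fillings of all other edges. That lemma is the entire reduction.

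Your architecture also works against it, for two reasons. First, once $L>2r$, no ball of radius at most $r$ contains both ends of an edge path. So your remark that independence is ``exactly where it is used'' in the forward direction has no violated ball behind it. Second, an \emph{unselected} original vertex of degree $3$ must absorb points from every incident region whose other end is selected.

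Your concrete suggestion already fails for $r=2$. Take $L=6$, the path $w=p_0,p_1,\ldots,p_6=v$, and one pendant path $p_3\,a\,b$.
\begin{itemize}
\item A region holds at most $2$ points, so $c=2$.
\item When $w$ is selected, the only valid $2$-point fillings are $\{p_4,b\}$, $\{p_5,b\}$ and $\{p_5,a\}$. Each uses a vertex within distance $2$ of $v$.
\item Take $G=K_{1,3}$ with the three leaves as the independent set. Each branch at the centre $v$ (region plus leaf) holds at most $3$ points, and reaching $3$ requires a vertex within distance $2$ of $v$.
\item If $v$ is chosen, every branch holds at most $2$. Otherwise $|N_2[v]\cap M|\le 2$, so at most two branches reach $3$.
\end{itemize}
Hence every $2$-multipacking of $G'$ has at most $8<3+2\cdot 3$ vertices, although $G$ has an independent set of size $3$.

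The missing idea, which the paper uses, is to make each edge constraint visible in a single ball of radius exactly $r$ and leave everything else slack. The construction is:
\begin{itemize}
\item each edge is subdivided only once, by $u_{i,j}$;
\item each $v_i$ gets a private pendant path $u_iu_i^1\cdots u_i^{r-1}$, and its end $u_i^{r-1}$ is the selectable vertex;
\item $u_{i,j}$ gets a neighbour $u_{i,j}^0$ carrying $r-1$ pendant paths of length $r-1$.
\end{itemize}
This gives degree at most $4$ at $u_i$ and degree $r$ at $u_{i,j}^0$.

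All canonical vertices $u_i^{r-1}$ and $u_{i,j}^{t,r-1}$ are pairwise at distance at least $2r-2$. So radii $s\le r-2$ are trivial, and radius $r-1$ only matters at $u_{i,j}^0$, which sees at most $r-1$ of them. The ball $N_r[u_{i,j}]$ contains exactly $u_i^{r-1}$, $u_j^{r-1}$ and the $r-1$ gadget ends. This gives $r+1>r$ precisely when both ends of the edge are selected. An unselected vertex contributes nothing, so there is no accumulation.

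For the converse, the paper uses three facts:
\begin{itemize}
\item $S_{i,j}=V(P_i)\cup V(P_j)\cup U_{i,j}\subseteq N_r[u_{i,j}]$, which caps each edge at $r$ points;
\item $U_{i,j}\subseteq N_{r-1}[u_{i,j}^0]$, which forbids $r$ points inside $U_{i,j}$;
\item the \textsc{Reassign} procedure, which replaces $M\cap S_{i,j}$ edge by edge with a canonical set of the same size without ever re-populating an emptied $P_i$.
\end{itemize}
Together these turn a multipacking of size $k+m(r-1)$ into an independent set of size at least $k$.
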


Next, we use two reductions from the well-known \textsc{NP-complete} problem \textsc{Independent Set}~\cite{garey1979computers} to our problems to show the following two results.

\begin{restatable}{theorem}{NPCchordalkmulti}\label{thm:NPC_chordal_r_multi}
  For $r\geq 2$, the \textsc{$r$-Multipacking} problem is \textsc{NP-complete}  for chordal graphs with maximum radius $r+1$. 
\end{restatable}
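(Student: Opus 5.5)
Membership in NP is immediate: given $M$, we check $|N_s[v]\cap M|\le s$ for every $v$ and every $1\le s\le r$ using BFS distances, which is polynomial. For hardness we reduce from \textsc{Independent Set} on a general graph $H=(V_H,E_H)$ with $n$ vertices and $m$ edges. We build a chordal graph $G$ in which the edges of $H$ become clique vertices and the vertices of $H$ sit at distance about $r$ from the clique. The aim is that two chosen vertices of $H$ violate the $r$-multipacking condition exactly when they are adjacent in $H$.

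\textbf{Construction.} Let $K$ be a clique on vertex set $\{e : e\in E_H\}$. For each $v\in V_H$, create a pendant path $P_v = v^{(1)}v^{(2)}\cdots v^{(L)}$ with endpoint $v^{(L)}$. Join $v^{(1)}$ to the clique vertices corresponding to edges incident to $v$. Since $v^{(1)}$'s neighbourhood in $K$ is a clique, the graph remains chordal: each $v^{(1)}$ is simplicial after the path is removed, and paths are trees. If $v$ is isolated in $H$, attach $v^{(1)}$ to an arbitrary clique vertex, or add a dummy vertex. The length $L$ is tuned so that $d(u^{(L)},w^{(L)}) = 2L-1$ when $uw\in E_H$ (going through the common clique vertex $uw$), and $2L$ otherwise. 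The radius is then about $L+1$, measured from a clique vertex, and we want $L=r$ so that the radius bound $r+1$ holds.

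\textbf{Equivalence.} The plan is to show that $H$ has an independent set of size $k$ iff $G$ has an $r$-multipacking of size $k + c$. Here $c$ is the contribution of the internal path vertices, which we can include in a fixed periodic pattern: every third vertex along $P_v$, which is an $r$-multipacking on a path. The difficulty is that paths interact near the clique. Choosing $L$ slightly larger than needed and taking only endpoints may simplify this, giving an answer of exactly $k$. In that case:
\begin{itemize}
\item For $r\ge 2$, the $1$-condition holds automatically among endpoints.
\item The key constraint would be $s=r$, centred at an internal path vertex: two endpoints are covered by a ball of radius $s$ only if their distance is at most $2s$.
\end{itemize}
So we want $d(u^{(L)},w^{(L)})\le 2r$ iff $uw\in E_H$. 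This gives $2L-1 = 2r$, which fails on parity. Instead we place the two paths' attachment so that adjacent pairs are at distance $2r$ and nonadjacent pairs at distance $2r+1$. To do this, subdivide by having $v^{(1)}$ adjacent to $e$ with the path of length $r$, and check that a ball of radius $r$ centred at the middle vertex contains both endpoints. We must also show that a ball of radius $s<r$ contains at most $s$ chosen endpoints; since endpoints are pairwise at distance at least $2r>2s$, each such ball contains at most one.

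\textbf{Main obstacle.} The subtle step is the converse direction. A maximum $r$-multipacking might use internal or clique vertices rather than endpoints. I would show by an exchange argument that any $r$-multipacking can be converted into one of the same size using only endpoints, by pushing each chosen vertex on $P_v$ out to $v^{(L)}$. We must also bound the contribution of the clique: at most $1$ vertex from $K$ together with the nearby vertices, since $N_1$ of a clique vertex contains all of $K$. Balls of radius $r$ around clique vertices contain many endpoints, so we must verify that the constraint there ($\le r$) does not restrict independent sets of size $k$. I expect this to force a modification: attaching a large number of disjoint copies, or lengthening paths by one while keeping the radius at $r+1$. Getting distances, the radius bound $r+1$, and this central-ball constraint simultaneously consistent is where I expect most of the work.
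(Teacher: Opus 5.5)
Your reduction breaks at its central step. You treat ``two chosen endpoints lie in a common ball of radius $r$'' (distance at most $2r$) as a conflict. But the $r$-multipacking condition at radius $s$ only forbids $s+1$ chosen vertices in $N_s[v]$. A pair of chosen vertices can therefore violate it only at $s=1$, i.e.\ at distance at most $2$, and a ball of radius $r\ge 2$ may contain up to $r$ chosen vertices.

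Take your construction with $L=r$ and $H$ without isolated vertices, and check the endpoint set directly:
\begin{itemize}
\item Adjacent endpoints are at distance $(r-1)+1+1+(r-1)=2r$ (not $2L-1$), and non-adjacent ones at $2r+1$. So a ball of radius $s<r$ contains at most one endpoint.
\item A ball of radius $r$ centred at $v^{(j)}$ contains only $v^{(r)}$, since every other endpoint is at distance at least $j+r\ge r+1$.
\item A ball of radius $r$ centred at a clique vertex $e=uw$ contains exactly $u^{(r)}$ and $w^{(r)}$, i.e.\ $2\le r$ chosen vertices.
\end{itemize}
Hence $\{v^{(r)}:v\in V_H\}$ is an $r$-multipacking of size $n$ for every such $H$, even $H=K_n$. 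The instance therefore carries no information about independent sets. Adjusting $L$ or parity, or taking disjoint copies, cannot fix this: the obstruction is that a radius-$r$ ball has capacity $r$, not $1$.

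What is missing is a padding gadget that pre-consumes $r-1$ units of that capacity at every edge vertex. The paper keeps essentially your skeleton: paths $P_i=u_iu_i^1\cdots u_i^{r-1}$, an edge vertex $u_{i,j}$ adjacent to $u_i$ and $u_j$, and all edge vertices forming a clique so that $G'$ is chordal. In addition, it hangs from each $u_{i,j}$ a new vertex $u^0_{i,j}$ carrying $r-1$ pendant paths of length $r-1$. The endpoints of these paths, like $u_i^{r-1}$ and $u_j^{r-1}$, are at distance exactly $r$ from $u_{i,j}$, while every other path or gadget endpoint is at distance at least $r+1$. So once the $r-1$ gadget endpoints are taken, $N_r[u_{i,j}]$ admits only one of $u_i^{r-1},u_j^{r-1}$, and the target becomes $k+m(r-1)$.

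The converse then needs the kind of exchange argument you anticipated, but relative to these gadgets. The paper's \textsc{Reassign} procedure processes edges one at a time. It uses $|S_{i,j}\cap M|\le |N_r[u_{i,j}]\cap M|\le r$ for $S_{i,j}=V(P_i)\cup V(P_j)\cup U_{i,j}$, together with the constraint at $u^0_{i,j}$. It outputs an $r$-multipacking of the same size, consisting only of path and gadget endpoints, with at most $r-1$ per gadget and never both $u_i^{r-1}$ and $u_j^{r-1}$ for an edge $v_iv_j$. Counting then leaves at least $k$ path endpoints, which form an independent set. The radius bound $r+1$ is simply the eccentricity of a clique vertex.
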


\begin{restatable}{theorem}{NPCbipartitekmulti}\label{thm:NPC_bipartite_r_multi}
  For $r\geq 2$, the \textsc{$r$-Multipacking} problem is \textsc{NP-complete} for bipartite graphs with maximum radius $r+1$. 
\end{restatable}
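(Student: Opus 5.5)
Membership in NP is routine. Given a candidate set $M$, we compute all-pairs distances by BFS. Then, for every vertex $v$ and every $s\in\{1,\dots,r\}$, we check that $|N_s[v]\cap M|\le s$.

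For hardness, I reduce from \textsc{Independent Set} on general graphs, which is NP-complete~\cite{garey1979computers}. Let $(H,k)$ be an instance with $V(H)=\{v_1,\dots,v_n\}$ and $E(H)=\{e_1,\dots,e_m\}$. The goal is a bipartite graph $G$ of radius at most $r+1$ in which an $r$-multipacking of size $k$ exists if and only if $H$ has an independent set of size $k$.

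\textbf{Construction.} For each vertex $v_i$ of $H$, create a vertex $a_i$ and attach to it a pendant path of length $\ell$, with $\ell$ chosen as a function of $r$. The far endpoint $p_i$ of this path is the \emph{candidate} vertex. For each edge $e_j=v_iv_{i'}$, create an edge-vertex $b_j$ adjacent to $a_i$ and $a_{i'}$, i.e.\ we subdivide the edges of $H$. Finally, add a hub vertex $h$ adjacent to every $b_j$. The resulting graph is bipartite: the $a$'s and $h$ lie on one side, the $b$'s on the other, and the paths alternate.

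\textbf{Tuning $\ell$.} The length $\ell$ is chosen, roughly $\ell\approx\lfloor r/2\rfloor$ with a parity adjustment, so that the following holds. Two candidates $p_i,p_{i'}$ with $v_iv_{i'}\in E(H)$ lie at distance $2\ell+2$ and must not both be chosen. Candidates of non-adjacent vertices lie at distance $2\ell+4$ (via the hub) and may both be chosen.

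The delicate part is counting balls rather than checking pairwise distances. For an independent set $S$, each ball $N_s[v]$ with $s\le r$ must contain at most $s$ chosen candidates. Balls centred near $h$ with radius about $\ell+2$ reach many candidates at once. To keep these balls sparse, I would instead use a path from $h$ to each $b_j$, or lengthen the paths, so that a ball of radius $s\le r$ reaches at most $\lfloor (s+1)/\ell'\rfloor$-type numbers of candidates. If this fails, the fallback is to shift the chosen candidates far from the centre: candidates sit at distance about $r$ from $h$, so any ball of radius $s\le r$ contains only candidates within a subtree. With radius bounded by $r+1$, the hub at distance $r+1$ from the leaves works as a centre. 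For the converse direction, any $r$-multipacking can be pushed onto candidate vertices, at most one per gadget, and the constraint for $s=1$ or $s=2$ on the $b$ and $a$ vertices forbids adjacent pairs.

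\textbf{Main obstacle.} I expect the main difficulty to be reconciling these requirements. Forbidding adjacent pairs needs a small radius ($\le r$) detection. Allowing arbitrarily many non-adjacent candidates needs every ball of radius $s\le r$ to meet at most $s$ of them. All of this must hold while the radius stays at most $r+1$. The crucial balls are those of radius close to $r$ centred at $h$ or at some $a_i$. I would first try to make the paths long enough, of length about $r$, that such a ball contains at most one candidate. I would then detect an edge through its $b_j$ using radius $r$ at $b_j$, with two candidates at distance $\ell+1$ from $b_j$ but with counts $|N_{\ell+1}[b_j]\cap M|$ exceeding $\ell+1$. To make that count exceed, I would add multiple parallel pendant paths (copies) per vertex gadget, making each vertex contribute several chosen vertices. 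The independent-set size $k$ is then scaled accordingly, giving target $ck$. Verifying the replacement argument in the converse direction, that an optimal $r$-multipacking can be normalized to full gadgets, is the step I would budget the most effort for.
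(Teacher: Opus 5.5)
\textbf{There is a genuine gap.} The proposal never settles on a construction, and both mechanisms it relies on fail.

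First, the ``tuning $\ell$'' step is wrong. For $s\ge 2$ a ball may contain two chosen vertices, so a pair of chosen vertices violates the condition only if they are at distance at most $2$. Hence $p_i,p_{i'}$ at distance $2\ell+2\ge 4$ never exclude each other. Taking $\ell=0$ instead puts every candidate in $N_2[h]$. So an edge of $H$ can only be detected by a \emph{counting} violation in a ball of radius about $r$, which needs roughly $r-1$ further chosen vertices guaranteed to lie in that ball.

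Second, your way of supplying them ($c$ parallel pendant paths per vertex gadget, target $ck$) does not give a reduction. With $h$ adjacent to every $b_j$, the paths must have length exactly $r-1$. Shorter paths put all leaves in some $N_s[h]$ with $s\le r$; longer ones put them out of reach of $N_r[b_j]$. Then $N_{r-1}[a_i]$ contains all $c$ leaves, forcing $c\le r-1$, while detection at $b_j$ needs $2c\ge r+1$. For $r=2$ no such $c$ exists.

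For $r\ge 3$ the converse is false, not merely hard, because nothing forces a gadget to be used all-or-nothing. Choose $\lfloor r/2\rfloor<c$ leaves in \emph{every} gadget. Each $N_r[b_j]$ then contains $2\lfloor r/2\rfloor\le r$ chosen vertices, and every other ball is within budget. This is an $r$-multipacking of size $n\lfloor r/2\rfloor$ for every $H$. For $H=K_n$ with $n\ge 4$ this is at least $2c$, i.e.\ a yes-answer for $k=2$, although $\alpha(K_n)=1$. In your construction the constraints only bound $w_i+w_{i'}\le r$ along edges, where $w_i$ counts chosen vertices in gadget $i$. That is a fractional relaxation, not $c\,\alpha(H)$.

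\textbf{The missing idea, which the paper uses, is to put the padding on the \emph{edge} side, so that the exchange in the converse is one-for-one.} Keep a single candidate per vertex: a path of length $r-1$ from $a_i$ to $p_i$. Give each $b_j$ a private neighbour $b_j^0$ carrying $r-1$ pendant paths of length $r-1$ (the paper's $u_{i,j}^0$ and $P_{i,j}^t$). Their leaves lie at distance exactly $r$ from $b_j$.
\begin{itemize}
\item \emph{Forward direction.} $N_r[b_j]$ contains the $r-1$ padding leaves and both endpoint candidates, so full padding excludes choosing both endpoints. Chosen vertices are pairwise at distance at least $2(r-1)$, so all other balls are fine. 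The target is $k+m(r-1)$.
\item \emph{Converse.} Let $T_j$ be the gadget of $b_j$ together with the two vertex paths. Since $T_j\subseteq N_r[b_j]$, it meets $M$ in at most $r$ vertices. Normalize $M\cap T_j$ to padding leaves (at most $r-1$), plus one endpoint candidate only when $|M\cap T_j|=r$. This keeps $|M|$ and the packing property (the paper's \textsc{Reassign}). Afterwards at least $k$ candidates survive, and they are pairwise independent.
\item \emph{Bipartiteness and radius.} Your hub adjacent to all $b_j$ then gives bipartiteness and radius $r+1$, as you intended.
\end{itemize}

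One caution if you follow the paper here: do not add $h$ to the packing. $N_r[b_j]$ contains $h$ at distance $1$, and also $p_i$ and the padding leaves at distance $r$. So the correct target is $k+m(r-1)$, not the $k+m(r-1)+1$ in the paper's claim. In the converse, $h\in M$ forces at most $r-1$ chosen vertices in every $T_j$, hence $k\le 1$.
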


Next, we show that the \textsc{$r$-Multipacking} problem does not admit a subexponential-time algorithm, unless the \textsc{ETH}\footnote{The \textit{Exponential Time Hypothesis (ETH)} states that the \textsc{3-SAT} problem cannot be solved in time $2^{o(n)}$, where $n$ is the number of variables of the input CNF (Conjunctive Normal Form) formula.} fails.

\begin{restatable}{theorem}{rmultiETH}\label{thm:ETH_r_multi}
Unless the \textsc{ETH} fails, for $r \geq 2$, the \textsc{$r$-Multipacking} problem admits no algorithm running in time $2^{o(n+m)}$, even when restricted to chordal or bipartite graphs of maximum radius $r+1$, where $n$ and $m$ denote the numbers of vertices and edges, respectively.

\end{restatable}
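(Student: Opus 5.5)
The plan is to derive Theorem~\ref{thm:ETH_r_multi} directly from the reductions behind Theorems~\ref{thm:NPC_chordal_r_multi} and~\ref{thm:NPC_bipartite_r_multi}, by checking that they are \emph{linear}: the size of the output instance must be linear in the size of the input instance. The source of hardness is the standard ETH lower bound for \textsc{Independent Set}. By the Sparsification Lemma together with the linear chain of reductions from \textsc{3-SAT} (3-SAT $\to$ Vertex Cover/Independent Set), \textsc{Independent Set} admits no $2^{o(n+m)}$-time algorithm unless ETH fails. I would first try to use this statement as it is, on general graphs, because the reductions in Theorems~\ref{thm:NPC_chordal_r_multi} and~\ref{thm:NPC_bipartite_r_multi} start from general \textsc{Independent Set} instances. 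If that does not work, the fallback is to start from \textsc{Independent Set} on graphs of maximum degree~3. ETH-hardness also holds there, and it gives $m=O(n)$.

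The key step is a careful size accounting of the two constructions, for a fixed $r\geq 2$. Let $H$ be the input graph with $n_H$ vertices and $m_H$ edges. In the chordal construction, I expect a gadget for each edge of $H$: paths or pendant structures of length about $r$ attached to vertex and edge gadgets, so that the radius is pinned at $r+1$. This gives $O(r(n_H+m_H))$ vertices. The difficulty is that to make the graph chordal, the reduction may add a clique on the edge-gadget vertices. Such a clique would contribute $\Theta(m_H^2)$ edges, which is quadratic. This is the main obstacle.

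To handle it, I would look at whether the ETH lower bound is only needed in terms of the number of vertices. If $N$ and $M$ are the vertex and edge counts of the constructed graph, and $N=O(n_H+m_H)$, then a $2^{o(N+M)}$-time algorithm runs in particular in $2^{o(N)}$ time. Since $N$ is linear in the source size, this yields a $2^{o(n_H+m_H)}$-time algorithm for \textsc{Independent Set}, contradicting ETH. So a quadratic number of edges does no harm, because the running-time bound in terms of $N+M$ is weaker than the one in terms of $N$: a $2^{o(N+M)}$ algorithm is not automatically $2^{o(N)}$ when $M\gg N$. Hence I would instead argue via $M$ directly. Since $M$ is dominated by $N^2$ in general, that route fails, and so in the chordal case I would switch to sparse source instances (maximum degree~3, so $m_H=O(n_H)$). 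Even with a clique on $O(n_H)$ vertices, $M=O(n_H^2)$ is still too large, so I would also need to recheck whether the clique can be replaced by a linear-size chordal structure that keeps the radius at $r+1$, for example a central vertex adjacent to all gadget vertices. If the construction instead uses a universal-type hub vertex, then $M=O(N)$ and the counting closes.

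The bipartite reduction should be easier. Subdivided edge gadgets and pendant paths of length $r$ give $N,M=O(r(n_H+m_H))$ directly. Finally, I would compose the pieces: the reduction is polynomial and its parameter is linear, so a $2^{o(N+M)}$-time algorithm for \textsc{$r$-Multipacking} on these classes would solve \textsc{Independent Set} in $2^{o(n_H+m_H)}$ time, contradicting ETH.
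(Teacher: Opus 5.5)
You handle the bipartite class exactly as the paper does, and that part is fine. With the hub $u$ adjacent to $\mathcal{T}$, both the vertex count and the edge count of $G'$ are linear in $n_H+m_H$ for fixed $r$.

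The gap is the chordal class. You correctly see that step (iv) of the chordal construction makes $\mathcal{T}$ a clique on $m_H$ vertices. So $M=\Theta(m_H^2)$ while $N=\Theta(n_H+m_H)$. The paper's own proof misses exactly this when it asserts $O(n+m)$ edges. After first stating the implication backwards, you also correctly conclude that a $2^{o(N+M)}$-time algorithm need not run in time $2^{o(N)}$. Hence vertex-linearity alone is not enough, and restricting to sources of maximum degree~$3$ does not help.

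Your repair, however, fails. Replacing the clique by a hub $u$ adjacent to all of $\mathcal{T}$ just gives the bipartite graph of Theorem~\ref{thm:NPC_bipartite_r_multi}, which is not chordal. Whenever $v_i$ has two neighbours $v_j,v_k$ in $H$, the cycle $u\,u_{i,j}\,u_i\,u_{i,k}\,u$ is an induced $C_4$. No hub-type gadget can help, because adding new vertices never destroys an induced cycle. Without the clique, every cycle of length $\ell$ in $H$ already gives an induced $2\ell$-cycle on the vertices $u_i,u_{i,j}$.

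In fact, no edge-linear chordal variant of this reduction can exist. $G'$ contains a subdivision of $H$, so $\mathrm{tw}(G')\ge\mathrm{tw}(H)$. A chordal graph of treewidth $t$ contains a clique on $t+1$ vertices, hence $\Omega(t^2)$ edges. Edge-linearity would therefore force $\mathrm{tw}(H)=O(\sqrt{n_H+m_H})$. On such instances \textsc{Independent Set} is already solvable in $2^{O(\mathrm{tw})}n_H^{O(1)}=2^{o(n_H+m_H)}$ time, so no ETH bound can be transferred.

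What the chordal reduction does prove is a weaker statement. Letting $N$ be the number of vertices, there is no $2^{o(N)}$-time algorithm on chordal graphs of radius at most $r+1$, and hence none in time $2^{o(\sqrt{N+M})}$. The full $2^{o(N+M)}$ bound is obtained only for the bipartite class.

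Be aware that the chordal $2^{o(N+M)}$ claim may not be provable at all. A chordal graph with $M$ edges has clique number $\omega=O(\sqrt{M})$. All packing vertices within distance $r-1$ of a clique bag $K$ lie in $N_r[b]$ for any $b\in K$, so there are at most $r$ of them. A dynamic program over a clique tree can record each such vertex by its distance to $K$ and the subset of $K$ attaining it. This appears to solve the problem in $2^{O_r(\omega)}N^{O(1)}=2^{o(N+M)}$ time.
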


\subsection{Chapter 6: Multipacking in geometric domain}

 In Chapter \ref{chapter:geo_multi}, first, we study the multipackings of the point sets on $\mathbb{R}^1$. We start by bounding the multipacking number of $P\subset \mathbb{R}^1$ and we have shown the tightness of the bounds by constructing examples.

\begin{restatable}{theorem}{thmmponeDtightlowerupperbound}
\label{thm:mp 1D tight lower bound n/3 upper bound n/2}
    Let $P $ be a point set on $\mathbb{R}^1$. Then $\lfloor\frac{n}{3}\rfloor \leq \MP(P)\leq \lfloor \frac{n}{2}\rfloor$. 
    Both the bounds are tight.
\end{restatable}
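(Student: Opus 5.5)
Throughout, the points of $P\subset\mathbb{R}^1$ are sorted as $p_1<p_2<\dots<p_n$. All pairwise distances are distinct.

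\emph{Upper bound.} The plan is to use only the constraint for $r=1$. Every point $v$ has a unique nearest neighbour $\nu(v)$, and $N_1[v]=\{v,\nu(v)\}$ contains at most one point of $M$. Take the functional graph $v\mapsto \nu(v)$. Because distances are distinct, its only cycles are 2-cycles, namely mutual nearest-neighbour pairs, and each component is a tree hanging off one such pair.

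So $V$ splits into components, each containing a mutual pair $\{a,b\}$ with at most one of $a,b$ in $M$. I would show that each component contributes at most half its points to $M$. The idea is to pair each $v\in M$ with $\nu(v)$: by the constraint $\nu(v)\notin M$. In $\mathbb{R}^1$ a point is the nearest neighbour of at most two other points (one on each side). A cleaner route, however, is a direct line argument, which I would carry out instead.

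Consider consecutive points $p_i,p_{i+1}$ with both in $M$. Whichever of them has the other as its nearest neighbour violates $r=1$. But $\nu(p_i)\in\{p_{i-1},p_{i+1}\}$, so this is not automatic. I will instead use the $r$-constraint with larger $r$ more cleverly. For $v\in M$, $N_r[v]$ is a contiguous block of $r+1$ points containing $v$, and it holds at most $(r+1)/2$ points of $M$. Taking $v=p_{i}$ with $i$ the leftmost element of $M$ and $r=n-1$, the block is all of $P$, which gives $|M|\le n/2$ directly.

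In fact, taking any $v$ and $r=n-1$ gives $N_{n-1}[v]=P$, so $|M|\le \lfloor n/2\rfloor$ immediately. That is the whole upper bound.

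\emph{Lower bound.} Here the plan is constructive: take $M=\{p_2,p_5,p_8,\dots\}$, every third point, giving $|M|=\lfloor (n+1)/3\rfloor\ge\lfloor n/3\rfloor$. Since $N_r[v]$ is a contiguous interval of $r+1$ indices, it suffices to check that any window of $r+1$ consecutive indices contains at most $(r+1)/2$ chosen points.

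A window of length $L$ contains at most $\lceil L/3\rceil$ chosen points. Now $\lceil L/3\rceil\le L/2$ holds for $L\ge2$, and $L=r+1\ge2$ always. This check is the main step, and it is routine. I would double-check small cases ($L=2$: at most 1; $L=4$: at most 2; $L=5$: at most 2) and the boundary effects from choosing index $2$ first. Starting at index 2 rather than 1 is harmless, and starting at $p_1$ would also work.

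\emph{Tightness.} For the upper bound, I would use points with rapidly increasing gaps, $p_i=2^i$. Here the neighbourhoods of points near the right end sweep left in a controlled way, and I would verify that the alternate set $\{p_1,p_3,\dots\}$ (or a similar set) is a multipacking of size $\lfloor n/2\rfloor$.

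For the lower bound, I would use clusters of three points, with tight triples placed far apart and increasingly spaced. The goal is to show that any multipacking takes at most one point per triple. Two chosen points in one triple would lie inside some $N_1$ or $N_2$ neighbourhood of size 2 or 3 that contains them both. The main obstacle is designing the gaps inside each triple, and between triples, so that this always happens. I expect this is the hardest part, and would first try triples $\{x,x+1,x+3\}$ with $x$ growing geometrically.
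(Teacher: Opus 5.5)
Your proofs of the two inequalities match the paper's. The upper bound is the constraint at $r=n-1$, where $N_{n-1}[v]=P$. The lower bound takes every third point in sorted order and uses that each $N_r[v]$ is a block of $r+1\ge 2$ consecutive points, which contains at most $\lceil (r+1)/3\rceil\le (r+1)/2$ chosen points.

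For tightness of the lower bound you use isolated clusters of three, whereas the paper uses $p_i=2^i$. Your construction works when $3\mid n$: for every point of a far-isolated triple $\{x,x+1,x+3\}$, $N_2$ is the whole triple. You must perturb the triples so that pairwise distances stay distinct, since identical translates repeat distances. Note also that every third point starting from $p_1$ always gives $\lceil n/3\rceil$. So $\lfloor n/3\rfloor$ is attained only when $3\mid n$, and this is true of the paper's example as well.

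The genuine gap is your tightness example for the upper bound, which fails. For $p_i=2^i$ we have $p_i-p_j<2^i=p_{i+1}-p_i$ for every $j<i$. So the neighbours of $p_b$, in order, are $p_{b-1},p_{b-2},\dots,p_1$, and only then $p_{b+1},\dots$. Hence $N_{b-a}[p_b]=\{p_a,\dots,p_b\}$: every block of at least two consecutive points is a neighbourhood.

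Blocks of length $2$ and $3$ then force any two chosen indices to differ by at least $3$. Therefore $\MP(P)=\lceil n/3\rceil$, which is smaller than $\lfloor n/2\rfloor$ for $n=6$ and every $n\ge 8$. Concretely, $N_2[p_3]=\{p_1,p_2,p_3\}$ already rules out $\{p_1,p_3,\dots\}$, and no ``similar'' set can help. In fact $p_i=2^i$ is exactly the paper's extremal example for the \emph{lower} bound.

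To reach $\lfloor n/2\rfloor$ you need a non-monotone gap pattern with three properties:
\begin{itemize}
\item every chosen point has both its first and second neighbours unchosen;
\item every unchosen point lying between two chosen ones has its $N_2$ reaching two steps to one side;
\item the neighbourhoods that swallow a whole prefix retain some slack.
\end{itemize}
The paper uses consecutive gaps $2,1,8,7,32,31,\dots$, that is, $p_{2j+2}-p_{2j+1}=2^{2j+1}$ and $p_{2j+3}-p_{2j+2}=2^{2j+1}-1$, with $M=\{p_1,p_4,p_6,p_8,\dots\}$. Each $p_{2j+2}$ sits almost at the midpoint of $p_{2j+1}$ and $p_{2j+3}$, and all points to its left are closer to it than $p_{2j+4}$. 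Skipping $p_2,p_3$ at the start leaves exactly one unit of slack. That slack absorbs the next chosen point when a neighbourhood grows past a whole prefix.
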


We also provide an algorithm to find a maximum $r$-multipacking of a point set on a line. 


\begin{restatable}{theorem}{rkthmonedcorrectness}\label{thm:rk1D}
For any $1 \leq r \leq n-1$, a maximum $r$-multipacking for a given point set $P \subset \mathbb{R}^1$ with $|P|=n$ can be computed in $O(n^2r)$ time.
\end{restatable}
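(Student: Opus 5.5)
I would design a dynamic program that scans the sorted points from left to right. After sorting, write $P=\{p_1<p_2<\dots<p_n\}$ and record each selection as a binary string $x_1x_2\cdots x_n$, with $x_i=1$ iff $p_i\in M$. On a line, the neighborhood $N_s[v]$ of any point $v=p_j$ is a contiguous block $\{p_a,\dots,p_{a+s}\}$ that contains $p_j$. Because distances are distinct, we can grow this block one step at a time, each time adding whichever of the two boundary neighbors is closer. So in $O(nr)$ total preprocessing I compute, for every $j$ and every $s\le r$, the interval $I_{j,s}=[a_{j,s},a_{j,s}+s]$.

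Every constraint of an $r$-multipacking then has the form ``the number of selected indices in $I_{j,s}$ is at most $(s+1)/2$''. This is an interval-sum constraint on the binary string, with intervals of length at most $r+1$. The whole problem becomes: maximize $\sum x_i$ subject to a family of at most $nr$ window constraints $\sum_{i\in I}x_i\le \lfloor |I|/2\rfloor$.

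Next I would note that only one window length actually matters at each endpoint. A window constraint is dominated whenever a bigger window has the same right endpoint, or ideally the same left endpoint. For each right endpoint $b$, I would collect the relevant thresholds as a function of the left endpoint.

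The DP processes positions $b=1,\dots,n$. To avoid storing all of the last $r$ bits (exponentially many states), I would use a different state: the position of the selection together with a count. Concretely, $D[b][c]$ is the maximum size of a valid partial selection on $p_1..p_b$ whose last chosen element is $p_b$, combined with enough information to check every window ending at $b$. The cleaner choice is to note that each constraint has slack exactly $\lfloor |I|/2\rfloor$. So for every window ending at $b$, the prefix-sum difference $S_b-S_{a-1}\le\lfloor (b-a+1)/2\rfloor$ must hold, where $S$ is the prefix count. I would therefore keep $D[b][t]$ as the feasibility of prefix count $S_b=t$, and enforce each window by transitions that jump from $a-1$ to $b$. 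This resembles a difference-constraint (shortest-path) system on the prefix sums $S_0,\dots,S_n$, with constraints $0\le S_i-S_{i-1}\le 1$ and $S_b-S_{a-1}\le \lfloor (b-a+1)/2\rfloor$ for each of the $nr$ windows. Maximizing $S_n-S_0$ is then a longest/shortest path problem in a DAG-like constraint graph. Integrality is automatic because all bounds are integers.

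Solving by Bellman--Ford on $n+1$ nodes and $O(nr)$ edges gives $O(n\cdot nr)=O(n^2r)$ time, which matches the claimed bound. Correctness follows from the standard theorem for difference constraints. The maximum of $S_n-S_0$ equals the shortest-path distance from node $0$ to node $n$ in the graph with edges $(i-1\to i,1)$, $(i\to i-1,0)$, and $(a-1\to b,\lfloor (b-a+1)/2\rfloor)$. A feasible integral solution is given by the shortest-path distances themselves. The graph has no negative cycle, since all weights are nonnegative, so $S\equiv 0$ is feasible. I would then recover $M$ as $\{p_i : S_i-S_{i-1}=1\}$.

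The step I expect to need the most care is proving that the windows $I_{j,s}$ really are contiguous index intervals and are computed correctly, including at boundary cases. The other delicate point is checking that the reduction captures exactly the definition of $r$-multipacking: every $v$ and every $1\le s\le r$. Both checks are routine once the windows are in place.
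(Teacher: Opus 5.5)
Your argument is correct, but it takes a different route from the paper.

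\textbf{The paper's route.} The paper uses a greedy scan. It processes $p_1<\dots<p_n$ and adds $p_i$ whenever the current set plus $p_i$ is still an $r$-multipacking; each such check costs $O(nr)$ and there are $n$ of them. Optimality comes from an exchange argument. At the first index where the greedy set and an optimal set differ, the greedy element lies further left. Swapping it into the optimal set preserves feasibility, because every $N_s[v]$ is a contiguous block. Repeating the swap shows the greedy set cannot be smaller.

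\textbf{Your route.} You use the same contiguity fact, but turn it into interval-sum constraints, pass to prefix sums, and read the optimum off a system of difference constraints. Optimality and integrality then come from the shortest-path characterization rather than from an exchange. Every $0\to n$ path bounds $S_n-S_0$, and the distance labels form a feasible integral solution attaining that bound.

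\textbf{What each buys.} Your approach gives an explicit optimality certificate. It also carries over verbatim to any nonnegative integer capacities on the windows, for instance the fractional variant $|N_s[v]\cap M|\le k(s+1)$ from the paper's conclusion. Moreover, all edge weights are nonnegative, so you can replace Bellman--Ford by Dijkstra and obtain $O(nr+n\log n)$, which beats the stated $O(n^2r)$. The paper's greedy is more elementary and needs no shortest-path or duality machinery.

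\textbf{What to delete.} Two pieces of your write-up should be removed rather than repaired.
\begin{itemize}
\item The remark that a window constraint is dominated whenever a larger window shares its right (or left) endpoint is false. For example, $x_a+x_{a+1}+x_{a+2}\le 1$ is not implied by $x_{a-1}+x_a+x_{a+1}+x_{a+2}\le 2$: take $x_a=x_{a+2}=1$.
\item The $D[b][\cdot]$ dynamic programs are abandoned sketches. A state recording only $S_b$ cannot enforce a window $[a,b]$, which couples $S_b$ with $S_{a-1}$; that coupling is exactly why the global difference-constraint formulation is needed.
\end{itemize}
Neither piece is used in your final argument, so the proof stands once they are deleted.
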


\noindent Theorem \ref{thm:rk1D} yields that, a maximum multipacking for a given point set $P \subset \mathbb{R}^1$ with $|P|=n$ can be computed in $O(n^3)$ time, since an $(n-1)$-multipacking is nothing but a multipacking.



Further, we define the function $\MMP_{r}(t)$ to be the smallest number such that any point set $P \subset \mathbb{R}^2$ of size $ \MMP_{r}(t)$ admits an $r$-multipacking of size at least $t$. We denote $\MMP_{n-1}(t)$ as $\MMP(t)$, i.e. $\MMP(t)$ is the smallest number such that any point set $P \subset \mathbb{R}^2$ of size $ \MMP(t)$ admits a \emph{multipacking} of size at least $t$. It can be followed that $\MMP(1)=1$ from the definition of multipacking. No general solution has been found for this problem yet. 


We provide the value of $\MMP(2)$ in the following theorem:

\begin{restatable}{theorem}{MPtwo}\label{thm:MP2}
    $\MMP(2)=6$.
\end{restatable}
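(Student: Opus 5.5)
The plan is to first reduce the multipacking condition for a pair to a simple combinatorial statement. Then we exhibit a $5$-point set with no multipacking of size $2$, and finally show that every $6$-point set has one.

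\textbf{Reduction.} Let $M=\{a,b\}$. For $r\ge 3$ the condition $|N_r[v]\cap M|\le (r+1)/2$ is automatic, since $(r+1)/2\ge 2$. For $r=1,2$ the bound $(r+1)/2<2$ forces $|N_r[v]\cap M|\le 1$. Since $N_1[v]\subseteq N_2[v]$, the pair $\{a,b\}$ is a multipacking if and only if
\[\{a,b\}\not\subseteq N_2[v]\quad\text{for every } v\in P.\]
Each set $N_2[v]$ has three points and therefore ``kills'' exactly three pairs. These are the two pairs $\{v,\mathrm{nn}_1(v)\}$ and $\{v,\mathrm{nn}_2(v)\}$, and the pair $\{\mathrm{nn}_1(v),\mathrm{nn}_2(v)\}$. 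So $P$ admits a multipacking of size $2$ if and only if some pair survives all $n$ of these triangles.

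\textbf{Lower bound, $\MMP(2)>5$.} Take a slightly perturbed regular pentagon $v_0,\dots,v_4$, perturbed so that all distances are distinct. The perturbation is small enough that the two nearest neighbours of $v_i$ remain $v_{i\pm1}$. Then $N_2[v_i]$ kills the two sides $v_iv_{i\pm 1}$ and the diagonal $v_{i-1}v_{i+1}$. As $i$ ranges, this covers all $5$ sides and all $5$ diagonals, so no pair survives. For completeness the same argument handles smaller sets. A perturbed square works for $n=4$: each $N_2[v]$ kills two sides and a diagonal. For $n\le 3$, every $N_2[v]$ is all of $P$. Hence no set of at most $5$ points is guaranteed to contain a multipacking of size $2$.

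\textbf{Upper bound, every $6$-point set works.} Let $\{a,b\}$ be a diametral pair of $P$. The pair is not killed by $N_2[a]$: otherwise at most one other point is closer to $a$ than $b$ is, so at least three points lie at distance $>|ab|$ from $a$, contradicting diametrality. The same holds for $N_2[b]$.

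The remaining danger is a point $v\notin\{a,b\}$ whose two nearest neighbours are exactly $a$ and $b$. In that case, put $\rho=\max(|va|,|vb|)$; the other three points $c,d,e$ lie outside the disc $D(v,\rho)$ but inside the lens $D(a,|ab|)\cap D(b,|ab|)$. From $|ab|\ge |va|,|vb|$ we get $\angle avb\ge 60^\circ$. I would use these angle and distance constraints to locate $c,d,e$ in the part of the lens ``behind'' $v$. I would then argue that a second candidate pair survives, such as $\{a,x\}$ where $x$ is the farthest point from $a$ among $c,d,e$, or the diametral pair of $P\setminus\{a\}$.

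To handle this I would repeat the diametral-type argument for the new pair and do a short case analysis on which of the $6$ triangles could kill it. The case analysis is guided by the counting bound: $6$ triangles kill at most $18$ pairs out of $15$, so at most $3$ kills may be wasted on pairs already killed. Controlling this second case — that is, showing that a point having both diameter endpoints as its two nearest neighbours forces enough overlaps among the killed pairs — is the step I expect to be the main obstacle. My first attempt there is the angular packing fact that a point is among the two nearest neighbours of only boundedly many others.
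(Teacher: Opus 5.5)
Your reduction (a pair $\{a,b\}$ is a multipacking iff no $N_2[v]$ contains both points) is correct, and so are your lower-bound constructions; the perturbed pentagon is exactly the paper's example. The gap is the upper bound. You never show that some pair survives when a point $v$ has the two diameter endpoints $a,b$ as its two nearest neighbours, and this case really occurs. For instance, take $a=(-1,0)$, $b=(1,0)$, $v=(0.01,0)$, $c=(0.01,1.3)$, $d=(-0.6,1.1)$, $e=(0.6,1.05)$: all $15$ distances are distinct, $|ab|=2$ is the diameter, and $N_2[v]=\{v,a,b\}$. The tools you propose do not close this case. The count of $18$ kills versus $15$ pairs only tells you that you must exhibit at least four redundant kills, and you give no mechanism for producing them. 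The angular packing bound (a point lies among the two nearest neighbours of at most about a dozen others) is vacuous for six points. Your candidate pairs, $\{a,x\}$ and the diametral pair of $P\setminus\{a\}$, are never shown to survive. As written, $\MMP(2)\le 6$ is therefore unproved.

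The missing idea is a different extremal choice in place of the diametral pair. For $u\in P$, let $\rho(u)$ be the distance from $u$ to its second nearest neighbour, and choose $u$ with $\rho(u)$ maximal. Let $u_1,u_2$ be the two nearest neighbours of $u$, and let $w_1,w_2,w_3$ be the remaining three points. For each $i$ we have $d(u,w_i)>\rho(u)\ge\rho(w_i)$, so $u\notin N_2[w_i]$; also $w_i\notin N_2[u]$. Hence each pair $\{u,w_i\}$ can only be killed by $N_2[u_1]$ or $N_2[u_2]$. Each of these three-element sets contains its own centre, so if it also contains $u$ it has room for at most one $w_i$. Thus at most two of the three pairs $\{u,w_i\}$ are killed, and the remaining one is a multipacking of size $2$. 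This is the paper's argument. It uses no geometry beyond the definition of $\rho$, and it works verbatim for every $n\ge 6$.
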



Until now, there is no known polynomial-time algorithm to find a maximum multipacking for a given point set $P\subset \mathbb{R}^2$, and the problem is also not known to be NP-hard. Here, we make an important step towards solving the problem by studying the $1$-multipacking and $2$-multipacking in $\mathbb{R}^2$. We considered the NNG (Nearest-Neighbor-Graph) $G$ of $P\subset \mathbb{R}^2$. Using the observation, the maximum independent sets of $G$ are the maximum $1$-multipackings of $P$, we conclude that a maximum $1$-multipacking can be computed in polynomial time. However the $2$-multipacking problem in $\mathbb{R}^2$ is much more complicated. We have studied the hardness of the $2$-multipacking problem in $\mathbb{R}^2$. We reduce the well-known NP-complete problem Planar Rectilinear Monotone 3-SAT~\cite{de2012optimal} to our problem to show the following. 

\begin{restatable}{theorem}{NPC}\label{thm:NPC}
Computing a maximum $2$-multipacking of a given set of $n$ points $P$ in $\mathbb{R}^2$ is \textsc{NP-hard}.
\end{restatable}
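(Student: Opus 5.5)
The plan is a polynomial-time reduction from \textsc{Planar Rectilinear Monotone 3-SAT}~\cite{de2012optimal} to the decision version of the problem: given a point set $P\subset\mathbb{R}^2$ and an integer $k$, does $P$ admit a $2$-multipacking of size at least $k$? Membership in NP is clear, since $N_1[v]$ and $N_2[v]$ can be computed for every $v$. A $2$-multipacking $M$ is exactly a set such that:
\begin{itemize}
\item[(a)] no point together with its nearest neighbour both lie in $M$ (the $s=1$ constraint, $|N_1[v]\cap M|\leq 1$);
\item[(b)] every triple $N_2[v]$ contains at most one point of $M$ (the $s=2$ constraint, $|N_2[v]\cap M|\leq 3/2$).
\end{itemize}
So the problem is an independent-set-like problem on the hypergraph of all triples $N_2[v]$, and (b) implies (a). Gadgets will be built from tight clusters of points whose $2$-neighbourhoods we control exactly. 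We perturb slightly so that all distances are distinct.

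\textbf{Variable gadget.} Start from the rectilinear embedding of the monotone formula: variables on a horizontal line, positive clauses above, negative clauses below, connected by vertical and horizontal legs. Each variable $x_i$ becomes a closed chain, or "wire loop", of points spaced so that consecutive points along the chain are each other's nearest neighbours. The spacing should be slightly non-uniform, so that $N_2[v]$ is always $v$ together with its two chain-neighbours. Then a maximum $2$-multipacking restricted to the chain picks every third point. With chain length $3m_i$ there are exactly three optimal phases, which have $m_i$ points each. We engineer the chain so that only two of the phases are compatible with the rest of the gadget, for instance by adding short spurs whose triples kill the third phase. These two phases encode \textsc{true} and \textsc{false}. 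The legs toward clauses are chains of the same type attached to the variable loop, and they propagate the phase.

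\textbf{Clause gadget.} Each clause is a small cluster where the three incoming legs terminate. Near it we place one extra point $c_j$ whose $2$-neighbourhood triples each contain $c_j$ and the terminal point of one leg. The terminal point of a leg is selected exactly when its literal is false. Hence $c_j$ can be added to $M$ iff at least one literal in the clause is true. We set $k=\sum(\text{chain optima})+\#\text{clauses}$. Correctness then has two directions:
\begin{itemize}
\item a satisfying assignment yields a $2$-multipacking of size $k$ by explicit selection;
\item conversely, a counting argument shows that each chain contributes at most its optimum, each clause point at most one, and reaching $k$ forces every chain to be in a consistent phase and every clause point to be chosen, which yields a satisfying assignment.
\end{itemize}

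\textbf{Main obstacle.} The hardest step is the geometric realization: guaranteeing that every point's two nearest neighbours are exactly the intended ones. This matters most at junctions, which are where legs meet the variable loop, the bends of rectilinear legs, and the clause cluster with three incoming legs. Stray triples there could either forbid intended selections or allow cheating. I would first try to fix junction and bend templates at a constant scale, placed at integer grid positions from the planar rectilinear layout and separated by gaps much larger than the intra-chain spacing, so that the $N_2$ relations are purely local. I would then verify the few templates by hand, including the claim that no phase shift at a bend gains an extra point. Polynomiality follows because the layout has polynomial-size grid coordinates.
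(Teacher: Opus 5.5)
\textbf{Clause gadget (genuine gap).} Most of your framework matches the paper's. You reduce from Planar Rectilinear Monotone 3-SAT, and you read the $s=2$ condition as ``at most one point of $M$ in every triple $N_2[v]$'', which is the paper's independent-set formulation in $G_P$. You use periodic chain gadgets and a counting target. Your closed variable loop, with a spur killing the third phase, is a reasonable substitute for the paper's open triangle chain $v_i^1\cdots v_i^p$.

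The clause gadget, however, is logically wrong. You let a single point $c_j$ share a triple with the terminal point $t_\ell$ of each of the three legs. Then $c_j\in M$ forces $t_1,t_2,t_3\notin M$. With your convention (terminal selected iff literal false), $c_j$ can be added iff \emph{all three} literals are true. That is a conjunction, so the step ``hence $c_j$ can be added iff at least one literal is true'' is false.

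Because your target $k$ forces every $c_j$ into $M$, the instance encodes ``every literal of every clause is true''. This fails whenever a variable occurs in both a positive and a negative clause, so satisfiable formulas would be mapped to no-instances. No single point can realize a disjunction, since a point can be added only when all of its conflicting points are absent.

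You need a one-of-three gadget. Take three points $c_j^1,c_j^2,c_j^3$ lying in a common triple, so at most one of them is in $M$. Arrange that $c_j^\ell$ shares a triple only with the terminal of leg $\ell$. The clause then contributes $1$ iff some terminal is unselected, i.e.\ iff some literal is true. This is exactly the paper's central triangle $\triangle c_{x_i}c_{x_j}c_{x_k}$, each vertex of which is wired to one variable through $p_\ell q_\ell r_\ell$ and the connecting gadgets.

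\textbf{Legs (secondary point).} Your claim that reaching $k$ ``forces every chain to be in a consistent phase'' holds for the closed loops but not for the legs. A leg is an open chain whose conflict structure is the square of a path, and its optimal selections are not rigid. On $w_1,\dots,w_{3q}$, the sets $\{w_1,w_4,\dots,w_{3q-2}\}$ and $\{w_3,w_6,\dots,w_{3q}\}$, and sets that switch phase part-way along, all have the optimal size $q$. So a leg does not simply copy the loop's phase to the clause.

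You must choose leg lengths and end couplings so that the only cost-free shift is towards ``terminal selected'', which can only hurt the clause. You then need to verify this together with the junction and bend templates you postpone. In the paper, this role is played by the translating and rotating gadgets (six points each, maximum independent set $2$).
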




Moreover, we provided approximation and parameterized solutions to this problem. We have shown that a maximum $2$-multipacking of a given set of points $P$ in $\mathbb{R}^2$ can be approximated in polynomial time within a ratio arbitrarily close to $4$ and $2$-multipacking of size $k$ of a given set of $n$ points in $\mathbb{R}^2$ can be computed in time $18^kn^{O(1)}$, if it exists.




\subsection{Chapter 7: Dominating Broadcast in $d$-dimensional space}

 In Chapter \ref{chapter:geo_broad}, we address the problem of computing the minimum dominating broadcast for a point set in $\mathbb{R}^d$.  We start by proving the following theorem that helps to design an algorithm for finding a minimum broadcast of a given point set $P$ in $\mathbb{R}^d$. 

\begin{restatable}{theorem}{thmbroadcastzeroone}
\label{thm:f(u)in0,1}
    Let $P $ be a point set in $\mathbb{R}^d$. There exists a minimum broadcast $f$ such that $f(u)\in \{0,1\}$ for each point $ u \in P  $.
\end{restatable}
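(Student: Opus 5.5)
Starting from a minimum broadcast, I will repeatedly replace a tower of strength greater than $1$ by towers of strength exactly $1$, never increasing the cost and never losing domination. Fix a minimum broadcast $f$ and measure its distance from the desired form by $\Phi(f)=\sum_{v: f(v)\ge 2} f(v)$. If $\Phi(f)=0$ we are done. Otherwise pick a tower $v$ with $k=f(v)\ge 2$. The ball $N_k[v]$ consists of exactly $k+1$ points: $v$ and its $k$ nearest neighbours $p_1,\dots,p_k$. This is the key difference from the graph setting: in a graph, a ball of radius $k$ may contain far more than $k+1$ vertices, but here the number of points dominated by a tower of strength $k$ is exactly $k+1$.

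\textbf{Replacement.} Define $f'$ by $f'(v)=1$ and leave every other value unchanged, except as follows. Since $f'(v)=1$, the tower $v$ still dominates itself and $p_1$, its nearest neighbour. The points $p_2,\dots,p_k$ still need to be covered. For each such $p_i$ with $f(p_i)=0$, set $f'(p_i)=1$; a tower of strength $1$ at $p_i$ dominates $p_i$ itself. If $f(p_i)>0$ already, then $p_i$ dominates itself and nothing changes. The cost of $f'$ is then at most $\sigma(f)-k+1+(k-1)=\sigma(f)$. Every point dominated under $f$ by some tower other than $v$ is still dominated, since those towers keep their strengths (or increase from $0$ to $1$). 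Every point of $N_k[v]$ is dominated as argued above. So $f'$ is a dominating broadcast of cost at most $\gamma_b(P)$, hence a minimum broadcast.

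\textbf{Termination.} In passing from $f$ to $f'$, the strength of $v$ drops from $k\ge 2$ to $1$, and every other point whose value changes goes from $0$ to $1$. So no value that was $\ge 2$ increases and no new value $\ge 2$ is created, giving $\Phi(f')\le\Phi(f)-k<\Phi(f)$. Iterating therefore terminates at a minimum broadcast with all values in $\{0,1\}$.

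\textbf{Where care is needed.} The main point to check is that strength $f(v)=k$ dominates exactly $k+1$ points, namely $N_k[v]$ with $|N_k[v]|=k+1$ as defined, so that spending one unit per point is never worse. This depends on the distinct-distance assumption, which makes the $i$-th neighbour unique. A minor subtlety is $k=n-1$: the argument is unchanged, since $N_{n-1}[v]=P$ and we simply use singleton towers on the $n-2$ points other than $v$ and $p_1$. An alternative direct argument would be to take a set $S$ of points such that each undominated point has a cost-$1$ tower; the charging above is exactly that.
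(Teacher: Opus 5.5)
Your proof is correct and follows the paper's argument. You replace a tower $v$ of strength $k\ge 2$ by unit-strength towers on $N_k[v]\setminus\{p_1\}$ and iterate. The paper organizes the iteration through an extremal choice of a minimum broadcast minimizing $|\{u: f(u)>1\}|$, which your potential $\Phi$ mirrors. The only real difference is that the paper first proves Lemma~\ref{lem:f(w)=0}, showing $p_2,\dots,p_k$ have value $0$ in any minimum broadcast, so that the cost is preserved exactly. Your observation that $\sigma(f')\le\sigma(f)$ already suffices makes that auxiliary lemma unnecessary.
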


We use Theorem \ref{thm:f(u)in0,1} and the construction of nearest neighbor graph (NNG) of $P$ to provide an algorithm to find a minimum broadcast of $P$ in $O(n\log n)$ time, where $|P| = n$.  We state that in the following theorem:

\begin{restatable}{theorem}{thmbroadcastalgortihmcomplexity}\label{thm:broadcast_algortihm_complexity_nlogn}
  A minimum broadcast on a given point set $P$ in $\mathbb{R}^d$ can be computed in $O(n\log n)$ time, where $|P| = n$. 
\end{restatable}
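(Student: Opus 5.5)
The algorithm rests on Theorem \ref{thm:f(u)in0,1}: some minimum broadcast uses only powers $0$ and $1$. So $\gamma_b(P)$ equals the minimum size of a set $T\subseteq P$ such that every point $u\in P$ either lies in $T$ or is the nearest neighbour of some $v\in T$. The point is that $N_1[v]=\{v,\mathrm{nn}(v)\}$, where $\mathrm{nn}(v)$ denotes the unique nearest neighbour of $v$. Therefore the plan is:
\begin{enumerate}
\item build the nearest neighbor graph of $P$;
\item solve this restricted covering problem exactly on its structure in linear time.
\end{enumerate}

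\textbf{Step 1 (building the graph).} Build the directed nearest neighbor graph $D$ on $P$, with an arc $v\to \mathrm{nn}(v)$ for every $v$. By the classical all-nearest-neighbors algorithm (Vaidya; Clarkson), this takes $O(n\log n)$ time in fixed dimension $d$. This step is the bottleneck of the whole procedure.

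\textbf{Step 2 (structure of $D$).} Each vertex has out-degree exactly $1$, and distances are distinct. Along any directed walk $v_1\to v_2\to\cdots$ the distances $d(v_i,v_{i+1})$ are non-increasing, and they are strictly decreasing unless $v_{i+2}=v_i$. Hence the only cycles in $D$ are $2$-cycles formed by mutual nearest neighbours. Consequently each weakly connected component of $D$ is an in-tree rooted at a mutual nearest pair $\{a,b\}$. Forgetting directions, each component is a tree.

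\textbf{Step 3 (reformulating the cover).} A tower $v$ covers exactly $v$ and its out-neighbour, that is, $v$ and its parent in the in-tree. The root pair $a,b$ cover each other. So the problem becomes the following. On each component tree, choose a minimum set $T$ such that every vertex is either in $T$ or has a child in $T$. The roots $a$ and $b$ count as children of each other. This is a "child-domination" problem on a tree.

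\textbf{Step 4 (exact linear-time solution).} Solve Step 3 by a bottom-up dynamic program, or equivalently a greedy rule, on each component. Process vertices in post-order. A leaf can only be covered by itself, so every leaf is in $T$, and it then covers its parent. Inductively, a vertex that is not yet covered when all its children are processed must itself be chosen. The exchange argument is that choosing an uncovered vertex $x$ is forced: $x$ can only be covered by itself or by one of its children, and all children are already fixed.

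One correction is needed: marking covered vertices greedily is not quite right. A chosen child covers its parent, but covering the parent does not help the grandparent. So the correct rule is: $v\in T$ if and only if no child of $v$ is in $T$. At the root pair, apply the same rule with the two roots treated as children of each other, trying both options for the root pair.

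This runs in $O(n)$ time. Summing over components and adding Step 1 gives $O(n\log n)$ overall, and the resulting $0/1$ broadcast is a minimum broadcast by Theorem \ref{thm:f(u)in0,1}. The main correctness obstacle is Step 4's optimality argument, and the runtime bottleneck is Step 1.
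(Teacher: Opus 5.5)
Your proposal is correct, and its overall structure matches the paper's. Both reduce to powers $0/1$ via Theorem~\ref{thm:f(u)in0,1} and build the nearest-neighbour graph in $O(n\log n)$ time (Vaidya, Clarkson). Both note that every weakly connected component is a tree attached to a mutual-nearest-neighbour pair, and both solve the residual problem on this forest in linear time.

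The only genuine difference is in that last step. The paper identifies each tower $v$ with the edge $v\,\mathrm{nn}(v)$ and proves $\gamma_b(P)=\rho(G)$ (Lemma~\ref{lem:MEC}). It then computes a minimum edge cover as a maximum matching in the forest, extended greedily. You solve the same problem directly: under that identification your ``child-domination'' condition is exactly the edge-cover condition, and you handle it with a post-order greedy. Your route is self-contained, with no appeal to tree matching algorithms or the matching/edge-cover relation. The paper's route instead delegates correctness to those standard results.

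Two remarks on your Step~4.

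\begin{itemize}
\item Your ``correction'' changes nothing. When $v$ is processed, it is uncovered exactly when no child of $v$ lies in $T$, so both formulations give the same rule.
\item Your justification covers only the forced direction. You also need the other half of the exchange argument. Suppose an optimal solution agrees with the greedy on all previously processed vertices but contains a vertex $v$ that already has a chosen child. Replace $v$ by its parent $\mathrm{nn}(v)$, which is processed later; at the root pair, drop or swap $v$ instead. This preserves feasibility and does not increase the size, so by induction the greedy output is optimal.
\end{itemize}
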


Kissing number \cite{levenshtein1979bounds,musin2008kissing,odlyzko1979new,pfender2004kissing} is a very well-known topic in the field of mathematics and computer science. 
The \textit{kissing number}, denoted by $\tau_d$, is the maximum number of non-overlapping (i.e., having disjoint interiors)  unit balls in the Euclidean $d$-dimensional space $\mathbb{R}^d$ that can touch a given unit ball. For a survey of kissing numbers, see \cite{boyvalenkov2015survey}. 

The values of $\tau_d$ for small dimensions are known: $\tau_1=2$, $\tau_2=6$, $\tau_3=12$ \cite{anstreicher2004thirteen,boroczky2003newton,maehara2001isoperimetric,schutte1952problem}, $\tau_4=24$ \cite{musin2008kissing}, etc. However, the exact value of $\tau_d$ for general $d$ remains unknown. Nevertheless, bounds for $\tau_d$ are available.

The best known upper bound was given by Kabatjanskii and Levenstein~\cite{kabatjansky1978bounds},  yielding
\[
\tau_d \leq 2^{0.4041...\cdot d}.
\]

For the lower bound,  Jenssen, Joos, and Perkins~\cite{jenssen2018kissing} proved that
\[
\tau_d \geq (1 + o(1)) \frac{\sqrt{3\pi}}{2\sqrt{2}} \log \frac{3}{2\sqrt{2}} \cdot d^{3/2} \cdot \left( \frac{2}{\sqrt{3}} \right)^d.
\]

Later, Fernandez et al.~\cite{fernandez2025new} improved the  constant factor of the kissing number lower bound in high dimensions. They have shown that, for $d \to \infty$, the kissing number satisfies
\[
\tau_d \geq (1 + o(1)) \frac{\sqrt{3\pi}}{4\sqrt{2}} \log \frac{3}{2} \cdot d^{3/2} \cdot \left( \frac{2}{\sqrt{3}} \right)^d.
\]

We provide an upper  bound for broadcast domination number of $P \subseteq \mathbb{R}^d$ using kissing number. 

\begin{restatable}{theorem}{thmgammablowerboundford}\label{thm:gammab lower bound n/2 for d}
    Let $P $ be a point set on $\mathbb{R}^d$ with $|P|=n$. Then $\frac{n}{2}\leq \gamma_b(P)\leq \frac{\tau_d}{\tau_d+1} n$, where $\tau_d$ is the kissing number in $\mathbb{R}^d$. The lower bound is tight.
\end{restatable}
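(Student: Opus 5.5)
The plan is to reduce everything to Theorem~\ref{thm:f(u)in0,1}. By that theorem, there is a minimum broadcast taking values in $\{0,1\}$. Such a broadcast is simply a set $S\subseteq P$ of towers, each covering itself and its nearest neighbour, and $\gamma_b(P)=|S|$. So the whole statement becomes a question about covering $P$ by closed "nearest-neighbour pairs" $N_1[v]=\{v,\mathrm{nn}(v)\}$.

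\emph{Lower bound.} Each tower with $f(v)=1$ covers exactly $|N_1[v]|=2$ points, so $2|S|\ge n$. Hence $\gamma_b(P)\ge n/2$.

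For tightness, I would take $n$ points on a line forming $n/2$ well-separated pairs. Within each pair the two points are very close, the gaps between pairs are large, and the distances are perturbed slightly to keep them distinct. Choosing one point from each pair as a tower of strength $1$ covers $P$ at cost $n/2$. (Alternatively, Proposition~\ref{prop:mp_gammab_geo_ineq} is not needed here; the direct count suffices.)

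\emph{Upper bound.} Consider the nearest-neighbour graph $G$ on $P$, with an edge $\{v,\mathrm{nn}(v)\}$ for each $v$. A set $S$ is a valid $\{0,1\}$-broadcast iff every $u\notin S$ either is the nearest neighbour of some $v\in S$, or has $u=v$. Equivalently, $S$ dominates $P$ in the directed sense: $u$ is covered by $v$ iff $v\in S$ and $\mathrm{nn}(v)=u$.

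The key geometric fact is that the in-degree in $G$ (the number of $v$ with $\mathrm{nn}(v)=u$) is at most $\tau_d$. This is the standard kissing-number argument. If $v_1,v_2$ both have $u$ as nearest neighbour, then $|v_1v_2|>\max(|uv_1|,|uv_2|)$, so the angle $\angle v_1uv_2>60^\circ$. Hence at most $\tau_d$ such directions exist.

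Because distances are distinct, $G$ (viewed undirected) is a forest in which each component contains exactly one mutual nearest pair. I would then choose $S$ greedily as follows. Take $P\setminus I$, where $I$ is a large set of points each of which has some nearest-neighbour "parent" in $S$. Concretely, I want an assignment in which every point not in $S$ is $\mathrm{nn}(v)$ for some $v\in S$, with $|P\setminus S|\ge n/(\tau_d+1)$. I would get this by picking an independent set $I$ in $G$ that has at least $n/(\tau_d+1)$ points and in which every $u\in I$ has in-degree at least $1$ from $P\setminus I$. Then I set $S=P\setminus I$: each $u\in I$ has some $v$ with $\mathrm{nn}(v)=u$, and $v\notin I$.

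\emph{Constructing $I$.} To build $I$, I would process each tree component of $G$ rooted at its mutual pair. I would partition the points into "stars": each point $u$ with nonzero in-degree together with its in-neighbours. Each star has size at most $\tau_d+1$. I then select one centre per star, ensuring the centres are nonadjacent. Leaves (in-degree $0$) must lie in $S$ anyway; this forces the counting to be done per star. The main obstacle is this combinatorial step: proving that the stars can be chosen disjoint, that every point lies in some star, and that the chosen centres form an independent set of size at least $n/(\tau_d+1)$.

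I expect to handle it by a bottom-up induction on the rooted forest: repeatedly take a deepest non-leaf $u$, put $u$ into $I$, put its children (all in-neighbours, at most $\tau_d$) into $S$, delete them, and recurse. Roots and the mutual pair will need a small separate case, using that the two mutual neighbours cover each other.
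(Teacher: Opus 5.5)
Your lower bound and tightness example match the paper's: each unit tower covers exactly two points, and $n/2$ well-separated close pairs attain $n/2$.

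For the upper bound you take a genuinely different route. The paper fixes an optimal $(0,1)$-broadcast. It groups each non-tower $u_i$ with the towers whose nearest neighbour is $u_i$ into a set $S_i$, bounds the towers per group by the in-degree bound $\tau_d$, and sums $t_i\le\frac{\tau_d}{\tau_d+1}|S_i|$. That is shorter and needs no processing of the forest. However, its last step tacitly uses $\sum_i t_i=\gamma_b(P)$ and $\sum_i|S_i|=n$, i.e.\ that no tower's nearest neighbour is itself a tower. This fails for some optimal broadcasts: for the collinear points $0,2,3$, towers at $0$ and $2$ are optimal, yet the tower at $0$ lies in no $S_i$. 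So the optimum would have to be chosen with care. Your construction instead builds an explicit broadcast $S=P\setminus I$ from a partition of \emph{all} of $P$ into stars, each being one centre in $I$ plus at most $\tau_d$ of its in-neighbours. Exhaustiveness therefore comes for free. In terms of Lemma~\ref{lem:MEC}, you are showing that the NNG has a matching of size at least $n/(\tau_d+1)$, since $\gamma_b(P)=\rho(G)=n-\nu(G)$.

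Two details of your sketch need adjusting.

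First, drop the requirement that the centres be nonadjacent. The deepest-non-leaf peeling does not give it: after $u$ and its children are removed, the parent of $u$ may later be processed as a centre. You also never use it. You only need each centre to have a child in its own star, which holds because it was a non-leaf when processed.

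Second, for the roots, root each component at one vertex $a$ of the mutual pair, with its partner $b$ as a child, so every parent is the nearest neighbour of its child.
\begin{itemize}
\item Each step deletes a pendant star, so what remains is again a single rooted tree (or nothing).
\item When no non-leaf is left, the only possible survivor is $a$. In that case $b$ was processed as a centre, since otherwise it could only have been removed as a child of $a$.
\item Since $\mathrm{nn}(a)=b$, you may add $a$ to the star of $b$. That star is then $b$ plus a subset of its in-neighbours, so it has at most $1+\mathrm{indeg}(b)\le\tau_d+1$ points.
\end{itemize}
With these changes every star has at most $\tau_d+1$ points and exactly one point of $I$. This gives $|I|\ge n/(\tau_d+1)$ and $\gamma_b(P)\le\frac{\tau_d}{\tau_d+1}n$.

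There is also an even shorter version of the combinatorial step. The same $60^\circ$ argument shows that the out-neighbour of a point $c$ is also at angle $>60^\circ$ from each in-neighbour of $c$. Hence the undirected NNG has maximum degree at most $\tau_d$. A minimum edge cover is a star forest, so each star contributes at most $\tau_d$ towers per $\tau_d+1$ points.
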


Moreover, we provide an upper and lower bound for broadcast domination number of $P \subseteq \mathbb{R}^2$ and construct examples to show the tightness of the bounds.

\begin{restatable}{theorem}{thmgammablowerbound}\label{thm:gammab lower bound n/2}
    Let $P $ be a point set on $\mathbb{R}^2$ with $|P|=n$. Then $ \gamma_b(P)\leq \frac{5n}{6}$. The bound is tight.
\end{restatable}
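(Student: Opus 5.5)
The plan is to reduce the broadcast problem to a matching problem on the nearest neighbor graph, and then bound the matching using the planar geometry.

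\textbf{Step 1 (reduction).} By Theorem~\ref{thm:f(u)in0,1} we may take a minimum broadcast $f$ with $f(u)\in\{0,1\}$. A tower $v$ with $f(v)=1$ dominates exactly $N_1[v]=\{v,\mathrm{nn}(v)\}$, where $\mathrm{nn}(v)$ is the unique nearest neighbor of $v$. Let $G$ be the (undirected) nearest neighbor graph of $P$, with an edge $v\,\mathrm{nn}(v)$ for every $v$.

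Then $\gamma_b(P)$ is the minimum number of such pairs and singletons needed to cover $P$. From a matching $\mathcal M$ of $G$ we get a dominating broadcast of cost $n-|\mathcal M|$: for each matched edge $v\,\mathrm{nn}(v)$ we put the tower at the endpoint $v$ whose nearest neighbor is the other endpoint, and every unmatched point broadcasts to itself. Hence $\gamma_b(P)\le n-\nu(G)$, where $\nu(G)$ is the matching number of $G$. (Equality also holds, though it is not needed here.) So it suffices to show that $\nu(G)\ge n/6$.

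\textbf{Step 2 (structure of $G$).} Orient each edge as an arc $v\to\mathrm{nn}(v)$, so every point has out-degree exactly $1$ and there are $n$ arcs.

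Because all pairwise distances are distinct, following nearest-neighbor arcs strictly decreases the distance. Therefore the only directed cycles are mutual pairs, and $G$ is a forest with no isolated vertices, hence bipartite.

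The key geometric fact is that every point has in-degree at most $5$. If $u\ne u'$ both have $v$ as nearest neighbor, then $|uu'|>\max(|uv|,|u'v|)$, since otherwise $u'$ (respectively $u$) would be strictly closer to $u$ (respectively $u'$) than $v$ is. It follows that $\angle uvu'>60^\circ$, so at most $5$ such points fit around $v$.

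\textbf{Step 3 (counting via K\"onig).} Since $G$ is bipartite, $\nu(G)=\tau(G)$, where $\tau(G)$ is the minimum vertex cover. Let $C$ be a minimum vertex cover. Each of the $n$ arcs $u\to w$ has $u\in C$ or $w\in C$, so we may charge it either to the out-arc of $u$ or to an in-arc of $w$. A cover vertex therefore absorbs at most $1+5=6$ arcs.

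Note that we count arcs, not edges: a mutual pair contributes two arcs but only one edge. This is exactly what turns the naive bound $(n-1)/6$ into $n/6$, and it is the step I expect to be the main point to get right. It gives $n\le 6|C|$, so $\nu(G)\ge n/6$ and hence $\gamma_b(P)\le 5n/6$.

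\textbf{Tightness.} Take a regular pentagon with circumradius $1$ together with its (slightly perturbed) center. The pentagon's side is $\approx1.176>1$, so the center is the nearest neighbor of all five outer points. The graph $G$ is then a star $K_{1,5}$ with $\nu(G)=1$.

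For the lower bound we need to know that no point covers more than two points of this configuration. A tower $v$ with $f(v)=1$ covers only $\{v,\mathrm{nn}(v)\}$. A tower with $f(v)=r\ge2$ covers at most $r+1\le 2r$ points, so per unit of cost it is never better than strength-$1$ towers. Hence each tower covers at most one point beyond what its cost pays for, and at most one edge of the star can be used. Therefore $\gamma_b=5=\frac56\cdot 6$.

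Placing $k$ copies of this configuration very far apart keeps all nearest-neighbor relations (and all small neighborhoods) inside each copy. This gives $n=6k$ points with $\gamma_b(P)=5k=5n/6$, so the bound is tight.
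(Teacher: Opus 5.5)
Your upper-bound argument is correct, and it takes a different route from the paper's. The paper fixes a $\{0,1\}$-minimum broadcast (Theorem~\ref{thm:f(u)in0,1}). It partitions points into disjoint sets $S_i$, each consisting of one non-tower $u_i$ together with the towers broadcasting it. It bounds the towers in each $S_i$ by the maximum in-degree of the nearest neighbour graph (at most $\tau_2=6$, lowered to $5$ by the distinct-distance assumption), and concludes that towers form at most a $5/6$ fraction of $P$. You never look at an optimal broadcast. You build a broadcast of cost $n-\nu(G)$ from a matching, and lower-bound $\nu(G)$ by K\"onig's theorem plus the arc-charging count ($1$ out-arc and at most $5$ in-arcs per cover vertex). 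Both routes give $\frac{\Delta^-}{\Delta^-+1}n$ for maximum in-degree $\Delta^-$, so the generality is the same. What yours buys is robustness. The paper's step $\sum_i t_i=\gamma_b(P)$ tacitly needs every tower's nearest neighbour to be a non-tower. That fails for some optimal broadcasts (e.g.\ points $0,3,4$ on a line with towers at $0$ and $3$), so the optimum must be chosen with care; your construction avoids the issue entirely. Your $\angle uvu'>60^\circ$ argument for the in-degree bound is also cleaner than the kissing-number remark.

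The tightness half, however, is not yet proved. The observation that a strength-$r$ tower covers $r+1\le 2r$ points only yields $\gamma_b\ge n/2$. The claim that ``at most one edge of the star can be used'' is asserted rather than derived, and it has no clear meaning for towers of strength $\ge 2$. For a single copy this is easily repaired: every tower's ball contains the center, so any broadcast covers at most $1+\sigma(f)$ points. For $k$ far-apart copies, though, a tower of large strength has a ball meeting several copies, and keeping small neighbourhoods inside each copy does not control it. The fix is exactly the equality you set aside as unnecessary. By Theorem~\ref{thm:f(u)in0,1} you may restrict to $\{0,1\}$-broadcasts, where a tower $v$ covers exactly $\{v,\mathrm{nn}(v)\}$. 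This pair lies in one copy and always contains that copy's center, so each copy needs at least $5$ towers and $\gamma_b=5k$. Equivalently, $\gamma_b(P)=\rho(G)=n-\nu(G)$ by Lemma~\ref{lem:MEC} and Gallai's identity. Also perturb the outer pentagon vertices, not just the center: the standing assumption requires all pairwise distances to be distinct, and a small perturbation changes no nearest-neighbour relation.
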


\chapter{Complexity of Multipacking}\label{chapter:NPcomplete}\hypertarget{chapter:introhref}{}

\minitoc




    In this chapter, we prove that the \textsc{Multipacking} problem  is \textsc{NP-complete} for undirected graphs, which answers the open question (mentioned in Subsection \ref{subsec:contribution_NP_complete}). Moreover, it is \textsc{W[2]-hard} for undirected graphs when parameterized by the solution size. Furthermore, we have shown that the problem is \textsc{NP-complete} and \textsc{W[2]-hard} (when parameterized by the solution size) even for various subclasses: chordal, bipartite, and claw-free graphs. Whereas, it is \textsc{NP-complete} for regular, and CONV graphs. Additionally, the problem is \textsc{NP-complete} and \textsc{W[2]-hard} (when parameterized by the solution size) for chordal $\cap$ $\frac{1}{2}$-hyperbolic graphs, which is a superclass of strongly chordal graphs where the problem is polynomial-time solvable.

\section{Chapter overview}
 In Section \ref{sec:preliminaries_NPcomplete}, we recall some definitions and notations. In Section \ref{sec:multipacking_npc_w}, we prove that the \textsc{Multipacking} problem is \textsc{NP-complete} and also the problem is \textsc{W[2]-hard} when parameterized by the solution size. In Section \ref{sec:hardness_on_subclasses}, we discuss the hardness of the \textsc{Multipacking} problem on some sublcasses. We conclude this chapter in Section \ref{sec:conclusion_NPcomplete} by presenting several important questions and future research directions.

\section{Preliminaries}\label{sec:preliminaries_NPcomplete}

Let $G=(V,E)$ be a graph. For a vertex $v\in V$, the \emph{open neighborhood} of $v$ is
\(
N(v) := \{u \in V : d(u,v)=1\},
\)
and the \emph{closed $r$-neighborhood} of $v$ is
\(
N_r[v] := \{u \in V : d(u,v) \le r\},
\)
that is, the ball of radius $r$ centered at $v$.


A \textit{multipacking} is a set $ M \subseteq V  $ in a
	graph $ G = (V, E) $ such that   $|N_r[v]\cap M|\leq r$ for each vertex $ v \in V $ and for every integer $ r \geq 1 $. The \textsc{Multipacking} problem is as follows:

\medskip
\noindent
\fbox{%
  \begin{minipage}{\dimexpr\linewidth-2\fboxsep-2\fboxrule}
  \textsc{ Multipacking} problem
  \begin{itemize}[leftmargin=1.5em]
    \item \textbf{Input:} An undirected graph $G = (V, E)$, an integer $k \in \mathbb{N}$.
    \item \textbf{Question:} Does there exist a multipacking $M \subseteq V$ of $G$ of size at least $k$?
  \end{itemize}
  \end{minipage}%
}
\medskip

A complete bipartite graph $K_{1,3}$ is called a \textit{claw}, and the vertex that is incident with all the edges in a claw is called the \textit{claw-center}. \textit{Claw-free graph} is a graph that does not contain a claw as an induced subgraph.

Let $G$ be a connected graph, and let $d(\cdot,\cdot)$ denote the distance between two vertices in $G$.
 For any four
vertices $u, v, x, y \in V(G)$, we define $\delta(u, v, x, y)$ as half of the difference between the two larger of the three sums $d(u, v) + d(x,y)$, $d(u, y) + d(v, x)$, $d(u, x) +
d(v, y)$. The \textit{hyperbolicity} of a graph $G$, denoted by $\delta(G)$, is the value of $\max_{u, v, x, y\in V(G)}\delta(u, v, x, y)$. For every $\delta\geq \delta(G)$, we say that $G$ is $\delta$-\textit{hyperbolic}. A graph class $\mathcal{G}$ is said to be \textit{hyperbolic} (or bounded hyperbolicity class) if there exists a constant $\delta$ such that every graph $G \in \mathcal{G}$ is $\delta$-hyperbolic.

\medskip
\noindent\textbf{Parameterized complexity: }
A \emph{parameterized problem} is of a decision problem paired with an integer \emph{parameter} $k$ that is determined by the problem instance. Such a problem is called \emph{fixed-parameter tractable} (FPT) if there exists an algorithm that can solve any instance $I$ of size $|I|$ with parameter $k$ in time $f(k) \cdot |I|^c$, where $f$ is a computable function and $c$ is a constant. For a given parameterized problem $P$, a \emph{kernel} refers to a function that maps each instance of $P$ to another instance of $P$ that is equivalent but has its size bounded by a function $h$ of the parameter. If $h$ is a polynomial function, the kernel is called a \emph{polynomial kernel}. An \emph{FPT-reduction} between two parameterized problems $P$ and $Q$ is a function that transforms an instance $(I, k)$ of $P$ into an instance $(f(I), g(k))$ of $Q$. Here, $f$ and $g$ must be computable in FPT time with respect to $k$, and the transformation must preserve the YES/NO answer. If, in addition, $f$ can be computed in polynomial time and $g$ is polynomial in $k$, the reduction is termed a \emph{polynomial time and parameter transformation}. These reductions are useful for establishing conditional lower bounds. Specifically, if a parameterized problem $P$ is not FPT (or lacks a polynomial kernel) and there exists an FPT-reduction (or a polynomial time and parameter transformation) from $P$ to another problem $Q$, then $Q$ is also unlikely to admit an FPT algorithm (or a polynomial kernel). These conclusions depend on certain widely accepted complexity assumptions; for further details, see \cite{cygan2015parameterized}.


\section{Complexity of the \textsc{Multipacking} problem}\label{sec:multipacking_npc_w}

In this section, we present a reduction from the \textsc{Hitting Set} problem to show the first hardness result on the \textsc{Multipacking} problem.

\NPCchordal*

\begin{proof}

We reduce the well-known \textsc{NP-complete} (also \textsc{W[2]-complete} when parametrized by solution size) problem \textsc{Hitting Set}~\cite{cygan2015parameterized,garey1979computers} to the \textsc{Multipacking} problem.

\medskip
\noindent
\fbox{%
  \begin{minipage}{\dimexpr\linewidth-2\fboxsep-2\fboxrule}
  \textsc{ Hitting Set} problem
  \begin{itemize}[leftmargin=1.5em]
    \item \textbf{Input:} A finite set $U$, a collection $\mathcal{F}$ of subsets of $U$, and an integer $k \in \mathbb{N}$.
    \item \textbf{Question:} Does there exist a \emph{hitting set} $S \subseteq U$ of size at most $k$; that is, a set of at most $k$ elements from $U$ such that each set in $\mathcal{F}$ contains at least one element from $S$?
  \end{itemize}
  \end{minipage}%
}
\medskip

Let $U=\{u_1,u_2,\dots,u_n\}$ and  $\mathcal{F}=\{S_1,S_2,\dots,S_m\}$ where $S_i\subseteq U$ for each $i$. We construct a graph $G$ in the following way: (i) Include each element of $\mathcal{F}$ in the vertex set of $G$ and $\mathcal{F}$ forms a clique in $G$. (ii) Include $k-2$ length path $u_i = u_i^1 u_i^2 \dots u_i^{k-1}$  in $G$,  for each element $u_i\in U$.
(iii) Join $u_i^1$ and $S_j$ by an edge iff $u_i \notin S_j$. Fig. \ref{fig:reduction_chordal} gives an illustration.

\begin{figure}[ht]
    \centering
   \includegraphics[width=\textwidth]{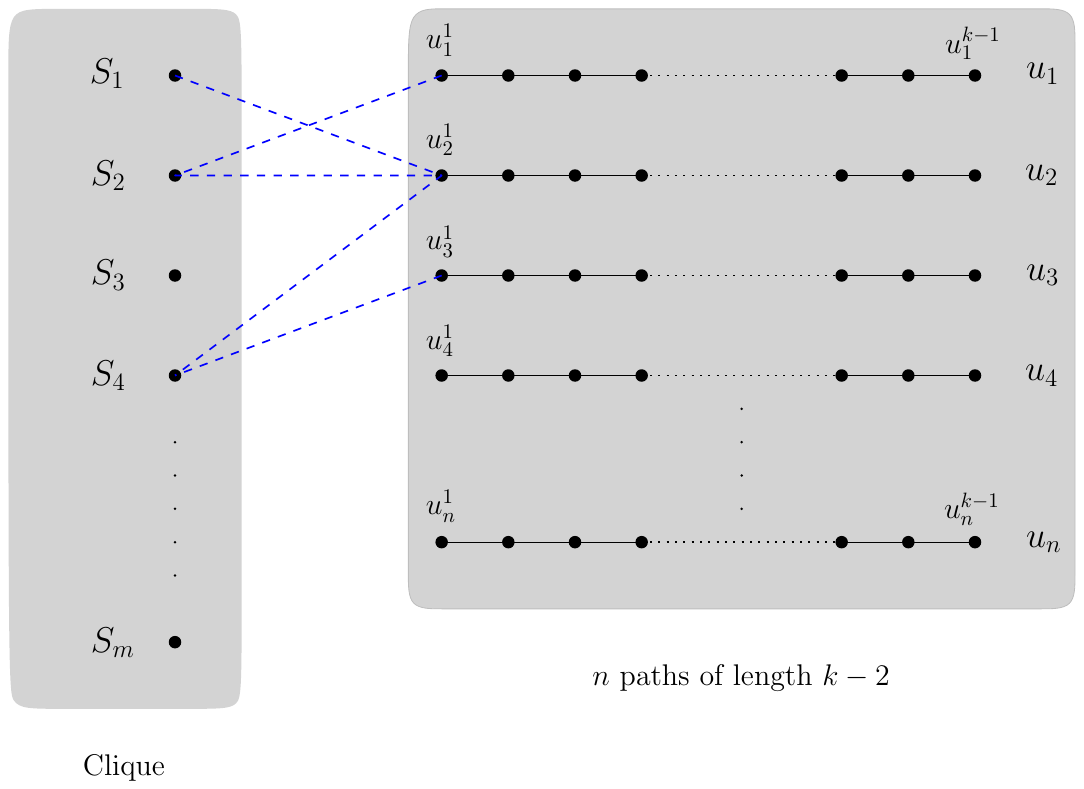}
    \caption{An illustration of the construction of the graph $G$ used in the proof of Theorem~\ref{thm:NPC_chordal}, demonstrating the hardness of the \textsc{Multipacking} problem.}
    \label{fig:reduction_chordal}
\end{figure}


\vspace{0.3cm}
\noindent
\textbf{Claim \ref{thm:NPC_chordal}.1. } $\mathcal{F}$ has a hitting set of size at most $k$ iff $G$ has a multipacking of size at least $k$, for $k\geq 2$.

\begin{claimproof} \textsf{(If part)} Suppose $\mathcal{F}$ has a hitting set $H = \{ u_1, u_2, \dots, u_k \}$. 
We want to show that $M =\{ u_1^{k-1}, u_2^{k-1}, \dots, u_k^{k-1} \}$ is a multipacking in $G$. If $r < k-1$, then 
$| N_{r}[v] \cap M |\leq 1 \leq r$, $ \forall v \in V(G)$, since $d(u_i^{k-1}, u_j^{k-1})\geq 2k-2$, $\forall i\neq j$.
If $r > k-1$, then 
$| N_{r}[v] \cap M | \leq | M | = k \leq r$, $\forall v \in V(G)$. Suppose $r = k-1$. If $M$ is not a multipacking,  there exists some $v \in V(G)$ such that 
$| N_{k-1}[v] \cap M | = k.$
Therefore, 
$M \subseteq N_{k-1}[v].$ Suppose $v=u_p^q$ for some $p$ and $q$. In this case, $d(u_p^q,u_i^{k-1})\geq k$, $\forall i\neq p$. Therefore, $v\notin \{u_i^j:1\leq i\leq n, 1\leq j\leq k-1\}$. 
This implies that $v \in \{ S_1, S_2, \dots, S_m \}$. Let $v = S_t$ for some $t$. If there exists an $i\in\{1,2,\dots,k\}$ such that $S_t$ is not adjacent to $u_i^1$, then $d(S_t,u_i^{k-1})\geq k$. Therefore, 
$M \subseteq N_{k-1}[S_t] $ implies that  $S_t$ is adjacent to each vertex of the set  $\{u_1^1, u_2^1, \dots, u_k^1\}$.
Then 
$u_1, u_2, \dots, u_k \notin S_t.$
Therefore, $H$ is not a hitting set of $\mathcal{F}$. This is a contradiction. Hence, $M$ is a multipacking of size $k$.

 \textsf{(Only-if part)}  Suppose $G$ has a multipacking $M$ of size $k$. Let 
$H = \{ u_i : u_i^j \in M, 1\leq i\leq n, 1\leq j\leq k-1 \}.$
Then 
$|H| \leq |M| = k.$
Without loss of generality, let 
$H = \{ u_1, u_2, \dots, u_{k'} \} $ where $ k' \leq k$.
Now we want to show that $H$ is a hitting set of $\mathcal{F}$. Suppose $H$ is not a hitting set. Then there exists $t$ such that 
$S_t \cap H = \emptyset.$
Therefore, $S_t$ is adjacent to each vertex of the set $\{ u_1^{1}, u_2^{1}, \dots, u_{k'}^{1} \}$.
Then $M \subseteq N_{k-1}[S_t]$. This implies that 
$| N_{k-1}[S_t] \cap M | = k.$
This is a contradiction. Therefore, $H$ is a hitting set of size at most $k$. 
\end{claimproof}

\noindent Since the above reduction is a polynomial time reduction, therefore the \textsc{Multipacking} problem  is \textsc{NP-complete}.  Moreover, the reduction is also an FPT reduction. Hence, the \textsc{Multipacking} problem  is \textsc{W[2]-hard} when parameterized by the solution size.
\end{proof}

\begin{corollary}\label{cor:NPC_chordal}
  \textsc{Multipacking} problem is \textsc{NP-complete} for chordal graphs. Moreover, \textsc{Multipacking} problem  is \textsc{W[2]-hard} for chordal graphs when parameterized by the solution size.
    
\end{corollary}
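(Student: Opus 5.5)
The plan is to show that the graph $G$ built in the proof of Theorem~\ref{thm:NPC_chordal} is already chordal. The corollary then follows immediately, because the same polynomial-time reduction (which is also an FPT reduction with parameter $k$ kept fixed) from \textsc{Hitting Set} produces only chordal instances. So nothing new is needed on the multipacking side. Claim~\ref{thm:NPC_chordal}.1 already gives the equivalence, and membership in NP is inherited from the general case, since a multipacking can be verified in polynomial time by computing all distances and checking $|N_r[v]\cap M|\le r$ for every $v$ and every $r\le \rad(G)$.

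To prove chordality, I will exhibit a perfect elimination ordering, or equivalently argue directly that there is no induced cycle of length at least $4$. The vertex set splits into the clique $\mathcal{F}=\{S_1,\dots,S_m\}$ and the paths $u_i^1u_i^2\cdots u_i^{k-1}$. Each path is attached to the rest of the graph only through $u_i^1$. The elimination order I intend is as follows.

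\begin{enumerate}
\item First remove the path vertices from the far end, $u_i^{k-1},u_i^{k-2},\dots,u_i^2$. Each such vertex, at the moment of removal, has at most one remaining neighbour, so its neighbourhood is trivially a clique.
\item Next remove each $u_i^1$. At that point its remaining neighbours are exactly the sets $S_j$ with $u_i\notin S_j$, and these form a clique because $\mathcal{F}$ is a clique.
\item Finally eliminate the clique $\mathcal{F}$ in any order.
\end{enumerate}

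This ordering is a perfect elimination ordering, so $G$ is chordal.

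The only point needing care is that the vertices $u_i^1$ are pairwise non-adjacent and each is adjacent only to clique vertices and to $u_i^2$. This is immediate from the construction, and it is what makes step 2 work. Edge cases such as $k=2$, where each path is the single vertex $u_i^1$, fit the same argument. I expect no real obstacle here, and I will conclude by invoking Theorem~\ref{thm:NPC_chordal} restricted to these instances.
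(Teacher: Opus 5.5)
Your proposal is correct and takes essentially the paper's approach: you observe that the \textsc{Hitting Set} reduction graph from Theorem~\ref{thm:NPC_chordal} is already chordal, so NP-completeness and W[2]-hardness transfer verbatim. The paper certifies chordality differently, noting that any cycle of length at least $4$ has two vertices of $\mathcal{F}$ at distance two along the cycle (the $u_i^1$ are pairwise non-adjacent and the path tails lie on no cycle), which gives a chord; your perfect elimination ordering establishes the same fact equally well.
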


\begin{proof}
    We prove that the graph $G$ of the reduction we use in the proof of the Theorem \ref{thm:NPC_chordal} is chordal. Note that if $G$ has a cycle $C=c_0c_1c_2\dots c_{t-1}c_0$ of length at least $4$, then there exists $i$ such that $c_i,c_{i+2 \! \pmod{t}}\in \mathcal{F}$ because no two vertices of the set $\{ u_1^{1}, u_2^{1}, \dots, u_n^{1} \}$ are adjacent. Since $\mathcal{F}$ forms a clique, so $c_i$ and $c_{i+2 \! \pmod{t}}$ are endpoints of a chord. Therefore, the graph $G$ is a chordal graph.
\end{proof}


It is known that, unless \textsc{ETH}\footnote{The \textit{Exponential Time Hypothesis (ETH)} states that the \textsc{3-SAT} problem cannot be solved in time $2^{o(n)}$, where $n$ is the number of variables of the input CNF (Conjunctive Normal Form) formula.} fails, there is no $f(k)(m+n)^{o(k)}$-time algorithm for the \textsc{Hitting Set} problem, where $k$ is the solution size, $|U|=n$ and  $|\mathcal{F}|=m$ \cite{cygan2015parameterized}. We use this result to show the following.

\ETHmultipackingSubExp*

\begin{proof}
   Let $|V(G)|=n_1$. From the construction of $G$ in the proof of Theorem \ref{thm:NPC_chordal}, we have $n_1=m+n(k-1)=O(m+n^2)$. Therefore, the construction of $G$ takes $O(m+n^2)$-time. Suppose that there is an $f(k)n_1^{o(k)}$-time algorithm for the \textsc{Multipacking} problem. Therefore, we can solve the \textsc{Hitting Set} problem in $f(k)n_1^{o(k)}+O(m+n^2)=f(k)(m+n^2)^{o(k)}=f(k)(m+n)^{o(k)}$ time. This is a contradiction, since there is no $f(k)(m+n)^{o(k)}$-time algorithm for the \textsc{Hitting Set} problem, assuming \textsc{ETH}.  
\end{proof}

\section{Hardness results on some subclasses} \label{sec:hardness_on_subclasses}

In this section, we present hardness results on some subclasses.

\subsection{Chordal $\cap \text{ } \frac{1}{2}$-hyperbolic graphs}

In Corollary \ref{cor:NPC_chordal},  we have already shown that, for chordal graphs, the \textsc{Multipacking} problem is \textsc{NP-complete} and it is \textsc{W[2]-hard} when parameterized by the solution size. Here we improve the result. It is known that the \textsc{Multipacking} problem is solvable in cubic-time for strongly chordal graphs. We show the hardness results for chordal $\cap \text{ } \frac{1}{2}$-hyperbolic graphs which is a superclass of strongly chordal graphs. To show that we need the following result.

\begin{theorem}[\cite{brinkmann2001hyperbolicity}]\label{thm:chordal_one_hyperbolic}
 If $G$ is a chordal graph, then the hyperbolicity  $\delta(G)\leq 1$. Moreover, $\delta(G)=1$  if and only if  $G$ does contain one of the graphs in Fig. \ref{fig:chordal_half_hyperbolic_forbidden_graphs} as an isometric subgraph.  Equivalently, a chordal graph $G$ is $\frac{1}{2}$-hyperbolic (or, $\delta(G)\leq \frac{1}{2}$) if and only if $G$ does not contain any of the graphs in Fig. \ref{fig:chordal_half_hyperbolic_forbidden_graphs} as an isometric subgraph.
\end{theorem}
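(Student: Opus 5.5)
The proof splits into two parts: the upper bound $\delta(G)\le 1$ for every chordal graph, and the characterization of when the value $1$ is attained. The ``equivalently'' clause is then automatic. For a finite graph the three distance sums are integers, so every $\delta(u,v,x,y)$, and hence $\delta(G)$, lies in $\frac12\mathbb{Z}_{\ge 0}$. Together with $\delta(G)\le 1$, this gives $\delta(G)\le\frac12$ if and only if $\delta(G)\neq 1$.

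\textbf{Upper bound.} I would route the bound through tree decompositions rather than a direct four-point computation. A chordal graph has a clique tree: a tree decomposition whose bags are maximal cliques, so every bag has diameter at most $1$ and the treelength of $G$ is at most $1$. I then invoke (or reprove) the fact quoted earlier in the thesis that a graph of treelength at most $t$ is $t$-hyperbolic~\cite{chepoi2008diameters}.

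If that route has to be made self-contained, I would argue directly as follows.
\begin{itemize}
\item Fix $u,v,x,y$ and assume the largest sum is $d(u,v)+d(x,y)$.
\item Take a bag (clique) $K$ separating $u$ from $v$ in the clique tree. Every $u$--$v$ geodesic passes through $K$, so $d(u,v)\ge d(u,K)+d(K,v)$ up to an additive $1$.
\item Compare with the paths $u\to K\to y$ and $x\to K\to v$ (and the symmetric pair) to bound $d(u,y)+d(x,v)$. The error term is at most twice the bag diameter, i.e.\ at most $2$.
\end{itemize}
This gives $\delta\le 1$.

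\textbf{Characterization, easy direction.} Each graph in Fig.~\ref{fig:chordal_half_hyperbolic_forbidden_graphs} is small, so I would exhibit an explicit quadruple in it whose two larger sums differ by exactly $2$. An isometric subgraph preserves all pairwise distances, so the same quadruple witnesses $\delta(G)\ge 1$. Combined with the upper bound, $\delta(G)=1$.

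\textbf{Characterization, hard direction.} Suppose $\delta(G)=1$. Among all quadruples $u,v,x,y$ with $\delta(u,v,x,y)=1$, choose one minimizing $d(u,v)+d(x,y)$, and fix geodesics between the four points.
\begin{itemize}
\item \emph{Shortening step.} If some distance exceeds a small constant, replace an endpoint by its neighbour on a geodesic. One checks that the defect $\delta(u,v,x,y)$ does not drop, which contradicts minimality. Hence all distances are bounded, roughly by $3$.
\item \emph{Triangulation step.} The union of the chosen geodesics contains cycles. Chordality forces chords in them, and the chords pin down adjacencies among the vertices near the separating clique.
\item \emph{Isometry step.} Check that the resulting bounded configuration is one of the forbidden graphs, and that it is isometric in $G$. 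Any shortcut in $G$ between two of its vertices would either reduce the defect or contradict the geodesics chosen.
\end{itemize}

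The main obstacle is this hard direction. In particular, showing that minimality forces bounded distances is where the four-point inequality has to interact carefully with chordality. I would first attempt it by reducing to the case $d(u,v)+d(x,y)\le 6$ and then enumerating the possible triangulated configurations.
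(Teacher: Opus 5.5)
The thesis does not prove this statement; it quotes it from Brinkmann--Koolen--Moulton. Your proposal therefore has to be judged on its own merits. Several parts are fine:
\begin{itemize}
\item the half-integrality argument for the ``equivalently'' clause;
\item the easy direction, since an isometric copy carries its extremal quadruple into $G$ with all distances preserved;
\item the upper bound via ``treelength $\le 1$ implies $1$-hyperbolic'', which the thesis itself quotes from Chepoi et al.
\end{itemize}
Your self-contained fallback for the upper bound needs a correction. A bag that merely separates $u$ from $v$ is not enough, because you need the lower bounds $d(p,q)\ge d(p,K)+d(K,q)$ for the pairs occurring in the \emph{other} two sums. Take $K$ to be the median, in the clique tree, of bags containing $u,v,x$. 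It separates every pair except possibly one, say $\{x,y\}$. With $A=\sum_z d(z,K)$, the two sums not involving $\{x,y\}$ are then $\ge A$, and all three sums are $\le A+2$ because $K$ is a clique. This gives $\delta\le 1$ whichever sum is largest.

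The genuine gap is the shortening step of the hard direction. It carries the whole argument but is only asserted, and read literally it is false. Write $\Delta$ for the gap between the two largest sums. Moving an endpoint one step along a geodesic to its partner lowers the largest sum by exactly $1$, but the second largest may not change.

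Here is a concrete example. Take a clique $\{k_1,k_2,k_3,k_4\}$ and add $u\sim k_1,k_2$, $x\sim k_2,k_3$, $v\sim k_3,k_4$, $y\sim k_4,k_1$. Then hang a path of length $t$ at $u$, with far end $u_0$. This graph is chordal, and for $t=0$ it is one of the graphs of Fig.~\ref{fig:chordal_half_hyperbolic_forbidden_graphs}.
\begin{itemize}
\item For $(u_0,v;x,y)$ the three sums are $t+6,\,t+4,\,t+4$, so $\Delta=2$.
\item The vertex $k_4$ lies on a $v$--$u_0$ geodesic. Replacing $v$ by $k_4$ gives sums $t+5,\,t+3,\,t+4$, so $\Delta=1$, although all distances are large.
\item Only moving $u_0$ towards $u$ preserves $\Delta$.
\end{itemize}

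So you need a mechanism that selects the move, and the natural one is the clique $K$ above. If $\Delta=2$, the bound forces the largest sum to equal $A+2$ and every other sum made of $K$-separated pairs to equal $A$. Equivalently, the projections $K_z=\{k\in K: d(z,k)=d(z,K)\}$ are disjoint for the two pairs of the largest sum and intersect for the pairs of those other sums.

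A point $z$ with $d(z,K)\ge 2$ can then be replaced by a vertex $z'$ with $d(z,z')=d(z,K)-1$ that is adjacent to two prescribed witness vertices of $K_z$. Such a $z'$ exists because, in a chordal graph, two adjacent vertices at distance $a\ge 1$ from $z$ have a common neighbour at distance $a-1$ from $z$; this follows from a shortest-cycle argument. This choice lowers the relevant distances from $z$ by exactly $d(z,K)-1$, so $\Delta$ survives. The pair not separated by $K$ needs extra care.

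Only after reducing to $d(z,K)=1$ for all four points do your triangulation and isometry steps have something to work with. At that point tightness pins the configuration down to a $4$-cycle $k_1k_2k_3k_4$ with chord $k_1k_3$, with or without the chord $k_2k_4$, and $u,x,v,y$ of degree $2$ on its four edges. These are the two graphs of the figure, and their isometry in $G$ is read off from the tight distances. As written, the hard direction has no such mechanism and is not yet a proof.
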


\begin{figure}[ht]
    \centering
    \includegraphics[height=4.5cm]{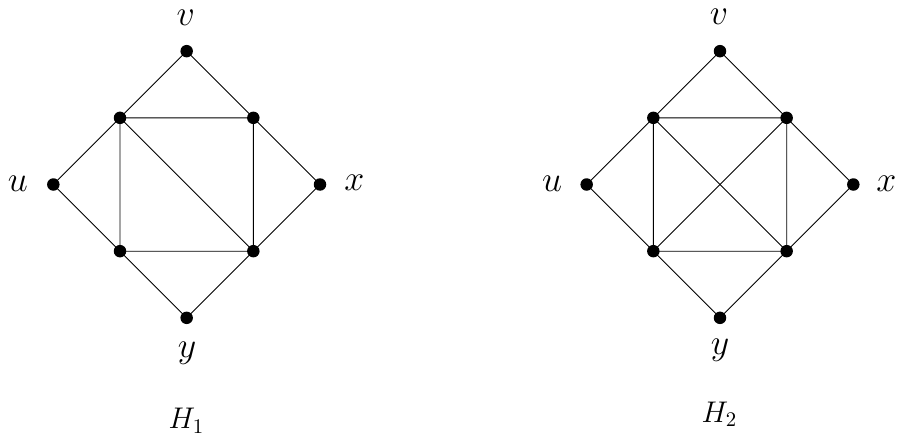}
    \caption{Forbidden isometric subgraphs for chordal $\cap$ $\frac{1}{2}$-hyperbolic graphs} 
    \label{fig:chordal_half_hyperbolic_forbidden_graphs}
\end{figure}



\NPChalfhyperbolicchordal*


\begin{proof} We reduce the \textsc{Hitting Set} problem to the \textsc{Multipacking} problem  to show that the \textsc{Multipacking} problem  is \textsc{NP-complete} for chordal $\cap$ $\frac{1}{2}$-hyperbolic graphs. 

Let $U=\{u_1,u_2,\dots,u_n\}$ and  $\mathcal{F}=\{S_1,S_2,\dots,S_m\}$ where $S_i\subseteq U$ for each $i$. We construct a graph $G$ in the following way: (i) Include each element of $\mathcal{F}$ in the vertex set of $G$. (ii) Include $k-2$ length path $u_i = u_i^1 u_i^2 \dots u_i^{k-1}$  in $G$,  for each element $u_i\in U$.
(iii) Join $u_i^1$ and $S_j$ by an edge iff $u_i \notin S_j$. (iv) Include the vertex set  $Y=\{y_{i,j}:1\leq i<j\leq n\}$ in $G$. (v) For each $i,j$ where $1\leq i<j\leq n$, $y_{i,j}$ is adjacent to $u_i^1$ and $u_j^1$. (vi) $\mathcal{F}\cup Y$ forms a clique in $G$. Fig. \ref{fig:reduction_chordal_half_hyperbolic} gives an illustration.  

\begin{figure}[p]
    \centering
   \includegraphics[width=\textwidth]{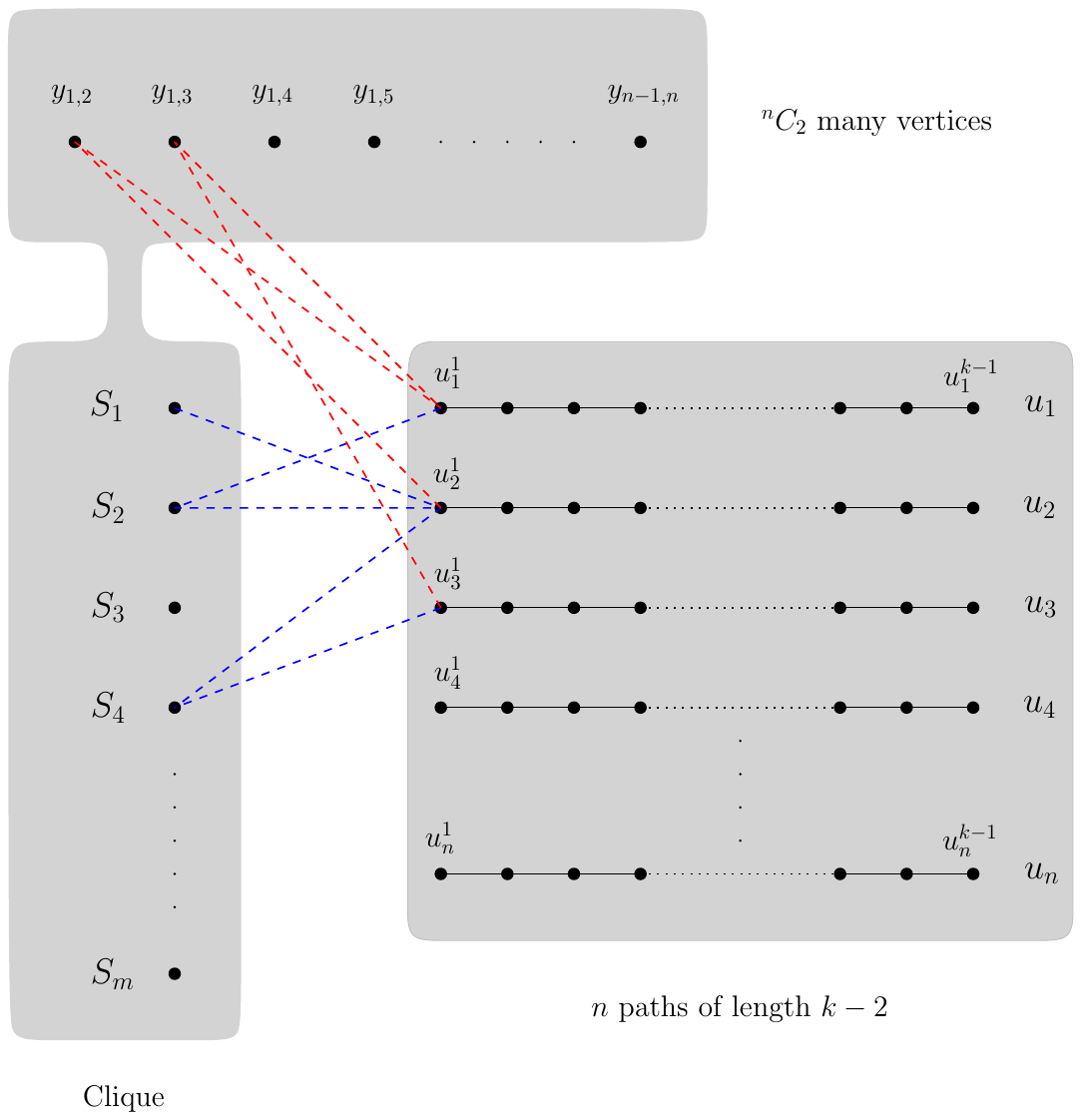}
    \caption{An illustration of the construction of the graph $G$ used in the proof of Theorem~\ref{thm:NPC_half_hyperbolic_chordal}, demonstrating the hardness of the \textsc{Multipacking} problem for chordal $\cap$ $\frac{1}{2}$-hyperbolic graphs.}
    \label{fig:reduction_chordal_half_hyperbolic}
\end{figure}

\vspace{0.3cm}
\noindent
\textbf{Claim \ref{thm:NPC_half_hyperbolic_chordal}.1. } $G$ is a chordal graph.

\begin{claimproof} If $G$ has a cycle $C=c_0c_1c_2\dots c_{t-1}c_0$ of length at least $4$, then there exists $i$ such that $c_i,c_{i+2 \! \pmod{t}}\in \mathcal{F}\cup Y$ because no two vertices of the set $\{ u_1^{1}, u_2^{1}, \dots, u_n^{1} \}$ are adjacent. Since $\mathcal{F}\cup Y$ forms a clique,  $c_i$ and $c_{i+2 \! \pmod{t}}$ are endpoints of a chord. Therefore, the graph $G$ is a chordal graph.    
\end{claimproof}

\vspace{0.3cm}
\noindent
\textbf{Claim \ref{thm:NPC_half_hyperbolic_chordal}.2. } $G$ is a  $\frac{1}{2}$-hyperbolic graph.

\begin{claimproof} Let $G'$ be the induced subgraph of $G$ on the vertex set $\mathcal{F}\cup Y\cup \{ u_1^{1}, u_2^{1}, \dots, u_n^{1} \}$. We want to show that $\mathrm{diam}(G')\leq 2$. Note that the distance between any two vertices of the set $\mathcal{F}\cup Y$ is $1$, since $\mathcal{F}\cup Y$ forms a clique in $G$. Furthermore, the distance between any two vertices of the set $\{ u_1^{1}, u_2^{1}, \dots, u_n^{1} \}$ is $2$, since they are non-adjacent and have a common neighbor in the set $Y$. Suppose $v\in \mathcal{F}\cup Y$ and $u_i^1\in \{ u_1^{1}, u_2^{1}, \dots, u_n^{1} \}$ where $v$ and $u_i^1$ are non-adjacent. In that case, $u_iy_{1,i}v$ is a $2$-length path that joins $v$ and $u_i^1$ since $\mathcal{F}\cup Y$ forms a clique in $G$. Hence, $\mathrm{diam}(G')\leq 2$. From the  Claim \ref{thm:NPC_half_hyperbolic_chordal}.1 we have that $G$ is chordal. From Theorem \ref{thm:chordal_one_hyperbolic}, we can say that the hyperbolicity of $G$ is at most $1$. Suppose $G$ has hyperbolicity exactly $1$. In that case, from Theorem \ref{thm:chordal_one_hyperbolic}, we can say that $G$  contains at least one of the graphs $H_1$ or $H_2$  (Fig. \ref{fig:chordal_half_hyperbolic_forbidden_graphs}) as 
 isometric subgraphs. Note that every vertex of $H_1$ and $H_2$ is a vertex of a cycle. Therefore, $H_1$ or $H_2$ does not have a vertex from the set $\{u_i^j:1\leq i\leq n, 2\leq j\leq k-1\}$. This implies that $H_1$ or $H_2$ are isometric subgraphs of $G'$. But both $H_1$ and $H_2$ have diameter $3$, while $\mathrm{diam}(G')\leq 2$. This is a contradiction. Hence, $G$ is a  $\frac{1}{2}$-hyperbolic graph.   
\end{claimproof}

\vspace{0.3cm}
\noindent
\textbf{Claim \ref{thm:NPC_half_hyperbolic_chordal}.3. } $\mathcal{F}$ has a hitting set of size at most $k$ iff $G$ has a multipacking of size at least $k$, for $k\geq 3$.

\begin{claimproof} \textsf{(If part)} Suppose $\mathcal{F}$ has a hitting set $H = \{ u_1, u_2, \dots, u_k \}$. 
We want to show that $M =\{ u_1^{k-1}, u_2^{k-1}, \dots, u_k^{k-1} \}$ is a multipacking in $G$. If $r < k-1$, then 
$| N_{r}[v] \cap M |\leq 1 \leq r$, $ \forall v \in V(G)$, since $d(u_i^{k-1}, u_j^{k-1})\geq 2k-2$, $\forall i\neq j$.
If $r > k-1$, then 
$| N_{r}[v] \cap M | \leq | M | = k \leq r$, $\forall v \in V(G)$. Suppose $r = k-1$. If $M$ is not a multipacking,  there exists some $v \in V(G)$ such that 
$| N_{k-1}[v] \cap M | = k.$
Therefore, 
$M \subseteq N_{k-1}[v].$ Suppose $v=u_p^q$ for some $p$ and $q$. In this case, $d(u_p^q,u_i^{k-1})\geq k$, $\forall i\neq p$. Therefore, $v\notin \{u_i^j:1\leq i\leq n, 1\leq j\leq k-1\}$. This implies that $v \in \mathcal{F}\cup Y$. Suppose $v \in Y$. Let $v=y_{p,q}$ for some $p$ and $q$. Then $d(y_{p,q},u_i^{k-1})\geq k$, $\forall i\in \{1,2,\dots,n\}\setminus\{p,q\} $. Therefore, $v\notin Y$. This implies that $v \in \mathcal{F}$. Let $v = S_t$ for some $t$. If there exists an $i\in\{1,2,\dots,k\}$ such that $S_t$ is not adjacent to $u_i^1$, then $d(S_t,u_i^{k-1})\geq k$. Therefore, 
$M \subseteq N_{k-1}[S_t] $ implies that  $S_t$ is adjacent to each vertex of the set  $\{u_1^1, u_2^1, \dots, u_k^1\}$.
Then 
$u_1, u_2, \dots, u_k \notin S_t.$
Therefore, $H$ is not a hitting set of $\mathcal{F}$. This is a contradiction. Hence, $M$ is a multipacking of size $k$.

 \textsf{(Only-if part)}  Suppose $G$ has a multipacking $M$ of size $k$. Let 
$H = \{ u_i : u_i^j \in M, 1\leq i\leq n, 1\leq j\leq k-1 \}.$
Then 
$|H| \leq |M| = k.$
Without loss of generality, let 
$H = \{ u_1, u_2, \dots, u_{k'} \} $ where $ k' \leq k$.
Now we want to show that $H$ is a hitting set of $\mathcal{F}$. Suppose $H$ is not a hitting set. Then there exists $t$ such that 
$S_t \cap H = \emptyset.$
Therefore, $S_t$ is adjacent to each vertex of the set $\{ u_1^{1}, u_2^{1}, \dots, u_{k'}^{1} \}$.
Then $M \subseteq N_{k-1}[S_t]$. This implies that 
$| N_{k-1}[S_t] \cap M | = k.$
This is a contradiction. Therefore, $H$ is a hitting set of size at most $k$.
\end{claimproof}

\noindent Since the above reduction is a polynomial time reduction, therefore the \textsc{Multipacking} problem  is \textsc{NP-complete} for chordal $\cap$ $\frac{1}{2}$-hyperbolic graphs.  Moreover, the reduction is also an FPT reduction. Hence, the \textsc{Multipacking} problem  is \textsc{W[2]-hard} for chordal $\cap$ $\frac{1}{2}$-hyperbolic graphs when parameterized by the solution size.
\end{proof}

\subsection{Bipartite graphs}

Here, we discuss the hardness of the \textsc{Multipacking} problem for bipartite graphs.

\NPCbipartite*

\begin{proof}

We reduce the \textsc{Hitting Set} problem to the \textsc{Multipacking} problem  to show that the \textsc{Multipacking} problem  is \textsc{NP-complete} for bipartite graphs. 

Let $U=\{u_1,u_2,\dots,u_n\}$ and  $\mathcal{F}=\{S_1,S_2,\dots,S_m\}$ where $S_i\subseteq U$ for each $i$. We construct a graph $G$ in the following way: (i)  Include each element of $\mathcal{F}$ in the vertex set of $G$. (ii) Include a vertex $C$ in $G$ where all the vertices of $\mathcal{F}$ are adjacent to the vertex $C$. (iii) Include $k-2$ length path $u_i = u_i^1 u_i^2 \dots u_i^{k-1}$  in $G$,  for each element $u_i\in U$.
(iv) Join $u_i^1$ and $S_j$ by an edge iff $u_i \notin S_j$.  Fig. \ref{fig:reduction_bipartite} gives an illustration. 

Note that $G$ is a bipartite graph where the partite sets are $B_1=\mathcal{F}\cup \{u_i^j:1\leq i\leq n \text{ and } j\text{ is even} \}$ and $B_2=\{C\}\cup \{u_i^j:1\leq i\leq n \text{ and } j\text{ is odd} \}$. 

\begin{figure}[ht]
    \centering
   \includegraphics[width=\textwidth]{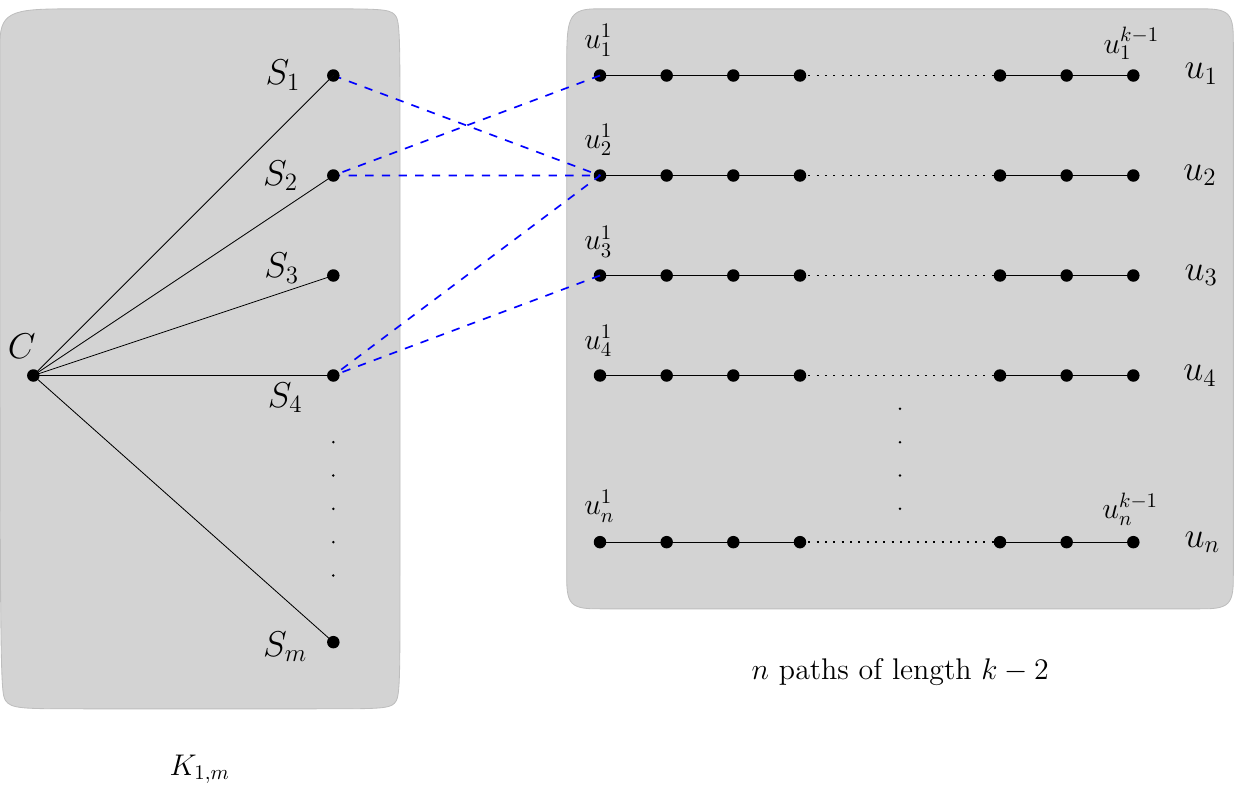}
    \caption{An illustration of the construction of the graph $G$ used in the proof of Theorem~\ref{thm:NPC_bipartite}, demonstrating the hardness of the \textsc{Multipacking} problem for bipartite graphs.}
    \label{fig:reduction_bipartite}
\end{figure}

\vspace{0.3cm}
\noindent
\textbf{Claim \ref{thm:NPC_bipartite}.1. } $\mathcal{F}$ has a hitting set of size at most $k$ iff $G$ has a multipacking of size at least $k$, for $k\geq 2$.

\begin{claimproof} \textsf{(If part)} Suppose $\mathcal{F}$ has a hitting set $H = \{ u_1, u_2, \dots, u_k \}$. 
We want to show that $M =\{ u_1^{k-1}, u_2^{k-1}, \dots, u_k^{k-1} \}$ is a multipacking in $G$. If $r < k-1$, then 
$| N_{r}[v] \cap M |\leq 1 \leq r$, $ \forall v \in V(G)$, since $d(u_i^{k-1}, u_j^{k-1})\geq 2k-2$, $\forall i\neq j$.
If $r > k-1$, then 
$| N_{r}[v] \cap M | \leq | M | = k \leq r$, $\forall v \in V(G)$. Suppose $r = k-1$. If $M$ is not a multipacking,  there exists some $v \in V(G)$ such that 
$| N_{k-1}[v] \cap M | = k.$
Therefore, 
$M \subseteq N_{k-1}[v].$ Suppose $v=u_p^q$ for some $p$ and $q$. In this case, $d(u_p^q,u_i^{k-1})\geq k$, $\forall i\neq p$. Therefore, $v\notin \{u_i^j:1\leq i\leq n, 1\leq j\leq k-1\}$. Moreover, $v\neq C$, since $d(C,u_i^{k-1})\geq k$, $\forall 1\leq i\leq n$. 
This implies that $v \in \{ S_1, S_2, \dots, S_m \}$. Let $v = S_t$ for some $t$.  If there exists an $i\in\{1,2,\dots,k\}$ such that $S_t$ is not adjacent to $u_i^1$, then $d(S_t,u_i^{k-1})\geq k$. Therefore,  
$M \subseteq N_{k-1}[S_t] $ implies that  $S_t$ is adjacent to each vertex of the set  $\{u_1^1, u_2^1, \dots, u_k^1\}$.
Then 
$u_1, u_2, \dots, u_k \notin S_t.$
Therefore, $H$ is not a hitting set of $\mathcal{F}$. This is a contradiction. Hence, $M$ is a multipacking of size $k$.

 \textsf{(Only-if part)}  Suppose $G$ has a multipacking $M$ of size $k$. Let 
$H = \{ u_i : u_i^j \in M, 1\leq i\leq n, 1\leq j\leq k-1 \}.$
Then 
$|H| \leq |M| = k.$
Without loss of generality, let 
$H = \{ u_1, u_2, \dots, u_{k'} \} $ where $ k' \leq k$.
Now we want to show that $H$ is a hitting set of $\mathcal{F}$. Suppose $H$ is not a hitting set. Then there exists $t$ such that 
$S_t \cap H = \emptyset.$
Therefore, $S_t$ is adjacent to each vertex of the set $\{ u_1^{1}, u_2^{1}, \dots, u_{k'}^{1} \}$.
Then $M \subseteq N_{k-1}[S_t]$. This implies that 
$| N_{k-1}[S_t] \cap M | = k.$
This is a contradiction. Therefore, $H$ is a hitting set of size at most $k$. 
\end{claimproof}

\noindent Since the above reduction is a polynomial time reduction, therefore the \textsc{Multipacking} problem  is \textsc{NP-complete} for bipartite graphs.  Moreover, the reduction is also an FPT reduction. Hence, the \textsc{Multipacking} problem  is \textsc{W[2]-hard} for bipartite graphs when parameterized by the solution size.
\end{proof}

\subsection{Claw-free graphs}

Next, we discuss the hardness of the \textsc{Multipacking} problem for claw-free graphs.

\NPCclawfree*

\begin{proof}
    
We reduce the \textsc{Hitting Set} problem to the \textsc{Multipacking} problem  to show that the \textsc{Multipacking} problem  is \textsc{NP-complete} for claw-free graphs. 

Let $U=\{u_1,u_2,\dots,u_n\}$ and  $\mathcal{F}=\{S_1,S_2,\dots,S_m\}$ where $S_i\subseteq U$ for each $i$. We construct a graph $G$ in the following way: (i)  Include each element of $\mathcal{F}$ in the vertex set of $G$ and $\mathcal{F}$ forms a clique in $G$. (ii) Include $k-3$ length path $u_i = u_i^1 u_i^2 \dots u_i^{k-2}$  in $G$,  for each element $u_i\in U$.
(iii) Join $S_j$ and $u_i^1$ by a $2$-length path $S_jw_{j,i}u_i^1$  iff $u_i \notin S_j$. Let $W=\{w_{j,i}:u_i \notin S_j, 1\leq i\leq n, 1\leq j\leq m\}$. (iv) Join $w_{j,i}$ and $w_{q,p}$ by an edge iff either $j=q$ or $i=p$. Fig. \ref{fig:reduction_claw_free} gives an illustration.

\begin{figure}[ht]
    \centering
   \includegraphics[width=\textwidth]{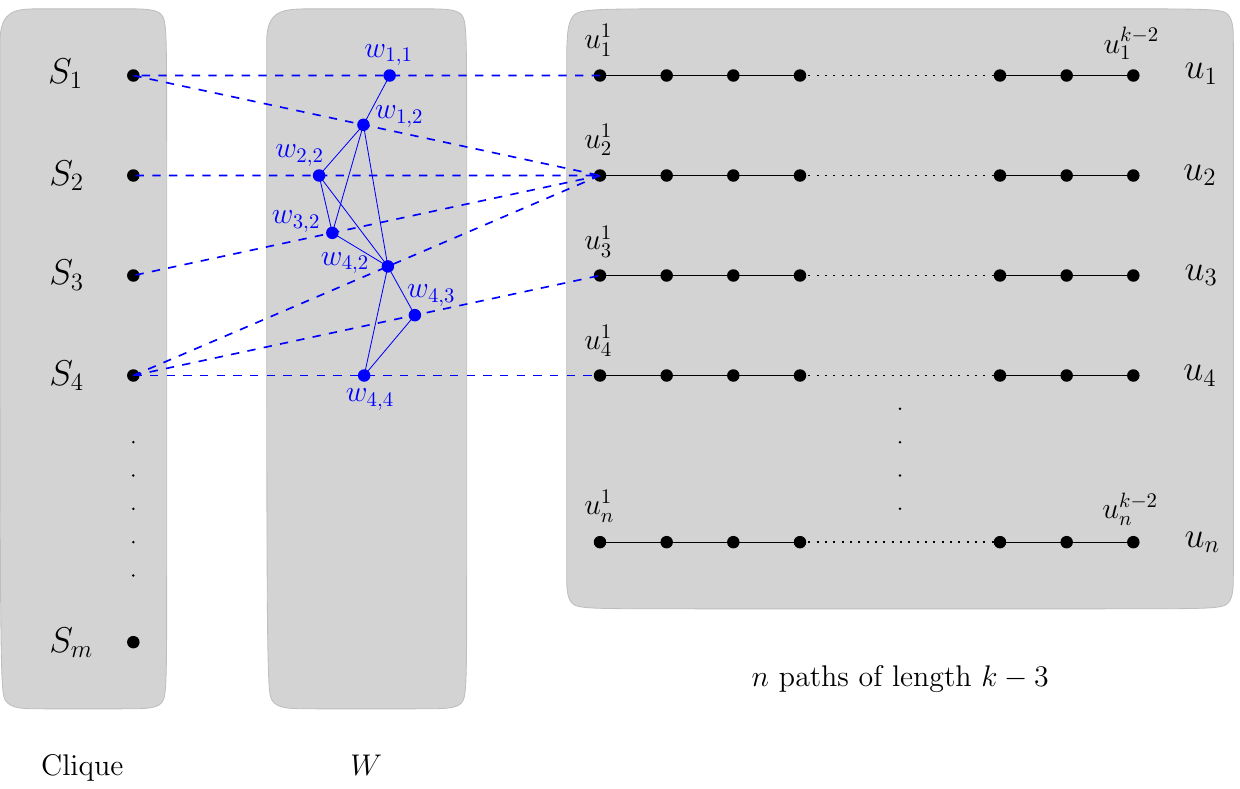}
    \caption{An illustration of the construction of the graph $G$ used in the proof of Theorem~\ref{thm:NPC_clawfree}, demonstrating the hardness of the \textsc{Multipacking} problem for claw-free graphs.}
    \label{fig:reduction_claw_free}
\end{figure}

\vspace{0.3cm}
\noindent
\textbf{Claim \ref{thm:NPC_clawfree}.1.} $G$ is a claw-free graph.

\begin{claimproof} Suppose $G$ contains an induced claw $C$ whose vertex set is $V(C)=\{c,x,y,z\}$ and the claw-center is $c$. Note that $c\notin \{u_i^j:1\leq i\leq n, 2\leq j\leq k-2\}$, since each element of this set has degree at most $2$. 

Suppose $c\in \{u_i^1:1\leq i\leq n\}$. Let $c=u_t^1$ for some $t$. Therefore, $\{x,y,z\}\subseteq N(c)=\{u_t^2\}\cup \{w_{j,t}:u_t\notin S_j,1\leq j\leq m\}$. Then at least $2$ vertices of the set $\{x,y,z\}$ belong to the set $\{w_{j,t}:u_t\notin S_j,1\leq j\leq m\}$. Note that $\{w_{j,t}:u_t\notin S_j,1\leq j\leq m\}$ forms a clique in $G$. This implies that at least $2$ vertices of the set $\{x,y,z\}$ are adjacent. This is a contradiction, since $C$ is an induced claw in $G$. Therefore,  $c\notin \{u_i^1:1\leq i\leq n\}$. 

Suppose $c\in \mathcal{F}$. Let $c=S_t$ for some $t$. Therefore, $\{x,y,z\}\subseteq N(c)=\{S_j:j\neq t, 1\leq j\leq m\}\cup \{w_{t,i}:u_i\notin S_t,1\leq i\leq n\}$. Note that both sets $\{S_j:j\neq t, 1\leq j\leq m\}$ and $ \{w_{t,i}:u_i\notin S_t,1\leq i\leq n\}$ form cliques in $G$. This implies that at least $2$ vertices of the set $\{x,y,z\}$ are adjacent. This is a contradiction, since $C$ is an induced claw in $G$. Therefore, $c\notin \mathcal{F}$.

Suppose $c\in W$. Let $c=w_{q,p}$ for some $q$ and $p$. Let $A_q=\{S_q\}\cup \{w_{q,i}:u_i\notin S_q,i\neq p,1\leq i\leq n\}$ and $B_p=\{u_p^1\}\cup \{w_{j,p}:u_p\notin S_j,j\neq q,1\leq j\leq m\}$. Note that both sets $A_q$ and $B_p$ form cliques in $G$ and $\{x,y,z\}\subseteq N(c)=A_q\cup B_p$. This implies that at least $2$ vertices of the set $\{x,y,z\}$ are adjacent. This is a contradiction, since $C$ is an induced claw in $G$. Therefore, $c\notin W$.

Therefore, $G$ is a claw-free graph. \end{claimproof}

\vspace{0.3cm}
\noindent
\textbf{Claim \ref{thm:NPC_clawfree}.2.} $\mathcal{F}$ has a hitting set of size at most $k$ iff $G$ has a multipacking of size at least $k$, for $k\geq 2$.

\begin{claimproof} \textsf{(If part)} Suppose $\mathcal{F}$ has a hitting set $H = \{ u_1, u_2, \dots, u_k \}$. 
We want to show that $M =\{ u_1^{k-2}, u_2^{k-2}, \dots, u_k^{k-2} \}$ is a multipacking in $G$. If $r < k-1$, then 
$| N_{r}[v] \cap M |\leq 1 \leq r$, $ \forall v \in V(G)$, since $d(u_i^{k-2}, u_j^{k-2})\geq 2k-3$, $\forall i\neq j$.
If $r > k-1$, then 
$| N_{r}[v] \cap M | \leq | M | = k \leq r$, $\forall v \in V(G)$. Suppose $r = k-1$. If $M$ is not a multipacking,  there exists some $v \in V(G)$ such that 
$| N_{k-1}[v] \cap M | = k.$
Therefore, 
$M \subseteq N_{k-1}[v]$.

\vspace{0.22cm}

 \noindent \textit{\textbf{Case 1: }} $v\in \{u_i^j:1\leq i\leq n, 1\leq j\leq k-2\}$.
 
Let $v=u_p^q$ for some $p$ and $q$. In this case, $d(u_p^q,u_i^{k-2})\geq k$, $\forall i\neq p$. Therefore, $v\notin \{u_i^j:1\leq i\leq n, 1\leq j\leq k-2\}$.

\vspace{0.22cm}

 \noindent \textit{\textbf{Case 2: }} $v \in W$.
 
 Let $v=w_{p,q}$ for some $p$ and $q$. Therefore, $u_q\notin S_p$. Since $M \subseteq N_{k-1}[w_{p,q}]$, we have $d(w_{p,q},u_i^{k-2})\leq k-1$, $\forall 1\leq i\leq k$. This implies that $d(w_{p,q},u_i^1)\leq 2$, $\forall 1\leq i\leq k$. Note that $d(w_{p,q},u_i^1)\neq 1$, $\forall i\neq q, 1\leq i\leq k$. Therefore, $d(w_{p,q},u_i^1)=2$, $\forall i\neq q, 1\leq i\leq k$. Suppose $t\neq q$ and $1\leq t\leq k$. Then $d(w_{p,q},u_t^1)=2$.
Therefore, $w_{p,q}$ is adjacent to $w_{h,t}$, for some $h$. This is only possible when $h=p$, since $t\neq q$. Therefore, $u_t^1$ and $S_p$ are joined by a $2$-length path $u_t^1-w_{p,t}-S_p$. This implies that $u_t\notin S_p$. Therefore, $u_1, u_2, \dots, u_k \notin S_p$.  Therefore, $H$ is not a hitting set of $\mathcal{F}$. This is a contradiction. Therefore, $v \notin W$.

\vspace{0.22cm}

 \noindent \textit{\textbf{Case 3: }} $v\in \mathcal{F}$.

 Let $v = S_t$ for some $t$. If there exists an $i\in\{1,2,\dots,k\}$ such that $u_i\in S_t$, then $d(S_t,u_i^1)\geq 3$. Therefore, $d(S_t,u_i^{k-2})\geq k$. This implies that $u_i^{k-2}\notin N_{k-1}[S_t]$. Therefore, $M \nsubseteq N_{k-1}[S_t] $. Hence, $M \subseteq N_{k-1}[S_t] $ implies that  $u_1, u_2, \dots, u_k \notin S_t$. Then $H$ is not a hitting set of $\mathcal{F}$. This is a contradiction. Therefore, $v\notin \mathcal{F}$.

Hence, $M$ is a multipacking of size $k$.

\medskip
 \textsf{(Only-if part)}  Suppose $G$ has a multipacking $M$ of size $k$. Let 
$H = \{ u_i : u_i^j \in M, 1\leq i\leq n, 1\leq j\leq k-1 \}.$
Then 
$|H| \leq |M| = k.$
Without loss of generality, let 
$H = \{ u_1, u_2, \dots, u_{k'} \} $ where $ k' \leq k$.
We want to show that $H$ is a hitting set of $\mathcal{F}$. Suppose $H$ is not a hitting set. Then there exists $t$ such that 
$S_t \cap H = \emptyset.$
Therefore,  $S_t$ and $u_i^1$ are joined by a $2$-length path $S_t-w_{t,i}-u_i^1$, for each $i\in\{1,2,\dots,k'\}$. 
So, $M \subseteq N_{k-1}[S_t]$. This implies that 
$| N_{k-1}[S_t] \cap M | = k.$
This is a contradiction. Therefore, $H$ is a hitting set of size at most $k$. 
\end{claimproof}

\noindent Since the above reduction is a polynomial time reduction, therefore the \textsc{Multipacking} problem  is \textsc{NP-complete} for claw-free graphs.  Moreover, the reduction is also an FPT reduction. Hence, the \textsc{Multipacking} problem  is \textsc{W[2]-hard} for claw-free graphs when parameterized by the solution size.
\end{proof}

\subsection{Regular graphs}

Next, we prove NP-completeness for regular graphs. We will need the following result to prove our theorem.

From Erdős-Gallai Theorem~\cite{erdos1960grafok} and Havel-Hakimi criterion~\cite{hakimi1962realizability,havel1955remark}, we can say the following.

\begin{lemma}[\cite{erdos1960grafok,hakimi1962realizability,havel1955remark}]\label{lem:ErdosHavelHakimi}
    A simple $d$-regular graph on $n$ vertices can be formed in polynomial time when  $n \cdot d$ is even and $d < n$.
\end{lemma}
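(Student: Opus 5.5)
The plan is to give an explicit polynomial-time construction rather than argue abstractly through the Erdős–Gallai inequalities. Since $n d$ is even and $0 \le d < n$, the degree sequence $(d,d,\dots,d)$ should be realized directly by a circulant graph. Label the vertices $0,1,\dots,n-1$ around a cycle; all adjacencies will be defined by cyclic differences modulo $n$.

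The construction splits on the parity of $d$.
\begin{itemize}
\item \textbf{$d$ even.} Join each vertex $i$ to $i\pm 1, i\pm 2,\dots, i\pm d/2 \pmod n$.
\item \textbf{$d$ odd.} Then $n$ is even, because $nd$ is even. Use the same edges as above for the offsets $1,\dots,(d-1)/2$, and also join $i$ to its antipode $i+n/2 \pmod n$.
\end{itemize}
For $d=0$ the graph is empty, which is trivially $0$-regular.

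Next I will check simplicity and regularity.
\begin{itemize}
\item \textbf{Even case.} Since $d<n$, we have $d/2 \le (n-1)/2$. Hence the $d$ offsets $\pm1,\dots,\pm d/2$ are nonzero and pairwise distinct modulo $n$, because $a\equiv -b$ would force $a+b=n$ with $a,b\le d/2<n/2$. So each vertex gets exactly $d$ distinct neighbours, with no loops.
\item \textbf{Odd case.} We have $(d-1)/2 < n/2$, so the offsets $\pm1,\dots,\pm (d-1)/2$ are again nonzero and distinct. None of them equals $n/2$, because $(d-1)/2 \le (n-2)/2$. The antipodal offset is its own negative, so it adds exactly one neighbour per vertex. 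This gives $d-1+1=d$ neighbours.
\end{itemize}
In both cases adjacency is symmetric, since the offset set is closed under negation. Therefore the graph is a simple undirected $d$-regular graph.

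The only delicate step is this distinctness check, in particular making sure the antipodal edge is not double-counted and does not coincide with a short offset. That is exactly where the hypotheses $d<n$ and $nd$ even are used. Finally, the construction lists $nd/2 = O(n^2)$ edges, each in constant time, so it runs in polynomial time. Alternatively, one could cite the Havel–Hakimi greedy procedure applied to the sequence $(d,\dots,d)$: Erdős–Gallai certifies that the sequence is graphic, and each Havel–Hakimi step is polynomial.
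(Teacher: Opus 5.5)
Your proof is correct, but it takes a different route from the paper. The paper gives no argument of its own: it invokes Erd\H{o}s--Gallai to certify that the constant sequence $(d,\dots,d)$ is graphic, and the Havel--Hakimi greedy procedure to realize it in polynomial time.

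Your circulant construction is self-contained and explicit, and the hypotheses enter exactly where you say. In the even case, two offsets $a,b\le d/2$ cannot satisfy $a+b=n$, since $a+b\le d<n$. In the odd case, $(d-1)/2\le n/2-1$, so no short offset (or its negative) coincides with the antipodal offset. Also $n\ge 2$ there, because $d\ge 1$, so the antipode is a genuine neighbour. The $O(nd)$ running time is then immediate.

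Your odd case also covers the one instance the paper actually needs: a $(2d-1)$-regular graph on $d^2$ vertices with $d$ even, realized via the antipodal perfect matching.

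The paper's citation-based route buys generality, since Havel--Hakimi realizes any graphic degree sequence, not just constant ones. As written, however, it leaves implicit the check that $(d,\dots,d)$ satisfies the Erd\H{o}s--Gallai inequalities. That check is short:
\begin{itemize}
\item for $k\le d$, the right-hand side is $k(k-1)+(n-k)k=k(n-1)\ge kd$;
\item for $k>d$, it is at least $k(k-1)\ge kd$.
\end{itemize}
Your construction avoids this verification entirely and additionally produces a vertex-transitive graph.
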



\multipackingregularNPc*

\begin{proof} It is known that the \textsc{Total Dominating Set} problem is \textsc{NP-complete} for cubic (3-regular) graphs~\cite{garey1979computers}.  We reduce the \textsc{Total Dominating Set} problem of cubic graph to the \textsc{Multipacking} problem  of regular graph. 

\medskip
\noindent
\fbox{%
  \begin{minipage}{\dimexpr\linewidth-2\fboxsep-2\fboxrule}
  \textsc{ Total Dominating Set} problem
  \begin{itemize}[leftmargin=1.5em]
    \item \textbf{Input:} An undirected graph $G = (V, E)$ and an integer $k \in \mathbb{N}$.
    \item \textbf{Question:} Does there exist a \emph{total dominating set} $S \subseteq V$ of size at most $k$; that is, a set of at most $k$ vertices such that every vertex in $V$ has at least one neighbor in $S$?
  \end{itemize}
  \end{minipage}%
}
\medskip

Let $(G,k)$ be an instance of the \textsc{Total Dominating Set} problem, where $G$ is a $3$-regular graph  with the vertex set $V=\{v_1,v_2,\dots,v_n\}$ where $n\geq 6$. Therefore, $n$ is even. Let $d=n-4$. So, $d$ is also even. Now we construct a graph $G'$ in the following way (Fig. \ref{fig:reduction_regular} gives an illustration). 

\noindent(i) For every vertex $v_a\in V(G)$, we construct a graph $H_a$ in $G'$ as follows. 

\begin{itemize}
    \item Add the vertex sets $S_a^i=\{u_a^{i,j}:1\leq j\leq d\}$ for all $2\leq i\leq k-2$ in $H_a$. Each vertex of $S_a^i$ is adjacent to each vertex of $S_a^{i+1}$ for each $i$.

    \item Add a vertex $u_a^{1}$ in $H_a$ where $u_a^{1}$ is adjacent to each vertex of $S_a^2$. Moreover, $S_a^2$ forms a clique.

    \item  Add the vertex set $T_a=\{u_a^{k-1,j}:1\leq j\leq d^2\}$ in $H_a$ where the induced subgraph $H_a[T_a]$ is a $(2d-1)$-regular graph. We can construct such a graph in polynomial-time by Lemma \ref{lem:ErdosHavelHakimi} since $d$ is an even number and $2d-1 < d^2 $ (since $n\geq 6$). 

    \item Let $U_1, U_2, \ldots, U_d$ be a partition of $T_a$ into $d$ disjoint sets, each of size $d$, that is, $T_a=U_1\sqcup U_2\sqcup \dots \sqcup U_d$ and $|U_j|=d$ for each $1\leq j\leq d$. Each vertex in $U_j$ is adjacent to $u_a^{k-2,j}$ for all $1\leq j\leq d$.
    
\end{itemize}

\noindent(ii) If $v_i$ and $v_j$ are different and not adjacent in $G$, we make $u^1_i$ and $u^1_j$ adjacent in $G'$ for every $1\leq i,j\leq n, i\neq j$. \\

Note that $G'$ is a $2d$-regular graph. 

\begin{figure}[H]
    \centering
   \includegraphics[width=\textwidth]{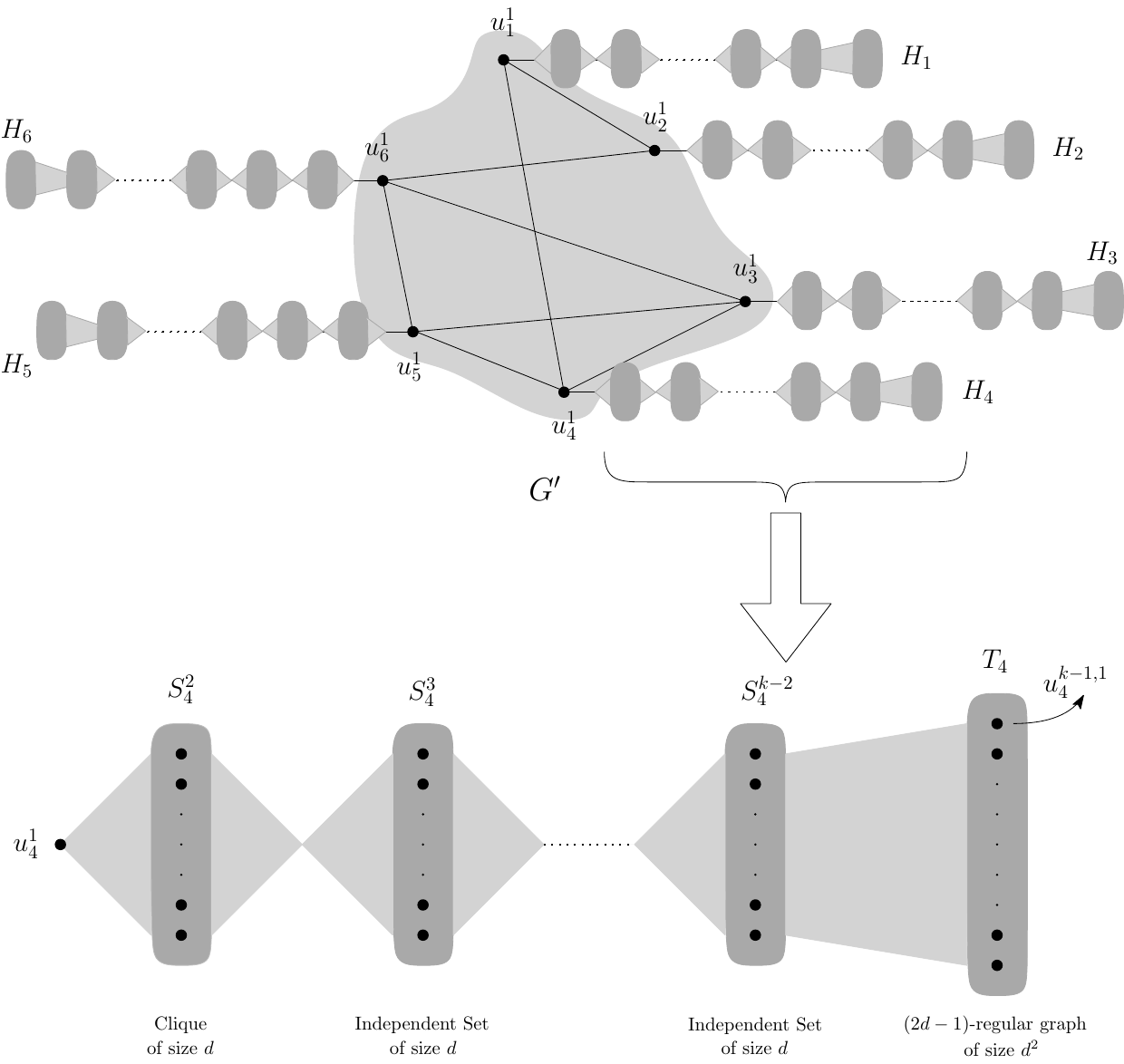}
    \caption{An illustration of the construction of the graph $G'$ used in the proof of Theorem~\ref{thm:multipacking_regular_NPc}, demonstrating the hardness of the \textsc{Multipacking} problem for regular graphs. The induced subgraph $G'[u_1^1,u_2^1,\dots,u_6^1]$ is isomorphic to $G^c$.}
    \label{fig:reduction_regular}
\end{figure}

\vspace{0.3cm}
\noindent
\textbf{Claim \ref{thm:multipacking_regular_NPc}.1. } $G$ has a total dominating set of size at most $k$ iff $G'$ has a multipacking of size at least $k$, for $k\geq 2$.

\begin{claimproof} \textsf{(If part)} 
Suppose $G$ has a total dominating set $D = \{ v_1, v_2, \dots$ $, v_k \}$. 
We want to show that $M =\{ u_1^{k-1,1}, u_2^{k-1,1}, \dots, u_k^{k-1,1} \}$ is a multipacking in $G'$.  If $r < k-1$, then 
$| N_{r}[v] \cap M |\leq 1 \leq r$, $ \forall v \in V(G')$, since $d(u_i^{k-1,1}, u_j^{k-1,1})\geq 2k-3$, $\forall i\neq j$.
If $r > k-1$, then 
$| N_{r}[v] \cap M | \leq | M | = k \leq r$, $\forall v \in V(G')$. Suppose $r = k-1$. If $M$ is not a multipacking,  there exists some $v \in V(G')$ such that 
$| N_{k-1}[v] \cap M | = k$. Therefore, 
$M \subseteq N_{k-1}[v]$. Suppose $v\in V(H_p)\setminus\{u_p^1\}$ for some $p$. In this case, $d(v,u_i^{k-1,1})\geq k$, $\forall i\neq p$. Therefore, $v\notin V(H_p)\setminus\{u_p^1\}$ for any $p$.  This implies that $v \in \{u_i^1:1\leq i\leq n\}$. Let $v = u_t^1$ for some $t$. If there exists an $i\in\{1,2,\dots,k\}$ such that neither $u_t^1=u_i^1$ nor $u_t^1$ is adjacent to $u_i^1$, then $d(u_t^1,u_i^{k-1,1})\geq k$. Therefore, 
$M \subseteq N_{k-1}[u_t^1] $ implies that either  $u_t^1$ is adjacent to each vertex of the set  $\{u_1^1, u_2^1, \dots, u_k^1\}$ when $u_t^1\notin \{u_1^1, u_2^1, \dots, u_k^1\}$ or $u_t^1$ is adjacent to each vertex of the set  $\{u_1^1, u_2^1, \dots, u_k^1\}\setminus \{u_t^1\}$ when $u_t^1\in \{u_1^1, u_2^1, \dots, u_k^1\}$. Therefore, in the graph $G$, either $v_t$ is not adjacent to any vertex of $D$ when $v_t\notin D$ or  $v_t$ is not adjacent to any vertex of $D\setminus\{v_t\}$ when $v_t\in D$. In both cases, $D$ is not a total dominating set of $G$. This is a contradiction.    Hence, $M$ is a multipacking of size $k$ in $G'$.

 \textsf{(Only-if part)}  Suppose $G'$ has a multipacking $M$ of size $k$. Let 
$D = \{ v_i : V(H_i)\cap M\neq \emptyset, 1\leq i\leq n \}.$
Then 
$|D| \leq |M| = k.$
Without loss of generality, let 
$D = \{ v_1, v_2, \dots, v_{k'} \} $ where $ k' \leq k$.
Now we want to show that $D$ is a total dominating set of $G$. Suppose $D$ is not a total dominating set.  Then there exists $t$ such that either 
$v_t\notin D$ and no vertex in $D$ is adjacent to $v_t$ or $v_t\in D$ and no vertex in $D\setminus\{v_t\}$ is adjacent to $v_t$. Therefore, in $G'$, either  $u_t^1$ is adjacent to each vertex of the set  $\{u_1^1, u_2^1, \dots, u_k^1\}$ when $u_t^1\notin \{u_1^1, u_2^1, \dots, u_k^1\}$ or $u_t^1$ is adjacent to each vertex of the set  $\{u_1^1, u_2^1, \dots, u_k^1\}\setminus \{u_t^1\}$ when $u_t^1\in \{u_1^1, u_2^1, \dots, u_k^1\}$.  Then $M \subseteq N_{k-1}[u_t^1]$. This implies that 
$| N_{k-1}[u_t^1] \cap M | = k$.
This is a contradiction. Therefore, $D$ is a total dominating set of size at most $k$ in $G$.   
\end{claimproof}

\noindent Hence, the \textsc{Multipacking} problem  is \textsc{NP-complete} for regular graphs. 
\end{proof}

\subsection{Convex Intersection graphs}

Here, we discuss the hardness of the \textsc{Multipacking} problem for a geometric intersection graph class: CONV graphs.  A graph $G$ is a \textit{CONV graph} if and only if there exists  a family of convex sets on a plane such that the graph has a vertex for each convex set and an edge for each intersecting pair of convex sets. We show that the \textsc{Multipacking} problem is \textsc{NP-complete} for CONV graphs. To prove this we will need the following result.

\begin{theorem}[\cite{kratochvil1998intersection}]\label{thm:coplanar_CONV}
    Compliment of a planar graph (co-planar graph) is a CONV graph. 
\end{theorem}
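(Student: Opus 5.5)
The statement immediately above is cited from~\cite{kratochvil1998intersection}, so in the paper it can simply be invoked. If I had to prove it myself, I would aim for an explicit representation. Given a planar graph $G$, I want convex sets $C_v$, $v\in V(G)$, with $C_u\cap C_v\neq\emptyset$ exactly when $uv\notin E(G)$.

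\textbf{Construction.} For every non-edge $\{u,w\}$ I choose a witness point $x_{uw}=x_{wu}$ in the plane, and I set
\[
C_v=\mathrm{conv}\{x_{vw}: w\neq v,\ vw\notin E(G)\}.
\]
If some vertex is adjacent to all others, its set can be a tiny disc placed away from everything else. With this choice, every non-adjacent pair intersects automatically, since both sets contain their common witness point. The whole difficulty is then to guarantee $C_u\cap C_v=\emptyset$ for every edge $uv\in E(G)$.

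\textbf{Reduction to an ordering problem.} I would put all witness points in convex position, on a circle. For point sets in convex position, two convex hulls are disjoint iff a line separates them. This is equivalent to the two point sets occupying two disjoint circular arcs.

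So the task becomes combinatorial. I need a cyclic order of the non-edges of $G$ such that, for every edge $uv$, the non-edges incident to $u$ and the non-edges incident to $v$ do not interleave around the circle. A non-edge incident to both $u$ and $v$ cannot exist, because $uv$ is an edge.

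\textbf{Using planarity.} I would try to build this cyclic order from a planar embedding of $G$, or of a triangulation containing $G$. A natural first attempt uses Schnyder woods / a canonical ordering. For each vertex, the non-neighbours are grouped according to the region of the Schnyder partition they fall in. The cyclic order lists blocks vertex by vertex along a traversal, for example a depth-first contour of a spanning tree of the embedding, which is planar and hence non-crossing. Planarity would ensure that adjacent vertices $u,v$ have their blocks arranged as nested or disjoint arcs rather than crossing ones.

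\textbf{Main obstacle.} The main obstacle is exactly this non-interleaving requirement. It must hold simultaneously for all edges, and each witness point $x_{uw}$ is shared between two sets.

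If the circle arrangement fails, my fallback is to leave convex position. I would lift $G$ via Steinitz's theorem to a 3-polytope whose faces correspond to the vertices of $G$, taking the dual of a triangulation containing $G$. Then $C_v$ would be defined as the image, under a suitable central projection onto a plane, of the region beyond the face $F_v$. Adjacency of faces, i.e.\ edges of $G$, should then translate into separating planes, which give disjointness. Non-adjacent faces should still have overlapping regions, but this last point needs care: the triangulation may contain edges that are not edges of $G$, so those pairs must be made to intersect separately.
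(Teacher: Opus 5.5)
The paper gives no proof of this statement. It is quoted from Kratochv\'{\i}l and Kub\v{e}na and applied in the CONV reduction, so your opening sentence is exactly what the paper does. The self-contained argument you then sketch does not prove it, and the problem is worse than an unresolved obstacle: the circle reduction is impossible already for $G=C_6$.

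Representing $\overline{G}$ by hulls of witness points is harmless. It even loses no generality: from any convex representation, pick $x_{uw}\in C_u\cap C_w$ and shrink each set to the hull of its witnesses. Putting the witnesses on a circle is what fails. For $G=C_6$ on $1,\dots,6$, the witnesses are the \emph{odd} points $x_{13},x_{15},x_{35}$, the \emph{even} points $x_{24},x_{26},x_{46}$, and the \emph{cross} points $x_{14},x_{25},x_{36}$.
\begin{itemize}
\item On a circle, if $C_w,C_{w'}$ share a witness and both must miss $C_v$, they lie in the same arc between consecutive vertices of $C_v$.
\item $C_2$ and $C_6$ share $x_{26}$ and must both miss $C_1$. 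So the points of $C_1$ fill an arc containing no other witness except possibly $x_{35}$.
\item Likewise $C_3,C_5$ share $x_{35}$ and must miss $C_4$. So the arc of $C_4=\{x_{14},x_{24},x_{46}\}$ contains at most $x_{26}$ besides.
\item These two arcs share only $x_{14}$, so $x_{14}$ is an endpoint of the $C_1$-arc. That arc is therefore $x_{14}$ followed by consecutive odd points, including $x_{13}$ and $x_{15}$.
\item The same argument for the pairs $(C_3,C_6)$ and $(C_5,C_2)$ shows that $x_{36}$ borders a run of consecutive odd points containing $x_{13},x_{35}$. Similarly, $x_{25}$ borders a run containing $x_{15},x_{35}$.
\item Hence one maximal odd run contains all three odd points, and all three cross points must occupy its two neighbouring positions. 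This is a contradiction.
\end{itemize}
So no cyclic order with your non-interleaving property exists for this planar graph, whatever tool (Schnyder woods, canonical orderings) you use to build it.

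The fallback does not rescue this, because it has the relation reversed. If faces $F_u,F_v$ of the polytope share an edge, points just outside that edge lie beyond both facet planes. So the regions beyond $F_u$ and $F_v$ intersect, and so do their central projections. An edge of $G$ would therefore produce intersecting sets. Meanwhile non-adjacent faces, such as opposite faces of a cube, can have disjoint beyond-regions. Your side worry about extra triangulation edges is easy to handle: add vertices instead of edges, since every planar graph is an induced subgraph of a triangulation and CONV is closed under induced subgraphs.

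What is missing is the actual substance of the cited theorem. You need either a placement of the witness points, necessarily not in convex position, or some other construction, that keeps the sets of $G$-adjacent vertices disjoint. As it stands, the statement rests only on the citation.
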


    

\multipackingCONVNPc*

\begin{proof} It is known that the \textsc{Total Dominating Set} problem is \textsc{NP-complete} for planar graphs~\cite{garey1979computers}. We reduce the \textsc{Total Dominating Set} problem of planar graph to the \textsc{Multipacking} problem  of CONV graph to show that the \textsc{Multipacking} problem  is \textsc{NP-complete} for CONV graphs.

Let $(G,k)$ be an instance of the \textsc{Total Dominating Set} problem, where $G$ is a planar graph  with the vertex set $V=\{v_1,v_2,\dots,v_n\}$. We construct a graph $G'$ in the following way (Fig. \ref{fig:reduction_CONV} gives an illustration). 

\noindent(i) For every vertex $v_i\in V(G)$, we introduce a path $u_i = u_i^1 u_i^2 \dots u_i^{k-1}$ of length $k-2$ in $G'$. 

\noindent(ii) If $v_i$ and $v_j$ are different and not adjacent in $G$, we make $u^1_i$ and $u^1_j$ adjacent in $G'$ for every $1\leq i,j\leq n, i\neq j$.

\begin{figure}[ht]
    \centering
   \includegraphics[width=\textwidth]{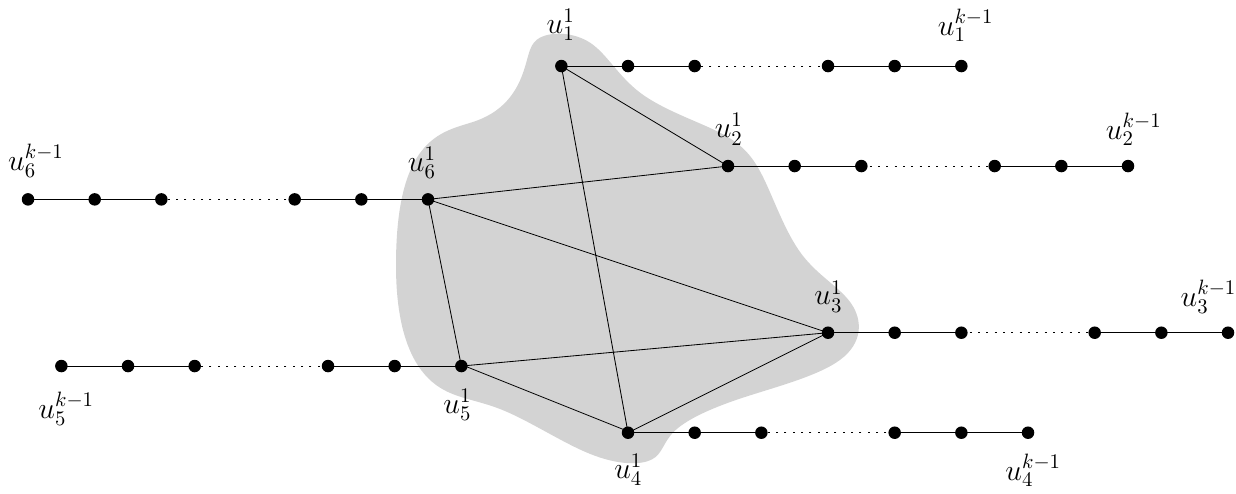}
    \caption{An illustration of the construction of the graph $G'$ used in the proof of Theorem~\ref{thm:multipacking_CONV_NPc}, demonstrating the hardness of the \textsc{Multipacking} problem for CONV graphs. The induced subgraph $G'[u_1^1,u_2^1,\dots,u_6^1]$ (the colored region) is isomorphic to $G^c$.}
    \label{fig:reduction_CONV}
\end{figure}

\vspace{0.3cm}
\noindent
\textbf{Claim \ref{thm:multipacking_CONV_NPc}.1. } $G'$ is a CONV graph.

\begin{claimproof} Note that the induced subgraph $G'[u_1^1,u_2^1,\dots,u_n^1]$ is isomorphic to $G^c$. Since $G$ is a planar graph, $G'[u_1^1,u_2^1,\dots,u_n^1]$ is a CONV graph by Theorem \ref{thm:coplanar_CONV}. Hence, $G'$ is a CONV graph.    
\end{claimproof}

\vspace{0.3cm}
\noindent
\textbf{Claim \ref{thm:multipacking_CONV_NPc}.2. } $G$ has a total dominating set of size at most $k$ iff $G'$ has a multipacking of size at least $k$, for $k\geq 2$.

\begin{claimproof} \textsf{(If part)} 
Suppose $G$ has a total dominating set $D = \{ v_1, v_2, \dots$ $, v_k \}$. 
We want to show that $M =\{ u_1^{k-1}, u_2^{k-1}, \dots, u_k^{k-1} \}$ is a multipacking in $G'$. If $r < k-1$, then 
$| N_{r}[v] \cap M |\leq 1 \leq r$, $ \forall v \in V(G')$, since $d(u_i^{k-1}, u_j^{k-1})\geq 2k-3$, $\forall i\neq j$.
If $r > k-1$, then 
$| N_{r}[v] \cap M | \leq | M | = k \leq r$, $\forall v \in V(G')$. Suppose $r = k-1$. If $M$ is not a multipacking,  there exists some $v \in V(G')$ such that 
$| N_{k-1}[v] \cap M | = k$. Therefore, 
$M \subseteq N_{k-1}[v].$ Suppose $v=u_p^q$ for some $p$ and $q$ where $1\leq p\leq n, 2\leq q\leq k-1$. In this case, $d(u_p^q,u_i^{k-1})\geq k$, $\forall i\neq p$. Therefore, $v\notin \{u_i^j:1\leq i\leq n, 2\leq j\leq k-1\}$.  This implies that $v \in \{u_i^1:1\leq i\leq n\}$.  Let $v = u_t^1$ for some $t$. If there exists an $i\in\{1,2,\dots,k\}$ such that neither $u_t^1=u_i^1$ nor $u_t^1$ is adjacent to $u_i^1$, then $d(u_t^1,u_i^{k-1})\geq k$. Therefore, 
$M \subseteq N_{k-1}[u_t^1] $ implies that either  $u_t^1$ is adjacent to each vertex of the set  $\{u_1^1, u_2^1, \dots, u_k^1\}$ when $u_t^1\notin \{u_1^1, u_2^1, \dots, u_k^1\}$ or $u_t^1$ is adjacent to each vertex of the set  $\{u_1^1, u_2^1, \dots, u_k^1\}\setminus \{u_t^1\}$ when $u_t^1\in \{u_1^1, u_2^1, \dots, u_k^1\}$. Therefore, in the graph $G$, either $v_t$ is not adjacent to any vertex of $D$ when $v_t\notin D$ or  $v_t$ is not adjacent to any vertex of $D\setminus\{v_t\}$ when $v_t\in D$. In both cases, $D$ is not a total dominating set of $G$. This is a contradiction.    Hence, $M$ is a multipacking of size $k$ in $G'$.

 \textsf{(Only-if part)}  Suppose $G'$ has a multipacking $M$ of size $k$. Let 
$D = \{ v_i : u_i^j \in M, 1\leq i\leq n, 1\leq j\leq k-1 \}.$
Then 
$|D| \leq |M| = k.$
Without loss of generality, let 
$D = \{ v_1, v_2, \dots, v_{k'} \} $ where $ k' \leq k$.
Now we want to show that $D$ is a total dominating set of $G$. Suppose $D$ is not a total dominating set.  Then there exists $t$ such that either 
$v_t\notin D$ and no vertex in $D$ is adjacent to $v_t$ or $v_t\in D$ and no vertex in $D\setminus\{v_t\}$ is adjacent to $v_t$. Therefore, in $G'$, either  $u_t^1$ is adjacent to each vertex of the set  $\{u_1^1, u_2^1, \dots, u_k^1\}$ when $u_t^1\notin \{u_1^1, u_2^1, \dots, u_k^1\}$ or $u_t^1$ is adjacent to each vertex of the set  $\{u_1^1, u_2^1, \dots, u_k^1\}\setminus \{u_t^1\}$ when $u_t^1\in \{u_1^1, u_2^1, \dots, u_k^1\}$.  Then $M \subseteq N_{k-1}[u_t^1]$. This implies that 
$| N_{k-1}[u_t^1] \cap M | = k$.
This is a contradiction. Therefore, $D$ is a total dominating set of size at most $k$ in $G$.   
\end{claimproof}

\noindent Hence, the \textsc{Multipacking} problem  is \textsc{NP-complete} for CONV graphs. 
\end{proof}

\subsubsection{SEG Graphs or Segment Graphs}

 A graph $G$ is a \textit{SEG graph} (or \textit{Segment Graph}) if and only if there exists  a family of straight-line segments in a plane such that the graph has a vertex for each straight-line segment and an edge for each intersecting pair of straight-line segments. It is known that planar graphs are SEG graphs~\cite{chalopin2009every}. 

In 1998, Kratochv{\'\i}l and Kub{\v{e}}na~\cite{kratochvil1998intersection} posed the question of whether the complement of any planar graph (i.e., a co-planar graph) is also a SEG graph. As far as we know, this is still an open question. A positive answer to this question would imply that the \textsc{Multipacking} problem is \textsc{NP-complete} for SEG graphs, since in that case the induced subgraph $G'[u_1^1, u_2^1, \dots, u_n^1]$ in the proof of Theorem~\ref{thm:multipacking_CONV_NPc} would be a SEG graph, which in turn would imply that $G'$ itself is a SEG graph.

\section{Conclusion}\label{sec:conclusion_NPcomplete}

In this work, we have established the computational complexity of the \textsc{Multipacking} problem by proving its \textsc{NP-completeness} and demonstrating its \textsc{W[2]-hardness} when parameterized by the solution size. Furthermore, we have extended these hardness results to several important graph classes, including chordal $\cap$ $\frac{1}{2}$-hyperbolic graphs, bipartite graphs, claw-free graphs, regular graphs, and CONV graphs.

Our results naturally lead to several open questions that merit further investigation:
\begin{enumerate}

    \item Does there exist an FPT algorithm for general graphs when parameterized by treewidth?
    \item Can we design a sub-exponential algorithm for this problem?
    \item What is the complexity status on planar graphs? Does an FPT algorithm for planar graphs exist when parameterized by solution size?
    \item While a $(2 + o(1))$-factor approximation algorithm is known~\cite{beaudou2019broadcast}, can we achieve a PTAS for this problem?
    \item What is the approximation hardness of the \textsc{Multipacking} problem?
\end{enumerate}

Addressing these questions would provide a more complete understanding of the computational landscape of the \textsc{Multipacking} problem and could lead to interesting algorithmic developments.

\chapter{Multipacking on a bounded hyperbolic graph-class: chordal graphs}\label{chapter:chordal}\hypertarget{chapter:introhref}{}

 \vspace{-0.9cm}
\minitoc

  
  In this chapter, we show that for any connected chordal graph $G$, $\gamma_{b}(G)\leq \big\lceil{\frac{3}{2} \MP(G)\big\rceil}$. We also show that $\gamma_b(G)-\MP(G)$ can be arbitrarily large for connected chordal graphs by constructing an infinite family of connected chordal graphs such that the ratio $\gamma_b(G)/\MP(G)=10/9$, with $\MP(G)$ arbitrarily large.   Moreover, we show that $\gamma_{b}(G)\leq \big\lfloor{\frac{3}{2} \MP(G)+2\delta\big\rfloor} $ holds for all $\delta$-hyperbolic graphs. In addition, we provide a polynomial-time algorithm to construct a multipacking of a $\delta$-hyperbolic graph $G$ of size at least $  \big\lceil{\frac{2\MP(G)-4\delta}{3} \big\rceil} $. 

  Further, we construct an infinite family of connected chordal graphs such that the ratio $\gamma_b(G)/\MP(G)=4/3$, with $\MP(G)$ arbitrarily large.  This improves the lower bound of the expression $\lim_{\MP(G)\to \infty}$ $\sup\{\gamma_{b}(G)/\MP(G)\}$ for connected chordal graphs to $4/3$, and it says, for connected chordal graphs, we cannot improve the bound $\gamma_{b}(G)\leq \big\lceil{\frac{3}{2} \MP(G)\big\rceil}$ to a bound in the form $\gamma_b(G)\leq c_1\cdot \MP(G)+c_2$, for any constant $c_1<4/3$ and $c_2$.

\section{Chapter overview}
In Section \ref{sec:Definitions and notation_chordal}, we recall some definitions and notations. In Section \ref{sec:An inequality linking Broadcast domination and Multipacking numbers}, we  establish a relation between multipacking and dominating broadcast on the chordal graphs. In the same section, we provide a $(\frac{3}{2}+o(1))$-factor approximation algorithm for finding multipacking on the same graph class. In Section \ref{sec:Unboundedness of the gap between Broadcast domination and Multipacking numbers of Chordal graphs}, we prove our main result which says that the difference $ \gamma_{b}(G) -  \MP(G) $ can be arbitrarily large for connected chordal graphs. In Section \ref{sec:chordal graphs}, we improve the lower bound of the expression $\lim_{\MP(G)\to \infty}$ $\sup\{\gamma_{b}(G)/\MP(G)\}$ for connected chordal graphs. In Section \ref{sec:A study of Broadcast domination and Multipacking numbers on Hyperbolic graphs},  we relate the broadcast domination and multipacking number of $\delta$-hyperbolic graphs and provide an approximation algorithm for the multipacking problem of the same. We conclude this chapter in Section \ref{sec:Conclusion_chordal}.

\section{Preliminaries}\label{sec:Definitions and notation_chordal}

 Let $G=(V,E)$ be a graph and $d_G(u,v)$ be the length of a shortest path joining two vertices $u$ and $v$ in  $G$, we simply write $d(u,v)$ when there is no confusion. Let $diam(G)=\max\{d(u,v):u,v\in V(G)\}$. Diameter (or diametral path) is a path of $G$  of the length  $diam(G)$. 
 $N_r[u]=\{v\in V:d(u,v)\leq r\}$ where $u\in V$. The \textit{eccentricity} $e(w)$  of a vertex $w$ is $\min \{r:N_r[w]=V\}$. The \textit{radius} of the graph $G$ is $\min\{e(w):w\in V\}$, denoted by $\rad(G)$.  The \textit{center} $C(G)$ of the graph $G$  is the set of all vertices of minimum eccentricity, i.e., $C(G)=\{v\in V:e(v)=\rad(G)\}$. Each vertex in the set $C(G)$ is called a \textit{central vertex} of the graph $G$. 

A \textit{chordal graph} is an undirected simple graph in which all cycles of four or more vertices have a chord, which is an edge that is not part of the cycle but connects two vertices of the cycle.

 Let $d$ be the shortest-path metric of a graph $G$. The graph $G$ is called a \emph{$\delta$-hyperbolic graph} if for any four
vertices $u, v, w, x \in V(G)$, the two larger of the three sums $d(u, v) + d(w, x)$, $d(u, w) + d(v, x)$, $d(u, x) +
d(v, w)$ differ by at most $2\delta$. A graph class $\mathcal{G}$ is said to be hyperbolic if there exists a constant $\delta$ such that every graph $G \in \mathcal{G}$ is $\delta$-hyperbolic. Trees are $0$-hyperbolic graphs~\cite{buneman1974note}, and chordal graphs are 1-hyperbolic~\cite{brinkmann2001hyperbolicity}.

\section{An inequality linking Broadcast domination and Multipacking numbers of Chordal Graphs}\label{sec:An inequality linking Broadcast domination and Multipacking numbers}

In this section, we use results from the literature to show that the general bound connecting multipacking number and broadcast domination number can be improved for chordal graphs.

\begin{theorem}[\cite{hartnell2014difference}]\label{d+1/3leqmpG}
If $G$ is a connected graph of order at least 2  having  diameter $ d $ and  multipacking number  $ \MP(G) $, where $P=v_0,\dots,v_d$ is a diametral  path of $G$, then the set $M=\{v_i:i\equiv 0 \text{ } (mod \text{ } 3), i=0,1,\dots,d\}$ is a multipacking of $G$ of size $\big\lceil{\frac{d+1}{3}\big\rceil}$ and  $\big\lceil{\frac{d+1}{3}\big\rceil}\leq \MP(G)$.
\end{theorem}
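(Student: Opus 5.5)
The plan is to check the multipacking condition directly for every vertex $v\in V(G)$ and every radius $r\geq 1$, using only the fact that $P$ is a shortest path. The first step is the isometry property: since $P=v_0,\dots,v_d$ is a diametral path, it is in particular a shortest $v_0,v_d$-path. Hence every subpath is also a shortest path, and $d(v_i,v_j)=|i-j|$ for all $0\leq i,j\leq d$. This is the only structural input needed.

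Next, fix $v\in V$ and $r\geq 1$, and let $v_i,v_j\in N_r[v]\cap M$ be the members of $M$ in the ball with the smallest and largest indices. By the triangle inequality,
\[
j-i=d(v_i,v_j)\leq d(v_i,v)+d(v,v_j)\leq 2r .
\]
So every element of $N_r[v]\cap M$ has index in the interval $[i,j]$, which has length at most $2r$. Since the indices of elements of $M$ are exactly the multiples of $3$, the number of them in an interval of length at most $2r$ is at most $\lfloor 2r/3\rfloor+1$.

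It then remains to check the elementary inequality $\lfloor 2r/3\rfloor+1\leq r$ for every integer $r\geq 1$. For $r=1$ the left side is $0+1=1$; for $r=2$ it is $1+1=2$; and for $r\geq 3$ we have $2r/3+1\leq r$. Thus $|N_r[v]\cap M|\leq r$ for all $v$ and $r$, so $M$ is a multipacking.

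Finally, $|M|$ equals the number of multiples of $3$ in $\{0,\dots,d\}$, which is $\lfloor d/3\rfloor+1=\lceil (d+1)/3\rceil$. Since $\MP(G)$ is the maximum size of a multipacking, this gives $\lceil (d+1)/3\rceil\leq \MP(G)$. I expect no real obstacle here. The only points needing care are noting that the ball need not lie on $P$, which is why we argue via the extreme indices and the triangle inequality, and checking the small cases $r=1,2$ in the floor inequality.
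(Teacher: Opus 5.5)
Your proof is correct and complete. The paper gives no proof of its own here; it simply cites Hartnell and Mynhardt. Your argument is the standard one: $d(v_i,v_j)=|i-j|$ along the shortest path, the triangle inequality bounds the index span of $N_r[v]\cap M$ by $2r$, and then $\lfloor 2r/3\rfloor+1\le r$ for all $r\ge 1$.

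Your argument uses only that $P$ is a shortest path, not that it is diametral. So it also proves the isometric-path version, Lemma~\ref{lem:isometric_path}, which the paper likewise quotes without proof in the cactus chapter.
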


\begin{theorem}[\cite{erwin2001cost,teshima2012broadcasts}]  \label{mpGleqgammabG}  If $G$ is a connected graph of order at least 2  having radius $ r $, diameter $ d $, multipacking number  $\MP(G) $, broadcast domination number $ \gamma_{b}(G) $ and domination number $\gamma(G)$, then $ \MP(G)\leq \gamma_{b}(G) \leq min\{\gamma (G),r\}$.
\end{theorem}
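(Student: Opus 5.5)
The plan is to prove the three inequalities separately, each directly from the definitions; no earlier result beyond the definitions of broadcast, multipacking and radius is needed.

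\textbf{The bound $\gamma_b(G)\leq \gamma(G)$.} Let $S$ be a minimum dominating set. Define $f(v)=1$ for $v\in S$ and $f(v)=0$ otherwise. Since $G$ has order at least $2$ and is connected, $\diam(G)\geq 1$, so $f$ is a legitimate broadcast. Every vertex lies in $S$ or has a neighbour in $S$, so it lies within distance $f(s)=1$ of some tower $s$. Hence $f$ is dominating with $\sigma(f)=|S|=\gamma(G)$.

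\textbf{The bound $\gamma_b(G)\leq r$.} Let $c$ be a central vertex. Put $f(c)=r$ and $f\equiv 0$ elsewhere. This is a broadcast because $1\leq r=e(c)\leq \diam(G)$. It is dominating because $N_r[c]=V$ by the definition of eccentricity, and its cost is $r$. Taking the minimum of these two constructions gives $\gamma_b(G)\leq\min\{\gamma(G),r\}$.

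\textbf{The bound $\MP(G)\leq\gamma_b(G)$.} This is the weak-duality step. Let $M$ be a maximum multipacking and $f$ a minimum dominating broadcast with tower set $T=\{v:f(v)>0\}$.
\begin{enumerate}
\item Every $u\in M$ hears some tower, so $M\subseteq\bigcup_{v\in T}N_{f(v)}[v]$.
\item For each tower $v$, $f(v)$ is a positive integer, so the multipacking condition applied with $r=f(v)$ at the vertex $v$ gives $|N_{f(v)}[v]\cap M|\leq f(v)$.
\item Summing over $T$ yields
\[
\MP(G)=|M|\leq\sum_{v\in T}|N_{f(v)}[v]\cap M|\leq\sum_{v\in T}f(v)=\sigma(f)=\gamma_b(G).
\]
\end{enumerate}

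Nothing here is a real obstacle. The only points needing care are that the constructed functions take values in $\{0,\dots,\diam(G)\}$, which uses the order-at-least-$2$ hypothesis, and that the union bound in the last step tolerates overlapping balls, so efficiency of $f$ is not required.
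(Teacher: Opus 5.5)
Your proof is correct and takes essentially the paper's route: the two upper bounds use exactly the broadcasts described before Observation~\ref{obs:erwin_ineq}, namely power $1$ on a minimum dominating set and power $\rad(G)$ on a central vertex. Your covering argument for $\MP(G)\leq\gamma_b(G)$ is the weak duality between the broadcast ILP and its multipacking dual that the paper invokes (via $\MP(G)\leq\MP_f(G)=\gamma_{b,f}(G)\leq\gamma_b(G)$), written out combinatorially just as the paper does for the Euclidean analogue in Proposition~\ref{prop:mp_gammab_geo_ineq}.
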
 

\begin{theorem}[\cite{laskar1983powers}] \label{2rleqd+2} If $ G $ is a connected chordal graph with radius $ r $ and diameter $ d $, then $ 2r\leq d+2 $.
\end{theorem}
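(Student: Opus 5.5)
The plan is to derive the inequality from a Helly-type property of balls in chordal graphs. Let $d=\diam(G)$ and $k=\lfloor d/2\rfloor$. The aim is a vertex $c$ with $e(c)\le k+1$. This gives $2\rad(G)\le 2k+2\le d+2$ when $d$ is even. When $d$ is odd it gives the even stronger $2\rad(G)\le d+1$. To find $c$, consider the family of balls $\{N_{k}[v]: v\in V(G)\}$. Every two centres satisfy $d(u,v)\le d\le 2k+1$. So the balls $N_k[u]$ and $N_k[v]$ either intersect or are ``almost touching'', meaning they are joined by an edge between their boundaries.

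The key lemma I would prove is the following. In a chordal graph, if vertices $v_1,\dots,v_m$ and radii $r_1,\dots,r_m$ satisfy $d(v_i,v_j)\le r_i+r_j+1$ for all $i,j$, then $\bigcap_i N_{r_i+1}[v_i]\neq\emptyset$. Applying it with all $r_i=k$ and all $v_i\in V(G)$ gives the desired $c$.

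To prove the lemma I would pass to a clique tree $T$ of $G$, whose nodes are the maximal cliques. Each vertex $x$ of $G$ corresponds to the subtree $T_x$ of cliques containing it. The first step is that for each $i$ the set of maximal cliques meeting $N_{r_i}[v_i]$ induces a subtree $B_i$ of $T$. This follows by induction on $r_i$, since a union of subtrees of $T$ that all meet a fixed subtree is again a subtree. The second step is that $d(v_i,v_j)\le r_i+r_j+1$ forces $B_i\cap B_j\neq\emptyset$. Indeed, a shortest $v_i$--$v_j$ path has an edge $xy$ with $x\in N_{r_i}[v_i]$ and $y\in N_{r_j}[v_j]$, and any maximal clique containing $xy$ lies in both subtrees. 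The third step applies the Helly property of subtrees of a tree, which yields a common maximal clique $Q$ with $Q\cap N_{r_i}[v_i]\neq\emptyset$ for every $i$. Any vertex of $Q$ is then within distance $r_i+1$ of every $v_i$, because $Q$ is a clique.

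The main obstacle is the first step, namely that the cliques meeting a ball form a connected subtree of the clique tree; the rest is bookkeeping with parities. I would first try induction on the radius. For $r=0$ the set is exactly $T_{v}$, which is a subtree by the defining property of the clique tree. For the inductive step, the cliques meeting $N_{r+1}[v]$ are the union of $T_y$ over $y\in N_{r+1}[v]$. Each such $T_y$ intersects the subtree for radius $r$, since $y$ has a neighbour $x$ in $N_r[v]$ and a maximal clique containing $xy$ lies in both $T_y$ and $T_x$. The union of a subtree with subtrees meeting it is connected, so connectivity is preserved. If this route stalls, the fallback is the known statement that chordal graphs have the Helly property for the balls $N_{r+1}$ under pairwise distance conditions; that statement can be cited as a standard ``disk-Helly up to one'' fact.
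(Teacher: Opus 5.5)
Your argument is correct and complete. The thesis does not prove this statement; it imports it from Laskar and Shier as a black box, so there is no in-paper route to match.

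What you give instead is a self-contained proof whose only external inputs are the existence of a clique tree for connected chordal graphs and the Helly property of subtrees of a tree. All three steps check out.

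\begin{itemize}
\item The induction showing that the maximal cliques meeting $N_{r}[v]$ form a subtree is sound. Each new $T_y$ meets the previous subtree through a maximal clique containing an edge $xy$, where $x$ is one step closer to $v$.
\item The pairwise-intersection step is fine. When $v_i=v_j$ there is no edge to pick, but then the two subtrees trivially share $T_{v_i}$. So the fallback citation you mention is unnecessary.
\item The parity bookkeeping is correct. In both cases $d(v_i,v_j)\le \diam(G)\le 2\lfloor \diam(G)/2\rfloor+1$.
\end{itemize}

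Compared with the bare citation, your route buys a stronger intermediate statement. It is a Helly-type property for balls with a one-step loss, and it holds under the weak pairwise hypothesis $d(v_i,v_j)\le r_i+r_j+1$. The bound $2\rad(G)\le \diam(G)+2$ then follows by taking all $r_i=\lfloor \diam(G)/2\rfloor$. It even locates a vertex of eccentricity at most $\lfloor \diam(G)/2\rfloor+1$ inside a single maximal clique. That constructive flavour fits naturally with how the thesis later uses this inequality in its approximation algorithm.

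One cosmetic remark: for odd $d$, the bound $2\rad(G)\le d+1$ is not actually stronger. Since $2\rad(G)$ is even, $2\rad(G)\le d+2$ already forces it.
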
 

Using these results we prove the following proposition:

\multipackingbroadcastrelation*

\begin{proof}
From Theorem \ref{d+1/3leqmpG}, $\big\lceil{\frac{d+1}{3}\big\rceil}\leq \MP(G)$  which implies that $d \leq 3\MP(G)-1$. Moreover, from  Theorem \ref{mpGleqgammabG} and Theorem \ref{2rleqd+2},  $\gamma_{b}(G)\leq r \leq \big\lfloor{\frac{d+2}{2}\big\rfloor} \leq \big\lfloor{\frac{(3\MP(G)-1)+2}{2}\big\rfloor}$ $=\big\lfloor{\frac{3}{2} \MP(G)+\frac{1}{2}\big\rfloor}  $. Therefore, $\gamma_{b}(G)\leq \big\lfloor{\frac{3}{2} \MP(G)+\frac{1}{2}\big\rfloor} = \big\lceil{\frac{3}{2} \MP(G)\big\rceil}$. 
\end{proof}

The proof of Proposition~\ref{prop:gammabGleq3/2mpG} has the following algorithmic application.

\approximationalgorithm*

\begin{proof} If $P=v_0,\dots,v_d$ is a diametrical path of $G$, then the set $M=\{v_i:i\equiv 0 \text{ } (mod \text{ } 3), i=0,1,\dots,d\}$ is a multipacking of $G$ of size $\big\lceil{\frac{d+1}{3}\big\rceil}$ by Theorem \ref{d+1/3leqmpG}. We can construct $M$ in polynomial-time since we can find a diametral path of a graph $G$ in polynomial-time. Moreover, from  Theorem \ref{d+1/3leqmpG}, Theorem \ref{mpGleqgammabG} and Theorem \ref{2rleqd+2},  $\big\lceil{\frac{2\MP(G)-1}{3}\big\rceil}\leq\big\lceil{\frac{2r-1}{3}\big\rceil}\leq\big\lceil{\frac{d+1}{3}\big\rceil}\leq \MP(G)$. 
\end{proof}


\begin{figure}[h]
    \centering
   \includegraphics[height=3.1cm]{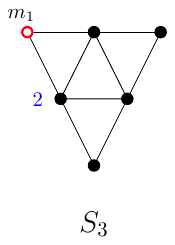}
    \caption{$S_3$ is a connected chordal graph   with $\gamma_b(S_3)=2$ and $\MP(S_3)=1$}
    \label{fig:S3}
\end{figure}

\begin{figure}[h]
    \centering
   \includegraphics[height=3.3cm]{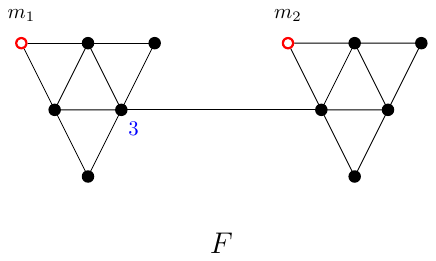}
    \caption{$F$  is a connected chordal graph   with $\gamma_b(F)=3$ and $\MP(F)=2$}
    \label{fig:F}
\end{figure}

\begin{figure}[h]
    \centering
   \includegraphics[height=3.2cm]{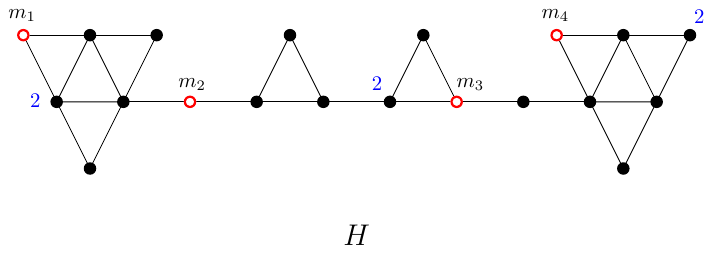}
    \caption{$H$ is a connected chordal graph  with $\gamma_b(H)=6$ and $\MP(H)=4$}
    \label{fig:H}
\end{figure}

Here we have some examples of graphs that achieve the equality of the bound in Proposition \ref{prop:gammabGleq3/2mpG}.

\begin{ex} The connected  chordal graph  $S_{3}$ (Figure~\ref{fig:S3}) has $  \MP(S_{3}) =1 $ and $\gamma_{b}(S_{3}) =2$. So, here  $ \gamma_{b}(S_{3}) = \big\lceil{\frac{3}{2} \MP(S_3)\big\rceil} $.
\end{ex}

\begin{ex} The  connected   chordal graph  $F$ (Figure~\ref{fig:F})  has $  \MP(F) =2 $ and $\gamma_{b}(F) =3$. So, here  $ \gamma_{b}(F) = \big\lceil{\frac{3}{2} \MP(F)\big\rceil} $.
\end{ex}

\begin{ex}
The  connected  chordal graph   $H$ (Figure~\ref{fig:H})  has $  \MP(H) =4 $ and $\gamma_{b}(H) =6$. So, here  $ \gamma_{b}(H) = \big\lceil{\frac{3}{2} \MP(H)\big\rceil} $.
\end{ex}

We could not find an example of  connected  chordal graph with $\MP(G)=3$ and $\gamma_{b}(G) =\big\lceil{\frac{3}{2} \MP(G)\big\rceil}=5$. 

\section{Unboundedness of the gap between Broadcast domination and Multipacking numbers of Chordal graphs}\label{sec:Unboundedness of the gap between Broadcast domination and Multipacking numbers of Chordal graphs}

In this section, our goal is to show that  the difference between broadcast domination number and multipacking number of connected chordal graphs  can be arbitrarily large. We prove this using the following theorem that we prove later.

\multipackingbroadcastgapechordal*

Theorem \ref{thm:9k10k} yields the following.

\gammabGdiffmpGchordal*

Moreover, Proposition \ref{prop:gammabGleq3/2mpG} and Theorem \ref{thm:9k10k} yield the following.

\gammabGbympGchordal*

\subsection{Proof of Theorem \ref{thm:9k10k}}

\begin{figure}[h]
    \centering
   \includegraphics[height=3.3cm]{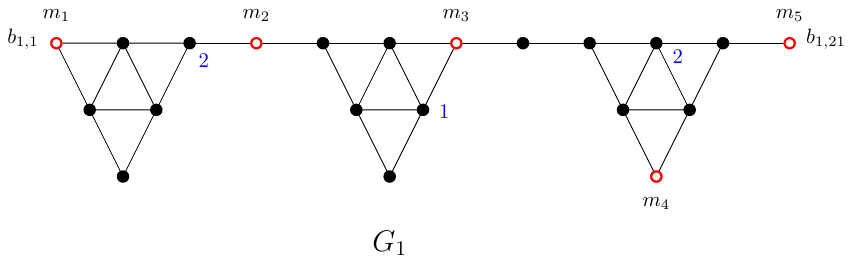}
    \caption{$G_1$ is a connected chordal graph  with $\gamma_b(G_1)=5$ and $\MP(G_1)=5$. $M_1=\{m_i:1\leq i \leq  5\}$ is a multipacking of size $5$.}
    \label{fig:G1m}
\end{figure}

Consider the graph $G_{1}$ as in Figure \ref{fig:G1m}.  
Let $B_1$ and $B_2$ be two isomorphic copies of $G_1$.  Join $b_{1,21}$ of  $B_1$ and $b_{2,1}$ of $B_2$ by an edge (Figure~\ref{fig:G2m} and \ref{fig:Names}). We denote this new graph by $G_2$ (Figure~\ref{fig:G2m}). In this way, we form $G_k$ by joining $k$ isomorphic copies of $G_1$ : $B_1,B_2,\cdots,B_k$ (Figure~\ref{fig:Names}). Here $B_i$ is joined with $B_{i+1}$ by joining $b_{i,21}$ and $b_{i+1,1}$.  We say that $B_i$ is the $i$-th block of $G_k$.  $B_i$ is an induced subgraph of $G_k$ as given by $B_i= G_k[\{b_{i,j}:1\leq j \leq 21\}]$. Similarly, for $1\leq i\leq2k-1$, we define $B_i\cup B_{i+1}$, induced subgraph of $G_{2k}$, as $B_i\cup B_{i+1}=G_{2k}[\{b_{i,j},b_{i+1,j}:1\leq j \leq 21\}]$.  We prove Theorem \ref{thm:9k10k}  by establishing  that $\gamma_b(G_{2k})=10k$ and $\MP(G_{2k})=9k$. 

\begin{figure}[h]
    \centering
   \includegraphics[height=7.1cm]{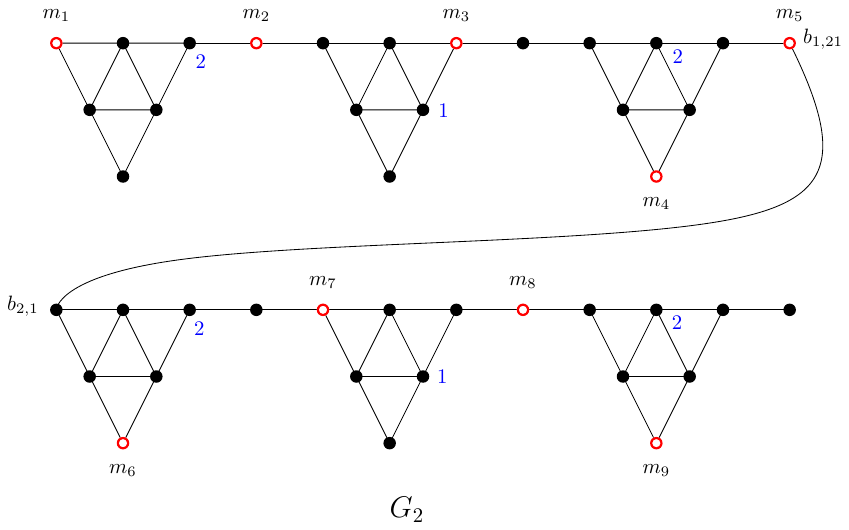}
    \caption{Graph $G_2$ with $\gamma_b(G_2)=10$ and $\MP(G_2)=9$. $M=\{m_i:1\leq i \leq  9\}$ is a multipacking of size $9$.}
    \label{fig:G2m}
\end{figure}

\begin{figure}[ht]
    \centering
   \includegraphics[height=11.3cm]{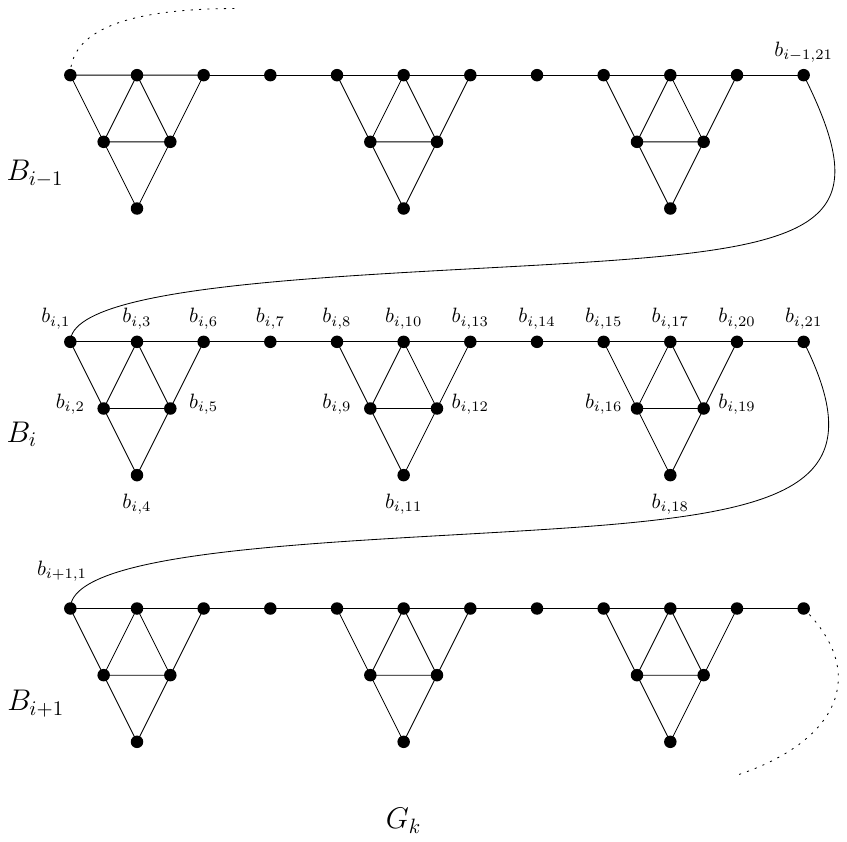}
    \caption{Graph $G_k$}
    \label{fig:Names}
\end{figure}

Our proof of Theorem \ref{thm:9k10k} is accomplished through a set of lemmas which are stated and proved below. We begin by observing a basic fact about multipacking in a graph. We formally state it in Lemma \ref{duv>=3} for ease of future reference.

\begin{lemma} \label{duv>=3} Suppose $M$ is a multipacking in a graph $G$. If $u,v \in M$ and $u \neq v$, then $d(u,v)\geq 3$.
\end{lemma}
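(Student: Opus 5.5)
We argue by contradiction, using only the definition of multipacking with radii $r=1$ and $r=2$. Let $M$ be a multipacking of $G$ and take $u,v\in M$ with $u\neq v$. Suppose, for contradiction, that $d(u,v)\leq 2$. We split into the two possible values of $d(u,v)$ (it is positive since $u\neq v$). In each case we exhibit a vertex $w$ and a radius $r$ such that $|N_r[w]\cap M|>r$, which violates the multipacking condition.

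\emph{Case $d(u,v)=1$.} Take $w=u$ and $r=1$. Then both $u$ and $v$ lie in $N_1[u]$. Hence $|N_1[u]\cap M|\geq 2>1$, a contradiction.

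\emph{Case $d(u,v)=2$.} Let $w$ be the middle vertex of a shortest $u,v$-path, so $d(w,u)=d(w,v)=1$. Take $r=1$. Then $\{u,v\}\subseteq N_1[w]\cap M$, giving $|N_1[w]\cap M|\geq 2>1$, again a contradiction.

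Both cases are impossible, so $d(u,v)\geq 3$. There is no real obstacle; the only point needing care is the choice of the centre $w$ in the distance-$2$ case. The whole argument only uses the radius $r=1$, so the statement is really the observation that a multipacking is in particular a $2$-packing.
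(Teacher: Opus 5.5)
Your proof is correct and is essentially the paper's own argument: both rule out $d(u,v)=1$ by putting $u,v$ in a radius-$1$ ball around one of them, and $d(u,v)=2$ by using a common neighbour $w$ with $u,v\in N_1[w]$. The only difference is that the paper centres the first ball at $v$ rather than $u$, which is immaterial.
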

\begin{proof}  If $d(u,v)=1$, then $u,v\in N_1[v]\cap M$, then $M$ cannot be a multipacking. So, $d(u,v)\neq 1$. If $d(u,v)=2$, then there exists a common neighbour $w$ of $u$ and $v$. So, $u,v\in N_1[w]\cap M$, then $M$ cannot be a multipacking. So, $d(u,v)\neq 2$. Therefore,  $d(u,v)>2$. 
\end{proof}

\begin{lemma}\label{mpG2k>=9k} $\MP(G_{2k})\geq 9k$, for each positive integer $k$.
\end{lemma}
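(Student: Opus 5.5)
We prove the bound by an explicit construction. We do not have the figures, so the plan follows the structure suggested by Figure~\ref{fig:G2m}. That figure exhibits a multipacking $M=\{m_1,\dots,m_9\}$ of size $9$ in $G_2=B_1\cup B_2$.

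For $G_{2k}$ we group the blocks into $k$ consecutive pairs $B_{2i-1}\cup B_{2i}$, $1\le i\le k$. In each pair we place an isomorphic copy $M^{(i)}$ of the set $M$ from $G_2$, and we set $M_{2k}=\bigcup_{i=1}^k M^{(i)}$. This set has size $9k$, and the task is to show that it is a multipacking of $G_{2k}$.

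\textbf{Step 1: distances.} The structure of $G_{2k}$ is a chain: consecutive blocks are joined by the single bridge $b_{i,21}b_{i+1,1}$. So any path leaving a pair-block $B_{2i-1}\cup B_{2i}$ passes through $b_{2i-1,1}$ or $b_{2i,21}$. Hence distances between vertices inside a pair-block are the same in $G_{2k}$ as in $G_2$, since a shortest path cannot profit from leaving and re-entering through a cut vertex. Distances between vertices in different pair-blocks are sums along the chain.

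\textbf{Step 2: local balls.} We check $|N_r[v]\cap M_{2k}|\le r$ for every vertex $v$ and every $r$. Suppose $N_r[v]$ meets only one pair-block's copy $M^{(i)}$. Then the bound is inherited from $M$ being a multipacking in $G_2$, by Step 1.

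\textbf{Step 3: balls spanning several pair-blocks.} This is the main obstacle. We use a counting argument along the chain. From the figure, the chosen vertices $m_j$ are spread roughly every $3$ or more steps along a long path through each block. In particular, $m_1$ and $m_9$ should lie near the ends, at $b_{1,1}$ or close to it and near $b_{2,21}$. Then any ball of radius $r$ that reaches $t$ vertices of $M_{2k}$ must contain a path segment of length at least about $3(t-1)$, or $3(t-1)-O(1)$ once the chords are accounted for.

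To make this rigorous, we first try the following. For each pair-block, record the distances from its two cut vertices to the chosen vertices: $a_j=d(b_{2i-1,1},m_j)$ and $c_j=d(b_{2i,21},m_j)$. We verify from $G_2$ that in $G_{2k}$ the ball $N_r[b_{2i,21}]$ contains at most $\lfloor (r+1)/3\rfloor$-ish many chosen vertices of its own pair-block. We also need the same kind of bound for the neighbouring pair-block entered through the bridge. Then for $v$ in some pair-block and a ball reaching $s$ pair-blocks to one side, we add the counts: the count inside $v$'s own pair-block (bounded via $G_2$), plus the counts in outer pair-blocks at distances shifted by the bridge lengths. The sum should give at most $r$.

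An alternative, if these constants are too tight, is a cleaner route. Find a geodesic $P$ in $G_{2k}$ on which $M_{2k}$ sits with consecutive gaps at least $3$, and such that $d(v,x)$ is approximately $d_P(\pi(v),x)$ for every $v$, where $\pi(v)$ is a projection of $v$ onto $P$. Then we adapt the argument of Theorem~\ref{d+1/3leqmpG}: a ball of radius $r$ meets a geodesic in at most $2r+1$ vertices, so it contains at most $\lceil (2r+1)/3\rceil\le r$ points spaced $3$ apart. We handle the off-path vertices, whose balls might hit two branches, by explicit case checking on one pair-block using Lemma~\ref{duv>=3}.
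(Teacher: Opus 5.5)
Your construction is the right one. Putting a copy of the nine vertices of Figure~\ref{fig:G2m} into every pair $B_{2i-1}\cup B_{2i}$ gives exactly the paper's set $M_{2k}=\{b_{2i-1,1},b_{2i-1,7},b_{2i-1,13},b_{2i-1,18},b_{2i-1,21},b_{2i,4},b_{2i,8},b_{2i,14},b_{2i,18}\}$, and Steps 1--2 are fine. But Step~3 is the entire content of the lemma, it is not proved, and both routes you sketch rest on claims that are false for this set.

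Gluing multipackings is not automatic: two copies of the multipacking $\{v_0,v_3\}$ of the path $v_0v_1v_2v_3$, joined by the edge $v_3v_0'$, give adjacent chosen vertices. At the junctions of $G_{2k}$ there is also no slack. Transport the distance facts used in the proof of Lemma~\ref{mpG2=9} along the block isomorphisms. Then $N_2[b_{2i,17}]$ contains $b_{2i,14}$, $b_{2i,21}$ and all of $b_{2i,15},\dots,b_{2i,20}$, and $N_2[b_{2i+1,3}]$ contains $b_{2i,21}$ and $b_{2i+1,7}$.
\begin{itemize}
\item \textbf{Route 1.} For $i<k$ the ball $N_4[b_{2i,21}]$ contains the chosen vertices $b_{2i,14},b_{2i,18},b_{2i+1,1},b_{2i+1,7}$. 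So the bound $4$ is attained there. This ball also holds two chosen vertices of its own pair-block, whereas your proposed per-side count gives $\lfloor 5/3\rfloor=1$.
\item \textbf{Route 2.} It fails at its premise. The same facts give distance at most $5$ for every pair among $b_{2i,14},b_{2i,18},b_{2i+1,1}$. Hence these three cannot lie on one geodesic with consecutive gaps at least $3$, since the outer two would then be at distance at least $6$.
\end{itemize}

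What is missing is a way to bound $|N_r[v]\cap M_{2k}|$ for all $r$ at once, including balls that cover many pair-blocks, by a finite local verification. The paper does this by induction on $r$ in steps of $4$. It checks $|N_r[v]\cap M_{2k}|\le r$ directly for $r\in\{1,2,3,4\}$. It then shows that every annulus $N_{s+4}[v]\setminus N_s[v]$ contains at most $4$ vertices of $M_{2k}$; this is a periodic, local check because $G_{2k}$ is a chain of blocks joined by bridges. Together these give $|N_{s+4}[v]\cap M_{2k}|\le |N_s[v]\cap M_{2k}|+4\le s+4$.

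If you want to keep your decomposition into pair-blocks, you need two things. First, the exact number of chosen vertices at each distance from the cut vertices, not $\lfloor (r+1)/3\rfloor$. Second, an explicit treatment of small radii. As it stands, Step~3 is a plan rather than a proof.
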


\begin{proof}
Consider the set 
$M_{2k}=\{b_{2i-1,1},b_{2i-1,7},b_{2i-1,13},b_{2i-1,18},b_{2i-1,21},b_{2i,4}, $ $ b_{2i,8},b_{2i,14}, b_{2i,18} :1\leq i \leq k\}$ (Figure~\ref{fig:Names}) of size $9k$. 
 We want to show that $M_{2k}$ is a multipacking of $G_{2k}$. So, we have to prove that, $|N_r[v]\cap M_{2k}|\leq r$ for each vertex $ v \in V(G_{2k}) $ and for every integer $ r \geq 1 $. We prove this statement using induction on $r$.	It can be checked that $|N_r[v]\cap M_{2k}|\leq r$ for each vertex $ v \in V(G_{2k}) $ and for each  $ r \in \{1,2,3,4\} $. Now assume that the statement is true for $r=s$, we want to prove that, it is true for $r=s+4$. Observe that, $|(N_{s+4}[v]\setminus N_{s}[v])\cap M_{2k}|\leq 4$ for every vertex $ v\in V(G_{2k})$. Therefore,  $|N_{s+4}[v]\cap M_{2k}|\leq |N_{s}[v]\cap M_{2k}|+4\leq s+4$. So, the statement is true. Therefore, $M_{2k}$ is a multipacking of $G_{2k}$. So, $\MP(G_{2k})\geq |M_{2k}|= 9k$. 
\end{proof}

\begin{lemma} \label{mpG1=5}  $\MP(G_{1})=5$.
\end{lemma}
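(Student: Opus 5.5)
The plan is to prove two inequalities: $\MP(G_1)\geq 5$ and $\MP(G_1)\leq 5$.

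\textbf{Lower bound.} The figure of $G_1$ exhibits the set $M_1=\{m_i:1\leq i\leq 5\}$. I will check that it is a multipacking using the same induction-on-$r$ scheme as in Lemma~\ref{mpG2k>=9k}. First, verify $|N_r[v]\cap M_1|\leq r$ directly for every vertex $v$ and for $r\in\{1,2,3,4\}$. For $r=1,2$ this amounts to pairwise distances at least $3$ (Lemma~\ref{duv>=3}) together with no vertex having three marked vertices within distance $2$. For $r\geq 5$ the bound is trivial, since $|M_1|=5\leq r$. So only finitely many checks on a $21$-vertex graph are needed.

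\textbf{Upper bound.} I will use the weak duality $\MP(G)\leq \gamma_b(G)$ from Theorem~\ref{mpGleqgammabG}. It then suffices to exhibit a dominating broadcast on $G_1$ of cost $5$. The caption of Figure~\ref{fig:G1m} asserts $\gamma_b(G_1)=5$, so such a broadcast exists. I would look for one among a few natural shapes:
\begin{itemize}
\item a central vertex broadcasting at strength $\rad(G_1)$, if the radius is $5$;
\item two or three towers whose strengths sum to $5$, each covering one end segment of the elongated graph.
\end{itemize}
Concretely, I would compute eccentricities in $G_1$. If $\rad(G_1)\leq 5$, the upper bound follows immediately from $\gamma_b\leq \rad$ (Theorem~\ref{mpGleqgammabG}).

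\textbf{Fallback if $\rad(G_1)>5$.} A diameter-based argument cannot work, since $\lceil (d+1)/3\rceil$ only gives lower bounds. Instead I would use a partition argument. Split $V(G_1)$ into balls $N_{r_j}[c_j]$ with $\sum r_j=5$. Any multipacking meets each ball $N_{r_j}[c_j]$ in at most $r_j$ vertices, so its total size is at most $5$. This is exactly the tower covering above, so the two approaches coincide.

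\textbf{Main obstacle.} The hard part is identifying the correct cost-$5$ covering (or the radius) from the structure of $G_1$, which is given only pictorially. Once the towers are chosen, verifying coverage and checking the multipacking inequalities for small $r$ is routine case-checking.
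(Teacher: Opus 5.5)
Your proposal takes essentially the same route as the paper: the lower bound comes from the explicit set $M_1=\{b_{1,1},b_{1,7},b_{1,13},b_{1,18},b_{1,21}\}$, and the upper bound is your ball-covering argument (equivalently, a cost-$5$ broadcast plus weak duality). The concrete cover you left unidentified is $V(G_1)=N_3[b_{1,7}]\cup N_2[b_{1,17}]$, which gives $|M|\le 3+2=5$ for any multipacking $M$. Rather than appealing to the figure caption for $\gamma_b(G_1)=5$, cite the explicit broadcast the paper gives later, $f(b_{1,6})=f(b_{1,17})=2$ and $f(b_{1,12})=1$, which is proved independently of this lemma, so your duality route is not circular.
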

\begin{proof}  $V(G_1)=N_3[b_{1,7}]\cup N_2[b_{1,17}]$. Suppose $M$  is a multipacking on $G_1$ such that $|M|=\MP(G_{1})$. So, $|M\cap N_3[b_{1,7}]|\leq 3$ and $|M\cap N_2[b_{1,17}]|\leq 2$. Therefore, $|M\cap (N_3[b_{1,7}]\cup N_2[b_{1,17}])|\leq 5$. So, $|M\cap V(G)|\leq 5$, that implies $|M|\leq 5$. Let $M_1=\{b_{1,1},b_{1,7},b_{1,13},b_{1,18},b_{1,21}\}$. Since $|N_r[v]\cap M|\leq r$ for each vertex $ v \in V(G_{1}) $ and for every integer $ r \geq 1 $, so $M_1$ is a multipacking of size $5$. Then  $ 5=|M_1|\leq |M|$. So, $|M|=5$. Therefore, $\MP(G_{1})=5$. 
\end{proof}

 So, now we have $\MP(G_1)=5$. Using this fact we prove that $\MP(G_2)=9$.

\begin{lemma}\label{mpG2=9}  $\MP(G_{2})=9$.
\end{lemma}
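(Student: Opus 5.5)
The lower bound $\MP(G_2)\geq 9$ is the case $k=1$ of Lemma \ref{mpG2k>=9k}, witnessed by $M_2=\{b_{1,1},b_{1,7},b_{1,13},b_{1,18},b_{1,21},b_{2,4},b_{2,8},b_{2,14},b_{2,18}\}$. So the work is the upper bound $\MP(G_2)\leq 9$.

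I would prove the upper bound the same way Lemma \ref{mpG1=5} was proved: cover $V(G_2)$ by a few balls whose radii sum to $9$. Each ball $N_r[v]$ contains at most $r$ vertices of any multipacking $M$. So if $V(G_2)=\bigcup_j N_{r_j}[v_j]$ with $\sum_j r_j=9$, then $|M|\leq 9$.

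The natural first attempt is to reuse the cover of Lemma \ref{mpG1=5}, namely $N_3[b_{1,7}]\cup N_2[b_{1,17}]$ on $B_1$ and the analogous cover on $B_2$. That already has total radius $10$, too much by one. The single extra unit must be saved at the junction edge $b_{1,21}b_{2,1}$.

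\textbf{Step 1.} Look for a vertex near the junction, such as $b_{1,17}$, $b_{1,21}$, $b_{2,1}$ or a vertex in $B_2$ at distance $1$ or $2$ from $b_{2,1}$. I want a ball of radius $3$ around it that covers the right part of $B_1$ (what $N_2[b_{1,17}]$ covered) together with the left part of $B_2$. This would merge two radius-2 balls, or a radius-2 and a radius-1 ball, into one radius-3 ball.

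\textbf{Step 2.} Complete the cover of $B_2$ with two more balls of total radius $3$, and cover $B_1$ with $N_3[b_{1,7}]$. The target is a decomposition such as $3+3+3=9$ or $3+3+2+1=9$.

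\textbf{Step 3 (fallback).} If no cover of total radius $9$ exists, argue by contradiction from a multipacking $M$ with $|M|=10$. Since $V(G_2)$ is covered by balls of total radius $10$, every one of those balls must be saturated. In particular $|M\cap V(B_1)|=|M\cap V(B_2)|=5$ by Lemma \ref{mpG1=5}. Saturation of the radius-2 balls, together with Lemma \ref{duv>=3}, then forces specific vertices near $b_{1,21}$ and near $b_{2,1}$ into $M$; in $G_1$ the maximum multipacking uses $b_{1,21}$ itself. I would show that these forced vertices, plus the saturated radius-3 balls, put $3$ vertices of $M$ inside some $N_2[w]$ with $w$ on the junction. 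That contradicts the multipacking condition.

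\textbf{Main obstacle.} The hard step is finding the right covering centers in Step 1, or alternatively establishing that $M\cap B_i$ is essentially forced in Step 3. Both depend on distance details of the figure. I expect Step 3 to be needed, which requires first classifying all maximum multipackings of $G_1$ near the attachment vertices $b_{1,1}$ and $b_{1,21}$.
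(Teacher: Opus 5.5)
Your Steps 1--2 cannot succeed, whichever centres you choose. A cover $V(G_2)=\bigcup_j N_{r_j}[v_j]$ with $r_j\ge 1$ gives a dominating broadcast of cost at most $\sum_j r_j$ (put $f(v_j)=r_j$, taking the maximum if a centre repeats). But $\gamma_b(G_2)=10$ by Lemma~\ref{mpfGk=gammabGk=5k}. More generally, any upper bound obtained by adding nonnegative multiples of the inequalities $|N_r[v]\cap M|\le r$ is a feasible dual LP solution, so it is at least $\gamma_{b,f}(G_2)=\MP_f(G_2)=10$. This lemma is precisely the integrality gap $\MP(G_2)<\MP_f(G_2)$, so pure ball-counting is ruled out. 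Your Step~3, which you rightly expected to need, is therefore the only route. It is also the paper's route: assume $|M|=10$, deduce $|M_1|=|M_2|=5$, force $b_{1,21}\in M$, then show $|M_2|\le 4$.

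Step~3 as written still lacks the actual argument, and its first forcing step is misdescribed. Saturating $N_2[b_{1,17}]$ does not force $b_{1,21}$: the vertex $b_{1,14}$ together with one vertex of $S_3=\{b_{1,15},\dots,b_{1,20}\}$ also saturates it. The missing ingredient is a constraint that is not among your cover balls.

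\emph{Forcing $b_{1,21}$.} Write $V(B_1)\setminus\{b_{1,21}\}=S_1\cup\{b_{1,7}\}\cup S_2\cup\{b_{1,14}\}\cup S_3$, with $S_1=\{b_{1,1},\dots,b_{1,6}\}$ and $S_2=\{b_{1,8},\dots,b_{1,13}\}$. Each $S_t$ has diameter at most $2$, so by Lemma~\ref{duv>=3} it meets $M$ at most once. If $b_{1,21}\notin M$, then $|M_1|=5$ forces $b_{1,7},b_{1,14}\in M$ and some $w\in S_2\cap M$. All three lie in $N_2[b_{1,10}]$, a contradiction. Hence $b_{1,21}\in M$.

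\emph{Bounding $|M_2|$.} On $B_2$ we have $\{b_{2,14},b_{2,21}\}\cup S_6\subseteq N_2[b_{2,17}]$, where $S_6=\{b_{2,15},\dots,b_{2,20}\}$. So $|M_2|\le |M\cap S_4|+|M\cap\{b_{2,7}\}|+|M\cap S_5|+2\le 5$. Equality forces $b_{2,7}\in M$ and some $u\in S_4=\{b_{2,1},\dots,b_{2,6}\}\cap M$. Then $b_{1,21},b_{2,7},u\in N_2[b_{2,3}]$, which is exactly the junction ball you anticipated.

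No classification of maximum multipackings near $b_{1,1}$ is needed. With these two claims your Step~3 becomes a complete proof. They are the two claims in the paper's proof; the paper does the second by a slightly longer case split that ends at $N_2[b_{2,17}]$.
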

\begin{proof} As mentioned before,  $B_i= G_k[\{b_{i,j}:1\leq j \leq 21\}]$, $1\leq i \leq 2$. So, $B_1$ and $B_2$ are two blocks in $G_2$ which are isomorphic to $G_1$. Let $M$ be a multipacking of $G_2$ with size $\MP(G_{2})$. So, $|M|\geq 9$ by Lemma \ref{mpG2k>=9k}. Since $M$ is a multipacking of $G_2$, so $M\cap V(B_1)$  and $M\cap V(B_2)$  are multipackings of $B_1$ and $B_2$,  respectively. Let $M\cap V(B_1)=M_1$  and $M\cap V(B_2)=M_2$. Since $B_1\cong G_1$ and $B_2\cong G_1$, so $\MP(B_1)=5$ and $\MP(B_2)=5$ by Lemma \ref{mpG1=5}. This implies $|M_1|\leq 5$ and $|M_2|\leq 5$. Since $V(B_1)\cup V(B_2)=V(G_2)$ and $V(B_1)\cap V(B_2)=\phi$, so $M_1\cap M_2=\phi$ and $|M|=|M_1|+|M_2|$.  Therefore, $9\leq |M|=|M_1|+|M_2|\leq  10$. So, $9\leq |M|\leq 10$. 

We establish this lemma by using contradiction on $|M|$. In the first step, we prove that if  $|M_1|= 5$, then the particular vertex $b_{1,21}\in M_1$. Using this, we can show that $|M_2|\leq 4$. In this way we show that $|M|\leq 9$.

For the purpose of contradiction, we assume that $|M|=10$. So, $|M_1|+ |M_2|=10$, and also   $|M_1|\leq 5$, $|M_2|\leq 5$. Therefore,   $|M_1|=|M_2|= 5$.

\vspace{0.2cm}
\noindent
\textbf{Claim \ref{mpG2=9}.1. } If $|M_1|= 5$, then $b_{1,21}\in M_1$.
\begin{claimproof}
Suppose $b_{1,21}\notin M$. Let $S=\{b_{1,7},b_{1,14}\}$, $S_1=\{b_{1,r}:1\leq r\leq 6\}$, $S_2=\{b_{1,r}:8\leq r\leq 13\}$, $S_3=\{b_{1,r}:15\leq r\leq 20\}$. If $u,v\in S_t$, then $d(u,v)\leq 2$, this holds  for each $ t\in\{1,2,3\}$. So, by Lemma \ref{duv>=3}, $u,v$ together cannot be in a multipacking. Therefore $|S_t\cap M_1|\leq 1$ for $t=1,2,3$  and  $|S\cap M_1|\leq |S|= 2$. Now, $5=|M_1|=|M_1\cap [V(G_1)\setminus \{b_{1,21}\}|=|M_1\cap(S\cup S_1\cup S_2\cup S_3)|=|(M_1\cap S)\cup(M_1\cap S_1)\cup(M_1\cap S_2)\cup(M_1\cap S_3)|\leq |M_1\cap S|+|M_1\cap S_1|+|M_1\cap S_2|+|M_1\cap S_3|\leq 2+1+1+1=5$. Therefore,  $|S_t\cap M_1|= 1$ for $t=1,2,3$  and  $|S\cap M_1|= 2$, so $b_{1,7},b_{1,14}\in M_1$. Since $|S_2\cap M_1|= 1$,  there exists $ w \in S_2\cap M_1$. Then $N_2[b_{1,10}]$ contains three vertices $b_{1,7},b_{1,14},w$ of $M_1$, which is not possible. So, this is a contradiction. Therefore, $b_{1,21}\in M_1$.  
\end{claimproof}

\vspace{0.1cm}
\noindent
\textbf{Claim \ref{mpG2=9}.2. } If $|M_1|=5$, then $|M_2|\leq 4$.
\begin{claimproof}

Let $S'=\{b_{2,14},b_{2,21}\}$, $S_4=\{b_{2,r}:1\leq r\leq 6\}$, $S_5=\{b_{2,r}:8\leq r\leq 13\}$, $S_6=\{b_{2,r}:15\leq r\leq 20\}$. By  Lemma \ref{duv>=3},   $|S_t\cap M_2|\leq 1$ for $t=4,5,6$  and also $|S'\cap M_2|\leq |S'|= 2$.

Observe that, if $S_4\cap M_2\neq \phi$, then  $b_{2,7} \notin M_2$ (i.e.  if $b_{2,7} \in M_2$, then $S_4\cap M_2 = \phi$). [Suppose not, then $S_4\cap M_2 \neq \phi$ and  $b_{2,7} \in M_2$, so,  there exists $ u\in S_4\cap M_2$. Then $N_2[b_{2,3}]$ contains three vertices $b_{1,21},b_{2,7},u$ of $M$, which is not possible. This is a contradiction].

Suppose $S_4\cap M_2 \neq \phi$, then  $b_{2,7} \notin M_2$. Now, $5=|M_2|=|M_2\cap [V(B_2)\setminus \{b_{2,7}\}]|=|M_2\cap(S'\cup S_4\cup S_5\cup S_6)|=|(M_2\cap S')\cup(M_2\cap S_4)\cup(M_2\cap S_5)\cup(M_2\cap S_6)|\leq |M_2\cap S'|+|M_2\cap S_4|+|M_2\cap S_5|+|M_2\cap S_6|\leq 2+1+1+1=5$. Therefore  $|S_t\cap M_2|= 1$ for $t=4,5,6$  and  $|S'\cap M_2|= 2$. Since $|M_2\cap S_6|= 1$, so there exists $ u_1\in M_2\cap S_6$. Then $N_2[b_{2,17}]$ contains three vertices $b_{2,14},b_{2,21},u_1$ of $M_2$, which is not possible. So, this is a contradiction. 

Suppose $S_4\cap M_2 = \phi$, then either  $b_{2,7} \in M_2$ or $b_{2,7} \notin M_2$. First consider  $b_{2,7} \notin M_2$, then $5=|M_2|=|M_2\cap(S'\cup S_5\cup S_6)|=|(M_2\cap S')\cup(M_2\cap S_5)\cup(M_2\cap S_6)|\leq |M_2\cap S'|+|M_2\cap S_5|+|M_2\cap S_6|\leq 2+1+1=4$. So, this is a contradiction. And if $b_{2,7} \in M_2$, then  $5=|M_2|=|M_2\cap(S'\cup S_5\cup S_6\cup \{b_{2,7}\})|=|(M_2\cap S')\cup(M_2\cap S_5)\cup(M_2\cap S_6)\cup(M_2\cap \{b_{2,7}\})|\leq |M_2\cap S'|+|M_2\cap S_5|+|M_2\cap S_6|+|M_2\cap \{b_{2,7}\}|\leq 2+1+1+1=5$. Therefore  $|S_t\cap M_2|= 1$ for $t=5,6$  and  $|S'\cap M_2|= 2$. Since $|M_2\cap S_6|= 1$, so there exists $ u_2\in M_2\cap S_6$. Then $N_2[b_{2,17}]$ contains three vertices $b_{2,14},b_{2,21},u_2$ of $M_2$, which is not possible. So, this is a contradiction. So, $|M_1|=5\implies |M_2|\leq 4$.  
\end{claimproof}

Recall that for contradiction,  we assume $|M|=10$, which implies $|M_2|=5$. In the proof of the above claim, we established  $|M_2|\leq 4$, which in turn contradicts our assumption. So, $|M|\neq 10$.   Therefore, $|M|=9$. 
\end{proof}

Notice that  graph $G_{2k}$ has $k$ copies of $G_2$. Moreover, we have $\MP(G_2)=9$. Using the Pigeonhole principle, we show that $\MP(G_{2k})=9k$.

\begin{lemma}\label{mpG2k=9k} $\MP(G_{2k})=9k$, for each positive integer $k$.
\end{lemma}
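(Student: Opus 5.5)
The lower bound $\MP(G_{2k})\geq 9k$ is already Lemma~\ref{mpG2k>=9k}, so only $\MP(G_{2k})\leq 9k$ remains. The plan is to split $G_{2k}$ into the $k$ consecutive pairs of blocks $D_i=B_{2i-1}\cup B_{2i}$ for $1\leq i\leq k$. Each $D_i$ is an induced subgraph isomorphic to $G_2$: it consists of two copies of $G_1$ joined by the single edge $b_{2i-1,21}b_{2i,1}$. The vertex sets of $D_1,\dots,D_k$ partition $V(G_{2k})$.

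Now let $M$ be a maximum multipacking of $G_{2k}$ and put $M_i=M\cap V(D_i)$, so that $|M|=\sum_{i=1}^k |M_i|$. The key step is to show $|M_i|\leq 9$ for every $i$. We cannot just say that $M_i$ is a multipacking of $D_i$ for an arbitrary induced subgraph, because distances in the subgraph may exceed distances in $G_{2k}$. Here, however, $D_i$ is joined to the rest of $G_{2k}$ only through the cut edges $b_{2i-2,21}b_{2i-1,1}$ and $b_{2i,21}b_{2i+1,1}$. Hence every shortest path in $G_{2k}$ between two vertices of $D_i$ stays inside $D_i$, so $D_i$ is an isometric subgraph.

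It follows that for $v\in V(D_i)$ we have $N_r^{D_i}[v]=N_r^{G_{2k}}[v]\cap V(D_i)$. Therefore $|N_r^{D_i}[v]\cap M_i|\leq |N_r^{G_{2k}}[v]\cap M|\leq r$, so $M_i$ is a multipacking of $D_i\cong G_2$. By Lemma~\ref{mpG2=9}, $|M_i|\leq 9$. This is the same pattern used inside the proof of Lemma~\ref{mpG2=9}, where $M\cap V(B_j)$ was treated as a multipacking of $B_j$.

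Summing over $i$ gives $\MP(G_{2k})=|M|\leq 9k$, and together with Lemma~\ref{mpG2k>=9k} we get equality. The only point needing care is the isometry of $D_i$ in $G_{2k}$, which follows directly from the chain structure of $G_k$ with single bridging edges. Once that is in place, the pigeonhole-type summation is routine.
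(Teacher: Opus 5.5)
Your proof is correct and follows the paper's argument. It uses the same partition of $G_{2k}$ into the $k$ copies $B_{2i-1}\cup B_{2i}\cong G_2$, applies Lemma~\ref{mpG2=9} to each copy, and sums (the paper phrases this as a pigeonhole contradiction). Your isometry step is valid but unnecessary, and your stated reason for it is backwards: for any subgraph $H$ of $G$ we have $d_H\geq d_G$, hence $N_r^H[v]\subseteq N_r^G[v]\cap V(H)$, so the restriction of a multipacking of $G$ to $V(H)$ is always a multipacking of $H$; isometry is needed only in the converse direction, lifting a multipacking of $H$ to one of $G$, as in Lemma~\ref{lem:multipacking_subgraph}.
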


\begin{proof} For $k=1$ it is true by Lemma \ref{mpG2=9}. Moreover, we know $\MP(G_{2k})\geq 9k$  by Lemma \ref{mpG2k>=9k}.  Suppose $k>1$ and assume  $\MP(G_{2k})>9k$. Let $\hat{M}$ be a multipacking of $G_{2k}$ such that $|\Hat{M}|>9k$. Let $\hat{B}_j$ be a subgraph of $G_{2k}$ defined as $\hat{B}_j=B_{2j-1}\cup B_{2j}$ where $1\leq j \leq k$. So, $V(G_{2k})=\bigcup_{j=1}^kV(\hat{B}_j)$ and $V(\hat{B}_p)\cap V(\hat{B}_q)=\phi$ for all $p\neq q$ and  $p,q\in \{1,2,3,\dots,k\}$. Since $|\Hat{M}|>9k$, so by the Pigeonhole principle there exists a  number  $j\in \{1,2,3,\dots,k\}$ such that $|\Hat{M}\cap \hat{B}_j|>9$. Since $\Hat{M}\cap \hat{B}_j$ is a multipacking of $\hat{B}_j$, so $\MP(\hat{B}_j)>9$.  But $\hat{B}_j \cong G_2$ and $\MP(G_2)=9$ by Lemma \ref{mpG2=9}, so $\MP(\hat{B}_j)=9$, which is a contradiction. Therefore, $\MP(G_{2k})=9k$.
\end{proof}

R. C. Brewster and  L. Duchesne \cite{brewster2013broadcast} introduced fractional multipacking in 2013 (also see \cite{teshima2014multipackings}).  Suppose $G$ is a graph with $V(G)=\{v_1,v_2,v_3,\dots,v_n\}$ and $w:V(G)\rightarrow [0,\infty)$  is a function. So, $w(v)$ is a weight on a vertex $v\in V(G)$. Let $w(S)=\sum_{u\in S}w(u)$ where $S\subseteq V(G)$.  We say   $w$ is a \textit{fractional multipacking} of $G$,  if $w( N_r[v])\leq r$ for each vertex $ v \in V(G) $ and for every integer $ r \geq 1 $. The \textit{fractional multipacking number} of $ G $ is the  value $\displaystyle \max_w w(V(G)) $ where $w$ is any fractional multipacking and it
	is denoted by $ \MP_f(G) $. A \textit{maximum fractional multipacking} is a fractional multipacking $w$  of a graph $ G $ such that	$ w(V(G))=\MP_f(G)$. If $w$ is a fractional multipacking, we define   a vector $y$ with the entries $y_j=w(v_j)$.  So,  $$\MP_f(G)=\max \{y.\mathbf{1} :  yA\leq c, y_{j}\geq 0\}.$$  So, this is a  linear program which is the dual of the linear program   $\min \{c.x :  Ax\geq \mathbf{1}, x_{i,k}\geq 0\}$. Let,  $$\gamma_{b,f}(G)=\min \{c.x : Ax\geq \mathbf{1}, x_{i,k}\geq 0\}.$$ Using the strong duality theorem for linear programming, we can say that  $$\MP(G)\leq \MP_f(G)= \gamma_{b,f}(G)\leq \gamma_{b}(G).$$

\begin{figure}[ht]
    \centering
   \includegraphics[height=11.3cm]{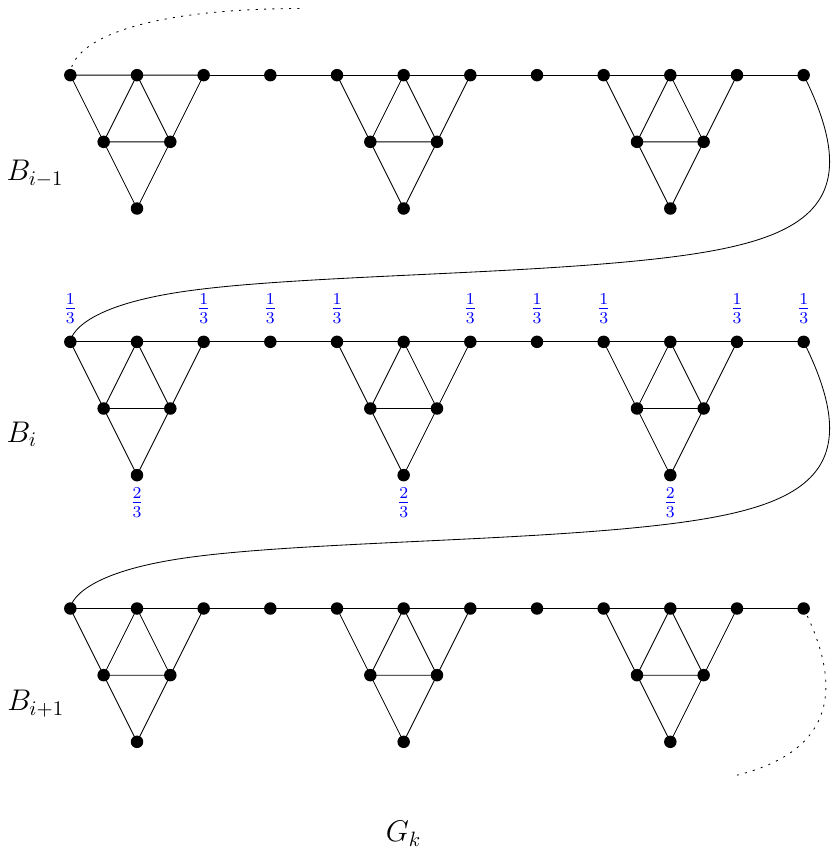}
    \caption{Graph $G_k$}
    \label{fig:Gkmpf}
\end{figure}

\begin{lemma} \label{mpfGk>=5k}
If $k$ is a positive integer, then $ \MP_f(G_{k})\geq 5k$.
\end{lemma}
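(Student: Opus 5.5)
Since $\MP_f(G_k)$ is a maximum over all fractional multipackings, it suffices to exhibit one fractional multipacking $w$ of $G_k$ with $w(V(G_k))=5k$. I will build $w$ block by block. On each block $B_i\cong G_1$, put a weighted copy $w_1$ of total weight $5$. The whole difficulty of $G_{2k}$ was that a full integral multipacking of size $5$ in one block forces $b_{i,21}$ into it (Claim \ref{mpG2=9}.1), and this clashes with the next block across the edge $b_{i,21}b_{i+1,1}$. So $w_1$ should be chosen symmetric with respect to the two attachment vertices $b_{i,1}$ and $b_{i,21}$, spreading the weight so that neither endpoint carries a full unit. A natural first attempt is to average integral multipackings of $G_1$. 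One takes $M_1=\{b_{1,1},b_{1,7},b_{1,13},b_{1,18},b_{1,21}\}$ together with a "mirror" multipacking of size $5$ and weight each by $\tfrac12$. The mirror arises from a reflection of $G_1$ exchanging the two ends if one exists, or otherwise from a second maximum multipacking of $G_1$ read off the figure. Then set $w=\sum_i w_1^{(i)}$, so $w(V(G_k))=5k$.

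The verification follows the strategy of Lemma \ref{mpG2k>=9k}. First, for $r\in\{1,2,3,4\}$ (the small cases) I check $w(N_r[v])\le r$ directly for every vertex $v$. By the block structure, a ball of radius at most $4$ meets at most two consecutive blocks, so this is a finite check on $B_i\cup B_{i+1}$, which is the vertices of $G_2$. Next comes the inductive step: I show that every "annulus" $N_{s+t}[v]\setminus N_s[v]$ carries weight at most $t$ for a suitable period $t$. Since blocks are chained in a path-like fashion, the annulus meets the graph only in a bounded number of layers on each side of $v$. If one block has diameter about $2t'$, the weight met per unit of distance on one side is about $5/(\text{block length})$. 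The per-side weight gained when extending the radius by $t$ is then at most about $t/2$, giving at most $t$ in total. This makes $w(N_{s+t}[v])\le w(N_s[v])+t\le s+t$.

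An alternative, cleaner route for large radii is to treat each block as contributing weight $5$ over its length and compare with the $\lceil(d+1)/3\rceil$-type diametral argument of Theorem \ref{d+1/3leqmpG}. A ball of radius $r$ meets at most about $2r/L+2$ blocks, where $L$ is the distance between consecutive attachment points. It therefore has weight at most $5(2r/L+2)$, which is at most $r$ once $r$ is large relative to the block size. This leaves only finitely many radii, which can again be checked on $G_2$ or $G_3$.

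The main obstacle is the choice of $w_1$. It must have total weight $5$ and remain a fractional multipacking when two copies are glued end to end; the small radii $r=2,3$ around the junction vertices $b_{i,21},b_{i+1,1}$ are the delicate constraints. If the symmetric average fails there, I would first try shifting weight $\tfrac12$ off the endpoint vertices onto interior vertices at distance $\ge3$ from the junction. I would then recheck the balls $N_2$ and $N_3$ centered near the junction.
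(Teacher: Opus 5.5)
Your verification plan (direct check for small radii, then an annulus bound for the induction) is the same as the paper's. The proof still has a genuine gap: the weight function, which is the whole content of the lemma, is never produced, and the one concrete candidate you give is provably not a fractional multipacking for $k\ge 2$.

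By Claim~\ref{mpG2=9}.1, which you cite, every multipacking of $G_1$ of size $5$ contains $b_{1,21}$. So your mirror $M_1'$ contains it too, and $w(b_{i,21})=1$. Meanwhile $b_{1,1}\in M_1$ gives $w(b_{i+1,1})\ge\tfrac12$. Since $b_{i,21}b_{i+1,1}$ is an edge, $w(N_1[b_{i,21}])\ge\tfrac32>1$, so the failure is already at radius $1$, not only at radii $2,3$.

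Worse, no average of maximum multipackings of $G_1$ can work. Let $S_1=\{b_{1,1},\dots,b_{1,6}\}$ and $S_3=\{b_{1,15},\dots,b_{1,20}\}$. The counting of Claims~\ref{mpG2=9}.1--\ref{mpG2=9}.2 shows that every size-$5$ multipacking $A$ of $G_1$ contains $b_{1,7}$, $b_{1,21}$ and a vertex of $S_1$. Indeed, if $b_{1,7}\notin A$ or $A\cap S_1=\emptyset$, then $|A|=5$ forces $b_{1,14},b_{1,21}\in A$ and $A\cap S_3\neq\emptyset$. That puts three vertices of $A$ in $N_2[b_{1,17}]$, a contradiction.

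As used in Claim~\ref{mpG2=9}.2, $N_2[b_{i+1,3}]$ contains $b_{i,21}$, $b_{i+1,7}$ and all of $b_{i+1,1},\dots,b_{i+1,6}$. Any such average therefore gives $w(N_2[b_{i+1,3}])\ge 1+1+1=3>2$. Since $w_1(V(G_1))=5=\MP(G_1)$, the block weighting you need must lie outside the convex hull of multipackings of $G_1$. It cannot be reached by averaging followed by a small adjustment near the junction, and your fallback is left unspecified.

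The missing ingredient is an explicit, genuinely fractional weighting. The paper sets $w(b_{i,j})=\tfrac13$ for $j\in\{1,6,7,8,13,14,15,20,21\}$, $w(b_{i,j})=\tfrac23$ for $j\in\{4,11,18\}$, and $0$ elsewhere. This has weight $5$ per block but puts only $\tfrac13$ on $b_{i,1}$, $b_{i,7}$ and $b_{i,21}$. The paper then checks $w(N_r[v])\le r$ for $r\le 4$ directly and shows $w(N_{s+4}[v]\setminus N_s[v])\le 4$, which drives the induction from $s$ to $s+4$. With this $w$ in hand, your small-radius-plus-annulus plan goes through with period $t=4$.

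Your alternative block-counting bound $5(2r/L+2)\le r$ is not a substitute. It needs $L>10$, which you have not checked. It also only applies for $r\ge 10L/(L-10)$, which can far exceed the diameter of $G_2$ or $G_3$ when $L$ is close to $10$. So the leftover finitely many radii cannot simply be checked on $G_2$ or $G_3$.
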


\begin{proof}
We define a function $w: V(G_k)\rightarrow [0,\infty)$  where $w(b_{i,1})=w(b_{i,6})=w(b_{i,7})=w(b_{i,8})=w(b_{i,13})=w(b_{i,14})=w(b_{i,15})=w(b_{i,20})=w(b_{i,21})=\frac{1}{3}$ and $w(b_{i,4})=w(b_{i,11})=w(b_{i,18})=\frac{2}{3}$ for each $i\in \{1,2,3,\dots,k\}$ (Figure~\ref{fig:Gkmpf}).  So, $w(G_k)=5k$. 
 We want to show that $w$ is a fractional multipacking of $G_{k}$. So, we have to prove that $w(N_r[v])\leq r$ for each vertex $ v \in V(G_{k}) $ and for every integer $ r \geq 1 $. We prove this statement using induction on $r$.	It can be checked that $w(N_r[v])\leq r$ for each vertex $ v \in V(G_{k}) $ and for each  $ r \in \{1,2,3,4\} $. Now assume that the statement is true for $r=s$, we want to prove that it is true for $r=s+4$. Observe that, $w(N_{s+4}[v]\setminus N_{s}[v])\leq 4$, $\forall v\in V(G_{k})$. Therefore,  $w(N_{s+4}[v])\leq w(N_{s}[v])+4\leq s+4$. So, the statement is true. So, $w$ is a fractional multipacking of $G_{k}$. Therefore, $\MP_f(G_{k})\geq 5k$. 
\end{proof}

\begin{lemma} \label{mpfGk=gammabGk=5k}
If $k$ is a positive integer, then $\MP_f(G_{k})=\gamma_b(G_{k})= 5k$.
\end{lemma}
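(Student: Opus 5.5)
We will prove the statement by sandwiching the quantities. From the chain $\MP(G_k)\leq \MP_f(G_k)=\gamma_{b,f}(G_k)\leq \gamma_b(G_k)$ obtained by LP duality, and from Lemma \ref{mpfGk>=5k}, which gives $\MP_f(G_k)\geq 5k$, it is enough to show $\gamma_b(G_k)\leq 5k$. Then
\[
5k\leq \MP_f(G_k)\leq \gamma_b(G_k)\leq 5k,
\]
and all three quantities equal $5k$.

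For the upper bound we exhibit an explicit dominating broadcast of cost $5$ on each block $B_i$ and take the union over $i=1,\dots,k$. The proof of Lemma \ref{mpG1=5} suggests the choice. It uses the covering $V(G_1)=N_3[b_{1,7}]\cup N_2[b_{1,17}]$. Copying this into every block, we set
\[
f(b_{i,7})=3,\qquad f(b_{i,17})=2 \qquad (1\leq i\leq k),
\]
and $f(v)=0$ for all other vertices $v$. Distances in $G_k$ between vertices of the same block $B_i$ are no larger than in the isomorphic copy of $G_1$, since $B_i$ is a subgraph of $G_k$. Therefore $N_3[b_{i,7}]\cup N_2[b_{i,17}]\supseteq V(B_i)$ also holds in $G_k$. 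The blocks partition $V(G_k)$, so $f$ is a dominating broadcast of cost $\sigma(f)=5k$, and hence $\gamma_b(G_k)\leq 5k$.

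The only point that needs care is the covering identity $V(G_1)=N_3[b_{1,7}]\cup N_2[b_{1,17}]$. It depends on the specific structure of $G_1$ in Figure \ref{fig:G1m}, and it was already asserted in the proof of Lemma \ref{mpG1=5}, so we take it from there. Everything else is formal: the duality chain and Lemma \ref{mpfGk>=5k}.
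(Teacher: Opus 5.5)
Your proof is correct and follows the paper's route: the LP-duality sandwich $5k\le \MP_f(G_k)\le\gamma_b(G_k)\le 5k$, using Lemma~\ref{mpfGk>=5k}, closed by an explicit dominating broadcast of cost $5k$. The only difference is the witness broadcast. The paper uses the efficient broadcast $f(b_{i,6})=f(b_{i,17})=2$, $f(b_{i,12})=1$, while your $f(b_{i,7})=3$, $f(b_{i,17})=2$ is equally valid: it rests on the covering $V(G_1)=N_3[b_{1,7}]\cup N_2[b_{1,17}]$ from Lemma~\ref{mpG1=5}, together with your correct observation that balls can only grow when passing from $B_i$ to $G_k$.
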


\begin{proof} Define a broadcast ${f}$ on $G_k$ as ${f}(b_{i,j})=
    \begin{cases}
        2 & \text{if } 1\leq i \leq k \text{ and } j=6,17\\
        1 & \text{if } 1\leq i \leq k \text{ and } j=12 \\
        0 & \text{otherwise }
    \end{cases}$.\\
Here  ${f}$ is an efficient dominating broadcast and $\sum_{v\in V(G_k)}{f}(v)=5k$. So, $\gamma_b(G_k)\leq 5k$, $\forall k\in \mathbb{N}$. So, by the strong duality theorem and Lemma \ref{mpfGk>=5k},  $5k\leq \MP_f(G_k)= \gamma_{b,f}(G_k)\leq \gamma_{b}(G_k)\leq 5k$. Therefore, $\MP_f(G_{k})=\gamma_b(G_{k})= 5k$. 
\end{proof}

So, $\gamma_b(G_{2k})=10k$ by Lemma \ref{mpfGk=gammabGk=5k} and  $\MP(G_{2k})=9k$ by Lemma \ref{mpG2k=9k}. Take $G_{2k}=H_k$. Thus we prove Theorem \ref{thm:9k10k}.


\begin{corollary} \label{mpfG-mpG}  The difference $\MP_f(G)-\MP(G)$ can be arbitrarily large for  connected  chordal graphs.
\end{corollary}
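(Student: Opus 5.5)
The plan is to use the family $G_{2k}=H_k$ already built for Theorem~\ref{thm:9k10k}. For every positive integer $k$ we have both $\MP_f(G_{2k})$ and $\MP(G_{2k})$ in hand, so the corollary should reduce to subtracting them. Each $G_{2k}$ is a connected chordal graph. The blocks $B_i\cong G_1$ are chordal, and they are glued together by single bridge edges $b_{i,21}b_{i+1,1}$. A bridge lies on no cycle, so every cycle of $G_{2k}$ sits inside one block, and chordality is therefore inherited from $G_1$.

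The key steps, in order, are as follows.

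\begin{enumerate}
\item Apply Lemma~\ref{mpfGk=gammabGk=5k} with $2k$ in place of $k$. This gives $\MP_f(G_{2k})=\gamma_b(G_{2k})=5\cdot 2k=10k$.
\item Apply Lemma~\ref{mpG2k=9k}, which gives $\MP(G_{2k})=9k$.
\item Subtract to get
\[\MP_f(G_{2k})-\MP(G_{2k})=10k-9k=k.\]
Since $k$ is an arbitrary positive integer, the difference is unbounded over connected chordal graphs, which is exactly the statement.
\end{enumerate}

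There is essentially no obstacle here, because all the substantive work is already done in the preceding lemmas. The only point that needs care is Lemma~\ref{mpfGk=gammabGk=5k}. It rests on two facts: the explicit fractional weighting $w$ is valid, which was checked by induction on $r$ in steps of $4$, and the broadcast $f$ is dominating. Together these sandwich $\MP_f$ and $\gamma_b$ at $5k$. Once that lemma is granted, the corollary follows at once from the inequality chain $\MP(G)\le\MP_f(G)=\gamma_{b,f}(G)\le\gamma_b(G)$.
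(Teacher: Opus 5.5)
Your proposal is correct and matches the paper's proof exactly. The paper also takes $\MP_f(G_{2k})=10k$ from Lemma~\ref{mpfGk=gammabGk=5k} and $\MP(G_{2k})=9k$ from Lemma~\ref{mpG2k=9k}, then subtracts to get a gap of $k$; your bridge argument that $G_{2k}$ inherits chordality from $G_1$ is a correct justification of a point the paper leaves implicit.
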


\begin{proof}
    We get $\MP_f(G_{2k})=10k$ by Lemma \ref{mpfGk=gammabGk=5k} and  $\MP(G_{2k})=9k$ by Lemma \ref{mpG2k=9k}. Therefore,   $\MP_f(G_{2k})-\MP(G_{2k})=k$  for all positive integers $k$. 
\end{proof}

\begin{corollary}  \label{gammabG2k/mpG2k}
For every integer $k \geq 1$, there is a connected chordal graph $G_{2k}$ with $\MP(G_{2k})=9k$, $\MP_f(G_{2k})/\MP(G_{2k})=10/9$ and $\gamma_b(G_{2k})/\MP(G_{2k})=10/9$.
\end{corollary}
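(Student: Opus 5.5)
This corollary is assembled directly from results already established for the family $G_{2k}$. The graph $G_{2k}$ is connected and chordal: it is built by chaining $2k$ copies of the chordal graph $G_1$ with single bridging edges $b_{i,21}b_{i+1,1}$. A bridge lies on no cycle, so every cycle of $G_{2k}$ stays inside one block $B_i\cong G_1$. Hence any cycle of length at least four has a chord, and $G_{2k}$ is chordal. Connectivity is clear, since each block is connected and consecutive blocks are joined.

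For the numerical values I will invoke Lemma~\ref{mpG2k=9k}, which gives $\MP(G_{2k})=9k$. I will also apply Lemma~\ref{mpfGk=gammabGk=5k} with $2k$ in place of $k$, which gives $\MP_f(G_{2k})=\gamma_b(G_{2k})=5\cdot 2k=10k$. Dividing yields
\[
\frac{\MP_f(G_{2k})}{\MP(G_{2k})}=\frac{\gamma_b(G_{2k})}{\MP(G_{2k})}=\frac{10k}{9k}=\frac{10}{9}.
\]

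There is essentially no obstacle here. The only point requiring any care is the justification that gluing chordal blocks by bridges preserves chordality, and that follows from the bridge argument above. All the substantive work, namely the upper bound $\MP(G_2)=9$ via Claims~\ref{mpG2=9}.1 and~\ref{mpG2=9}.2 and the LP duality sandwich for $\gamma_b$, has already been done in the preceding lemmas.
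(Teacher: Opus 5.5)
Your proposal is correct and matches the paper, which obtains the corollary directly from Lemma~\ref{mpG2k=9k} ($\MP(G_{2k})=9k$) and from Lemma~\ref{mpfGk=gammabGk=5k} applied with $2k$ blocks ($\MP_f(G_{2k})=\gamma_b(G_{2k})=10k$). Your bridge argument for chordality, that every cycle lies inside a single block $B_i\cong G_1$, correctly fills in a point the paper leaves implicit.
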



\section{Improved bound of the ratio between broadcast domination and multipacking numbers of connected chordal graphs}\label{sec:chordal graphs}

In this section, we improve the lower bound of the expression $\lim_{\MP(G)\to \infty}$ $\sup\{\gamma_{b}(G)/\MP(G)\}$ for connected chordal graphs to $4/3$ where the previous lower bound was $10/9$  \cite{das2023relation} in Corollary \ref{cor:gammabG/mpGchordal}. To show this, we construct the graph $F_k$ as follows. Let $A_i$ be a 
 graph having the vertex set $\{a_i,b_i,c_i,d_i,e_i,g_i\}$ where  $(g_i,a_i,b_i,c_i,d_i,e_i,g_i)$ is a 6-cycle and $(a_i,c_i,e_i)$ is a 3-cycle,  for each  $i=1,2,\dots,3k$. We form $F_k $ by joining $b_i$ to $g_{i+1}$ for each $i=1,2,\dots,3k-1$ (See Fig. \ref{fig:pentagon}). This  graph is a  connected chordal graph. We show that $\MP(F_k)=3k$ and $\gamma_b(F_k)=4k$.

 \begin{figure}[ht]
    \centering

\includegraphics[width=\textwidth]{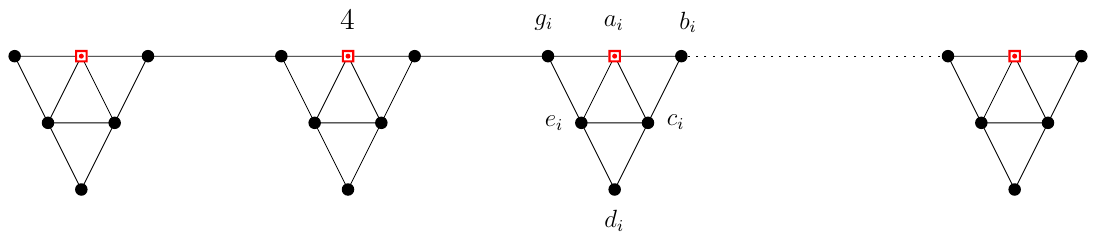}
    \caption{The $F_k$ graph with $\gamma_b(F_k)=4k$ and $\MP(F_k)=3k$. The set $\{a_i:1\leq i\leq 3k\}$ is a maximum multipacking of $F_k$.} 
    \label{fig:chordal}
\end{figure}

\begin{lemma} \label{lem:chordal_mpFk=3k} $\MP(F_{k})=3k$, for each positive integer $k$.
\end{lemma}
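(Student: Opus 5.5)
We need to prove $\MP(F_k)=3k$. We show $\MP(F_k)\ge 3k$ by exhibiting the set $M=\{a_i:1\le i\le 3k\}$, and $\MP(F_k)\le 3k$ by a blockwise counting argument.

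\textbf{Structure of $F_k$.} Each block $A_i$ is a 6-cycle $g_i a_i b_i c_i d_i e_i g_i$ with the triangle $a_i c_i e_i$. Hence $b_i, d_i, g_i$ are degree-2 ears on the edges $a_ic_i$, $c_ie_i$, $e_ia_i$ respectively. Any two vertices of $A_i$ are at distance at most $2$. The blocks are chained by the bridges $b_ig_{i+1}$.

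\textbf{Upper bound.} We show that any multipacking $M'$ meets each $A_i$ in at most one vertex, which gives $|M'|\le 3k$.

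Suppose $u,v\in M'\cap V(A_i)$ with $u\ne v$. Then $d(u,v)\le 2$, which contradicts Lemma~\ref{duv>=3}. Concretely, $d(u,v)=1$ puts two vertices of $M'$ in the ball $N_1[u]$, while $d(u,v)=2$ puts two vertices of $M'$ in $N_1[w]$ for a common neighbour $w$. So $|M'\cap V(A_i)|\le 1$ for every $i$, and summing over the $3k$ blocks gives $\MP(F_k)\le 3k$. This direction is essentially immediate.

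\textbf{Lower bound.} This is the main work: verifying that $M=\{a_i\}$ is a multipacking, i.e.\ $|N_r[v]\cap M|\le r$ for all $v$ and all $r$.

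First compute the relevant distances. The path from $a_i$ to $a_{i+1}$ runs $a_i b_i g_{i+1} a_{i+1}$, so $d(a_i,a_{i+1})=3$. Any path between blocks must use the bridges, and crossing block $A_j$ from $g_j$ to $b_j$ costs $2$. Therefore $d(a_i,a_j)=3|i-j|$, and the vertices of $M$ sit along the chain at spacing $3$.

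For a vertex $v\in A_j$, bound the distance from $v$ to the vertices of $M$ on each side:
\begin{itemize}
\item Moving leftward, each block costs $3$ (enter at $b$, leave via $g$, through $a$). From $v$ to the exit $g_j$ costs at most $2$, and $d(g_j,a_{j-1})=2$.
\item Moving rightward, $d(v,b_j)\le 2$ and $d(b_j,a_{j+1})=2$.
\end{itemize}
It follows that the vertices of $M$ within distance $r$ of $v$ form a contiguous run $a_{p},\dots,a_{q}$ whose extremes satisfy $d(v,a_p)+d(v,a_q)\ge 3(q-p)$. The contiguity holds because $v$ lies on a geodesic-like position along the chain.

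Now let $t=q-p+1$ be the number of vertices of $M$ in $N_r[v]$. Then
\[
2r \ge d(v,a_p)+d(v,a_q)\ge 3(t-1).
\]
If $t\ge 2$ this gives $r\ge \tfrac{3(t-1)}{2}\ge t$ whenever $t\ge 3$. The case $t=2$ needs $r\ge 2$, which holds because two vertices of $M$ are at distance $3$, so they cannot both lie within distance $1$ of any $v$. The case $t=1$ is trivial since $r\ge 1$.

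The main obstacle is justifying the bound $d(v,a_p)+d(v,a_q)\ge d(a_p,a_q)=3(q-p)$. This is simply the triangle inequality, so the argument closes cleanly. The one point to state explicitly is the exact distance formula $d(a_i,a_j)=3|i-j|$, which holds because every $a_i$–$a_j$ path must cross each intermediate block through its bridge endpoints $g_\ell$ and $b_\ell$, and $d(g_\ell,b_\ell)=2$. Alternatively, this direction follows from Theorem~\ref{d+1/3leqmpG} if $(a_1,\dots)$ lies on a diametral path taken every third vertex, but the direct argument above is self-contained.

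Combining the two bounds gives $\MP(F_k)=3k$.
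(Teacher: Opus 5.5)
Your proof is correct and follows essentially the paper's argument. The upper bound is the same one: each block $A_i$ has diameter $2$, so by Lemma~\ref{duv>=3} a multipacking meets each block at most once. Your triangle-inequality verification that $\{a_i\}$ is a multipacking is the proof of Lemma~\ref{lem:isometric_path} written out for the geodesic $(a_1,b_1,g_2,a_2,\dots,b_{3k})$, whereas the paper simply cites that lemma on the diametral path $(g_1,a_1,b_1,\dots,b_{3k})$. Your contiguity claim is not actually needed, since any $t$ distinct indices already give $q-p\ge t-1$.
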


\begin{proof} The path $P=(g_1,a_1,b_1,g_2,a_2,b_2,\dots,g_{3k}a_{3k}b_{3k})$ is a diametral path of $F_k$ (Fig.\ref{fig:chordal}). This implies $P$ is an isometric path of $F_k$ having the length  $l(P)=3.3k-1$. By Lemma \ref{lem:isometric_path}, every third vertex on this path form a multipacking of size $\big\lceil\frac{3.3k-1}{3}\big\rceil=3k$. Therefore, $\MP(F_k)\geq 3k$. Note that, diameter of $A_i$ is $2$ for each $i$. Therefore, any multipacking of $F_k$ can contain at most one vertex of $A_i$ for each $i$. So, $\MP(F_k)\leq 3k$. Hence $\MP(F_k)= 3k$.
\end{proof}

\begin{figure}[ht]
    \centering
    \includegraphics[width=\textwidth]{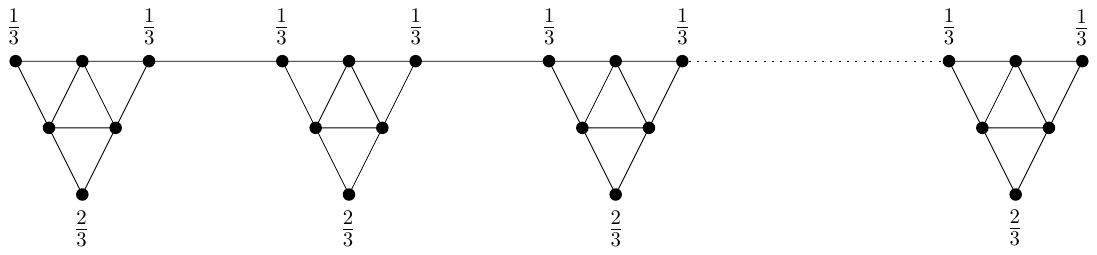}
    \caption{The $F_k$ graph with $\MP_f(F_k)=4k$.}
    \label{fig:chordal_frac}
\end{figure}

\begin{lemma}  \label{lem:chordal_mpfFk=gammabFk=4k}
If $k$ is a positive integer, then $\MP_f(F_{k})=\gamma_b(F_{k})= 4k$.
\end{lemma}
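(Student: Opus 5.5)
The plan mirrors the proof of Lemma~\ref{mpfGk=gammabGk=5k}: sandwich the common value between an explicit fractional multipacking of weight $4k$ and an explicit dominating broadcast of cost $4k$. Strong duality then gives
\[
4k\leq \MP_f(F_k)=\gamma_{b,f}(F_k)\leq \gamma_b(F_k)\leq 4k,
\]
which forces equality throughout.

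\textbf{Upper bound.} We build a dominating broadcast of cost $4k$ on $F_k$, working with blocks of three consecutive gadgets $A_{3j-2},A_{3j-1},A_{3j}$, $1\leq j\leq k$. Each block costs $4$: one tower of strength $3$ and one of strength $1$, or possibly two of strength $2$.

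\emph{First attempt.} Inside $A_i$, every vertex lies within distance $1$ of $a_i$ or of $c_i$ (the triangle $a_ic_ie_i$ plus the $6$-cycle), so $A_i$ has radius $1$ with centre $a_i$ or $c_i$. Also, $A_i$ and $A_{i+1}$ are linked by the edge $b_ig_{i+1}$.

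A tower of strength $2$ at $b_{3j-2}$ reaches $g_{3j-1}$ and $a_{3j-1},e_{3j-1}$, but not all of $A_{3j-1}$. So instead I would place a tower of strength $3$ at $g_{3j-1}$ or $a_{3j-1}$, which should cover most of $A_{3j-2}\cup A_{3j-1}\cup A_{3j}$. The concrete choice is to be settled by computing distances within the blocks.

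\emph{Backup.} If that fails, use strength $2$ at $b_{3j-2}$ and at $g_{3j}$-type vertices, adjusting so that the $g_i,d_i$ vertices are covered. I expect to fix the concrete pattern by explicit distance checks (distances in $F_k$ along the chain are easy, since consecutive gadgets meet only through a single edge).

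\textbf{Lower bound.} Define $w$ with total weight $4/3$ on each $A_i$, in the spirit of Figure~\ref{fig:chordal_frac}. For example, put $\frac13$ on each of $g_i,b_i$ and $\frac23$ on $d_i$ (the vertex farthest from the connecting edges), or spread the weight as $\frac13$ on four suitable vertices. Summing gives $w(V(F_k))=3k\cdot\frac43=4k$.

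Verifying $w(N_r[v])\leq r$ follows the inductive trick used for Lemma~\ref{mpG2k>=9k} and Lemma~\ref{mpfGk>=5k}:
\begin{itemize}
\item check $r\in\{1,2,3\}$ (and perhaps $r=4$) directly for each vertex type, which is finitely many cases by periodicity;
\item then observe that $N_{s+3}[v]\setminus N_s[v]$ meets at most three gadgets in a way that carries weight at most $3$, giving the step $s\to s+3$.
\end{itemize}

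\textbf{Main obstacle.} The hard step is choosing weights that survive the small-radius checks, especially $r=1$ around $a_i,c_i,e_i$ (the triangle) and around the connecting edge $b_ig_{i+1}$, where $w(N_1)\leq 1$ is needed. If $d_i$ carries $\frac23$, then $N_1[c_i]\ni d_i$ and $N_1[e_i]\ni d_i$ must avoid other heavy vertices. I would first try $w(d_i)=\frac23$, $w(g_i)=w(b_i)=\frac13$ and verify $r=1,2,3$; if $N_2$ or $N_3$ overflows near the junction, I would rebalance toward $\frac13$ on each of $b_i,d_i,g_i$ plus $\frac13$ on one triangle vertex.
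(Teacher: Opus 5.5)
Your plan is the paper's: the sandwich $4k\le \MP_f(F_k)=\gamma_{b,f}(F_k)\le\gamma_b(F_k)\le 4k$, with $w(g_i)=w(b_i)=\frac{1}{3}$ and $w(d_i)=\frac{2}{3}$ as your first choice of weights, which is exactly the paper's. There is, however, a genuine gap: the upper bound is never proved. You do not exhibit a dominating broadcast of cost $4k$, and most of the candidates you float fail.

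The heuristic behind them is wrong. $A_i$ (a $3$-sun) has radius $2$, not $1$; for example $d(a_i,d_i)=d(c_i,g_i)=2$. The candidates fail as follows:
\begin{itemize}
\item A strength-$3$ tower at $a_{3j-1}$ leaves $g_{3j-2},e_{3j-2},d_{3j-2}$ and $b_{3j},c_{3j},d_{3j}$ at distance $4$. These two groups lie in different gadgets, so the one remaining unit of the block cannot cover both.
\item The strength-$2$ backup fails too: $N_2[b_{3j-2}]\cup N_2[g_{3j}]$ misses $d_{3j-1}$.
\end{itemize}
The missing step is a concrete choice such as the paper's $f(a_{3j-1})=4$. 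This works because $d(a_{3j-1},b_{3j-2})=d(a_{3j-1},g_{3j})=2$, and every vertex of $A_i$ is within distance $2$ of both $b_i$ and $g_i$. Your own $g_{3j-1}$ idea can also be completed: $N_3[g_{3j-1}]\supseteq V(A_{3j-2})\cup V(A_{3j-1})\cup\{g_{3j}\}$ and $N_1[c_{3j}]=V(A_{3j})\setminus\{g_{3j}\}$. So $f(g_{3j-1})=3$, $f(c_{3j})=1$ also costs $4$ per block.

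In the lower bound, the step $s\to s+3$ with annulus weight at most $3$ is false at $s=1$. Take $v=a_i$ with $2\le i\le 3k-1$. The vertices at distance $2$, namely $d_i,b_{i-1},g_{i+1}$, weigh $\frac{4}{3}$. Those at distance $3$ weigh $0$. Those at distance $4$, namely $g_{i-1},e_{i-1},d_{i-1},b_{i+1},c_{i+1},d_{i+1}$, weigh $2$. The annulus therefore has weight $\frac{10}{3}>3$, so $r=4$ must be a base case (there $w(N_4[a_i])=4$ exactly).

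For $s\ge 2$ the step does hold. The annulus then avoids $v$'s own gadget, which has diameter $2$. On each side the distance layers carry weights periodically $\frac{1}{3},0,1$, starting from $b_{i-1}$ on the left and from $g_{i+1}$ on the right. Hence three consecutive layers weigh at most $\frac{4}{3}$ per side, and the annulus at most $\frac{8}{3}\le 3$. Note that the annulus can meet four gadgets, not three; it is the weight bound that matters.

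With base cases $r\le 4$ your width-$3$ induction is sound, and it is safer than the paper's width-$4$ step. For $k\ge 2$ and $3\le i\le 3k-2$ one has $w(N_7[a_i]\setminus N_3[a_i])=\frac{14}{3}>4$, so the annulus bound stated in the paper's proof does not hold as written, although the inequalities $w(N_r[v])\le r$ themselves are true.
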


\begin{proof}  
We define a function $w: V(F_k)\rightarrow [0,\infty)$  where $w(g_{i})=w(b_{i})=\frac{1}{3}$ and $w(d_{i})=\frac{2}{3}$   for each $i=1,2,3,\dots,3k$ (Fig. \ref{fig:chordal_frac}).  So, $w(F_k)=4k$. 
 We want to show that $w$ is a fractional multipacking of $F_{k}$. So, we have to prove that $w(N_r[v])\leq r$ for each vertex $ v \in V(F_{k}) $ and for every integer $ r \geq 1 $. We prove this statement using induction on $r$.	It can be checked that $w(N_r[v])\leq r$ for each vertex $ v \in V(F_{k}) $ and for each  $ r \in \{1,2,3,4\} $. Now assume that the statement is true for $r=s$, we want to prove that it is true for $r=s+4$. Observe that, $w(N_{s+4}[v]\setminus N_{s}[v])\leq 4$, $\forall v\in V(F_{k})$. Therefore,  $w(N_{s+4}[v])\leq w(N_{s}[v])+4\leq s+4$. So, the statement is true. So, $w$ is a fractional multipacking of $F_{k}$. Therefore, $\MP_f(F_{k})\geq 4k$.

Define a broadcast ${f}$ on $F_k$ as ${f}(v)=
    \begin{cases}
        4 & \text{if } v=a_i  \text{ and } i\equiv 2 \text{ (mod $3$)}  \\
        0 & \text{otherwise }
    \end{cases}$.\\
Here  ${f}$ is an efficient dominating broadcast and $\sum_{v\in V(F_k)}{f}(v)=4k$. So, $\gamma_b(F_k)\leq 4k$, for all $ k\in \mathbb{N}$. So, by the strong duality theorem  we have  $4k\leq \MP_f(F_k)= \gamma_{b,f}(F_k)\leq \gamma_{b}(F_k)\leq 4k$. Therefore, $\MP_f(F_{k})=\gamma_b(F_{k})= 4k$.  
\end{proof}

Lemma \ref{lem:chordal_mpFk=3k}  and  Lemma \ref{lem:chordal_mpfFk=gammabFk=4k} imply the following results.

\chordalmultipackingbroadcastgape*

Therefore, the range of the expression $\lim_{\MP(G)\to \infty}$ $\sup\{\gamma_{b}(G)/\MP(G)\}$ which was previously $[10/9,3/2]$  \cite{das2023relation} is improved to $[4/3,3/2]$.

\chordalgammabGBYmpGimproved*

\section{Hyperbolic graphs}
\label{sec:A study of Broadcast domination and Multipacking numbers on Hyperbolic graphs}
In this section, we relate the broadcast domination and multipacking number of $\delta$-hyperbolic graphs. In addition, we provide an approximation algorithm for the multipacking problem of the same. 

 Chepoi et al.~\cite{chepoi2008diameters} established a relation between the length of the radius and diameter of a $\delta$-hyperbolic graph, that we state in the following theorem:

\begin{theorem}[\cite{chepoi2008diameters}]\label{delta_hyperbolic_radius_diameter}
    For any $\delta$-hyperbolic graph $G$, we have $diam(G)\geq 2 \rad(G)-4\delta-1$. 
    
\end{theorem}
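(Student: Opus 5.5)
We would prove this through the four-point condition, by exhibiting one vertex whose eccentricity is at most about half the diameter plus $2\delta$. Let $D=\diam(G)$ and fix $u,v$ with $d(u,v)=D$. On a shortest $u,v$-path pick a ``midpoint'' $c$ with $d(u,c)=\lceil D/2\rceil$ and $d(c,v)=\lfloor D/2\rfloor$. The goal is
\[
e(c)\leq \lceil D/2\rceil+2\delta ,
\]
because then $\rad(G)\leq e(c)\leq \frac{D+1}{2}+2\delta$, and rearranging gives $D\geq 2\rad(G)-4\delta-1$.

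To bound $e(c)$, take an arbitrary vertex $w$ and apply the four-point condition to $u,v,c,w$. The three sums are
\[
S_1=d(u,v)+d(c,w)=D+d(c,w),\qquad S_2=d(u,c)+d(v,w),\qquad S_3=d(u,w)+d(v,c).
\]
Since every distance is at most $D$, we have $S_2\leq \lceil D/2\rceil+D$ and $S_3\leq D+\lfloor D/2\rfloor\leq D+\lceil D/2\rceil$.

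We split into two cases according to whether $S_1$ is the largest of the three sums.
\begin{itemize}
\item If $S_1$ is not the largest, then $S_1\leq\max\{S_2,S_3\}\leq D+\lceil D/2\rceil$. Hence $d(c,w)\leq\lceil D/2\rceil$.
\item If $S_1$ is the largest, then $\delta$-hyperbolicity says the two larger sums differ by at most $2\delta$. So $S_1\leq \max\{S_2,S_3\}+2\delta\leq D+\lceil D/2\rceil+2\delta$, and therefore $d(c,w)\leq\lceil D/2\rceil+2\delta$.
\end{itemize}
In either case $d(c,w)\leq \lceil D/2\rceil+2\delta$. As $w$ was arbitrary, this gives the eccentricity bound for $c$.

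To finish, combine $\rad(G)\leq e(c)$ with $\lceil D/2\rceil\leq (D+1)/2$. This yields $2\rad(G)\leq D+1+4\delta$, which is the claimed inequality.

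The only point needing care is the bookkeeping in the four-point condition: which sum is largest, and the use of $d(\cdot,\cdot)\leq D$ to bound $S_2$ and $S_3$. We do not expect a genuine obstacle. The argument needs no integrality of $\delta$, since $2\delta$ enters only additively.
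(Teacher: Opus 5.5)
Your argument is correct: applying the four-point condition to $u,v,c,w$, with $c$ the midpoint of a diametral $u,v$-path, gives $e(c)\le\lceil D/2\rceil+2\delta$ and hence $2\rad(G)\le D+1+4\delta$. The thesis gives no proof of its own here and simply cites Chepoi et al.; your midpoint argument is essentially the standard proof from that reference.
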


Using this theorem, we can establish a relation between $\gamma_{b}(G)$ and $\MP(G)$ of a $\delta$-hyperbolic graph $G$.

\deltaMultipackingBroadcastRelation*

\begin{proof} Let $G$ be a $\delta$-hyperbolic graph with radius $r$ and diameter $d$. 
    From Theorem \ref{d+1/3leqmpG}, $\big\lceil{\frac{d+1}{3}\big\rceil}\leq \MP(G)$  which implies that $d \leq 3\MP(G)-1$. Moreover, from  Theorem \ref{mpGleqgammabG} and Theorem \ref{delta_hyperbolic_radius_diameter},  $\gamma_{b}(G)\leq r \leq \big\lfloor{\frac{d+4\delta +1}{2}\big\rfloor} \leq \big\lfloor{\frac{(3\MP(G)-1)+4\delta +1}{2}\big\rfloor}$ $=\big\lfloor{\frac{3}{2} \MP(G)+2\delta \big\rfloor}  $. Therefore, $\gamma_{b}(G)\leq \big\lfloor{\frac{3}{2} \MP(G)+2\delta\big\rfloor}$. 
\end{proof}

\approxdeltaMultipacking*

\begin{proof}
    The proof is similar to the proof of Propostition \ref{prop:3/2mpGapprox}. If $P=v_0,\dots,v_d$ is a diametrical path of $G$, then the set $M=\{v_i:i\equiv 0 \text{ } (mod \text{ } 3), i=0,1,\dots,d\}$ is a multipacking of $G$ of size $\big\lceil{\frac{d+1}{3}\big\rceil}$ by Theorem \ref{d+1/3leqmpG}. We can construct $M$ in polynomial-time since we can find a diametral path of a graph $G$ in polynomial-time.   Theorem \ref{d+1/3leqmpG}, Theorem \ref{mpGleqgammabG} and Theorem \ref{delta_hyperbolic_radius_diameter} yield that  $\big\lceil{\frac{2\MP(G)-4\delta}{3}\big\rceil}\leq\big\lceil{\frac{2r-4\delta}{3}\big\rceil}\leq\big\lceil{\frac{d+1}{3}\big\rceil}\leq \MP(G)$. 
\end{proof}

\section{Conclusion}
\label{sec:Conclusion_chordal}

We have shown that the  bound $\gamma_b(G)\leq 2\MP(G)+3$ for general graphs $G$ can be improved to $\gamma_{b}(G)\leq \big\lceil{\frac{3}{2} \MP(G)\big\rceil}$ for connected  chordal graphs. It is known that for strongly chordal graphs, $\gamma_{b}(G)=\MP(G)$, we have shown that this is not the case for connected  chordal graphs. Even more, $\gamma_b(G)-\MP(G)$ can be arbitrarily large for connected chordal graphs, as we have constructed infinitely many connected chordal graphs $G$ where $\gamma_b(G)/ \MP(G)=10/9$ and $\MP(G)$ is arbitrarily large. Further, we improve this by constructing infinitely many connected chordal graphs $G$ for which $\gamma_{b}(G) / \MP(G) = 4/3$ and where $\MP(G)$ is arbitrarily large. Hence, we determined that, for connected chordal graphs, the expression \[
\lim_{\MP(G)\to \infty} \sup \left\{ \frac{\gamma_{b}(G)}{\MP(G)} \right\}\in [4/3,3/2].
\] It remains an interesting open problem to determine the best possible value of $\displaystyle\lim_{\MP(G)\to \infty}\sup\bigg\{\displaystyle\frac{\gamma_{b}(G)}{\MP(G)}\bigg\}$  for connected chordal graphs, whereas the exact value of the expression is $2$ for general connected graphs \cite{rajendraprasad2025multipacking}. This problem could also be studied for other interesting graph classes such as the one in Figure~\ref{fig:diagram}, or others.

Moreover, we have shown that $\gamma_{b}(G)\leq \big\lfloor{\frac{3}{2} \MP(G)+2\delta\big\rfloor} $ holds for the $\delta$-hyperbolic graphs. Note that connected split graphs have radius at most~2, so their broadcast number and their multipacking number are both at most~2. The graph from Figure~\ref{fig:S3} is a split graph, showing that the multipacking number of a split graph can indeed be different from its broadcast domination number.

\chapter{Multipacking on an unbounded hyperbolic graph-class: cactus graphs}\label{chapter:cactus}\hypertarget{chapter:introhref}{}
\minitoc


  In this chapter, we show that for any cactus  graph $G$,  $\gamma_b(G)\leq \frac{3}{2}\MP(G)+\frac{11}{2}$. Although cactus graphs form a narrow graph class, we used some non-trivial techniques to provide the bound. These techniques make an important step towards generating a tighter bound for general graphs. We also show that $\gamma_b(G)-\MP(G)$ can be arbitrarily large for cactus graphs and asteroidal triple-free graphs by constructing an infinite family of cactus graphs which are also asteroidal triple-free graphs such that the ratio $\gamma_b(G)/\MP(G)=4/3$, with $\MP(G)$ arbitrarily large.  Moreover, we provide an $O(n)$-time algorithm to construct a multipacking of cactus graph $G$ of size at least $ \frac{2}{3}\MP(G)-\frac{11}{3} $,  where $n$ is the number of vertices of the graph $G$. The hyperbolicity of the cactus graph class is unbounded. For $0$-hyperbolic graphs, $\MP(G)=\gamma_b(G)$. Moreover, $\MP(G)=\gamma_b(G)$ holds for the strongly chordal graphs which is a subclass of $\frac{1}{2}$-hyperbolic graphs. Now it's a natural question: what is the minimum value of $\delta$, for which we can say that the difference $ \gamma_{b}(G) -  \MP(G) $ can be arbitrarily large for $\delta$-hyperbolic graphs? We show that the minimum value of $\delta$ is $\frac{1}{2}$ using a construction of an infinite family of cactus graphs with hyperbolicity $\frac{1}{2}$.

\section{Chapter overview} 
In Section \ref{sec:Definitions and notation_cactus}, we recall some definitions and notations. In Section \ref{sec:inequality_linking_Broadcast domination_and_Multipacking}, we prove Theorem \ref{thm:multipacking_broadcast_relation}. In Section \ref{sec:approximation_algorithm_to_find_Multipacking}, we provide a $(\frac{3}{2}+o(1))$-factor approximation algorithm for finding multipacking on the cactus graphs. In Section \ref{sec:Unboundedness_of_the_gap_between_Broadcast_domination_and_Multipacking}, we prove  that the difference $ \gamma_{b}(G) -  \MP(G) $ can be arbitrarily large for cactus graphs. In Section \ref{sec:1/2-Hyperbolic graphs},  we show that the difference $ \gamma_{b}(G) -  \MP(G) $ can be arbitrarily large for $\frac{1}{2}$-hyperbolic graphs also.  We conclude this chapter in Section \ref{sec:conclusion_cactus}.

\section{Preliminaries}\label{sec:Definitions and notation_cactus}

 Let $G=(V,E)$ be a graph and $d_G(u,v)$ be the length of a shortest path joining two vertices $u$ and $v$ in  $G$, we simply write $d(u,v)$ when there is no confusion. Let $\diam(G)=\max\{d(u,v):u,v\in V(G)\}$. A diameter of $G$ is a path of the length  $\diam(G)$. If $u\in V$, then we denote
 $N_r[u]=\{v\in V:d(u,v)\leq r\}$. The \textit{eccentricity} $e(w)$  of a vertex $w$ is $\min \{r:N_r[w]=V\}$. The \textit{radius} of the graph $G$ is $\min\{e(w):w\in V\}$, denoted by $\rad(G)$.  The \textit{center} $C(G)$ of the graph $G$  is the set of all vertices of minimum eccentricity, i.e., $C(G)=\{v\in V:e(v)=\rad(G)\}$. Each vertex in the set $C(G)$ is called a \textit{central vertex} of the graph $G$.  
 A subgraph $H$ of a graph $G$ is called an \textit{isometric subgraph} if $d_H(u, v) = d_G(u, v)$ for every pair of vertices 
$u$ and $v$ in $H$, where $d_H(u,v)$ and $d_G(u,v)$ are the distances between $u$ and $v$ in $H$ and $G$, respectively. An \textit{isometric path} is an isometric subgraph which is a path.  If $H_1$ and $H_2$ are two subgraphs of $G$, then $H_1\cup H_2$ denotes the subgraph whose vertex set is $V(H_1)\cup V(H_2)$ and edge set is $E(H_1)\cup E(H_2)$. We denote an  indicator function as  $1_{[x<y]}$  that takes the value $1$ when $x<y$, and $0$ otherwise. If $P$ is a path in $G$, then we say $V(P)$ is the vertex set of the path $P$, $E(P)$ is the edge set of the path $P$, and  $l(P)$ is the length of the path $P$, i.e., $l(P)=|E(P)|$.

 A \textit{cactus} is a connected graph in which any two  cycles have at most one vertex in common. Equivalently, it is a connected graph in which every edge belongs to at most one  cycle.

  Let $d$ be the shortest-path metric of a graph $G$. The graph $G$ is called a \emph{$\delta$-hyperbolic graph} if for any four
vertices $u, v, w, x \in V(G)$, the two larger of the three sums $d(u, v) + d(w, x)$, $d(u, w) + d(v, x)$, $d(u, x) +
d(v, w)$ differ by at most $2\delta$. A graph class $\mathcal{G}$ is said to be hyperbolic if there exists a constant $\delta$ such that every graph $G \in \mathcal{G}$ is $\delta$-hyperbolic.

\section{Proof of Theorem \ref{thm:multipacking_broadcast_relation}}\label{sec:inequality_linking_Broadcast domination_and_Multipacking}

In this section, we prove a relation between the broadcast domination number and the radius of a cactus. Using this we establish our main result that relates the  broadcast domination number and the multipacking number for cactus graphs.

We start with a lemma that is true for general graphs.


\begin{lemma} \label{lem:disjoint_radial_path} \textbf{(Disjoint radial path lemma)}
Let $G$ be a graph with radius $r$ and central vertex $c$, where $r\geq 1$. Let $P$ be an isometric path in $G$ such that $l(P)=r$ and $c$ is one endpoint of $P$. Then there exists an isometric path $Q$ in $G$ such that $V(P)\cap V(Q)=\{c\}$, $ r-1 \leq l(Q)\leq  r$ and $c$ is one endpoint of $Q$.
\end{lemma}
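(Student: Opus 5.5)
Let $P=(c=p_0,p_1,\dots,p_r)$. The plan is to exploit the fact that $c$ is central: $p_1$ is strictly closer to $p_r$ than $c$ is, so it cannot serve as a center, and a witness of this must lie "on the other side" of $c$. Concretely, since $e(p_1)\ge r=\rad(G)$, there is a vertex $x$ with $d(p_1,x)\ge r$. By the triangle inequality $d(c,x)\ge d(p_1,x)-1\ge r-1$. Also $d(c,x)\le e(c)=r$, so $r-1\le d(c,x)\le r$. Let $Q$ be any shortest $c$–$x$ path. Then $Q$ is isometric, has $c$ as an endpoint, and satisfies $r-1\le l(Q)\le r$.

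It remains to show $V(P)\cap V(Q)=\{c\}$. Suppose some $p_j$ with $j\ge1$ lies on $Q$. Since $Q$ is a shortest path from $c$, we have $d(c,x)=d(c,p_j)+d(p_j,x)=j+d(p_j,x)$. Now bound $d(p_1,x)$ by going along $P$ from $p_1$ to $p_j$ and then to $x$:
\[
d(p_1,x)\le (j-1)+d(p_j,x)=d(c,x)-1\le r-1,
\]
which contradicts $d(p_1,x)\ge r$. Hence $Q$ meets $P$ only at $c$.

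I expect the only delicate point to be choosing the witness vertex from $p_1$ rather than from $c$: it must be a vertex at distance $\ge r$ from $p_1$. The case $r\ge1$ guarantees that $p_1$ exists. Isometry of $P$ is used only to get $d(c,p_j)=j$ and $d(p_1,p_j)\le j-1$; the second of these holds anyway by following $P$. No earlier results of the paper are needed, only the definitions of eccentricity and radius.
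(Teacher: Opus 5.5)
Your proof is correct and is the paper's argument. The paper also takes $Q$ to be a shortest path from $c$ to a vertex at maximum distance from $v_1$, and gets the contradiction $d(v_1,w_{r'})\le r-1$ if $Q$ meets $P$ outside $c$. You additionally prove the length bound $r-1\le l(Q)\le r$ from $d(c,v_1)=1$ and $e(c)=r$, which the paper leaves implicit.
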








\begin{proof}

\textcolor{black}{Let \(P = (c, v_1, \ldots, v_r)\) be a shortest path to a vertex at distance \(r\) from \(c\). 
    Let \(w_{r'}\) be a vertex at maximum distance from \(v_1\) and \(Q = (c, w_1, \ldots, w_{r'})\) 
    be a shortest path from \(c\) to \(w_{r'}\). If \(v_i \in Q\) for some \(i\), then there is a shortest 
    path from \(c\) to \(w_{r'}\) passing through \(v_1\). This implies \(d(v_1, w_{r'}) = d(c, w_{r'}) - 1 \leq r - 1\), 
    which is a contradiction since radius of $G$ is $r$. The result follows.}    
\end{proof}

For this section, we assume that $G$ is a cactus. 
Let $c$ be a central vertex  and $r$ be the radius of $G$ where $r\geq 1$. We can find an isometric path $P$ of length $r$ whose one endpoint is $c$, since  $\rad(G)=r$. Let $P=(c,v_1,v_2,v_3,\dots ,v_r)$.  By Lemma \ref{lem:disjoint_radial_path}, we can find an isometric path $Q$ in $G$ such that $V(P)\cap V(Q)=\{c\}$, $ r-1 \leq l(Q)\leq  r$ and $c$ is one endpoint of $Q$. Let $Q=(c,w_1,w_2,w_3,\dots ,w_{r'})$ where $ r-1 \leq r'\leq  r$. Now we define some expressions to study the subgraph $P\cup Q$ in $G$ (See Fig. \ref{fig:bfs3}).  For $v_i\in V(P)\setminus \{c\}$ and $w_j\in V(Q)\setminus \{c\}$, we define  $ X_{P,Q}(v_i,w_j)=$\{$P_1$ : $P_1$ is a path in $G$ that joins $v_i$ and $w_j$ such that $V(P)\cap V(P_1)=\{v_i\}$ and $V(Q)\cap V(P_1)=\{w_j\}\}$. Note that $c \notin V(P_1)$. Let $X_{P,Q}=\{(v_i,w_j): X_{P,Q}(v_i,w_j)\neq \emptyset \}$. Since $G$ is a cactus, it implies every edge belongs to at most one cycle. Therefore, $|X_{P,Q}(v_i,w_j)|\leq 1$ and also $|X_{P,Q}|\leq 1$. Therefore, there is at most one path (say, $P_1$) that does not pass through $c$ and joins a vertex of $V(P)\setminus \{c\}$ with a vertex of $V(Q)\setminus \{c\}$ and $P_1$ intersects $P$ and $Q$ only at their joining points. So, the following observation is true. 

\begin{observation}
\label{obs:joining_path_atmost_1}
    Let $G$ be a cactus with $\rad(G)=r$ and central vertex $c$. Suppose $P$ and $Q$  are two isometric paths  in $G$ such that   $V(P)\cap V(Q)=\{c\}$,  $l(P)=r$, $r-1\leq l(Q)\leq r$  and both have  one endpoint  $c$. Then 
    
    (i) $|X_{P,Q}|\leq 1$ and  $|X_{P,Q}(v_i,w_j)|\leq 1$ for all $(v_i,w_j)$.

(ii) $X_{P,Q}= \{(v_i,w_j)\}$ iff  $|X_{P,Q}(v_i,w_j)|= 1$.
    
\end{observation}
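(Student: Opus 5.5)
The plan is to derive both parts from the cactus property: every edge lies on at most one cycle, and two distinct cycles share at most one vertex. Note first that $c\notin V(P_1)$ for any $P_1\in X_{P,Q}(v_i,w_j)$. Indeed, $V(P_1)\cap V(P)=\{v_i\}$ and $v_i\neq c$. Hence each such $P_1$ closes a cycle
\[
C(P_1)=P[c,v_i]\cup P_1\cup Q[w_j,c],
\]
where $P[c,v_i]$ and $Q[w_j,c]$ denote the subpaths. This is a genuine cycle because $P$ and $Q$ meet only in $c$, and $P_1$ meets $P\cup Q$ only in its endpoints. Moreover $C(P_1)$ contains the edges $cv_1$ and $cw_1$, since $i,j\geq 1$.

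For (i), take two members $P_1\in X_{P,Q}(v_i,w_j)$ and $P_2\in X_{P,Q}(v_{i'},w_{j'})$; these may lie in the same or in different sets. Both cycles $C(P_1)$ and $C(P_2)$ contain the edge $cv_1$. Since $cv_1$ lies on at most one cycle, $C(P_1)=C(P_2)$.

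We then read off the equality. The vertices of $C(P_1)$ on $P$ form exactly the subpath $P[c,v_i]$, because the rest of the cycle avoids $V(P)\setminus\{c,v_i\}$. So the farthest $P$-vertex from $c$ on the common cycle is $v_i=v_{i'}$. Symmetrically $w_j=w_{j'}$. Finally, $P_1$ is the cycle minus $P[c,v_i]\cup Q[w_j,c]$, so $P_1=P_2$.

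This gives $|X_{P,Q}(v_i,w_j)|\le 1$ for every pair, and it shows that at most one pair $(v_i,w_j)$ has a nonempty set, i.e.\ $|X_{P,Q}|\le 1$.

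For (ii), $X_{P,Q}=\{(v_i,w_j)\}$ means exactly that $X_{P,Q}(v_i,w_j)\neq\emptyset$, i.e.\ $|X_{P,Q}(v_i,w_j)|\geq 1$, and by (i) this is equivalent to $|X_{P,Q}(v_i,w_j)|=1$. For the reverse direction, $|X_{P,Q}(v_i,w_j)|=1$ puts $(v_i,w_j)$ in $X_{P,Q}$, and $|X_{P,Q}|\le1$ from (i) forces equality.

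The only delicate step is checking that $C(P_1)$ really is a cycle and that its intersection with $P$ is exactly the initial segment $P[c,v_i]$. Both follow directly from the defining disjointness conditions on $P_1$ together with $V(P)\cap V(Q)=\{c\}$. The isometry and length hypotheses are not needed here.
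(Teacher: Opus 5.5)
Your proof is correct and follows the same idea as the paper, which simply asserts that both bounds follow because every edge of a cactus lies on at most one cycle. You make this explicit: each joining path $P_1$ closes a cycle through the edge $cv_1$, so all such cycles coincide, and that common cycle determines both the pair $(v_i,w_j)$ and the path $P_1$ uniquely.
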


\begin{figure}[ht]
    \centering
    \includegraphics[width=\textwidth]{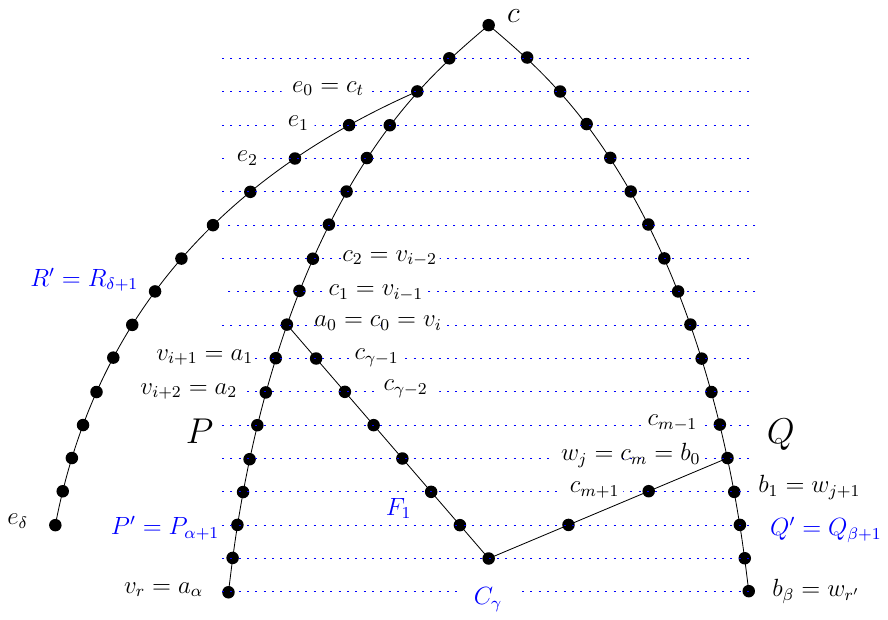}
    \caption{The subgraph $P\cup Q\cup F_1\cup R'=H_\gamma (c_0,\alpha,c_t,\delta,c_m,\beta)$}
    \label{fig:bfs3}
\end{figure}

\medskip
\noindent \textbf{Finding a special subgraph $\bm{H_\gamma(c_0,\alpha,c_t,\delta,c_m,\beta)}$  in cactus $\bm{G}$ : }
From Observation \ref{obs:joining_path_atmost_1}, there are two possible cases: either $|X_{P,Q}| =0$ or $|X_{P,Q}|=1$. First, we want to study the case when $|X_{P,Q}|=1$. In this case, suppose $ X_{P,Q}=\{(v_i,w_j)\}$ and $X_{P,Q}(v_i,w_j)=\{F_1\}$ (See Fig.  \ref{fig:bfs3}). We are interested in finding a multipacking of $G$ from the subgraph $P\cup Q\cup F_1$.   Let $F_2=(v_i,v_{i-1},v_{i-2},\dots,v_1,c,w_1,w_2,\dots,$ $w_{j-1},w_j)$, $P'=(v_i,v_{i+1},\dots,v_{r})$ and $Q'=(w_j,w_{j+1},\dots,w_{r'})$. Here $F_1\cup F_2$ is a cycle and $P'$ and $Q'$ are two isometric paths that are attached to the cycle. Suppose there is another isometric path $R'$ which is disjoint with $P'$ and $Q'$ and whose one endpoint belongs to $V(F_2)$. Then  $P\cup Q\cup F_1\cup R'=F_1\cup F_2\cup P'\cup Q' \cup R'$ is a subgraph of $G$ (See Fig.\ref{fig:bfs3}).   Let $H=P\cup Q\cup F_1\cup R'$. Note that, we can always find $P$ and $Q$ in 
$G$ due to Lemma \ref{lem:disjoint_radial_path}, but  $F_1$ or $R'$ might not be there in $G$. In this case, we can assume that $V(F_1)$ or $V(R')$ is empty in $H$. In the rest of this section, our main goal is to find a multipacking of $G$ of size at least  $\frac{2}{3}\rad(G)-\frac{11}{3}$ from the subgraph $H$, so that we can prove Theorem \ref{thm:multipacking_broadcast_relation}. 

To give $H$ a general structure, we want to rename the vertices. Suppose  $F_1\cup F_2$ is a cycle of length $\gamma$. Let $F_1\cup F_2=C_\gamma=(c_0,c_1,c_2,\dots,c_{\gamma-2},c_{\gamma-1},c_0)$. Suppose $l(P')=\alpha$, $l(Q')=\beta$ and $l(R')=\delta$. We rename the paths $P'$, $Q'$ and $R'$ as $P_{\alpha+1}$, $Q_{\beta+1}$ and $R_{\delta+1}$ respectively. 
Let $P_{\alpha+1}=(a_0,a_{1},\dots,a_{\alpha})$,  $Q_{\beta+1}=(b_0,b_{1},\dots,b_{\beta})$ and  $R_{\delta+1}=(e_0,e_{1},\dots,e_{\delta})$   such that $c_0=a_0$,  $c_t=e_0$, $c_m=b_0$  (See Fig. \ref{fig:bfs1}). Here $P_{\alpha+1}$, $Q_{\beta+1}$ and $R_{\delta+1}$ are three isometric paths in $G$.  According to the structure, we have $V(P_{\alpha+1})\cap V(Q_{\beta+1})=\emptyset$,  $V(Q_{\beta+1}) \cap V(R_{\delta+1}) =\emptyset$,
$V(R_{\delta+1})\cap V(P_{\alpha+1})=\emptyset$,
$V(C_\gamma)\cap V(P_{\alpha+1}) =\{c_0\}$, $V(C_\gamma)\cap V(R_{\delta+1})=\{c_t\}$, $V(C_\gamma)\cap V(Q_{\beta+1})=\{c_m\}$. Now we can write $H$ as a variable of $\alpha, \beta, \gamma $ and $\delta$. Let $H=H_\gamma(c_0,\alpha,c_t,\delta,c_m,\beta)$.

\begin{figure}[ht]
    \centering
    \includegraphics[width=\textwidth]{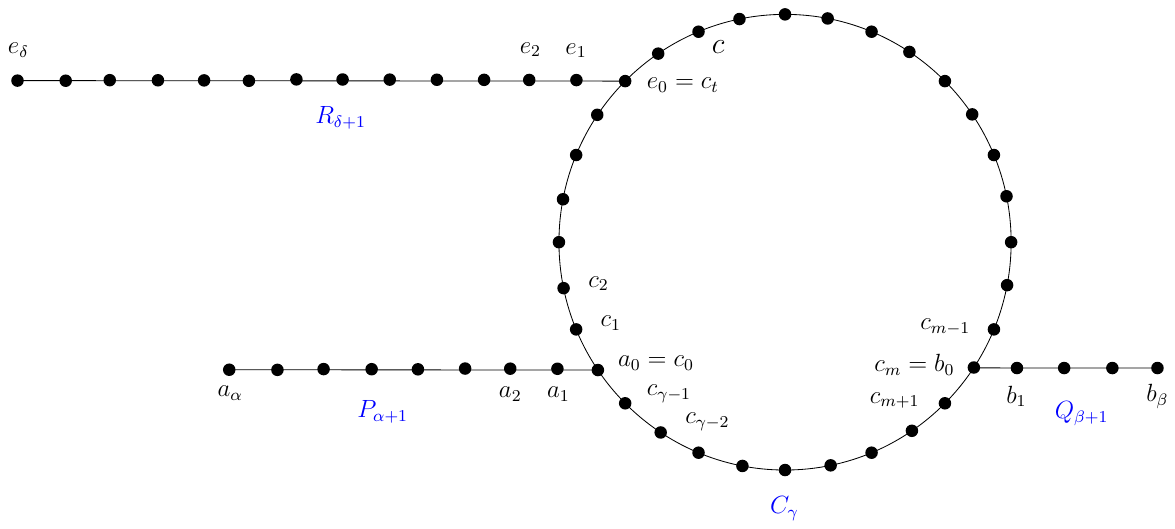}
    \caption{The subgraph $H_\gamma (c_0,\alpha,c_t,\delta,c_m,\beta)$}
    \label{fig:bfs1}
\end{figure}

In the rest of the section, we study how to find a multipacking of $G$ from the subgraph $H_\gamma(c_0,\alpha,c_t,$ $\delta,c_m,\beta)$. Using this, we prove Lemma \ref{lem:multipacking_radius_relation} that yields Theorem \ref{thm:multipacking_broadcast_relation} which is our main result in this section.

\begin{lemma} [\cite{beaudou2019broadcast}]  \label{lem:isometric_path}
    Let $G$ be a graph, $k$ be a positive integer and $P=(v_0,v_1, \dots ,$ $ v_{k-1})$ be an isometric path in $G$ with $k$ vertices. Let $M = \{v_i:0\leq i\leq k, i\equiv 0 \text{ (mod $3$)} \}$ be the set of every third vertex
on this path. Then $M$ is a multipacking in $G$ of size $\big\lceil\frac{k}{3}\big\rceil$.
\end{lemma}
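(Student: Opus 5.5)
The plan is to verify the multipacking condition directly from the isometry of $P$. The size claim is immediate: the indices $i\in\{0,\dots,k-1\}$ with $i\equiv 0 \pmod 3$ are $0,3,\dots,3\lfloor (k-1)/3\rfloor$, giving exactly $\lfloor (k-1)/3\rfloor+1=\lceil k/3\rceil$ vertices. The real content is showing that $|N_r[v]\cap M|\le r$ for every vertex $v\in V(G)$ and every integer $r\ge 1$.

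Fix $v\in V(G)$ and $r\ge 1$, and let $M\cap N_r[v]=\{v_{i_1},\dots,v_{i_s}\}$ with $i_1<\dots<i_s$. Since $P$ is isometric, $d(v_{i_1},v_{i_s})=i_s-i_1$. The gaps between consecutive chosen indices are at least $3$, so $i_s-i_1\ge 3(s-1)$.

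On the other hand, both $v_{i_1}$ and $v_{i_s}$ lie in $N_r[v]$. By the triangle inequality through $v$, this gives $i_s-i_1=d(v_{i_1},v_{i_s})\le d(v_{i_1},v)+d(v,v_{i_s})\le 2r$.

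Combining the two bounds gives $3(s-1)\le 2r$, that is, $s\le \frac{2r}{3}+1$. It remains to check that $\frac{2r}{3}+1\le r$ whenever $r\ge 3$. For $r\in\{1,2\}$, $s$ is an integer satisfying $s\le \lfloor 2r/3\rfloor+1$, which gives $s\le 1$ for $r=1$ and $s\le 2$ for $r=2$; in both cases $s\le r$. For $r\ge 3$ we have $\frac{2r}{3}+1\le r$ directly. Hence $|N_r[v]\cap M|\le r$ in every case.

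The only step needing care is the small-$r$ case analysis: the inequality $2r/3+1\le r$ fails for $r<3$, so integrality of $s$ must be used there. Everything else follows from isometry of $P$ plus the triangle inequality, with no dependence on the structure of $G$ off the path.
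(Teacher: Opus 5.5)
Your proof is correct and is essentially the standard argument behind this cited lemma. The paper itself gives no proof here; the usual argument is the one you give: isometry of $P$ plus the triangle inequality through $v$ confine $M\cap N_r[v]$ to a subpath of $P$ of length at most $2r$, while the spacing of $M$ forces the extreme chosen indices to differ by at least $3(s-1)$. Your small-$r$ case split is harmless but avoidable: assuming $s\ge r+1$ directly gives $3r\le 3(s-1)\le 2r$, which is impossible for every $r\ge 1$.
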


Note that, if $H_\gamma(c_0,\alpha,c_t,\delta,c_m,\beta)$ is a subgraph of $G$, then $P_{\alpha+1}$, $Q_{\beta+1}$ and $R_{\delta+1}$ are three isometric paths in $G$ by the definition of $H_\gamma(c_0,\alpha,c_t,\delta,c_m,\beta)$. Moreover, since $G$ is a cactus, two cycles cannot share an edge. Therefore, the following observation is true.

\begin{observation} 
\label{obs:isometric_graph}
    If $G$ is a cactus and $H_\gamma(c_0,\alpha,c_t,\delta,c_m,\beta)$ is a subgraph of $G$ such that $\gamma\geq 3 $ and $c_0,c_t,c_m$ are distinct vertices of $C_\gamma$, then $H_\gamma(c_0,\alpha,c_t,\delta,c_m,\beta)$ is an isometric subgraph of $G$.
\end{observation}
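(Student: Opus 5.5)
The plan is to compare $d_H$ and $d_G$ separately on each of the four pieces of $H$, namely the cycle $C_\gamma$ and the three paths $P_{\alpha+1}$, $Q_{\beta+1}$, $R_{\delta+1}$. After that I would show that any shortest path in $G$ between vertices in different pieces must pass through the attachment vertices $c_0,c_t,c_m$. Since $d_H\ge d_G$ always holds, it suffices to show $d_H(x,y)\le d_G(x,y)$ for all $x,y\in V(H)$. Throughout I will use only the cactus property that no edge lies on two cycles; equivalently, any two distinct cycles share at most one vertex.

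\textbf{Step 1 (pieces).} The three paths are isometric in $G$ by definition of $H_\gamma(c_0,\alpha,c_t,\delta,c_m,\beta)$. So for $x,y$ on the same path, $d_H(x,y)\le l(\text{subpath})=d_G(x,y)$. For $x,y\in V(C_\gamma)$, let $S$ be a shortest $x$–$y$ path in $G$. Suppose $S$ is not contained in $C_\gamma$. Then $S$ contains a subpath $S'$ whose two endpoints $u\ne u'$ lie on $C_\gamma$ and whose interior avoids $C_\gamma$; note $S'$ may be a single chord edge $uu'$ with $u,u'$ nonadjacent on $C_\gamma$. Combining $S'$ with one of the two $u$–$u'$ arcs of $C_\gamma$ gives a cycle different from $C_\gamma$ that shares at least one edge with it. This contradicts the cactus property. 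Hence $S\subseteq C_\gamma$, so $d_G=d_{C_\gamma}\ge d_H$ on $V(C_\gamma)$.

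\textbf{Step 2 (crossing pieces).} Take $x\in V(P_{\alpha+1})\setminus\{c_0\}$ and $y\in V(H)\setminus V(P_{\alpha+1})$; the cases for $Q_{\beta+1}$ and $R_{\delta+1}$ are symmetric. I will show that every $x$–$y$ path in $G$ passes through $c_0$. Suppose instead some $x$–$y$ path $S$ avoids $c_0$. Let $T$ be the path in $H$ from $y$ to the cycle: the subpath of $Q_{\beta+1}$ or $R_{\delta+1}$ from $y$ to $c_m$ or $c_t$, or just $y$ itself if $y\in V(C_\gamma)$. Walking along $S\cup T$, I obtain a walk from $x$ to a vertex of $C_\gamma\setminus\{c_0\}$ that avoids $c_0$; here I use that $c_t,c_m\ne c_0$. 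From this walk I extract a path $S''$ that leaves $P_{\alpha+1}\setminus\{c_0\}$ at a last vertex $a_i$, meets $C_\gamma\setminus\{c_0\}$ at a first vertex $c_j$, and has interior disjoint from both $P_{\alpha+1}$ and $C_\gamma$. Then $S''$, the subpath $a_i\ldots a_0$ of $P_{\alpha+1}$, and one arc $c_0\ldots c_j$ of $C_\gamma$ together form a cycle. This cycle contains an edge of $C_\gamma$ and is different from $C_\gamma$, which again contradicts the cactus property. Therefore $c_0$ is a cut vertex separating $x$ from $y$, and
\[
d_G(x,y)=d_G(x,c_0)+d_G(c_0,y).
\]
The same argument applied a second time, with $c_m$ or $c_t$, handles the case where $y$ lies on another pendant path. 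Combining this decomposition with Step 1 gives $d_G(x,y)\ge d_H(x,y)$.

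\textbf{Main obstacle.} I expect the main difficulty to be making the path extraction in Step 2 rigorous: producing the path $S''$ whose interior avoids both $P_{\alpha+1}$ and $C_\gamma$, and ensuring that the resulting cycle really shares an edge with $C_\gamma$. I would handle this by choosing $a_i$ as the last vertex of $P_{\alpha+1}$ on the walk, and $c_j$ as the first vertex of $C_\gamma$ after $a_i$. This is also where the hypotheses are used: distinctness of $c_0,c_t,c_m$ and $\gamma\ge 3$ guarantee that $c_j\ne c_0$, so that the arc $c_0\ldots c_j$ contains at least one edge.
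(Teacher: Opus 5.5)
Your argument is correct and is essentially the paper's own justification. The paper only notes that $P_{\alpha+1},Q_{\beta+1},R_{\delta+1}$ are isometric by definition and that in a cactus no two cycles share an edge. You make explicit the two consequences this yields: $C_\gamma$ admits no shortcut in $G$, and each attachment vertex separates its pendant path from the rest of $H$. The paper itself uses the second fact, without proof, in the proofs of Lemmas~\ref{lem:multipacking_subgraph1} and~\ref{lem:delta1}.

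One small precision. $c_j\neq c_0$ holds because your walk avoids $c_0$. The distinctness of $c_0,c_t,c_m$ is used in guaranteeing that avoidance, since the endpoint of $T$ is $c_m$ or $c_t$ and not $c_0$.
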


\begin{observation} 
\label{obs:isometric_graph_path_F_1}
    Let $G$ be a cactus and $H_\gamma(c_0,\alpha,c_t,\delta,c_m,\beta)$ be a subgraph of $G$ such that $\gamma\geq 3 $ and $c_0,c_t,c_m$ are distinct vertices of $C_\gamma$. Let $F_1$ and $F_2$ be two paths such that  $F_1=(c_m,c_{m+1},\dots,$ $c_{\gamma-1},c_0)$ and $F_2=(c_0,c_1,\dots,c_m)$. Then
    
 \noindent  (i)  If $l(F_1)> l(F_2)$, then $P_{\alpha+1}\cup F_2\cup Q_{\beta +1}$ is an isometric path of $G$.

 \noindent (ii)  If $l(F_1)<l(F_2)$, then $P_{\alpha+1}\cup F_1\cup Q_{\beta +1}$ is an isometric path of $G$. 

 \noindent (iii) If $l(F_1)=l(F_2)$, then 
 both of $P_{\alpha+1}\cup F_1\cup Q_{\beta +1}$ and $P_{\alpha+1}\cup F_2\cup Q_{\beta +1}$ are isometric paths of $G$.

\end{observation}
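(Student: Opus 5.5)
The plan is to show that the concatenated path $P_{\alpha+1}\cup F_2\cup Q_{\beta+1}$ (and likewise with $F_1$) has length equal to the distance in $G$ between its endpoints $a_\alpha$ and $b_\beta$. Since its length is $\alpha+l(F_2)+\beta$, it suffices to prove $d_G(a_\alpha,b_\beta)\geq \alpha+\min\{l(F_1),l(F_2)\}+\beta$. The same inequality then also applies to any pair of vertices on the concatenated path, because a subpath of a geodesic is a geodesic. So the whole task reduces to bounding $d_G(a_\alpha,b_\beta)$ from below.

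The key tool is the cactus structure, used through cut vertices. First I will argue that $c_0$ separates $a_\alpha$ from the cycle $C_\gamma$. The path $P_{\alpha+1}$ meets $C_\gamma$ only in $c_0$. If some $a_i$ with $i\geq 1$ were joined to $C_\gamma\setminus\{c_0\}$ by a path avoiding $c_0$, that path together with $P_{\alpha+1}$ and an arc of $C_\gamma$ would form a cycle sharing an edge with $C_\gamma$, or sharing two vertices with it. This contradicts the cactus property. The same argument shows that $c_m$ separates $b_\beta$ from $C_\gamma$. Hence every $a_\alpha$--$b_\beta$ path in $G$ passes through $c_0$ and then through $c_m$, which gives
\[
d_G(a_\alpha,b_\beta)=d_G(a_\alpha,c_0)+d_G(c_0,c_m)+d_G(c_m,b_\beta).
\]
The first and last terms equal $\alpha$ and $\beta$, because $P_{\alpha+1}$ and $Q_{\beta+1}$ are isometric. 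For the middle term, $d_G(c_0,c_m)=\min\{l(F_1),l(F_2)\}$: any $c_0$--$c_m$ path leaving $C_\gamma$ must exit and re-enter at the same vertex, since a block containing a cycle is exactly that cycle. Removing such excursions therefore yields one of the two arcs.

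Once these facts are in place, the cases follow directly. In case (i) the minimum is $l(F_2)$, in case (ii) it is $l(F_1)$, and in case (iii) both arcs attain it. Alternatively, I can invoke Observation \ref{obs:isometric_graph}, which states that $H_\gamma$ is isometric in $G$, so that distances may be computed inside $H_\gamma$. That graph is a cycle with three pendant paths, and there the computation is immediate.

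The main obstacle is making the separation argument rigorous. Using Observation \ref{obs:isometric_graph} sidesteps most of it, since inside $H_\gamma$ the vertices $c_0$ and $c_m$ are evidently cut vertices. I would therefore rely on that observation and only check the distance formula within $H_\gamma$, noting that the extra path $R_{\delta+1}$ hangs off $c_t$ and cannot create any shortcut.
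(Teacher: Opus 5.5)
Your proposal is correct and is essentially the argument the paper leaves implicit. The paper states this observation right after Observation~\ref{obs:isometric_graph} with no separate proof: distances in $G$ coincide with those in $H_\gamma$, and in $H_\gamma$ the vertices $c_0$ and $c_m$ are cut vertices while the shorter arc of $C_\gamma$ realizes $d(c_0,c_m)$. Hence the concatenation is a shortest $a_\alpha$--$b_\beta$ path and therefore isometric. Your direct cut-vertex argument in $G$ is a self-contained check of the same facts, and as a side benefit it does not use the hypothesis that $c_t$ is distinct from $c_0$ and $c_m$.
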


The following lemma tells why it is sufficient to find a multipacking in $H_\gamma(c_0,\alpha,$ $c_t,\delta,c_m,\beta)$ to provide a multipacking in $G$.

\begin{lemma}   
\label{lem:multipacking_subgraph}
    Let $G$ be a cactus and $H_\gamma(c_0,\alpha,c_t,\delta,c_m,\beta)$ be a subgraph of $G$. If   $M$ is a multipacking of $H_\gamma(c_0,\alpha,c_t,\delta,c_m,\beta)$, then $M$ is a  
 multipacking of $G$.
\end{lemma}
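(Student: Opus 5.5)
The plan is to show that $H=H_\gamma(c_0,\alpha,c_t,\delta,c_m,\beta)$ is an isometric subgraph of $G$, i.e. $d_H(u,v)=d_G(u,v)$ for all $u,v\in V(H)$. The conclusion then follows directly from the definition of multipacking. Indeed, for any $v\in V(G)$ and $r\geq 1$, we want $|N^G_r[v]\cap M|\leq r$. If $N^G_r[v]\cap M=\emptyset$ there is nothing to prove. Otherwise we reduce to a ball centred at a vertex of $H$, using the structure of $G$ around $H$.

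\textbf{Step 1 (isometry).} If $\gamma\geq 3$ and $c_0,c_t,c_m$ are distinct, isometry is Observation~\ref{obs:isometric_graph}. In the degenerate cases ($F_1$ or $R'$ absent, or some of $c_0,c_t,c_m$ coinciding), $H$ is a union of isometric paths glued at single vertices, with possibly one cycle. The same cactus argument applies. Any shortest $u$--$v$ path in $G$ that leaves $H$ must exit at some vertex $x$ and re-enter at some $y$. Together with a $y$--$x$ path in $H$, the outside segment would then form a cycle sharing edges with $C_\gamma$ or with the tree-like part of $H$. In a cactus, the only way this can happen is that the detour and the $H$-segment together form a new cycle through $x,y$ that shares no edge with $C_\gamma$. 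Then $x,y$ lie on a geodesic piece of $H$ (part of $P$, $Q$, $R'$, or an arc of $C_\gamma$). Since $P$, $Q$, $R'$ are isometric in $G$ and the relevant arc of $C_\gamma$ is a shortest arc, the detour is no shorter. So $d_H=d_G$ on $V(H)$.

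\textbf{Step 2 (projection).} For $v\notin V(H)$, we use the block structure of the cactus. Take the block-cut tree of $G$ and let $H$ occupy a connected part of it. Then the component of $G- V(H)$ containing $v$ attaches to $H$ through a single vertex $p(v)$, with one possible exception: a cycle of $G$ meets $H$ in a path. That cycle cannot share an edge with $C_\gamma$, so it meets $H$ in exactly one vertex or in an edge-disjoint manner at a cut vertex. In all cases every path from $v$ to $H$ passes through a single gate vertex $p=p(v)\in V(H)$. Hence $d_G(v,x)=d_G(v,p)+d_G(p,x)$ for all $x\in V(H)$. Therefore $N^G_r[v]\cap M\subseteq N^H_{r-d(v,p)}[p]\cap M$ if $r\geq d(v,p)$, and the intersection is empty otherwise. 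Using that $M$ is a multipacking of $H$, we get $|N_r^G[v]\cap M|\leq \max(r-d(v,p),0)\leq r$ when $r-d(v,p)\geq 1$. If $r=d(v,p)$, the ball contains at most $p$, so the count is at most $1\leq r$. For $v\in V(H)$, isometry gives $N^G_r[v]\cap V(H)=N^H_r[v]$ directly.

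\textbf{Main obstacle.} I expect the gate argument in Step 2 to be the main obstacle. In particular, I need to rule out a vertex outside $H$ having two different attachment vertices in $H$. Such a configuration creates a cycle through two vertices of $H$. If the $H$-path between them contains an edge of $C_\gamma$, we immediately contradict the cactus property. If instead it lies in a tree part ($P'$, $Q'$, $R'$), that cycle would contain path edges of $H$, so $v$ lies in a cycle block meeting $H$ in more than a vertex. I would handle this case by enlarging the argument: for $x,y$ on that block, distances through the cycle are bounded by the isometric path, so the two gates are at distance at least $1$ apart along the geodesic. Then $N_r[v]\cap V(H)$ is contained in a ball of radius $r-d(v,H)$ around one gate, after choosing the nearer gate and checking both arcs. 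I would write this case out carefully.
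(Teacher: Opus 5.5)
Your overall plan matches the paper's: use isometry of $H$, then compare a ball around an outside vertex with a ball around a vertex of $H$. However, Step 2 has a genuine gap, and it sits exactly where the lemma has content.

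\textbf{The single-gate claim is false.} Step 2 asserts that every $v\notin V(H)$ has a single gate in $H$. This fails because a cycle of $G$ may share edges with the path parts $P_{\alpha+1},Q_{\beta+1},R_{\delta+1}$ of $H$. Those edges lie on no cycle of $H$, so the cactus property does not forbid it. For example, take $\alpha\ge 4$ and a vertex $z\notin V(H)$ adjacent to $a_1$ and $a_3$, forming the $4$-cycle $z a_1 a_2 a_3$. Then $G$ is still a cactus and $P_{\alpha+1}$ is still isometric, yet $z$ has two gates.

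\textbf{The proposed repair also fails.} In your last paragraph you suggest that $N_r[z]\cap V(H)$ lies in the ball of radius $r-d(z,H)$ around the nearer gate. In the example with $r=2$, we have $N_2[z]\cap V(H)\supseteq\{a_0,a_1,a_2,a_3,a_4\}$. But $N_1[a_1]$ misses $a_3,a_4$, and $N_1[a_3]$ misses $a_0,a_1$. So no gate works with that radius.

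\textbf{What is missing.} You need three things.
\begin{enumerate}
\item[(i)] An outside vertex has at most two gates, since three would produce two cycles sharing an edge. This is the paper's Claim~1.
\item[(ii)] Two gates $a_i,a_j$ lie on a common one of $P_{\alpha+1},Q_{\beta+1},R_{\delta+1}$. Your observation about $C_\gamma$-edges does give this.
\item[(iii)] The ball must be centred at an interior vertex of the $H$-arc between the gates, with radius up to $r$, not at a gate with radius $r-d(z,H)$.
\end{enumerate}

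A clean way to do (iii): let $k_1=d(z,a_i)$, $k_2=d(z,a_j)$ and $\ell=j-i$. Isometry of $P_{\alpha+1}$ and the triangle inequality give $\ell=d(a_i,a_j)\le k_1+k_2$. Choose $x=a_{i+t}$ with $\max(0,\ell-k_2)\le t\le\min(k_1,\ell)$. Every $z$--$w$ path first enters $H$ at a gate, so $d(z,w)=\min\{k_1+d(a_i,w),\,k_2+d(a_j,w)\}$ for $w\in V(H)$. Hence $d(x,w)\le d(z,w)$ for all $w\in V(H)$, so $N_r[z]\cap M\subseteq N_r[x]\cap M$, and the bound $r$ follows because $M$ is a multipacking of the isometric subgraph $H$. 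This argument also covers the one-gate case.

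The paper handles the two-gate case differently but with the same idea. It centres at the midpoint $a_{\lfloor (i+j)/2\rfloor}$ with radius $s=\big\lfloor\frac{(r-k_1)+(r-k_2)+d(a_i,a_j)+1}{2}\big\rfloor\le r$. When the two smaller balls are disjoint, it instead sums them.
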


\begin{proof} 
Let $H=H_\gamma(c_0,\alpha,c_t,\delta,c_m,\beta)$. Since $M$ is multipacking of $H$, therefore $|N_r[z]\cap M|\leq r$ for all $z\in V(H)$ and $r\geq 1$. Let $z\in V(G)\setminus V(H)$ and    $v\in V(H)$.  Define $ P_H(z,v)=$\{$P$ :  $P$ is a path in $G$ that joins $z$ and $v$ such that $V(P)\cap V(H)=\{v\}$\}. 
Note that, if $v_1,v_2\in V(H)$ and $v_1\neq v_2$, then $P_H(z,v_1)\cap P_H(z,v_2)=\emptyset$. Since $G$ is connected, therefore $|\{v\in V(H):P_H(z,v)\neq \emptyset\}|\geq 1$.

\medskip
\noindent
\textbf{Claim \ref{lem:multipacking_subgraph}.1. }  If $z\in V(G)\setminus V(H)$, then $|\{v\in V(H):P_H(z,v)\neq \emptyset \}|\leq 2$. 

\begin{claimproof}Suppose $|\{v\in V(H):P_H(z,v)\neq \emptyset\}|\geq 3$. Let $v_1,v_2,v_3\in \{v\in V(H):P_H(z,v)\neq \emptyset\}$ where $v_1,v_2,v_3$ are distinct. So, there are $3$ distinct paths $P_1,P_2,P_3$ such that $P_i\in P_H(z,v_i)$ for $i=1,2,3$. Then there are two cycles formed by the paths $P_1,P_2,P_3$ that have at least one common edge, which is a contradiction, since $G$ is a cactus.     
\end{claimproof}

\vspace{0.2cm}
\noindent
\textbf{Claim \ref{lem:multipacking_subgraph}.2. }    If $z\in V(G)\setminus V(H)$ and  $|\{v\in V(H):P_H(z,v)\neq \emptyset\}|=1 $, then  $|N_r[z]\cap M|\leq r$ for all $r\geq 1$.

\begin{claimproof} Let $ \{v\in V(H):P_H(z,v)\neq \emptyset\}=\{v_1\}$. Therefore,  if $v$ is any vertex in $V(H)$, then  any path joining $z$ and $v$ passes through $v_1$. Let $d(z,v_1)=k$ for some $k\geq 1$. We have $|N_{r}[v_1]\cap M|\leq r$ for all $r\geq 1$. If $1\leq r<k$, then  $|N_r[z]\cap M|=0<r$. If $r\geq k$, then $|N_r[z]\cap M|= |N_{r-k}[v_1]\cap M|\leq r-k\leq r$.   
\end{claimproof}

\vspace{0.2cm}
\noindent
\textbf{Claim \ref{lem:multipacking_subgraph}.3. }    If $z\in V(G)\setminus V(H)$ and $|\{v\in V(H):P_H(z,v)\neq \emptyset\}|=2 $, then  $|N_r[z]\cap M|\leq r$ for all $r\geq 1$.

\begin{claimproof} Let $\{v\in V(H):P_H(z,v)\neq \emptyset\}=\{v_1,v_2\}$. Therefore,  if $w$ is any vertex in $V(H)$, then  any path joining $z$ and $w$ passes through $v_1$ or $v_2$. Note that both of  $v_1,v_2$ belong to either $P_{\alpha+1}$, $Q_{\beta+1}$ or $R_{\delta+1}$, otherwise $G$ cannot be a cactus. W.l.o.g. assume that $v_1,v_2\in P_{\alpha+1}$.  Let $d(z,v_1)=k_1$ and $d(z,v_2)=k_2$ for some $k_1,k_2\geq 1$. Since $P_{\alpha+1}$ is an isometric path in $G$,  $d(v_1,v_2)\leq d(z,v_1)+d(z,v_2)=k_1 + k_2$. Let $r$ be a positive integer  and  $S=V(H)$. If $N_{r-k_1}[v_1]\cap N_{r-k_2}[v_2]\cap S=\emptyset$,  then  $(r-k_1)+(r-k_2)\leq d(v_1,v_2)\leq k_1+k_2$ $\implies$ $r\leq k_1+k_2$. Therefore,   $|N_r[z]\cap M|=|N_{r-k_1}[v_1]\cap M|+|N_{r-k_2}[v_2]\cap M|\leq r-k_1+r-k_2=2r-(k_1+k_2)\leq r$. Suppose  $N_{r-k_1}[v_1]\cap N_{r-k_2}[v_2]\cap S\neq \emptyset$. Let $v_1=a_i$, $v_2=a_j$, $h=\big\lfloor\frac{i+j}{2}\big\rfloor$, $v=a_h$ and $s=\big\lfloor\frac{(r-k_1)+(r-k_2)+ d(v_1,v_2)+1}{2}\big\rfloor$.  So,


\begin{align*}
N_r[z]\cap S
   &= (N_{r-k_1}[v_1]\cup N_{r-k_2}[v_2])\cap S
      \subseteq N_s[v]\cap S \\[4pt]
\implies\quad
N_r[z]\cap M
   &\subseteq N_s[v]\cap M \\[4pt]
\implies\quad
|N_r[z]\cap M|
   &\le |N_s[v]\cap M|
      \le s
      = \Big\lfloor 
          \frac{(r-k_1)+(r-k_2)+ d(v_1,v_2)+1}{2}
        \Big\rfloor \\[4pt]
&\le 
\Big\lfloor 
   \frac{(r-k_1)+(r-k_2)+ k_1+k_2+1}{2}
\Big\rfloor \\[4pt]
&\le 
\Big\lfloor \frac{2r+1}{2} \Big\rfloor
   = \Big\lfloor r+\tfrac12 \Big\rfloor
   = r .
\end{align*}

\end{claimproof}

From the above results, we can say that   $|N_{r}[z]\cap M|\leq r$ for all $z\in V(G)$ and $r\geq 1$. Therefore, $M$ is a   multipacking of $G$. 
\end{proof}

Now our goal is to find a multipacking of $H_\gamma(c_0,\alpha,c_t,\delta,c_m,\beta)$. Whichever multipacking we find for the subgraph $H_\gamma(c_0,\alpha,c_t,\delta,c_m,\beta)$ will be a multipacking for $G$ by Lemma \ref{lem:multipacking_subgraph}. There could be several ways to choose a multipacking from the subgraph $H_\gamma(c_0,\alpha,c_t,\delta,c_m,\beta)$. 
In order to prove Lemma \ref{lem:multipacking_radius_relation}, which is the main lemma to prove Theorem \ref{thm:multipacking_broadcast_relation}, we have to ensure that the size of the multipacking is at least $\frac{2}{3}\rad(G)-\frac{11}{3}$.

We discuss three choices to find  a  multipacking in $H_\gamma(c_0,\alpha,c_t,\delta,c_m,\beta)$ in the following subsections.

We henceforth write $H$ in place of $H_\gamma(c_0,\alpha,c_t,\delta,c_m,\beta)$ for simplicity.

\begin{figure}[ht]
    \centering
    \includegraphics[width=\textwidth]{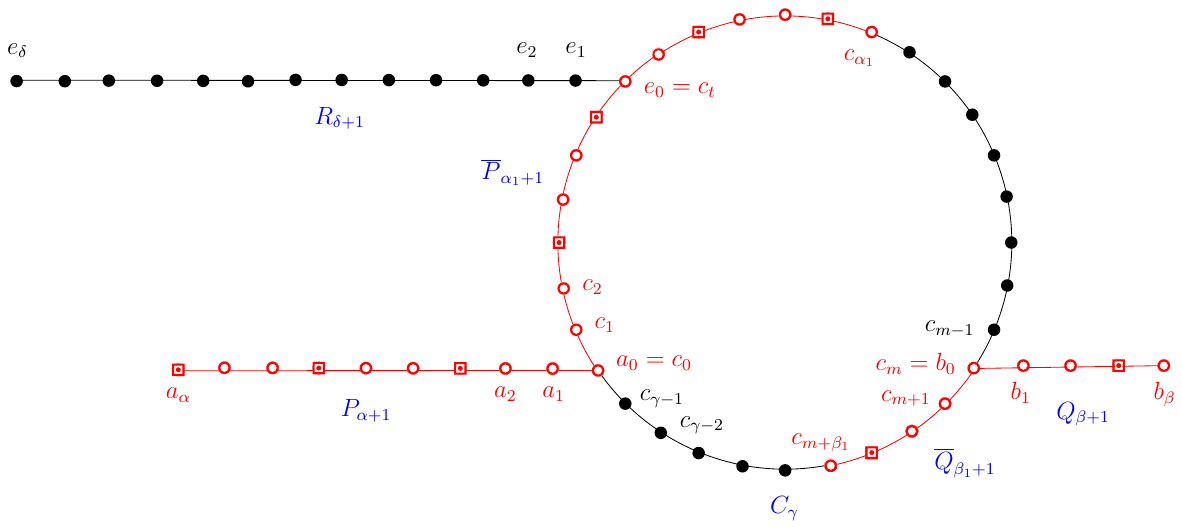}
 \caption{ The circles and squares  represent the  subgraph $P_{\alpha+1}\cup \overline{P}_{\alpha_1+1}\cup Q_{\beta+1}\cup \overline{Q}_{\beta_1+1}$ and the squares represent the set $M_\gamma(c_0,\alpha,\alpha_{1},c_m,\beta,\beta_1)$.}
    \label{fig:m_2}
\end{figure}

\medskip
\noindent\textbf{Finding a multipacking of  $\bm{H_\gamma(c_0,\alpha,c_t,\delta,c_m,\beta)} $ 
 according to Choice 1 :}\label{subec:multipacking_technique_1}
\smallskip

Here we find two paths of  $H$ so that the set of every third vertex (with some exceptions) 
on those paths provide a multipacking of $H$. Let $\overline{P}_{\alpha_1+1}=(c_0,c_1,\dots,c_{\alpha_1})$
and $\overline{Q}_{\beta_1+1}=(c_m,c_{m+1},\dots,c_{m+\beta_1})$, where $0\leq \alpha_1\leq m-1$ and $0\leq \beta_1\leq (\gamma-1)-m$. Now we choose every third vertex (with some exceptions) from the paths  $P_{\alpha+1}\cup \overline{P}_{\alpha_1+1}$ and $Q_{\beta+1}\cup \overline{Q}_{\beta_1+1}$ to construct a set, called $M_\gamma(c_0,\alpha,\alpha_1,c_m,\beta,\beta_1)$ (See Fig. \ref{fig:m_2}). Formally, we define \begin{align*}
M_\gamma(c_0,\alpha,\alpha_1,c_m,\beta,\beta_1)
   ={}&
   \{\,a_i : 0 \le i \le \alpha,\; i \equiv 0 \pmod{3}\,\} \\
   &\cup \{\,c_i : 0 \le i \le \alpha_1,\; i \equiv 0 \pmod{3}\,\} \\
   &\cup \{\,b_i : 0 \le i \le \beta,\; i \equiv 0 \pmod{3}\,\} \\
   &\cup \{\,c_i : m \le i \le m+\beta_1,\; i \equiv m \pmod{3}\,\} \setminus \{c_0, c_m\}.
\end{align*}


This set will be a multipacking of $G$ under the conditions stated in the following lemma.

\begin{lemma} 
\label{lem:multipacking_subgraph1}
Let $G$ be a cactus and $H_\gamma(c_0,\alpha,c_t,\delta,c_m,\beta)$  be a subgraph of $G$. Let  $\alpha_1,\alpha_2,\beta_1,\beta_2$ be non-negative integers such that $\alpha_2=(m-1)-\alpha_1 $, $\beta_2=(\gamma-1)-(m+\beta_1) $. If $\alpha_1\leq 3\beta_2+\alpha_2+\beta_1$ and $\beta_1\leq 3\alpha_2+\beta_2+\alpha_1$, then 
 $M_\gamma(c_0,\alpha,\alpha_1,c_m,\beta,\beta_1)$  is a multipacking of $G$ of size at least
 $ \big\lfloor\frac{\alpha+\alpha_1+1}{3}\big\rfloor+\big\lfloor\frac{\beta+\beta_1+1}{3}\big\rfloor -2$.

\end{lemma}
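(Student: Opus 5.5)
We reduce the statement to a counting problem on $H=H_\gamma(c_0,\alpha,c_t,\delta,c_m,\beta)$. By Lemma~\ref{lem:multipacking_subgraph}, it suffices to show that $M:=M_\gamma(c_0,\alpha,\alpha_1,c_m,\beta,\beta_1)$ is a multipacking of $H$. By Observation~\ref{obs:isometric_graph}, distances in $H$ coincide with distances in $G$, so we may compute everything inside $H$.

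\textbf{Two legs.} Write $L_1=P_{\alpha+1}\cup\overline{P}_{\alpha_1+1}$, which is the path $a_\alpha,\dots,a_1,c_0,c_1,\dots,c_{\alpha_1}$ of length $\alpha+\alpha_1$. Similarly, let $L_2=Q_{\beta+1}\cup\overline{Q}_{\beta_1+1}$, of length $\beta+\beta_1$. The cycle $C_\gamma$ then decomposes into $\overline{P}_{\alpha_1+1}$, a gap arc of $\alpha_2+1$ edges from $c_{\alpha_1}$ to $c_m$, then $\overline{Q}_{\beta_1+1}$, and a gap arc of $\beta_2+1$ edges from $c_{m+\beta_1}$ back to $c_0$. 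The set $M$ is every third vertex of $L_1$ and of $L_2$, with $c_0$ and $c_m$ removed.

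\textbf{Size bound.} On $L_1$, every third vertex gives at least $\lceil(\alpha+\alpha_1+1)/3\rceil$ points. Deleting $c_0$ loses at most one of them, and similarly on $L_2$. Hence
\[
|M|\ \ge\ \Big\lfloor\tfrac{\alpha+\alpha_1+1}{3}\Big\rfloor+\Big\lfloor\tfrac{\beta+\beta_1+1}{3}\Big\rfloor-2 .
\]

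\textbf{Multipacking property.} Fix $v\in V(H)$ and $r\ge1$. First I will check that each leg is a geodesic of $H$: the condition $\alpha_1\le 3\beta_2+\alpha_2+\beta_1$ should guarantee that going around the cycle the other way is never shorter. With this in hand, $N_r[v]\cap L_i$ is a subpath of $L_i$ of length $\ell_i\le 2r$, so it contains at most $\lfloor \ell_i/3\rfloor+1$ points of $M$.

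\emph{Ball meets one leg.} If $N_r[v]$ meets only one leg, we get at most $\lfloor 2r/3\rfloor+1\le r$ points, exactly as in Lemma~\ref{lem:isometric_path}.

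\emph{Ball meets both legs.} This is the main case, and I split it according to where $v$ lies: on $P_{\alpha+1}$ or $\overline{P}_{\alpha_1+1}$, on either gap arc, on $R_{\delta+1}$ (which enters through $c_t$), or symmetrically on the $Q$ side. In each case I write $d_1=d(v,L_1)$ and $d_2=d(v,L_2)$, and bound the two intervals by $\ell_i\le 2(r-d_i)$, or by $\ell_i\le r-d_i$ when the interval is cut at an end of the leg. The goal is $\lfloor\ell_1/3\rfloor+\lfloor\ell_2/3\rfloor+2\le r$. This needs $d_1+d_2$ to be comparable to the distance between the legs through a gap arc. The two hypotheses $\alpha_1\le 3\beta_2+\alpha_2+\beta_1$ and $\beta_1\le3\alpha_2+\beta_2+\alpha_1$ are exactly what I expect to rule out a ball that wraps around the cycle and captures long stretches of both $\overline{P}$ and $\overline{Q}$ with small $r$. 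The deletion of $c_0$ and $c_m$ handles the junction vertices, where a small ball centred near $c_0$ would otherwise see three chosen points (one on $P$, $c_0$ itself, one on $\overline{P}$) together with points on the short gap arc.

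\textbf{Main obstacle.} The hard part is this two-leg case analysis, in particular when $v$ lies on a gap arc or on $R_{\delta+1}$. There the ball enters both legs from their cycle ends, and one must verify the inequality using the cycle-length relation $\gamma=\alpha_1+\alpha_2+\beta_1+\beta_2+2$ together with the two hypotheses. I would first try an induction on $r$ in steps of $3$, as in the proof of Lemma~\ref{mpG2k>=9k}: the annulus $N_{r+3}[v]\setminus N_r[v]$ contains at most $3$ points of $M$. That reduces the argument to checking small radii $r\le3$ near the junctions, plus the geodesic claim for the legs. If the annulus count fails near the cycle, I would fall back on the direct interval-length estimate above.
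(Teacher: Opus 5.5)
Your argument stands or falls with the two-leg case, and the tools you propose for it do not hold.

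\emph{The legs need not be geodesic.} $L_1$ is isometric only if $\alpha_1\le\gamma-\alpha_1=\alpha_2+\beta_1+\beta_2+2$. The hypothesis $\alpha_1\le 3\beta_2+\alpha_2+\beta_1$ is weaker than this as soon as $\beta_2\ge 2$. Take $\alpha_2=\beta_1=0$, $\beta_2=2$, $\alpha_1=6$: both hypotheses hold, but $\gamma=10$ and $d(c_0,c_6)=4$. For $\alpha\ge 3$ one gets $N_4[c_9]\cap L_1=\{a_3,a_2,a_1,c_0,c_1,c_2,c_3,c_5,c_6\}$, which is not a subpath and spans $9>2r$ along $L_1$.

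\emph{The window bound fails even for geodesic legs.} In the regime of Lemma~\ref{lem:gamma/2}, where both legs are geodesic, the bound $\ell_i\le 2(r-d_i)$ is still a tree heuristic that breaks on a cycle, because the ball can enter a leg from both ends of its arc. For $\alpha_1=\beta_1=4$, $\alpha_2=\beta_2=0$ ($\gamma=10$) we have $d(c_7,L_1)=3$, so your bound gives span at most $2$, yet $N_4[c_7]\cap L_1\supseteq\{c_0,c_1,c_3,c_4\}$ spans $4$.

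\emph{The annulus induction fails.} Consider a ball centred on a gap arc that has passed $c_0$ and $c_m$ but not yet closed up around $C_\gamma$. It grows along four fronts (both directions on the cycle, $P_{\alpha+1}$ and $Q_{\beta+1}$), so a width-$3$ annulus can contain four points of $M$. For $\alpha_2=\beta_1=0$, $\beta_2=3$, $\alpha_1=9$ ($\gamma=14$, $m=10$) and $\alpha,\beta\ge3$, we get $N_5[c_{12}]\setminus N_2[c_{12}]\supseteq\{c_9,c_3,a_3,b_3\}$.

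What remains is the inequality $\lfloor\ell_1/3\rfloor+\lfloor\ell_2/3\rfloor+2\le r$, stated only as a goal. That inequality is the whole content of the lemma.

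The missing idea is to count jointly along the cycle rather than leg by leg. Take $v=c_{x_1}$ with $0\le x_1\le m-1$. The ball meets $C_\gamma$ in a single arc of at most $2r+1$ vertices, or in all of $C_\gamma$. That arc is shared among $\overline{P}_{\alpha_1+1}$, the two gap arcs and $\overline{Q}_{\beta_1+1}$. On top of it come pendant stretches of lengths $r-d(v,c_0)$ on $P_{\alpha+1}$ and $r-d(v,c_m)$ on $Q_{\beta+1}$. The paper splits on $r$:
\begin{itemize}
\item While the ball has not met the $Q$-side, a one-sided count suffices.
\item In the middle range the count is at most $\frac13(4r-\alpha_1-\beta_2-2\alpha_2)$. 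The hypothesis $\beta_1\le 3\alpha_2+\beta_2+\alpha_1$, together with $r\le\lfloor(\alpha_1+\alpha_2+\beta_1+\beta_2)/2\rfloor$, gives $r\le\alpha_1+\beta_2+2\alpha_2$, and hence the bound $r$.
\item Once $C_\gamma$ is covered, the count is at most $\frac13(2r+\beta_1-\alpha_2+1)$, and the same hypothesis gives $\beta_1-\alpha_2+1\le r$.
\end{itemize}
The other hypothesis handles centres on $c_m,\dots,c_{\gamma-1}$ symmetrically. Centres on $P_{\alpha+1}$, $Q_{\beta+1}$, $R_{\delta+1}$ are reduced through the cut vertices $c_0,c_m,c_t$.

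There is also a small slip in your size count. The pattern on $L_1$ is anchored at $c_0$, so $L_1$ carries $\lfloor\alpha/3\rfloor+\lfloor\alpha_1/3\rfloor+1$ chosen vertices. This can fall below $\lceil(\alpha+\alpha_1+1)/3\rceil$ (take $\alpha=\alpha_1=2$). It is still at least $\lfloor(\alpha+\alpha_1+1)/3\rfloor$, so your final size bound survives.
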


\begin{proof}

Let $H=H_\gamma(c_0,\alpha,c_t,\delta,c_m,\beta)$ and   $M=M_\gamma(c_0,\alpha,\alpha_1,c_m,\beta,\beta_1)$.

Let $v=c_{x_1}$ for some $x_1$ where $0\leq x_1\leq m-1$. Note that $m-1=\alpha_1+\alpha_2$ and $\gamma=\alpha_1+\alpha_2+\beta_1+\beta_2+2$, i.e., $\big\lfloor\frac{\gamma}{2}\big\rfloor=\big\lfloor\frac{\beta_{1}+\beta_{2}+\alpha_{1}+\alpha_2}{2}\big\rfloor+1$.

Recall that we denote an  indicator function as  $1_{[x<y]}$  that takes the value $1$ when $x<y$, and $0$ otherwise.

\vspace{0.22cm}

 \noindent \textit{\textbf{Case 1: }} $1\leq r\leq \max\{\alpha_{1}+\alpha_2-x_1,\beta_{2}+x_1\}$. 

 In this case,  $N_r[v]\cap V(P_{\alpha+1}\cup \overline{P}_{\alpha_1+1})\neq \emptyset$ but $N_r[v]\cap V(Q_{\beta+1}\cup \overline{Q}_{\beta_1+1})= \emptyset$. Therefore, $|N_r[v]\cap M|\leq \frac{1}{3} [\{2r-2(\alpha_{1}+\alpha_2-x_1)-1+(\alpha_1-x_1)+r+1\}\times 1_{[\alpha_{1}+\alpha_2-x_1\leq \beta_{2}+x_1]}+\{r-x_1-\beta_{2}+2r+1\}\times 1_{[\alpha_{1}+\alpha_2-x_1> \beta_{2}+x_1]}]\leq r $. Therefore $|N_r[v]\cap M|\leq r$.

\vspace{0.22cm}

 \noindent \textit{\textbf{Case 2: }} $\max\{\alpha_{1}+\alpha_2-x_1,\beta_{2}+x_1\}<r\leq \big\lfloor\frac{\beta_{1}+\beta_{2}+\alpha_{1}+\alpha_2}{2}\big\rfloor$.

 In this case,  $N_r[v]\cap V(P_{\alpha+1}\cup \overline{P}_{\alpha_1+1})\neq \emptyset$ and $N_r[v]\cap V(Q_{\beta+1}\cup \overline{Q}_{\beta_1+1})\neq \emptyset$. Therefore, $|N_r[v]\cap M|\leq \frac{1}{3} [ r+1+(\alpha_1-x_1)+2\{r-(\alpha_{1}+\alpha_2-x_1)\}-1+r-(x_1+\beta_{2})]=\frac{1}{3} [4r-\alpha_{1}-\beta_{2}-2\alpha_2]$.
We know that $\beta_{1}\leq \alpha_{1}+\beta_{2}+3\alpha_2$. Now $\beta_{1}\leq \alpha_{1}+\beta_{2}+3\alpha_2$ $\implies$ $\beta_{1}+\beta_{2}+\alpha_{1}+\alpha_2\leq 2\alpha_{1}+2\beta_{2}+4\alpha_2$ $\implies$ $\frac{\beta_{1}+\beta_{2}+\alpha_{1}+\alpha_2}{2}\leq \alpha_{1}+\beta_{2}+2\alpha_2$. Since $r\leq \big\lfloor\frac{\beta_{1}+\beta_{2}+\alpha_{1}+\alpha_2}{2}\big\rfloor$, therefore $r\leq \alpha_{1}+\beta_{2}+2\alpha_2$. Now $r\leq \alpha_{1}+\beta_{2}+2\alpha_2$  $\implies$ $4r-\alpha_{1}-\beta_{2}-2\alpha_2\leq 3r$ $\implies$ $|N_r[v]\cap M|\leq r$.

\vspace{0.22cm}

 \noindent \textit{\textbf{Case 3: }} $\big\lfloor\frac{\beta_{1}+\beta_{2}+\alpha_{1}+\alpha_2}{2}\big\rfloor< r$. 

  In this case, $N_r[v]\supseteq V(C_\gamma)$. Therefore, 
 $|N_r[v]\cap M|\leq \frac{1}{3} [\beta_{1}+1+\alpha_{1}+1+(r-x_1)+r-(\alpha_{1}+\alpha_2-x_1+1)]=\frac{1}{3} [2r+\beta_{1}-\alpha_2+1]$. We know that $\beta_{1}\leq \alpha_{1}+\beta_{2}+3\alpha_2$. Now $\beta_{1}\leq \alpha_{1}+\beta_{2}+3\alpha_2$ $\implies$ $2\beta_{1}-2\alpha_2\leq \beta_{1}+\beta_{2}+\alpha_{1}+\alpha_2$ $\implies$ $\beta_{1}-\alpha_2\leq \frac{\beta_{1}+\beta_{2}+\alpha_{1}+\alpha_2}{2}$  $\implies$ $\beta_{1}-\alpha_2+1\leq \frac{\beta_{1}+\beta_{2}+\alpha_{1}+\alpha_2}{2}+1$ $\implies$ $\beta_{1}-\alpha_2+1\leq \big\lfloor\frac{\beta_{1}+\beta_{2}+\alpha_{1}+\alpha_2}{2}\big\rfloor+1\leq r$ 
 $\implies$ $\beta_{1}-\alpha_2 +1\leq r$ $\implies$ $|N_r[v]\cap M|\leq \frac{1}{3} [2r+\beta_{1}-\alpha_2+1]\leq r$.

\vspace{0.22cm}


 Similarly, using the relation $\alpha_1\leq 3\beta_2+\alpha_2+\beta_1$, we can show that, when $v=c_{x_2}$ for some $x_2$ where $m\leq x_2\leq \gamma-1$, we can show that  $v\in \{c_i:m\leq i\leq \gamma-1\}$, then $|N_r[v]\cap M|\leq r$ for all  $r\geq 1$. Therefore, $|N_r[v]\cap M|\leq r$ for all $v\in V(C_\gamma)$ and $r\geq 1$.

 Suppose $v\in V(P_{\alpha+1})$. Then any path that joins $v$ with a vertex in $V(C_\gamma)\cup V(Q_{\beta+1})\cup V(R_{\delta+1})$ passes through $a_0$, otherwise $G$ cannot be a cactus. By Observation \ref{obs:isometric_graph} we know that $H_\gamma(c_0,\alpha,c_t,\delta,c_m,\beta)$ is an isometric subgraph of $G$.  Therefore $|N_r[v]\cap M|\leq r$ for all  $r\geq 1$. Similarly we can show that, if $v$ is in  $V(Q_{\beta+1})$ or $V(R_{\delta+1})$, then $|N_r[v]\cap M|\leq r$ for all  $r\geq 1$. Therefore $M$ is a multipacking of $H$. So, $M$ is a multipacking of $G$ by Lemma \ref{lem:multipacking_subgraph} and $|M|\geq  \big\lfloor\frac{\alpha+\alpha_1+1}{3}\big\rfloor+\big\lfloor\frac{\beta+\beta_1+1}{3}\big\rfloor -2$.   
\end{proof}

Substitute $\alpha_2=(m-1)-\alpha_1 $ and $\beta_2=(\gamma-1)-(m+\beta_1) $ in Lemma \ref{lem:multipacking_subgraph1}. This yields the following.

\begin{lemma} 
\label{lem:gamma/2}
    Let $G$ be a cactus and $H_\gamma(c_0,\alpha,c_t,\delta,c_m,\beta)$  be a subgraph of $G$. Let  $\alpha_1$ and $\beta_1$ be non-negative integers such that $\alpha_1\leq m-1$ and $\beta_1\leq (\gamma-1)-m$. If $\alpha_1\leq \big\lfloor\frac{\gamma}{2}\big\rfloor-1$ and $\beta_1\leq \big\lfloor\frac{\gamma}{2}\big\rfloor-1$, then 
 $M_\gamma(c_0,\alpha,\alpha_1,c_m,\beta,\beta_1)$  is a multipacking of $G$ of size at least
 $ \big\lfloor\frac{\alpha+\alpha_1+1}{3}\big\rfloor+\big\lfloor\frac{\beta+\beta_1+1}{3}\big\rfloor -2$.
 
\end{lemma}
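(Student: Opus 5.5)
This lemma is a direct specialisation of Lemma~\ref{lem:multipacking_subgraph1}. The plan is to set $\alpha_2=(m-1)-\alpha_1$ and $\beta_2=(\gamma-1)-(m+\beta_1)$. These are non-negative integers precisely because of the hypotheses $\alpha_1\leq m-1$ and $\beta_1\leq(\gamma-1)-m$. It then remains to check that the two conditions $\alpha_1\leq 3\beta_2+\alpha_2+\beta_1$ and $\beta_1\leq 3\alpha_2+\beta_2+\alpha_1$ follow from $\alpha_1,\beta_1\leq\lfloor\gamma/2\rfloor-1$. Once they hold, Lemma~\ref{lem:multipacking_subgraph1} gives both the multipacking property of $M_\gamma(c_0,\alpha,\alpha_1,c_m,\beta,\beta_1)$ in $G$ and the stated size bound verbatim.

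The key identity is $\alpha_1+\alpha_2+\beta_1+\beta_2=\gamma-2$, obtained by adding the two defining equations. For the first condition, I rewrite $3\beta_2+\alpha_2+\beta_1-\alpha_1$ as $(\alpha_1+\alpha_2+\beta_1+\beta_2)-2\alpha_1+2\beta_2=\gamma-2-2\alpha_1+2\beta_2$. It therefore suffices that $2\alpha_1\leq\gamma-2+2\beta_2$. Since $\beta_2\geq 0$, this reduces to $2\alpha_1\leq\gamma-2$, i.e.\ $\alpha_1\leq\gamma/2-1$. That is implied by $\alpha_1\leq\lfloor\gamma/2\rfloor-1$, because $\lfloor\gamma/2\rfloor\leq\gamma/2$.

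The second condition is symmetric. Here $3\alpha_2+\beta_2+\alpha_1-\beta_1=\gamma-2-2\beta_1+2\alpha_2\geq\gamma-2-2\beta_1\geq 0$, using $\beta_1\leq\lfloor\gamma/2\rfloor-1$. Both hypotheses of Lemma~\ref{lem:multipacking_subgraph1} are then satisfied and the conclusion follows. There is no real obstacle. The only points to watch are the non-negativity of $\alpha_2$ and $\beta_2$ (which is exactly what the ranges on $\alpha_1$ and $\beta_1$ guarantee) and the correct handling of the floor when $\gamma$ is odd, which only makes the inequality slacker.
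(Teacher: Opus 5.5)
Your proposal is correct and follows the paper's own route: the paper obtains this lemma simply by substituting $\alpha_2=(m-1)-\alpha_1$ and $\beta_2=(\gamma-1)-(m+\beta_1)$ into Lemma~\ref{lem:multipacking_subgraph1}. Your check of the two hypotheses via the identity $\alpha_1+\alpha_2+\beta_1+\beta_2=\gamma-2$, together with the non-negativity of $\alpha_2$ and $\beta_2$, supplies the arithmetic the paper leaves implicit.
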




\begin{figure}[ht]
    \centering
    \includegraphics[width=\textwidth]{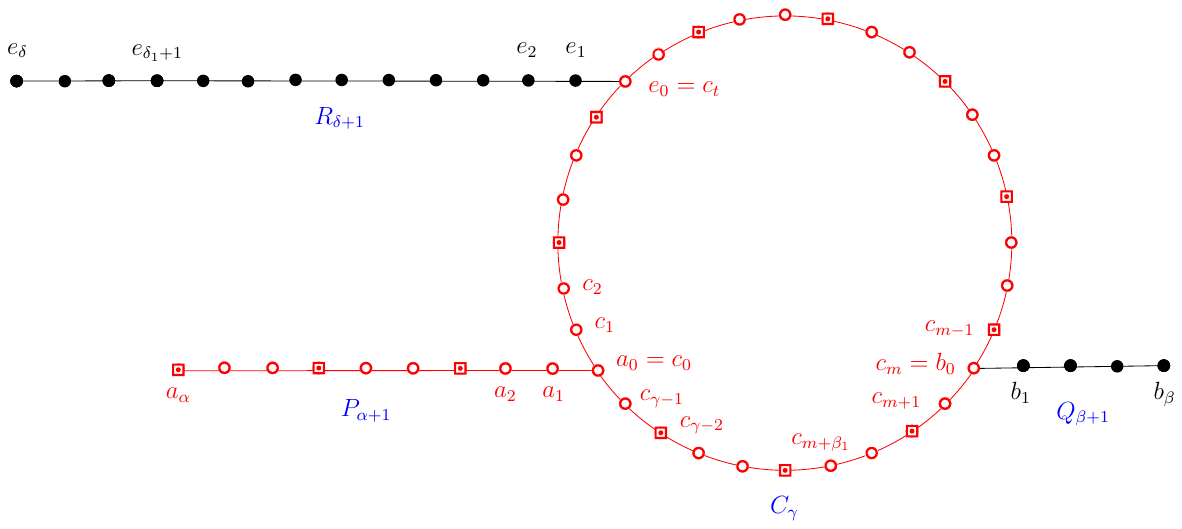}
 \caption{The circles and squares  represent the subgraph $P_{\alpha+1}\cup C_\gamma$  and the squares represent the  set $M'_\gamma(c_0,\alpha)$.}
    \label{fig:multipacking_choice_2}
\end{figure}

\smallskip
\noindent \textbf{Finding a multipacking of  $\bm{H_\gamma(c_0,\alpha,c_t,\delta,c_m,\beta)} $ according to Choice 2 :}\label{subec:multipacking_technique_2}
\medskip

Here we find a path and a cycle of  $H$ so that the set of every third vertex (with some exceptions) 
on these subgraphs provide a multipacking of $H$. Consider the path $P_{\alpha+1}$ and the cycle $C_\gamma$. Now we choose every third vertex (with some exceptions) from these subgraphs  to construct a set, called $M'_\gamma(c_0,\alpha)$ (See Fig. \ref{fig:multipacking_choice_2}).  Formally, we define \begin{align*}
M'_\gamma(c_0,\alpha)
   ={}&
   \{\,a_i : 0 \le i \le \alpha,\; i \equiv 0 \pmod{3}\,\} \\
   &\cup \{\,c_i : 0 \le i \le \gamma-1,\; i \equiv 0 \pmod{3}\,\} \setminus \{c_0\}.
\end{align*}


Similarly, for  the path $Q_{\beta+1}$ and the cycle $C_\gamma$, we define the set \begin{align*}
M'_\gamma(c_m,\beta)
   ={}&
   \{\,b_i : 0 \le i \le \beta,\; i \equiv 0 \pmod{3}\,\} \\
   &\cup \{\,c_i : 0 \le i \le \gamma-1,\; i \equiv m \pmod{3}\,\} \setminus \{c_m\}.
\end{align*}


These sets will be a multipacking of $G$ as stated in the following lemma.

\begin{lemma} 
\label{lem:multipacking_delta=0}
     Let $G$ be a cactus and $H_\gamma(c_0,\alpha,c_t,\delta,c_m,\beta)$  be a subgraph of $G$. Then $M'_\gamma(c_0,\alpha)$ is a multipacking of $G$ of size at least $\big\lfloor\frac{\gamma}{3}\big\rfloor+\big\lfloor\frac{\alpha}{3}\big\rfloor-1$ and $M'_\gamma(c_m,\beta)$ is a multipacking of $G$ of size at least $\big\lfloor\frac{\gamma}{3}\big\rfloor+\big\lfloor\frac{\beta}{3}\big\rfloor-1$.
\end{lemma}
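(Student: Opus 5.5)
By symmetry (swap the roles of $c_0,P_{\alpha+1}$ with $c_m,Q_{\beta+1}$ and reindex the cycle starting at $c_m$), it suffices to treat $M=M'_\gamma(c_0,\alpha)$. The plan has three steps: check the multipacking condition on the subgraph $P_{\alpha+1}\cup C_\gamma$, extend it to $H$ and then to $G$, and finally count.

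The main step is to show that $M$ is a multipacking of the subgraph $P_{\alpha+1}\cup C_\gamma$. I would first make three structural observations.
\begin{itemize}
\item The points of $M$ on the cycle are $c_3,c_6,\dots$, consecutively spaced $3$ apart. The exception is the wrap-around gap between the last one, $c_{3\lfloor(\gamma-1)/3\rfloor}$, and $c_0$, whose length is between $1$ and $3$. Since $c_0$ is excluded from $M$, the selected cycle points plus $a_0$'s successors form a $3$-spaced sequence.
\item $P_{\alpha+1}$ is an isometric path, and it meets the cycle only at $c_0=a_0$; in a cactus every path from $P_{\alpha+1}$ to the cycle passes through $a_0$.
\item $C_\gamma$ is isometric in $G$ by Observation \ref{obs:isometric_graph}.
\end{itemize}

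For a vertex $v=a_i$ on the path, I would handle $N_r[v]$ in two pieces. The path part is an interval of length at most $2r$ and contains at most $\lceil (2r+1)/3\rceil$ points. Once $r>i$, the ball also reaches into the cycle and covers arcs of total length at most $2(r-i)$ around $c_0$. Because $c_0$ itself is not in $M$ and $a_0\in M$, these two pieces behave like an isometric path that is "folded" at $a_0$. I would bound the count by $\frac13$ times the total covered length plus a constant, in the same style as the cases in the proof of Lemma \ref{lem:multipacking_subgraph1}, and show the result is $\le r$.

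For $v=c_x$ on the cycle, I would split into cases in the same way as that lemma:
\begin{itemize}
\item $r<d(c_x,c_0)$: the ball is an arc of length $2r$, so it contains at most $\lceil(2r+1)/3\rceil\le r$ points, valid for $r\ge1$.
\item $d(c_x,c_0)\le r<\lfloor\gamma/2\rfloor$: the ball consists of the arc together with an initial segment of $P_{\alpha+1}$ of length $r-d(c_x,c_0)$. The $3$-spacing across $c_0$ (since $c_0\notin M$ but $a_0\in M$) keeps the total at most $r$.
\item $r\ge\lfloor\gamma/2\rfloor$: the whole cycle is covered, contributing at most $\lfloor(\gamma-1)/3\rfloor$ points. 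The path contributes about $(r-\gamma/2)/3+1$, and the sum is $\le r$.
\end{itemize}
The main obstacle I expect is the bookkeeping in the second and third cases, specifically the $+1$ terms arising from $a_0$ and from the irregular wrap gap on the cycle. I would first try verifying the inequality directly for small $r$ ($r\le 3$), and then use the crude bound $\#\le(\text{length}+2)/3$ for larger $r$.

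The remaining steps are short. For $v$ on $Q_{\beta+1}$ or $R_{\delta+1}$, every path to $M$ enters the cycle through a single attachment vertex $c_m$ or $c_t$. So, exactly as in Claim \ref{lem:multipacking_subgraph}.2, $|N_r[v]\cap M|\le|N_{r-k}[c_m]\cap M|\le r$, and therefore $M$ is a multipacking of $H$. Lemma \ref{lem:multipacking_subgraph} then makes $M$ a multipacking of $G$. For the size, the path contributes $\lfloor\alpha/3\rfloor+1$ points and the cycle contributes $\lfloor(\gamma-1)/3\rfloor+1-1$ points. This gives at least $\lfloor\gamma/3\rfloor+\lfloor\alpha/3\rfloor-1$, with slack.
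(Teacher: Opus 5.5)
\textbf{There is a genuine gap at the junction vertex.} Your architecture matches the paper's, which asserts the ball condition on $H$ ``from the definition'' and then applies Lemma~\ref{lem:multipacking_subgraph}. Reducing balls centred on $Q_{\beta+1}$ or $R_{\delta+1}$ to balls at $c_m$ or $c_t$, and then passing to $G$, is fine. The problem is the junction, which is the only non-trivial part.

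In $H$ the vertices $a_0$ and $c_0$ are \emph{the same vertex}, so ``$c_0\notin M$ but $a_0\in M$'' cannot hold. Yet two of your steps use that this shared vertex lies in $M$: the junction argument (the path ``folded'' at $a_0$, the $3$-spacing across $c_0$) and the count (the path contributes $\lfloor\alpha/3\rfloor+1$ points). With the shared vertex in $M$ the set is not a multipacking:
\begin{itemize}
\item for $\gamma=4$ it contains the adjacent vertices $c_0$ and $c_3$;
\item for $\gamma=9$ and $\alpha\ge 3$ the ball $N_3[c_0]$ contains the four vertices $c_0,a_3,c_3,c_6$.
\end{itemize}
So the $\setminus\{c_0\}$ in the definition must remove $a_0=c_0$ from the whole union. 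As a result, your second and third cycle cases and your path case are not established by the argument you describe.

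\textbf{What the correct set looks like.} Take $M'=\{a_{3j}:1\le j\le\lfloor\alpha/3\rfloor\}\cup\{c_{3j}:1\le j\le q\}$ with $q=\lfloor(\gamma-1)/3\rfloor$. This is not a $3$-spaced sequence either: $c_0$ is the centre of three branches. The nearest points of $M'$ are $a_3$ at distance $3$ along $P_{\alpha+1}$, $c_3$ in one direction around $C_\gamma$, and $c_{3q}$ at distance $g=\gamma-3q\in\{1,2,3\}$ in the other. In particular $d(a_3,c_{3q})=g+3$, not $3$.

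The ball counts therefore have to be done branch by branch with these offsets. Two examples:
\begin{itemize}
\item For $v=c_0$ and $g\le r<\lfloor\gamma/2\rfloor$, the count is at most $2\lfloor r/3\rfloor+\lfloor(r-g)/3\rfloor+1$, which is at most $r$.
\item For $v=a_1$ with $g=1$ (and $P_{\alpha+1}$, $C_\gamma$ long), the count is $\lfloor(r+1)/3\rfloor+\lfloor(r-1)/3\rfloor+\lfloor(r-2)/3\rfloor+1$. This equals $r$ when $r\equiv 2\pmod 3$, so there is no room for an extra point at the junction.
\end{itemize}

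\textbf{The size count.} $|M'|=\lfloor\alpha/3\rfloor+\lfloor(\gamma-1)/3\rfloor$. This still gives the stated bound $\lfloor\gamma/3\rfloor+\lfloor\alpha/3\rfloor-1$, but it attains it exactly when $3\mid\gamma$, rather than holding ``with slack''.
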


\begin{proof}  Let $H=H_\gamma(c_0,\alpha,c_t,\delta,c_m,\beta)$,  $M'=M'_\gamma(c_0,\alpha)$.  From the definition of $M'$,   $|N_r[v]\cap M'|\leq r$ for all $v\in V(H)$ and $r\geq 1$. Hence $M'$ is a multipacking of $H$ size at least 
  $\big\lfloor\frac{\gamma}{3}\big\rfloor+\big\lfloor\frac{\alpha}{3}\big\rfloor-1$. Therefore, $M'$ is a multipacking of $G$ by Lemma \ref{lem:multipacking_subgraph}. For a  similar reason, $M'_\gamma(c_m,\beta)$ is a multipacking of $G$ of size at least $\big\lfloor\frac{\gamma}{3}\big\rfloor+\big\lfloor\frac{\beta}{3}\big\rfloor-1$. 
\end{proof}

\begin{figure}[htp]
    \centering
    \includegraphics[width=\textwidth]{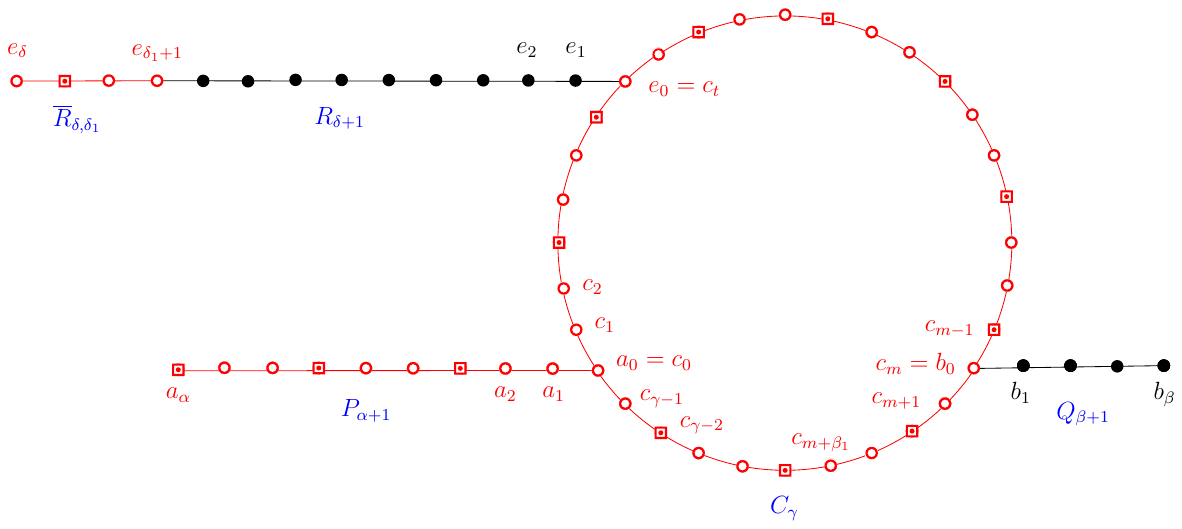}
 \caption{The circles and squares  represent the subgraph $P_{\alpha+1}\cup \overline{R}_{\delta,\delta_1}\cup C_\gamma$   and the squares represent the  set $M'_\gamma(c_0,\alpha,c_t,\delta,\delta_1)$ in this figure.}
    \label{fig:m2_11}
\end{figure}

\medskip
\noindent\textbf{Finding a multipacking in  $\bm{H_\gamma(c_0,\alpha,c_t,\delta,c_m,\beta)} $ according to Choice 3 :}\label{subec:multipacking_technique_3}
\smallskip

Next, we discuss a more general choice, similar to the last one. In the previous choice, we did not use the path $R_{\delta+1}$ to construct a multipacking. Here we are going to use $R_{\delta+1}$ also when it exists. Let   $\overline{R}_{\delta,\delta_1}=(e_{\delta_1+1},e_{\delta_1+2},\dots,e_{\delta})$, which is a path and a part of the path $R_{\delta+1}$. 
Now we choose every third vertex (with some exceptions) from the subgraphs $P_{\alpha+1}$, $\overline{R}_{\delta,\delta_1}$ and $C_\gamma$   to construct a set, called $M'_\gamma(c_0,\alpha,c_t,\delta,\delta_1)$(see Fig. \ref{fig:m2_11}). Formally, we define  \begin{align*}
M'_\gamma(c_0,\alpha,c_t,\delta,\delta_1)
   ={}&
   \{\,a_i : 0 \le i \le \alpha,\; i \equiv 0 \pmod{3}\,\} \\
   &\cup \{\,e_i : \delta_1+2 \le i \le \delta,\; i \equiv \delta_1+1 \pmod{3}\,\} \\
   &\cup \{\,c_i : 0 \le i \le \gamma-1,\; i \equiv 0 \pmod{3}\,\} \setminus \{c_0\}.
\end{align*}


Similarly, considering the subgraphs $Q_{\beta+1}$, $\overline{R}_{\delta,\delta_1}$ and $C_\gamma$, we define  \begin{align*}
M'_\gamma(c_m,\beta,c_t,\delta,\delta_1)
   ={}&
   \{\,b_i : 0 \le i \le \beta,\; i \equiv 0 \pmod{3}\,\} \\
   &\cup \{\,e_i : \delta_1+2 \le i \le \delta,\; i \equiv \delta_1+1 \pmod{3}\,\} \\
   &\cup \{\,c_i : 0 \le i \le \gamma-1,\; i \equiv m \pmod{3}\,\} \setminus \{c_m\}.
\end{align*}


These sets will be a multipacking of $G$ under the conditions stated in the following lemma.

\begin{lemma}  
\label{lem:delta1}
     Let $G$ be a cactus and $H_\gamma(c_0,\alpha,c_t,\delta,c_m,\beta)$  be a subgraph of $G$.  If $\delta_1=\big\lfloor\frac{\gamma}{2}\big\rfloor-d(c_0,c_t)$ and  $ \delta\geq \delta_1$, then $M'_\gamma(c_0,\alpha,c_t,\delta,\delta_1)$ is a multipacking of $G$ of size at least $\big\lfloor\frac{\gamma}{3}\big\rfloor+\big\lfloor\frac{\alpha}{3}\big\rfloor+ \big\lfloor\frac{\delta-\delta_1}{3}\big\rfloor-1$. Moreover, if $\delta_2=\big\lfloor\frac{\gamma}{2}\big\rfloor-d(c_m,c_t)$ and $ \delta\geq \delta_2$, then $M'_\gamma(c_m,\beta,c_t,\delta,\delta_2)$ is a multipacking of $G$ of size at least $\big\lfloor\frac{\gamma}{3}\big\rfloor+\big\lfloor\frac{\beta}{3}\big\rfloor+ \big\lfloor\frac{\delta-\delta_2}{3}\big\rfloor-1$.
\end{lemma}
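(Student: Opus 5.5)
The plan is to reduce, as in the proof of Lemma~\ref{lem:multipacking_delta=0}, to checking that $M''=M'_\gamma(c_0,\alpha,c_t,\delta,\delta_1)$ is a multipacking of $H=H_\gamma(c_0,\alpha,c_t,\delta,c_m,\beta)$. Lemma~\ref{lem:multipacking_subgraph} then transfers this to $G$. The second statement, about $M'_\gamma(c_m,\beta,c_t,\delta,\delta_2)$, is symmetric: rename the cycle starting at $c_m$ and swap the roles of $P_{\alpha+1}$ and $Q_{\beta+1}$. I would therefore only treat the first set.

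The size bound is a count. The cycle contributes at least $\lfloor\gamma/3\rfloor$ vertices $c_i$ with $i\equiv 0 \pmod 3$. The path $P_{\alpha+1}$ contributes $\lfloor\alpha/3\rfloor+1$ vertices, including $a_0=c_0$. The tail $\overline{R}_{\delta,\delta_1}$ contributes about $\lfloor(\delta-\delta_1)/3\rfloor$ vertices $e_i$. Removing $c_0$ costs $1$, which gives the claimed bound.

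For the multipacking property I would use that $H$ is an isometric subgraph of $G$ (Observation~\ref{obs:isometric_graph}). So distances may be computed inside $H$, which is the cycle $C_\gamma$ with three pendant isometric paths. Fix a vertex $v$ and a radius $r$.

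If $v$ lies on $P_{\alpha+1}$ or on $Q_{\beta+1}$, every path from $v$ to the rest of $H$ passes through the attachment vertex. So $N_r[v]$ is a union of a segment of a path and a ball around $c_0$ (respectively $c_m$) of smaller radius. The same "cut vertex" argument used in Claim~\ref{lem:multipacking_subgraph}.2 then reduces this case to $v$ on the cycle, or to an interval on an isometric path, where Lemma~\ref{lem:isometric_path} applies. Points of $R_{\delta+1}$ are handled the same way through $c_t$.

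The essential case is $v=c_x$ on $C_\gamma$. The ball $N_r[v]$ meets three things:
\begin{itemize}
\item an arc of the cycle of length at most $\min(2r,\gamma-1)$, which contains at most about $\lceil(2r+1)/3\rceil$ chosen vertices;
\item a segment of $P_{\alpha+1}$ of length $r-d(c_x,c_0)$;
\item a segment of $R_{\delta+1}$ of length $r-d(c_x,c_t)$, of which only the part beyond $e_{\delta_1+1}$ carries chosen vertices.
\end{itemize}

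The choice $\delta_1=\lfloor\gamma/2\rfloor-d(c_0,c_t)$ is exactly what should make the count work. A vertex $e_i\in M''$ is reached from $c_x$ only if $r\ge d(c_x,c_t)+\delta_1+1$. By the triangle inequality, $d(c_x,c_t)\ge d(c_0,c_t)-d(c_0,c_x)$, so this forces $r+d(c_0,c_x)>\lfloor\gamma/2\rfloor$. Now combine the three contributions:
\begin{itemize}
\item the $P$-branch contributes $\approx(r-d(c_x,c_0))/3+1$;
\item the $R$-branch contributes $\approx(r-d(c_x,c_t)-\delta_1)/3$;
\item the cycle contributes $\approx\min(2r+1,\gamma)/3$, with $c_0$ excluded.
\end{itemize}
I will show the total is at most $r$, splitting into cases $r<\lfloor\gamma/2\rfloor$ and $r\ge\lfloor\gamma/2\rfloor$, in the style of Cases~1--3 of Lemma~\ref{lem:multipacking_subgraph1}. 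When $r\ge\lfloor\gamma/2\rfloor$ the whole cycle is covered, contributing at most $\lfloor\gamma/3\rfloor$. The two branches then together contribute at most $(2r-\gamma+O(1))/3$, giving roughly $2r/3+O(1)\le r$ for moderate $r$. Small $r$ needs a direct check.

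The main obstacle will be the careful off-by-one bookkeeping in this cycle case. In particular, I must handle when both branches and a long arc are simultaneously inside the ball. The residues modulo $3$ at the junctions $c_0$ and $c_t$ must then not create a local cluster. This is why $e_{\delta_1+1}$ itself is excluded and $c_0$ is removed. I would first verify the bound when $r=d(c_x,c_t)+\delta_1+1$ is minimal, where a new $e$-vertex first enters. I would then argue that each further increase of $r$ by $3$ adds at most $3$ new chosen vertices, one per branch or arc end, which gives an induction on $r$ in steps like that used in Lemma~\ref{mpG2k>=9k}.
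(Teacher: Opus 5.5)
Your decomposition is the paper's. You reduce to $H$ via Lemma~\ref{lem:multipacking_subgraph}, handle vertices of the pendant paths through their attachment vertex, and for $v=c_x$ on $C_\gamma$ add up the cycle, $P$-branch and $R$-branch contributions, using the triangle inequality and the definition of $\delta_1$. The gap is in how you close the cycle case. Write $d_1=d(c_0,c_x)$ and $d_2=d(c_t,c_x)$.

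\textbf{The induction step fails.} Your step (``each further increase of $r$ by $3$ adds at most $3$ new chosen vertices, one per branch or arc end'') is false. While $\max\{d_1,d_2+\delta_1\}<r<\lfloor\gamma/2\rfloor$ there are four advancing fronts: both ends of the arc and both branches. So three more units of radius can add four vertices of $M'$, and this can persist over $\Theta(\gamma)$ consecutive radii. For example, take $\gamma=60$, $t=15$, $v=c_{15}$ and $\alpha,\delta$ large. Then $d_1=15$, $d_2=0$, $\delta_1=15$, and the chosen tail vertices are $e_{19},e_{22},\dots$. One gets $|N_{19}[v]\cap M'|=14$ but $|N_{22}[v]\cap M'|=18$.

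\textbf{The estimate for $r\ge\lfloor\gamma/2\rfloor$ is also wrong.} The two branches contribute about $\frac13(2r-d_1-d_2-\delta_1)$. The sum $d_1+d_2+\delta_1$ is only guaranteed to be at least $\lfloor\gamma/2\rfloor$, with equality when $c_x$ lies on a shortest $c_0$--$c_t$ path; it need not be close to $\gamma$. So the total is about $\frac13(2r+\lceil\gamma/2\rceil)$, and at $r=\lfloor\gamma/2\rfloor$ the count is $r-O(1)$ for arbitrarily large $\gamma$. There is no slack of order $r/3$, and the delicate range is not ``small $r$''.

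\textbf{What is missing is a cumulative count, which is what the paper does.} It rests on
\[
d_1+d_2+\delta_1=\Big\lfloor\frac{\gamma}{2}\Big\rfloor+\bigl(d_1+d_2-d(c_0,c_t)\bigr)\ge\Big\lfloor\frac{\gamma}{2}\Big\rfloor .
\]
If $r\le d_2+\delta_1$, no chosen tail vertex is reached and Lemma~\ref{lem:multipacking_delta=0} applies. If $d_2+\delta_1<r<\lfloor\gamma/2\rfloor$ and $r\ge d_1$, the count is at most
\[
\frac13\bigl[(2r+1)+(r-d_1)+(r-d_2-\delta_1)\bigr]=r-\frac13\Bigl[\Bigl(\Big\lfloor\frac{\gamma}{2}\Big\rfloor-r-1\Bigr)+\bigl(d_1+d_2-d(c_0,c_t)\bigr)\Bigr]\le r,
\]
and the case $r<d_1$ is easier. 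The slack $\lfloor\gamma/2\rfloor-r-1$ is exactly what the fourth front consumes, and it runs out just as the arc closes. If $r\ge\lfloor\gamma/2\rfloor$ (and $r>d_2+\delta_1$), the count is at most $\frac13[\gamma+(r-d_1)+(r-d_2-\delta_1)]\le\frac13(2r+\lceil\gamma/2\rceil)<r+1$. You had the right ingredient; it has to be used in this global form.

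\textbf{Your size count is too generous.} The vertex $c_0=a_0$ sits in both the cycle part and the path part. The tail $e_{\delta_1+4},e_{\delta_1+7},\dots$ has only $\lfloor(\delta-\delta_1-1)/3\rfloor$ vertices. So for $\delta>\delta_1$,
\[
|M'|=\lceil\gamma/3\rceil+\lfloor\alpha/3\rfloor+\lfloor(\delta-\delta_1-1)/3\rfloor-1 .
\]
This is one below the stated bound when $3\mid\gamma$ and $3\mid(\delta-\delta_1)$, e.g.\ $\gamma=6$, $d(c_0,c_t)=2$, $\delta=4$, where the tail is empty. The paper only asserts the size and has the same slip. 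It is harmless later, since the proof of Lemma~\ref{lem:multipacking_radius_relation} only uses $|M'|\ge\frac13(\gamma+\alpha+\delta-\delta_1)-3$.
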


\begin{proof} Let $H=H_\gamma(c_0,\alpha,c_t,\delta,c_m,\beta)$ and  $M'=M'_\gamma(c_0,\alpha,c_t,\delta,\delta_1)$.

Here $V(C_\gamma)=\{c_i:0\leq i\leq \gamma-1\}$. Suppose  $v\in V(C_\gamma)$.
Let $v=c_{x_1}$ for some $x_1$, where $0\leq x_1\leq \gamma-1$. Let  $d_1=d(c_0,v)$ and $d_2=d(c_t,v)$. Since $c_0,c_t\in V(C_\gamma)$,  $d(c_0,c_t)\leq \diam(C_\gamma)= \big\lfloor\frac{\gamma}{2}\big\rfloor$. Therefore $\big\lfloor\frac{\gamma}{2}\big\rfloor- d(c_0,c_t)\geq 0$, which implies that $ \delta_1\geq 0$. 

\vspace{0.22cm}

 \noindent \textit{\textbf{Case 1: }}
    $1\leq r \leq d_2+\delta_1$.


     In this case, $N_r[v]\cap V(P_{\alpha+1}\cup C_{\gamma})\neq \emptyset$, but $N_r[v] \cap V(\overline{R}_{\delta,\delta_1})=\emptyset$. Therefore,   $|N_r[v]\cap M'|\leq r$.

\vspace{0.22cm}
   
 \noindent \textit{\textbf{Case 2: }}  $ d_2+\delta_1<r< \big\lfloor\frac{\gamma}{2}\big\rfloor$. 


   In this case, $N_r[v]\cap V(C_{\gamma})\neq \emptyset$ and $N_r[v]\cap V(\overline{R}_{\delta,\delta_1}) \neq \emptyset$. The set $N_r[v]$ contains at most $2r+1$ vertices from the set $V(C_{\gamma})$  and at most $r-d_2-\delta_1$ vertices from the set $V(\overline{R}_{\delta,\delta_1})$. Suppose $r<d_1$. Note that $\delta_1+4=\min\{i:e_i\in V(\overline{R}_{\delta,\delta_1}) \}$. Therefore, $|N_r[v]\cap M'|\leq \big\lfloor\frac{1}{3}[2r+1+r-d_2-\delta_1]\big\rfloor\leq r$. If $r\geq d_1$, then the set $N_r[v]$ contains at most $r-d_1$ vertices from the set $V(P_{\alpha+1})$.
   
   Therefore,  \begin{align*}
\bigl|N_r[v]\cap M'\bigr|
&\le \frac{1}{3}\!\left[\,2r+1 + r-d_1 + r-d_2 - \delta_1 \right] \\[6pt]
&= \frac{1}{3}\!\left[\,2r+1 + r-d(c_0,v) + r-d(c_t,v)
      - \Bigl(\big\lfloor\tfrac{\gamma}{2}\big\rfloor - d(c_0,c_t)\Bigr) \right] \\[6pt]
&= \frac{1}{3}\!\left[\,3r 
      - \Bigl(\big\lfloor\tfrac{\gamma}{2}\big\rfloor - r - 1\Bigr)
      - \bigl(d(c_0,v) + d(c_t,v) - d(c_0,c_t)\bigr) \right] \\[6pt]
&\le r .
\end{align*}


 \vspace{0.22cm}

  \noindent \textit{\textbf{Case 3: }}  $  \big\lfloor\frac{\gamma}{2}\big\rfloor \leq r$.

   In this case, $N_r[v]\supseteq V(C_{\gamma})$, $N_r[v]\cap V(P_{\alpha+1})\neq \emptyset$ and $N_r[v]\cap V(\overline{R}_{\delta,\delta_1}) \neq \emptyset$. The set $N_r[v]$ contains all the vertices of the set $V(C_{\gamma})$. Moreover, it contains at most $r-d_1$ vertices from the set $V(P_{\alpha+1})$ and at most $r-d_2-\delta_1$ vertices from the set $V(\overline{R}_{\delta,\delta_1})$. Therefore, 
   
   \begin{align*}
\bigl|N_r[v]\cap M'\bigr|
&\le \frac{1}{3}\!\left[\,\gamma + r - d_1 + r - d_2 - \delta_1 \right] \\[6pt]
&= \frac{1}{3}\!\left[\,\gamma + r - d(c_0,v)
      + r - d(c_t,v)
      - \Bigl(\big\lfloor\tfrac{\gamma}{2}\big\rfloor - d(c_0,c_t)\Bigr) \right] \\[6pt]
&= \frac{1}{3}\!\left[\,2r
      + \bigl(\gamma - \big\lfloor\tfrac{\gamma}{2}\big\rfloor\bigr)
      - \bigl(d(c_0,v) + d(c_t,v) - d(c_0,c_t)\bigr) \right] \\[6pt]
&\le \frac{1}{3}\!\left[\,2r + \big\lceil\tfrac{\gamma}{2}\big\rceil \right] \\[6pt]
&\le r .
\end{align*}

\vspace{0.22cm}

   Therefore,  for $v\in V(C_\gamma)$ and any natural number $r$, $|N_r[v]\cap M'|\leq r$.


 
 


 Suppose $v\in V(P_{\alpha+1})$. Then any path that joins $v$ with a vertex in $V(C_\gamma)\cup V(Q_{\beta+1})\cup V(R_{\delta+1})$ passes through $a_0$, otherwise $G$ cannot be a cactus. By Observation \ref{obs:isometric_graph} we know that $H_\gamma(c_0,\alpha,c_t,\delta,c_m,\beta)$ is an isometric subgraph of $G$.  Therefore $|N_r[v]\cap M'|\leq r$ for all  $r\geq 1$. Similarly we can show that, if $v\in V(Q_{\beta+1})$ or $v \in V(R_{\delta+1})$, then $|N_r[v]\cap M'|\leq r$ for all  $r\geq 1$. Therefore $M'$ is a multipacking of $H$. So, $M'$ is a multipacking of $G$ by Lemma \ref{lem:multipacking_subgraph}. Moreover, $M'_\gamma(c_0,\alpha,c_t,\delta,\delta_1)$ is a multipacking of $G$ of size at least $\big\lfloor\frac{\gamma}{3}\big\rfloor+\big\lfloor\frac{\alpha}{3}\big\rfloor+ \big\lfloor\frac{\delta-\delta_1}{3}\big\rfloor-1$.
 
 Similarly, we can show that, if $\delta_2=\big\lfloor\frac{\gamma}{2}\big\rfloor-d(c_m,c_t)$ and $ \delta\geq \delta_2$, then $M'_\gamma(c_m,\beta,c_t,$ $\delta,\delta_2)$ is a multipacking of $G$ of size at least $\big\lfloor\frac{\gamma}{3}\big\rfloor+\big\lfloor\frac{\beta}{3}\big\rfloor+ \big\lfloor\frac{\delta-\delta_2}{3}\big\rfloor-1$.  
\end{proof}

Now we are ready to prove Lemma  \ref{lem:multipacking_radius_relation}. Here is a small observation before we start proving the lemma. We use this observation in the proof.

\begin{observation}
\label{obs:r/3} If $r$ is a positive integer, then 
    $\big\lfloor\frac{r}{3}\big\rfloor+\big\lfloor\frac{r-1}{3}\big\rfloor \geq \big\lfloor\frac{2r-1}{3}\big\rfloor-1$,   $\big\lfloor\frac{r}{3}\big\rfloor\geq \frac{r}{3}-\frac{2}{3}$,  and $\big\lfloor\frac{r}{2}\big\rfloor+\big\lceil\frac{r}{2}\big\rceil=r$.



\end{observation}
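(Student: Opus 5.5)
The three assertions of Observation~\ref{obs:r/3} are elementary integer facts. I would verify each one directly by writing $r$ in terms of its residue modulo $3$ (for the first two) or modulo $2$ (for the third). No earlier result of the paper is needed.

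For the first inequality, I will write $r=3k+s$ with $s\in\{0,1,2\}$ and evaluate both sides explicitly.
\begin{itemize}
\item If $s=0$ (so $k\geq 1$, since $r\geq 1$), the left side is $k+(k-1)=2k-1$. The right side is $\lfloor (6k-1)/3\rfloor-1=(2k-1)-1=2k-2$.
\item If $s=1$, the left side is $k+k=2k$. The right side is $\lfloor(6k+1)/3\rfloor-1=2k-1$.
\item If $s=2$, the left side is $k+k=2k$. The right side is $\lfloor(6k+3)/3\rfloor-1=2k$, so equality holds.
\end{itemize}
In every case the inequality holds, with slack at most $1$. The only point needing care is the case $s=0$, where $\lfloor (r-1)/3\rfloor=k-1$ rather than $k$. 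This is the step I would double-check, since it is the only place where the two floors on the left differ.

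For the second inequality, again take $r=3k+s$ with $0\leq s\leq 2$. Then $\lfloor r/3\rfloor=k=(r-s)/3\geq (r-2)/3=\frac{r}{3}-\frac{2}{3}$.

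For the identity $\lfloor r/2\rfloor+\lceil r/2\rceil=r$, I will split by parity. If $r=2k$, both terms equal $k$. If $r=2k+1$, the terms are $k$ and $k+1$. In both cases the sum is $r$.

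I expect no genuine obstacle here. The whole content is the finite residue check, and the main risk is an off-by-one error in the first inequality.
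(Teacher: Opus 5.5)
Your residue-class check is correct in every case, including the $s=0$ case, where the left side is $2k-1$ and the right side is $2k-2$. It is essentially the routine verification the paper leaves implicit, since the observation is stated there without proof. As a side remark, the first inequality is also just the general fact $\lfloor a\rfloor+\lfloor b\rfloor\geq\lfloor a+b\rfloor-1$ applied with $a=\frac{r}{3}$ and $b=\frac{r-1}{3}$.
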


\begin{lemma} 
    \label{lem:multipacking_radius_relation}
  Let $G$ be a cactus with radius $\rad(G)$, then  $$\MP(G)\geq \frac{2}{3}\rad(G)-\frac{11}{3}.$$
\end{lemma}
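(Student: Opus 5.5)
Let $r=\rad(G)$ with central vertex $c$. Take the isometric path $P=(c,v_1,\dots,v_r)$ and, by Lemma~\ref{lem:disjoint_radial_path}, an isometric path $Q=(c,w_1,\dots,w_{r'})$ with $V(P)\cap V(Q)=\{c\}$ and $r-1\le r'\le r$. By Observation~\ref{obs:joining_path_atmost_1}, there are two cases: $|X_{P,Q}|=0$ or $|X_{P,Q}|=1$. The goal in each case is a multipacking of size at least $\frac{2r-1}{3}-O(1)$, obtained either from Lemma~\ref{lem:isometric_path} or from one of Choices~1--3. We then use Observation~\ref{obs:r/3} to absorb the floors into the additive constant $\frac{11}{3}$.

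\emph{Case $|X_{P,Q}|=0$.} Every path from $V(P)\setminus\{c\}$ to $V(Q)\setminus\{c\}$ passes through $c$. Hence $P\cup Q$ is an isometric path of length $r+r'\ge 2r-1$, and Lemma~\ref{lem:isometric_path} gives a multipacking of size at least $\lceil\frac{2r}{3}\rceil\ge\frac{2}{3}r-\frac{11}{3}$.

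\emph{Case $|X_{P,Q}|=1$.} We build $H=H_\gamma(c_0,\alpha,c_t,\delta,c_m,\beta)$ with $c_0=v_i$ and $c_m=w_j$, so that $\alpha=r-i$, $\beta=r'-j$, and the arc $F_2$ has length $m=i+j$. For $R_{\delta+1}$ we take a longest isometric path hanging off the cycle (disjoint from $P',Q'$), and we may take it empty. We split according to the cycle length.

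\emph{Subcase $m\le \lfloor\gamma/2\rfloor$, i.e.\ $F_2$ is a shortest arc.} By Observation~\ref{obs:isometric_graph_path_F_1}, $P'\cup F_2\cup Q'$ is isometric of length $\alpha+m+\beta=r+r'$, which is again at least $2r-1$, and Lemma~\ref{lem:isometric_path} finishes.

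\emph{Subcase $m>\lfloor\gamma/2\rfloor$.} Here the two radial paths wrap the long way around the cycle. We apply Lemma~\ref{lem:gamma/2} with $\alpha_1=\min\{i,\lfloor\gamma/2\rfloor-1\}$ and $\beta_1=\min\{j,\lfloor\gamma/2\rfloor-1\}$, taken along the appropriate directions of the cycle. When $i,j<\lfloor\gamma/2\rfloor$ this yields size about $\frac{\alpha+i+\beta+j}{3}-2\ge\frac{2r-1}{3}-O(1)$.

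The remaining situation is the hard one: $i$ or $j$ exceeds $\gamma/2$, say $i\ge\lfloor\gamma/2\rfloor$, so Choice~1 only recovers about $\frac{2r-i+\gamma/2}{3}$. Here I will use centrality of $c$. Consider the vertex $c'$ on the cycle antipodal to the middle of the arcs. Its eccentricity is at least $r$, so some vertex lies at distance $\ge r$ from it. By the cactus structure and Lemma~\ref{lem:multipacking_subgraph}, that vertex is reached through a pendant isometric path $R_{\delta+1}$ attached at some $c_t$, giving $\delta+d(c',c_t)\ge r$ roughly. If the far vertex were instead on $P'$ or $Q'$, this would contradict the assumed inequalities. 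Then Lemma~\ref{lem:delta1} (Choice~3), or Lemma~\ref{lem:multipacking_delta=0} (Choice~2) when $\delta$ is small, gives size at least $\lfloor\frac{\gamma}{3}\rfloor+\lfloor\frac{\alpha}{3}\rfloor+\lfloor\frac{\delta-\delta_1}{3}\rfloor-1$. Substituting $\delta_1=\lfloor\gamma/2\rfloor-d(c_0,c_t)$ together with the eccentricity bound on $\delta$ should give about $\frac{\gamma/2+\alpha+r-\ldots}{3}\ge\frac{2r}{3}-O(1)$.

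The main obstacle is this last bookkeeping: choosing between the $c_0$-version and the $c_m$-version of Choice~3 depending on which side of the cycle $c_t$ lies, and verifying that the eccentricity condition of the center forces $\delta$ large enough in every configuration. I expect to need a few subcases on the positions of $c_t$ relative to $c_0,c_m$, and the $-\frac{11}{3}$ slack comes from summing at most three floor losses plus the $-1$ or $-2$ in the lemmas.
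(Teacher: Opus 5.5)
Your handling of $|X_{P,Q}|=0$ and of the subcase $m\le\lfloor\gamma/2\rfloor$ is correct and matches the paper. The gap is in the subcase $m>\lfloor\gamma/2\rfloor$.

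\textbf{Choice~1 cannot give what you claim.} Lemma~\ref{lem:gamma/2} requires $\alpha_1\le m-1$ and $\beta_1\le\gamma-1-m=x-1$, where $x=\gamma-m$ is the length of the short arc $F_1$. So the two cycle arcs run around $C_\gamma$ in the same rotational sense, and one of them lies in $F_1$, with length at most $x-1$. You therefore cannot take $\alpha_1\approx i$ and $\beta_1\approx j$ at the same time. What Choice~1 actually yields is roughly $\frac{(\alpha+x)+(\beta+j)}{3}$ or $\frac{(\alpha+i)+(\beta+x)}{3}$, minus a constant. Neither is $\frac{2r}{3}-O(1)$ unless $\alpha+x\ge r$ or $\beta+x\ge r$.

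If you instead run both arcs toward $c$, you are just taking every third vertex of $P\cup Q$. That path is not isometric here, and the set is not a multipacking: for $x=1$, a ball $N_s[c_0]$ meets four branches and contains about $\frac{4s}{3}$ chosen vertices.

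\textbf{Your ``hard case'' is the wrong one.} Because $P$ and $Q$ are isometric, $i=d(c_0,c)\le\lfloor\gamma/2\rfloor$ and $j\le\lfloor\gamma/2\rfloor$ always hold; the paper proves $\max\{x,y,z\}\le\lfloor\gamma/2\rfloor$. So ``$i$ exceeds $\gamma/2$'' never happens. The real difficulty is making the extension through $F_1$ long enough.

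\textbf{The missing idea.} Anchor at the midpoint $c_g$ of $F_1$ and pick $u$ with $d(c_g,u)=r$; such $u$ exists because $e(c_g)\ge\rad(G)$. Then split on where $u$ lies.
\begin{itemize}
\item If $u\in P'$, then $\alpha+x\ge r$. Choice~1 with $P'$ extended through $F_1$ and $Q'$ extended toward $c$ gives $\lfloor\frac{r}{3}\rfloor+\lfloor\frac{r-1}{3}\rfloor-2$. The case $u\in Q'$ is symmetric.
\item If $u\in V(C_\gamma)$, then $\gamma\ge 2r$ and the cycle alone suffices.
\item Only when $u\notin V(H)$ do you need the centrality of $c$. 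It shows that the branch toward $u$ leaves $H$ at an interior vertex $c_t$ of $F_2$. Then use Choice~2 when $x\ge\alpha$, $x\ge\beta$, or $r\le\lfloor\gamma/2\rfloor+\lfloor x/2\rfloor$. Otherwise use Choice~3, taking the $c_0$-version if $z\ge y$ and the $c_m$-version if $z<y$. That last subcase is where $-\frac{11}{3}$ comes from.
\end{itemize}

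\textbf{A false assertion.} You claim a far vertex on $P'$ or $Q'$ ``would contradict the assumed inequalities.'' This fails already for $\gamma=3$, $r'=r$: there $i=j=1=\lfloor\gamma/2\rfloor$, and the vertex farthest from $c_m$ is $v_r\in P'$. Discarding these cases throws away exactly the situations Choice~1 is needed for. Finally, the Choice~3 computation, including $\delta\ge\delta_1$, is left unverified in your proposal.
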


\begin{proofsketch}     First, we present a very brief sketch of the proof for Lemma \ref{lem:multipacking_radius_relation}. Let $\rad(G)=r$ and $c$ be a central vertex of $G$. If $r=0$ or $1$, then $\MP(G)=1$. Therefore, for $r\leq 1$, we have $\MP(G)\geq \frac{2}{3}\rad(G)$. Now assume $r\geq 2$. Since $G$ has radius $r$, there is an isometric path $P$ in $G$ whose one endpoint  is $c$ and $l(P)=r$. Let $Q$ be a  longest isometric path in $G$ whose one endpoint  is $c$ and  $V(P)\cap V(Q)=\{c\}$. Say $l(Q)=r'$. Then $ r-1 \leq r'\leq  r$  by Lemma \ref{lem:disjoint_radial_path}.
    Let $P=(v_0,v_1, \dots , v_{r})$ and  $Q=(w_0,w_1, \dots , w_{r'})$, where $v_0=w_0=c$. From Observation \ref{obs:joining_path_atmost_1}, we know that $|X_{P,Q}|\leq 1$. 

     If  $|X_{P,Q}|=0$, we take the path $P\cup Q$ which is an isometric path of length $r+r'$ and choose every third vertex to the path to construct a multipacking of size at least $\MP(G)\geq \big\lceil\frac{2}{3}\rad(G)\big\rceil$ by Lemma \ref{lem:isometric_path}.

     Suppose $|X_{P,Q}|=1$. We have already discussed in this section that we can find a subgraph $H_\gamma(c_0,\alpha,c_t,\delta,$ $c_m,\beta)$ in $G$ and we use Lemma \ref{lem:gamma/2}, Lemma \ref{lem:multipacking_delta=0} and Lemma \ref{lem:delta1} under several cases to construct a multipacking of size at least $ \frac{2}{3}\rad(G)-\frac{11}{3}$.   
\end{proofsketch}

\begin{proof}[Proof of Lemma \ref{lem:multipacking_radius_relation}] Let $\rad(G)=r$ and $c$ be a central vertex of $G$. If $r=0$ or $1$, then $\MP(G)=1$. Therefore, for $r\leq 1$, we have $\MP(G)\geq \frac{2}{3}\rad(G)$. Now assume $r\geq 2$. Since $G$ has radius $r$, there is an isometric path $P$ in $G$ whose one endpoint  is $c$ and $l(P)=r$. Let $Q$ be a  largest isometric path in $G$ whose one endpoint  is $c$ and  $V(P)\cap V(Q)=\{c\}$. If $l(Q)=r'$, then $ r-1 \leq r'\leq  r$  by Lemma \ref{lem:disjoint_radial_path}.
    Let $P=(v_0,v_1, \dots , v_{r})$ and  $Q=(w_0,w_1, \dots , w_{r'})$ where $v_0=w_0=c$. From Observation \ref{obs:joining_path_atmost_1}, we know that $|X_{P,Q}|\leq 1$.   

\vspace{0.2cm}
\noindent
\textbf{Claim \ref{lem:multipacking_radius_relation}.1. } If  $|X_{P,Q}|=0$, then $\MP(G)\geq \big\lceil\frac{2}{3}\rad(G)\big\rceil$.

\begin{claimproof} Since  $|X_{P,Q}|=0$,  any path in $G$ that joins a vertex of $P$ with a vertex of $Q$ always passes through $c$. Therefore, $P\cup Q$ is an isometric path of length $r+r'$ and $|V(P\cup Q)|=r+r'+1$. By Lemma \ref{lem:isometric_path}, there is a multipacking in $G$ of size $\big\lceil\frac{r+r'+1}{3}\big\rceil$. Since $r'\geq r-1$, $\big\lceil\frac{r+r'+1}{3}\big\rceil \geq \big\lceil\frac{2r}{3}\big\rceil$.  Therefore, $\MP(G)\geq \big\lceil\frac{2r}{3}\big\rceil$.    
\end{claimproof} 

\medskip

Suppose $|X_{P,Q}|=1$. Let $X_{P,Q}=\{(v_i,w_j)\}$. Then $|X_{P,Q}(v_i,w_j)|=1$ by Observation \ref{obs:joining_path_atmost_1}. Let  $F_1\in X_{P,Q}(v_i,w_j)$ and $F_2=(v_i,v_{i-1},v_{i-2},\dots,v_1,v_0,$ $w_1,w_2,\dots,$ $w_{j-1},w_j)$. Therefore, $F_1$ and $F_2$ form a cycle $F_1\cup F_2$ of length  $ l(F_1)+l(F_2)$. Note that $l(F_1)\geq 1$, since $v_i\neq w_j$. Let $\gamma= l(F_1)+l(F_2)$ and  $C_\gamma=(c_0,c_1,c_2,\dots,c_{\gamma-2},$ $c_{\gamma-1},c_0) $ be a cycle of length $\gamma$.  Therefore,  $C_\gamma$ is  isomorphic to $ F_1\cup F_2$. For simplicity, we assume that $C_\gamma = F_1\cup F_2$. So, we can  assume that  $F_2=(c_0,c_1,c_2,\dots,$ $c_{m})$ and $F_1=(c_m,c_{m+1},\dots,c_{\gamma-1}, c_0)$.  
 Since $P$ and $Q$ are isometric paths,  $P'=(v_i,v_{i+1},\dots,v_{r})$ and $Q'=(w_j,w_{j+1},$ $\dots,w_{r'})$ are also isometric paths in $G$.  Therefore $C_\gamma\cup P'\cup Q'$ can be represented as 
 $H_\gamma(c_0,\alpha,c_t,0,c_m,\beta)$ or $ H_\gamma(c_0,\alpha,c_m,\beta)$, where $\alpha =l(P')$ and $\beta=l(Q')$. So, $C_\gamma\cup P'\cup Q'$ is an isometric subgraph of $G$ by Observation \ref{obs:isometric_graph}. Let $H=C_\gamma\cup P'\cup Q'$. For simplicity, we can assume that $P'=P_{\alpha+1}=(a_0,a_1,\dots,a_{\alpha+1})$ and $Q'=Q_{\beta+1}=(b_0,b_1,\dots,b_{\alpha+1})$. (See. Fig. \ref{fig:bfs3} and Fig. \ref{fig:bfs1}) 
 
 \vspace{0.2cm}
\noindent
\textbf{Claim \ref{lem:multipacking_radius_relation}.2. }  If  $|X_{P,Q}|=1$ and $l(F_1)\geq  l(F_2)$, then $\MP(G)\geq \big\lceil\frac{2}{3}\rad(G)\big\rceil$.

\begin{claimproof}
Since $G$ is a cactus and $l(F_1)\geq  l(F_2)$,  
$P\cup Q$ is an isometric path in $G$ by Observation \ref{obs:isometric_graph_path_F_1}.  Note that  $l(P\cup Q)=r+r'$ and $|V(P\cup Q)|=r+r'+1$. By Lemma \ref{lem:isometric_path},  there is a multipacking in $G$ of size $\big\lceil\frac{r+r'+1}{3}\big\rceil$. Now $r'\geq r-1$ $\implies$ $\big\lceil\frac{r+r'+1}{3}\big\rceil \geq \big\lceil\frac{2r}{3}\big\rceil$.  Therefore, $\MP(G)\geq \big\lceil\frac{2r}{3}\big\rceil$.  
\end{claimproof}

 \medskip
 
 Now assume $l(F_1)<  l(F_2)$. Therefore $F_1\cup P'\cup Q'$ is an isometric path by Observation \ref{obs:isometric_graph_path_F_1}. We know $l(F_2)=m$. Let $l(F_1)=x$   and $g=m+\big\lfloor\frac{x}{2}\big\rfloor$. Therefore $c_g$ is a central vertex of the path $F_1$. Let $S_r=\{u\in V(G):d(c_g,u)=r\}$.  Here  $S_r\neq \emptyset$, since $\rad(G)=r$. We know that $c$ is a central vertex of $G$ and  $c \in V(C_\gamma)$; more precisely, $c\in V(F_2)$. Therefore $c=c_k$ for some $k\in \{0,1,2,\dots,m\}$. Let $F_2^1=(c_0,c_1,\dots,c_k)$ and $F_2^2=(c_k,c_{k+1},\dots,c_m)$. Therefore $F_2=F_2^1\cup F_2^2$. Let  $l(F_2^1)=y$ and $l(F_2^2)=z$. Therefore, $F_1\cup F_2^1\cup F_2^2=C_\gamma$ and $x+y+z=\gamma$. Note that  $d(c_g,c_m)=d_H(c_g,c_m)=\big\lfloor\frac{x}{2}\big\rfloor$ and $d(c_g,c_0)=d_H(c_g,c_0)=x-\big\lfloor\frac{x}{2}\big\rfloor=\big\lceil\frac{x}{2}\big\rceil$.  We know that $l(P')=\alpha $ and $l(Q')=\beta$. Therefore $\alpha +y=l(P')+l(F^1_2)=l(P)=r$ and $\beta+z=l(Q')+l(F^2_2)=l(Q)=r'$. We use all these notations in the rest of the proof.

Now we want to show that,  if   $l(F_1)<  l(F_2)$, then $x$, $y$ and $z$ are upper bounded by $\big\lfloor\frac{\gamma}{2}\big\rfloor$. Now   
 $l(F_1)<  l(F_2)$ implies that $ x<y+z$, which implies $ x<\frac{x+y+z}{2}$, hence $ x<\frac{\gamma}{2}$ i.e., $ x\leq \big\lfloor\frac{\gamma}{2}\big\rfloor$.  Since $P$ is an isometric path, $F_2^1$ is a shortest path joining $c_0$ and $c$. Note that, the path  $ F_2^2\cup F_1$  also  joins $c_0$ and $c$. Therefore, $l(F_2^1)\leq l(F_2^2\cup F_1)$, hence $ l(F_2^1)\leq l(F_2^2)+l(F_1)$ and so $ y\leq z+x. $ But then $ y\leq \frac{x+y+z}{2}$ i.e., $ y\leq \big\lfloor\frac{\gamma}{2}\big\rfloor$. Similarly, we can show that  $z\leq \big\lfloor\frac{\gamma}{2}\big\rfloor$, since $F_2^2$ is a shortest path joining $c_0$ and $c$. 
 Therefore, if $l(F_1)<  l(F_2)$ , $\max\{x,y,z\}\leq \big\lfloor\frac{\gamma}{2}\big\rfloor$. 

\vspace{0.2cm}
\noindent
\textbf{Claim \ref{lem:multipacking_radius_relation}.3. }  If  $|X_{P,Q}|=1$, $l(F_1)<  l(F_2)$ and $S_r\cap P'\neq \emptyset $, then $\MP(G)\geq \big\lfloor\frac{2}{3}\rad(G)-\frac{1}{3}\big\rfloor-3$.

\begin{claimproof}
   Let $u\in S_r\cap P'$. Let $\alpha_1=x-1$ and  $\beta_1=z-1$.  Since $F_1\cup P' \cup Q' $ is an isometric path of $G$,   $\alpha+\alpha_1+1=x+\alpha =  d(c_m,v_r)\geq d(c_g,u)=r $. Here $\beta+\beta_1+1=z+\beta=r'\geq r-1$. We have $\alpha_1=x-1\leq \big\lfloor\frac{\gamma}{2}\big\rfloor-1$ and $\beta_1=z-1\leq \big\lfloor\frac{\gamma}{2}\big\rfloor-1$, since $\max\{x,y,z\}\leq \big\lfloor\frac{\gamma}{2}\big\rfloor$. We have shown that $H$ can be represented as 
 $ H_\gamma(c_0,\alpha,c_t,0,c_m,\beta) $. Therefore, there is a multipacking of $G$ of size at least 
  $ \big\lfloor\frac{\alpha+\alpha_1+1}{3}\big\rfloor+\big\lfloor\frac{\beta+\beta_1+1}{3}\big\rfloor -2$ by Lemma \ref{lem:gamma/2}.  
 Here $  \big\lfloor\frac{\alpha+\alpha_1+1}{3}\big\rfloor+\big\lfloor\frac{\beta+\beta_1+1}{3}\big\rfloor -2\geq \big\lfloor\frac{r}{3}\big\rfloor+\big\lfloor\frac{r-1}{3}\big\rfloor -2\geq \big\lfloor\frac{2r-1}{3}\big\rfloor-3$ by Observation \ref{obs:r/3}. 
\end{claimproof}

\vspace{0.2cm}
\noindent
\textbf{Claim \ref{lem:multipacking_radius_relation}.4. }  If  $|X_{P,Q}|=1$, $l(F_1)<  l(F_2)$ and $S_r\cap Q'\neq \emptyset $, then $\MP(G)\geq 2\big\lfloor\frac{\rad(G)}{3}\big\rfloor-2$.

\begin{claimproof}
Let $u\in S_r\cap Q'$. Let $\alpha_1=y-1$ and   $\beta_1=x-1$.  Since $F_1\cup P' \cup Q' $ is an isometric path of $G$, $\beta+\beta_1+1=x+\beta=d(c_0,w_{r'})\geq d(c_g,u)=r $. Moreover, $\alpha+\alpha_1+1=y+\alpha =  r$.  We have $\alpha_1=y-1\leq \big\lfloor\frac{\gamma}{2}\big\rfloor-1$ and $\beta_1=x-1\leq \big\lfloor\frac{\gamma}{2}\big\rfloor-1$, since $\max\{x,y,z\}\leq \big\lfloor\frac{\gamma}{2}\big\rfloor$. We have shown that $H$ can be represented as 
 $ H_\gamma(c_0,\alpha,c_t,0,c_m,\beta) $. Therefore,  there is a multipacking of $G$ of size at least 
  $ \big\lfloor\frac{\alpha+\alpha_1+1}{3}\big\rfloor+\big\lfloor\frac{\beta+\beta_1+1}{3}\big\rfloor -2$ by Lemma \ref{lem:gamma/2}. 
 Here $  \big\lfloor\frac{\alpha+\alpha_1+1}{3}\big\rfloor+\big\lfloor\frac{\beta+\beta_1+1}{3}\big\rfloor -2\geq \big\lfloor\frac{r}{3}\big\rfloor+\big\lfloor\frac{r}{3}\big\rfloor -2\geq 2\big\lfloor\frac{r}{3}\big\rfloor-2$.  
\end{claimproof}

\vspace{0.2cm}
\noindent
\textbf{Claim \ref{lem:multipacking_radius_relation}.5. } If  $|X_{P,Q}|=1$, $l(F_1)<  l(F_2)$ and $S_r\cap C_\gamma \neq \emptyset $, then $\MP(G)\geq \big\lceil\frac{2}{3}\rad(G)\big\rceil$.

\begin{claimproof}
    We know that $ H_\gamma(c_0,0,c_t,0,c_m,0)= C_\gamma$.  Therefore $C_\gamma$ is an isometric subgraph of $G$ by Observation \ref{obs:isometric_graph}. Now $S_r\cap C_\gamma \neq \emptyset \implies r\leq \big\lceil\frac{\gamma}{2}\big\rceil\implies 2r\leq \gamma$. Let $M=\{c_i:0\leq i\leq \gamma-1,i\equiv 0 \text{ (mod $3$)}\}$. Note that, $M$ is a multipacking of $C_\gamma$. Therefore, $M$ is a mutipacking of $G$ by Lemma \ref{lem:multipacking_subgraph}.  Here $|M|=\big\lceil\frac{\gamma}{3}\big\rceil\geq \big\lceil\frac{2r}{3}\big\rceil$, since $2r\leq \gamma$.   
\end{claimproof}

\vspace{0.2cm}
\noindent
\textbf{Claim \ref{lem:multipacking_radius_relation}.6. } If  $|X_{P,Q}|=1$, $l(F_1)<  l(F_2)$ and $S_r\cap V(H) =\emptyset $, then $\MP(G)\geq \frac{2}{3}\rad(G)-\frac{11}{3}$.

\begin{claimproof} Let $u\in S_r$. Therefore, $u\notin V(H)$. We know that $g=m+\big\lfloor\frac{x}{2}\big\rfloor$. Let $R$ be a shortest path joining $c_g$ and $u$. Therefore, $R$ is an isometric path of $G$. Let $R=(u_0,u_1,\dots,u_r)$ where $c_g=u_0$ and $u=u_r$. Suppose $h=\max\{i:u_i\in V(H)\}$.   Let $R'=(u_{h+1},u_{h+2},\dots,u_r)$. 

First, we show that  $u_h\notin V(F_1)$. Suppose $u_h\in V(F_1)$, so $u_h\in \{c_m,c_{m+1},$ $\dots,c_{\gamma-1},c_0\}$. Therefore, $u_h=c_{g'}$ for some $g'\in \{m,m+1,\dots,\gamma-1,0\}$. First consider $u_h\in \{c_{g+1},c_{g+2},\dots,$ $c_{\gamma-1},c_0\}$. We know that $c$ is a central vertex of $G$ and $c=c_k$ for some $k\in \{0,1,2,\dots,m\}$. Suppose $(c_k,c_{k-1},\dots,c_0,$ $c_{\gamma-1},\dots,c_{g'})$ is a shortest path joining $c_k(=c)$ and $c_{g'}(=u_h)$.  We have shown that $H$ is an isometric subgraph of $G$. Therefore, the distance between two vertices in $H$ is the distance in $G$. 
Since $S_r\cap V(H)= \emptyset$, we have $d(c_g,v_r)<r$. Therefore,  $d(c_g,v_r)<r=d(c_g,u)$. Now $ d(c_g,v_r)<d(c_g,u) \implies d(c_g,u_h)+d(u_h,c_0)+d(c_0,v_r)<d(c_g,u_h)+d(u_h,u)\implies d(u_h,c_0)+d(c_0,v_r)<d(u_h,u)\implies d(c_0,v_r)<d(u_h,u)$.   Since $c$ is a central vertex of $G$ and $S_r\cap V(H)= \emptyset$, we have $r\geq d(c,u)=d(c,c_0)+d(c_0,u_h)+d(u_h,u)>d(c,c_0)+d(c_0,u_h)+d(c_0,v_r)=d(c,v_r)+d(c_0,u_h)=r+d(c_0,u_h)$, which implies $ d(c_0,u_h)<0$. This is a contradiction. Now assume $(c_k,c_{k+1},\dots,c_m,c_{m+1},\dots,c_{g'})$ is a shortest path joining $c$ and $u_h$. Since $c$ is a central vertex of $G$ and $S_r\cap V(H)= \emptyset$, we have $r\geq d(c,u)=d(c,c_g)+d(c_g,u)=d(c,c_g)+r$, which implies $ d(c,c_g)=0$, hence $ c=c_g$. Therefore $r=d(c,v_{r})=d(c_g,v_r)$, which implies $ v_r\in S_r\cap V(H)$, which is a contradiction. 
Therefore, $u_h\notin \{c_{g+1},c_{g+2},\dots,c_{\gamma-1},c_0\}$. Similarly we can show that $u_h\notin \{c_{m},c_{m+1},\dots,c_{g-1},c_g\}$. So, $u_h\notin V(F_1)$.

If $u_h\in V(P')$, then $d(c_g,u)=r>d(c_g,v_r)$, which implies $ d(c_g,u_h)+d(u,u_h)>d(c_g,u_h)+d(u_h,v_r)$, hence $d(u,u_h)>d(u_h,v_r)$, and so $ d(u,u_h)+d(c,u_h)>d(u_h,v_r)+d(c,u_h)$, which implies $d(c,u)>d(c,v_r)$, therefore $ d(c,u)>r$, which is a contradiction, since $c$ is a central vertex of the graph $G$ having radius $r$. Therefore, $u_h\notin V(P')$. Similarly, we can show that $u_h\notin V(Q')$.



Therefore $u_h\in V(C_\gamma)\setminus V(F_1)= \{c_1,c_2,\dots,c_{m-1}\}$.  Since $G$ is a cactus, every edge belongs to at most one cycle. This implies that $u_i\in V(C_\gamma)$ for all $0\leq i\leq h$.  Let $u_h=c_t$. 

Suppose $x\geq \alpha $. Then $x+y+z+\beta\geq \alpha +y+z+\beta\geq r+r'\geq 2r-1$.  By Lemma \ref{lem:multipacking_delta=0},  $M'_\gamma(c_m,\beta)$ is a multipacking of $G$ of size at least $\big\lfloor\frac{\gamma}{3}\big\rfloor+\big\lfloor\frac{\beta}{3}\big\rfloor-1$. Therefore  $\big\lfloor\frac{\gamma}{3}\big\rfloor+\big\lfloor\frac{\beta}{3}\big\rfloor-1\geq \frac{\gamma}{3}-\frac{2}{3}+\frac{\beta}{3}-\frac{2}{3}-1=\frac{x+y+z+\beta}{3}-\frac{7}{3}\geq \frac{2r-1}{3}-\frac{7}{3}$. 

Suppose $x\geq \beta$. Then $x+y+z+\alpha \geq \alpha +y+z+\beta\geq r+r'\geq 2r-1$.  By Lemma \ref{lem:multipacking_delta=0},  $M'_\gamma(c_0,\alpha )$ is a multipacking of $G$ of size at least $\big\lfloor\frac{\gamma}{3}\big\rfloor+\big\lfloor\frac{\alpha }{3}\big\rfloor-1$ (See Fig. \ref{fig:multipacking_choice_2}). Therefore  $\big\lfloor\frac{\gamma}{3}\big\rfloor+\big\lfloor\frac{\alpha }{3}\big\rfloor-1\geq \frac{\gamma}{3}-\frac{2}{3}+\frac{\alpha }{3}-\frac{2}{3}-1=\frac{x+y+z+\beta}{3}-\frac{7}{3}\geq \frac{2r-1}{3}-\frac{7}{3}$. 

Now assume $x<\min\{\alpha ,\beta\}$.

Note that $S_r\cap V(H)=\emptyset$ implies that $ S_r\cap V(C_\gamma)=\emptyset$. Therefore, there is no vertex on the cycle $C_\gamma$ which is at  distance $r$ from the vertex $c_g$. Therefore, $r>\big\lfloor\frac{\gamma}{2}\big\rfloor$.

Now we split the remainder of the proof  into two cases.

\vspace{0.22cm} 

 \noindent \textit{\textbf{Case 1: }} $ \big\lfloor\frac{\gamma}{2}\big\rfloor<r\leq \big\lfloor\frac{\gamma}{2}\big\rfloor+\big\lfloor\frac{x}{2}\big\rfloor$.

 Consider the set $M'_\gamma(c_0,\alpha )$. This a multipacking of $G$ of size $\big\lfloor\frac{\gamma}{3}\big\rfloor+\big\lfloor\frac{\alpha }{3}\big\rfloor-1$ by Lemma \ref{lem:multipacking_delta=0}. Now

 \begin{align*}
\big\lfloor\tfrac{\gamma}{3}\big\rfloor
  + \big\lfloor\tfrac{\alpha}{3}\big\rfloor - 1
&\;\ge\;
\big\lfloor\tfrac{\gamma}{3}\big\rfloor
  + \big\lfloor\tfrac{x}{3}\big\rfloor - 1 \\[4pt]
&\;\ge\;
\tfrac{\gamma}{3} - \tfrac{2}{3}
  + \tfrac{x}{3} - \tfrac{2}{3} - 1 \\[4pt]
&\;\ge\;
\tfrac{2}{3}\!\left(\tfrac{\gamma}{2}+\tfrac{x}{2}\right)
  - \tfrac{7}{3} \\[4pt]
&\;\ge\;
\tfrac{2}{3}r - \tfrac{7}{3}.
\end{align*}


\vspace{0.22cm} 

 \noindent \textit{\textbf{Case 2: }} $ \big\lfloor\frac{\gamma}{2}\big\rfloor+\big\lfloor\frac{x}{2}\big\rfloor<r$.

Now consider  
 $\delta=r-d(c_g,c_t)$, $\delta_1=\big\lfloor\frac{\gamma}{2}\big\rfloor-d(c_0,c_t)$ and $\delta_2=\big\lfloor\frac{\gamma}{2}\big\rfloor-d(c_m,c_t)$.  Since $c_0,c_t\in V(C_\gamma)$,  $d(c_0,c_t)\leq \diam(C_\gamma)= \big\lfloor\frac{\gamma}{2}\big\rfloor$. Therefore $\big\lfloor\frac{\gamma}{2}\big\rfloor- d(c_0,c_t)\geq 0\implies \delta_1\geq 0$. Similarly, we can show that $\delta_2\geq 0$. Now $\delta-\delta_1=r-d(c_g,c_t)-\big\lfloor\frac{\gamma}{2}\big\rfloor+d(c_0,c_t)\geq r-d(c_g,c_0)-d(c_0,c_t)-\big\lfloor\frac{\gamma}{2}\big\rfloor+d(c_0,c_t)\geq r-\big\lceil\frac{x}{2}\big\rceil-\big\lfloor\frac{\gamma}{2}\big\rfloor\geq 0$. Therefore $\delta \geq \delta_1$ and $\delta - \delta_1\geq r-\big\lceil\frac{x}{2}\big\rceil-\big\lfloor\frac{\gamma}{2}\big\rfloor\geq 0$. Similarly, we can show that $\delta\geq \delta_2$ and $\delta - \delta_2\geq r-\big\lceil\frac{x}{2}\big\rceil-\big\lfloor\frac{\gamma}{2}\big\rfloor\geq 0$.  Now we are ready to use Lemma \ref{lem:delta1} to find multipacking in $G$.


\newpage

First assume that, $z\geq y$. 
By Lemma \ref{lem:delta1},   $M'_\gamma(c_0,\alpha ,c_t,\delta,\delta_1)$ is a multipacking of $G$ of size at least $\big\lfloor\frac{\gamma}{3}\big\rfloor+\big\lfloor\frac{\alpha }{3}\big\rfloor+ \big\lfloor\frac{\delta-\delta_1}{3}\big\rfloor-1$.  

Now, \begin{align*}
\Big\lfloor\frac{\gamma}{3}\Big\rfloor
 + \Big\lfloor\frac{\alpha}{3}\Big\rfloor
 + \Big\lfloor\frac{\delta-\delta_1}{3}\Big\rfloor - 1
&\ge 
   \frac{\gamma}{3} - \frac{2}{3}
 + \frac{\alpha}{3} - \frac{2}{3}
 + \frac{\delta-\delta_1}{3} - \frac{2}{3} - 1
\\[6pt]
&=
\frac{\gamma}{3}
 + \frac{\alpha}{3}
 + \frac{1}{3}\!\left(r - \Big\lceil\tfrac{x}{2}\Big\rceil - \Big\lfloor\tfrac{\gamma}{2}\Big\rfloor\right)
 - 3
\\[6pt]
&\ge 
\frac{\gamma}{3}
 + \frac{\alpha}{3}
 + \frac{1}{3}\!\left(r - \frac{x}{2} - 1 - \frac{\gamma}{2}\right)
 - 3
\\[6pt]
&=
\frac{1}{3}\!\left( r + \frac{\gamma}{2} - \frac{x}{2} + \alpha \right)
 - \frac{10}{3}
\\[6pt]
&\ge
\frac{1}{3}\!\left( r + \frac{\gamma}{2} - \frac{x}{2} + r - y \right)
 - \frac{10}{3}
\\[6pt]
&\ge
\frac{1}{3}\!\left( 2r + \frac{x+y+z}{2} - \frac{x}{2} - y \right)
 - \frac{10}{3}
\\[6pt]
&\ge
\frac{1}{3}\!\left( 2r + \frac{z - y}{2} \right)
 - \frac{10}{3}
\\[6pt]
&\ge
\frac{2}{3}r - \frac{10}{3}.
\end{align*}


\newpage

Suppose $z<y$. By Lemma \ref{lem:delta1},   $M'_\gamma(c_m,\beta,c_t,\delta,\delta_2)$ is a multipacking of $G$ of size at least $\big\lfloor\frac{\gamma}{3}\big\rfloor+\big\lfloor\frac{\beta}{3}\big\rfloor+ \big\lfloor\frac{\delta-\delta_2}{3}\big\rfloor-1$.  

Now, \begin{align*}
\Big\lfloor\frac{\gamma}{3}\Big\rfloor
 + \Big\lfloor\frac{\beta}{3}\Big\rfloor
 + \Big\lfloor\frac{\delta-\delta_2}{3}\Big\rfloor - 1
&\ge
   \frac{\gamma}{3} - \frac{2}{3}
 + \frac{\beta}{3} - \frac{2}{3}
 + \frac{\delta-\delta_2}{3} - \frac{2}{3}
 - 1
\\[6pt]
&=
\frac{\gamma}{3}
 + \frac{\beta}{3}
 + \frac{1}{3}\!\left( r - \Big\lceil\tfrac{x}{2}\Big\rceil - \Big\lfloor\tfrac{\gamma}{2}\Big\rfloor \right)
 - 3
\\[6pt]
&\ge
\frac{\gamma}{3}
 + \frac{\beta}{3}
 + \frac{1}{3}\!\left( r - \frac{x}{2} - 1 - \frac{\gamma}{2} \right)
 - 3
\\[6pt]
&=
\frac{1}{3}\!\left( r + \frac{\gamma}{2} - \frac{x}{2} + \beta \right)
 - \frac{10}{3}
\\[6pt]
&\ge
\frac{1}{3}\!\left( r + \frac{\gamma}{2} - \frac{x}{2} + r' - z \right)
 - \frac{10}{3}
\\[6pt]
&\ge
\frac{1}{3}\!\left( r + r - 1 + \frac{x+y+z}{2} - \frac{x}{2} - z \right)
 - \frac{10}{3}
\\[6pt]
&\ge
\frac{1}{3}\!\left( 2r + \frac{y - z}{2} \right)
 - \frac{11}{3}
\\[6pt]
&\ge
\frac{2}{3}r - \frac{11}{3}.
\end{align*}


\end{claimproof}

In each case, $G$ has a multipacking of size at least $ \frac{2}{3}\rad(G)-\frac{11}{3}$. Therefore, $\MP(G)\geq \frac{2}{3}\rad(G)-\frac{11}{3}$.  
\end{proof}

\begin{theorem}[\cite{teshima2012broadcasts,erwin2001cost}]  \label{thm:mpGleqgammabG}  If $G$ is a connected graph of order at least 2  having radius $ \rad(G) $, multipacking number  $\MP(G) $, broadcast domination number $ \gamma_{b}(G) $ and domination number $\gamma(G)$, then $$ \MP(G)\leq \gamma_{b}(G) \leq \min\{\gamma (G),\rad(G)\}.$$
\end{theorem}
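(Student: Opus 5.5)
The plan is to prove the two inequalities separately, since they are of different natures: the left one is a weak-duality counting argument, and the right one comes from exhibiting two explicit dominating broadcasts.

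For $\MP(G)\leq \gamma_b(G)$, I would fix a maximum multipacking $M$ and a dominating broadcast $f$ with $\sigma(f)=\gamma_b(G)$, and let $T=\{v\in V: f(v)>0\}$ be its set of towers.
\begin{itemize}
\item Since $f$ is dominating, every vertex of $M$ hears some tower, so $M\subseteq \bigcup_{v\in T} N_{f(v)}[v]$.
\item For each tower $v$, the radius $f(v)$ is an integer at least $1$, so the multipacking condition applies with $r=f(v)$ and gives $|N_{f(v)}[v]\cap M|\leq f(v)$.
\item A union bound then yields
$$\MP(G)=|M|\leq \sum_{v\in T}|N_{f(v)}[v]\cap M|\leq \sum_{v\in T} f(v)=\gamma_b(G).$$
\end{itemize}
This is exactly the integer version of the LP weak-duality inequality $yA\le c$, $Ax\ge \mathbf 1 \Rightarrow y\cdot\mathbf 1\le c\cdot x$ recalled earlier. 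I could present it either way, but the combinatorial version is shorter.

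For $\gamma_b(G)\leq \gamma(G)$, I would take a minimum dominating set $D$ and set $f(v)=1$ for $v\in D$ and $f(v)=0$ otherwise. The values lie in $\{0,1,\dots,\diam(G)\}$ because $\diam(G)\geq 1$, which holds since $G$ is connected of order at least $2$. Every vertex is either in $D$ or adjacent to a vertex of $D$, so it hears a broadcast, and the cost is $|D|=\gamma(G)$.

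For $\gamma_b(G)\leq \rad(G)$, I would take a central vertex $c$ and set $f(c)=\rad(G)$ and $f\equiv 0$ elsewhere. Then $N_{\rad(G)}[c]=V$, so $f$ dominates, and it costs $\rad(G)$. The only check is that this is a legal broadcast: we need $1\leq \rad(G)\leq \diam(G)$, which holds because the order is at least $2$ and the eccentricity of $c$ is at most the diameter. Taking the minimum of the two constructions gives the right-hand inequality.

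No step is a real obstacle. The only point that needs care is the left inequality: the sets $N_{f(v)}[v]$ may overlap, so the argument must go through a union bound (not assume disjointness), and it must use $f(v)\geq 1$ so that the multipacking condition is applicable at radius $f(v)$.
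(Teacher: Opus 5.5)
Your proof is correct and follows essentially the same route the paper relies on. The upper bound $\gamma_b(G)\leq\min\{\gamma(G),\rad(G)\}$ uses the same two explicit broadcasts: unit powers on a minimum dominating set, and a central vertex broadcasting at strength $\rad(G)$. The inequality $\MP(G)\leq\gamma_b(G)$ is the weak-duality relation between the broadcast ILP and its multipacking dual, whose combinatorial form is exactly your union bound; the paper writes this same counting argument out explicitly in the Euclidean setting.
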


\begin{proof}[Proof of Theorem \ref{thm:multipacking_broadcast_relation}]
    We have $\gamma_b(G)\leq \rad(G)$ from Theorem \ref{thm:mpGleqgammabG}. From Lemma 
\ref{lem:multipacking_radius_relation}, we get $\MP(G)\geq \frac{2}{3}\rad(G)-\frac{11}{3}$. Therefore, $ \frac{2}{3}\cdot\gamma_b(G)-\frac{11}{3}\leq \MP(G)\implies \gamma_b(G)\leq \frac{3}{2}\MP(G)+\frac{11}{2}$.
\end{proof}

\section{An approximation algorithm to find multipacking in cactus graphs}\label{sec:approximation_algorithm_to_find_Multipacking}

We provide the following algorithm to approximate multipacking of a cactus graph. This algorithm is a direct implementation of the proof of Theorem \ref{thm:multipacking_broadcast_relation}.

\smallskip
\noindent\textbf{Approximation algorithm: } Since $G$ is a cactus graph,  we can find a central vertex $c$ and  an isometric path $P$ of length $r$ whose one endpoint is $c$ in $O(n)$-time, where $n$ is the number of vertices of $G$. After that, we can find another isometric path $Q$ having length $r-1$ or $r$ whose one endpoint is $c$ and $V(P)\cap V(Q)=\{c\}$.  Then the flow of the proof of 
Lemma  \ref{lem:multipacking_radius_relation} provides an $O(n)$-time algorithm to construct a multipacking of size at least $\frac{2}{3}\rad(G)-\frac{11}{3}$. Thus we can find  a multipacking of $G$ of size at least $\frac{2}{3}\rad(G)-\frac{11}{3}$ in $O(n)$-time.  Moreover,  $\MP(G)\geq \frac{2}{3}\rad(G)-\frac{11}{3}\geq \frac{2}{3}\MP(G)-\frac{11}{3}$ by Theorem \ref{thm:mpGleqgammabG}. This implies the existence of a $(\frac{3}{2}+o(1))$-factor approximation algorithm for
the multipacking problem on cactus graph. Therefore, we have the following.

\MultipackingAlgorithm*

\section{Unboundedness of the gap between broadcast domination and multipacking number of cactus and AT-free graphs}\label{sec:Unboundedness_of_the_gap_between_Broadcast_domination_and_Multipacking}

In this section, we prove that the difference between the broadcast domination number and the multipacking number of the cactus and AT-free graphs can be arbitrarily large.  
 Moreover, we make a connection with the \emph{fractional} versions of the two concepts dominating broadcast and multipacking for cactus and AT-free graphs.

To prove that the difference $ \gamma_{b}(G) -  \MP(G) $ can be arbitrarily large, we construct the graph $G_k$ as follows. Let $A_i=(a_i,b_i,c_i,d_i,e_i,a_i)$ be a 5-cycle  for each  $i=1,2,\dots,3k$. We form $G_k $ by joining $b_i$ to $e_{i+1}$ for each $i=1,2,\dots,3k-1$ (See Fig. \ref{fig:pentagon}). This  graph is a  cactus graph (and AT-free graph). We show that $\MP(G_k)=3k$ and $\gamma_b(G_k)=4k$.

\begin{figure}[ht]
    \centering

\includegraphics[width=\textwidth]{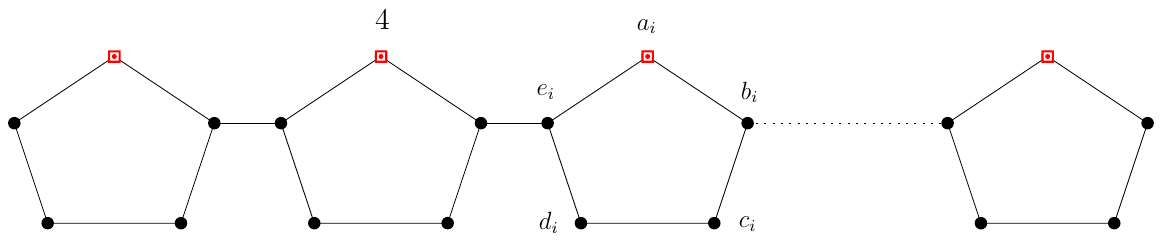}
    \caption{The graph $G_k$  with $\gamma_b(G_k)=4k$ and $\MP(G_k)=3k$. The set $\{a_i:1\leq i\leq 3k\}$ is a maximum multipacking of $G_k$.} 
    \label{fig:pentagon}
\end{figure}

\begin{lemma} \label{lem:mpGk=3k}  For each positive integer $k$, $\MP(G_{k})=3k$.
\end{lemma}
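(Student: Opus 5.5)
The plan is to prove the two inequalities $\MP(G_k)\geq 3k$ and $\MP(G_k)\leq 3k$ separately, mirroring the argument for $F_k$ in Lemma~\ref{lem:chordal_mpFk=3k}.

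\textbf{Lower bound.} We exhibit a long isometric path and apply Lemma~\ref{lem:isometric_path}. In each pentagon $A_i=(a_i,b_i,c_i,d_i,e_i,a_i)$ the entry vertex is $e_i$ (joined to $b_{i-1}$) and the exit vertex is $b_i$ (joined to $e_{i+1}$). These are at distance $2$ via $e_i a_i b_i$, whereas the other route $e_i d_i c_i b_i$ has length $3$. Consider the path
\[
P=(e_1,a_1,b_1,e_2,a_2,b_2,\dots,e_{3k},a_{3k},b_{3k}).
\]
Every edge $b_ie_{i+1}$ is a bridge, so any path between vertices of different pentagons must pass through these bridges. Inside each pentagon, the segment $e_ia_ib_i$ is a shortest route. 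Hence $P$ is isometric. It has $9k$ vertices, so by Lemma~\ref{lem:isometric_path} every third vertex of $P$ forms a multipacking of size $\lceil 9k/3\rceil=3k$. Starting at index $0$ with $e_1$ selects $\{e_i\}$; shifting the start selects $\{a_i\}$ instead, which matches the figure. Either choice gives $\MP(G_k)\geq 3k$.

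\textbf{Upper bound.} Each $A_i$ is a $5$-cycle. Since the bridges create no shortcuts, $A_i$ is an isometric subgraph and so has diameter $2$. By Lemma~\ref{duv>=3}, two vertices of a multipacking are at distance at least $3$. Hence any multipacking contains at most one vertex of each $A_i$. The sets $V(A_i)$ partition $V(G_k)$, so $\MP(G_k)\leq 3k$.

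The only point needing care is the isometry of $P$, specifically that the $5$-cycle detour is never shorter than the chosen segment. The bridge structure makes this a direct local check, so I expect no real obstacle.
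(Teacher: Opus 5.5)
Your proposal is correct and follows the paper's proof essentially step for step. The paper also uses the diametral, hence isometric, path $(e_1,a_1,b_1,\dots,e_{3k},a_{3k},b_{3k})$ with Lemma~\ref{lem:isometric_path} to get $\MP(G_k)\geq 3k$, and uses $\diam(A_i)=2$ to allow at most one multipacking vertex per pentagon for $\MP(G_k)\leq 3k$; the only superfluous step is your appeal to isometry of $A_i$ in the upper bound, since $d_{G_k}(u,v)\leq d_{A_i}(u,v)\leq 2$ holds for any subgraph.
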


\begin{proof} The path $P=(e_1,a_1,b_1,e_2,a_2,b_2,\dots,e_{3k},a_{3k},b_{3k})$ is a diametral path of $G_k$ (Fig.\ref{fig:pentagon}). This implies $P$ is an isometric path of $G_k$ having the length  $l(P)=3\cdot 3k-1$. By Lemma \ref{lem:isometric_path}, the set consisting of every third vertex on this path is a multipacking of size $\big\lceil\frac{3\cdot 3k-1}{3}\big\rceil=3k$. Therefore, $\MP(G_k)\geq 3k$. Note that,  $\diam(A_i)=2$ for each $i$. Therefore, any multipacking of $G_k$ can contain at most one vertex of $A_i$ for each $i$. So, $\MP(G_k)\leq 3k$. Hence $\MP(G_k)= 3k$.
\end{proof}




\noindent\textbf{Fractional multipacking: }
R. C. Brewster and  L. Duchesne \cite{brewster2013broadcast} introduced fractional multipacking in 2013 (also see \cite{teshima2014multipackings}).  Suppose $G$ is a graph with $V(G)=\{v_1,v_2,v_3,\dots,v_n\}$ and $w:V(G)\rightarrow [0,\infty)$  is a function. So, $w(v)$ is a weight on a vertex $v\in V(G)$.  For $S\subseteq V(G)$, let $w(S)=\sum_{u\in S}w(u)$.  We say   $w$ is a \textit{fractional multipacking} of $G$,  if $w( N_r[v])\leq r$ for each vertex $ v \in V(G) $ and for every integer $ r \geq 1 $. The \textit{fractional multipacking number} of $ G $ is the  value $\displaystyle \max_w w(V(G)) $ where $w$ is any fractional multipacking and it
	is denoted by $ \MP_f(G) $. A \textit{maximum fractional multipacking} is a fractional multipacking $w$  of a graph $ G $ such that	$ w(V(G))=\MP_f(G)$. If $w$ is a fractional multipacking, we define   a vector $y$ with the entries $y_j=w(v_j)$.  So,  \begin{center}
	    $\MP_f(G)=\max \{y.\mathbf{1} :  yA\leq c, y_{j}\geq 0\}.$
	\end{center} This is a  linear program which is the dual of the linear program   $\min \{c.x :  Ax\geq \mathbf{1}, x_{i,k}\geq 0\}$. Let,  
	    $\gamma_{b,f}(G)=\min \{c.x : Ax\geq \mathbf{1}, x_{i,k}\geq 0\}.$
	
 Using the strong duality theorem for linear programming, we can say that \begin{center}
     $\MP(G)\leq \MP_f(G)= \gamma_{b,f}(G)\leq \gamma_{b}(G).$
 \end{center}


\begin{figure}[ht]
    \centering
    \includegraphics[width=\textwidth]{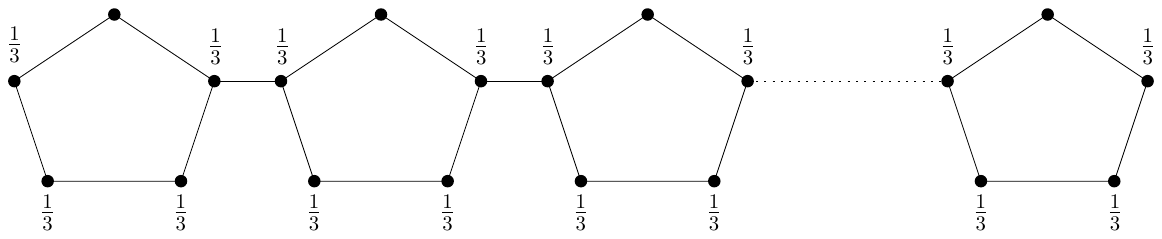}
    \caption{The graph $G_k$ with $\MP_f(G_k)=4k$.}
    \label{fig:pentagon_frac}
\end{figure}

\begin{lemma}  \label{lem:mpfGk=gammabGk=4k}
If $k$ is a positive integer, then $\MP_f(G_{k})=\gamma_b(G_{k})= 4k$.
\end{lemma}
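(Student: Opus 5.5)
The argument is a sandwich, exactly as in Lemma~\ref{lem:chordal_mpfFk=gammabFk=4k} for the chordal family $F_k$. We exhibit a fractional multipacking $w$ of $G_k$ of total weight $4k$ and a dominating broadcast $f$ of cost $4k$. The strong duality chain $\MP_f(G_k)=\gamma_{b,f}(G_k)\leq\gamma_b(G_k)$ then gives
\[
4k\leq \MP_f(G_k)\leq \gamma_b(G_k)\leq 4k,
\]
which forces equality throughout.

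\textbf{Upper bound on $\gamma_b(G_k)$.} The blocks $A_1,\dots,A_{3k}$ occur consecutively, and crossing one block along the spine $e_i,a_i,b_i$ costs $2$ edges plus $1$ connecting edge. So three consecutive blocks $A_{3j-2},A_{3j-1},A_{3j}$ form a piece in which the vertex $a_{3j-1}$ reaches everything within distance $4$:
\begin{itemize}
\item $d(a_{3j-1},b_{3j})=3$ and $c_{3j}$, $d_{3j}$ are within distance $4$;
\item $d(a_{3j-1},e_{3j-2})=4$ and $c_{3j-2}$, $d_{3j-2}$ are within distance $4$.
\end{itemize}
Setting $f(a_i)=4$ for $i\equiv 2 \pmod 3$ and $f=0$ elsewhere therefore dominates $G_k$ at cost $4k$. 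This mirrors the broadcast used for $F_k$. I will check the far pentagon vertices $c$ and $d$ explicitly, since they are the only place the distance estimate could fail.

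\textbf{Lower bound on $\MP_f(G_k)$.} By analogy with the $F_k$ weighting ($1/3$ on the two spine endpoints, $2/3$ on the far vertex), I take, for every $i$,
\[
w(e_i)=w(b_i)=\tfrac13 \quad\text{and}\quad w(c_i)+w(d_i)=\tfrac23,
\]
for instance $w(c_i)=w(d_i)=\tfrac13$, or all the weight $\tfrac23$ on whichever of $c_i,d_i$ makes the radius-$1$ and radius-$2$ checks pass. Each block then carries weight $4/3$, for a total of $4k$.

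\textbf{Checking that $w$ is a fractional multipacking.} I follow the induction template used in Lemmas~\ref{mpfGk>=5k} and~\ref{lem:chordal_mpfFk=gammabFk=4k}:
\begin{enumerate}
\item Verify $w(N_r[v])\leq r$ directly for $r\in\{1,2,3,4\}$, using that each pentagon has diameter $2$ and that consecutive pentagons are joined by a single edge.
\item Show that each annulus $N_{s+4}[v]\setminus N_s[v]$ has weight at most $4$. The graph is path-like with period $3$ edges per block, so increasing the radius by $3$ on each side adds at most about two blocks' worth of weight, roughly $8/3$; increasing by $4$ stays below $4$.
\item Conclude $w(N_{s+4}[v])\leq w(N_s[v])+4\leq s+4$ by induction.
\end{enumerate}

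The main obstacle is the small-radius verification in step~1, in particular $r=1$ and $r=2$ at the vertices $a_i$, $b_i$, $e_i$. For example, $N_1[b_i]$ contains $b_i$, $c_i$, $a_i$ and $e_{i+1}$, which with the equal split has weight $1$, so that case is fine. A radius-$2$ ball centred at $b_i$ meets parts of two pentagons and must stay at weight at most $2$. If the equal split fails somewhere, I will shift the $2/3$ between $c_i$ and $d_i$ to balance these neighbourhoods.
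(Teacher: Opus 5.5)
Your proposal matches the paper's proof. It uses the same strong-duality sandwich, the same broadcast $f(a_i)=4$ for $i\equiv 2 \pmod 3$, and, with the equal split, exactly the paper's weighting $w(b_i)=w(c_i)=w(d_i)=w(e_i)=\tfrac13$, verified by the same direct check for $r\in\{1,2,3,4\}$ followed by the width-$4$ annulus induction. A few small corrections: $d(a_{3j-1},b_{3j})=4$, not $3$, which is still within range. Your fallback of moving all $\tfrac23$ onto $c_i$ or onto $d_i$ would fail, since it gives $w(N_1[b_i])=\tfrac43$ or $w(N_1[e_i])=\tfrac43$ in an interior block, so commit to the equal split. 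Finally, the width-$4$ annuli can have weight exactly $4$, so the bound you need is $\le 4$ rather than strictly below $4$.
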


\begin{proof}  
We define a function $w: V(G_k)\rightarrow [0,\infty)$  where $w(b_{i})=w(c_{i})=w(d_{i})=w(e_{i})=\frac{1}{3}$  for each $i=1,2,3,\dots,3k$ (Fig. \ref{fig:pentagon_frac}).  So, $w(G_k)=4k$. 
 We want to show that $w$ is a fractional multipacking of $G_{k}$. We have to prove that $w(N_r[v])\leq r$ for each vertex $ v \in V(G_{k}) $ and for every integer $ r \geq 1 $. We prove this statement using induction on $r$.	It can be checked that $w(N_r[v])\leq r$ for each vertex $ v \in V(G_{k}) $ and for each  $ r \in \{1,2,3,4\} $. Now assume that the statement is true for $r=s$, we want to prove that it is true for $r=s+4$. Observe that, $w(N_{s+4}[v]\setminus N_{s}[v])\leq 4$, $\forall v\in V(G_{k})$. Therefore,  $w(N_{s+4}[v])\leq w(N_{s}[v])+4\leq s+4$. Hence the statement is true, and $w$ is a fractional multipacking of $G_{k}$. Therefore, $\MP_f(G_{k})\geq 4k$.

Define a broadcast ${f}$ on $G_k$ as ${f}(v)=
    \begin{cases}
        4 & \text{if } v=a_i  \text{ and } i\equiv 2 \text{ (mod $3$)}  \\
        0 & \text{otherwise. }
    \end{cases}$.\\
Here  ${f}$ is an efficient dominating broadcast and $\sum_{v\in V(G_k)}{f}(v)=4k$. So, $\gamma_b(G_k)\leq 4k$, for all $ k\in \mathbb{N}$. By the strong duality theorem, $4k\leq \MP_f(G_k)= \gamma_{b,f}(G_k)\leq \gamma_{b}(G_k)\leq 4k$. Therefore, $\MP_f(G_{k})=\gamma_b(G_{k})= 4k$.  
\end{proof}

Lemma \ref{lem:mpGk=3k}  and  Lemma \ref{lem:mpfGk=gammabGk=4k} imply the following results.

\multipackingbroadcastgapecactus*


\gammabDIFFGmpGcactus*





\begin{corollary} \label{cor:mpfG-mpG}  The difference $\MP_f(G)-\MP(G)$ can be arbitrarily large for cactus graphs (and for AT-free graphs).
\end{corollary}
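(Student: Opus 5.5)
The corollary follows directly from Lemma~\ref{lem:mpGk=3k} and Lemma~\ref{lem:mpfGk=gammabGk=4k}, applied to the family $G_k$ of Figure~\ref{fig:pentagon}. The plan is to compute $\MP_f(G_k)-\MP(G_k)$ exactly and observe that it grows without bound in $k$.

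First I would recall that $G_k$ is a cactus. Each $A_i$ is a $5$-cycle, and the only other edges are the bridges $b_ie_{i+1}$. Hence every edge lies on at most one cycle.

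Second, $G_k$ is AT-free. Any three pairwise nonadjacent vertices either lie in blocks arranged along the chain of pentagons, or two of them lie in a common $C_5$. In the first case, the path between the two outer vertices must pass through the neighbourhood region of the middle one. In the second case, a $C_5$ contains no asteroidal triple, and attaching pendant chains through the single cut vertices $b_i,e_{i+1}$ does not create one. This is the only point needing a short verification, and the statement of Theorem~\ref{thm:3k4k} already asserts it.

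Third, Lemma~\ref{lem:mpfGk=gammabGk=4k} gives $\MP_f(G_k)=4k$, and Lemma~\ref{lem:mpGk=3k} gives $\MP(G_k)=3k$. Therefore $\MP_f(G_k)-\MP(G_k)=k$, which tends to infinity as $k\to\infty$. This proves the claim for cactus graphs and for AT-free graphs simultaneously.

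There is no real obstacle here, since both needed values are already established. The only step with any content is the AT-free claim, and that is part of Theorem~\ref{thm:3k4k}.
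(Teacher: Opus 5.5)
Your proposal is correct and is exactly the paper's argument: Lemma~\ref{lem:mpfGk=gammabGk=4k} and Lemma~\ref{lem:mpGk=3k} give $\MP_f(G_k)-\MP(G_k)=4k-3k=k$ on the family $G_k$, which the paper simply asserts to be a cactus and AT-free. One caution on your AT-free sketch: attaching chains at cut vertices of a $C_5$ does not avoid asteroidal triples in general (pendant vertices at two \emph{adjacent} vertices of a $C_5$ already create one), so your justification should instead use that the attachment vertices $b_i,e_i$ are both adjacent to $a_i$; hence every closed neighbourhood in $A_i$ meets $\{b_i,e_i\}$, and for every nonadjacent pair in $A_i$, one member's closed neighbourhood contains the cut vertex through which the other must leave $A_i$ toward the third vertex.
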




\section{$1/2$ - hyperbolic graphs}\label{sec:1/2-Hyperbolic graphs}

In this section, we show that the family of graphs $G_k$ (Fig. \ref{fig:pentagon}) is $\frac{1}{2}$-hyperbolic using a characterization of $\frac{1}{2}$-hyperbolic graphs. Using this fact, we  show that the difference $ \gamma_{b}(G) -  \MP(G) $ can be arbitrarily large for $\frac{1}{2}$-hyperbolic graphs. 
If the length of a shortest path $P$ between two vertices $x$ and $y$ of a cycle $C$ of $G$ is  smaller than the distance between $x$ and $y$ measured along $C$, then $P$ is called a \textit{bridge} of $C$. In $G$, a cycle $C$ is called \textit{well-bridged}  if for any vertex $x \in C$ there exists a bridge from $x$ to some vertex of $C$ or if the two neighbors of $x$ in $C$ are adjacent.

\begin{figure}[ht]
    \centering
    \includegraphics[height=6cm]{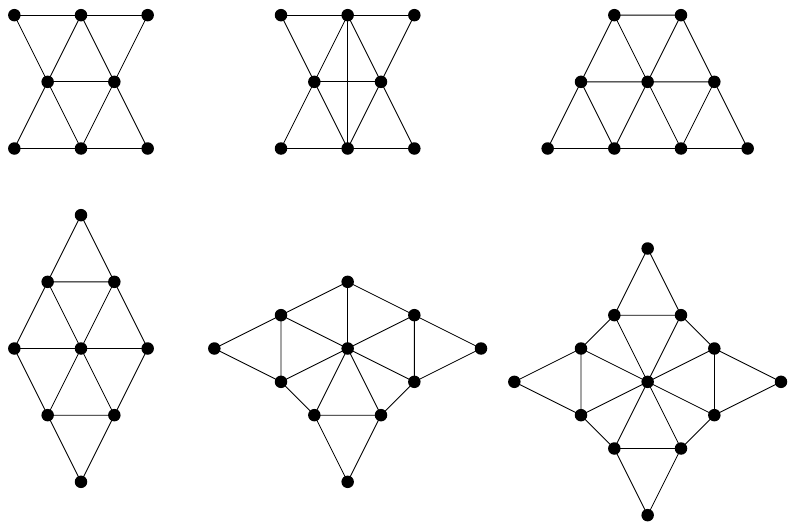}
    \caption{Forbidden isometric subgraphs for $\frac{1}{2}$-hyperbolic graphs} 
    \label{fig:hyperbolic_forbidden_graphs}
\end{figure}

\begin{theorem}[\cite{bandelt20031}] \label{thm:1/2-Hyperbolic_graphs_classification}
    A graph $G$ is $\frac{1}{2}$-hyperbolic if and only if all cycles $C_n$, $n \neq 5$, of $G$ are well-bridged and none of the graphs in Figure \ref{fig:hyperbolic_forbidden_graphs} occur as isometric subgraphs of $G$.
\end{theorem}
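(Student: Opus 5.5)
Since this is the characterization of Bandelt and Chepoi, the paper can simply cite \cite{bandelt20031}. If I were to reprove it, I would prove the two directions separately, working throughout with the four-point quantity $\delta(u,v,x,y)$ from Section~\ref{sec:preliminaries_NPcomplete}.

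\emph{Necessity.} Assume $G$ is $\frac{1}{2}$-hyperbolic.
\begin{itemize}
\item \emph{Forbidden subgraphs.} Each graph in Figure~\ref{fig:hyperbolic_forbidden_graphs} contains a quadruple with $\delta=1$, which is checked directly on each small graph. Since $H$ is an isometric subgraph, $d_H=d_G$ on $V(H)$, so the same quadruple has $\delta=1$ in $G$. This is a contradiction.
\item \emph{Well-bridged cycles.} Suppose some cycle $C=C_n$ with $n\neq 5$ is not well-bridged. Then there is a vertex $x\in C$ whose neighbours $x^-,x^+$ on $C$ are non-adjacent and from which no bridge leaves. The latter means $d_G(x,w)=d_C(x,w)$ for every $w\in C$.
\item \emph{Choosing the quadruple.} Take $x$, $x^-$, $x^+$ and a vertex $y\in C$ antipodal (or nearly antipodal) to $x$. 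Since $x^-x^+\notin E$ and $x$ is a common neighbour, $d(x^-,x^+)=2$. The two larger distance sums then come from $d(x,y)$ and from $d(x^\pm,y)$.
\item \emph{Estimating.} The non-trivial step is bounding $d_G(x^\pm,y)$ from below, since bridges may start at $x^\pm$. I would use that $d(x,y)=\lfloor n/2\rfloor$ and $d(x^\pm,y)\geq d(x,y)-1$ to get $\delta(x,x^-,x^+,y)\geq 1$ when $n\geq 6$ or $n=4$. The case $n=5$ is exactly where the bound collapses to $\frac{1}{2}$, which explains the exception. If the inequality fails for some parity of $n$, I would shift $y$ by one step along $C$.
\end{itemize}

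\emph{Sufficiency.} This is the main obstacle. Assume all cycles other than $C_5$ are well-bridged and no forbidden graph is isometric in $G$.
\begin{itemize}
\item \emph{Minimal counterexample.} Suppose for contradiction $G$ has a quadruple $u,v,x,y$ with $\delta(u,v,x,y)\geq 1$. Choose one minimizing the perimeter $d(u,v)+d(v,x)+d(x,y)+d(y,u)$, with the pairings ordered so that $d(u,x)+d(v,y)$ is the largest sum.
\item \emph{Geodesic quadrilateral.} Join consecutive vertices by geodesics to get a closed walk $Q$.
\item \emph{Shrinking $Q$.} Minimality should force $Q$ to be a cycle with no useful shortcuts. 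If a bridge of $Q$ existed near one of the corners, I would replace a corner by the bridge endpoint. Using the triangle inequality, this gives a quadruple with strictly smaller perimeter and still $\delta\geq 1$, contradicting minimality. This is the standard ``moving a corner'' argument.
\item \emph{Conclusion from well-bridgedness.} Because $Q$ must be well-bridged, shortcuts exist at every vertex. Minimality allows them only in a very restricted form: short chords or length-two bridges near the sides.
\item \emph{Finite case analysis.} The remaining configurations should have bounded side lengths, essentially sides of length at most $2$ or $3$. A finite case analysis on these then yields one of the forbidden graphs as an isometric subgraph.
\end{itemize}

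I expect the hardest part to be the last step of sufficiency. One must show that the surviving minimal configuration, together with the bridges forced by well-bridgedness, is actually isometric in $G$ and is one of the listed graphs. My first approach would be to bound every side of the minimal quadrilateral by $2$ using the corner-moving argument. The case analysis would then reduce to quadruples at mutual distance at most $3$, where $\delta=1$ forces specific patterns of $4$-, $6$- and $7$-vertex subgraphs, which I would match against Figure~\ref{fig:hyperbolic_forbidden_graphs}.
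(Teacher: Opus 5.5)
The paper does not prove this statement. It is quoted from \cite{bandelt20031} and used only as a black box to obtain Lemma~\ref{lem:pentagon_half_hyperbolic}. Judged on its own, your sketch has a genuine gap in each direction.

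\textbf{Necessity of well-bridgedness fails for odd $n$.} Write $n=2D+1$ and let $y,y'$ be the two vertices of $C$ at $C$-distance $D$ from $x$, with $y$ on the $x^+$ side. The absence of bridges at $x$ forces $d(x^+,y)=d(x^-,y')=D-1$. It only gives $d(x^-,y),\,d(x^+,y')\in\{D-1,D\}$. The three sums for $\{x,x^-,x^+,y\}$ are $D+2$, $D$ and $1+d(x^-,y)$. So when $d(x^-,y)=D$ you get only $\delta=\frac12$. The lower bound $d(x^\pm,y)\ge D-1$ that you call the non-trivial step is just the triangle inequality, and bridges at $x^\pm$ only help. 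What you actually need is the \emph{upper} bound $d(x^\pm,y)\le D-1$ for both neighbours. That holds for even $n$, where your argument is fine, but not for odd $n$.

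Already for an isometric $C_7$, both your quadruple and its shift to $y'$ give $\delta=\frac12$. So the claim ``$\delta\ge 1$ for $n\ge 6$'' is false, and it does not explain the $C_5$ exception. In fact no quadruple through $x$ can work in general. Take $C_7=(c_0,\dots,c_6)$ plus the chord $c_2c_5$, with $x=c_0$. The cycle is not well-bridged at $c_0$, and the graph is not $\frac12$-hyperbolic because $c_2c_3c_4c_5$ is an isometric $C_4$. Yet a direct check of all twenty quadruples containing $c_0$ gives $\delta\le\frac12$ for each.

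The repair is a three-case argument:
\begin{itemize}
\item If $d(x^-,y)=D-1$, use $\{x,x^-,x^+,y\}$.
\item If $d(x^+,y')=D-1$, use $\{x,x^-,x^+,y'\}$.
\item Otherwise $\{x^-,x^+,y,y'\}$ has sums $2D$, $2D-2$ and $3$. This gives $\delta=1$ exactly when $D\ge 3$.
\end{itemize}
That last inequality is the real reason $n=5$ is excluded.

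\textbf{Sufficiency is not argued.} This is the substantive direction, and your text gives a plan rather than a proof. The corner-moving step does not follow from the triangle inequality. Each of the three sums contains exactly one distance at the moved corner, so each can change by one in either direction. The gap between the two largest sums can therefore drop by $2$, and a smaller-perimeter quadruple need not keep $\delta\ge 1$. To make the move safe you would have to use well-bridgedness to choose it, and you do not say how. Jumping a corner to a bridge endpoint several steps away is even less controlled.

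The remaining steps are asserted (``should force'', ``essentially'') rather than shown:
\begin{itemize}
\item bounding the sides of a minimal quadrilateral;
\item proving that the surviving configuration is an \emph{isometric} subgraph, which is needed before the forbidden-subgraph hypothesis can be invoked;
\item matching that configuration to the graphs of Figure~\ref{fig:hyperbolic_forbidden_graphs}.
\end{itemize}
These steps are the content of Bandelt and Chepoi's theorem. As it stands, this direction must be taken from \cite{bandelt20031}, as the paper does.
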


Theorem \ref{thm:1/2-Hyperbolic_graphs_classification} yields the following.

\begin{lemma}\label{lem:pentagon_half_hyperbolic}
    The family of graphs $G_k$ (Fig. \ref{fig:pentagon}) is $\frac{1}{2}$-hyperbolic.
    
\end{lemma}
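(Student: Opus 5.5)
We will apply the characterization of Theorem~\ref{thm:1/2-Hyperbolic_graphs_classification} directly to $G_k$. First we record the structure of $G_k$. It is a cactus whose only cycles are the $3k$ pentagons $A_i=(a_i,b_i,c_i,d_i,e_i,a_i)$, joined in a chain by the bridge edges $b_ie_{i+1}$. In particular, every cycle of $G_k$ is some $A_i$, because a cycle cannot use a bridge edge.

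\textbf{Cycle condition.} The theorem only constrains cycles $C_n$ with $n\neq 5$. Since every cycle of $G_k$ has length exactly $5$, the well-bridgedness requirement is vacuous for $G_k$.

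\textbf{Forbidden isometric subgraphs.} It remains to show that none of the graphs in Figure~\ref{fig:hyperbolic_forbidden_graphs} is an isometric subgraph of $G_k$. The key observation is that each of those forbidden graphs is $2$-connected and is not a cycle of length $5$. Each contains either a cycle of length other than $5$, or two cycles sharing an edge (they are built from $C_4$/$C_6$-type pieces with chords, or from two pentagons glued along edges).

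Now suppose $H$ were such a subgraph of $G_k$. Being $2$-connected, $H$ lies inside a single block of $G_k$. The blocks of $G_k$ are the pentagons $A_i$ and the single bridge edges. So $H$ would be a $2$-connected subgraph of some $C_5$, which forces $H=C_5$, a contradiction. Notably, this argument does not even need isometry.

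\textbf{Conclusion and main obstacle.} With both conditions verified, Theorem~\ref{thm:1/2-Hyperbolic_graphs_classification} gives that $G_k$ is $\frac{1}{2}$-hyperbolic. The main obstacle is confirming, from the figure, that every forbidden graph is indeed $2$-connected and different from $C_5$. If some forbidden graph were not $2$-connected, the fallback is to check its blocks individually. In that case each of its blocks must embed in a pentagon or an edge of $G_k$, so it suffices to check that at least one block of each forbidden graph is neither a $C_5$ nor a single edge.
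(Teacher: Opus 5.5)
Your proposal follows the paper's route, invoking the characterization in Theorem~\ref{thm:1/2-Hyperbolic_graphs_classification}, and it supplies the verification the paper leaves implicit. The cycle condition is vacuous because every cycle of $G_k$ is a pentagon, and every block of a subgraph of $G_k$ is a $C_5$ or a single edge. The figure-dependent step you flag is harmless: each forbidden graph must itself fail to be $\tfrac12$-hyperbolic (apply the theorem to it as $G$), whereas any graph whose blocks are pentagons and edges is $\tfrac12$-hyperbolic, since hyperbolicity is the maximum over blocks and $\delta(C_5)=\tfrac12$ --- a standard fact which, incidentally, proves the lemma directly.
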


Lemma \ref{lem:mpGk=3k}, \ref{lem:mpfGk=gammabGk=4k} and \ref{lem:pentagon_half_hyperbolic} imply the following theorem.

\hypermultipackingbroadcastgape*

\hypergammabDIFFGmpG*


From Theorem \ref{thm:1/2-Hyperbolic_graphs} and Lemma \ref{lem:mpfGk=gammabGk=4k}, we have the following.

\begin{corollary} \label{cor:hyper_mpfG-mpG}  The difference $\MP_f(G)-\MP(G)$ can be arbitrarily large for $\frac{1}{2}$-hyperbolic graphs.
\end{corollary}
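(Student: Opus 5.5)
The plan is to apply Theorem~\ref{thm:1/2-Hyperbolic_graphs_classification} directly. Everything rests on describing the cycle structure of $G_k$ precisely. By construction, $G_k$ consists of the $5$-cycles $A_i=(a_i,b_i,c_i,d_i,e_i,a_i)$, $1\leq i\leq 3k$, together with the connecting edges $b_ie_{i+1}$. Each connecting edge $b_ie_{i+1}$ is a cut edge: deleting it separates $A_1\cup\dots\cup A_i$ from $A_{i+1}\cup\dots\cup A_{3k}$. Consequently every cycle of $G_k$ lies inside a single block, and the blocks of $G_k$ are exactly the $3k$ pentagons $A_i$ and the $3k-1$ bridges. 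So the only cycles of $G_k$ are the $A_i$, each of length exactly $5$. This also re-confirms that $G_k$ is a cactus.

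The first condition of Theorem~\ref{thm:1/2-Hyperbolic_graphs_classification} is therefore vacuous. The theorem requires every cycle $C_n$ with $n\neq 5$ to be well-bridged, and $G_k$ has no such cycles.

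For the second condition, suppose one of the graphs $F$ of Figure~\ref{fig:hyperbolic_forbidden_graphs} is an isometric subgraph of $G_k$. Any subgraph of $G_k$ inherits two properties: each of its cycles is one of the pentagons $A_i$, and any two of its cycles share at most one vertex. I would then go through the forbidden graphs one by one and exhibit, in each, either a cycle of length different from $5$ (a $4$-cycle, a $6$-cycle, or a $3$-cycle), or two cycles sharing an edge, that is, a $2$-connected block that is not a single cycle. Either feature is impossible in a subgraph of $G_k$, which gives the contradiction.

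The main obstacle is exactly this case check against the figure. It is routine, but it depends on the specific list of forbidden graphs from \cite{bandelt20031}, each of which has at least one $2$-connected piece richer than a lone pentagon. If the figure-based check turns out to be awkward, I would use a cleaner fallback. Since $G_k$ is the union of its blocks glued along cut vertices and cut edges, for any four vertices one can project each vertex onto a block and reduce the four-point condition to a single block. The hyperbolicity of $G_k$ is then the maximum over its blocks: $0$ for the bridges and $\delta(C_5)=\frac12$ for the pentagons. So $\delta(G_k)=\frac12$ in either case.
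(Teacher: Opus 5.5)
Your argument that $G_k$ is $\frac12$-hyperbolic is correct, and your primary route is the paper's. The paper obtains Lemma~\ref{lem:pentagon_half_hyperbolic} from Theorem~\ref{thm:1/2-Hyperbolic_graphs_classification}, and your observation that the only cycles of $G_k$ are the pentagons $A_i$ (so the well-bridged condition is vacuous) supplies the detail the paper leaves implicit.

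As written, however, the proposal stops there and never mentions $\MP_f$ or $\MP$. So it proves Lemma~\ref{lem:pentagon_half_hyperbolic}, not the corollary. You need to add the closing step the paper uses. By Lemma~\ref{lem:mpfGk=gammabGk=4k}, $\MP_f(G_k)=4k$, and by Lemma~\ref{lem:mpGk=3k} (equivalently Theorem~\ref{thm:1/2-Hyperbolic_graphs}), $\MP(G_k)=3k$. Hence the $\frac12$-hyperbolic graphs $G_k$ satisfy $\MP_f(G_k)-\MP(G_k)=k$, which is unbounded. This is a one-line fix, but the statement is not proved without it.

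Your fallback is a genuinely different and more self-contained route. It computes $\delta(G_k)$ directly from the fact that hyperbolicity is the maximum over blocks, and it gives the exact value $\delta(G_k)=\frac12$ without depending on the forbidden list in Figure~\ref{fig:hyperbolic_forbidden_graphs}. Your projection sketch should treat the two possible splits of four points across a cut vertex separately:
\begin{itemize}
\item In a $3$--$1$ split, all three sums shift by the same amount, so the quadruple reduces to one side.
\item In a $2$--$2$ split, the two cross sums are equal and maximal, so that quadruple contributes $0$; it does not reduce to a single block.
\end{itemize}
The fallback also explains why your figure check must succeed. Each forbidden graph has hyperbolicity exceeding $\frac12$, since otherwise it would contradict the characterization applied to itself. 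Meanwhile, every connected subgraph of $G_k$ has only edges and pentagons as blocks, and so has hyperbolicity at most $\frac12$.
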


\section{Conclusion}\label{sec:conclusion_cactus}

We have shown that the  bound $\gamma_b(G)\leq 2\MP(G)+3$ for general graphs $G$ can be improved to $\gamma_b(G)\leq \frac{3}{2}\MP(G)+\frac{11}{2}$ for cactus graphs. Additionally, we have given a $(\frac{3}{2} + o(1))$-factor approximation algorithm for the multipacking problem on cactus graphs. A natural direction for future research is to investigate whether a polynomial-time algorithm exists for finding a maximum multipacking on cactus graph.

It remains an interesting open problem to determine the best possible value of the expression $\lim_{\MP(G)\to \infty}$ $\sup\{\gamma_{b}(G)/\MP(G)\}$ for cactus graphs. Extending this investigation to other important graph classes could also yield valuable insights into the relationship between broadcast domination and multipacking numbers.

\chapter{Complexity of $r$-Multipacking}\label{chapter:k_multi}\hypertarget{chapter:introhref}{}
\minitoc 

    In this chapter, we prove that, for $r\geq 2$, the \textsc{$r$-Multipacking} problem is \textsc{NP-complete} even for planar bipartite graphs with bounded degree. Furthermore, we have shown that the problem is \textsc{NP-complete} for bounded diameter chordal graphs and bounded diameter bipartite graphs.

    \section{Chapter overview}
    In Section \ref{sec:preliminaries_r_multipacking}, we recall some definitions and notations. In Section \ref{sec:r_multipacking_npc_w}, we prove that, for $r\geq 2$, the \textsc{$r$-Multipacking} problem is \textsc{NP-complete} even for planar bipartite graphs with bounded degree. In Section \ref{sec:r_multipacking_hardness_on_subclasses}, we discuss the hardness of the \textsc{$r$-Multipacking} problem on some sublcasses.
We conclude in Section \ref{sec:conclusion_r_multipacking} by presenting several important questions and future research directions.

\section{Preliminaries}\label{sec:preliminaries_r_multipacking}

 For a graph $G=(V,E)$, $d_G(u,v)$ is the length of a shortest path joining two vertices $u$ and $v$ in  $G$, we simply write $d(u,v)$ when there is no confusion.  $\diam(G):=\max\{d(u,v):u,v\in V(G)\}$. Diameter is a path of $G$  of the length  $\diam(G)$. 
 A ball of radius $r$ around $u$, $N_r[u]:=\{v\in V:d(u,v)\leq r\}$ where $u\in V$. The \textit{eccentricity} $e(w)$  of a vertex $w$ is $\min \{r:N_r[w]=V\}$. The \textit{radius} of the graph $G$ is $\min\{e(w):w\in V\}$, denoted by $\rad(G)$.  The \textit{center} $C(G)$ of the graph $G$  is the set of all vertices of minimum eccentricity, i.e., $C(G):=\{v\in V:e(v)=\rad(G)\}$. Each vertex in the set $C(G)$ is called a \textit{central vertex} of the graph $G$.  If $P$ is a path in $G$, then we say $V(P)$ is the vertex set of the path $P$, $E(P)$ is the edge set of the path $P$, and  $l(P)$ is the length of the path $P$, i.e., $l(P)=|E(P)|$. 

 An \textit{r-multipacking} is a set $ M \subseteq V  $ in a
	graph $ G = (V, E) $ such that   $|N_s[v]\cap M|\leq s$ for each vertex $ v \in V $ and for every integer $s$, $ 1\leq s \leq r $. The \textit{r-multipacking number} of $ G $ is the maximum cardinality of an $r$-multipacking of $ G $ and it is denoted by $ \MP_r(G) $. The \textsc{$r$-Multipacking} problem is as follows:

\medskip
\noindent
\fbox{%
  \begin{minipage}{\dimexpr\linewidth-2\fboxsep-2\fboxrule}
  \textsc{ $r$-Multipacking} problem
  \begin{itemize}[leftmargin=1.5em]
    \item \textbf{Input:} An undirected graph $G = (V, E)$, an integer $k \in \mathbb{N}$.
    \item \textbf{Question:} Does there exist an $r$-multipacking $M \subseteq V$ of $G$ of size at least $k$?
  \end{itemize}
  \end{minipage}%
}
\medskip

\section{Complexity of the \textsc{$r$-Multipacking} problem}\label{sec:r_multipacking_npc_w}

It is known that the \textsc{Independent Set} problem is \textsc{NP-complete} for  planar graphs of maximum degree $3$~\cite{garey1974some}.  We reduce this problem to our problem to show that for $r\geq 2$, the \textsc{$r$-Multipacking} problem is \textsc{NP-complete} even for planar bipartite graphs with maximum degree $\max \{4,r\}$.

\medskip
\noindent
\fbox{%
  \begin{minipage}{\dimexpr\linewidth-2\fboxsep-2\fboxrule}
  \textsc{ Independent Set} problem
  \begin{itemize}[leftmargin=1.5em]
    \item \textbf{Input:} An undirected graph $G = (V, E)$ and an integer $k \in \mathbb{N}$.
    \item \textbf{Question:} Does there exist an \emph{independent set} $S \subseteq V$ of size at least $k$; that is, a set of at least $k$ vertices such that no two vertices in $S$ are adjacent in $G$?
  \end{itemize}
  \end{minipage}%
}
\medskip

Let $(G,k)$ be an instance of the \textsc{Independent Set} problem, where $G$ is a planar graph  with the vertex set $V=\{v_1,v_2,\dots,v_n\}$ having maximum degree $3$. We construct a graph $G'$ in the following way (Fig. \ref{fig:NPC_planar_r_multi} gives an illustration). 



\begin{enumerate}
\item[(i)] For every vertex $v_i\in V(G)$, we introduce a path $P_i=u_i u_i^1 u_i^2 \dots u_i^{r-1}$ of length $r-1$ in $G'$. 

\item[(ii)] If $v_i$ and $v_j$ are different and adjacent in $G$, we join $u_i$ and $u_j$ by a $2$-length path $u_iu_{i,j}u_j$  in $G'$ for every $1\leq i<j\leq n$. 

\item[(iii)] For each vertex $u_{i,j}$ in $G'$, we introduce a vertex $u_{i,j}^0$ and we join $u_{i,j}$ and $u_{i,j}^0$ by an edge. Further, for each vertex $u_{i,j}^0$ in $G'$, we introduce $r-1$ paths $P_{i,j}^t=u_{i,j}^0u_{i,j}^{t,1}u_{i,j}^{t,2}\dots u_{i,j}^{t,r-1}$ for $1\leq t\leq r-1$. Note that, each path $P_{i,j}^t$ has length $r-1$.
\end{enumerate}

\begin{figure}[ht]
    \centering
   \includegraphics[width=\textwidth]{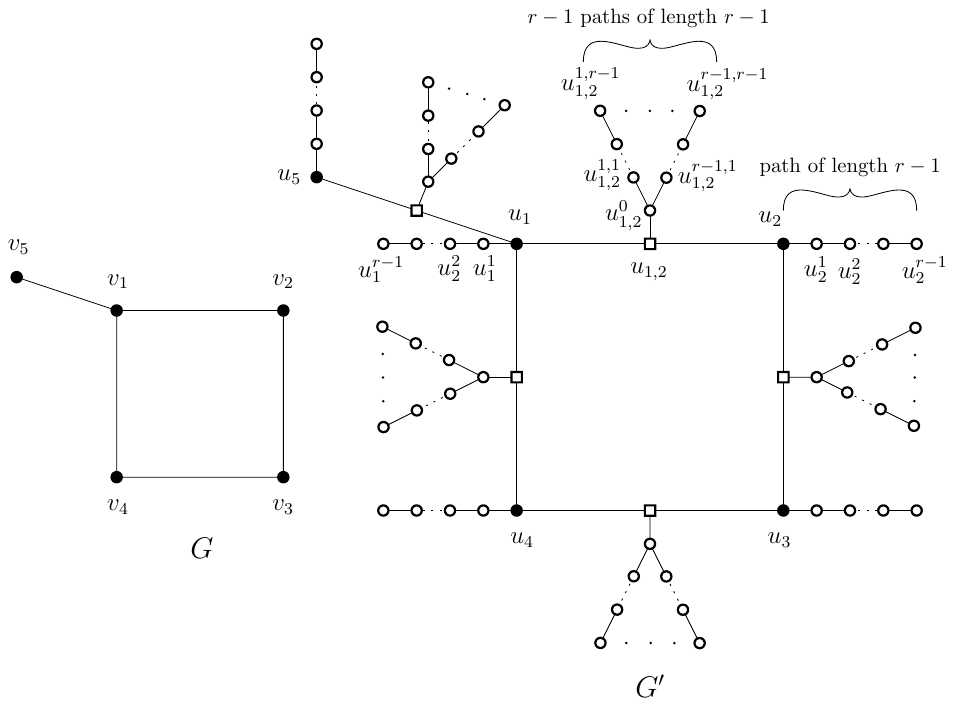}
    \caption{An illustration of the construction of the graph $G'$ used in the proof of Theorem~\ref{thm:NPC_planar_r_multi}, demonstrating the hardness of the \textsc{$r$-Multipacking} problem.}
    \label{fig:NPC_planar_r_multi}
\end{figure}

According to the construction, $G'$ is a planar graphs with maximum degree $\max \{4,r\}$, since $G$ is planar graph of maximum degree $3$. Moreover, every edge of $G$ is subdivided in $G'$, and hence every cycle in $G'$ has even length. Therefore, $G'$ is a bipartite graph.

Let $U_{i,j}=\{u_{i,j}\}\cup \bigcup_{t=1}^{r-1} V(P_{i,j}^t)$ and $S_{i,j}=V(P_i)\cup V(P_j)\cup U_{i,j}$.

Now consider the Algorithm \ref{alg:update_M} that we use in the proofs of Theorem \ref{thm:NPC_planar_r_multi}, Theorem \ref{thm:NPC_chordal_r_multi}, and Theorem \ref{thm:NPC_bipartite_r_multi}.

\begin{algorithm}[H]
\caption{\textsc{Reassign}$(G', M)$: Algorithm for the proofs of Theorems~\ref{thm:NPC_planar_r_multi}, \ref{thm:NPC_chordal_r_multi}, and \ref{thm:NPC_bipartite_r_multi}.}
\label{alg:update_M}

Initialize $M' \gets M$\;
\For{each vertex $u_{i,j} \in V(G')$ where $i < j$}{
    Compute $p \gets |S_{i,j} \cap M'|$\;
    \tcp*[l]{\textcolor{blue}{Note that $|S_{i,j} \cap M| \leq |N_r[u_{i,j}] \cap M| \leq r$.}}
    
    \eIf{$p = r$}{
        \eIf{$V(P_i) \cap M' \neq \emptyset$}{
            $M' \gets M' \setminus S_{i,j}$\;
            $M' \gets M' \cup \{u_i^{r-1}\} \cup \{u_{i,j}^{1,r-1}, u_{i,j}^{2,r-1}, \dots, u_{i,j}^{r-1,r-1}\}$\;
        }{
            \eIf{$V(P_j) \cap M' \neq \emptyset$}{
                $M' \gets M' \setminus S_{i,j}$\;
                $M' \gets M' \cup \{u_j^{r-1}\} \cup \{u_{i,j}^{1,r-1}, u_{i,j}^{2,r-1}, \dots, u_{i,j}^{r-1,r-1}\}$\;
            }{
                \tcp*[l]{\textcolor{blue}{This case cannot arise; otherwise, $|U_{i,j} \cap M| = r$ or $|N_{r-1}[u^0_{i,j}] \cap M| = r$.}}
                $M' \gets M'$\;
            }
        }
    }{
        \tcp*[l]{\textcolor{blue}{$p \not> r$ since $M$ is an $r$-multipacking and $|N_r[u_{i,j}] \cap M| \leq r$. Therefore, in this case, $p< r$.}}
        $M' \gets M' \setminus S_{i,j}$\;
        $M' \gets M' \cup \{u_{i,j}^{1,r-1}, u_{i,j}^{2,r-1}, \dots, u_{i,j}^{p,r-1}\}$\;
    }
}
$M'' \gets M'$\;
\Return $M''$\;

\end{algorithm}

    

\begin{lemma}\label{lem:reaasign}
    If $M$ is an $r$-multipacking of $G'$, then the output of $\textsc{Reassign}(G',M)$ from Algorithm \ref{alg:update_M} is an  $r$-multipacking of $G'$ where the size of the output $|\textsc{Reassign}(G',M)|=|M|$. Moreover, $\textsc{Reassign}(G',M)$ runs in polynomial time over $n$, where $n=|V(G)|$.
\end{lemma}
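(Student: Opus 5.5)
The plan is to prove the three assertions in order: size preservation, the $r$-multipacking property, and the running time. The argument is by induction over the iterations of the for-loop. Let $M'_0=M$, and let $M'_\ell$ be the set after the $\ell$-th processed vertex $u_{i,j}$. I will show that each $M'_\ell$ is an $r$-multipacking with $|M'_\ell|=|M|$. Then $M''$ inherits both properties.

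First, the invariants inside $S_{i,j}$. Each $P_i$ has diameter $r-1$, so an $r$-multipacking contains at most one vertex of each $V(P_i)$. Otherwise two vertices at distance at most $r-1\le 2s$ would both lie in some ball $N_s$ with $s\ge1$; for $r\ge 2$ one can take the midpoint ball of radius $\lceil d/2\rceil$, and if the two vertices are adjacent we use $N_1$. Likewise $U_{i,j}\subseteq N_{r}[u_{i,j}]$ and $U_{i,j}\setminus\{u_{i,j}\}\subseteq N_{r-1}[u^0_{i,j}]$, which gives $|U_{i,j}\cap M'|\le r$, and at most $r-1$ when $u_{i,j}\notin M'$. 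Also $S_{i,j}\subseteq N_r[u_{i,j}]$, so $p\le r$, as the algorithm's comments say.

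\textbf{Case $p=r$.} Suppose $P_i$ meets $M'$. It meets it in exactly one vertex, so $|U_{i,j}\cup V(P_j)|$ contributes $r-1$. The replacement set consists of $u_i^{r-1}$ and the $r-1$ tips $u_{i,j}^{t,r-1}$, which is $r$ vertices, so the size is preserved. If neither path meets $M'$, then all $r$ points lie in $U_{i,j}$, and the comment shows this is impossible. \textbf{Case $p<r$.} We remove $p$ vertices and add $p\le r-1$ tips, so the size is preserved.

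The main step is that the new set remains an $r$-multipacking. The tips $u_{i,j}^{t,r-1}$ are pairwise at distance $2(r-1)$. Every vertex within distance $s\le r$ of a tip lies in $U_{i,j}$, or else in $\{u_i,u_j\}$ when $s=r$. So for a centre $v\in U_{i,j}$, the ball $N_s[v]$ contains at most one tip when $s<r-1$. For $s\in\{r-1,r\}$ we must count the added points against $s$. For centres outside $S_{i,j}$, the tips are farther away than any point of $S_{i,j}$ that was removed. The plan is a \emph{domination argument}: show that for every $v\notin U_{i,j}$ and every $s\le r$,
\[
|N_s[v]\cap \text{(added)}|\le |N_s[v]\cap \text{(removed)}|.
\]

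For the tips this is immediate: a tip within distance $s\le r$ of a vertex $v$ outside $U_{i,j}$ forces $v=u_{i,j}$ with $s\ge r$, or $v$ at distance $r$ through $u_i$ or $u_j$, and in these cases the removed set was also inside the ball. For $u_i^{r-1}$ in the case $p=r$, the removed vertex of $P_i$ is at least as close to everything outside $P_i$, because $u_i^{r-1}$ is the endpoint of a pendant path.

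Balls centred inside $U_{i,j}$, or inside $P_i$ and $P_j$, are checked directly. The hardest subcase, which I would treat first, is $N_r[u_{i,j}]$ and $N_{r-1}[u^0_{i,j}]$ when $p=r$. Here $N_r[u_{i,j}]$ contains the $r-1$ tips and also $u_i^{r-1}$ at distance $r$, giving exactly $r$. We also check that no other vertex of $M'$ from a neighbouring gadget $P_k$ (via $u_i$) falls in these balls. This should follow because other members of $M'$ near $u_i$ would already have violated the bound for $N_r[u_{i,j}]$ with the original $r$ points. I expect this to be the delicate point, and it may require using the order of processing, or that earlier iterations only push points to pendant ends.

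Finally, the running time. There are $O(n)$ edge gadgets, since $G$ has maximum degree $3$. Each iteration touches $O(r^2)$ vertices, so the total time is polynomial in $n$.
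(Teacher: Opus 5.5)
Your induction rests on the invariant that every intermediate set $M'_\ell$ is an $r$-multipacking, and that invariant is false. The failure is in the branch $p<r$, which you only checked for size.

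In that branch, points of $S_{i,j}\cap M'$ lying on $P_i$ or $P_j$ (at distance at least $r+1$ from $u^0_{i,j}$) are replaced by tips at distance $r-1$ from $u^0_{i,j}$. Meanwhile the slack $r-p$ in $N_r[u_{i,j}]$ can be occupied by a vertex outside $S_{i,j}$ that lies within distance $r-1$ of $u^0_{i,j}$, e.g.\ $u_{i,k}$ at distance $3$.

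Concretely, take $r=4$, $E(G)=\{v_1v_2,v_1v_3\}$ and $M=\{u_1^{3},u_2^{3},u_{1,2}^{1,3},u_{1,3}\}$. The pair $u_1^{3},u_{1,3}$ is at distance $4$, the pairs $u_2^{3},u_{1,3}$ and $u_{1,2}^{1,3},u_{1,3}$ are at distance $6$, and all other pairs are at distance $8$. So balls of radius $1,2,3$ contain at most $1,2,2$ points, and $M$ is a $4$-multipacking. Processing $u_{1,2}$ first (the algorithm fixes no order, and this is the lexicographic one) gives $p=3<4$. The else-branch then yields $M'_1=\{u_{1,2}^{1,3},u_{1,2}^{2,3},u_{1,2}^{3,3},u_{1,3}\}$, all at distance $3$ from $u^0_{1,2}$, so $|N_3[u^0_{1,2}]\cap M'_1|=4>3$. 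The next iteration moves $u_{1,3}$ to a tip of $U_{1,3}$ and the final output is fine: the lemma holds, but your invariant does not.

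Your domination inequality is claimed only for centres outside $U_{i,j}$. The centres inside $U_{i,j}$ that you planned to ``check directly'' are exactly where it breaks. Your $p=r$ case is sound, because there $S_{i,j}$ saturates $N_r[u_{i,j}]$.

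Separately, $|V(P_i)\cap M'|\le 1$ is false for $r\ge 4$: two points at distance $3$ sit legally in a ball of radius $2$. Size preservation does not need it, since every branch removes $p$ points and adds $p$.

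What is missing is the paper's idea of arguing only about the final set $M''$, in two steps.

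(1) The iteration for $u_{i,j}$ empties $S_{i,j}\cap M'$ and refills it only with pendant ends ($u_i^{r-1}$, $u_j^{r-1}$, or tips $u_{i,j}^{t,r-1}$). No iteration ever adds anything else. Hence $M''$ consists of pendant ends, apart from isolated paths $P_k$, which are never touched.
\begin{itemize}
\item Pendant ends are pairwise at distance at least $2(r-1)$, with equality only for two tips of a common $U_{i,j}$. Their unique midpoint $u^0_{i,j}$ sees at most $r-1$ tips, which settles every $s\le r-1$.
\item For $s=r$, the only centres that can see more than $r$ ends are the vertices $u_{i,j}$, and $N_r[u_{i,j}]\cap M''\subseteq S_{i,j}$.
\end{itemize}

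(2) The quantity $|S_{i,j}\cap M'|$ never increases during the loop. An iteration for another edge can only empty $V(P_i)$ or replace its contents by $u_i^{r-1}$, and the iteration for $u_{i,j}$ keeps the count equal to $p$. Hence $|S_{i,j}\cap M''|\le|S_{i,j}\cap M|\le|N_r[u_{i,j}]\cap M|\le r$.

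The same monotonicity gives $p\le r$ at every step without any invariant. The third branch is impossible because only its own iteration touches $U_{i,j}$, and $U_{i,j}\subseteq N_{r-1}[u^0_{i,j}]$ (note $d(u^0_{i,j},u_{i,j})=1$), so $|U_{i,j}\cap M'|\le r-1$.
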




\begin{proof}
    Suppose $M$ is an $r$-multipacking of $G'$.  Let $M''=\textsc{Reassign}(G',M)$, from Algorithm \ref{alg:update_M}.  We want to show that $M''$ is an $r$-multipacking of $G'$.

Note that, $d(w_1,w_2)\geq 2(r-1)$ for each pair of vertices $w_1,w_2\in M''$. Therefore, $|N_s[v] \cap M''|\leq 1\leq s$ for any $s$ where $1\leq s\leq r-2$ and any vertex $v\in V(G')$. Moreover, $d(w_1,w_2)= 2(r-1)$ if and only if $w_1,w_2\in U_{i,j}$ for some $i,j$. Then $u^0_{i,j}$ is the only vertex for which $w_1,w_2\in N_{r-1}[u^0_{i,j}]$. In that case,  $|N_{r-1}[u^0_{i,j}] \cap M''|\leq r-1$ according to the Algorithm \ref{alg:update_M}. Therefore, we have $|N_s[v] \cap M''|\leq s$ for any $s$ where $1\leq s\leq r-1$ and any vertex $v\in V(G')$. So, the only thing we have to prove is that $|N_r[v] \cap M''|\leq r$ for any vertex $v\in V(G')$. Suppose there is a vertex $v$ such that $|N_r[v] \cap M''|\geq r+1$. This is true only when $v=u_{i,j}$ for some $i,j$ and $\{u_i^{r-1},u_j^{r-1}\} \cup \{u_{i,j}^{1,r-1}, u_{i,j}^{2,r-1}, \dots, u_{i,j}^{r-1,r-1}\}\subseteq M''$. Then $|S_{i,j} \cap M''|=r+1$. But in every iteration of the Algorithm \ref{alg:update_M}, the value of $|S_{i,j} \cap M'|$ either remains same or decreases, since $V(P_k)\cap M=\emptyset \implies V(P_k)\cap M'=\emptyset$ for any $k$. Therefore, $|S_{i,j} \cap M|\geq r+1$. This implies that $|N_r[u_{i,j}] \cap M|\geq r+1$. This is a contradiction, since $M$ is an $r$-multipacking. Therefore, $|N_r[v] \cap M''|\leq r$ for any vertex $v\in V(G')$. Hence $M''$ is an $r$-multipacking.

Note that, in every iteration of the Algorithm \ref{alg:update_M}, the value of $| M'|$  remains same. Therefore, $|\textsc{Reassign}(G',M)|=|M|$.

The Algorithm \ref{alg:update_M} has only one loop that runs on the number of edges of $G$.  Therefore, $\textsc{Reassign}(G',M)$ runs in polynomial time over $n$.
\end{proof}



\begin{lemma} \label{lem:NPC_planar_r_multi}
$G$ has an independent set of size at least $k$ if and only if $G'$ has an $r$-multipacking of size at least $k+m(r-1)$, where $|E(G)|=m$.
\end{lemma}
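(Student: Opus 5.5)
The plan is to prove the two directions separately, relying on Lemma \ref{lem:reaasign} for the harder one. First a normalisation: we may assume $G$ has no isolated vertices. An isolated vertex lies in some maximum independent set, so we delete it and lower $k$ by one. This matters because for an isolated $v_i$ the path $P_i$ lies in no $S_{i,j}$, and a long path could host several multipacking vertices.

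\textbf{($\Rightarrow$)} Let $I$ be an independent set with $|I|=k$. Take
\[
M=\{u_i^{r-1}: v_i\in I\}\cup\{u_{i,j}^{t,r-1}: v_iv_j\in E(G),\ 1\le t\le r-1\},
\]
so that $|M|=k+m(r-1)$.
\begin{itemize}
\item Any two vertices of $M$ are at distance at least $2(r-1)$. Equality holds only for two leaves of the same $U_{i,j}$, whose midpoint is $u^0_{i,j}$. Hence every ball of radius $s\le r-2$ meets $M$ in at most one vertex.
\item For radius $r-1$, the only balls containing two vertices of $M$ are those centred at some $u^0_{i,j}$. Such a ball contains exactly the $r-1$ leaves of that gadget.
\item For radius $r$, I will do a short case analysis on the position of the centre $v$ (on some $P_i$, at some $u_{i,j}$, or inside some $U_{i,j}$). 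The only centre whose $r$-ball reaches vertices of $M$ from two different gadgets is $u_{i,j}$. Its ball contains the $r-1$ leaves of $U_{i,j}$ together with $u_i^{r-1}$ and $u_j^{r-1}$, all at distance exactly $r$.
\item Since $I$ is independent, at most one of $u_i^{r-1},u_j^{r-1}$ belongs to $M$. So $|N_r[u_{i,j}]\cap M|\le r$.
\end{itemize}

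\textbf{($\Leftarrow$)} Let $M$ be an $r$-multipacking with $|M|\ge k+m(r-1)$, and put $M''=\textsc{Reassign}(G',M)$. By Lemma \ref{lem:reaasign}, $M''$ is an $r$-multipacking with $|M''|=|M|$. Since there are no isolated vertices, $V(G')=\bigcup_{i<j}S_{i,j}$.

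I will then establish the following invariant of Algorithm \ref{alg:update_M}. Once $u_{i,j}$ has been processed, the set $M'\cap S_{i,j}$ consists of at most $r-1$ leaves $u_{i,j}^{t,r-1}$ plus possibly $u_i^{r-1}$ or $u_j^{r-1}$. Later iterations can only delete vertices of $P_i$ or $P_j$, or replace them by $u_i^{r-1}$ or $u_j^{r-1}$. They never add interior path vertices and never touch $U_{i,j}$.

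Hence $M''\subseteq \{u_i^{r-1}\}\cup\{\text{leaves}\}$, with at most $r-1$ leaves per edge. Setting $I=\{v_i: u_i^{r-1}\in M''\}$ we get
\[
|I|\ \ge\ |M''|-m(r-1)\ \ge\ k.
\]
Finally $I$ is independent. If $v_iv_j\in E(G)$ with $u_i^{r-1},u_j^{r-1}\in M''$, then the invariant together with the "$p=r$" branch of the algorithm forces all $r-1$ leaves of $U_{i,j}$ into $M''$. That would give $|N_r[u_{i,j}]\cap M''|=r+1$, contradicting that $M''$ is an $r$-multipacking.

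The main obstacle is this invariant. A single path $P_i$ is shared by all edges at $v_i$, so I must check that a later iteration (edge $v_iv_\ell$) cannot undo the structure established for an earlier edge $v_iv_j$. I would handle this by noting that every later operation on $P_i$ either empties $P_i\cap M'$ or sets it to $\{u_i^{r-1}\}$. Both outcomes preserve the invariant for the earlier edge, and neither increases $|S_{i,j}\cap M'|$. The last point is exactly the monotonicity already used in the proof of Lemma \ref{lem:reaasign}.
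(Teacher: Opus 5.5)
Your proposal is correct and is essentially the paper's proof. It uses the same set $M_S$ for the forward direction, and for the converse the output $M''$ of \textsc{Reassign}, the set $\{v_i : u_i^{r-1}\in M''\}$, the contradiction in $N_r[u_{i,j}]$ for an edge $v_iv_j$, and the count $|M''|\le |I|+m(r-1)$. Your invariant for Algorithm~\ref{alg:update_M} makes explicit a fact the paper uses silently ($M''$ consists only of vertices $u_i^{r-1}$ plus at most $r-1$ leaves per gadget), and your removal of isolated vertices is genuinely necessary: for $r\ge 4$ an isolated $v_i$ makes $P_i$ a separate component holding two multipacking vertices (e.g.\ $u_i$ and $u_i^{3}$), so without this preprocessing the stated equivalence can fail.
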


\begin{proof}
\textsf{(If part)} Suppose $G$ has an independent set $S = \{ v_1, v_2, \dots, v_k \}$. 
We want to show that $M_S =\{ u_1^{r-1}, u_2^{r-1}, \dots, u_k^{r-1} \}\cup \{u_{i,j}^{t,r-1}:v_iv_j\in E(G),i<j, 1\leq t\leq r-1\}$ is an $r$-multipacking in $G'$. Note that the distance between any two vertices of $M_S$ is at least $2(r-1)$. Therefore, $|N_s[v]\cap M_S|\leq 1\leq s$ for each vertex $ v \in V(G') $ and for every integer $ 1\leq s \leq r-2 $. If $M_S$ is not an $r$-multipacking,  there exists some vertex $v \in V(G')$ such that 
$| N_{s}[v] \cap M_S | > s$ for $s=r-1$ or $s=r$. Let $U_{i,j}=\{u_{i,j}\}\cup \bigcup_{t=1}^{r-1} V(P_{i,j}^t)$ and $S_{i,j}=V(P_i)\cup V(P_j)\cup U_{i,j}$.

\vspace{0.22cm}
 \noindent \textit{\textbf{Case 1: }} $v\in V(P_i)$ for some $i$.
 
Then $d(v,w)\geq r+1$ for each $w\in M_S\setminus\{u_i^{r-1}\}$. In that case, $| N_{s}[v] \cap M_S | \leq 1\leq s$ for $s=r-1$ and $s=r$. Therefore, $v\notin V(P_i)$ for any $i$. 

\vspace{0.22cm}
 \noindent \textit{\textbf{Case 2: }} $v\in V(P_{i,j}^t)$ for some $i,j,t$. 

Then $d(v,w)\geq r+1$ for each $w\in M_S\setminus\{u_{i,j}^{t,r-1}:1\leq t\leq r-1\}$. In that case, $| N_{s}[v] \cap M_S | \leq r-1\leq s$ for $s=r-1$ and $s=r$. Therefore, $v\notin V(P_{i,j}^t)$ for any $i,j,t$. 

\vspace{0.22cm}
 \noindent \textit{\textbf{Case 3: }} $v=u_{i,j}$ for some $i$ and $j$.

 Note that, $d(v,w)\geq r+1$ for each $w\in M_S\setminus(\{u_{i,j}^{t,r-1}:1\leq t\leq r-1\}\cup \{u_i^{r-1},u_j^{r-1}\})$. Therefore, $(\{u_{i,j}^{t,r-1}:1\leq t\leq r-1\}\cup \{u_i^{r-1},u_j^{r-1}\})\subseteq N_{r}[v] \cap M_S$, since we assumed that $| N_{r}[v] \cap M_S | \geq r+1$. In that case, $u_i^{r-1},u_j^{r-1}\in M_S$. Therefore, $v_i$ and $v_j$ are adjacent in $G$, which is a contradiction since $S$ is an independent set of $G$. Hence, $v\neq u_{i,j}$ for any $i$ and $j$.

 \noindent Hence, $M_S$ is an $r$-multipacking in $G'$. Note that, $|M_S|=k+m(r-1)$.

\medskip
 \textsf{(Only-if part)} Suppose $G'$ has an $r$-multipacking $M$ of size $k+m(r-1)$. Now we construct an $r$-multipacking $M''$ of the same size using a polynomial time Algorithm \ref{alg:update_M} (See Lemma \ref{lem:reaasign} for the correctness of the Algorithm \ref{alg:update_M}). Let $S=\{v_i\in V(G):u_i^{r-1}\in M''\}$. Now we prove that $S$ is an independent set of size at least $k$. Suppose there exist two vertices $v_i$ and $v_j$ (where $i<j$) in $S$  such that $v_i$ and $v_j$ are adjacent in $G$. Therefore,  $u_i^{r-1}$ and $u_j^{r-1}$ are in $M''$. From Algorithm \ref{alg:update_M}, we can say that $V(P_i) \cap M' \neq \emptyset$. In that case, the algorithm says,  $\{u_{i,j}^{1,r-1}, u_{i,j}^{2,r-1}, \dots, u_{i,j}^{r-1,r-1}\}\subseteq M''$. Therefore, $\{u_i^{r-1},u_j^{r-1}\}\cup\{u_{i,j}^{1,r-1}, u_{i,j}^{2,r-1}, \dots, u_{i,j}^{r-1,r-1}\}\subseteq N_r[u_{i,j}]\cap M''$. Then $|N_r[u_{i,j}]\cap M''|\geq r+1$. This is a contradiction, since $M''$ is an $r$-multipacking by Lemma \ref{lem:reaasign}. Therefore, $S$ is an independent set in $G$. 

Now we show that $|S|\geq k$. Note that, $|U_{i,j}\cap M''|\leq r-1$ for each $u_{i,j}\in V(G')$ where $i<j$. Therefore,

 $\sum_{u_{i,j}\in V(G')} |U_{i,j}\cap M''|\leq m(r-1),$\\
$\implies |\{u_i^{r-1}\in M'':v_i\in S\}| +\sum_{u_{i,j}\in V(G')} |U_{i,j}\cap M''|\leq |\{u_i^{r-1}\in M'':v_i\in S\}| + m(r-1),$\\
$\implies | M''|\leq |\{u_i^{r-1}\in M'':v_i\in S\}| + m(r-1),$\\
$\implies | M|\leq |\{u_i^{r-1}\in M'':v_i\in S\}| + m(r-1),$\\
$\implies k+m(r-1)\leq |\{u_i^{r-1}\in M'':v_i\in S\}| + m(r-1),$\\
$\implies k\leq |\{u_i^{r-1}\in M'':v_i\in S\}|,$\\
$\implies k\leq |S|$.
\end{proof}

Lemma \ref{lem:NPC_planar_r_multi} yields the following.

\NPCplanarkmulti*

\section{Hardness results on some subclasses} \label{sec:r_multipacking_hardness_on_subclasses}

In this section, we present hardness results on some important subclasses.

\NPCchordalkmulti*

\begin{proof}

It is known that the \textsc{Independent Set} problem is \textsc{NP-complete}~\cite{garey1979computers}.  We reduce this problem to our problem to show that \textsc{$r$-Multipacking} problem is \textsc{NP-complete}  for chordal graphs with maximum radius $r+1$.

Let $(G,k)$ be an instance of the \textsc{Independent Set} problem, where $G$ is a graph  with the vertex set $V=\{v_1,v_2,\dots,v_n\}$. We construct a graph $G'$ in the following way (Fig. \ref{fig:NPC_chordal_r_multi} gives an illustration).

\begin{enumerate}
\item[(i)] For every vertex $v_i\in V(G)$, we introduce a path $P_i=u_i u_i^1 u_i^2 \dots u_i^{r-1}$ of length $r-1$ in $G'$. 

\item[(ii)] If $v_i$ and $v_j$ are different and adjacent in $G$, we join $u_i$ and $u_j$ by a $2$-length path $u_iu_{i,j}u_j$  in $G'$ for every $1\leq i<j\leq n$. 

\item[(iii)] For each vertex $u_{i,j}$ in $G'$, we introduce a vertex $u_{i,j}^0$ and we join $u_{i,j}$ and $u_{i,j}^0$ by an edge. Further, for each vertex $u_{i,j}^0$ in $G'$, we introduce $r-1$ paths $P_{i,j}^t=u_{i,j}^0u_{i,j}^{t,1}u_{i,j}^{t,2}\dots u_{i,j}^{t,r-1}$ for $1\leq t\leq r-1$. Note that, each path $P_{i,j}^t$ has length $r-1$.

\item[(iv)] The set of vertices $\mathcal{T}=\{u_{i,j}\in V(G'):v_iv_j\in E(G)\}$ forms a clique in $G'$.
\end{enumerate}

\begin{figure}[ht]
    \centering
\includegraphics[width=\textwidth]{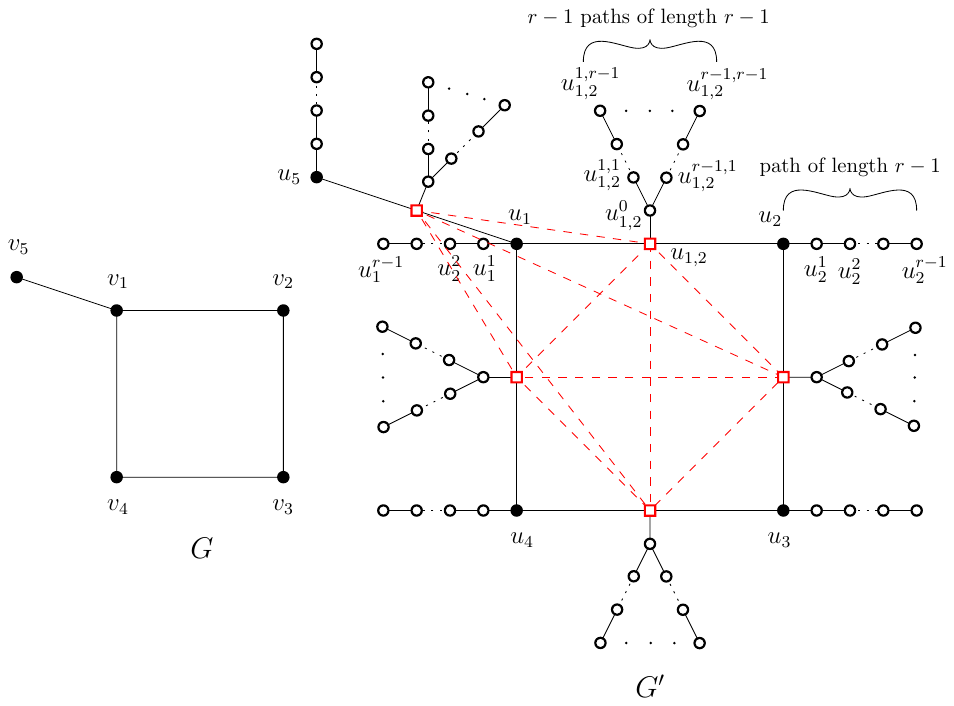}
    \caption{An illustration of the construction of the graph $G'$ used in the proof of Theorem~\ref{thm:NPC_chordal_r_multi}, demonstrating the hardness of the \textsc{$r$-Multipacking} problem.}
    \label{fig:NPC_chordal_r_multi}
\end{figure}

Note that, if $G'$ has a cycle $C=c_0c_1c_2\dots c_{t-1}c_0$ of length at least $4$, then there exists $i$ such that $c_i,c_{i+2 \! \pmod{t}}\in \mathcal{T}$ because no two vertices of the set $\{ u_1, u_2, \dots, u_n \}$ are adjacent in $G'$. Since $\mathcal{T}$ forms a clique, so $c_i$ and $c_{i+2 \! \pmod{t}}$ are endpoints of a chord. Therefore, the graph $G'$ is a chordal graph. Moreover, the eccentricity of any vertex of $\mathcal{T}$ is at most $r+1$. Therefore, $G'$ has radius at most $r+1$. 

\vspace{0.3cm}
\noindent
\textbf{Claim \ref{thm:NPC_chordal_r_multi}.1.} $G$ has an independent set of size at least $k$ if and only if $G'$ has an $r$-multipacking of size at least $k+m(r-1)$.
\begin{claimproof} The proof is completely same as the proof of Lemma \ref{lem:NPC_planar_r_multi}, since the vertices that are at least $r+1$ distance apart from the verties $u_p^{r-1}$ or $u_{i,j}^{t,r-1}$ for each $p,i,j,t$ in the proof of Lemma \ref{lem:NPC_planar_r_multi} remain same in this proof also. Moreover, the Algorithm \ref{alg:update_M} or the function $\textsc{Reassign}(G',M)$ is applicable for the graph $G'$ that we constructed here.
\end{claimproof}

    Hence, for $r\geq 2$, \textsc{$r$-Multipacking} problem is \textsc{NP-complete}  for chordal graphs with maximum radius $r+1$.
\end{proof}

Next, we show that,  for $r\geq 2$, the \textsc{$r$-Multipacking} problem is \textsc{NP-complete} for bipartite graphs with bounded radius.

\NPCbipartitekmulti*

\begin{proof} It is known that the \textsc{Independent Set} problem is \textsc{NP-complete}~\cite{garey1979computers}.  We reduce this problem to our problem to show that \textsc{$r$-Multipacking} problem is \textsc{NP-complete}  for chordal graphs with maximum radius $r+1$.

Let $(G,k)$ be an instance of the \textsc{Independent Set} problem, where $G$ is a graph  with the vertex set $V=\{v_1,v_2,\dots,v_n\}$. We construct a graph $G'$ in the following way (Fig. \ref{fig:NPC_bipartite_r_multi} gives an illustration).

\begin{enumerate}
\item[(i)] For every vertex $v_i\in V(G)$, we introduce a path $P_i=u_i u_i^1 u_i^2 \dots u_i^{r-1}$ of length $r-1$ in $G'$. 

\item[(ii)] If $v_i$ and $v_j$ are different and adjacent in $G$, we join $u_i$ and $u_j$ by a $2$-length path $u_iu_{i,j}u_j$  in $G'$ for every $1\leq i<j\leq n$. 

\item[(iii)] For each vertex $u_{i,j}$ in $G'$, we introduce a vertex $u_{i,j}^0$ and we join $u_{i,j}$ and $u_{i,j}^0$ by an edge. Further, for each vertex $u_{i,j}^0$ in $G'$, we introduce $r-1$ paths $P_{i,j}^t=u_{i,j}^0u_{i,j}^{t,1}u_{i,j}^{t,2}\dots u_{i,j}^{t,r-1}$ for $1\leq t\leq r-1$. Note that, each path $P_{i,j}^t$ has length $r-1$.

\item[(iv)] We introduce a vertex $u$ in $G'$ where $u$ adjacent to each vertex of the set  $\mathcal{T}=\{u_{i,j}\in V(G'):v_iv_j\in E(G)\}$.
\end{enumerate}

\begin{figure}[ht]
    \centering
\includegraphics[width=\textwidth]{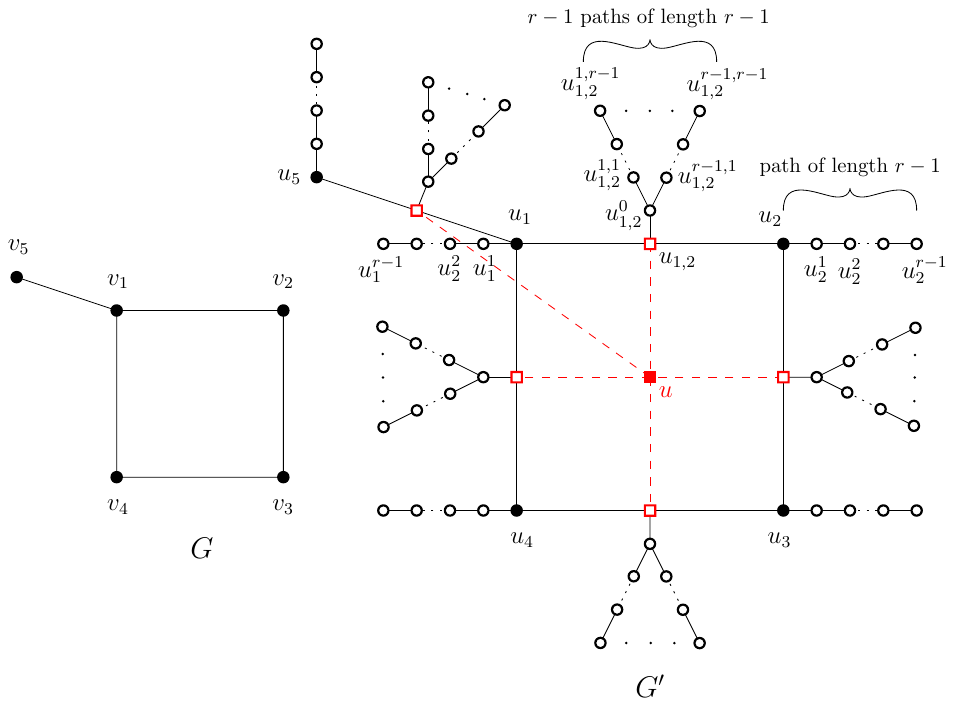}
    \caption{An illustration of the construction of the graph $G'$ used in the proof of Theorem~\ref{thm:NPC_bipartite_r_multi}, demonstrating the hardness of the \textsc{$r$-Multipacking} problem.}
    \label{fig:NPC_bipartite_r_multi}
\end{figure}

Note that, $G'$ is a bipartite graph where the partite sets are $B_1=\{u\}\cup \{u_i:1\leq i\leq n\}\cup \{u_i^j:1\leq i\leq n \text{ and } j\text{ is even} \}\cup \{u^0_{i,j}:u^0_{i,j}\in V(G')\}\cup \{u^{t,p}_{i,j}:u^{t,p}_{i,j}\in V(G'),1\leq t\leq r-1  \text{ and } p\text{ is even}\}$ and $B_2=\mathcal{T} \cup \{u_i^j:1\leq i\leq n \text{ and } j\text{ is odd} \}\cup \{u^{t,p}_{i,j}:u^{t,p}_{i,j}\in V(G'),1\leq t\leq r-1  \text{ and } p\text{ is odd}\}$. Moreover, the eccentricity of  $u$ is at most $r+1$. Therefore, $G'$ has radius at most $r+1$.  

\vspace{0.3cm}
\noindent
\textbf{Claim \ref{thm:NPC_bipartite_r_multi}.1.} $G$ has an independent set of size at least $k$ if and only if $G'$ has an $r$-multipacking of size at least $k+m(r-1)+1$.
\begin{claimproof}  The proof is similar as the proof of Lemma \ref{lem:NPC_planar_r_multi}, since the vertices that are at least $r+1$ distance apart from the verties $u_p^{r-1}$ or $u_{i,j}^{t,r-1}$ for each $p,i,j,t$ in the proof of Lemma \ref{lem:NPC_planar_r_multi} remain same in this proof also. The distance between $u$ and $u_p^{r-1}$ or $u_{i,j}^{t,r-1}$ for each $p,i,j,t$ is $r+1$. Therefore, in the \textsf{"If part"} of this proof, we add $u$ in the $r$-multipacking set $M_S$ (described in the proof of Lemma \ref{lem:NPC_planar_r_multi}), since it has no conflict with the other vertices of $M_S$.  Moreover, the Algorithm \ref{alg:update_M} or the function $\textsc{Reassign}(G',M)$ is applicable for the graph $G'$ that we constructed here. Therefore, after running the Algorithm \ref{alg:update_M}, in the \textsf{"Only-if part"} of this proof, we add $u$ in the $r$-multipacking set $M''$ (described in the proof of Lemma \ref{lem:NPC_planar_r_multi}), since it has no conflict with the other vertices of $M''$. Therefore, $M_S$ and $M''$ both have size $k+m(r-1)+1$.
\end{claimproof}

    Hence, for $r\geq 2$, \textsc{$r$-Multipacking} problem is \textsc{NP-complete}  for bipartite graphs with maximum radius $r+1$.
\end{proof}

\section{Lower Bound under the ETH}

It is known that, unless \textsc{ETH} fails, there is no $2^{o(n+m)}$-time algorithm for the \textsc{Independent Set} problem, where $n$ is the number of vertices and $m$ is the number of edges of the graph~\cite{cygan2015parameterized}. We use this result and the constructions of Theorem \ref{thm:NPC_chordal_r_multi} and \ref{thm:NPC_bipartite_r_multi} to show the following.

\rmultiETH*

\begin{proof}
Given a graph $G$ with $n$ vertices and $m$ edges, we construct a chordal graph $G'$ of maximum radius $r+1$ as described in the proof of Theorem~\ref{thm:NPC_chordal_r_multi}. Since $r$ is fixed for the \textsc{$r$-Multipacking} problem, the construction ensures that $G'$ has $O(n+m)$ vertices and $O(n+m)$ edges. This reduction implies that the existence of a $2^{o(n+m)}$-time algorithm for the \textsc{$r$-Multipacking} problem would yield a $2^{o(n+m)}$-time algorithm for the \textsc{Independent Set} problem. However, such an algorithm for \textsc{Independent Set} contradicts the \textsc{ETH}, as no $2^{o(n+m)}$-time algorithm exists for this problem~\cite{cygan2015parameterized}.

Similarly, we can show that the same result holds for bipartite graphs of maximum radius $r+1$ by using the construction described in the proof of Theorem~\ref{thm:NPC_bipartite_r_multi}.
\end{proof}

\section{Conclusion}
\label{sec:conclusion_r_multipacking}

In this work, we have explored the computational complexity of the \textsc{$r$-Multipacking} problem and established its \textsc{NP}-completeness. Moreover, we have extended these hardness results to several well-studied graph classes, including planar bipartite graphs with bounded degree, chordal graphs with bounded diameter, and bipartite graphs with bounded diameter.

Our findings naturally raise several open questions that deserve further investigation:
\begin{enumerate}
    \item Where does the \textsc{$r$-Multipacking} problem lie within the $W$-hierarchy? Is there an \textsc{FPT} algorithm when parameterized by the solution size?
    
    \item Does an \textsc{FPT} algorithm exist for general graphs when parameterized by treewidth?
    
    
    \item What is the approximation hardness of the \textsc{$r$-Multipacking} problem?
\end{enumerate}

Addressing these questions would advance our understanding of the computational complexity landscape of the \textsc{$r$-Multipacking} problem and could lead to new algorithmic insights.

\chapter{Multipacking in Geometry} \label{chapter:geo_multi}\hypertarget{chapter:introhref}{}

 In this chapter, we study the  multipacking problems for geometric point sets with respect to their Euclidean distances. For $r=1 \text{ and } 2$, we study the problem of computing a maximum $r$-multipacking of the point sets in $\mathbb{R}^2$. We show that, for the point sets in $\mathbb{R}^2$, a maximum $1$-multipacking can be computed in polynomial time but computing a maximum $2$-multipacking is \textsc{NP-hard}. Further, we provide approximation and parameterized solutions to the $2$-multipacking problem in $\mathbb{R}^2$.

\section{Chapter overview}
In Section \ref{sec:Preliminaries_geo_multi}, we recall some definitions and notations. In Section \ref{sec:Multipacking in geometry}, we provide an algorithm to find a maximum $r$-multipacking of a point set on a line and we prove some bounds on the multipacking number on a line. In Section \ref{sec:MP}, we prove that   $\MMP(2)=6$ using some geometric observations.  In Section \ref{sec:1,1/2 Multipacking in a 2D plane},  we have shown that a maximum $1$-multipacking in $\mathbb{R}^2$ can be computed in polynomial time. In Section \ref{sec:2,1/2 Multipacking in a 2D plane}, we study the hardness of the $2$-multipacking problem in $\mathbb{R}^2$. We prove that the problem is \textsc{NP-hard}. Moreover, we provide approximation and parameterized solutions to this problem. We conclude this chapter in Section \ref{sec:conclusion_geo_multi}.

\section{Preliminaries}\label{sec:Preliminaries_geo_multi}
Given a point set $P \subseteq \mathbb{R}^2$ of size $n$, the \emph{neighborhood of size $r$} of a point $v \in P$ is the subset of $P$ that consists of the $r$ nearest points of $v$ and $v$ itself. This subset is denoted by $N_r[v]$. A natural and realistic assumption here is the distances between all pairs of points are distinct. Thus the $r$-th neighbor of every point is unique. This assumption is easy to achieve through some form of controlled perturbation (see \cite{mehlhorn2006reliable}).
For a natural number $r \leq n-1$, an \textit{$r$-multipacking} is a subset $ M $ of $ P  $, such that for each point $ v \in P $ and for every integer $ 1\leq s \leq r  $, $|N_s[v]\cap M|\leq (s+1)/2$, i.e. $M$ contains at most half of the number of elements of $N_s[v]$.
 The \emph{$r$-multipacking number} of $ P $ is the maximum cardinality of an $r$-multipacking of $ P $ and is denoted by $ \MP_r(P) $. An $(n-1)$-multipacking $M$ of $P$ is referred to as a \emph{multipacking} of $P$. The \emph{multipacking number} of $P$, denoted by $\MP(P)$, refers to the cardinality of a maximum multipacking of $P$. Computing maximum multipacking  involves selecting the maximum number of points from the given point set satisfying the constraints of multipacking.

 The function $\MMP_{r}(t)$ is the smallest number such that any point set $P \subset \mathbb{R}^2$ of size $ \MMP_{r}(t)$ admits an $r$-multipacking of size at least $t$. We denote $\MMP_{n-1}(t)$ as $\MMP(t)$, i.e. $\MMP(t)$ is the smallest number such that any point set $P \subset \mathbb{R}^2$ of size $ \MMP(t)$ admits a \emph{multipacking} of size at least $t$. It follows from the definition of multipacking that $\MMP(1)=1$.

\section{Maximum multipacking in $\mathbb{R}^1$}\label{sec:Multipacking in geometry}

    


\subsection{Bound of maximum multipacking in $\mathbb{R}^1$}\label{subsec:Bounds of Multipacking in R1}
We start by bounding the multipacking number of a point set in the following lemma.

\begin{lemma}
\label{lem:mp 1D lower bound n/3 upper bound n/2}
    Let $P $ be a point set on $\mathbb{R}^1$. Then $\lfloor\frac{n}{3}\rfloor\leq \MP(P)\leq \lfloor\frac{n}{2}\rfloor$. 
\end{lemma}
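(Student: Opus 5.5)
The upper bound follows from the case $s=1$ alone, and the lower bound from an explicit construction. Sort the points as $p_1<p_2<\dots<p_n$ on the line. All pairwise distances are distinct, so for every $v$ the set $N_1[v]$ is $\{v,\mathrm{nn}(v)\}$, where $\mathrm{nn}(v)$ is the nearest neighbour of $v$. The multipacking condition with $s=1$ says $|N_1[v]\cap M|\le 1$.

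For the upper bound, consider the nearest-neighbour graph, with edges $v\,\mathrm{nn}(v)$. Every vertex has an incident edge, so we can take a set of vertex-disjoint stars covering $P$. Since $v$ and $\mathrm{nn}(v)$ cannot both lie in $M$, each star contributes at most one point to $M$ if it is an edge, and at most $(\text{size}-1)$ if it is larger. That is too crude, so it is cleaner to argue directly. Call $(v,\mathrm{nn}(v))$ a forbidden pair. On a line, the nearest-neighbour graph is a union of paths in consecutive order, and every component has at least $2$ vertices. I would instead use the $s$-conditions collectively: apply $|N_s[v]\cap M|\le (s+1)/2$ to $v=p_1$ with $s=n-1$. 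Then $N_{n-1}[p_1]=P$, so $|M|\le n/2$, hence $|M|\le\lfloor n/2\rfloor$. This is immediate and needs no graph argument.

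For the lower bound, I would take $M=\{p_2,p_5,p_8,\dots\}$, that is, the middle point of each consecutive triple $(p_{3j+1},p_{3j+2},p_{3j+3})$ for $0\le j<\lfloor n/3\rfloor$. This set has size $\lfloor n/3\rfloor$. The key claim is that every neighbourhood $N_s[v]$ on a line is a set of $s+1$ \emph{consecutive} points in sorted order, since a ball intersected with the line is an interval. Among any $s+1$ consecutive points, the number lying in $M$ is at most $\lceil (s+1)/3\rceil$, because $M$ consists of every third point. It remains to check $\lceil (s+1)/3\rceil\le (s+1)/2$ for all $s\ge1$. This holds for $s+1=2$ (value $1\le 1$) and $s+1=3$ (value $1\le 1.5$), and in general because $\lceil t/3\rceil\le (t+2)/3\le t/2$ once $t\ge 4$, with $t=2,3$ checked directly.

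The only step needing care is the interval claim together with the window count. For $s=1$, the window $\{v,\mathrm{nn}(v)\}$ consists of adjacent points, and two members of $M$ are never adjacent since they are three apart, so such a window holds at most one point of $M$. A window of $t$ consecutive points contains at most $\lceil t/3\rceil$ points of an every-third-point set. This completes the plan; tightness is deferred to the theorem's constructions.
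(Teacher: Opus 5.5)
Your proof is correct and follows the same route as the paper. The upper bound comes from applying the condition to $N_{n-1}[v]=P$, which the paper simply says ``follows from the definition''; the lower bound takes every third point in sorted order and uses that each $N_s[v]$ is a window of $s+1$ consecutive points. The only differences are cosmetic: the paper uses $\{p_1,p_4,p_7,\dots\}$ (which in fact has $\lceil n/3\rceil\ge\lfloor n/3\rfloor$ points) where you use the middle points $\{p_2,p_5,\dots\}$ of size exactly $\lfloor n/3\rfloor$, and the abandoned nearest-neighbour-graph detour in your upper-bound paragraph can simply be deleted.
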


\begin{proof}
    In order to show the lower bound, we show that for any point set on $\mathbb{R}^1$ of size $n$ we can always find a multipacking of size $\lfloor\frac{n}{3}\rfloor$. Let $P=\{p_1, p_2, \dots, p_n\}$ be $n$ points on $\mathbb{R}^1$ sorted with respect to the ascending order of their $x$-coordinates. Construct a set $M$ as such, $M=\{p_1, p_4, p_7, \dots, p_{3k+1} \}$ (where $k \in \mathbb{N}$ and $3k+1 \leq n < 3(k+1) + 1$) is a multipacking of $P$. Note that $M$ is a multipacking since for any $1 \leq s \leq n-1$ and $p_i\in P$, $N_s[p_i]$ contains at most $\lfloor\frac{s+1}{2}\rfloor$ points from $M$. This shows the lower bound of the lemma.  The upper bound follows from the definition of multipacking. 
\end{proof}




\noindent \textbf{Tight bound example for the upper bound:} For the upper bound of Lemma \ref{lem:mp 1D lower bound n/3 upper bound n/2}, we consider the point set $P = \{ p_1, \ldots , p_n \}$ where, $$p_i=
    \begin{cases}
        0 & \text{if $i=0$ }\\
        \frac{4}{3}(2^{i-1}-1)-\frac{i-1}{2} & \text{if $i$ is odd} \\
        \frac{4}{3}(2^{i}-1)-\frac{i}{2} - 2^{i-1} +1 & \text{if $i$ is even, $i>0$}
    \end{cases}$$

Note that  $p_i$'s in $P$ are in ascending order. From definition of this example, $p_{2k+2}-p_{2k+1}>p_{2k+1}-p_{1}$ and $p_{2k+2}-p_{2k+1}>p_{2k+3}-p_{2k+2}$ for each $k$. From this, we conclude that the set $M = \{ p_1, p_4, p_6, p_8, \ldots \}$ is a maximum multipacking and $|M| = \lfloor\frac{n}{2}\rfloor$.


\vspace{0.25 cm}
\noindent \textbf{Tight bound example for the lower bound: }
For the lower bound of Lemma \ref{lem:mp 1D lower bound n/3 upper bound n/2}, we consider the point set $P = \{ p_1, \ldots , p_n \}$ where $p_i = 2^i$. 

Note that $p_{i-1}-p_1 < p_i - p_{i-1}$ holds for each $p_i$ in $P$. From this, we conclude that the set $M = \{p_1, p_4, p_7, \ldots \}$ is a maximum  multipacking with $|M| = \lfloor\frac{n}{3}\rfloor$.


Combining these arguments together with Lemma \ref{lem:mp 1D lower bound n/3 upper bound n/2} we have the following theorem.

\thmmponeDtightlowerupperbound*


\subsection{Algorithm to find a maximum multipacking in $\mathbb{R}^1$}\label{subsec:AlgoR1}

We devise algorithm \ref{algo:2_mp} that checks possible violations for every member of $P$ and selects the eligible candidates as multipacking accordingly. It takes a point set $P \subset \mathbb{R}^1$ sorted in ascending order with respect to their $x$-coordinates  as input and outputs a maximum multipacking $M$ where $M \subseteq P$.

\medskip
\RestyleAlgo{ruled}    
\begin{algorithm}[H]
    \SetAlgoLined
    \KwIn{A point set $P$ in $\mathbb{R}$ and $M \subseteq P$ and the value of $r$.}
    \KwOut{\textit{if} $M$ is a valid multipacking on $P$ \textit{then} return $1$,\ \textit{else} return $0$}
    \For{$p_i$}{
      \For{$s=1 \rightarrow r$}{
      \eIf{$|N_s[p_i] \cap M| \leq (s+1)/2$}{continue;}{return $0$;}
      }
       
    }
    return $1$\;

 \caption{Check $r$-multipacking - $CMP_{r}(P,M)$}
 \label{algo:chk_2_mp}
\end{algorithm}
\medskip
Algorithm \ref{algo:chk_2_mp} ($CMP_{r}(P,M)$) is designed to check the eligibility of a member of $P$ as a point in multipacking and executed as a subroutine in each iteration of \cref{algo:2_mp}. $CMP_{r}(P,M)$(Algorithm \ref{algo:chk_2_mp}) takes a point set $M \subseteq P$ as input and outputs $1$ if $M$ is a multipacking and $0$ if not. 
Given a point set $P = \{ p_1, p_2, \hdots , p_n \}$, at the $i^{\text{th}}$ iteration, $p_i$ is included in $M$ and $CMP_{r}(P,M)$ is called as a subroutine. Depending on the eligibility check done by this subroutine $M$ is formed. The algorithm terminates when $P$ is exhausted and returns $M$ as output. 

\medskip
\RestyleAlgo{ruled}    
\begin{algorithm}[H]
    \SetAlgoLined
    \KwIn{A point set $P$ in $\mathbb{R}$ and the value of $r$.}
    \KwOut{A maximum multipacking $M$ of $P$.}
    $M=\phi$\; $i=1$\;
    \For{$p_i$}{
        $M = M \cup \{p_i\}$\;
       \If{$CMP_{r}(P,M) ==0$}{$M = M \setminus \{p_i\}$\;}
       $i++$\;
    }
    return $M$\;
 \caption{Maximum $r$-multipacking}
 \label{algo:2_mp}
\end{algorithm}



\rkthmonedcorrectness*

\begin{proof}
 We prove the result by showing the correctness of \cref{algo:2_mp} that returns a maximum $r$-multipacking of the given point set $P$ as output in $O(n^2r)$ time. For simplicity, we just say multipacking in place of $r$-multipacking in this proof. Let $P=\{a_1,a_2,\dots,a_n\}$ where $a_{i}<a_{i+1}$ for each $i$. Let $M$ be a solution of \cref{algo:2_mp}. According to the algorithm, we can say that $M$ is a multipacking. Let us assume that $M$ is not a maximum multipacking. Let $M'$ denote a maximum multipacking of the point set $P$. To compare $M$ and $M'$, we consider the members of $M$ and $M'$ sorted in ascending order with respect to their $x$-coordinate. Let $M=\{a_{g_1},a_{g_2},\dots,a_{g_m}\}$ and $M'=\{a_{h_1},a_{h_2},\dots,a_{h_{m'}}\}$ where $a_{g_i}<a_{g_{i+1}}$ and $a_{h_j}<a_{h_{j+1}}$ for each $i$ and $j$. By assumption, $|M|<|M'|$ or $m<m'$. Let $t=\min\{i:a_{g_i}\neq a_{h_{i}}, 1\leq i\leq m\}$. If $a_{g_t}> a_{h_{t}}$, then $\{a_{h_1},a_{h_2},\dots,a_{h_{t}}\}$ is not a multipacking of $P$ since \cref{algo:2_mp} allocates multipacking for every possible point from left to right of $P$. Therefore, $a_{g_t}< a_{h_{t}}$. Let $M''=\{a_{g_1},a_{g_2},\dots,a_{g_t},a_{h_{t+1}},a_{h_{t+2}},\dots,a_{h_{m'}}\}$  (See Fig. \ref{algo:2_mp}). 

 \begin{figure}[htbp]
    \centering
    \includegraphics[width=0.92\textwidth]{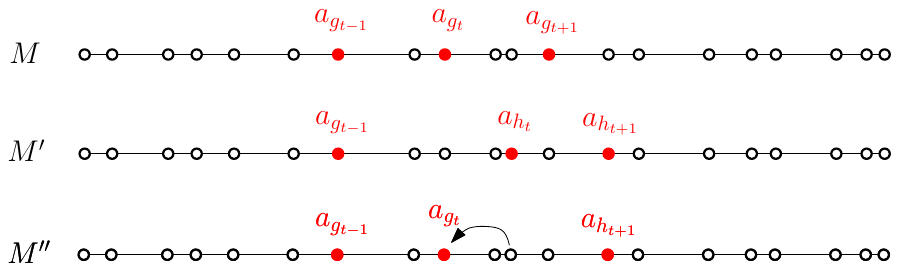}
    \caption{Correctness proof of algorithm \ref{algo:2_mp}.}
    \label{fig:MMM}
\end{figure}

 \vspace{0.2cm}
\noindent
\textbf{Claim \ref{thm:rk1D}.1. } $M''$ is a multipacking of $P$.

\begin{claimproof} We want to prove that $M''$ is a multipacking using the fact saying that $M$ and $M'$ are multipackings. Observe that, $N_s[a_i]$ is a list of consecutive points of $P$ for each $ a_i \in P $. Now fix a point $a_i\in P$. First consider $a_{g_t},a_{h_t}\in N_s[a_i]$ for some $s$. Then $|N_s[a_i]\cap M''|=|N_s[a_i]\cap M'|\leq (s+1)/2$. Suppose $a_{g_t}\in N_s[a_i]$ but $a_{h_t}\notin N_s[a_i]$. Then $N_s[a_i]\cap M''=N_s[a_i]\cap M$, this implies $|N_s[a_i]\cap M''|=|N_s[a_i]\cap M|\leq (s+1)/2$. If $a_{g_t}\notin N_s[a_i]$ but $a_{h_t}\in N_s[a_i]$, then $|N_s[a_i]\cap M''|<|N_s[a_i]\cap M'|\leq (s+1)/2$. Now we are left with only one case which says that $a_{g_t},a_{h_t}\notin N_s[a_i]$. In that case $N_s[a_i]\cap M''=N_s[a_i]\cap M'$, this implies $|N_s[a_i]\cap M''|=|N_s[a_i]\cap M'|\leq (s+1)/2$. Therefore, $|N_s[a_i]\cap M''|\leq (s+1)/2$ for every point $a_i\in P$ and for every integer $ 1\leq s \leq r$.   
\end{claimproof}

Since $|M''|=m'$, therefore $M''$ is a maximum multipacking of $P$. Similarly we can show that $\{a_{g_1},a_{g_2},\dots,$ $a_{g_t},a_{g_{t+1}},a_{h_{t+2}},\dots,a_{h_{m'}}\}$ is also a maximum multipacking using the fact: $M$ and $M''$ are multipackings. We can keep on doing the same method to arrive at the maximum multipacking $M_1=\{a_{g_1},a_{g_2},\dots,a_{g_{m-1}},a_{g_{m}},$ $a_{h_{m+1}},\dots,a_{h_{m'}}\}$. But \cref{algo:2_mp} ensures that one cannot add more points to the set $\{a_{g_1},a_{g_2},\dots,a_{g_{m-1}},a_{g_{m}}\}$ to construct a larger multipacking. Therefore, $|M_1|=|M|$, this implies $m=m'$ which is a contradiction. Therefore, \cref{algo:2_mp} returns a maximum $r$-multipacking.

    Now we analyze the running time of the \cref{algo:2_mp}. It is clear that the algorithm calls the subroutine $CMP_{r}(P,M)$ (\cref{algo:chk_2_mp}) for $n$ times. Further, the subroutine $CMP_{r}(P,M)$ (\cref{algo:chk_2_mp}) takes $O(nr)$ time to check the correctness of being multipacking for a set. Thus \cref{algo:2_mp} takes $O(n^2r)$ time to return a solution.  
 \end{proof}

It is worth mentioning that, computing the maximum multipacking set of a point set in $\mathbb{R}^2$ relates with the notion of independent set (which we show later in this chapter) which becomes \textsc{NP-hard} in general. Thus it is important to study a bound for the multipacking number in $\mathbb{R}^2$. Next, we address the bound on the size of $P$ to achieve a multipacking of a desired size irrespective of the distribution of a set of points in $\mathbb{R}^2$.

\section{Minimum number of points to achieve a desired multipacking size in $\mathbb{R}^2$}
\label{sec:MP}

In this section, we study the $\MMP$ function. Recall the definition of the same. $\MMP_{r}(t)$ is the smallest number such that any point set $P \subset \mathbb{R}^2$ with $|P|= \MMP_{r}(t)$ admits an $r$-multipacking of size at least $t$. As stated earlier $\MMP_{n-1}(t)$ is denoted by $\MMP(t)$ and $\MMP(1)=1$. Now our goal is to find the value of $\MMP(2)$, which we state in the following theorem.




\begin{figure}[htbp]
\centering%
\begin{minipage}[t]{0.50\textwidth}
	\centering
	\includegraphics[width = .50\textwidth]{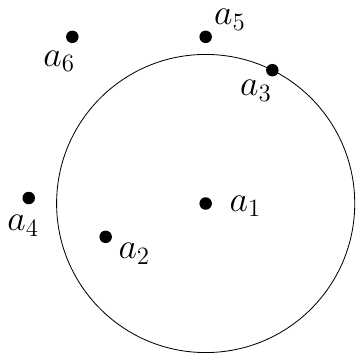}
	\caption{Any $6$ points always have $\MP(P) \geq 2$.}
	\label{fig:C_2[a_1]}
\end{minipage}%
\hfill%
\begin{minipage}[t]{0.5\textwidth}
	\centering
	\includegraphics[width = .65\textwidth]{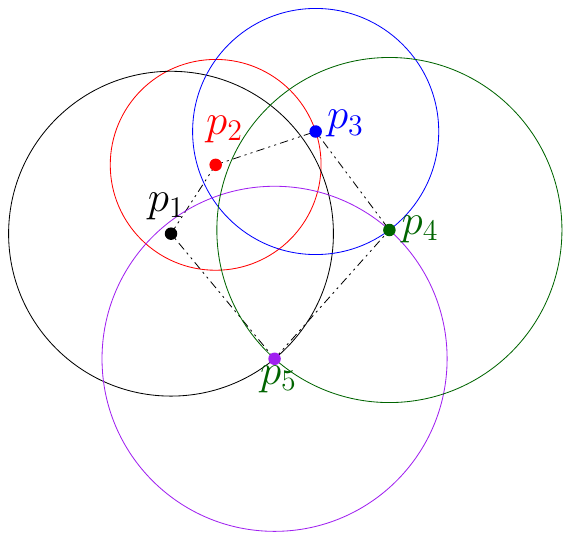}
	\caption{$5$ points with $\MP(P) =1$.}
	\label{fig:mp2}
\end{minipage}%
\end{figure}

\MPtwo*

\begin{proof}
    We prove this using contradiction. Suppose $\MMP(2)>6$. That means there is a point set $P=\{a_1,a_2,\dots,a_6\}\subset \mathbb{R}^2$ such that $\MP(P)=1$. This implies, for each $i,j(i\neq j)$ there exists $k$ such that $\{a_i, a_j\}\subset N_2[a_k]$, otherwise $\{a_i, a_j\}$ can form a multipacking of size $2$ which is not allowed. If $a_i$ and $a_j$ are the $1^{\text{st}}$ and $2^{\text{nd}}$ neighbor of $a_k$ respectively, we denote $C_2[a_k]$ to be the circle with center at $a_k$ passing through $a_j$. Without loss of generality we can assume that $C_2[a_1]$ is the circle with the largest radius among all the circles $C_2[a_k]$, $k\in \{1,2,\dots,6\}$ and moreover $a_2$ and $a_3$ are the $1^{\text{st}}$ and $2^{\text{nd}}$ neighbor of $a_1$ (See Fig. \ref{fig:C_2[a_1]}). Observe that, $a_1\notin N_2[a_k]$ for each $k\in \{4,5,6\}$, since $C_2[a_1]$ is the circle with largest radius. Therefore, each pair among the three pairs $\{a_1, a_4\}$, $\{a_1, a_5\}$ and $\{a_1, a_6\}$ is a subset of either $N_2[a_2]$ or $N_2[a_3]$. 
    By Pigeonhole Principle, there are at least two pairs among the three pairs $\{a_1, a_4\}$, $\{a_1, a_5\}$ and $\{a_1, a_6\}$ which are subsets of either $N_2[a_2]$ or $N_2[a_3]$, but both $N_2[a_2]$ and $N_2[a_3]$ have size $3$ which is a contradiction of $\MP(P)=1$.
    Therefore,  $\MMP(2)\leq 6$. 
    
    Furthermore, we have point sets of sizes $3,4$ and $5$ (See Fig. \ref{fig:mp2}) that have multipacking number $1$. We depict the point set $P$ in Fig. \ref{fig:mp2} forming a convex pentagon. Every point $p_i \in P$ has $p_{i-1}$ and $p_{i+1 (mod~ 5)}$ as its two nearest neighbors. As a result no two points can be present in the multipacking set of $P$. The correctness of the figure can be followed from the fact that each circle (marked with different colours) containing a point $p_i \in P$ as center and the second nearest neighbor of $p_i$ on its boundary, has only one other point (namely the first nearest neighbor of $p_i$) in the interior. This example shows $\MMP(2) > 5$ implying, $\MMP(2)= 6$.
    \end{proof}

\section{$1$-multipacking in $\mathbb{R}^2$}
\label{sec:1,1/2 Multipacking in a 2D plane} 

We note that Algorithm \ref{algo:2_mp} in Subsection \ref{subsec:AlgoR1}  cannot be adapted for a point set $P \subseteq \mathbb{R}^2$. Since we  cannot  ensure the sorted order of the points with respect to their $x$ or $y$ coordinate, the maximum cardinality of the output  cannot  be guaranteed. Thus it is interesting to address $r$-multipacking while $r < n-1$. 


The NNG (Nearest-Neighbor-Graph) of $P$ is a forest when considered as an undirected graph with $2$-cycles at the leaves. Then, a maximum $1$-multipacking for $P$ can be computed in polynomial time by computing the maximum independent set of NNG of $P$ \cite{eppstein1997nearest}.

We write this observation as the following lemma.

\begin{lemma}
\label{lem:1,1/2 MP 1}
    Let $P$ be a point set in $\mathbb{R}^2$ and the $NNG$ of $P$ be $G$. The maximum independent sets of $G$ are the maximum $1$-multipackings of $P$.
\end{lemma}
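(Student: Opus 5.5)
The plan is to show that the $1$-multipacking condition on $P$ coincides exactly with the independence condition in the nearest-neighbor graph $G$. Since the two families of feasible sets are then identical, their maximum-cardinality members are identical as well. Throughout we use the standing assumption that all pairwise distances are distinct, so each point $v$ has a unique nearest neighbor $\mathrm{nn}(v)$ and $N_1[v]=\{v,\mathrm{nn}(v)\}$. Here $G$ is the undirected graph with edge set $\{v\,\mathrm{nn}(v) : v\in P\}$; the $2$-cycles arising from mutual nearest neighbors collapse to single edges.

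\textbf{Step 1: a $1$-multipacking is independent in $G$.} Unwinding the definition with $s=1$, a set $M\subseteq P$ is a $1$-multipacking iff $|N_1[v]\cap M|\le 1$ for every $v\in P$. In other words, for no $v$ do both $v$ and $\mathrm{nn}(v)$ lie in $M$. Now suppose $M$ is a $1$-multipacking and $uw$ is an edge of $G$ with $u,w\in M$. By definition of $G$, either $w=\mathrm{nn}(u)$ or $u=\mathrm{nn}(w)$. In the first case $N_1[u]=\{u,w\}\subseteq M$, and in the second case $N_1[w]=\{u,w\}\subseteq M$. Either way some closed $1$-neighborhood contains two points of $M$, which contradicts the condition. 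Hence $M$ is independent in $G$.

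\textbf{Step 2: an independent set of $G$ is a $1$-multipacking.} Let $I$ be independent in $G$ and take any $v\in P$. If $|N_1[v]\cap I|=2$, then both $v$ and $\mathrm{nn}(v)$ lie in $I$. But $v\,\mathrm{nn}(v)$ is an edge of $G$, so this contradicts independence. Hence $|N_1[v]\cap I|\le 1$ for every $v$, and $I$ is a $1$-multipacking.

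\textbf{Step 3: conclusion and complexity.} Steps 1 and 2 together say that the $1$-multipackings of $P$ are exactly the independent sets of $G$. Therefore the maximum ones coincide, and $\MP_1(P)$ equals the independence number of $G$. For the algorithmic consequence, recall that $G$ is a forest, apart from the doubled edges at mutual nearest-neighbor pairs, which collapse in the undirected simple graph. A maximum independent set of a forest can be found in linear time by repeatedly taking leaves or by tree DP, and $G$ itself can be built in $O(n\log n)$ time. I do not expect a real obstacle here. The only point needing care is the convention for $G$: we must use undirected edges $v\,\mathrm{nn}(v)$ and merge mutual pairs, and distinct distances must be assumed so that $\mathrm{nn}$ is well defined.
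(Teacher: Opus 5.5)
Your proof is correct and is exactly the argument the paper has in mind. The paper records this as an observation: the $s=1$ condition $|N_1[v]\cap M|\le 1$ forbids a point together with its nearest neighbor, which is independence in the NNG, and the forest structure of the NNG then gives a polynomial-time algorithm; you make both inclusions explicit where the paper leaves them to the reader.
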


\section{$2$-multipacking in $\mathbb{R}^2$}
\label{sec:2,1/2 Multipacking in a 2D plane}
 
Next we study the $2$-multipacking problem  in $\mathbb{R}^2$. We are interested in studying the hardness of finding a maximum $2$-multipacking of a given set of points $P$ in the plane. 

 We construct a graph $G_P=(V,E)$ by taking the point set $P$ as its vertex set, that is, $V=P$. For every vertex $v \in V$, we introduce three edges $vu_1,vu_2,$ and $u_1u_2$  where $u_1,u_2 \in V$ are the first and second neighbors of $v$ respectively. The following lemma shows the relation between the independent set of $G_P$ and a $2$-multipacking of $P$.

\begin{lemma}
    \label{lem:graph}
    A set $M$ is a $2$-multipacking of $P$ if and only if the vertices in the graph $G_P$ corresponding to the members of $M$ form an independent set in $G_P$.
\end{lemma}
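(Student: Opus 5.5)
The plan is to prove both directions by unpacking the definition of a $2$-multipacking. The constraints for $s=1$ and $s=2$ translate into forbidden pairs, and these forbidden pairs are exactly the edges of $G_P$.

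First I record what the constraints say. For $s=1$ and any $v\in P$, we have $|N_1[v]|=2$ and the bound is $|N_1[v]\cap M|\le 1$, so $M$ cannot contain both $v$ and its first neighbor $u_1$. For $s=2$, we have $|N_2[v]|=3$ and the bound is $|N_2[v]\cap M|\le \lfloor 3/2\rfloor=1$ (sizes are integers), so $M$ contains at most one of $v,u_1,u_2$. Hence $M$ is a $2$-multipacking if and only if, for every $v\in P$, no two of $v,u_1(v),u_2(v)$ lie in $M$. The $s=1$ condition is implied by the $s=2$ condition, since $N_1[v]\subseteq N_2[v]$.

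Next I match this with the edge set. By construction $E(G_P)=\bigcup_{v\in P}\{vu_1(v),\,vu_2(v),\,u_1(v)u_2(v)\}$. These three edges are exactly the three pairs inside $N_2[v]$, so $G_P[N_2[v]]$ is a triangle for every $v$. A set $M$ is independent in $G_P$ if and only if it contains no edge, that is, if and only if it contains no two vertices of any such triangle. This is equivalent to $|N_2[v]\cap M|\le 1$ for all $v$.

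Combining the two steps proves both directions. Suppose $M$ is a $2$-multipacking. Any edge of $G_P$ lies inside some $N_2[v]$, so its two endpoints cannot both be in $M$, and $M$ is independent. Conversely, if $M$ is independent, each triangle $N_2[v]$ contains at most one vertex of $M$, so the $s=2$ constraint holds, and the $s=1$ constraint follows because $N_1[v]\subseteq N_2[v]$. The only care needed is the integrality reading $\lfloor (s+1)/2\rfloor$ and the distinct-distance assumption, which makes $u_1(v)$ and $u_2(v)$ well defined. I do not expect a real obstacle.
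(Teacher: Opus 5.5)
Your proposal is correct and takes the same route as the paper. The $s=2$ constraint, read with integrality as $|N_2[v]\cap M|\le \lfloor 3/2\rfloor = 1$, forces each triple $\{v,u_1,u_2\}$ to contain at most one member of $M$, and since every edge of $G_P$ lies inside such a triple, this is exactly independence in $G_P$. Your version is more explicit than the paper's, which leaves the integrality step and the redundancy of the $s=1$ constraint implicit and dismisses the converse as ``similar''.
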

\begin{proof}
    It follows from the definition of $2$-multipacking that for every tuple $(v,u_1,$ $u_2)$, as described above, at most one can be chosen as a member of $M$. Since each edge of $G_P$ is incident on a pair of vertices appearing in a tuple, there is no edge between two members of $M$ appearing in $G_P$. Thus the vertices corresponding to the members of $M$ form an independent set of $G_P$. The converse follows from a similar argument.
\end{proof}

Thus, finding a maximum independent set of $G_P$ is equivalent to finding a maximum $2$-multipacking of $P$. In the rest of this section, we show that finding a maximum independent set of $G_P$ is NP-hard, and we devise approximation and parameterized solutions to this problem.

\subsection{NP-hardness}\label{subsec:np-hardness}

We show that finding a maximum independent set in the graph $G_P$ is NP-hard by reducing the well-known NP-complete problem \emph{Planar Rectilinear Monotone (PRM) 3-SAT}~\cite{de2012optimal} to this problem. This variant of the boolean formula contains three literals in each clause and all of them are either positive or negative. Moreover, a planar graph can be constructed for the formula such that a set of axis parallel rectangles represent the variables and the clauses with the following properties. The rectangles representing the variables lie on the $x$-axis. Whereas rectangles representing the clauses that contain the positive (negative) literals lie above (below) the $x$-axis. Additionally, the rectangles for the clauses can be connected with vertical lines to the rectangles for the variables that appear in the clauses.

Given a PRM 3-SAT formula $\phi$ with $t(>1)$ variables and $m(>1)$ clauses, we construct a planar point set $P_\phi$ such that $\phi$ is satisfiable if and only if $G_{P_\phi}$ has a maximum independent set of a given size. First, we construct a point set for the \emph{variable gadget} for each variable that has only two subsets as the maximum independent set in $G_{P_\phi}$. Then we construct another point set as clause gadgets for each clause in $\phi$ such that the clause gadget can incorporate only one vertex in the independent set only if at least one variable gadget has the desired set chosen as the maximum independent set. The clause and variable gadgets are connected via \emph{connection gadgets} which are two point sets referred to as \emph{translation} and \emph{rotation gadgets}. We describe the construction of the point sets formally below. 


\begin{figure}
    \centering
    \includegraphics[page=1,width=\textwidth]{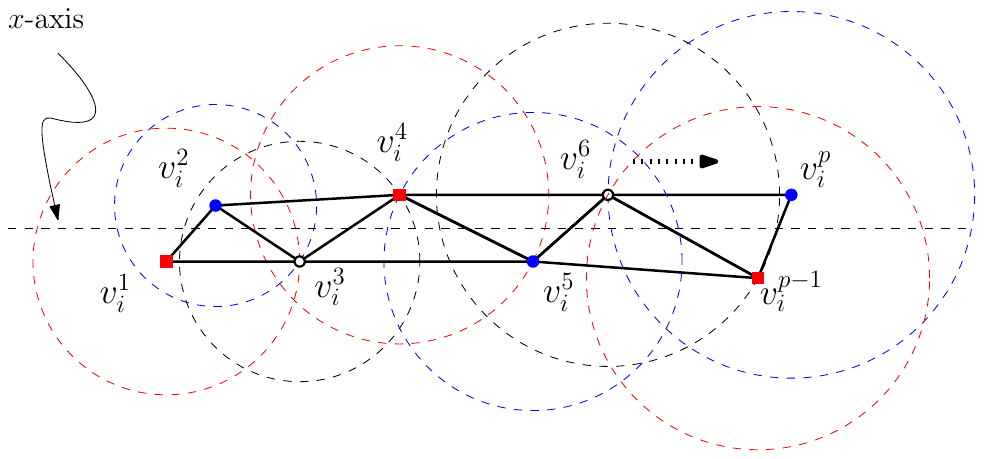}
    \caption{A variable gadget in $G_{P_\phi}$.}
    \label{fig:variable}
\end{figure}

\begin{figure}
\centering
    \includegraphics[page=2,width=\textwidth]{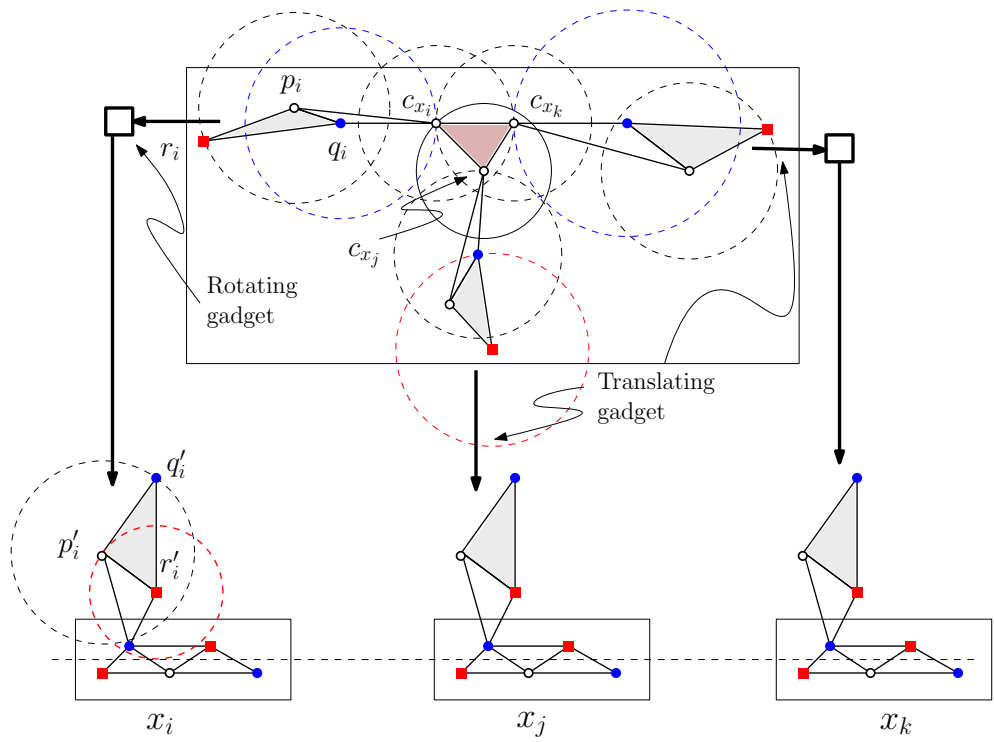}
    \caption{A clause gadget and its connection with the variable gadgets.}
    \label{fig:clause}
\end{figure}

\vspace{0.3cm}
{\bf Variable gadget.} We construct a point set for each variable $x_i$ in $\phi$ consisting of $p=6(m-1)+2$ points.  
A variable gadget is depicted in Figure~\ref{fig:variable} with $p=8$. Each point is considered as the vertex in the graph $G_{P_\phi}$. The leftmost point is $v_i^1$ having $v_i^2$ and $v_i^3$ as its first and second neighbor respectively. Thus $v_i^1$ introduces three edges $v_i^1v_i^2$, $v_i^1v_i^3$, and $v_i^2v_i^3$ in $G_{P_\phi}$. The next point from the right is $v_i^2$ which has $v_i^1$ and $v_i^3$ has its first and second neighbor. Since the edges corresponding to the adjacency of $v_i^2$ are already drawn, it does not introduce any new edges. Then the third point $v_i^3$ introduces two new edges namely $v_i^3v_i^4$ and $v_i^2v_i^4$. Continuing in a similar fashion $v_i^{p-1}$ introduces two edges $v_i^{p-1}v_i^{p}$ and $v_i^6v_i^p$. Finally, $v_i^{p}$ does not introduce any new edges having $v_i^{p-1}$ and $v_i^6$ as its first and second neighbors respectively. The second neighborhoods of the points are indicated with colored (referring to the online version in case of gray-scale documents) circles in the figure. The leftmost red circle contains the first and second neighbor of $v_i^1$ in Figure~\ref{fig:variable}. 
The following observation holds from the structure of the variable gadget in $G_{P_\phi}$. 

\begin{observation}
    \label{obs:variable}
    The maximum independent set in a variable gadget with $6(m-1)+2$ vertices, as described above, is of size $2m-1$. There are exactly two options for selecting these two sets, either the sets consisting of the vertices $V_i^T=\{v_i^1,v_i^4,\dots,v_i^{p-1}\}$ or the sets consisting of the vertices $V_i^F=\{v_i^2,v_i^5,\dots,v_i^{p}\}$. Moreover, we can place the variable gadgets in such a way that the vertices $v_i^2,v_i^4,\dots,v_i^{p-1}$ lie above the $x$-axis and the other vertices lie below the $x$-axis.
\end{observation}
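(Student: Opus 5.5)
The statement to prove is Observation~\ref{obs:variable}. I will work entirely in the graph restricted to the gadget, i.e., the edges created by the points $v_i^1,\dots,v_i^p$ themselves. The first step is to read off these edges from the construction. Point $v_i^1$ contributes the triangle $v_i^1v_i^2v_i^3$. Each subsequent ``odd-position'' point $v_i^{3},v_i^{4},\dots$ contributes the edge to the next point and the edge from its first neighbour to the next point. So the gadget graph is a chain of triangles sharing edges, which is essentially the square of a path: consecutive triples $\{v_i^{j},v_i^{j+1},v_i^{j+2}\}$ that share a second-neighbourhood circle are cliques. I would make this precise by listing the triangles $T_1,\dots,T_q$ that cover the vertex set. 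In the intended geometry the circles are disjoint in groups, so the $p=6(m-1)+2$ vertices split into blocks of three consecutive vertices, with two leftover vertices.

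The upper bound comes from a clique partition. I will partition $\{v_i^1,\dots,v_i^p\}$ into $2m-1$ cliques: the triples $\{v_i^{3s+1},v_i^{3s+2},v_i^{3s+3}\}$ for $s=0,\dots,2m-3$, together with the final pair $\{v_i^{p-1},v_i^p\}$. This gives $2(m-1)$ triangles plus one edge, since $3\cdot 2(m-1)+2=p$. Each of these is a clique in $G_{P_\phi}$, checked from the listed neighbourhoods. An independent set meets each clique at most once, so it has size at most $2m-1$.

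For existence, I check directly that $V_i^T=\{v_i^1,v_i^4,\dots,v_i^{p-1}\}$ and $V_i^F=\{v_i^2,v_i^5,\dots,v_i^{p}\}$ each contain no edge. The indices are $1 \bmod 3$ and $2 \bmod 3$ respectively. Every edge joins vertices at index distance at most $2$, except the long ``second neighbour'' edges such as $v_i^6v_i^p$, and these I must verify avoid same-residue pairs. Each set has $2m-1$ elements.

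Uniqueness is the main obstacle. Let $I$ be an independent set of size $2m-1$. Then $I$ hits every clique of the partition exactly once. I propagate from the left: the choice in the first triangle forces the choice in the next triangle, because of the edges linking consecutive blocks (e.g.\ $v_i^3v_i^4$ and $v_i^2v_i^4$). Choosing $v_i^3$ would kill both vertices of the final pair, or cause an inconsistency somewhere down the chain, so only the residue classes $1$ and $2$ survive, with the choice locked along the whole chain. To carry this out I will set up an induction on blocks: if block $s$ uses residue $c$, then block $s+1$ must use the same residue $c$, and residue $0$ is impossible. The hard part is getting the exact edge list of the tail right (the $v_i^6v_i^p$ type edges) so that the induction closes. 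The placement claim above and below the $x$-axis is a matter of geometric realization: I would alternate the vertical offsets so that the intended neighbour orders are preserved, and verify this by inspecting Figure~\ref{fig:variable}.
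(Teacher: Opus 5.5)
Your upper bound and existence arguments are correct. With the edges the paper describes (each $v_i^j$, $j\ge 3$, adds $v_i^jv_i^{j+1}$ and $v_i^{j-1}v_i^{j+1}$), the gadget graph is exactly the square of the path $v_i^1v_i^2\cdots v_i^p$. The ``$v_i^6v_i^p$'' in the text is just $v_i^{p-2}v_i^p$ written out for $p=8$; it is not a long edge. So your $2m-2$ consecutive triples plus $\{v_i^{p-1},v_i^p\}$ are cliques covering all $p=6m-4$ vertices. Also, $V_i^T$ and $V_i^F$, whose index gaps are all $3$, attain $2m-1$. This already goes beyond the paper, which only asserts the observation ``from the structure of the variable gadget.''

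\textbf{The gap is the uniqueness step, and it cannot be closed.} In the square of a path, $v_i^{3s+1}$ has \emph{no} neighbour in the next block $\{v_i^{3s+4},v_i^{3s+5},v_i^{3s+6}\}$. So choosing it forces nothing there, and your induction claim ``block $s+1$ must repeat the residue of block $s$'' is false. What your clique cover actually yields is the following:
\begin{itemize}
\item Residue $0$ never occurs. After $v_i^{3s+3}$, only $v_i^{3s+6}$ is allowed in the next block, so eventually $v_i^{p-2}$ is chosen and the final pair is dead.
\item A switch from residue $2$ back to residue $1$ is impossible, since $v_i^{3s+2}v_i^{3s+4}$ is an edge.
\item A single switch from residue $1$ to residue $2$ is allowed at any block.
\end{itemize}
Hence the maximum independent sets are exactly
\[
\{v_i^{1},v_i^{4},\dots,v_i^{3k-2}\}\cup\{v_i^{3k+2},v_i^{3k+5},\dots,v_i^{p}\},\qquad 0\le k\le 2m-1,
\]
which gives $2m$ of them, not two. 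Already for $m=2$ ($p=8$), the set $\{v_i^1,v_i^4,v_i^8\}$ is independent of size $3$ and is neither $V_i^T$ nor $V_i^F$.

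So the ``exactly two options'' part of the observation is false for the gadget as described, and no argument can establish it. A correct version has two possible routes:
\begin{itemize}
\item Modify the gadget so that a $1\to 2$ switch costs a vertex.
\item Prove the weaker characterization above and then check that the reduction survives it. This is not automatic: after a switch, the connecting vertices $v_i^{2+6(\xi-1)}$ are free for clauses on one side and occupied for clauses on the other. So $x_i$ looks true to some clauses and false to others.
\end{itemize}

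For the placement claim, inspecting the figure is not a proof. Take a zigzag $v_i^j=(jh,(-1)^j\epsilon)$ with $0<\epsilon<\frac{\sqrt3}{2}h$, then perturb slightly so that all distances are distinct. Every interior point's two nearest neighbours are then its two path neighbours, since $\sqrt{h^2+4\epsilon^2}<2h$. Each endpoint's two nearest are the next two points. This realizes exactly the square of the path, with even-indexed points above the axis. Note that the statement's list ``$v_i^2,v_i^4,\dots,v_i^{p-1}$'' should end in $v_i^{p}$, since $p-1$ is odd.
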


We interpret the assignments of the variable $x_i$ to be true (false resp.) if  $V_i^T$ ($V_i^F$) is chosen as the maximum independent set in the variable gadget corresponding to $x_i$. Now we describe the structure of a clause gadget. 

\vspace{0.3cm}
{\bf Clause gadget.} A clause gadget consists of a point set such that the corresponding graph contains $7$ triangles. We describe the construction for a positive clause $C=(x_i \lor x_j \lor x_k)$ and the construction is analogous for the other clauses. The \emph{central} triangle of the gadget in $G_{P_\phi}$ consists of three vertices $c_{x_i},c_{x_j}$, and $c_{x_k}$ as shown (with pink shade) in Figure~\ref{fig:clause}. The triangle is introduced by placing $c_{x_j}$ and $c_{x_k}$ as the first and second neighbor of $c_{x_i}$ respectively. 
Then we draw three other vertices for each variable appearing in the clause. We describe the position of them for $x_i$ and the others are analogous. $p_i,q_i$ and $r_i$ are the three vertices placed \emph{near} $c_{x_i}$ such that $q_i$ has $c_{x_i}$ as its first neighbor and $p_i$ as its second neighbor. This introduces $\triangle q_ip_ic_{x_i}$ in $G_{P_\phi}$. Then $p_i$ has $q_i$ and $r_i$ as its first and second neighbor respectively introducing $\triangle p_iq_ir_i$ in $G_{P_\phi}$. We adjust the neighborhood of the other vertices are adjusted in such a way that no other triangle is introduced in $G_{P_\phi}$. For example, the first and second neighbors of $r_i$ are $q_i$ and $p_i$ respectively and the first and second neighbors of $c_{x_j}$ are $c_{x_i}$ and $c_{x_k}$ respectively. Considering the position of the points in the clause gadget following observation holds for the maximum independent set in the induced graph with the points in the clause gadget as the vertices. 

\begin{observation}
    \label{obs:clause}
    The maximum independent set in a clause gadget as described above has size $4$ if and only if at least one of the three vertices $r_i,r_j$, and $r_k$ is chosen in the independent set.
\end{observation}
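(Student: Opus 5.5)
The plan is to analyse the clause gadget in isolation, as the induced subgraph of $G_{P_\phi}$ on its twelve vertices $c_{x_i},c_{x_j},c_{x_k}$ and $p_\ell,q_\ell,r_\ell$ for $\ell\in\{i,j,k\}$. From the construction, its edges are exactly those of the seven triangles: the central triangle $c_{x_i}c_{x_j}c_{x_k}$, and for each $\ell$ the two triangles $q_\ell p_\ell c_{x_\ell}$ and $p_\ell q_\ell r_\ell$. So for each $\ell$ we get an arm on $\{c_{x_\ell},p_\ell,q_\ell,r_\ell\}$ with edges $c_{x_\ell}p_\ell$, $c_{x_\ell}q_\ell$, $p_\ell q_\ell$, $p_\ell r_\ell$, $q_\ell r_\ell$; in other words, two triangles sharing the edge $p_\ell q_\ell$, with $c_{x_\ell}$ and $r_\ell$ non-adjacent. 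The first step is to confirm from the prescribed first and second neighbours that no other edges arise inside the gadget. In particular, $r_\ell$ should contribute only edges already present, and $c_{x_j},c_{x_k}$ should contribute only central-triangle edges.

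Next I would prove the upper bound of $4$ by a clique cover. The sets $\{p_i,q_i,r_i\}$, $\{p_j,q_j,r_j\}$, $\{p_k,q_k,r_k\}$ and $\{c_{x_i},c_{x_j},c_{x_k}\}$ are four disjoint cliques covering all twelve vertices. An independent set meets each clique in at most one vertex, so its size is at most $4$.

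For the characterisation, suppose first that the independent set has size $4$. Then it contains exactly one central vertex, say $c_{x_\ell}$, and exactly one vertex from each triple $\{p_{\ell'},q_{\ell'},r_{\ell'}\}$. In the arm of $\ell$, $c_{x_\ell}$ is adjacent to both $p_\ell$ and $q_\ell$, so the chosen vertex of that triple must be $r_\ell$. Hence some $r$ is selected. Conversely, suppose some $r_\ell$ is selected. I take $c_{x_\ell}$ together with $r_\ell$, and for each of the two other indices $\ell'$ I take $p_{\ell'}$ (or $q_{\ell'}$), which is not adjacent to $c_{x_\ell}$ or $r_\ell$. This gives an independent set of size $4$.

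The main obstacle is the phrase ``at least one of $r_i,r_j,r_k$ is chosen'' in the converse direction. The size-$4$ set just built may be blocked by the connection gadgets outside the clause gadget, so it has to be read as a statement about the induced gadget. There is also the direction where $r_\ell$ is forced but $c_{x_\ell}$ is unavailable. I would therefore state it cleanly as: a size-$4$ independent set of the gadget exists using a prescribed $r_\ell$, and every size-$4$ set contains some $r_\ell$. Beyond that, the only real work is the geometric check that the point placement induces exactly the seven triangles, which I would verify by listing the first and second neighbours of each of the twelve points.
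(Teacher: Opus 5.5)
Your argument is correct. The paper gives no written proof and simply reads the observation off the gadget's structure. Your cover of the twelve vertices by the four disjoint triangles $\{c_{x_i},c_{x_j},c_{x_k}\}$ and $\{p_\ell,q_\ell,r_\ell\}$, together with the fact that a chosen central vertex $c_{x_\ell}$ forces $r_\ell$ in its own arm, is exactly the reasoning left implicit there. Reading the converse as an existence statement inside the induced gadget is the right interpretation. The same-index forcing you prove is also what the paper later relies on in Observation~\ref{obs:assign}.
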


 We consider an ordering of the clauses from left to right in the planar embedding of $\phi$. Let $C$ be the $\xi$-th clause that is connected with the variable $x_i$ from left to right. We \emph{translate} each of the triangles $\triangle p_iq_ir_i$, $\triangle p_jq_jr_j$ and $\triangle p_kq_kr_k$ \emph{near} to the corresponding variable gadgets to \emph{connect} them. We place three points $p'_i,q'_i$,and $r'_i$ near the point $v_i^{2+6(\xi-1)}$ such that $r'_i$ has $v_i^{2+6(\xi-1)}$ and $p'_i$ as its first and second neighbour. We call the vertex $v_i^{2+6(\xi-1)}$ as the connecting vertex of $C$ and $x_j$. Then we construct a point set such that the maximum independent set induced by these points as vertices in $G_{P_\phi}$ has the following property. $p_i$($q_i$ or $r_i$ resp.) can be chosen in the maximum independent set if and only if $p'_i$ ($q'_i$ or $r'_i$) is also chosen in the maximum independent set. We call this property as the \emph{connection} of the clause to the variable gadgets. Once the connections to every clause and its variables are complete we have the following observation.

\begin{observation}
    \label{obs:assign}
    The vertex $c_{x_i}$ can be chosen in an independent set in $G_{P_\phi}$ only if the maximum independent set of the variable gadget corresponding to $x_i$ is chosen to be $V_i^T$.
\end{observation}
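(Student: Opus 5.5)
This observation says that $c_{x_i}$ can enter an independent set of $G_{P_\phi}$ only when $x_i$ is set to true, i.e., only when $V_i^T$ is chosen in the variable gadget of $x_i$. The plan is to follow the forced choices along the chain from $c_{x_i}$ through the clause triangles and the connection gadget back to the connecting vertex $v_i^{2+6(\xi-1)}$ of the variable gadget. Throughout I read ``chosen in an independent set'' as referring to an independent set that restricts to a maximum independent set on every gadget. This is the regime used in the final size count (total size $t(2m-1)+4m+$ the connection-gadget contributions), so I also assume the connection gadgets have fixed maximum sizes that must all be attained.

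First, inside the clause gadget, suppose $c_{x_i}$ is chosen. Then $q_i$ and $p_i$ are excluded, because $\triangle q_ip_ic_{x_i}$ is a triangle of $G_{P_\phi}$. The triangle $\triangle p_iq_ir_i$ can therefore contribute to a maximum independent set of the clause gadget only through $r_i$. By Observation~\ref{obs:clause}, reaching size $4$ forces the local configuration: $r_i$ is taken together with $c_{x_i}$. (Strictly speaking, size $4$ is attained when some $r$ is chosen; the clause count should be done carefully here, so that in every maximum configuration the selected central vertex corresponds to a selected $r$-vertex of the same variable.)

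Second, use the connection property of the translation and rotation gadgets: $r_i$ is chosen in the maximum independent set if and only if $r'_i$ is chosen. Since $r'_i$ has $v_i^{2+6(\xi-1)}$ as its first neighbor, the edge $r'_iv_i^{2+6(\xi-1)}$ lies in $G_{P_\phi}$, so choosing $r'_i$ excludes $v_i^{2+6(\xi-1)}$. This vertex belongs to $V_i^F=\{v_i^2,v_i^5,\dots\}$, because $2+6(\xi-1)\equiv 2 \pmod 3$. By Observation~\ref{obs:variable}, the only maximum independent sets of the variable gadget are $V_i^T$ and $V_i^F$. Excluding a vertex of $V_i^F$ therefore forces $V_i^T$, which is exactly the claim.

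The step I expect to be the main obstacle is the connection property. It depends on the precise placement of the translation and rotation gadgets, which has not yet been specified at this point in the paper. I would establish it by induction along the chain of translated triangles: consecutive triangles share a ``propagation'' edge pattern, so that a maximum independent set takes exactly one vertex per triangle and the chosen label ($p$, $q$ or $r$) is preserved from triangle to triangle. The first-neighbor and second-neighbor placements must be checked so that no unintended triangles appear, in the same way as was done for the variable gadget.
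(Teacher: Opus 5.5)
Your argument is correct relative to the preceding observations, and it is the same chain the paper leaves implicit, since the paper states the result as evident from the construction with no separate proof: choosing $c_{x_i}$ excludes $p_i,q_i$, so a size-$4$ clause configuration must contain $r_i$; the connection property passes this to $r'_i$, whose first neighbour is the connecting vertex $v_i^{2+6(\xi-1)}\in V_i^F$; and Observation~\ref{obs:variable} then leaves only $V_i^T$. Be aware that your proof and the paper's both rest entirely on Observation~\ref{obs:variable}, which does not hold as stated for the gadget described: its edges are exactly the pairs $v_i^av_i^b$ with $|a-b|\le 2$, and for $p=8$ the set $\{v_i^1,v_i^5,v_i^8\}$ is a maximum independent set avoiding $v_i^2$ that is neither $V_i^T$ nor $V_i^F$.
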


The remaining work is to achieve the connection from $C$ to $x_i,x_j$, and $x_k$. We use two point sets (gadgets) to perform this move. The first one is a \emph{translating gadget} and the second one is a \emph{rotating gadget} depicted in Figure~\ref{fig:rt}. The following observation can be verified from the structure of the gadgets.

\begin{observation}
    \label{obs:translate}
    Both the gadgets contain $6$ points inside them and the size of the maximum independent set is $2$.
\end{observation}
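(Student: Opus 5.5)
Observation~\ref{obs:translate} states that each gadget (translating and rotating, Figure~\ref{fig:rt}) contains $6$ points and its induced subgraph of $G_{P_\phi}$ has a maximum independent set of size $2$. I would prove this by determining the edge set of each gadget explicitly and then arguing directly on that small graph.

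\textbf{Step 1: compute the edges.} By construction of $G_{P_\phi}$, each point $v$ contributes the triangle $\{v,u_1,u_2\}$, where $u_1,u_2$ are its first and second nearest neighbours. I would read off from the placement in Figure~\ref{fig:rt} the first and second neighbours of each of the $6$ points, exactly as was done for the variable gadget. This yields the triangles inside the gadget.

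I expect the picture to be a chain of triangles. Writing the gadget's points as $g_1,\dots,g_6$ along the chain, the triangles would be $\{g_1,g_2,g_3\}$, $\{g_2,g_3,g_4\}$, $\{g_3,g_4,g_5\}$ and $\{g_4,g_5,g_6\}$, like a short piece of the variable gadget. For the rotating gadget the triangles are the same up to relabelling, only bent by $90^\circ$. The $6$ points are then counted directly from the construction.

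\textbf{Step 2: upper bound.} Partition the $6$ vertices into two triangles, $\{g_1,g_2,g_3\}$ and $\{g_4,g_5,g_6\}$. Both are cliques in $G_{P_\phi}$, so an independent set takes at most one vertex from each, and hence has size at most $2$. If the actual edge set differs from my guess, I would still look for a cover of the $6$ vertices by two cliques, which gives the same bound.

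\textbf{Step 3: lower bound.} Exhibit a nonadjacent pair, for instance $g_1$ and $g_4$, or whichever two vertices lie at opposite ends of the chain. This requires checking that no triangle contains both of them.

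\textbf{Main obstacle.} The combinatorics is trivial, so the difficulty lies in justifying the neighbour relations. The claim that no point outside the gadget is a first or second neighbour of a gadget point, and conversely, relies on the gadget being placed sufficiently far from the others relative to its internal distances. I would state this spacing assumption explicitly as part of the construction, with the internal point distances much smaller than the distance to any other gadget. Without it, extra edges could appear, although extra edges could only lower the independence number. For the claim that the size is exactly $2$, the spacing matters only for the nonadjacent pair in Step 3.
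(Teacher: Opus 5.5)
Your argument is correct and is essentially what the paper does. The paper only asserts that the observation can be verified from the structure of the gadgets in Figure~\ref{fig:rt}, and your plan of reading off the triangles, covering the six points by the two cliques $\{g_1,g_2,g_3\}$ and $\{g_4,g_5,g_6\}$, and exhibiting the nonadjacent pair $g_1,g_4$ is exactly that verification made explicit. One caution: you should not impose a global spacing assumption that no outside point is a first or second neighbour of a gadget point, because the gadgets propagate the choice only through their end triangles sharing nearest-neighbour relations with the adjacent triangles, and what you actually need is just the local check that no point of $P_\phi$ (including those of the neighbouring triangles) has both $g_1$ and $g_4$ among itself and its two nearest neighbours.
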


 The translating gadget starts with a copy of a triangle and \emph{propagates} the choice of the vertex chosen in an independent set in the underlying graph. The rotating gadget does the same thing but the final triangle in this gadget is a copy of the initial triangle in the gadget but rotated at a right angle. Thus we can use mirror-reflections of the rotating gadget depicted in the figure to achieve the rotating angle to be $180^{\circ}$ and $270^{\circ}$. Together we call them the connecting gadgets.  

\begin{figure}[htp]
    \centering
    \includegraphics[page=3,width=.85\textwidth]{Figures/hardness.pdf}
    \caption{Translation and rotation of a triangle in $G_{P_\phi}$.}
    \label{fig:rt}
\end{figure}

 \begin{figure}[t]
    \centering
    \includegraphics[page=4,width=\textwidth]{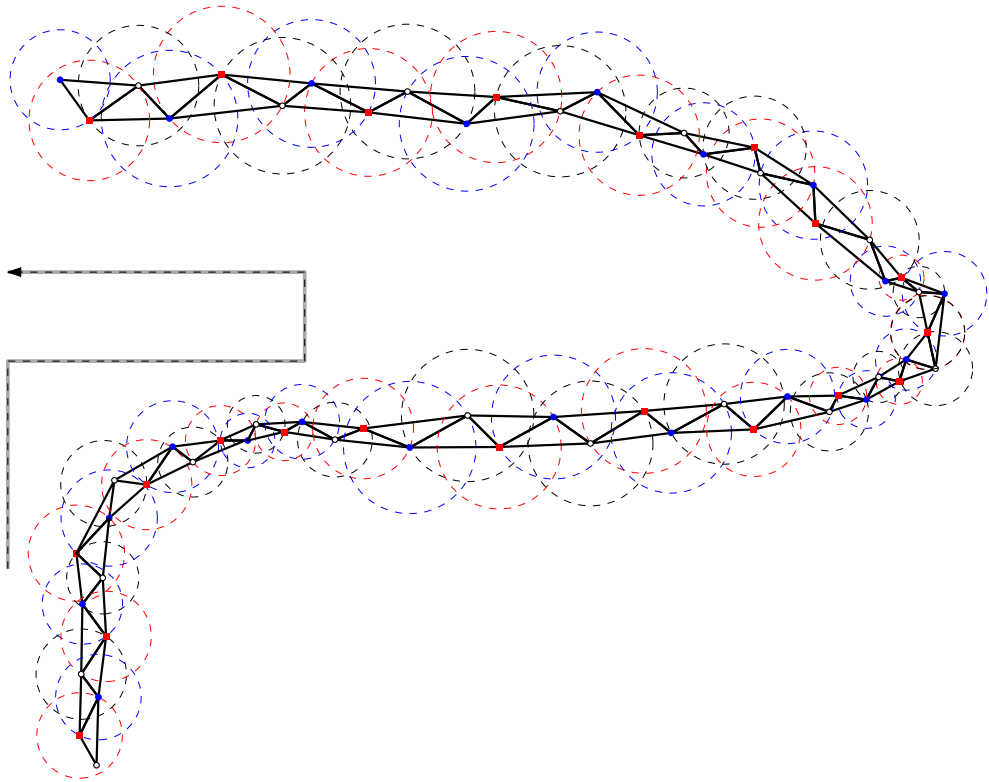}
    \caption{An example of embedding of the connecting gadgets.}
    \label{fig:emb}
\end{figure}


Now the following observation is evident from the structure of the clause gadget and its connection with the variables.

\begin{observation}
    \label{obs:number}
    Let $C=(x_i \lor x_j \lor x_k)$ one clause that is connected with its variables in $G_{P_\phi}$ with $\nu$ connecting gadgets then the maximum independent set for the point set has size $3(2m+1)+(\nu+4)$ if and only if at least one of the three variables is assigned to be true.  
\end{observation}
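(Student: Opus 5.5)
The plan is to prove the statement by local accounting over the gadgets. The vertex set of the point set built for the clause $C=(x_i\lor x_j\lor x_k)$ splits into five kinds of pieces: the three variable gadgets of $x_i,x_j,x_k$, the $\nu$ connecting (translating or rotating) gadgets, and the clause gadget. Consecutive pieces share a triangle at their interfaces, namely the copies $\triangle p_iq_ir_i$, $\triangle p'_iq'_ir'_i$ and the intermediate triangles. I would first fix an explicit partition of the vertices into disjoint blocks, assigning each shared triangle to exactly one side. Then I would bound the size of an independent set in $G_{P_\phi}$ restricted to each block using the earlier observations.
\begin{itemize}
\item Each variable gadget block contributes at most $2m-1$, with the maximum attained only by $V^T$ or $V^F$ (Observation~\ref{obs:variable}).
\item Each connecting gadget block, after removing its shared entry triangle, contributes at most one new vertex; this is Observation~\ref{obs:translate}, since the gadget has maximum $2$ and one of those two vertices lies in the shared triangle.
\item The clause gadget contributes at most $4$ (Observation~\ref{obs:clause}).
\end{itemize}
Summing these block maxima should give exactly the constant $3(2m+1)+(\nu+4)$ claimed in the statement. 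So the first concrete step is the bookkeeping that matches the partition to this constant, including how the connecting vertex $v_i^{2+6(\xi-1)}$ and the triangle $\triangle p'_iq'_ir'_i$ are charged. I expect this to be the main obstacle, since the constant is only correct if every shared vertex is counted exactly once.

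For the upper bound, an independent set $I$ of that size must be tight on every block. Tightness on a variable gadget forces $I$ to restrict to $V_i^T$ or $V_i^F$. Tightness on each connecting gadget forces the \emph{propagation} property: the vertex of the exit triangle chosen by $I$ is the copy of the vertex chosen in the entry triangle. Tightness on the clause gadget forces, by Observation~\ref{obs:clause}, some $r_\ell$ with $\ell\in\{i,j,k\}$ to lie in $I$. Following the chain of triangles back to the variable gadget, $r_\ell\in I$ forces $r'_\ell\in I$. Since $r'_\ell$ is adjacent to the connecting vertex $v_\ell^{2+6(\xi-1)}$, this vertex is excluded, which rules out $V_\ell^F$; hence $V_\ell^T$ is chosen and $x_\ell$ is true. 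This is the content of Observation~\ref{obs:assign}.

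Conversely, suppose some variable, say $x_\ell$, is true. Choose $V^T$ in its gadget and any maximum set ($V^T$ or $V^F$) in the other two variable gadgets. In the chain of $x_\ell$, select $r'_\ell$ and propagate it through the connecting gadgets to $r_\ell$. In the other two chains, propagate whichever triangle vertex is compatible with the choice in the corresponding variable gadget. Finally complete the clause gadget to $4$ vertices using $r_\ell$, as permitted by Observation~\ref{obs:clause}. Each block then attains its maximum and no edge joins two blocks inside $I$. The remaining point to check is that the interface triangles contain no conflicts; this follows from the construction placing $r'_\ell$ adjacent only to the connecting vertex, which does not belong to $V^T_\ell$.
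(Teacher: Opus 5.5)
The step you postpone as ``bookkeeping'' is where the argument fails, and no partition can repair it. Your own block bounds are at most $2m-1$ per variable gadget (Observation~\ref{obs:variable}), at most $1$ per connecting gadget, and at most $4$ for the clause gadget. These sum to $3(2m-1)+\nu+4=6m+\nu+1$, while the target is $3(2m+1)+\nu+4=6m+\nu+7$. Even if you charge each triangle $\triangle p'_\ell q'_\ell r'_\ell$ as an extra block of value $1$, you only reach $6m+\nu+4$.

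For any valid partition, the block maxima bound every independent set from above, so rearranging shared vertices cannot raise the total to the target. Your upper-bound argument therefore shows the claimed size is \emph{never} attained. The ``if'' direction (a true variable yields an independent set of that size) then fails, and the ``only if'' direction holds only vacuously, so the tightness mechanism you rely on never engages.

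The paper gives no derivation to check against; it calls the observation evident from the construction. Its constants are also mutually inconsistent: $2m-1$ in Observation~\ref{obs:variable}, $2m+1$ here, and $m+1$ per variable gadget in Lemma~\ref{lem:NP}. What your block-tightness method can prove is the statement with the constant equal to the true sum of block maxima. You have to fix that constant, or the size of the variable gadget, explicitly instead of asserting that the sums agree.

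Two further steps need repair.

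First, tightness on a variable gadget does not force $V^T_\ell$ or $V^F_\ell$. The gadget is the square of a path on $p=6m-4\equiv 2\pmod 3$ vertices (edges join $v^a,v^b$ with $|a-b|\le 2$). It also has the $2m-2$ maximum independent sets $\{v^1,v^4,\dots,v^{3a+1},v^{3a+5},v^{3a+8},\dots,v^{p}\}$ for $0\le a\le 2m-3$. Such a set omits the connecting vertex $v^{2+6(\xi-1)}$ whenever $2+6(\xi-1)<3a+5$, yet contains later connecting vertices. So ``the connecting vertex is excluded, hence $V^T_\ell$ is chosen'' is invalid. You must either define truth locally (the connecting vertex of $C$ is not in $I$) or modify the gadget so that only two maximum independent sets exist.

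Second, ``the gadget has maximum $2$ and one vertex lies in the shared triangle'' does not by itself give ``at most one new vertex'' per connecting gadget. What you actually need is that the unshared part of each gadget is a clique.

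Finally, $r'_\ell$ is not adjacent only to the connecting vertex. Its two nearest neighbours are $v_\ell^{2+6(\xi-1)}$ and $p'_\ell$, so $p'_\ell$ is adjacent to the connecting vertex as well. This is why, in your converse, a false variable's chain can only carry $q'$ through to $q$.
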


Now we adjust our drawing of the whole point set for $G_{P_\phi}$ such that the size of the independent set is a fixed number depending on $\phi$.  We place each clause gadget $C=(x_i \lor x_j \lor x_k)$ in such a way that the connecting gadgets for $x_j$ do not contain a rotating gadget for the connection of $c_{x_j}$ to its connecting vertex in $x_j$. We consider the fact that the nearest clause gadget from the $x$-axis can be connected with its variables by two rotating gadgets for the first and third variables and one translating gadget for the middle variable appearing in the clause.  
Then we compute the maximum number of connecting gadgets required to connect a clause with the variables.
Let $l_v$ and $w_v$ denote the length and width of the rectangles representing the variables in the planar drawing of $\phi$. We place our variable gadgets inside the rectangles. So $l_v=O(m)$ as we have placed $p=6(m-1)+2$ points in each variable gadget. Since there are $t$ variables and every clause rectangle can be connected with the variable rectangles using vertical lines, then the maximum length of a rectangle representing a clause could be $tl_v+\varepsilon$ for some small constant $\varepsilon$. On the other hand, if $w_c$ and $h_c^0$ denote the width of the clause-rectangle and the height (distance from $x$-axis) of the nearest clause-rectangle from the $x$-axis, then the distance of the farthest clause rectangle from the $x$-axis is $mw_c+h_c^0$. 
Then the maximum number of connecting gadgets required for connecting a clause gadget to its variables is at most  $3mw_c+2tl_v$.
Thus the clause gadget together with the translating and rotating gadget consists of $6(3mw_c+2tl_v)+12$ points. And thus contributing a $2(3mw_c+2tl_v)+4$ size maximum independent set. We draw our gadget in such a way that $w_c$ contains $6$ and $l_v$ contains $p$ points. Thus, every clause contains $6(18m+2tp)+12$ points together with its connecting gadgets. 

To achieve this drawing, We expand the planar embedding of $\phi$ by a scalar $\alpha = (10tm)^2$. The size of the gadgets remains the same but the space between them increases such that there is a distance of at least $\alpha$ between any two connecting gadgets to accommodate the total number of points. The \emph{convolution} to accommodate the extra points for the connection is depicted in Figure~\ref{fig:convolution}. A lower-level diagram for the embedding of such convolution with translating and rotating gadgets is depicted in Figure~\ref{fig:emb}. 

\begin{figure}[ht]
\centering
    \includegraphics[width=\textwidth,page=5]{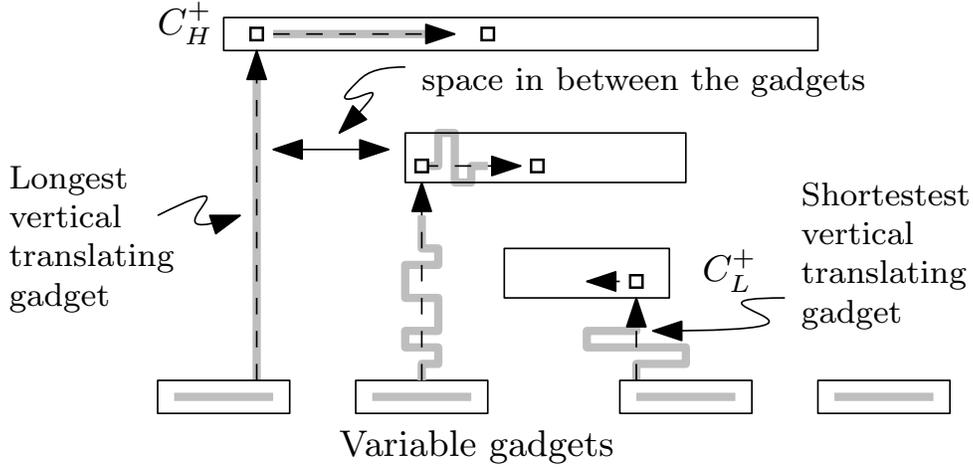}
    \caption{Convolution of the connecting gadgets in $G_{P_\phi}$.}
    \label{fig:convolution}
%
\end{figure}
Then we claim the following lemma.

\begin{lemma}
    \label{lem:NP}
    Given a PRM-3SAT formula $\phi$ we can construct a planar point set $P_\phi$ such that $G_{P_\phi}$ has a maximum independent set of size $t(m+1)+m(2(18m+2tp)+4)$ if and only if $\phi$ is satisfiable.
\end{lemma}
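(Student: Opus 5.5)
The plan is a standard gadget-reduction argument with two directions, built on Lemma~\ref{lem:graph} so that we only ever reason about independent sets in $G_{P_\phi}$. First I would fix the accounting. The vertex set of $G_{P_\phi}$ partitions into the $t$ variable gadgets and, for each of the $m$ clauses, its clause gadget plus its padded chain of connecting gadgets. After the convolution, every clause's chain contains exactly the same number of translating and rotating gadgets, $3(18m+2tp)$ in total, each with 6 points. The key locality claim is that the only edges of $G_{P_\phi}$ running between different gadgets are those created at the interfaces. These are the shared triangle copies along a chain, the triangle at the connecting vertex $v_i^{2+6(\xi-1)}$, and the triangles $\triangle p_iq_ir_i$ at the clause. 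This holds because the expansion by $\alpha=(10tm)^2$ makes every first and second nearest neighbour lie inside the point's own gadget or at its interface. With that established, an upper bound on any independent set is the sum of the per-gadget maxima from Observation~\ref{obs:variable}, Observation~\ref{obs:clause} and Observation~\ref{obs:translate}. Checking that this sum equals the target $t(m+1)+m(2(18m+2tp)+4)$ is routine arithmetic, and I would reconcile the variable term with Observation~\ref{obs:variable} at that stage.

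\textbf{(Only-if direction.)} Given a satisfying assignment, I choose $V_i^T$ or $V_i^F$ in each variable gadget. I then propagate the induced choice along every connecting chain, so each chain attains its maximum of 2 per gadget (Observation~\ref{obs:translate}) in a way that is consistent at shared triangles. In each clause at least one literal is true, so its chain delivers the corresponding $r$-vertex. By Observation~\ref{obs:clause} the clause gadget then attains its maximum 4. Summing gives the target size. Independence across interfaces follows from the propagation property of the connecting gadgets.

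\textbf{(If direction.)} Suppose an independent set $I$ of the target size exists. By the upper-bound accounting, $I$ must be maximum on every gadget simultaneously, since a deficit anywhere cannot be compensated elsewhere. So each variable gadget uses exactly $V_i^T$ or $V_i^F$ by Observation~\ref{obs:variable}, which defines an assignment. Each chain is tight, so it transmits the variable's choice faithfully to the clause. Each clause gadget attains 4, so by Observation~\ref{obs:clause} and Observation~\ref{obs:assign} some literal in it is true. Hence $\phi$ is satisfied. The construction has polynomially many points, because $p=O(m)$ and the chain lengths are $O(m^2+tm)$, and it can be laid out in polynomial time from the PRM embedding.

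The main obstacle is the geometric realizability and locality step. One must verify that the gadgets can be placed with distinct pairwise distances so that no unintended first or second neighbour relation, and hence no extra triangle, arises. This has to hold in particular at the junctions between connecting gadgets, at the connecting vertices of the variable gadgets (which serve up to $m$ clauses on alternating sides of the $x$-axis), and along the convolutions used for padding. I would first try to handle this with a uniform separation argument: inside every gadget, consecutive points are at distance at most some constant $c$, while distinct non-interfacing gadgets are at least $\alpha\gg c$ apart. Interfaces would then be designed so that the shared triangle's three points are mutually closer than any other point. A small perturbation then enforces distinct distances without changing any neighbour ranks.
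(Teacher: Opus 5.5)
Your plan follows the paper's own proof: sum the per-gadget maxima, realise that sum from a satisfying assignment, and in the converse read an assignment off a tight independent set. You are in fact more explicit than the paper about neighbour locality. The problem is the step you call routine arithmetic. The converse is a pure tightness argument, so it needs the target to equal the sum of the per-gadget maxima exactly, and it does not.

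\textbf{The arithmetic does not close.}
The variable gadget consists of the triangles $v_i^{j-1}v_i^jv_i^{j+1}$ on $p=6m-4$ points. By Observation~\ref{obs:variable} its maximum independent set has size $2m-1$, not $m+1$. With the target below the true bound, the converse collapses. Take $v_i^3,v_i^6,\dots$ in every variable gadget. This gives $2m-2\ge m+1$ vertices when $m\ge 3$, and it leaves every connecting vertex $v_i^{2+6(\xi-1)}$ free. Then every chain and every clause gadget can be made tight whether or not $\phi$ is satisfiable.

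So under either reading of ``has a maximum independent set of size'' the equivalence fails for this target. Your chain count is also off. The paper's bound $3mw_c+2tl_v=18m+2tp$ (with $w_c=6$, $l_v=p$) is the total number of connecting gadgets per clause, and that is what produces the term $2(18m+2tp)$. With your $3(18m+2tp)$ gadgets, each clause would contribute $6(18m+2tp)+4$.

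The variable term therefore cannot be ``reconciled'', only corrected. The statement should read $t(2m-1)+m\bigl(2(18m+2tp)+4\bigr)$. The paper's proof has the same defect: it simply asserts that $m+1$ vertices are taken in each variable gadget.

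\textbf{Tightness does not define an assignment.}
Even with the corrected target, the inference ``each variable gadget is tight, hence uses exactly $V_i^T$ or $V_i^F$'' fails for this gadget. The gadget is the square of a path on $p\equiv 2 \pmod 3$ vertices. A maximum independent set there has index gaps of at least $3$ and exactly one unit of slack. That slack can sit before the first chosen vertex, after the last, or in any of the $2m-2$ gaps. So there are $2m$ maximum independent sets, not two, including mixed ones such as $\{v_i^1,v_i^4,\dots,v_i^{3a+1},v_i^{3a+5},\dots,v_i^{p}\}$.

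With the paper's attachment, an occurrence is served iff its connecting vertex is free. So an occurrence that should be served exactly under $V_i^F$ must be attached at a vertex of $V_i^T$, and that is unavoidable if $x_i$ and $\neg x_i$ are to be distinguished. A mixed set then serves a $V_i^T$-type occurrence left of the switch and a $V_i^F$-type occurrence right of it at the same time, which no truth value does.

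To close the converse you need a variable gadget whose maximum independent sets are exactly the two intended ones, or a separate argument excluding the mixed ones. Observation~\ref{obs:variable}, as stated, does not supply this.
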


\begin{proof}
    If $\phi$ is satisfiable then every clause contains at least one variable that has been assigned as true. We choose $m+1$ vertices as independent set in each variable gadgets according to the truth assignment. Now for every clause $C=(x_i \lor x_j \lor x_k)$, we choose one vertex from $c_{x_i},c_{x_j}$, and $c_{x_k}$ whichever achieves a true value in the assignment. Then we choose $2(18m+2tp)$ vertices from the translation gadgets into the independent set. Since the joining vertex of  the variable gadget of a true variable will not be present in the independent set, it will not conflict with the vertex chosen from $\triangle c_{x_i}c_{x_j}c_{x_k}$. Thus we can choose $t(m+1)$ vertices from the variable gadgets and $2(18m+2tp)+4m$ vertices from the clause gadgets in the independent set. 

    On the other hand, the translation gadgets have three choices for choosing an independent set of size $2(mw_c+3tl_v)$ and the clause gadgets can contribute one vertex each towards the independent set only if at least one vertex can be chosen from the central triangle $\triangle c_{x_i}c_{x_j}c_{x_k}$. Thus we must choose the remaining $t(m+1)$ vertices in the independent set from the $t$ variable gadgets. Thus we can ensure if all clause gadgets together with their translation gadgets contributes $2(18m+2tp)+4$ vertices in the independent set then the corresponding assignment satisfies $\phi$.  
\end{proof}




Since it is possible to check the correctness of a solution for an independent set we can conclude the following theorem. 


\NPC*




\subsection{An approximation and a parameterised algorithm}\label{subsec:approximation-algorithm}

In this subsection, we proved that the maximum degree of $G_P$ is bounded.

\begin{lemma}\label{lem:degree_bounded_by_17}
    The maximum degree of the graph $G_P$ is at most $17$.
\end{lemma}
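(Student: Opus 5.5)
The degree of a vertex $v$ in $G_P$ comes from edges of two kinds. First, $v$ is adjacent to its own first and second neighbors $u_1(v), u_2(v)$; that gives at most $2$ edges. Second, $v$ may appear in the tuple $(w,u_1(w),u_2(w))$ of some other point $w$, that is, $v\in\{u_1(w),u_2(w)\}$. Each such $w$ contributes at most $2$ new neighbors of $v$: the point $w$ itself and the other member of $\{u_1(w),u_2(w)\}$. So the plan is to bound the number $k$ of points $w\neq v$ for which $v$ is a first or second nearest neighbor, i.e.\ the in-degree of $v$ in the ``2-nearest-neighbor digraph''. Then $\deg(v)\le 2+2k$, and we need $k\le 7$ after accounting for overlaps. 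More carefully, the target is $\deg(v)\le 17$, so I would aim to show $k\le 7$. That gives $2+2\cdot 7=16\le 17$. Alternatively, show $k\le 8$ while noting that one of the contributed neighbors is already counted (for instance $w=u_1(v)$ when $v=u_1(w)$), which gives $2+2\cdot 8-1=17$.

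The key geometric step is a kissing-number-type bound. Suppose $w_1,w_2$ both have $v$ among their two nearest neighbors. I would consider the angle $\angle w_1 v w_2$ and show it cannot be small for \emph{most} pairs. In the classical 1-NN argument, if $\angle w_1vw_2<60^\circ$, then $|w_1w_2|<\max(|vw_1|,|vw_2|)$. Say $|vw_2|$ is the larger; then $w_1$ is closer to $w_2$ than $v$ is, so $v$ is not the first neighbor of $w_2$, and distinctness of distances rules out equality. With second neighbors allowed, $w_2$ can tolerate one closer point. The argument therefore becomes: order the $w$'s by angle around $v$. If three of them fall in an open sector of angle $60^\circ$, take the one farthest from $v$. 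The other two are then both strictly closer to it than $v$ is, so $v$ is at best its third neighbor, a contradiction. Hence every open $60^\circ$ sector around $v$ contains at most $2$ such points.

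Partitioning the full angle into six half-open $60^\circ$ sectors then gives $k\le 12$. That is too weak, so the main obstacle is sharpening this to $k\le 8$. My first attempt would be to use that the ``at most 2 per $60^\circ$'' property holds for \emph{every} rotation of the sectors. Sort the $w$'s by angle $\theta_1<\dots<\theta_k$. The property says $\theta_{i+2}-\theta_i\ge 60^\circ$ for all $i$ (indices cyclic). Summing over $i$ gives $2\cdot 360^\circ\ge 60^\circ k$, which again yields only $12$. So pure angular counting will not suffice. I would add a distance refinement: if $\theta_{i+2}-\theta_i$ is close to $60^\circ$, the middle point is forced to be relatively close to $v$, which pushes the gaps to its neighbors up.

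Failing a clean angular argument, I would instead bound the contributions directly. For each $w$ counted, at most one of its two contributed neighbors is new beyond $w$, and overlaps among the ``other members'' can be exploited. If $v=u_1(w)$ then the other member is $u_2(w)$, and the triangle inequality puts it within $2|vw|$ of $v$. I would then accept whatever constant the refined counting gives and tune the accounting (counting $u_1(v),u_2(v)$ among the $w$'s when mutual) to land at $17$. The step I expect to fail or need the most care is exactly this improvement from $12$ down to about $8$.
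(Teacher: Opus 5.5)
There is a genuine gap, and it goes beyond the sharpening from $12$ to $8$ being hard: the bound you are aiming for is false. Put $v$ at the origin, five points $a_j$ at radius $1$ and five points $b_j$ at radius $3/2$, both at angles $72^\circ j$ ($j=0,\dots,4$), and perturb slightly so that all distances are distinct.

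Each $b_j$ has nearest point $a_j$ (distance $1/2$) and second nearest $v$ (distance $3/2$). The other points are farther: $a_{j\pm1}$ lie at $\sqrt{13/4-3\cos 72^\circ}\approx 1.524$ and $b_{j\pm1}$ at $3\sin 36^\circ\approx 1.763$. Each $a_j$ has nearest point $b_j$, then $v$, with every other point at distance greater than $1.17$.

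So $k=10$, and no refinement of the angular count can bring $k$ down to $7$ or $8$. In this example $\deg(v)=10$, far below $2+2k=22$. The waste is in charging two neighbours to every $w$: the ``other member'' of $w$'s tuple is usually itself one of the $w$'s.

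The missing idea is to count neighbours of $v$ rather than witnesses. Let $A$ consist of the points having $v$ among their two nearest neighbours, together with $u_1(v),u_2(v)$. Let $B=N(v)\setminus A$; each $x\in B$ is adjacent to $v$ only because some third point $w$ has $\{u_1(w),u_2(w)\}=\{v,x\}$.

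Your sector argument, applied to $A$, already gives $|A|\le 12$. Suppose a $60^\circ$ sector contains three points of $A$, and take the one farthest from $v$. It has two points closer to $v$, so it is neither $u_1(v)$ nor $u_2(v)$. Hence it has $v$ among its two nearest neighbours, and your contradiction applies.

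For $B$, pick such a witness $w_x$ for each $x\in B$. The map $x\mapsto w_x$ is injective, since the tuple of $w_x$ determines $x$, and $w_x\in A$, so it survives the deletion below. Delete $B$ from $P$. Each $w_x$ now has $v$ as its nearest neighbour. With distinct distances at most five points can share a nearest neighbour: among six, two subtend an angle at most $60^\circ$ at $v$, and the farther one is then strictly closer to the nearer one than to $v$. Hence $|B|\le 5$ and $\deg(v)\le 12+5=17$, which is exactly the paper's argument.
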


\begin{proof}
Let $p$ be any point of $P$. Consider the graph $G_P=(V,E)$ that we defined in the beginning of the Section \ref{sec:2,1/2 Multipacking in a 2D plane}. Here $V=P$.  We want to show that the degree of $p$ in $G_P$ is bounded by $17$.  Recall that, $N_r[v]$ is the subset of $P$ that includes the $r$ nearest points of $v$ and the point $v$ itself. We denote the $r$-th neighbour of $v$ by $n_r(v)$.  Here we define some set of points of $P$ that will help us to prove this lemma.
Let $S_1=\{v\in P:n_1(v)=p \text{ or }n_2(v)=p\}$,  $S_2=\{n_1(p),n_2(p)\}$ and $S_3=\{v\in P\setminus S_1\cup S_2:p,v\in N_2[w] \text{ for some }w \in P\}$. Note that, the degree of $p$ in $G_P$ is $|S_1\cup S_2\cup S_3|$. Therefore, we have to show that $|S_1\cup S_2\cup S_3|\leq 17$.

\begin{figure}[ht]
    \centering
    \includegraphics[width=0.7\textwidth]{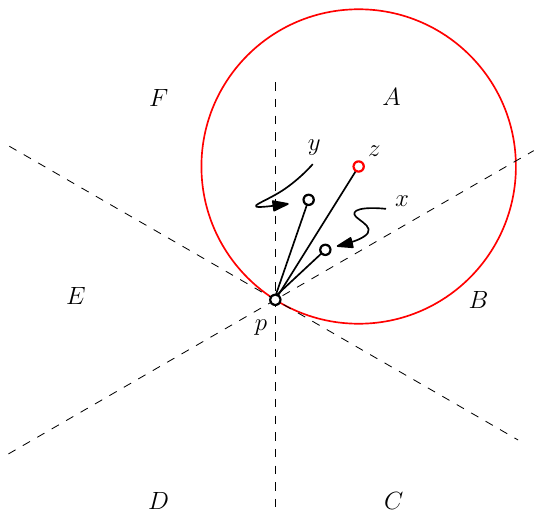}
    \caption{The maximum degree of $G_P$ is at most 17.}
    \label{fig:degree17}    
\end{figure}



\vspace{0.3cm}
\noindent
\textbf{Claim \ref{lem:degree_bounded_by_17}.1. }  $|S_1\cup S_2|\leq 12$. 
\begin{claimproof} We partition the plane by three concurrent lines meeting at $p$ that generate $6$ wedges of equal angles (See Fig.\ref{fig:degree17}). Let $A,B,C,D,E$ and $F$ be the $6$ wedges. Our goal is to prove that no wedge can contain more than two points from the set $S_1\cup S_2$. We prove this by contradiction. Without loss of generality assume that the wedge $A$ contains at least $3$ points from the set $S_1\cup S_2$. Let $S_A$ be the set of all points which belong to both $S_1\cup S_2$ and $A$. Let $z\in S_A$ be the farthest point from $p$ among all the points from $S_A$. Therefore $z\notin S_2$. This implies $p\in N_2[z]$. Let $x,y\in S_A\setminus\{z\}$. Now consider the triangle $\triangle xpz$. This is not an equilateral triangle since $px<pz$. Moreover, the angle $\angle xpz\leq 60^{\circ}$. This implies either $\angle pxz>60^{\circ}$ or $\angle pzx>60^{\circ}$.  Therefore $xz$ is not the largest side of $\triangle xpz$. Now $px<pz$ implies $pz$ is the largest side of $\triangle xpz$. Therefore $x\in N_2[z]$. Similarly, we can show that $y\in N_2[z]$. Therefore, $x,y,z,p\in N_2[z]$. This is a contradiction.   
\end{claimproof}

\vspace{0.3cm}
\noindent
\textbf{Claim \ref{lem:degree_bounded_by_17}.2. }  $|S_3|\leq 5$. 
\begin{claimproof} Suppose $|S_3|\geq 6$. Define a function $f:S_3\rightarrow P$ where we say $f(v)=w$ if either $v$ is the first neighbor of $w$ and $p$ is the second neighbor of $w$ or $v$ is the second neighbor of $w$ and $p$ is the first neighbor of $w$. Note that $f$ is an injective function. Therefore, the size of the image set, $|f(S_3)|\geq 6$. Moreover, $f(S_3)\subset S_1\cup S_2$. Therefore, $f(S_3)\cap S_3=\phi$. Now we remove all the points of $S_3$ from $P$. Let $P'=P\setminus S_3$. In $P'$, the first neighbor of each point of $f(S_3)$ is $p$. Therefore, there are at least $6$ points in $P'$ that have $p$ as their first neighbor. Then there exist two points $v_1,v_2\in f(S_3)$ such that $\angle v_1pv_2\leq 60^{\circ}$. We have $pv_1\neq pv_2$, since each vertex has only one $r$-th neighbor, for any $r$. Without loss of generality assume that $pv_1> pv_2$. This implies $pv_1$ is the largest side in $\triangle v_1pv_2$. Therefore, $p$ cannot be the first neighbor of $v_1$. This is a contradiction. 
\end{claimproof}

From the above two claims, we can say that $|S_1\cup S_2\cup S_3|\leq 17$.  
\end{proof}

Using this we provide approximation and parameterized solutions to the $2$-multipacking problem of a given set of points $P$ in the plane. We use existing techniques for computing the maximum independent set for bounded degree graphs.

The problem of finding the Maximum Independent Set in a bounded degree graph is denoted by \textit{MAX IS-$\Delta$} where the maximum node degree of the graph is bounded above by $\Delta$. The following result states the approximation bound of the \textit{MAX IS-$\Delta$} problem.

\begin{theorem}[Berman and  Fujito \cite{berman1999approximation}]\label{thm:App17}
    \textit{MAX IS-$\Delta$} can be approximated in polynomial time within a ratio arbitrarily close to $(\Delta + 3)/5$ for all $\Delta \geq 2$.
\end{theorem}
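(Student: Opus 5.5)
This statement is quoted from Berman and Fujito~\cite{berman1999approximation}, and the paper uses it as a black box. Still, here is how I would reprove it. The plan is a \emph{local improvement} algorithm analysed by a charging argument; the parameter $t$ of the neighbourhood search is what lets the ratio come arbitrarily close to $(\Delta+3)/5$.

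\textbf{Algorithm and first-order analysis.} Fix a constant $t$. Start from any maximal independent set $I$ of a graph $G$ with maximum degree $\Delta$. Repeatedly look for a $t$-improvement: a set $A\subseteq V\setminus I$ with $|A|\leq t$, $A$ independent, such that $(I\setminus N(A))\cup A$ is independent and larger than $I$. Each step increases $|I|$, so there are at most $n$ steps, each taking $n^{O(t)}$ time; this is polynomial. For the analysis, let $O$ be a maximum independent set and let $B$ be the bipartite graph between $O\setminus I$ and $I\setminus O$. Every vertex of $O\setminus I$ has at least one neighbour in $I$ by maximality, and every vertex of $I$ has at most $\Delta$ neighbours in $O$. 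Counting edges of $B$ then gives the naive bound $\Delta$.

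\textbf{Refining via local optimality.} Local optimality rules out every connected substructure of $B$ of size at most $t$ in which the $O$-side outnumbers the $I$-side. Call $u\in O\setminus I$ with exactly one $I$-neighbour a \emph{1-vertex}. Each $w\in I$ can have at most one 1-neighbour, since otherwise swapping gives a 2-improvement. Combined with the degree bound, and with the standard fact that small components of $B$ cannot be $O$-heavy, this brings the ratio to about $(\Delta+2)/3$, as in Halldórsson--Radhakrishnan. To reach $(\Delta+3)/5$ I would introduce a weighted charging scheme. Every $u\in O\setminus I$ distributes one unit of charge over its $I$-neighbours: all of it to a unique neighbour, or split evenly or by a fixed rule when it has two or more neighbours. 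The goal is to show that each $w\in I$ receives at most $(\Delta+3)/5+\varepsilon$ units on average over each component of $B$.

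\textbf{Main obstacle.} The hard step is the averaging over components. A single vertex of $I$ can receive large charge, e.g.\ about $\Delta/2$ from $\Delta$ vertices of degree $2$ in $B$. One must show that such heavy vertices are compensated by light vertices of $I$ within bounded distance in $B$. Otherwise one finds a bounded-size $O$-heavy subgraph, i.e.\ a $t$-improvement, which contradicts local optimality. I would first try a potential function of the form $|I|$ plus a small penalty on configurations of low-degree vertices, and prove a lemma: in $B$ the ratio of edges to $I$-vertices is at most $(\Delta+3)/5$ up to an error that vanishes as $t\to\infty$. This lemma would follow a girth-type argument: if $B$ is too dense then it contains a short $O$-heavy cycle-and-tree structure. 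The small cases $\Delta\in\{2,3\}$ would be checked directly, since there the bound $(\Delta+3)/5\le 6/5$ must come from near-exact local search on paths and cycles.
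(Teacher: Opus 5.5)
The paper gives no proof of this statement: it imports the result from Berman and Fujito and uses it as a black box. Your reconstruction has a genuine gap, because the algorithm you describe cannot reach the stated ratio. Local search with improvements of constant size $t$, measured by $|I|$, has locality gap $\Delta/2$ on graphs of maximum degree $\Delta$, and $\Delta/2>(\Delta+3)/5$ for every $\Delta\ge 3$.

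To see this, take a $\Delta$-regular graph $H$ of girth greater than $t$ (such graphs exist by Erd\H{o}s--Sachs) and subdivide every edge once to get $G$. Let $I=V(H)$ and let $O$ be the set of subdivision vertices. Then:
\begin{itemize}
\item $G$ has maximum degree $\Delta$, and $I$ is a maximal independent set;
\item for every $A\subseteq O$ with $|A|\le t$, the corresponding edges of $H$ form a forest, so $|N(A)\cap I|\ge |A|+1$ and no $t$-improvement exists;
\item yet $O$ is independent and $|O|=\frac{\Delta}{2}|I|$.
\end{itemize}

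In your charging picture, every vertex of $I$ receives exactly $\Delta/2$ and there are no light vertices anywhere to compensate. Every $O$-heavy substructure of $B$ contains a cycle of $H$, so it has more than $t$ vertices. Hence the lemma you plan to prove is false: a girth argument only yields $O$-heavy structures of logarithmic size, never of size $O(t)$. Replacing $|I|$ by $|I|$ plus a small penalty does not help, since every $t$-swap from this $I$ loses at least one vertex.

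The same example refutes two other steps:
\begin{itemize}
\item Your intermediate claim that local optimality gives about $(\Delta+2)/3$ fails. The Halld\'orsson--Radhakrishnan bound is for the min-degree greedy algorithm and uses the greedy order, and $\Delta/2>(\Delta+2)/3$ once $\Delta\ge 5$.
\item Your plan to check $\Delta=3$ directly fails too, since there the gap is $3/2>6/5$.
\end{itemize}

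The missing idea is the device behind Berman and F\"urer's algorithm, on which Berman and Fujito build: also search for improvements of size $k=O(\varepsilon^{-1}\log n)$. This stays polynomial because $\Delta$ is a constant. A graph of maximum degree $\Delta$ has at most $n(e\Delta)^{k}$ connected vertex sets of size $k$, so they can be enumerated.

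For the analysis, write $O_1,O_2$ for the vertices of $O\setminus I$ with exactly one, respectively two, $I$-neighbours. Treat each vertex of $O_2$ as an edge and each vertex of $O_1$ as a loop of a multigraph on $I$. A Moore-bound (Erd\H{o}s--P\'osa type) argument shows the following: if this multigraph has at least $1+\varepsilon$ edges per vertex, it contains a subgraph with more edges than vertices on $O(\varepsilon^{-1}\log n)$ vertices, and that subgraph is an improvement.

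This alone gives $|O_1|+|O_2|\le(1+\varepsilon)|I|$. Combined with your edge count $|O_1|+2|O_2|+3|O_{\ge 3}|\le\Delta|I|$, it yields a ratio of about $(\Delta+2)/3$. Reaching $(\Delta+3)/5$ additionally requires mixing a bounded number of vertices with three or more $I$-neighbours into such logarithmic structures, with a considerably more delicate weighting. Your charging scheme could serve as a template for that last step, but only after the search neighbourhood has been enlarged beyond constant size.
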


Theorem \ref{thm:App17} and Lemma \ref{lem:degree_bounded_by_17} yield the following.

\begin{restatable}{theorem}{thmapprox}\label{thm:approx}
 A maximum $2$-multipacking of a given set of points $P$ in $\mathbb{R}^2$ can be approximated in polynomial time within a ratio arbitrarily close to $4$.
\end{restatable}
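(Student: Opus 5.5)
The plan is to combine Lemma~\ref{lem:graph} with Lemma~\ref{lem:degree_bounded_by_17} and then invoke Theorem~\ref{thm:App17} as a black box.

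First, given $P$ with $|P|=n$, we build $G_P$ in polynomial time. For each $v\in P$ we find its first and second nearest neighbours $u_1,u_2$, which takes $O(n^2)$ time naively, or $O(n\log n)$ with $k$-nearest-neighbour structures. We then add the triangle $v u_1 u_2$. By Lemma~\ref{lem:graph}, $2$-multipackings of $P$ are exactly the independent sets of $G_P$. Hence $\MP_2(P)=\alpha(G_P)$, and any independent set we compute is directly a $2$-multipacking of the same size. It follows that an approximation ratio for maximum independent set on $G_P$ transfers verbatim to $2$-multipacking.

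Second, Lemma~\ref{lem:degree_bounded_by_17} gives $\Delta(G_P)\le 17$. So the problem is an instance of \textit{MAX IS-$17$}. (If the maximum degree is smaller, we may either treat it as \textit{MAX IS-$17$} or note that the ratio only improves.) By Theorem~\ref{thm:App17} with $\Delta=17$, there is a polynomial-time algorithm achieving a ratio arbitrarily close to $(17+3)/5=4$. That is, for every $\varepsilon>0$ it outputs an independent set $I$ with $|I|\ge \alpha(G_P)/(4+\varepsilon)$. We output the points of $P$ corresponding to $I$, which is a $2$-multipacking of size at least $\MP_2(P)/(4+\varepsilon)$.

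There is no real obstacle, since all the substantive work sits in the degree bound of Lemma~\ref{lem:degree_bounded_by_17}. The only point needing care is that the Berman--Fujito guarantee is applied to a graph whose maximum degree is at most $17$ rather than exactly $17$. This is fine, because the theorem concerns graphs with degree bounded above by $\Delta$. One should also note that the running time stays polynomial for each fixed $\varepsilon$.
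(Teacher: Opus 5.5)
Your proposal is correct and is exactly the paper's argument: Lemma~\ref{lem:graph} identifies $2$-multipackings of $P$ with independent sets of $G_P$, Lemma~\ref{lem:degree_bounded_by_17} gives $\Delta(G_P)\le 17$, and Theorem~\ref{thm:App17} with $\Delta=17$ yields the ratio $(17+3)/5=4$. Your remarks on building $G_P$ in polynomial time and on applying Berman--Fujito to graphs of maximum degree at most $17$ supply routine details that the paper leaves implicit.
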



In 2019, Bonnet et al. have shown the following:

\begin{theorem}[Bonnet et al.\cite{bonnet2019maximum}]\label{thm:FPT_for_bounded_degree_graph}
    For graphs of $n$ vertices with degree bounded by $\Delta$, we can compute an independent set of size $k$ in  time $(\Delta+1)^kn^{O(1)}$, if it exists.
\end{theorem}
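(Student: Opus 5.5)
The statement to prove is Theorem~\ref{thm:FPT_for_bounded_degree_graph}: in an $n$-vertex graph $G$ of maximum degree at most $\Delta$, an independent set of size $k$ can be found in time $(\Delta+1)^k n^{O(1)}$ if one exists. I would prove it with a bounded search tree that branches on closed neighbourhoods.

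The key observation is the following. Let $v$ be any vertex of $G$ and suppose $G$ has an independent set $I$ of size $k$. Then some maximum-size independent set of $G$ meets $N[v]$. Indeed, if $I\cap N[v]=\emptyset$, then $I\cup\{v\}$ is still independent, so $I$ was not maximal. Applying the same argument to a maximum independent set shows that some vertex $u\in N[v]$ lies in a maximum independent set.

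The recursive procedure $\mathrm{IS}(G,k)$ is then:
\begin{enumerate}
\item If $k\le 0$, return the empty set.
\item If $G$ is empty, report failure.
\item Otherwise pick a vertex $v$ of minimum degree. For each $u\in N[v]$, call $\mathrm{IS}(G-N[u],\,k-1)$. If some call returns a set $S$, return $S\cup\{u\}$.
\end{enumerate}

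Correctness follows by induction on $k$. Soundness holds because $u$ has no neighbours in $G-N[u]$. Completeness holds by the observation above. If $I$ is an independent set of size $k$ and $u\in I\cap N[v]$ (after replacing $I$ by a maximum one and truncating to $k$ elements containing $u$), then $I\setminus\{u\}\subseteq V(G)\setminus N[u]$ is an independent set of size $k-1$ in $G-N[u]$.

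For the running time, each node of the search tree branches into $|N[v]|\le\Delta+1$ children. The degree bound is preserved under vertex deletion. The recursion depth is at most $k$, so the tree has at most $(\Delta+1)^k$ leaves, and each node takes polynomial time. This gives $(\Delta+1)^k n^{O(1)}$ overall.

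The only delicate step is the exchange argument that some optimal solution contains a vertex of $N[v]$. It must be phrased carefully so that we recurse on $k-1$ rather than on the size of a maximum independent set. Choosing a maximum independent set and noting that it is maximal handles this cleanly.

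As a consequence, Lemma~\ref{lem:degree_bounded_by_17} together with Lemma~\ref{lem:graph} yields an $18^k n^{O(1)}$ algorithm for $2$-multipacking in $\mathbb{R}^2$.
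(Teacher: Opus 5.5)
Your proof is correct. The paper gives no proof of this statement: it cites Bonnet et al.\ and uses the result only as a black box, combining it with Lemma~\ref{lem:degree_bounded_by_17} and Lemma~\ref{lem:graph} to obtain the $18^k n^{O(1)}$ bound of Theorem~\ref{thm:FPT}.

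Your bounded search tree is the standard self-contained argument for the cited fact, and every step checks out:
\begin{itemize}
\item Every maximal independent set meets $N[v]$.
\item Branching over the at most $\Delta+1$ vertices $u\in N[v]$ and recursing on $G-N[u]$ with parameter $k-1$ is sound and complete.
\item Since the degree bound is inherited by induced subgraphs, the tree has at most $\sum_{d=0}^{k}(\Delta+1)^d\le (k+1)(\Delta+1)^k$ nodes.
\end{itemize}
The citation buys brevity. Your version makes the thesis independent of the external reference and shows that the base $18$ in Theorem~\ref{thm:FPT} is exactly the branching factor $|N[v]|\le 17+1$.

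Two simplifications are available. First, choosing $v$ of minimum degree is not needed; any vertex works, and minimum degree only lowers the branching factor to $\delta(G)+1$. Second, you can skip the detour through maximum independent sets with a direct exchange. If an independent set $I$ of size $k$ avoids $N[v]$, then $(I\setminus\{x\})\cup\{v\}$ for any $x\in I$ is an independent set of size $k$ that meets $N[v]$.
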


Lemma \ref{lem:degree_bounded_by_17} and Theorem \ref{thm:FPT_for_bounded_degree_graph} yield the following result.

\begin{restatable}{theorem}{thmFPT}\label{thm:FPT}
 A $2$-multipacking of size $k$ of a given set of $n$ points in $\mathbb{R}^2$ can be computed in time $18^kn^{O(1)}$, if it exists.
\end{restatable}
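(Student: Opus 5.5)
The claim follows directly from Lemma~\ref{lem:degree_bounded_by_17} and Theorem~\ref{thm:FPT_for_bounded_degree_graph}, with Lemma~\ref{lem:graph} providing the translation between the two problems. Given the point set $P$ with $|P|=n$, the first step is to build the graph $G_P$. For each $v\in P$, compute its first and second nearest neighbours $u_1,u_2$; by brute force this takes $O(n^2)$ time, or $O(n\log n)$ with standard nearest-neighbour machinery, which is polynomial either way. Then add the three edges $vu_1$, $vu_2$ and $u_1u_2$. This gives a graph on $n$ vertices with at most $3n$ edges.

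By Lemma~\ref{lem:graph}, a set $M\subseteq P$ is a $2$-multipacking of $P$ exactly when the corresponding vertices form an independent set in $G_P$. So deciding whether a $2$-multipacking of size $k$ exists, and producing one, is the same as deciding whether $G_P$ has an independent set of size $k$ and producing it. Lemma~\ref{lem:degree_bounded_by_17} says that $\Delta(G_P)\le 17$.

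Applying Theorem~\ref{thm:FPT_for_bounded_degree_graph} with $\Delta=17$ finds an independent set of size $k$, if one exists, in time $(17+1)^k n^{O(1)}=18^k n^{O(1)}$. Translating it back through Lemma~\ref{lem:graph} gives the required $2$-multipacking. If the algorithm reports that no such set exists, then no $2$-multipacking of size $k$ exists. The construction of $G_P$ is polynomial and is absorbed into the $n^{O(1)}$ factor.

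All the substantive work lies in the degree bound of Lemma~\ref{lem:degree_bounded_by_17}, which is already established. For completeness, Theorem~\ref{thm:FPT_for_bounded_degree_graph} can itself be obtained by a simple branching argument. Pick a vertex $v$ of minimum degree. Some maximum independent set contains a vertex of $N[v]$, so branch over the at most $\Delta+1$ choices $w\in N[v]$. In each branch, put $w$ into the set, delete $N[w]$, and decrease $k$ by one. This recursion has depth $k$ and branching factor at most $\Delta+1$, which gives the stated running time.
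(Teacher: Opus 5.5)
Your proof is correct and follows the paper's argument exactly. The paper likewise obtains the result by combining Lemma~\ref{lem:graph} (a $2$-multipacking is an independent set in $G_P$), Lemma~\ref{lem:degree_bounded_by_17} ($\Delta(G_P)\le 17$), and Theorem~\ref{thm:FPT_for_bounded_degree_graph} with $\Delta=17$ to get $18^k n^{O(1)}$; your extra details on building $G_P$ and the $(\Delta+1)$-way branching on $N[v]$ are sound, since every maximal independent set meets $N[v]$.
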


\section{Conclusion}
\label{sec:conclusion_geo_multi}
In this chapter, we study some multipacking problems in a geometric setting under the Euclidean distance metric.  An intriguing open topic in this area is addressing the difficulty of computing a maximum multipacking (or, $(n-1)$-multipacking) in general. We know neither the algorithmic difficulties nor the limitations as of yet for the multipacking (or, $(n-1)$-multipacking) problem.

We have determined the value of $\MMP(2)$. A natural direction for future research is determining the exact values of $\MMP(t)$ for $t \geq 3$.

Additionally, the definition of $r$-multipacking in geometry, inspired by obnoxious facility location and graph-theoretic multipacking, can be generalized by introducing a fractional parameter $k \in [0,1]$. Specifically, the condition on a set $M$ could be generalized to $|N_s[v] \cap M| \leq k(s+1)$ for each $v \in P$. Varying $k$ may lead to different bounds on the multipacking number and introduce new computational challenges.

Further extensions of this work could explore multipackings in higher-dimensional Euclidean spaces $\mathbb{R}^d$, where geometric constraints and packing densities may yield richer structural and algorithmic results.





\chapter{Dominating broadcast in $d$-dimensional space} \label{chapter:geo_broad}\hypertarget{chapter:introhref}{}

Here we study  the \textsc{Minimum dominating broadcast} (MDB) problem for geometric point set in $\mathbb{R}^d$ where the pairwise distance between the points are measured in Euclidean metric. We present an $O(n\log n)$-time algorithm for solving the MDB problem on a point set in $\mathbb{R}^d$. We provide bounds of the broadcast domination  number using kissing number. Further, we prove tight upper and lower bounds of the broadcast domination  number of a point set in $\mathbb{R}^2$.

\section{Chapter overview}
In Section \ref{sec:Preliminaries_geo_broad}, we recall some definitions and notations. In Section \ref{sec:Broadcast domination in a 2D plane}, we present some observations on the minimum dominating broadcast for a point set in $\mathbb{R}^d$. In Section~\ref{sec:algo}, there is an $O(n\log n)$  time algorithm to compute the same for $n$ points in $\mathbb{R}^d$. Then we discuss some upper and lower bounds on the broadcast domination number of a point set in Section~\ref{sec:bound}. Finally, we conclude this chapter in Section \ref{sec:conclusion_geo_broad}.

\section{Preliminaries}\label{sec:Preliminaries_geo_broad}
We consider a point $v \in P$. We denote the subset of $P$ that has only the $r$ nearest points of $v$ and the point $v$ itself by $N_r[v]$. Formally, $N_r[v] = \{ p \in P | \text{ $p$ is the $i$-th neighbor of $v$}, 1 \leq i \leq r \} \cup \{v\}$. Hence, $|N_r[v]|=r+1$ for $1\leq r\leq n-1$.

A function 	$ f : P \rightarrow \{0, 1, 2, \dots , n-1\} $ is called a \emph{broadcast} on $P$ and $v\in P$ is a \textit{tower} of $P$ if $f(v)>0$.



\begin{definition}[Dominating Broadcast]
Let $f$ be a broadcast on $P$. For each point $ u \in P  $, if  there exists a point $ v $ in $ P $ (possibly, $ u = v $) such that $ f (v) > 0 $ and $ u\in N_{f(v)}[v] $, then $ f $ is called a \emph{dominating broadcast}  or \emph{broadcast domination} on $ P$.
\end{definition} 

 The \textit{cost} of the broadcast $f$ is the quantity $ \sigma(f)$, which is the sum of the weights of the broadcasts over all the points in $ P $. Therefore, $\sigma(f)=  \sum_{v\in P}f(v)$.

\begin{definition}[Broadcast Domination Number]  
 The minimum cost of a dominating broadcast in $P$ (taken over all dominating broadcasts)  is the \textit{broadcast domination number} of $P$, denoted by $ \gamma_{b}(P) $.  So, $ \gamma_{b}(P) = \min_{f\in D(P)} \sigma(f)= \min_{f\in D(P)} \sum_{v\in P}f(v)$, where $D(P)$ is the set of all dominating broadcasts on $P$.
\end{definition}

\begin{definition}[Minimum Dominating Broadcast]
A \emph{minimum dominating broadcast} (or \emph{optimal broadcast}, or \emph{minimum broadcast}) on $ P$ is a dominating broadcast on $ P$ with cost equal to $ \gamma_{b}(P)$.
\end{definition}

\medskip
\noindent
\fbox{%
  \begin{minipage}{\dimexpr\linewidth-2\fboxsep-2\fboxrule}
  \textsc{ Minimum Dominating Broadcast} problem
  \begin{itemize}[leftmargin=1.5em]
    \item \textbf{Input:} A point set $P$ in $\mathbb{R}^d$.
    \item \textbf{Output:} A minimum dominating broadcast $f$ on $P$.
  \end{itemize}
  \end{minipage}%
}
\medskip




The \textit{kissing number}, denoted by $\tau_d$, is the maximum number of non-overlapping (i.e., having disjoint interiors)  unit balls in the Euclidean $d$-dimensional space $\mathbb{R}^d$ that can touch a given unit ball. For a survey of kissing numbers, see \cite{boyvalenkov2015survey}. 

The values of $\tau_d$ for small dimensions are known: $\tau_1=2$, $\tau_2=6$, $\tau_3=12$ \cite{anstreicher2004thirteen,boroczky2003newton,maehara2001isoperimetric,schutte1952problem}, $\tau_4=24$ \cite{musin2008kissing} etc. However, the exact value of $\tau_d$ for general $d$ remains unknown. Nevertheless, bounds for $\tau_d$ are available.

\section{Observations on the dominating broadcast for point set in $\mathbb{R}^d$}
\label{sec:Broadcast domination in a 2D plane}

In this section, we prove a lemma and a theorem. Using those we give an algorithm to find a minimum broadcast for a point set.

\begin{lemma}\label{lem:f(w)=0}
Let $f$ be a minimum broadcast on $P$. If $f(u)>1$ for some $u \in P$, then $f(w)=0$, for each $w \in N_{f(u)}[u]\setminus\{u,u_1\}$, where $u_1$ is the nearest neighbor of $u$.    
\end{lemma}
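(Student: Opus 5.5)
We argue by contradiction through an exchange argument. Suppose $f$ is a minimum broadcast with $f(u)=k>1$, and suppose some $w\in N_k[u]\setminus\{u,u_1\}$ has $f(w)=s>0$. We will build a dominating broadcast $g$ with $\sigma(g)<\sigma(f)$, contradicting minimality.

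The construction is as follows. Set $g(u)=k-1$, set $g(w)=\max\{s,1\}=s$ (no change), and leave every other value equal to $f$. Lowering $u$ from $k$ to $k-1$ only undominates the single point $u_k$, the $k$-th neighbour of $u$, because $N_{k-1}[u]=N_k[u]\setminus\{u_k\}$. Domination must therefore be restored for $u_k$ alone.

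There are two cases.

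\begin{itemize}
\item If $w=u_k$, then $w$ is a tower with $f(w)\ge1$, so $w\in N_{f(w)}[w]$ and $w$ dominates itself. Hence $g$ is dominating with $\sigma(g)=\sigma(f)-1$, a contradiction.
\item If $w\neq u_k$, then $w=u_j$ for some $2\le j\le k-1$, which requires $k\ge3$. Here we instead set $g(u)=k-1$ and $g(w)=s+1$. This still costs $\sigma(f)$, so we need a sharper choice.
\end{itemize}

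A better move in the second case is to remove $w$'s broadcast and check what it covers. This seems hard, so we fall back on counting: $u$ broadcasting $k$ already covers $w$. The real saving must come from $N_s[w]$.

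The key step, which is also the main obstacle, is to show that the points covered by $w$ but not by $u$ can be absorbed more cheaply. We would first try to prove that $N_s[w]\subseteq N_{k+s}[u]$ holds in cardinality terms. Concretely, we set $g(w)=0$ and $g(u)=k+s-1$. We then need $N_s[w]\setminus N_k[u]\subseteq N_{k+s-1}[u]$, which holds if $|N_s[w]\setminus N_k[u]|\le s-1$.

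This inequality follows because $N_s[w]$ has $s+1$ points and contains at least two points of $N_k[u]$. The first is $w$ itself. For the second, we claim that $u$ or $u_1$ lies in $N_s[w]$, or else some other point of $N_k[u]$ is among $w$'s $s$ nearest neighbours. Proving this claim is the delicate geometric point, and the hypothesis $w\neq u_1$ is exactly what is used there. If $w=u_1$, the claim can fail, since $u_1$ might be the only shared point; hence the exclusion of $u_1$.

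For $w\neq u_1$ with $s\ge1$, we would argue that if $N_s[w]\cap N_k[u]=\{w\}$, then $u$ is farther from $w$ than all $s$ of $w$'s nearest neighbours, while $w$ is within $u$'s top $k$. Combining this with $d(u,u_1)<d(u,w)$ and the triangle inequality should force $u_1$ to be among $w$'s $s$ nearest neighbours, or else yield a strict saving. Once the inequality is established, the new broadcast has cost $\sigma(f)-1$, which gives the contradiction.
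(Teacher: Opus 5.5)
Your Case 1 ($w=u_k$) is correct, but Case 2 ($w=u_j$ with $2\le j\le k-1$) is left unproved, and the route you sketch for it cannot be completed. Setting $g(w)=0$ and $g(u)=k+s-1$ needs the set inclusion $N_s[w]\subseteq N_{k+s-1}[u]$. A count such as $|N_s[w]\setminus N_k[u]|\le s-1$ does not give this. Neighbourhoods here are defined by distance \emph{rank} from the centre, so a near neighbour of $w$ can have arbitrarily large rank as seen from $u$.

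Your auxiliary geometric claim is also false. Take $u$ at the origin and $u_1$ very close to $u$. Put $w=u_2$ at distance $1$, and put $u_3,\dots,u_k$ at distances in $(1,1.05)$ from $u$ but angularly far from $w$. Finally, place $s$ points clustered just beyond $w$, at distance about $1.1$ from $u$. Then $N_s[w]\cap N_k[u]=\{w\}$, and neither $u$ nor $u_1$ is among the $s$ nearest neighbours of $w$. Your fallback ``or else yield a strict saving'' is exactly what has to be proved, so the key step is both unproven and insufficient.

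The missing idea is to split $u$'s broadcast into unit broadcasts rather than shrinking or enlarging it. The facts to exploit are that $|N_k[u]|=k+1$ and that every tower dominates itself. Let $X$ and $Y$ be the towers and the non-towers of $f$ in $N_k[u]\setminus\{u,u_1\}$, so $|X|+|Y|=k-1$ and $w\in X$. Define $f'(u)=1$, which covers $u$ and $u_1$; set $f'(v)=1$ for every $v\in Y$; and let $f'=f$ elsewhere.

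Points outside $N_k[u]$ were never dominated by $u$. Every tower of $f$ other than $u$ keeps its value, so $f'$ is still dominating. Its cost satisfies $\sigma(f)-\sigma(f')=k-(1+|Y|)=|X|\ge 1$, which contradicts minimality.

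This is the paper's argument. It treats every position of $w$ at once, and your Case 1 is a special instance of it. It also shows the real reason $u_1$ is excluded: the unit broadcast at $u$ already covers $u_1$, so a tower at $u_1$ gives no saving. The exclusion has nothing to do with how $N_s[w]$ meets $N_k[u]$.
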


\begin{proof}
 Let $f(u) = k > 1$, and $N_k[u] = \{u, u_1, u_2, \hdots , u_k\}$, where $u_i$ is the $i$-th neighbor of $u$. Suppose $f(u_i) \geq 1$ for some $u_i \in N_{k}[u]\setminus\{u,u_1\}$. Let $ X = \{v \in P : v \in N_{k}[u]\setminus\{u,u_1\}$ and $f(v)>0 \}$ and
 $Y = \{v \in P : v \in N_{k}[u]\setminus\{u,u_1\}$ and $f(v)=0 \}$.
 Therefore, $|X| \geq 1$ and $|X| + |Y| = k-1$. Now, define a new broadcast $f'$ on $P$, where  $f'(v)=1$ if  $v\in Y \cup \{u\}$, otherwise $f'(v)=f(v)$. Note that, $f'(v)$ is a dominating broadcast. Observe that, $ \sum_{v \in Y \cup \{u\}} f(v) = f(u) + \sum_{v \in Y} f(v) = k + 0 = k$. Also, $\sum_{v \in Y \cup \{u\}} f'(v) = f'(u) + \sum_{v \in Y} f'(v) = 1 +  \sum_{v \in Y} 1 = 1 + |Y|$. Therefore, $\sum_{v \in Y \cup \{u\}} f(v) =  k > k-1 = |x| + |y| \geq 1 + |Y| = \sum_{v \in Y \cup \{u\}} f'(v)$. Thus, $\sigma(f) > \sigma(f')$. This is a contradiction, since $f$ is a minimum broadcast.  
\end{proof}


\thmbroadcastzeroone*

\begin{proof}
   Let, $V_f = \{ v \in P : f(v) > 1 \}$. Let $f$  be a minimum broadcast on $P$ whose $|V_f|$ is minimum. To prove the theorem, it is enough to show that, $V_f = \phi$.

 Suppose $V_f \neq \phi$. Let $u\in V_f$ and $N_{f(u)}[u] = \{ u, u_1, \hdots , u_{f(u)}\}$ where $u_i$ is the $i$-th neighbor of $u$. By lemma \ref{lem:f(w)=0}, we have, $f(u_i) = 0$, for each $2\leq i \leq f(u)$. Define a new dominating broadcast $f'$ on $P$, where $f'(v)=1$ if $v\in N_{f(u)}[u]\setminus\{u_1\}$, otherwise $f'(v)=f(v)$. Here, $f'$ is a dominating broadcast and $ \sum_{v \in P} f(v) = \sum_{v \in P} f'(v)$. Therefore, $f'$ is a minimum broadcast. Note that, $V_{f'} = V_f \setminus \{u\}$. Therefore, $|V_{f'}| < |V_f|$. This is a contradiction, since $V_f$ was minimum. Therefore, $V_f = \phi$. Hence proved.
\end{proof}


From Theorem \ref{thm:f(u)in0,1},
we say $f$ is a $(0,1)$-\textit{minimum broadcast} on $P$ if $f$ is a minimum broadcast  where $f(u)\in \{0,1\}$ for each point $ u \in P  $.

\section{Algorithm to find a minimum broadcast}
\label{sec:algo}
We say, $f$ is a \textit{(0,1)-dominating broadcast} on $P$, if $f$ is a dominating broadcast and $f(p) \in \{0,1\}$ for each point $p \in P$.
The algorithm for finding minimum broadcast, based on theorem \ref{thm:f(u)in0,1}, takes a point set $P$ in $\mathbb{R}^d$ as input and outputs a $(0,1)$-minimum broadcast $f$. Consequently we have, $\sum_{p \in P} f(p) = \gamma_b(P)$. So we construct a \emph{Nearest-Neighbor-Graph (NNG)} \cite{eppstein1997nearest} of the point set $P$, and denote it by $G(V,E)$ where $V=P$.  If $p,q\in V$, then there is a directed edge from $p$ to $q$ whenever $q$ is a nearest neighbor of $p$ ($q$ is a point whose distance from $p$ is minimum among all the given points other than $p$ itself). Each vertex of the graph will have exactly one outgoing edge. The only cycles in $G$ are $2$-cycles. For $|V| \geq 2$, each weakly connected component\footnote{In a directed graph, a \textit{weakly connected component} is a maximal subgraph in which every pair of vertices is connected by an undirected path—that is, if the directions of the edges are ignored, the subgraph is connected.} $C$ of $G$ contains exactly one $2$-cycle. This pair of vertices is called the bi-root of $C$. For more information on \emph{Nearest-Neighbor-Graph (NNG)} see \cite{eppstein1997nearest}. 

Now we prove that the \emph{minimum edge cover (MEC)} of this graph will correspond to a minimum broadcast. A \emph{minimum edge cover} of a graph is the minimum cardinality subset of edges of the graph such that each vertex is adjacent to at least one of the edges in the set. 

\begin{lemma}
    \label{lem:MEC} Let $P$ be a point set on $\mathbb{R}^d$ and $G$ be the NNG of $P$. Let $ \rho(G)$ be the minimum edge cover number of $G$. Then $ \gamma_b(P)=\rho(G)$.

\end{lemma}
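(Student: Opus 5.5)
We prove $\gamma_b(P)=\rho(G)$ by establishing the two inequalities separately, using Theorem \ref{thm:f(u)in0,1} to restrict attention to $(0,1)$-minimum broadcasts. Throughout, we regard $G$ as an undirected graph, treating each 2-cycle as a single edge. Under this view, $G$ is a forest with no isolated vertices (assuming $|P|\geq 2$), so a minimum edge cover exists.

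\textbf{Step 1: $\gamma_b(P)\le \rho(G)$.} Fix a minimum edge cover $E'$ of $G$. Every edge $pq\in E'$ arises from a nearest-neighbour relation, so we may write it as $p\to q$ with $q=n_1(p)$. Orient each edge this way and put $f(p)=1$ on the tail of each edge of $E'$, and $f=0$ elsewhere.

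Since $f(p)=1$, we have $N_1[p]=\{p,q\}$, so $p$ dominates both endpoints of its edge. Because $E'$ covers every vertex, $f$ is a dominating broadcast. Distinct edges may share a tail, but that only lowers the cost, so $\sigma(f)\le |E'|=\rho(G)$.

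\textbf{Step 2: $\rho(G)\le\gamma_b(P)$.} By Theorem \ref{thm:f(u)in0,1}, take a $(0,1)$-minimum broadcast $f$, and let $T=\{v:f(v)=1\}$, so that $|T|=\gamma_b(P)$. Each tower $v$ dominates exactly $N_1[v]=\{v,n_1(v)\}$, which is the edge $v\,n_1(v)$ of $G$. Set $E_T=\{v\,n_1(v):v\in T\}$.

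Since every point is dominated, every vertex of $G$ lies on some edge of $E_T$, so $E_T$ is an edge cover. Hence $\rho(G)\le |E_T|\le |T|=\gamma_b(P)$.

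\textbf{Main subtlety.} The delicate point is the correspondence between radius-one broadcasts and edges of $G$, which rests on two facts. First, $N_1[v]$ is exactly the NNG edge at $v$; this holds because pairwise distances are distinct, so the nearest neighbour is unique. Second, a bi-root pair produces a single undirected edge, which is why the 2-cycle is collapsed.

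Step 2 depends essentially on Theorem \ref{thm:f(u)in0,1}. Without it, a tower broadcasting with $f(v)=k>1$ covers $k+1$ points that need not form a star in $G$, and the argument would fail. With the theorem in hand, both directions reduce to the identification above, and equality follows.
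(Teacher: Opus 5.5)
Your proposal is correct and follows the paper's proof essentially step for step. Both arguments take a $(0,1)$-minimum broadcast and map each tower $v$ to its edge $v\,n_1(v)$, giving an edge cover of size at most $\gamma_b(P)$; conversely, both place a unit tower at the tail of each edge of a minimum edge cover. The only difference is presentational: you collapse the bi-root $2$-cycle into a single undirected edge, whereas the paper keeps $G$ directed and uses that each vertex has exactly one outgoing edge.
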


\begin{proof} 
Let $f$ be a $(0,1)$-minimum broadcast and  $T = \{ v\in P : f(v)=1 \}$. 
Note that, each tower in $T$ dominates exactly one point which is the nearest neighbor from that tower. Let $v$ be a tower that broadcasts $u$ through the edge $\overrightarrow{vu}$ of $G$. Let $E$ be the edge set of $G$. We denote $e_v=\overrightarrow{vu}$. Let $E_1=\{e_v\in E: v\in T\}$. Therefore, $|E_1|=|T|=\gamma_b(P)$. Now we want to show that $E_1$ is an edge cover of $G$. For each vertex $w$  in $G$, it is enough to show that there is an adjacent edge of $w$ which is in $E_1$. If $w\in T$, then the edge $e_w\in E_1$. Now consider the case when $w\notin T$. Then there is a tower $x\in T$ which broadcasts $w$. Thus $e_x=\overrightarrow{xw}\in E_1$. Therefore, $E_1$ is an edge cover of $G$. This implies $\rho(G)\leq |E_1|$, therefore $ \rho(G)\leq \gamma_b(P)$.  

Suppose $E_2$ is a minimum edge cover of $G$. Define a broadcast $f'$ on $P$ where \begin{equation}
f'(v)=
    \begin{cases}
        1 & \text{if } \overrightarrow{vu}\in E_2\\
        0 & \text{otherwise. } 
    \end{cases}
\end{equation}
Note that, $f'(v)$ is a dominating broadcast. Moreover, $\sum_{v\in P}f'(v)=|E_2|=\rho(G)$, since  each vertex of $G$ has exactly one outgoing edge.  Therefore,  $ \rho(G)\geq \gamma_b(P)$.
\end{proof}

\RestyleAlgo{ruled}    
\begin{algorithm}[htbp]
    \SetAlgoLined
    \KwIn{A point set $P$ in $\mathbb{R}^d$.}
    \KwOut{A minimum broadcast on $P$.}
    Construct the Nearest neighbor graph $G$ of $P$\;
    Compute the minimum edge cover $M$ of the graph $G$\;
    \For{every edge $e \in M$}{
        \eIf{$e$ is outgoing from a vertex $v$}{
            $f(v)=1$\;
        }{
        $f(v)=0$\;
            }
    }
 \caption{Minimum broadcast}
 \label{algo:2_factor}
\end{algorithm}

 Thus we find the minimum edge cover of $G$ and return the set of the towers in $P$, which gives an $(0,1)$-minimum broadcast and hence a minimum broadcast.

 We take each edge $vu$ from MEC and construct a tower of strength $1$ on the vertex $v$ if the edge goes from $v$ to $u$. We do this for every edge in MEC.   For a more formal reading of the algorithm, see Appendix[\ref{algo:2_factor}].

   For a point set $P$ in $\mathbb{R}^2$, the NNG of $P$ can be computed in $O(n\log n)$ time~\cite{callahan1993optimal}. To compute the NNG of a point set $P$ in $\mathbb{R}^d$, there is an $O(n \log n)$-time algorithm, but with a constant depending exponentially on the dimension \cite{clarkson1983fast,vaidya1989n}. Note that, the NNG, $G$ of a point set $P$ in $\mathbb{R}^d$ has only one directed cycle for each connected component of $G$, also the cycle has length $2$ (See~\cite{eppstein1997nearest}).  Therefore, if we think of those directed cycles as edges, then $G$ can be seen as a forest (collection of trees). It is known that maximum matching in trees can be computed in $O(n)$ time~\cite{savage1980maximum}. Therefore, maximum matching in $G$ can be computed in $O(n)$ time. The computation of MEC (Minimum Edge Cover) in a graph typically begins with finding a maximum matching and then extending it by greedily adding edges to cover the remaining uncovered vertices. Therefore, MEC in $G$ can be computed in $O(n)$ time. This yields the following.

\thmbroadcastalgortihmcomplexity*

%

\section{Tight bounds on the minimum broadcast domination number}
\label{sec:bound}

In this section we discuss the bounds on domination number of a point set.
\thmgammablowerboundford*

\begin{proof} Let $T = \{ v\in P : f(v)=1 \}$ be the set of towers and $N = \{v \in P : f(v)=0 \}$ be the set of non-tower points.  From Theorem \ref{thm:f(u)in0,1}, there is a $(0,1)$-dominating broadcast $f$.  Therefore $\frac{n}{2} \leq \gamma_b(P) .$ For the tight bound example see fig. \ref{fig:lbd}.

\begin{figure}[htbp]
\centering
\includegraphics[width=\linewidth]{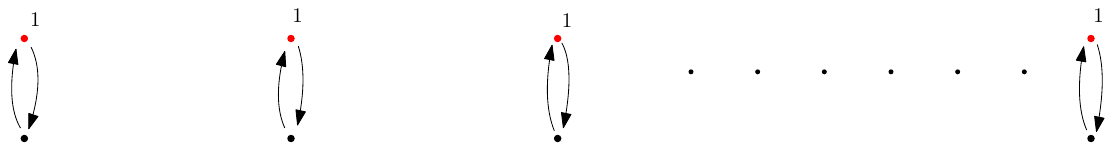}
\caption{Example showing the tight lower bound for minimum broadcast in $\mathbb{R}^d$.}
\label{fig:lbd}
\end{figure}

Now we show the upper bound of $\gamma_b(P)$. Let $f$ be a $(0,1)$-minimum broadcast on $P$. Let $G=(P, E)$ be the \emph{Nearest-Neighbor-Graph (NNG)} of the point set $P$.  Also $|T| = \gamma_b(P)$. Let $N = \{u_1,u_2,\dots,u_k\}$ and let $S_i=\{v\in T: \text{ $v$ broadcasts $u_i$} \}\cup \{u_i\}$ for each $i\in [k]$. Note that, the only non-tower point in $S_i$ is $u_i$.

\vspace{0.2cm}
\noindent
\textbf{Claim 1.} $S_i \cap S_j = \phi$ for $i \neq j$.

\begin{claimproof} Suppose $x \in S_i \cap S_j$. If $x\in N$, then $i=j$. This is not possible. Therefore,  $x\notin N$. If $x\in T$, then $x$ broadcasts $u_i$ and $u_j$. But $x$ broadcasts only one point. Therefore, $i=j$, which is a contradiction. Hence the claim is true.  \end{claimproof} 

 Let $t_i$ be the number of towers in $S_i$ and $\Delta(G)$ be the maximum indegree of the graph $G$. Therefore, $t_i\leq\Delta(G)$ for each $i\in [k]$. Since $G$ is an NNG of a point set in $\mathbb{R}^d$, we have $\Delta(G) \leq \tau_d$, the kissing number on $d$-dimension.  Therefore $t_i\leq \tau_d$ for each $i\in [k]$. Thus, $\frac{1}{\tau_d} \leq \frac{1}{t_i} \implies \frac{\tau_d+1}{\tau_d} \leq \frac{t_i +1}{t_i}  \implies \sum_{i =1}^k \frac{t_i}{t_i +1}(t_i + 1) \leq \sum_{i =1}^k \frac{\tau_d}{\tau_d+1}|S_i|$$ $$ \implies \gamma_b (P) \leq \frac{\tau_d}{\tau_d+1}n.$ 
 \end{proof}

Recall that, we have the assumption: the $r$-th neighbor of every point is unique, for each $r$. When the point set $P$ is in $\mathbb{R}^2$, the kissing number $\tau_2=6$. In that case there exist $6$ points in $P$ whose first neighbors are not unique, which violates our assumption. Therefore, the NNG of $P$ cannot have maximum indegree $6$. So, the NNG $G$ has maximum indegree at most $5$. Thus we have the following.

  \thmgammablowerbound*

For the tight upper bound example in $\mathbb{R}^2$ we refer to fig. \ref{fig:upb}.

\begin{figure}[htbp]
\centering
\includegraphics[width=\linewidth]{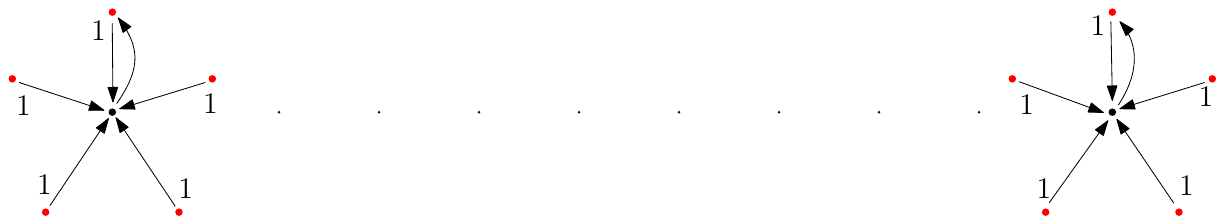}
\caption{Example showing tight upper bound for broadcast domination number in $\mathbb{R}^2$.}
\label{fig:upb}
\end{figure}

\section{Conclusion}
\label{sec:conclusion_geo_broad} We have addressed the problem of computing the minimum dominating broadcast for a point set in $\mathbb{R}^d$. Our polynomial time algorithm relies on computing an edge cover of the  Nearest-Neighbor-Graph (NNG) of the point set. However, the bounds of the broadcast domination number in higher dimensions require exact values of the \emph{kissing numbers} in higher dimensions~\cite{li2025generalized}, which itself is a long standing open problem.

\chapter{Conclusion} \label{chapter:conclude}\hypertarget{chapter:introhref}{}

We study the hardness of the \textsc{Multipacking} problem and we prove that the \textsc{Multipacking} problem is \textsc{NP-complete} for undirected graphs. Moreover, it is \textsc{W[2]-hard} for undirected graphs when parameterized by the solution size. Furthermore, we have shown that the problem is \textsc{NP-complete} and \textsc{W[2]-hard} (when parameterized by the solution size) even for various subclasses.   We study the relationship between $\MP(G)$ and $\gamma_b(G)$ for chordal, cactus, and $\delta$-hyperbolic graphs. We provide approximation algorithms for the \textsc{Multipacking} problem for chordal, cactus, and $\delta$-hyperbolic graphs. Further, we studied computational hardness and algorithms for multipacking and broadcast domination problems in geometric domain. In this chapter, we give some future directions and some preliminary work related to them.

\section{Complexity of \textsc{Multipacking}}

In Chapter~\ref{chapter:NPcomplete}, we have analyzed the computational complexity of the \textsc{Multipacking} problem. Specifically, we proved that the problem is \textsc{NP-complete} and established its \textsc{W[2]}-hardness when parameterized by the size of the solution. Additionally, we extended these hardness results to various well-studied graph classes, such as chordal $\cap$ $\frac{1}{2}$-hyperbolic graphs, bipartite graphs, claw-free graphs, regular graphs, and \textsc{CONV} graphs.

Our findings open up several avenues for future research, including the following questions:

\begin{itemize}
    \item Is there a fixed-parameter tractable (FPT) algorithm for general graphs with respect to treewidth?
    \item Is it possible to design a sub-exponential time algorithm for this problem?
    \item What is the complexity status of the \textsc{Multipacking} problem on planar graphs? In particular, does an FPT algorithm exist for planar graphs when parameterized by the solution size?
    \item While a $(2 + o(1))$-approximation algorithm is already known~\cite{beaudou2019broadcast}, can this be improved to a polynomial-time approximation scheme (PTAS)?
\end{itemize}

Progress on these questions would significantly enhance our understanding of the problem's algorithmic and structural properties and could lead to the development of new algorithmic techniques.

\section{Multipacking on a bounded hyperbolic graph-class: chordal graphs}

In Chapter~\ref{chapter:chordal}, we have shown that for any connected chordal graph $G$, the bound $\gamma_{b}(G) \leq \big\lceil \frac{3}{2} \MP(G) \big\rceil$ holds, whereas $\gamma_b(G) \leq 2\MP(G) + 3$ is the known bound for arbitrary graphs. 

While it is known that for strongly chordal graphs, $\gamma_{b}(G) = \MP(G)$, our results demonstrate that this equality does not extend to all connected chordal graphs. In fact, we have shown that the gap $\gamma_{b}(G) - \MP(G)$ can be made arbitrarily large. To support this, we constructed infinitely many connected chordal graphs $G$ for which $\gamma_{b}(G) / \MP(G) = 10/9$ and where $\MP(G)$ is arbitrarily large. Further, we improve this by constructing infinitely many connected chordal graphs $G$ for which $\gamma_{b}(G) / \MP(G) = 4/3$ and where $\MP(G)$ is arbitrarily large. Hence, we determined that, for connected chordal graphs, the expression \[
\lim_{\MP(G)\to \infty} \sup \left\{ \frac{\gamma_{b}(G)}{\MP(G)} \right\}\in [4/3,3/2].
\] A natural direction for future research is to determine the exact value of the expression for connected chordal graphs, whereas the exact value of the expression is $2$ for general connected graphs \cite{rajendraprasad2025multipacking}. Exploring this ratio for other graph classes, such as the one depicted in Figure~\ref{fig:diagram}, also presents an interesting line of investigation.


\section{Multipacking on an unbounded hyperbolic graph-class: cactus graphs}

In Chapter~\ref{chapter:cactus}, we established that for any cactus graph $G$, $\gamma_b(G) \leq \frac{3}{2}\MP(G) + \frac{11}{2}$. In addition, we presented a $(\frac{3}{2} + o(1))$-approximation algorithm for computing multipacking in cactus graphs.

A natural direction for future work is to explore whether an exact polynomial-time algorithm can be designed for finding a maximum multipacking in cactus graphs.


\section{Complexity of $r$-Multipacking}

In Chapter \ref{chapter:k_multi}, we established the \textsc{NP}-completeness of the \textsc{$r$-Multipacking} problem. Additionally, we extended these hardness results to several well-known graph classes, including planar bipartite graphs with bounded degree, chordal graphs with bounded diameter, and bipartite graphs with bounded diameter.

These results naturally lead to several open problems that merit further investigation:
\begin{enumerate}
    \item What is the position of the \textsc{$r$-Multipacking} problem within the $W$-hierarchy? In particular, does there exist an \textsc{FPT} algorithm when parameterized by the solution size?
    
    \item Does an \textsc{FPT} algorithm exist for general graphs when parameterized by treewidth?
    
    
    \item What is the approximation hardness of the \textsc{$r$-Multipacking} problem?
\end{enumerate}

Resolving these questions would contribute to a deeper understanding of the algorithmic and complexity-theoretic aspects of the \textsc{$r$-Multipacking} problem.

\section{Multipacking in geometric domain}

In Chapter~\ref{chapter:geo_multi}, we have explored multipacking problems within a geometric framework, specifically under the Euclidean distance metric. One compelling and unresolved question in this domain is the complexity of computing a maximum multipacking (or, $(n-1)$-multipacking). At present, little is known about the algorithmic challenges or theoretical limits of this extreme case.

As part of our contributions, we have determined the value of $\MMP(2)$. A natural next step is to investigate the exact values of $\MMP(t)$ for higher values of $t$, especially for $t \geq 3$, which may reveal deeper structural insights.

We also propose a generalized variant of geometric multipacking inspired by connections to obnoxious facility location problems and classical graph-theoretic multipacking. In this generalized version, a fractional parameter $k \in [0,1]$ controls the packing condition, requiring that for every point $v \in P$, the set $M$ satisfies $|N_s[v] \cap M| \leq k(s+1)$. Exploring how the multipacking number behaves under varying values of $k$ could lead to new bounds and computational problems.

Further extensions of this work could explore multipackings in higher-dimensional Euclidean spaces $\mathbb{R}^d$. Studying multipacking in higher dimensions may uncover richer geometric structures and pose new algorithmic challenges related to distance, packing density, and dimensional constraints.

\section{Dominating broadcast in $d$-dimensional space}

In Chapter~\ref{chapter:geo_broad}, we investigated the problem of determining the minimum dominating broadcast for a set of points in $\mathbb{R}^d$. Our approach leverages a polynomial-time algorithm based on computing an edge cover of the Nearest-Neighbor Graph (NNG) constructed from the point set. 

Obtaining tight bounds on the broadcast domination number in higher dimensions poses additional challenges. These bounds are closely tied to the precise values of the \emph{kissing numbers} in higher dimensions~\cite{li2025generalized}, a classical problem in discrete geometry that remains unresolved in general.

\clearpage
\addcontentsline{toc}{chapter}{References}
\newpage
\vspace{10cm}

\Large
\noindent \textbf{List of Publications from the Content of this Thesis} 

\vspace{0.5cm}

\normalsize
In this thesis, the results from the publications or preprints (the list is given below) are included as follows:
\begin{itemize}
    \item Paper [7] in Chapter~\ref{chapter:NPcomplete}
    \item Papers [1] and [4] in Chapter~\ref{chapter:chordal}
    \item Papers [3] and [6] in Chapter~\ref{chapter:cactus}
    \item Paper [8] in Chapter~\ref{chapter:k_multi}
    \item Papers [2] and [5] in Chapter~\ref{chapter:geo_multi}
    \item Paper [9] in Chapter~\ref{chapter:geo_broad}
\end{itemize}

\vspace{0.6cm}

\normalsize
\noindent \textbf{Conference Publications}
\begin{itemize}

\bibitem{das2023caldam}
Sandip Das, Florent Foucaud, Sk Samim Islam, and Joydeep Mukherjee. 
"\textbf{Relation between broadcast domination and multipacking numbers on chordal graphs}". 
In \textit{Conference on Algorithms and Discrete Applied Mathematics (CALDAM)}, 
pages 297--308. Springer, year 2023.


\url{https://doi.org/10.1007/978-3-031-25211-2_23}

\bibitem{das2025caldam} 
Arun Kumar Das, Sandip Das, Sk Samim Islam, Ritam Manna Mitra, and Bodhayan Roy. 
"\textbf{Multipacking in the Euclidean metric space}". 
In \textit{Conference on Algorithms and Discrete Applied Mathematics (CALDAM)}, 
pages 109--120. Springer, year 2025.


\url{https://doi.org/10.1007/978-3-031-83438-7_10}

\bibitem{das2025walcom} 
Sandip Das and Sk Samim Islam. 
"\textbf{Multipacking and broadcast domination on cactus graphs and its impact on hyperbolic graphs}". 
In \textit{International Conference and Workshops on Algorithms and Computation (WALCOM)}, 
pages 111--126. Springer, year 2025.


\url{https://doi.org/10.1007/978-981-96-2845-2_8}

\end{itemize}

\vspace{0.2cm}

\noindent \textbf{Journal Submissions}
\begin{itemize}

\bibitem{das2023CALDAMarxiv} 
Sandip Das, Florent Foucaud, Sk Samim Islam, and Joydeep Mukherjee. 
"\textbf{Relation between broadcast domination and multipacking numbers on chordal and other hyperbolic graphs}". Submitted in \emph{Discrete Applied Mathematics (DAM)}. Manuscript no: DA16107. 
\textit{arXiv preprint}, year 2023. 

\href{https://arxiv.org/abs/2312.10485}{arXiv:2312.10485}


    \bibitem{das2024CALDAMarxiv} 
Arun Kumar Das, Sandip Das, Sk Samim Islam, Ritam Manna Mitra, and Bodhayan Roy. 
"\textbf{Multipacking in Euclidean metric space}". Submitted in \emph{Computational Geometry: Theory and Applications (CGTA)}. Manuscript no: CGTA-D-25-00035. 
\textit{arXiv preprint}, year 2024.

\href{https://arxiv.org/abs/2411.12351}{arXiv:2411.12351}

\bibitem{das2023WALCOMarxiv}
Sandip Das and Sk Samim Islam. 
"\textbf{Multipacking and broadcast domination on cactus graphs and its impact on hyperbolic graphs}". Submitted in \emph{Theoretical Computer Science (TCS)}. Manuscript no: TCS-D-25-00486. 
\textit{arXiv preprint}, year 2023.

\href{https://arxiv.org/abs/2308.04882}{arXiv:2308.04882}

\end{itemize}

\vspace{0.2cm}

\noindent \textbf{Manuscripts to Submit}
\begin{itemize}
 \bibitem{das2025hardnessMultipacking} 
Sandip Das, Sk Samim Islam, and Daniel Lokshtanov. 
"\textbf{On the complexity of Multipacking}". 
 \textit{preprint}, year 2025.

 \bibitem{das2025rMultipacking} 
Arun Kumar Das, Sandip Das, Sk Samim Islam, Ritam Manna Mitra, and Bodhayan Roy. 
"\textbf{On the complexity of $r$-Multipacking}". 
 \textit{preprint}, year 2025.

 \bibitem{das2025rbroadcast} 
Arun Kumar Das, Sandip Das, Sk Samim Islam, Ritam Manna Mitra, and Bodhayan Roy. 
"\textbf{Dominating broadcast in $d$-dimensional space}". 
 \textit{preprint}, year 2025.

\end{itemize}

\newpage

\nocite{*}
\bibliographystyle{plainurl}
\bibliography{references}
\end{document}